\newtheorem{definition}{Definition}
\newtheorem{theorem}{Theorem}
\newtheorem{lemma}{Lemma}
\newtheorem{claim}{Claim}
\newtheorem{question}{Question}
\newtheorem{corollary}{Corollary}
\newenvironment{Proof}{\noindent {\em Proof:}}{\\\hspace*{\fill}\mbox{$\diamond$}}
\newcommand{\Trace }[1]{\mbox{}{\bf{Tr}}\left(#1\right)}
\DeclareMathOperator*{\argmin}{arg\!min}
\newcommand{\vol}{\mathrm{vol}}
\newcommand{\defeq}{\stackrel{\textup{def}}{=}}
\renewcommand{\sectionmark}[1]{\markboth{\textsc{M. W. Mahoney}}
{\textsc{Lecture Notes on Spectral Graph Methods}}}
\renewcommand{\subsectionmark}[1]{\markboth{\textsc{M. W. Mahoney}}
{\textsc{Lecture Notes on Spectral Graph Methods}}}
\newlength{\defbaselineskip}
\begin{document}

\title{
Lecture Notes on Spectral Graph Methods 
}

\author{
Michael W. Mahoney%
\thanks{
International Computer Science Institute 
and 
Department of Statistics, 
University of California at Berkeley, 
Berkeley, CA. 
E-mail: mmahoney@stat.berkeley.edu.
}
}

\date{}


\clearpage\maketitle
\thispagestyle{empty}


\newpage
\thispagestyle{empty}

\begin{abstract}

\noindent
These are lecture notes that are based on the lectures from a class I taught on the topic of Spectral Graph Methods at UC Berkeley during the Spring 2015 semester. 

Spectral graph techniques are remarkably broad, they are widely-applicable and often very useful, and they also come with a rich underlying theory, some of which provides a very good guide to practice.
Spectral graph theory is often described as the area that studies properties of graphs by studying properties of eigenvalues and eigenvectors of matrices associated with the graph.
While that is true, especially of ``classical'' spectral graph theory, many of the most interesting recent developments in the area go well beyond that and instead involve early-stopped (as well as asymptotic) random walks and diffusions, metric embeddings, optimization and relaxations, local and locally-biased algorithms, statistical and inferential considerations, etc.
Indeed, one of the things that makes spectral graph methods so interesting and challenging is that researchers from different areas, e.g., computer scientists, statisticians, machine learners, and applied mathematicians (not to mention people who actually use these techniques), all come to the topic from very different perspectives.
This leads them 
to parameterize problems quite differently, 
to formulate basic problems as well as variants and extensions of basic problems in very different ways, and
to think that very different things are ``obvious'' or ``natural'' or ``trivial.''
These lectures will provide an overview of the theory of spectral graph methods, including many of the more recent developments in the area, with an emphasis on some of these complementary perspectives, and with an emphasis on those methods that are useful in practice.

I have drawn liberally from the lectures, notes, and papers of others, often without detailed attribution in each lecture.
Here are the sources upon which I drew most heavily, in rough order of appearance over the semester.
\begin{compactitem}
\item
``Lecture notes,'' from Spielman's Spectral Graph Theory class, Fall 2009 and 2012
\item
``Survey: Graph clustering,'' in Computer Science Review, by Schaeffer
\item
``Geometry, Flows, and Graph-Partitioning Algorithms,'' in CACM, by Arora, Rao, and Vazirani 
\item
``Lecture Notes on Expansion, Sparsest Cut, and Spectral Graph Theory,'' by Trevisan
\item
``Expander graphs and their applications,'' in Bull. Amer. Math. Soc., by Hoory, Linial, and Wigderson
\item
``Multicommodity max-flow min-cut theorems and their use in designing approximation algorithms,'' in JACM, by Leighton and Rao 
\item
``Efficient Maximum Flow Algorithms,'' in CACM, by Goldberg and Tarjan 
\item
``A Tutorial on Spectral Clustering,'' in Statistics and Computing, by von Luxburg
\item
``A kernel view of the dimensionality reduction of manifolds,'' in ICML, by Ham, et al.
\item
``Laplacian Eigenmaps for dimensionality reduction and data representation,'' in Neural Computation, by Belkin and Niyogi
\item
``Diffusion maps and coarse-graining: a unified framework for dimensionality reduction, graph partitioning, and data set parameterization,'' in IEEE-PAMI, by Lafon and Lee 
\item
``Transductive learning via spectral graph partitioning,'' in ICML, by Joachims
\item
``Semi-Supervised Learning Using Gaussian Fields and Harmonic Functions,'' in ICML, by Zhu, Ghahramani, and Lafferty 
\item
``Learning with local and global consistency,'' in NIPS, by Zhou et al.
\item
``Random Walks and Electric Networks,'' in arXiv, by Doyle and Snell
\item
``Implementing regularization implicitly via approximate eigenvector computation,'' in ICML, by Mahoney and Orecchia
\item
``Regularized Laplacian Estimation and Fast Eigenvector Approximation,'' in NIPS, by Perry and Mahoney
\item
``Spectral Ranking'', in arXiv, by Vigna
\item
``PageRank beyond the Web,'' in SIAM Review, by Gleich
\item
``The Push Algorithm for Spectral Ranking'', in arXiv, by Boldi and Vigna
\item
``Local Graph Partitioning using PageRank Vectors,'' in FOCS, by Andersen, Chung, and Lang
\item
``A Local Spectral Method for Graphs: with Applications to Improving Graph Partitions and Exploring Data Graphs Locally,'' in JMLR, by Mahoney, Orecchia, and Vishnoi
\item
``Anti-differentiating Approximation Algorithms: A case study with Min-cuts, Spectral, and Flow,'' in ICML, by Gleich and Mahoney
\item
``Think Locally, Act Locally: The Detection of Small, Medium-Sized, and Large Communities in Large Networks,'' in PRE, by Jeub, Balachandran, Porter, Mucha, and Mahoney
\item
``Towards a theoretical foundation for Laplacian-based manifold methods,'' in JCSS, by Belkin and Niyogi
\item
``Consistency of spectral clustering,'' in Annals of Statistics, by von Luxburg, Belkin, and Bousquet
\item
``Spectral clustering and the high-dimensional stochastic blockmodel,'' in The Annals of Statistics, by Rohe, Chatterjee, and Yu
\item
``Regularized Spectral Clustering under the Degree-Corrected Stochastic Blockmodel,'' in NIPS, by Qin and Rohe
\item
``Effective Resistances, Statistical Leverage, and Applications to Linear Equation Solving,'' in arXiv, by Drineas and Mahoney
\item
``A fast solver for a class of linear systems,'' in CACM, by Koutis, Miller, and Peng
\item
``Spectral Sparsification of Graphs: Theory and Algorithms,'' in CACM, by Batson, Spielman, Srivastava, and Teng 
\end{compactitem}

Finally, I should note that these notes are unchanged, relative to the notes that have been available on my web page since the class completed; but, in response to a number of requests, I decided to put them all together as a single file and post them on the arXiv. 
They are still very rough, and they likely contain typos and errors.
Thus, feedback and comments---both in terms of specific technical issues as well as general scope---are most welcome.

\vspace{5mm}

\hfill \hfill Michael W. Mahoney

\hfill \hfill August 2016
 
\end{abstract}

\newpage
\rule{6in}{0.1mm}
\tableofcontents
\rule{6in}{0.1mm}
\newpage


\section{%
(01/22/2015):
Introduction and Overview}

\subsection{Basic motivating background}

The course will cover several topics in \emph{spectral graph methods}.
By that, I mean that it will cover not spectral graph theory per se, nor will 
it cover the application of spectral graph methods per se.
In addition, spectral methods is a more general topic, and graph methods is 
a more general topic.
Spectral graph theory uses eigenvectors and eigenvalues (and related 
quantities) of matrices associated with graphs to say things about those 
graphs.
It is a topic which has been studied from a wide range of perspectives, 
e.g., theoretical computer science, scientific computing, machine 
learning, statistics, etc., and as such it is a topic which can be viewed 
from a wide range of approaches.

The reason for the focus on spectral graph methods is that a wide range of 
problems are obviously spectral graph methods, and thus they are useful in 
practice as well as interesting in theory; but, in addition, many other 
methods that are not obviously spectral graph methods really are spectral 
graph methods under the hood.  
We'll get to what I mean by that, but for now think of running some 
procedure that seems to work, and if one were to perform a somewhat 
rigorous algorithmic or statistical analysis, it would turn out that that 
method essentially boiled down to a spectral graph method.
As an example, consider the problem of viral propagation on a social 
network, which is usually described in terms of some sort of infectious 
agent, but which has strong connections with spectral graph methods.

Our goal will be to understand---by drawing strength from each of the wide
range of approaches that have been brought to bear on these 
problems---when/why spectral graph methods are useful in practical machine 
learning and data analysis applications, and when/why they (or a vanilla
variant of them) are not useful.
In the latter case, of course, we'll be interested in understanding whether
a better understanding of spectral graph methods can lead to the development
of improved algorithmic and statistical techniques---both for large-scale
data as well as for small-scale data.
Relatedly, we will be interested in whether other methods can perform better
or whether the data are just ``bad'' in some sense.

\subsection{Types of data and types of problems}

Data comes from all sorts of places, and it can be a challenge to find a good
way to represent the data in order to obtain some sort of meaningful insight
from the data.
Two popular ways to model data are as \emph{matrices} and as \emph{graphs}.
\begin{itemize}
\item
Matrices often arise when there are $n$ things, each of which is described 
by $m$ features.
In this case, we have an $m \times n$ matrix $A$, where each column is a 
data point described by a bunch of features (or vice versa) and where each 
row is a vector describing the value of that feature at each data point.
Alternatively, matrices can arise when there are $n$ things and we have 
information about the correlations (or other relationships) between them.
\item
Graphs often arise when there are $n$ things and the pairwise relationships
between them are though to be particularly important.
Let's specify a graph by $G=(V,E)$, where $V$ is the set of vertices and $E$ 
is the set of edges, which are pairs of vertices.
(Later they can be weighted, etc., but for now let's say they are undirected, 
unweighted, etc.)
Examples of data graphs include the following.
\begin{itemize}
\item
Discretizations of partial differential equations and other physical 
operators give rise to graphs, where the nodes are points in a physical 
medium and edges correspond to some sort of physical interaction.
\item
Social networks and other internet applications give rise to graphs, where 
the nodes are individuals and there is an edge between two people if they 
are friends or have some other sort of interaction.
\item
Non-social networks give rise to graphs, where, e.g., devised, routers, or 
computers are nodes and where there is an edge between two nodes if they are
connected and/or have traffic between them.
\item
Graphs arise more generally in machine learning and data analysis 
applications.
For example, given a bunch of data points, each of which is a feature vector, 
we could construct a graph, where the nodes correspond to data points and 
there is an edge between two data points if they are close in some sense (or 
a soft version of this, which is what rbf kernels do).
\end{itemize}
\end{itemize}

\noindent
In the same way as we can construct graphs from matrices, we can also 
construct matrices from graphs.
We will see several examples below (e.g., adjacency matrices, Laplacians, 
low-rank embedding matrices, etc.).
Spectral graph methods involve using eigenvectors and eigenvalues of matrices
associated with graphs to do stuff.

In order to do stuff, one runs some sort of algorithmic or statistical 
methods, but it is good to keep an eye on the types of problems that might
want to be solved.
Here are several canonical examples.

\begin{itemize}
\item
Graph partitioning: finding clusters/communities.
Here, the data might be a bunch of data points (put a picture: sprinkled into 
a left half and a right half) or it might be a graph (put a picture: two 
things connected by an edge).
There are a million ways to do this, but one very popular one boils down 
to computing an eigenvector of the so-called Laplacian matrix and using that
to partition the data.
Why does such a method work?  
One answer, from TCS, is that is works since it is a relaxation of a 
combinatorial optimization problem for which there are worst-case 
quality-of-approximation guarantees.
Another answer, from statistics and machine learning, is that is can be used
to recover hypothesized clusters, say from a stochastic blockmodel (where the
graph consists of several random graphs put together) or from a 
low-dimensional manifold (upon which the data points sit).
\item
Prediction: e.g., regression and classification.
Here, there is a similar picture, and one popular procedure is to run the same
algorithm, compute the same vector and use it to classify the data.
In this case, one can also ask: why does such a method work?
One answer that is commonly given is that if there are meaningful clusters in 
the data, say drawn from a manifold, then the boundaries between the clusters
correspond to low-density regions; or relatedly that class labels are smooth 
in the graph topology, or some notion of distance in $\mathbb{R}^{n}$.
But, what if the data are from a discrete place? 
Then, there is the out-of-sample extension question and a bunch of other 
issues.
\item
Centrality and ranking.
These are two (different but sometimes conflated) notions from sociology and 
social networks having to do with how ``important'' or ``central'' is an 
individual/node and relatedly how to rank individuals/nodes.
One way to do this is to choose the highest degree node, but this is 
relatively easy to spam and might not be ``real'' in other ways, and so 
there are several related things that go by the name of spectral ranking, 
eigenvector centrality, and so on.
The basic idea is that a node is important if important nodes thing it is
important.
This suggests looking at loops and triangles in a graph, and when this 
process is iterated you get random walks and diffusions on the graph.
It's not obvious that this has very strong connections with the clustering, 
classification, etc. problems described above, but it does.
Basically, you compute the same eigenvector and use it to rank.
\item
Encouraging or discouraging ``viral propagation.''
Here, one is given, say, a social network, and one is interested in some 
sort of iterative process, and one wants to understand its properties.  
Two examples are the following:
there might be a virus or other infectious agent that goes from node to node
making other agents sick; or
there might be some sort of ``buzz'' about a new movie or new tennis shoes, 
and this goes from individual to individual.
Both of these are sometimes called viral propagation, but there are important
differences, not the least of which is that in the former people typically 
want to stop the spread of the virus, while in the latter people want to 
encourage the spread of the virus to sell more tennis shoes.
\end{itemize}

\subsection{Examples of graphs}

When algorithms are run on data graphs---some of which might be fairly nice
but some of which might not, it can be difficult to know why the algorithm 
performs as it does.
For example, would it perform that way on every possible graph?
Similarly, if we are not in some asymptotic limit or if worst-case analysis
is somewhat too coarse, then what if anything does the method reveal about
the graph?
To help address these and related questions, it helps to have several 
examples of graphs in mind and to see how algorithms perform on those graphs.
Here are several good examples.
\begin{itemize}
\item
A discretization of nice low-dimensional space, e.g., the integers/lattice in 
some fixed dimension:
$\mathbb{Z}_d$, $\mathbb{Z}^{2}_d$, and $\mathbb{Z}^{3}_d$.
\item
A star, meaning a central node to which all of the other nodes are attached.
\item
A binary tree.
\item
A complete graph or clique.
\item
A constant-degree expander, which is basically a very sparse graph that has 
no good partitions.  
Alternatively, it can be viewed as a sparse version of the complete graph.
\item
A hypercube on $2^n$ vertices.
\item
A graph consisting of two complete graphs or two expanders or two copies of 
$\mathbb{Z}^{2}_d$ that are weakly connected by, say, a line graph.
\item
A lollipop, meaning a complete graph of expander, with a line graph 
attached, where the line/stem can have different lengths.
\end{itemize}
Those are common constructions when thinking about graphs.
The following are examples of constructions that are more common in certain 
network applications.
\begin{itemize}
\item
An Erdos-Renyi random graph, $G_{np}$, for $p=3/n$ or $p \gtrsim \log(n)/n$
\item
A ``small world'' graph, which is basically a ring plus a $3$ regular random 
graph.
\item
A heavy-tailed random graph, with or without min degree assumption, or one
constructed from a preferential attachment process.
\end{itemize}
In addition to things that are explicitly graphs, it also helps to have 
several examples of matrix-based data to have in mind from which graphs can 
be constructed.
These are often constructed from some sort of nearest-neighbor process.
Here are several common examples.
\begin{itemize}
\item
A nice region of a low-dimensional subspace of $\mathbb{R}^{n}$ or of a 
nice low-dimensional manifold embedded in $\mathbb{R}^{n}$.
\item
A full-dimensional Gaussian in $\mathbb{R}^{n}$. 
Here, most of the mass is on the shell, but what does the graph 
corresponding to this ``look like''?
\item
Two low-dimensional Gaussians in $\mathbb{R}^{n}$.
This looks like a dumbbell, with two complete graphs at the two ends or two 
copies of $\mathbb{Z}^{n}_{d}$ at the ends, depending on how parameters are 
set.
\end{itemize}

\subsection{Some questions to consider}

Here are a few questions to consider.
\begin{itemize}
\item
If the original data are vectors that form a matrix, how sensitive are 
these methods to the details of the graph construction?
(Answer: in theory, no; in practice, often yes.)
\item
If the original data are represented by a graph, how sensitive are these 
methods to a bit of noise in the graph?
(Answer: in theory, no; in practice, often yes.)
\item
How good a guide is worst case cs and asymptotic statistical theory?
(Answer: in theory, good; in practice, often not, but it depends on what is
the reference state, e.g., manifold versus stochastic blockmodel.)
\item
What if you are interested in a small part of a very large graph?
E.g., you and your $100$ closest friends on a social network, as opposed to 
you and your $10^9$ closest friends on that social network.
Do you get the same results if you run some sort of local algorithm on a 
small part of the graph as you do if you run a global algorithm on a 
subgraph that is cut out?
(Typically no, unless you are very careful.)
\end{itemize}

\subsection{Matrices for graphs}

Let $G=(V,E,W)$ be an undirected, possibly weighted, graph.
There are many matrices that one can associate with a graph.
Two of the most basic are the \emph{adjacency matrix} and the \emph{diagonal 
degree matrix}.
\begin{definition}
Given $G=(V,E,W)$,
the adjacency matrix $A\in\mathbb{R}^{n \times n}$ is defined to be
\[ 
A_{ij} = \left\{ \begin{array}{l l}
                    W_{ij} & \quad \text{if $(ij)\in E$}\\
                    0      & \quad \text{otherwise}
                 \end{array} 
         \right.  ,
\]
and the diagonal degree matrix $D\in\mathbb{R}^{n \times n}$ is defined to be
\[ 
D_{ij} = \left\{ \begin{array}{l l}
                    \sum_k W_{ik} & \quad \text{if $i=j$}\\
                    0             & \quad \text{otherwise}
                 \end{array} 
         \right.  .
\]
\end{definition}
Note that for undirected graphs, i.e., when $W_{ij}$ equals $1$ or $0$ 
depending on whether or not there is an edge between nodes $i$ and $j$, 
the adjacency matrix specifies which edges are connected and the diagonal 
degree matrix gives the degree of $i^{th}$ node at the $(ii)^{th}$ 
diagonal position.

(Given this setup, it shouldn't be surprising that most spectral graph 
methods generalize in nice ways from unweighted to weighted graphs.
Of interest also are things like time-evolving graphs, directed graphs, 
etc.  In those cases, the situation is more subtle/complex.  Typically, 
methods for those more complex graphs boil down to methods for simpler 
undirected, static graphs.)

Much of what we will discuss has strong connections with spectral graph 
theory, which is an area that uses eigenvectors and eigenvalues of matrices 
associated with the graph to understand properties of the graph.
To begin, though, we should note that it shouldn't be obvious that 
eigenstuff should reveal interesting graph properties---after all, graphs 
by themselves are essentially combinatorial things and most traditional 
graph problems and algorithms don't mention anything having to do with 
eigenvectors.
In spite of this, we will see that eigenstuff reveals a lot about graphs 
that are useful in machine learning and data analysis applications, and we
will want to understand why this is the case and how we can take advantage
of it in interesting ways.

Such an approach of using eigenvectors and eigenvalues is most useful when 
used to understand a natural operator of natural quadratic form associated
with the graphs. 
Perhaps surprisingly, adjacency matrices and diagonal degree matrices are not 
so useful in that sense---but they can be used to construct other matrices 
that are more useful in that sense.

One natural and very useful operator to associate with a graph $G$ is the 
following.
\begin{definition}
Given $G=(V,E,W)$, the \emph{diffusion operator} is 
\[
W = D^{-1}A  \quad \text{(or $M=AD^{-1}$, if you multiply from the other side)} .
\]
\end{definition}
This matrix describes the behavior of diffusions and random walks on $G$.
In particular, if $x \in \mathbb{R}^{n}$ is a row vector that gives the 
probability that a particle is at each vertex of $G$, and if the particle
then moves to a random neighbor, then $pW$ is the new probability 
distribution of the particle.
If the graph $G$ is regular, meaning that it is degree-homogeneous, then 
$W$ is a rescaling of $A$, but otherwise it can be very different.
Although we won't go into too much detail right now, note that applying this 
operator to a vector can be interpreted as doing one step of a diffusion or 
random walk process.
In this case, one might want to know what happens if we iteratively apply an 
operator like $W$.
We will bet back to this.

One natural and very useful quadratic form to associate with a graph $G$ is 
the following.
\begin{definition}
Given $G=(V,E,W)$, 
the \emph{Laplacian matrix} (or combinatorial Laplacian matrix) is
\[
L=D-A  .
\]
\end{definition}
Although we won't go into detail, the Laplacian has an interpretation in 
terms of derivatives.
(This is most common/obvious in continuous applications, where it can be 
used to measure the smoothness of the Laplacian and/or of some continuous 
place from which the gradient was constructed---if it was---in a nice way, 
which is often \emph{not} the case.)
Given a function or a vector $x \in \mathbb{R}^{n}$, the Laplacian quadratic 
form is
\[
x^TLx = \sum_{(ij)\in E} (x_i-x_j)^2   .
\]
This is a measure of smoothness of the vector/function $x$---smoothness of 
$x$, in some sense, conditioned on the graph structure. 
(That is, it is a statement about the graph itself, independent of how it was 
constructed.  This is of interest by itself but also for machine learning and 
data analysis application, e.g., since labels associated with the nodes that 
correspond to a classification function might be expected to be smooth.)

Alternatively, we can define the \emph{normalized Laplacian matrix}.
\begin{definition}
Given $G=(V,E,W)$, the \emph{normalized Laplacian matrix} is
\[
\mathcal{L} = L
            = D^{-1/2} L D^{-1/2}
            = I - D^{-1/2} L D^{-1/2}   .
\]
\end{definition}
(Note that I have already started to be a little sloppy, by using the same 
letter to mean two different things.  I'll point out as we go where this 
matters.)
As we will see, for degree-homogeneous graphs, these two Laplacians are 
essentially the same, but for degree-heterogeneous graphs, they are quite 
different. 
As a general rule, the latter is more appropriate for realistic 
degree-heterogeneous graphs, but it is worth keeping the two in mind, since 
there are strong connections between them and how they are computed.
Similar smoothness, etc. interpretations hold for the normalized Laplacian, 
and this is important in many applications.

\subsection{An overview of some ideas}

Here is a vanilla version of a spectral graph algorithm that will be central 
to a lot of what we do.
We'll be more precise and go into a lot more detail later.
The algorithm takes as input a graph, as specified by a Laplacian matrix, 
$L$ or $\mathcal{L}$.
\begin{enumerate}
\item
Compute, exactly or approximately, the leading nontrivial eigenvector of $L$ 
or $\mathcal{L}$.
\item
Use that vector to split the nodes of the graph into a left half and a right 
half.
\end{enumerate}
Those two pieces can be the two clusters, in which case this algorithm is a 
vanilla version of spectral graph partitioning; or with some labels that can 
be used to make predictions for classification or regression; or we can rank 
starting from the left and going to the right; or we can use the details of 
the approximate eigenvector calculation, e.g., random walks and related 
diffusion-based methods, to understand viral diffusion problems.
But in all those cases, we are interested in the leading nontrivial 
eigenvector of the Laplacian.
We'll have a lot more to say about that later, but for now think of it just 
as some vector that in some sense describes important directions in the graph, 
in which case what this vanilla spectral algorithm does is putting or 
``embedding'' the nodes of the graph on this line and cuts the nodes into 
two pieces, a left half and a right half.
(The embedding has big distortion, in general, for some points at least; the 
two halves can be very unbalanced, etc.; but at least in very nice cases, 
that informal intuition is true, and it is true more generally if the two
halves can be unbalanced, etc.  Understanding these issues will be important 
for what we do.)
(Make a picture on the board.)

We can ask: what is the optimization problem that this algorithm solves?
As we will see eventually, in some sense, what makes a spectral graph 
algorithm a spectral graph algorithm is the first step, and so let's focus 
on that.
Here is a basic spectral optimization problem that this problem solves.

\begin{alignat*}{4}
  &\text{min} & x^T  L_{G} x \\
                     &\text{s.t.} & x^T D_{G} x = 1  \\
                     &            & x^T D_{G} \vec{1} = 0  \\
		  &            & x \in \mathbb{R}^V
\end{alignat*}

\noindent
That is, find the vector that minimizes the quadratic form $x^TLx$ subject
to the constraints that $x$ sits on a (degree-weighted) unit ball and that
$x$ is perpendicular (in a degree-weighted norm) to the ``trivial'' all-ones
vector.
The solution to this problem is a vector, and it is the leading nontrivial 
eigenvector of $L$ or $\mathcal{L}$.

Importantly, this is \emph{not} a convex optimization problem; but rather 
remarkably it can be solved in low-degree polytime by computing an 
eigenvector of $L$. 
How can it be that this problem is solvable if it isn't convex?
After all, the usual rule of thumb is that convex things are good and 
non-convex things are bad.
There are two (related, certainly not inconsistent) answers to this.
\begin{itemize}
\item
One reason is that this is an eigenvector (or generalized eigenvector) 
problem.
In fact, it involves computing the leading nontrivial eigenvector of $L$, 
and so it is a particularly nice eigenvalue problem.
And computing eigenvectors is a \emph{relatively} easy thing to do---for 
example, with a black box solver, or in this special case with random walks.
But more on this later.
\item
Another reason is that is is secretly convex, in that it is convex in a 
different place.
Importantly, that different place there are better duality properties for 
this problem, and so it can be used to understand this problem and its 
solution better.
\end{itemize}
Both of these will be important, but let's start by focusing on the second
reason.
Consider the following version of the basic spectral optimization problem.
\begin{eqnarray*}
{\textsf{SDP}:} &\min & L \bullet X  \\
                & \mbox{s.t.} & \Trace{X} = I_0 \bullet X=1  \\
                & & X \succeq 0  ,
\end{eqnarray*}
where $\bullet$ stands for the Trace, or matrix inner product, operation, 
\emph{i.e.}, $A \bullet B = \Trace{A B^T}=\sum_{ij}A_{ij}B_{ij}$ for matrices 
$A$ and $B$.
\emph{Note that, both here and below, $I_0$ is sometimes the Identity on the 
subspace perpendicular to the all-ones vector.  This will be made more 
consistent later.}
\textsf{SDP} is a relaxation of the spectral program \textsf{SPECTRAL} from 
an optimization over unit vectors to an optimization over distributions over 
unit vectors, represented by the density matrix $X$.
But, the optimal values for \textsf{SPECTRAL} and \textsf{SDP} are the same, 
in the sense that they are given by the second eigenvector $v$ of $L$ for 
\textsf{SPECTRAL} and by $X = vv^T$ for \textsf{SDP}.  

Thus, this is an SDP.
While solving the vanilla spectral optimization problem with a black-box SDP
solver is possible, it is not advisable, since one can just can a black-box
eigenvalue solver (or run some non-black-box method that approximates the
eigenvalue).
Nevertheless, the SDP can be used to understand spectral graph methods.
For example, we can consider the dual of this SDP:
\begin{alignat*}{4}
\quad&  &\text{maximize} \quad && \alpha \\
  &  &\text{s.t.} \quad &&  L_{G}  \succeq  \alpha  L_{K_{n}} \\
  & & & & \alpha \in \mathbb{R}
\end{alignat*}
This is a standard dual construction, the only nontrivial thing is we write out explicitly $I_0=L_{K_n}$, as the identity matrix on the subspace perpendicular to $\vec{1}$ is $I_0=I-\vec{1}^T\vec{1}=L_{K_n}$ where $K_n$ is the complete graph on $n$ vertices.

We will get into more detail later what this means, but informally this means 
that we are in some sense ``embedding'' the Laplacian of the graph in the 
Laplacian of a complete graph.
Slightly less informally, the $\succeq$ is an inequality over graphs (we will 
define this in more detail later) which says that the Laplacian quadratic 
form of one graph is above or below that of another graph (in the sense of 
SPSD matrices, if you know what that means).
So, in this dual, we want to choose the largest $\alpha$ such that that 
inequality is true.

\subsection{Connections with random walks and diffusions}

A final thing to note is that the vector $x^*$ that solves these problems
has a natural interpretation in terms of diffusions and random walks.
This shouldn't be surprising, since one of the ways this vector is to 
partition a graph into two pieces that captures a qualitative notion of 
connectivity.
The interpretation is that is you run a random walk---either a vanilla 
random walk defined by the matrix $W=D^{-1}A$ above, meaning that at each
step you go to one of your neighbors with equal probability, or a fancier
random walk---then $x^{*}$ defines the slowest direction to mixing, i.e., 
the direction that is at the pre-asymptotic state before you get to the 
asymptotic uniform distribution.

So, that spectral graph methods are useful in these and other applications is 
largely due to two things.
\begin{itemize}
\item
Eigenvectors tend to be ``global'' things, in that they optimize a global 
objective over the entire graph.
\item
Random walks and diffusions optimize almost the same things, but they often 
do it in a very different way.
\end{itemize}
One of the themes of what we will discuss is the connection between random 
walks and diffusions and eigenvector-based spectral graph methods on 
different types of graph-based data.
Among other things, this will help us to address local-global issues, e.g., 
the global objective that defines eigenvectors versus the local nature of 
diffusion updates.

Two things should be noted about diffusions.
\begin{itemize}
\item
Diffusions are robust/regularized notions of eigenvectors
\item
The behavior of diffusions is very different on $K_n$ or expander-like metric
spaces than it is on line-like or low-dimensional metric spaces.
\end{itemize}
An important subtlety is that most data have some sort of degree heterogeneity, 
and so the extremal properties of expanders are mitigated since it is 
constant-degree expanders that are most unlike low-dimensional metric 
spaces.
(In the limit, you have a star, and there it is trivial why you don't have good
partitions, but we don't want to go to that limit.)

\subsection{Low-dimensional and non-low-dimensional data}

Now, complete graphs are very different than line graphs.
So, the vanilla spectral graph algorithm is also putting the data in a 
complete graph.
To get a bit more intuition as to what is going on and to how these methods
will perform in real applications, consider a different type of graph known
as an expander.
I won't give the exact definition now---we will later---but expanders are 
very important, both in theory and in practice.
(The former may be obvious, while the latter may be less obvious.)
For now, there are three things you need to know about expanders, either 
constant-degree expanders and/or degree-heterogeneous expanders.
\begin{itemize}
\item
Expanders are extremely sparse graphs that do not have any good 
clusters/partitions, in a precise sense that we will define later.
\item
Expanders are also the metric spaces that are least like low-dimensional 
spaces, e.g., a line graph, a two-dimensional grid, etc.
That is, if your intuition comes from low-dimensional places like 1D or 2D
places, then expanders are metric spaces that are most different than that.
\item
Expanders are sparse versions of the complete graph, in the sense that 
there are graph inequalities of the form $\succeq$ that relate the Laplacian
quadratic forms of expanders and complete graphs.
\end{itemize}
So, in a certain precise sense, the vanilla spectral graph method above (as
well as other non-vanilla spectral methods we will get to) put or embed the 
input data in two extremely different places, a line as well as a dense 
expander, i.e., a complete graph.

Now, real data have low-dimensional properties, e.g., sometimes you can 
visualize them in a two-dimensional piece of paper and see something 
meaningful, and since they often have noise, they also have expander-like 
properties.
(If that connection isn't obvious, it soon will be.)
We will see that the properties of spectral graph methods when applied to real 
data sometimes depend on one interpretation and sometimes depend on the other 
interpretation.
Indeed, many of the properties---both the good properties as well as the 
bugs/features---of spectral graph methods can be understood in light of 
this tension between embedding the data in a low-dimensional place and 
embedding the data in an expander-like place.

There are some similarities between this---which is a statement about 
different types of graphs and metric spaces---and analogous statements about 
random vectors in $\mathbb{R}^{n}$, e.g., from a full-dimensional Gaussian 
distribution in $\mathbb{R}^{n}$.
Some of these will be explored.

\subsection{Small world and heavy-tailed examples}

There are several types or classes of generative models that people 
consider, and different communities tend to adopt one or the other class.
Spectral graph methods are applied to all of these, although they can be
applied in somewhat different ways.
\begin{itemize}
\item
Discretization or random geometric graph of some continuous low-dimensional 
place, e.g., a linear low-dimensional space, a low-dimensional curved 
manifold, etc.
In this case, there is a natural low-dimensional geometry.
(Put picture on board.)
\item
Stochastic blockmodels, where there are several different types of 
individuals, and each type interacts with individuals in the same group 
versus different groups with different probabilities.
(Put picture on board, with different connection probabilities.)
\item
Small-world and heavy-tailed models.
These are generative graph models, and they attempt to capture some local 
aspect of the data (in one case, that there is some low-dimensional 
geometry, and in the other case, that there is big variability in the local 
neighborhoods of individuals, as captured by degree or some other simple 
statistic) and some global aspect to the data (typically that there is a
small diameter).
\end{itemize}

We will talk about all three of these in due course, but for now let's say
just a bit about the small-world models and what spectral methods might 
reveal about them in light of the above comments.
Small-world models start with a one-dimensional or two-dimensional geometry 
and add random edges in one of several ways.
(Put picture here.)
The idea here is that you reproduce local clustering and small diameters, 
which is a property that is observed empirically in many real networks.
Importantly, for algorithm and statistical design, we have intuition about 
low-dimensional geometries; so let's talk about the second part: random graphs.

Consider $G_{np}$ and $G_{nm}$, which are the simplest random graph models,
and which have an expected or an exact number of edges, respectively.
In particular, start with $n$ isolated vertices/nodes; then:
\begin{itemize}
\item
For $G_{np}$,  insert each of the ${n \choose 2}$ possible edges, 
independently, each with probability $p$.
\item
For $G_{nm}$, among all 
${ { n \choose 2 } \choose m}$
subsets of $m$ edges, select one, independently at random.
\end{itemize}
In addition to being of theoretical interest, these models are used in all
sorts of places.
For example, they are the building blocks of stochastic block models, in 
addition to providing the foundation for common generative network models.
(That is, there are heavy-tailed versions of this basic model as well as other 
extensions, some of which we will consider, but for now let's stick with 
this.)
These vanilla ER graphs are often presented as strawmen, which in some sense
they are; but when taken with a grain of salt they can reveal a lot about 
data and algorithms and the relationship between the two.

First, let's focus on $G_{np}$. 
There are four regimes of particular interest.
\begin{itemize}
\item
$p < \frac{1}{n}$.
Here, the graph $G$ is not fully-connected, and it doesn't even have a 
giant component, so it consists of just a bunch of small things.
\item
$ \frac{1}{n} \lesssim p \lesssim \frac{\log(n)}{n}$.
Here there is a giant component, i.e., set of $\Omega(n)$ nodes that 
are connected, that has a small $O(\log(n))$ diameter.
In addition, random walks mix in $O(\log^2(n))$ steps, and the graph is 
locally tree-like.
\item
$\frac{\log(n)}{n} \lesssim p$.
Here the graph is fully-connected.
In addition, it has a small $O(\log(n))$ diameter and random walks mix in 
$O(\log^2(n))$ steps (but for a slightly different reason that we will get 
back to later).
\item
$\frac{\log(n)}{n} \ll p$.
Here the graph is pretty dense, and methods that are applicable to pretty 
dense graphs are appropriate.
\end{itemize}
If $p \gtrsim \log(n)/n$, then $G_{np}$ and $G_{nm}$ are ``equivalent'' in a 
certain sense.  
But if they are extremely sparse, e.g., $p=3/n$ or $p=10/n$, and the 
corresponding values of $m$, then they are different.

In particular, if $p=3/n$, then the graph is not fully-connected, but we can 
ask for a random $r$-regular graph, where $r$ is some fixed small integer.
That is, fix the number of edges to be $r$, so we have a total of $nr$ edges
which is \emph{almost} a member of $G_{nm}$, and look at a random such graph.
\begin{itemize}
\item
A random $1$-regular graph is a matching.
\item
A random $2$-regular graph is a disjoint union of cycles.
\item
A random $3$-regular graph: 
it is fully-connected and had a small $O(\log(n))$ diameter; 
it is an expander;
it contains a perfect matching (a matching, i.e., a set of pairwise 
non-adjacent edges that matches all vertices of the graph) and a Hamiltonian 
cycle (a closed loop that visits each vertex exactly once).
\item
A random $4$-regular graph is more complicated to analyze
\end{itemize}

How does this relate to small world models?
Well, let's start with a ring graph (a very simple version of a 
low-dimensional lattice, in which each node is connected to neighbors 
within a distance $k=1$) and add a matching (which is a bunch of random 
edges in a nice analyzable way).
Recall that a random $3$-regular graph has both a Hamiltonian cycle and
a perfect matching; well it's also the case that the union of an 
$n$ cycle and a random machine is contiguous to a random $3$ regular 
random graphs.
(This is a type of graph decomposition we won't go into.)

This is a particular theoretical form to say that small world models have a 
local geometry but globally are also expanders in a strong sense of the word.
Thus, in particular, when one runs things like diffusions on them, or 
relatedly when one runs spectral graph algorithms (which have strong 
connections under the hood to diffusions) on them, what one gets will depend 
sensitively on the interplay between the line/low-dimensional properties and 
the noise/expander-like properties.

It is well known that similar results hold for heavy-tailed network models 
such as PA models or PLRG models or many real-world networks.
There there is degree heterogeneity, and this can give a lack of measure 
concentration that is analogous to the extremely sparse Erdos-Renyi graphs, 
unless one does things like make minimum degree assumptions.
It it less well known that similar things also hold for various types of
constructed graphs.
Clearly, this might happen if one constructs stochastic blockmodels, since
then each piece is a random graph and we are interested in the interactions
between different pieces.
But what if construct a manifold method, but there is a bit of noise?
This is an empirical question; but noise, if it is noise, can be thought of 
as a random process, and in the same way as the low-dimensional geometry of 
the vanilla small world model is not too robust to adding noise, similarly 
geometric manifold-based methods are also not too robust.

In all of this, there algorithm questions, as well as statistical and 
machine learning questions such as model selection questions and questions 
about how to do inference with vector-based of graph-based data, as well as
mathematical questions, as well as questions about how these methods perform
in practice.
We will revisit many of these over the course of the semester.

\subsection{Outline of class}

In light of all that, here is an outline of some representative topics that
we will cover.
\begin{enumerate}
\item
Basics of graph partitioning, including spectral, flow, etc., degree 
heterogeneity, and other related objectives.
\item
Connections with diffusions and random walks, including connections with 
resistor network, diffusion-based distances, expanders, etc.
\item
Clustering, prediction, ranking/centrality, and communities, i.e., solving 
a range of statistics and data analysis methods with variants of spectral 
methods.
\item
Graph construction and empirical properties, i.e., different ways graphs 
can be constructed and empirical aspects of ``given'' and ``constructed'' 
graphs.
\item
Machine learning and statistical approaches and uses, e.g., stochastic 
blockmodels, manifold methods, regularized Laplacian methods, etc.
\item
Computations, e.g., nearly linear time Laplacian solvers and graph 
algorithms in the language of linear algebra.
\end{enumerate}

\newpage

\section{%
(01/27/2015): 
Basic Matrix Results (1 of 3)}

Reading for today.
\begin{compactitem}
\item
``Lecture notes,'' from Spielman's Spectral Graph Theory class, Fall 2009 and 2012
\end{compactitem}

\subsection{Introduction}

Today and next time, we will start with some basic results about matrices, 
and in particular the eigenvalues and eigenvectors of matrices, that will 
underlie a lot of what we will do in this class.
The context is that eigenvalues and eigenvectors are complex (no pun intended, 
but true nonetheless) things and---in general---in many ways not so ``nice.''
For example, they can change arbitrarily as the coefficients of the matrix 
change, they may or may not exist, real matrices may have complex 
eigenvectors and eigenvalues, a matrix may or may not have a full set of $n$ 
eigenvectors, etc.
Given those and related instabilities, it is an initial challenge is to 
understand what we can determine from the spectra of a matrix.
As it turns out, for many matrices, and in particular many matrices that 
underlie spectral graph methods, the situation is much nicer; and, in 
addition, in some cases they can be related to even nicer things like random 
walks and diffusions.

So, let's start by explaining ``why'' this is the case.
To do so, let's get some context for how/why matrices that are useful for 
spectral graph methods are nicer and also how these nicer matrices sit in 
the larger universe of arbitrary matrices. 
This will involve establishing a few basic linear algebraic results; then we
will use them to form a basis for a lot of the rest of what we will discuss.
This is good to know in general; but it is also good to know for more 
practical reasons.
For example, it will help clarify when vanilla spectral graph methods can be 
extended, \emph{e.g.}, to weighted graphs or directed graphs or time-varying 
graph or other types of normalizations, etc.

\subsection{Some basics}

To start, recall that we are interested in the Adjacency matrix of a graph 
$G=(V,E)$ (or $G=(V,E,W)$ if the graph is weighted) and other matrices that 
are related to the Adjacency matrix.
Recall that the $n \times n$ Adjacency matrix is defined to be
\[ 
A_{ij} = \left\{ \begin{array}{l l}
                    W_{ij} & \quad \text{if $(ij)\in E$}\\
                    0      & \quad \text{otherwise}
                 \end{array} 
         \right.  ,
\]
where $W_{ij}=1$, for all $(i,j)\in E$ if the graph is unweighted.
Later, we will talk about directed graphs, in which case the Adjacency matrix 
is not symmetric, but note here it is symmetric.
So, let's talk about symmetric matrices:
a symmetric matrix is a matrix $A$ for which $A=A^T$, i.e., for which 
$A_{ij}=A_{ji}$.

Almost all of what we will talk about will be real-valued matrices.
But, for a moment, we will start with complex-valued matrices.
To do so, recall that if $x=\alpha+i\beta \in \mathbb{C}$ is a complex 
number, then $\bar{x}=\alpha-i\beta \in \mathbb{C}$ is the \emph{complex 
conjugate} of $x$.
Then, if $M\in\mathbb{C}^{m \times n}$ is a complex-valued matrix, i.e., 
an $m \times n$ matrix each entry of which is a complex number, then the 
\emph{conjugate transpose} of $M$, which is denoted $M^{*}$, is the matrix 
defined as
\[
\left(M^{*}\right)_{ij} = \bar{M}_{ji} .
\]
Note that if $M$ happens to be a real-valued $m \times n$ matrix, then this 
is just the transpose.

If $x,y \in \mathbb{C}^{n}$ are two complex-valued vectors, then we can define
their inner product to be 
\[
\left<x,y\right> = x^{*}y = \sum_{i=1}^{n} \bar{x}_i y_i  .
\]
Note that from this we can also get a norm in the usual way, i.e., 
$\left<x,x\right> = \|x\|_2^2 \in \mathbb{R}$.
Given all this, we have the following definition.

\begin{definition}
If $M\in\mathbb{C}^{n \times n}$ is a square complex matrix, 
 $\lambda \in \mathbb{C}$ is a scalar, and
 $x \in \mathbb{C}^{n} \backslash \{ 0 \}$ is a non-zero vector such that 
\begin{equation}
\label{eqn:eigensystem1}
M x = \lambda x
\end{equation}
then $\lambda$ is an \emph{eigenvalue} of $M$ and
     $x$ is the corresponding \emph{eigenvector} of $\lambda$.
\end{definition}

\noindent
Note that when Eqn.~(\ref{eqn:eigensystem1}) is satisfied, then this is equivalent
to 
\begin{equation}\label{eq:det}
\left( M - \lambda I \right)x = 0 , \mbox{ for } x \ne 0  ,
\end{equation}
where $I$ is an $n \times n$ Identity matrix.
In particular, this means that we have at least one eigenvalue/eigenvector pair.
Since ~\eqref{eq:det} means $M-\lambda I$ is rank deficient, this in turn is equivalent to 
\[
\mbox{det}\left( M - \lambda I \right) = 0  .
\]
Note that this latter expression is a polynomial with $\lambda$ as the variable. 
That is, if we fix $M$, then the function given by
$\lambda \rightarrow \mbox{det}\left( M - \lambda I \right)$ 
is a univariate polynomial of degree $n$ in $\lambda$.
Now, it is a basic fact that every non-zero, single-variable, degree 
polynomial of degree $n$ with complex coefficients has---counted with 
multiplicity---exactly $n$ roots.
(This counting multiplicity thing might seem pedantic, but it will be 
important latter, since this will correspond to potentially degenerate
eigenvalues, and we will be interested in how the corresponding 
eigenvectors behave.)
In particular, any square complex matrix $M$ has $n$ eigenvectors, counting 
multiplicities, and there is at least one eigenvalue.

As an aside, someone asked in class if this fact about complex polynomials
having $n$ complex roots is obvious or intuitive.
It is sufficiently basic/important to be given the name the fundamental theorem 
of algebra, but its proof isn't immediate or trivial.
We can provide some intuition though.
Note that related formulations of this state that every non-constant 
single-variable polynomial with complex coefficients has at least one complex 
root, etc. (e.g., complex roots come in pairs); and that the field of complex 
numbers is algebraically closed.
In particular, the statements about having complex roots applies to 
real-valued polynomials, i.e., since real numbers are complex numbers
polynomials in them have complex roots; but it is false that real-valued 
polynomials always have real roots.
Equivalently, the real numbers are not algebraically closed.
To see this, recall that the equation $x^2-1=0$, viewed as an equation over 
the reals has two real roots, $x=\pm 1$; but the equation $x^2+1=0$ does 
not have any real roots.
Both of these equations have roots over the complex plane: the former 
having the real roots $x=\pm 1$, and the latter having imaginary roots 
$x=\pm i$.

\subsection{Two results for Hermitian/symmetric matrices}

Now, let's define a special class of matrices that we already mentioned.
\begin{definition}
A matrix $M\in\mathbb{C}^{n \times n}$ is \emph{Hermitian} if $M=M^{*}$. 
In addition, a matrix $M\in\mathbb{R}^{n \times n}$ is \emph{symmetric} 
if $M=M^{*}=M^{T}$.
\end{definition}

For complex-valued Hermitian matrices, we can prove the following two 
lemmas.

\begin{lemma}
Let $M$ be a Hermitian matrix.
Then, all of the eigenvalues of $M$ are real.
\end{lemma}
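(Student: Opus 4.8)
The plan is to use the Rayleigh-quotient trick: pin $\lambda$ down as a ratio of two quantities that are manifestly real. Concretely, let $\lambda \in \mathbb{C}$ be an eigenvalue of $M$ with eigenvector $x \in \mathbb{C}^n \setminus \{0\}$, so that $Mx = \lambda x$. First I would left-multiply by $x^{*}$ to form the scalar $x^{*} M x$. Using the eigenrelation, $x^{*} M x = x^{*}(\lambda x) = \lambda \, x^{*} x = \lambda \|x\|_2^2$, and since $x \neq 0$ we have $\|x\|_2^2 > 0$ and real. So it suffices to show that the scalar $x^{*} M x$ is real.

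To see that $x^{*} M x \in \mathbb{R}$, the key step is that a scalar (viewed as a $1\times 1$ matrix) equals its own conjugate transpose precisely when it is real, combined with the identity $(ABC)^{*} = C^{*} B^{*} A^{*}$ for the conjugate transpose. Applying this, $\left(x^{*} M x\right)^{*} = x^{*} M^{*} (x^{*})^{*} = x^{*} M^{*} x = x^{*} M x$, where the last equality uses the hypothesis $M = M^{*}$. Hence $x^{*} M x$ is a complex number equal to its own conjugate, i.e.\ it is real.

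Combining the two computations, $\lambda = \dfrac{x^{*} M x}{\|x\|_2^2}$ is a quotient of a real number by a positive real number, so $\lambda \in \mathbb{R}$. Since $\lambda$ was an arbitrary eigenvalue, all eigenvalues of $M$ are real.

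There is no serious obstacle here; the proof is short. The only point requiring care is bookkeeping with the conjugate transpose: one must use that $\left<x,x\right> = \|x\|_2^2$ is a positive real for $x \neq 0$ (already noted in the excerpt), that $(AB)^{*} = B^{*} A^{*}$, and that $z = \bar z \iff z \in \mathbb{R}$. If one wanted an alternative route that generalizes, one could instead observe that $M = M^{*}$ forces the characteristic quantities to be invariant under conjugation, but the Rayleigh-quotient argument above is the cleanest and is exactly what will be reused later for the normalized Laplacian and related operators.
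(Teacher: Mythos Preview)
Your proof is correct and is essentially the same argument as the paper's. The only cosmetic difference is that the paper phrases the key step in inner-product notation, showing $\langle Mx, x\rangle = \langle x, Mx\rangle$ via the Hermitian property and then extracting $\bar{\lambda}\|x\|^2 = \lambda\|x\|^2$, whereas you phrase it as $(x^{*}Mx)^{*} = x^{*}Mx$; these are the same computation.
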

\begin{Proof}
Let $M$ be Hermitian and $\lambda\in\mathcal{C}$ and $x$ non-zero be s.t.
$Mx = \lambda x$.
Then it suffices to show that $\lambda=\lambda^{*}$, since that means 
that $\lambda\in\mathbb{R}$.
To see this, observe that
\begin{eqnarray}
\nonumber
\left<Mx,x\right> &=& \sum_i \sum_j \bar{M_{ij}}\bar{x_j} x_i \\
\label{eqn:herm1}
                  &=& \sum_i \sum_j M_{ji}     x_i \bar{x_j} \\
\nonumber
                  &=& \left< x,Mx\right>
\end{eqnarray}
where Eqn.~(\ref{eqn:herm1}) follows since $M$ is Hermitian.
But we have
\[
\left<Mx,x\right> = \left< \lambda x, x\right> = \bar{\lambda} \left<x,x\right> = \bar{\lambda} \|x\|_2^2
\] 
and also that 
\[
\left<x,Mx\right> = \left< x, \lambda x\right> = \lambda     \left<x,x\right> = \lambda   \|x\|_2^2  .
\] 
Thus, $\lambda = \bar{\lambda}$, and the lemma follows.
\end{Proof}

\begin{lemma}
Let $M$ be a Hermitian matrix; and 
let $x$ and $y$ be eigenvectors corresponding to different eigenvalues.
Then $x$ and $y$ are orthogonal.
\end{lemma}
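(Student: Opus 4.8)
The plan is to mimic the computation from the previous lemma, but now using the inner product $\langle Mx, y\rangle$ with two \emph{different} eigenvectors $x$ and $y$. Suppose $Mx = \lambda x$ and $My = \mu y$ with $\lambda \ne \mu$. By the previous lemma, both $\lambda$ and $\mu$ are real, since $M$ is Hermitian, so I will not need to worry about conjugating them. First I would observe that the same symmetry argument as in Eqn.~(\ref{eqn:herm1}) — expanding entrywise and using $M_{ij} = \bar{M}_{ji}$ — gives $\langle Mx, y\rangle = \langle x, My\rangle$.

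Next I would evaluate both sides. On the left, $\langle Mx, y\rangle = \langle \lambda x, y\rangle = \bar{\lambda}\langle x, y\rangle = \lambda \langle x, y\rangle$, using that $\lambda$ is real. On the right, $\langle x, My\rangle = \langle x, \mu y\rangle = \mu \langle x, y\rangle$. Setting these equal yields $(\lambda - \mu)\langle x, y\rangle = 0$. Since $\lambda \ne \mu$, we conclude $\langle x, y\rangle = 0$, i.e., $x$ and $y$ are orthogonal.

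There is no real obstacle here: the only subtlety is bookkeeping about which slot of the inner product carries the complex conjugate — the convention in the excerpt is $\langle x, y\rangle = x^{*}y$, conjugate-linear in the first argument — so the key is to be careful that pulling $\lambda$ out of the first slot introduces $\bar\lambda$, and that this is harmless precisely because the first lemma already told us $\bar\lambda = \lambda$. One could also prove orthogonality directly for real symmetric matrices via $x^{T}My = (Mx)^{T}y$, but routing through the Hermitian inner product covers the complex case uniformly and reuses the identity already established.
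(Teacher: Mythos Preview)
Your proof is correct and follows essentially the same route as the paper: establish $\langle Mx, y\rangle = \langle x, My\rangle$ from Hermiticity, evaluate both sides using the eigenvalue relations, and conclude $(\lambda - \mu)\langle x, y\rangle = 0$. If anything, you are slightly more careful than the paper in explicitly invoking the reality of $\lambda$ to handle the conjugate when pulling it out of the first slot of the inner product.
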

\begin{Proof}
Let $Mx = \lambda x$ and $My = \lambda^{\prime} y$.
Then, 
\[
\left<Mx,y\right> = \left(Mx\right)^{*}y = x^{*}M^{*}y = x^{*}My = \left< x, My\right>  .
\] 
But, 
\[
\left<Mx,y\right> = \lambda \left<x,y\right>
\] 
and 
\[
\left<x,My\right> = \lambda^{\prime} \left<x,y\right> .
\] 
Thus
\[
\left(\lambda-\lambda^{\prime}\right) \left<x,y\right> = 0 .
\]
Since $\lambda \ne \lambda^{\prime}$, by assumption, it follows that 
$\left<x,y\right> = 0$, from which the lemma follows.
\end{Proof}

So, Hermitian and in particular real symmetric matrices have real eigenvalues
and the eigenvectors corresponding to to different eigenvalues are orthogonal.
We won't talk about complex numbers and complex matrices for the rest of the 
term.
(Actually, with one exception since we need to establish that the entries of the
eigenvectors are not complex-valued.)

\subsection{Consequences of these two results}

So far, we haven't said anything about a full set of orthogonal 
eigenvectors, etc., since, e.g., all of the eigenvectors could be the same 
or something funny like that.
In fact, we will give a few counterexamples to show how the niceness results
we establish in this class and the next class fail to hold for general 
matrices.
Far from being pathologies, these examples will point to interesting ways 
that spectral methods and/or variants of spectral method ideas do or do not 
work more generally (e.g., periodicity, irreducibility etc.)

Now, let's restrict ourselves to real-valued matrices, in which case Hermitian
matrices are just symmetric matrices.
With the exception of some results next time on positive and non-negative 
matrices, where we will consider complex-valued things, the rest of the 
semester will consider real-valued matrices.
Today and next time, we are only talking about complex-valued matrices to 
set the results that underlie spectral methods in a more general context.
So, let's specialize to real-values matrices.

First, let's use the above results to show that we can get a full set of 
(orthogonalizable) eigenvectors.
This is a strong ``niceness'' result, for two reasons:
(1) there is a full set of eigenvectors; and 
(2) that the full set of eigenvectors can be chosen to be orthogonal.
Of course, you can always get a full set of orthogonal vectors for 
$\mathbb{R}^{n}$---just work with the canonical vectors or some other set of
vectors like that.
But what these results say is that for symmetric matrices we can also get a 
full set of orthogonal vectors that in some sense have something to do with 
the symmetric matrix under consideration.
Clearly, this could be of interest if we want to work with vectors/functions
that are in some sense adapted to the data.

Let's start with the following result, which says that given several (i.e., 
at least one) eigenvector, then we can find another eigenvector that is 
orthogonal to it/them.
Note that the existence of at least one eigenvector follows from the 
existence of at least one eigenvalue, which we already established.

\begin{lemma}
Let $M\in\mathbb{R}^{n \times n}$ be a real symmetric matrix, and let 
$x_1,\ldots,x_k$, where $1 \le k < n$, be orthogonal eigenvectors of $M$.
Then, there is an eigenvector $x_{k+1}$ of $M$ that is orthogonal to 
$x_1,\ldots,x_k$.
\end{lemma}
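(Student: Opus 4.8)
The plan is to work with the orthogonal complement of the span of the given eigenvectors and to exploit the symmetry of $M$ to show that this complement is $M$-invariant. Set
\[
S = \{\, v \in \mathbb{R}^{n} : \langle v, x_i\rangle = 0 \text{ for } i = 1,\ldots,k \,\}.
\]
Since $x_1,\ldots,x_k$ are orthogonal and nonzero, they are linearly independent, so $\dim S = n-k \ge 1$ and $S$ contains a nonzero vector. The whole game is to produce an eigenvector of $M$ lying inside $S$.

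First I would verify that $S$ is $M$-invariant, i.e.\ $Mv \in S$ whenever $v \in S$. This is the one place where symmetry is used: if $Mx_i = \lambda_i x_i$ and $v \in S$, then
\[
\langle Mv, x_i\rangle = \langle v, Mx_i\rangle = \lambda_i \langle v, x_i\rangle = 0
\]
for every $i$, using $M = M^{T}$ and the earlier lemma that the eigenvalues $\lambda_i$ of the symmetric matrix $M$ are real. Hence $Mv \in S$.

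Next I would restrict $M$ to $S$. Fixing an orthonormal basis of $S$ identifies $S$ with $\mathbb{R}^{n-k}$ and the restriction $M|_S$ with an $(n-k)\times(n-k)$ real symmetric matrix $\tilde M$. By the fundamental theorem of algebra, $\tilde M$ has at least one (a priori complex) eigenvalue $\lambda$; by the Hermitian-eigenvalue lemma proved above, $\lambda$ is in fact real. Then $\tilde M - \lambda I$ is a \emph{singular real} matrix, so it has a nonzero real null vector, and transporting this back through the chosen basis yields a nonzero real vector $x_{k+1}\in S$ with $M x_{k+1} = M|_S\, x_{k+1} = \lambda x_{k+1}$, where the last equality uses the invariance established above. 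By construction $x_{k+1}$ is orthogonal to $x_1,\ldots,x_k$, which proves the lemma.

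The main obstacle here is not the invariance computation, which is routine, but making sure the eigenvector we extract is genuinely real and genuinely an eigenvector of $M$ (not merely of the abstract restriction): the fundamental theorem of algebra only hands us a complex eigenvalue of $\tilde M$, so one must invoke the reality lemma, then separately note that a singular real matrix has a real kernel vector, and finally use $M$-invariance of $S$ to lift the eigenvalue equation from $M|_S$ back to $M$. An alternative that sidesteps complex numbers altogether is to minimize the Rayleigh quotient $v^{T}Mv$ over the (compact) unit sphere of $S$ and check that the minimizer is an eigenvector; I would remark on this but prefer the route above, since it reuses exactly the two Hermitian lemmas and the fundamental theorem of algebra that were just established.
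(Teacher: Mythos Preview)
Your proof is correct and follows essentially the same approach as the paper: show that the orthogonal complement $S$ is $M$-invariant via symmetry, represent $M|_S$ as a real symmetric $(n-k)\times(n-k)$ matrix in an orthonormal basis, extract a real eigenpair, and lift it back to $\mathbb{R}^n$ using the invariance. If anything, you are slightly more explicit than the paper about why the resulting eigenvector is real (singular real matrix has a real null vector) and about the alternative Rayleigh-quotient route.
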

\begin{Proof}
Let $V$ be the $(n-k)$-dimensional subspace of $\mathbb{R}^{n}$ that 
contains all vectors orthogonal to $x_1,\ldots,x_k$.
Then, we claim that: for all $x \in V$, we have that $Mx \in V$.
To prove the claim, note that for all $i\in[k]$, we have that
\[
\left<x_i,Mx\right> 
 = x_i^TMx 
 = \left(Mx_i\right)^Tx 
 = \lambda_i x_i x 
 = \lambda_i \left<x_i,x\right>
 = 0   ,
\]
where $x_i$ is one of the eigenvectors assumed to be given.

Next, let
\begin{itemize}
\item 
$B \in \mathbb{R}^{n \times (n-k)}$ be a matrix consisting of the vectors 
$b_1,\ldots,b_{n-k}$ that form an orthonormal basis for $V$.
(This takes advantage of the fact that $\mathbb{R}^{n}$ has a full set of 
exactly $n$ orthogonal vectors that span it---that are, of course, not 
necessarily eigenvectors.)
\item
$B^{\prime} = B^T$.
(If $B$ is any matrix, then $B^{\prime}$ is a matrix such that, for all 
$y \in V$, we have that $B^{\prime}y$ is an $(n-k)$-dimensional vector 
such that $BB^{\prime}y=y$.
I \emph{think} we don't loose any generality by taking $B$ to be orthogonal.)
\item
$\lambda$ be a real eigenvalue of the real symmetric matrix 
\[
M^{\prime} = B^{\prime} M B \in \mathbb{R}^{(n-k)\times(n-k)} ,
\]
with $y$ a corresponding real eigenvector of $M$.
I.e., $M^{\prime}y=\lambda y$.
\end{itemize}
Then, 
\[
B^{\prime} M B y = \lambda y  ,
\]
and so
\[
B B^{\prime} M B y = \lambda B y  ,
\]
from which if follows that 
\[
M B y = \lambda B y .
\]
The last equation follows from the second-to-last since 
$By \perp \{ x_1,\ldots,x_k\}$, from which it follows that
$MBy \perp \{ x_1,\ldots,x_k\}$, by the above claim, and thus
$BB^{\prime} MBy = MBy$. 
I.e., this doesn't change anything since $BB^{\prime} \xi = \xi$, for 
$\xi$ in that space.

So, we can now construct that eigenvector.
In particular, we can choose $x_{k+1} = By$, and we have that 
$Mx_{k+1} = \lambda x_{k+1}$, from which the lemma follows.
\end{Proof}

Clearly, we can apply the above lemma multiple times.
Thus, as an important aside, the following ``spectral theorem'' is basically 
a corollary of the above lemma.

\begin{theorem}[Spectral Theorem]
Let $M\in\mathbb{R}^{n \times n}$ be a real symmetric matrix, and let 
$\lambda_1,\ldots,\lambda_n$ be its real eigenvalues, including 
multiplicities.
Then, there are $n$ orthonormal vectors $x_1,\ldots,x_n$, with 
$x_i\in\mathbb{R}^{n}$, such that $x_i$ is an eigenvector corresponding 
to $\lambda_i$, i.e., $M x_i = \lambda_i x_i$.
\end{theorem}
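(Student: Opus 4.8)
The plan is to obtain the Spectral Theorem as an iterated application of the preceding lemma, building the orthonormal eigenbasis one vector at a time. First I would invoke the fundamental theorem of algebra (already established) together with the earlier lemma that Hermitian, hence real symmetric, matrices have real eigenvalues: this guarantees at least one real eigenvalue $\lambda$ of $M$, and hence (via $\det(M-\lambda I)=0$, so $M-\lambda I$ is rank deficient) a nonzero real eigenvector $x_1$, which after normalization is a unit eigenvector. This furnishes the base case $k=1$ for the induction.

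For the inductive step I would assume that orthonormal eigenvectors $x_1,\ldots,x_k$ with $1\le k<n$ have been found, each with $Mx_i=\lambda_i x_i$. The previous lemma then produces an eigenvector $x_{k+1}$ of $M$ orthogonal to $x_1,\ldots,x_k$; after rescaling we may take $\|x_{k+1}\|_2=1$, so that $x_1,\ldots,x_{k+1}$ is again an orthonormal family of eigenvectors. Iterating this $n-1$ times starting from $x_1$ yields $n$ orthonormal eigenvectors $x_1,\ldots,x_n$. Since orthonormal vectors are linearly independent, these $n$ vectors form a basis of $\mathbb{R}^n$; each $x_i$ satisfies $Mx_i=\mu_i x_i$ for some real $\mu_i$.

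The remaining point is to match the list $\mu_1,\ldots,\mu_n$ of eigenvalues produced by this process against the list $\lambda_1,\ldots,\lambda_n$ of roots of the characteristic polynomial, counted with multiplicity. I would argue that in the orthonormal basis $x_1,\ldots,x_n$ the matrix $M$ is represented by the diagonal matrix with entries $\mu_1,\ldots,\mu_n$, so its characteristic polynomial is $\prod_{i=1}^n(\mu_i-t)$; since the characteristic polynomial is basis-independent, this equals $\prod_{i=1}^n(\lambda_i-t)$, so the multiset $\{\mu_i\}$ coincides with $\{\lambda_i\}$. Reindexing the $x_i$ appropriately gives the stated correspondence $Mx_i=\lambda_i x_i$.

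The main obstacle is really a bookkeeping subtlety rather than a deep step: one must be careful that the iteration genuinely runs all the way to $n$ vectors (which the previous lemma guarantees exactly as long as $k<n$) and, more importantly, that degenerate (repeated) eigenvalues are produced with the correct multiplicity. The clean way to handle this is the characteristic-polynomial/diagonalization argument above, which sidesteps any need to track multiplicities vector-by-vector and instead reads them off at the end from $\det(M-tI)=\prod(\mu_i-t)$. Everything else is a routine induction on the preceding lemma.
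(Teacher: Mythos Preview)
Your proposal is correct and takes essentially the same approach as the paper: the paper simply remarks that the spectral theorem is ``basically a corollary'' of the preceding lemma, obtained by applying it repeatedly, without spelling out the induction or the multiplicity bookkeeping. Your write-up is in fact more careful than the paper's, particularly in the characteristic-polynomial argument matching the produced eigenvalues $\{\mu_i\}$ to the stated list $\{\lambda_i\}$ with multiplicity---a point the paper leaves implicit.
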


A few comments about this spectral theorem.
\begin{itemize}
\item
This theorem and theorems like this are very important and many 
generalizations and variations of it exist.
\item
Note the wording: there are $n$ vectors ``such that $x_i$ is an eigenvector 
corresponding to $\lambda_i$.''
In particular, there is no claim (yet) about uniqueness, etc.
We still have to be careful about that.
\item
From this we can derive several other things, some of which we will mention
below.
\end{itemize}

\noindent
Someone asked in class about the connection with the SVD.
The equations $M x_i = \lambda_i x_i$, for all $\lambda_i$, can be written
as $MX = X \Lambda$, or as $M = X \Lambda X^T$, since $X$ is orthogonal.
The SVD writes an arbitrary $m \times n$ matrix $A$ a $A=U \Sigma V^T$, 
where $U$ and $V$ are orthogonal and $\Sigma$ is diagonal and non-negative.
So, the SVD is a generalization or variant of this spectral theorem for 
real-valued square matrices to general $m \times n$ matrices.
It is \emph{not} true, however, that the SVD of even a symmetric matrix gives
the above theorem.
It is true by the above theorem that you can write a symmetric matrix as
$M = X \Lambda X^T$, where the eigenvectors $\Lambda$ are real. 
But they might be negative.
For those matrices, you also have the SVD, but there is no immediate 
connection.
On the other hand, some matrices have all $\Lambda$ positive/nonnegative.
They are called SPD/SPSD matrices, and form them the eigenvalue decomposition
of the above theorem essentially gives the SVD.
(In fact, this is sometimes how the SVD is proven---take a matrix $A$ and 
write the eigenvalue decomposition of the SPSD matrices $AA^T$ and $A^TA$.)
SPD/SPSD matrices are important, since they are basically covariance or 
correlation matrices; and several matrices we will encounter, e.g., 
Laplacian matrices, are SPD/SPSD matrices.

We can use the above lemma to provide the following variational 
characterization of eigenvalues, which will be very important for us.

\begin{theorem}[Variational Characterization of Eigenvalues]
Let $M\in\mathbb{R}^{n \times n}$ be a real symmetric matrix; 
let $\lambda_l \le \cdots \le \lambda_n$ be its real eigenvalues, containing
multiplicity and sorted in nondecreasing order; and let $x_1,\ldots,x_k$, 
for $k < n$ be orthonormal vectors such that $M x_i = \lambda_i x_i$, for 
$i\in[k]$.
Then
\[
\lambda_{k+1} 
   = \min_{ \substack{x\in\mathbb{R}^{n}\diagdown\{\vec{0}\}\\x \perp x_i \quad \forall i\in[k]}  } 
     \frac{x^TMx}{x^Tx} ,
\]
and any minimizer of this is an eigenvector of $\lambda_{k+1}$.
\end{theorem}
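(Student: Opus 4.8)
The plan is to reduce the statement to a diagonalization of $M$ adapted to the given eigenvectors, after which the inequality becomes an elementary fact about weighted averages. First I would extend $x_1,\ldots,x_k$ to a full orthonormal eigenbasis of $M$. Applying the preceding lemma repeatedly -- each time feeding in the orthonormal eigenvectors found so far -- produces orthonormal vectors $x_{k+1},\ldots,x_n$ with $Mx_i=\mu_i x_i$ for some real scalars $\mu_i$. Since $x_1,\ldots,x_n$ then simultaneously diagonalize $M$, the multiset $\{\lambda_1,\ldots,\lambda_k,\mu_{k+1},\ldots,\mu_n\}$ must coincide with the multiset of all eigenvalues of $M$, namely $\{\lambda_1,\ldots,\lambda_n\}$; cancelling $\lambda_1,\ldots,\lambda_k$ shows $\{\mu_{k+1},\ldots,\mu_n\}=\{\lambda_{k+1},\ldots,\lambda_n\}$ as multisets. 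After relabeling the newly found vectors in order of eigenvalue, we may therefore assume $Mx_i=\lambda_i x_i$ for all $i\in[n]$, with the $\lambda_i$ in nondecreasing order. In particular, every $x$ orthogonal to $x_1,\ldots,x_k$ lies in $\mathrm{span}\{x_{k+1},\ldots,x_n\}$, so $x=\sum_{i=k+1}^n c_i x_i$ for suitable $c_i$.

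Next I would compute the Rayleigh quotient using orthonormality: $x^Tx=\sum_{i=k+1}^n c_i^2$ and $x^TMx=\sum_{i=k+1}^n \lambda_i c_i^2$, whence $x^TMx-\lambda_{k+1}x^Tx=\sum_{i=k+1}^n(\lambda_i-\lambda_{k+1})c_i^2\ge 0$, since $\lambda_i\ge\lambda_{k+1}$ for every $i\ge k+1$. Therefore $x^TMx/x^Tx\ge\lambda_{k+1}$ for every admissible $x\ne 0$. Taking $x=x_{k+1}$ gives $x^TMx/x^Tx=\lambda_{k+1}$, so the minimum is attained and equals $\lambda_{k+1}$.

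For the characterization of minimizers, suppose $x=\sum_{i=k+1}^n c_i x_i\ne 0$ achieves the value $\lambda_{k+1}$. Then $\sum_{i=k+1}^n(\lambda_i-\lambda_{k+1})c_i^2=0$, a sum of nonnegative terms, so $c_i=0$ whenever $\lambda_i>\lambda_{k+1}$. Hence $x$ is a nonzero linear combination of eigenvectors all having eigenvalue $\lambda_{k+1}$, and so $Mx=\lambda_{k+1}x$; that is, $x$ is an eigenvector of $\lambda_{k+1}$.

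I expect the only real subtlety to be the bookkeeping in the first step: the preceding lemma only guarantees that the vectors it produces are eigenvectors, with no control over which eigenvalue each belongs to, so one must invoke the multiset cancellation argument above to conclude that the eigenvalues appearing among $x_{k+1},\ldots,x_n$ are exactly $\lambda_{k+1},\ldots,\lambda_n$, and in particular are all at least $\lambda_{k+1}$. Everything after that is the standard Rayleigh-quotient computation. One could instead bypass this by applying the Spectral Theorem directly to $M'=B^TMB$ on the orthogonal complement of $\mathrm{span}\{x_1,\ldots,x_k\}$, but the cancellation argument is cleaner.
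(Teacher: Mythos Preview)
Your proof is correct and follows essentially the same approach as the paper: extend to a full orthonormal eigenbasis via repeated application of the preceding lemma, expand an arbitrary feasible $x$ in the tail of that basis, and bound the Rayleigh quotient as a weighted average of the $\lambda_i$ for $i\ge k+1$. Your multiset-cancellation argument for why the newly constructed eigenvectors carry exactly the eigenvalues $\lambda_{k+1},\ldots,\lambda_n$ is actually more explicit than the paper's treatment of the same point, which simply asserts the labeling after noting the full basis captures all eigenvalues.
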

\begin{Proof}
First, by repeatedly applying the above lemma, then we get $n-k$ orthogonal 
eigenvectors that are also orthogonal to $x_1,\ldots,x_k$.
Next, we claim that the eigenvalues of this system of $n$ orthogonal 
eigenvectors include all eigenvalues of $M$.
The proof is that if there were any other eigenvalues, then its eigenvector 
would be orthogonal to the other $n$ eigenvectors, which isn't possible, 
since we already have $n$ orthogonal vectors in $\mathbb{R}^{n}$.

Call the additional $n-k$ vectors $x_{k+1},\ldots,x_n$, where $x_i$ is an
eigenvector of $\lambda_i$.
(Note that we are inconsistent on whether subscripts mean elements of a 
vectors or different vectors themselves; but it should be clear from context.)
Then, consider the minimization problem
\[
\min_{ \substack{x\in\mathbb{R}^{n}\diagdown\{\vec{0}\}\\x \perp x_i \quad \forall i\in[k]}  } 
     \frac{x^TMx}{x^Tx}
\]
The solution $x \equiv x_{k+1}$ is feasible, and it has cost $\lambda_{k+1}$, 
and so $\mbox{min} \le \lambda_{k+1}$.

Now, consider any arbitrary feasible solution $x$, and write it as 
\[
x = \sum_{i={k+1}}^{n} \alpha_i x_i  .
\]
The cost of this solution is 
\[
\frac{\sum_{i=k+1}^{n}\lambda_i\alpha_i^2}{\sum_{i=k+1}^{n}\alpha_i^2}
\ge \lambda_{k+1} \frac{\sum_{i=k+1}^{n}\alpha_i^2}{\sum_{i=k+1}^{n}\alpha_i^2}
  = \lambda_{k+1}  , 
\]
and so $\mbox{min} \ge \lambda_{k+1}$.
By combining the above, we have that $\mbox{min} = \lambda_{k+1}$.

Note that is $x$ is a minimizer of this expression, i.e., if the cost of $x$
equals $\lambda_{k+1}$, then $a_i=0$ for all $i$ such that 
$\lambda_i > \lambda_{k+1}$, and so $x$ is a linear combination of 
eigenvectors of $\lambda_{k+1}$, and so it itself is an eigenvector of 
$\lambda_{k+1}$.
\end{Proof}

Two special cases of the above theorem are worth mentioning.
\begin{itemize}
\item
The leading eigenvector.
\[
\lambda_{1} 
   = \min_{ x\in\mathbb{R}^{n}\diagdown\{\vec{0}\} }
     \frac{x^TMx}{x^Tx} 
\]
\item
The next eigenvector.
\[
\lambda_{2} 
   = \min_{ x\in\mathbb{R}^{n}\diagdown\{\vec{0}\},x \perp x_1 }
     \frac{x^TMx}{x^Tx} ,
\]
where $x_1$ is a minimizer of the previous expression.
\end{itemize}

\subsection{Some things that were skipped}

Spielman and Trevisan
give two slightly different versions of the variational 
characterization and Courant-Fischer theorem, i.e., a min-max result, which 
might be of interest to present.

From wikipedia, there is the following discussion of the min-max theorem 
which is nice.
\begin{itemize}
\item
Let $A \in \mathbb{R}^{n \times n}$ be a Hermitian/symmetric matrix, then the
Rayleigh quotient $R_A:\mathbb{R}^{n}\diagdown\{0\}\rightarrow\mathbb{R}$ 
is $R_A(x) = \frac{\left<Ax,x\right>}{\left<x,x\right>}$, or equivalently
$f_A(x) = \left<Ax,x\right> : \|x\|_2=1$.
\item
Fact: for Hermitian matrices, the range of the continuous function $R_A(x)$ 
or $f_A(x)$ is a compact subset $[a,b]$ of $\mathbb{R}$.
The max $b$ and min $a$ are also the largest and smallest eigenvalue of $A$, 
respectively.
The max-min theorem can be viewed as a refinement of this fact.
\item
\begin{theorem}
If $A\in\mathbb{R}^{n \times n}$ is Hermitian with eigenvalues 
$\lambda_1 \ge \dots \ge \lambda_k \ge \cdots$, then
\[
\lambda_k = \max\{ \min\{ R_A(x) : x \in U, x \ne 0 \} , \mbox{dim}(U) = k \} ,
\]
and also 
\[
\lambda_k = \min\{ \max\{ R_A(x) : x \in U, x \ne 0 \} , \mbox{dim}(U) = n-k+1 \} .
\]
\end{theorem}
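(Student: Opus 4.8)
The plan is to reduce everything to the Spectral Theorem proved above, which supplies an orthonormal basis $x_1,\ldots,x_n$ of $\mathbb{R}^n$ with $Ax_i = \lambda_i x_i$; since the present statement orders the eigenvalues in nonincreasing order $\lambda_1 \ge \cdots \ge \lambda_n$, I index the eigenvectors to match that order. The one elementary fact I would use repeatedly is that for any nonzero $x = \sum_i \alpha_i x_i$ we have $R_A(x) = \frac{\sum_i \lambda_i \alpha_i^2}{\sum_i \alpha_i^2}$, a convex combination of the $\lambda_i$ over the indices $i$ with $\alpha_i \ne 0$; in particular, if $x \in \mathrm{span}\{x_i : i \in S\}$ then $\min_{j\in S}\lambda_j \le R_A(x) \le \max_{j\in S}\lambda_j$, with both extremes attained.

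For the first (max--min) identity I would establish two inequalities. For ``$\ge$'': take the explicit subspace $U_0 = \mathrm{span}\{x_1,\ldots,x_k\}$, of dimension $k$; by the fact above every nonzero $x\in U_0$ has $R_A(x)\ge\lambda_k$, with equality at $x=x_k$, so $\min_{x\in U_0}R_A(x)=\lambda_k$, and hence the outer maximum over all $k$-dimensional subspaces is at least $\lambda_k$. For ``$\le$'': let $U$ be an arbitrary $k$-dimensional subspace and put $W = \mathrm{span}\{x_k,x_{k+1},\ldots,x_n\}$, of dimension $n-k+1$. Since $\dim U + \dim W = n+1 > n$, the intersection $U\cap W$ contains a nonzero vector $x$, and for that $x$ the fact gives $R_A(x)\le\lambda_k$; therefore $\min_{x\in U}R_A(x)\le\lambda_k$ for every such $U$, so the outer maximum is $\le\lambda_k$. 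Combining the two bounds yields the claimed equality.

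For the second (min--max) identity I would run the mirror-image argument: for ``$\le$'' use the explicit subspace $\mathrm{span}\{x_k,\ldots,x_n\}$ of dimension $n-k+1$, on which the maximum of $R_A$ equals $\lambda_k$ (attained at $x_k$); for ``$\ge$'', given any subspace $U$ of dimension $n-k+1$, intersect it with $\mathrm{span}\{x_1,\ldots,x_k\}$ (again the dimensions sum to $n+1$) to obtain a nonzero vector with $R_A(x)\ge\lambda_k$, so $\max_{x\in U}R_A(x)\ge\lambda_k$. Alternatively, and more economically, one can deduce the min--max identity from the max--min identity applied to $-A$: the eigenvalues of $-A$ in nonincreasing order are $-\lambda_n\ge\cdots\ge-\lambda_1$, and relabeling $k \mapsto n-k+1$ together with $R_{-A} = -R_A$ converts one statement into the other.

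The step I expect to be the actual content — everything else being the ``weighted-average'' bookkeeping from the Spectral Theorem — is the dimension-count intersection lemma $\dim(U\cap W)\ge \dim U + \dim W - n$, i.e. that two subspaces whose dimensions sum to more than $n$ must meet in a nonzero vector; this is precisely what forces \emph{every} competing subspace to contain a vector on the ``wrong side'' of $\lambda_k$ and thereby pins down the optimum. It is standard (from $\dim(U+W) = \dim U + \dim W - \dim(U\cap W) \le n$), but it is the single ingredient not already packaged in the earlier results, and some care is needed to keep the nonincreasing ordering here consistent with the nondecreasing convention used in the preceding Variational Characterization of Eigenvalues.
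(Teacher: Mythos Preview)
Your proof is correct and is the standard argument for the Courant--Fischer theorem. However, the paper does not actually prove this statement: it appears in a subsection titled ``Some things that were skipped,'' where the theorem is simply quoted (from Wikipedia) without proof, so there is no paper argument to compare against. Your approach---use the spectral theorem to get an orthonormal eigenbasis, exhibit an explicit subspace achieving the bound, and use the dimension-count $\dim(U\cap W)\ge\dim U+\dim W-n$ to force any competing subspace to contain a vector pinning the Rayleigh quotient on the correct side of $\lambda_k$---is exactly the canonical proof, and your observation that the min--max form follows from the max--min form applied to $-A$ is also standard and correct.
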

\item
In particular, 
\[
\lambda_n \le R_A(x) \le \lambda_1 ,
\]
for all $x\in\mathbb{R}^{n}\diagdown\{0\}$.
\item
A simpler formulation for the max and min is
\begin{eqnarray*}
\lambda_1 &=& \max \{ R_A(x) : x \ne 0 \} \\
\lambda_n &=& \min \{ R_A(x) : x \ne 0 \} 
\end{eqnarray*}
\end{itemize}

Another thing that follows from the min-max theorem is the Cauchy Interlacing
Theorem.
See Spielman's 9/16/09 notes and Wikipedia for two different forms of this.
This can be used to control eigenvalues as you make changes to the matrix.
It is useful, and we may revisit this later.

And, finally, here is counterexample to these results in general.
Lest one thinks that these niceness results always hold, here is a simple 
non-symmetric matrix.
\[
A = \left( \begin{array}{cc} 0 & 1 \\
                             0 & 0 
           \end{array} 
    \right)
\]

(This is an example of a nilpotent matrix.)
\begin{definition}
A \emph{nilpotent matrix} is a square matrix $A$ such that $A^k=0$ for some 
$k \in \mathbb{Z}^{+}$.
\end{definition}
More generally, any triangular matrix with all zeros on the diagonal; but it
could also be a dense matrix.)

For this matrix $A$, we can define $R_A(x)$ as with the Rayleigh quotient.
Then, 
\begin{itemize}
\item
The only eigenvalue of $A$ equals $0$.
\item
The maximum value of $R_A(x)$ is equal to $\frac{1}{2}$, which is larger 
that $0$.
\end{itemize}
So, in particular, the Rayleigh quotient doesn't say much about the spectrum.

\subsection{Summary} 

Today we showed that any symmetric matrix (e.g., adjacency matrix $A$ of an 
undirected graph, Laplacian matrix, but more generally) is nice in that it 
has a full set of $n$ real eigenvalues and a full set of $n$ orthonormal
eigenvectors.

Next time, we will ask what those eigenvectors look like, since spectral 
methods make crucial use of that.
To do so, we will consider a different class of matrices, namely positive
or nonnegative (not PSD or SPSD, but element-wise positive or nonnegative)
and we will look at the extremal, i.e., top or bottom, eigenvectors.

\newpage

\section{%
(01/29/2015): 
Basic Matrix Results (2 of 3)}

Reading for today.
\begin{compactitem}
\item
Same as last class.
\end{compactitem}

\subsection{Review and overview}

Last time, we considered symmetric matrices, and we showed that is $M$ is an 
$n \times n$ real-valued matrix, then the following hold.
\begin{itemize}
\item
There are $n$ eigenvalues, counting multiplicity, that are all real.
\item
The eigenvectors corresponding to different eigenvalues are orthogonal.
\item
Given $k$ orthogonal eigenvectors, we can construct one more that is 
orthogonal to those $k$, and thus we can iterate this process to get a full
set of $n$ orthogonal eigenvectors
\item
This spectral theorem leads to a variational characterization of 
eigenvalues/eigenvectors and other useful characterizations.
\end{itemize}
These results say that symmetric matrices have several ``nice'' properties, 
and we will see that spectral methods will use these extensively.

Today, we will consider a different class of matrices and establish a 
different type of ``niceness'' result, which will also be used extensively 
by spectral methods.
In particular, we want to say something about how eigenvectors, and in 
particular the extremal eigenvectors, e.g., the largest one or few or the 
smallest one of few ``look like.''
The reason is that spectral methods---both vanilla and non-vanilla 
variants---will rely crucially on this; thus, understanding when and why 
this is true will be helpful to see how spectral methods sit with respect 
to other types of methods, to understand when they can be generalized, or 
not, and so on.

The class of matrices we will consider are \emph{positive matrices} as well 
as related \emph{non-negative matrices}.
By positive/non-negative, we mean that this holds element-wise.
Matrices of this form could be, e.g., the symmetric adjacency matrix of an 
undirected graph, but they could also be the non-symmetric adjacency matrix 
of a directed graph.
(In the latter case, of course, it is not a symmetric matrix, and so the 
results of the last class don't apply directly.)
In addition, the undirected/directed graphs could be weighted, assuming in 
both cases that weights are non-negative.
In addition, it could apply more generally to any positive/non-negative 
matrix (although, in fact, we will be able to take a positive/non-negative 
matrix and interpret it as the adjacency matrix of a graph).
The main theory that is used to make statements in this context and that we
will discuss today and next time is something called Perron-Frobenius theory.

\subsection{Some initial examples}

Perron-Frobenius theory deals with positive/non-negative vectors and 
matrices, i.e., vectors and matrices that are entry-wise positive/nonnegative.
Before proceeding with the main results of Perron-Frobenius theory, let us
see a few examples of why it might be of interest and when it doesn't hold.

\textbf{Example.}
Non-symmetric and not non-negative matrix.
Let's start with the following matrix, which is neither positive/non-negative 
nor symmetric.
\[
A = \left( \begin{array}{cc} 0 & -1 \\
                             2 &  3 
           \end{array} 
    \right)  .
\]
The characteristic polynomial of this matrix is 
\begin{eqnarray*}
\mbox{det}\left( A - \lambda I \right) 
   &=& \left| \begin{array}{cc} -\lambda & -1         \\
                                       2 &  3-\lambda
              \end{array}
       \right|  \\
   &=& -\lambda(3-\lambda)+2 \\
   &=& \lambda^2-3\lambda+2  \\
   &=& \left(\lambda-1\right)\left(\lambda-2\right)  ,
\end{eqnarray*}
from which if follows that the eigenvalues are $1$ and $2$.
Plugging in $\lambda=1$, we get $x_1+x_2=0$, and so the eigenvector 
corresponding to $\lambda=1$ is
\[
x_{\lambda=1} = \frac{1}{\sqrt{2}} 
                \left( \begin{array}{c} 1 \\ -1 \end{array} \right).
\]
Plugging in $\lambda=2$, we get $2x_1+x_2=0$, and so the eigenvector 
corresponding to $\lambda=2$ is
\[
x_{\lambda=1} = \frac{1}{\sqrt{5}} 
                \left( \begin{array}{c} 1 \\ -2 \end{array} \right).
\]
So, this matrix has two eigenvalues and two eigenvectors, but they are not
orthogonal, which is ok, since $A$ is not symmetric.

\textbf{Example.}
Defective matrix.
Consider the following matrix, which is an example of a ``defective'' matrix.
\[
A = \left( \begin{array}{cc} 1 & 1 \\
                             0 & 1 
           \end{array} 
    \right)  .
\]
The characteristic polynomial is
\begin{equation*}
\mbox{det}\left( A - \lambda I \right) 
   = \left| \begin{array}{cc} 1-\lambda & 1         \\
                                      0 & 1-\lambda
            \end{array}
     \right|  
   = \left(1-\lambda\right)^{2} , 
\end{equation*}
and so $1$ is a double root.
If we plug this in, then we get the system of equations
\begin{equation*}
\left( \begin{array}{cc} 0 & 1 \\
                         0 & 0 
       \end{array}
\right)
\left( \begin{array}{c} x_1 \\ x_2 \end{array} \right)
 = 
\left( \begin{array}{c} 0 \\ 0 \end{array} \right)  ,
\end{equation*}
meaning that $x_2=0$ and $x_1$ is arbitrary.
(BTW, note that the matrix that appears in that system of equations is a
nilpotent matrix.
See below.
From the last class, this has a value of the Rayleigh quotient that is not 
in the closed interval defined by the min to max eigenvalue.)
Thus, there is only one linearly independent eigenvector corresponding
to the double eigenvalue $\lambda=1$ and it is 
\[
x_{\lambda=1} = \left( \begin{array}{c} 1 \\ 0 \end{array} \right).
\]

\textbf{Example.}
Nilpotent matrix.
Consider the following matrix,
\[
A = \left( \begin{array}{cc} 0 & 1 \\
                             0 & 0 
           \end{array} 
    \right)  .
\]
The only eigenvalue of this equals zero. The eigenvector is the same as in the above example.
But this matrix has the property that if you raise it to some finite power
then it equals the all-zeros matrix.

\textbf{Example.}
Identity.
The problem above with having only one linearly independent eigenvector is 
\emph{not} due to the multiplicity in eigenvalues.
For example, consider the following identity matrix,
\[
A = \left( \begin{array}{cc} 1 & 0 \\
                             0 & 1 
           \end{array} 
    \right)  .
\]
which has characteristic polynomial $\lambda^2-1=0$, and so which has 
$\lambda=1$ as a repeated root.
Although it has a repeated root, it has two linearly independent eigenvectors.
For example, 
\[
x_{1} = \left( \begin{array}{c} 1 \\ 0 \end{array} \right)
\quad
\mbox
\quad
x_{2} = \left( \begin{array}{c} 0 \\ 1 \end{array} \right)  ,
\]
or, alternatively, 
\[
x_{1} = \frac{1}{\sqrt{2}} \left( \begin{array}{c} 1 \\  1 \end{array} \right)
\quad
\mbox
\quad
x_{2} = \frac{1}{\sqrt{2}} \left( \begin{array}{c} -1 \\ 1 \end{array} \right) .
\]

This distinction as to whether there are multiple eigenvectors associated
with a degenerate eigenvalue is an important distinction, and so we 
introduce the following definitions.

\begin{definition}
Given a matrix $A$, for an eigenvalue $\lambda_i$
\begin{itemize}
\item
it's \emph{algebraic multiplicity}, denoted $\mu_A(\lambda_i)$, is the 
multiplicity of $\lambda$ as a root of the characteristic polynomial; and
\item
it's \emph{geometric multiplicity}, denoted $\gamma_A(\lambda_i)$ is the 
maximum number of linearly independent eigenvectors associated with it.
\end{itemize}
\end{definition}

Here are some facts (and terminology) concerning the relationship between 
the algebraic multiplicity and the geometric multiplicity of an eigenvalue.
\begin{itemize}
\item
$ 1 \le \gamma_A(\lambda_i) \le \mu_A(\lambda_i)$.
\item
If $\mu_A(\lambda_i)=1$, then $\lambda_i$ is a simple eigenvalue.
\item
If $\gamma_A(\lambda_i)=\mu_A(\lambda_i)$, then $\lambda_i$ is a semi-simple 
eigenvalue.
\item
If $\gamma_A(\lambda_i) < \mu_A(\lambda_i)$, for some $i$, then the matrix 
$A$ is defective.
Defective matrices are more complicated since you need things like Jordan 
forms, and so they are messier.
\item
If $\sum_i \gamma_A(\lambda_i) = n$, then $A$ has $n$ linearly independent eigenvectors.
In this case, $A$ is diagonalizable.
I.e., we can write $AQ = Q\Lambda$, and so $Q^{-1}AQ = \Lambda$.
And conversely.
\end{itemize}

\subsection{Basic ideas behind Perron-Frobenius theory}

The basic idea of Perron-Frobenius theory is that if you have a matrix $A$
with all positive entries (think of it as the adjacency matrix of a 
general, i.e., possibly directed, graph) then it is ``nice'' in several ways:
\begin{itemize}
\item
there is one simple real eigenvalue of $A$ that has magnitude larger than 
all other eigenvalues; 
\item
the eigenvector associated with this eigenvalue has all positive entires;
\item
if you increase/decrease the magnitude of the entries of $A$, then that 
maximum eigenvalue increases/decreases; and 
\item
a few other related properties.
\end{itemize}
These results generalize to non-negative matrices (and slightly more 
generally, but that is of less interest in general).
There are a few gotchas that you have to watch out for, and those typically 
have an intuitive meaning.
So, it will be important to understand not only how to establish the above
statements, but also what the gotchas mean and how to avoid them.

These are quite strong claims, and they are certainly false in general, 
even for non-negative matrices, without those additional assumptions.
About that, note that every nonnegative matrix is the limit of positive
matrices, and so there exists an eigenvector with nonnegative components.
Clearly, the corresponding eigenvalue is nonnegative and greater or equal 
in absolute value.
Consider the following examples.

\textbf{Example.}
Symmetric matrix.
Consider the following matrix.
\[
A = \left( \begin{array}{cc} 0 & 1 \\
                             1 & 0 
           \end{array} 
    \right)  .
\]
This is a non-negative matrix, and there is an eigenvalue equal to $1$.
However, there exist other eigenvalues of the same absolute value (and not 
strictly less) as this maximal one.
The eigenvalues are $-1$ and $1$, both of which have absolute value $1$.

\textbf{Example.}
Non-symmetric matrix.
Consider the following matrix.
\[
A = \left( \begin{array}{cc} 0 & 1 \\
                             0 & 0 
           \end{array} 
    \right)  .
\]
This is a matrix in which the maximum eigenvalue is not simple.
The only root of the characteristic polynomial is $0$, and the corresponding
eigenvector, i.e., $\left( \begin{array}{c} 1 \\ 0 \end{array} \right) $, is 
not strictly positive.

These two counter examples contain the basic ideas underlying the two main 
gotchas that must be dealt with when generalizing Perron-Frobenius theory
to only non-negative matrices.

(As an aside, it is worth wondering what is unusual about that latter matrix 
and how it can be generalized.
One is 
\[
A = \left( \begin{array}{cccc} 0 & 1 & 0 & 0 \\
                               0 & 0 & 1 & 0 \\
                               0 & 0 & 0 & 1 \\
                               0 & 0 & 0 & 0 \\
           \end{array} 
    \right)  ,
\]
and there are others.
These simple examples might seem trivial, but they contain several key ideas
we will see later.)

One point of these examples is that the requirement that the entries of the
matrix $A$ be strictly positive is important for Perron-Frobenius theory
to hold. If instead we only have non-negativity, we need further assumption on $A$ which we will see below (and in the special case of matrices associated with graphs, the reducibility property of the matrix is equivalent to the connectedness of the graph).

\subsection{Reducibility and types of connectedness}

We get a non-trivial generalization of Peron-Frobenius theory from 
all-positive matrices to non-negative matrices, if we work with the class 
of irreducible matrices.
(We will get an even cleaner statement if we work with the class of 
irreducible aperiodic matrices.
We will start with the former first, and then we will bet to the latter.)

We start with the following definition, which applied to an $n \times n$ 
matrix $A$.
For those readers familiar with Markov chains and related topics, there is 
an obvious interpretation we will get to, but for now we just provide the
linear algebraic definition.

\begin{definition}
A matrix $A \in \mathbb{R}^{n \times n}$ is \emph{reducible} if there exist
a permutation matrix $P$ such that 
\[
C = PAP^T 
  = \left( \begin{array}{cc} A_{11} & A_{12} \\
                                  0 & A_{22} 
           \end{array} 
    \right)  ,
\]
with $A_{11} \in \mathbb{R}^{r \times r}$ and
$A_{22} \in \mathbb{R}^{(n-r) \times (n-r)}$, where $0 < r < n$.
(Note that the off-diagonal matrices, $0$ and $A_{12}$, will in general 
be rectangular.)
A matrix $A\in\mathbb{R}^{n \times n}$ is \emph{irreducible} if it is not
reducible.
\end{definition}

As an aside, here is another definition that you may come across and that 
we may point to later.
\begin{definition}
A nonnegative matrix $A\in\mathbb{R}^{n \times n}$ is \emph{irreducible} if
$\forall i,j \in[n]^2 , \exists t \in \mathbb{N} : A_{ij}^t >0$.
And it is \emph{primitive} if
$ \exists t \in \mathbb{N} , \forall i,j \in[n]^2 : A_{ij}^t >0$.
\end{definition}

This is less intuitive, but I'm mentioning it since these are algebraic and 
linear algebraic ideas, and we haven't yet connected it with random walks.
But later we will understand this in terms of things like lazy random walks 
(which is more intuitive for most people than the gcd definition of 
aperiodicity/primitiveness).

Fact: If $A$, a non-negative square matrix, is nilpotent (i.e., s.t. $A^k=0$, for 
some $k\in\mathbb{Z}^{+}$, then it is reducible.
\begin{Proof} By contradiction, suppose $A$ is irreducible, and nilpotent. Let $k$ be the smallest $k$ such that $A^k=0$. Then we know $A^{k-1}\neq 0$. Suppose $A^{k-1}_{ij}> 0$ for some $i,j$, since $A$ irreducible, we now there exist $t\ge 1$ such that $A^t_{ji}> 0$. Note all powers of $A$ are non-negative, then $A^{k-1+t}_{ii}=A^{k-1}_{i,\cdot}A^t_{\cdot,i}\ge A^{k-1}_{ij}A^t_{ji} > 0$ which gives a contradiction, since we have $A^k=0 \Rightarrow A^{k'}=0 \quad \forall k'\ge k$, while $k-1+t\ge k$, but $A^{k-1+t}\neq 0$.
\end{Proof}

We start with a lemma that, when viewed the right way, i.e., in a way that
is formal but not intuitive, is trivial to prove.
\begin{lemma}
Let $A\in\mathbb{R}^{n \times n}$ be a non-negative square matrix.
If $A$ is primitive, then $A$ is reducible.
\end{lemma}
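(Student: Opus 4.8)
The statement as printed reads ``primitive $\Rightarrow$ reducible,'' but this is surely a one-word slip for ``primitive $\Rightarrow$ irreducible'' (the printed version is false — e.g.\ any $1\times1$ positive matrix is primitive and, vacuously, not reducible), and it is this corrected statement that is ``trivial to prove when viewed the right way.'' So the plan is to prove: \emph{if $A$ is primitive then $A$ is irreducible}, which I would do by contrapositive — show that a reducible non-negative matrix cannot be primitive. (In fact non-negativity of $A$ is not even used; the argument is purely structural.)

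The one real observation needed is that reducibility is inherited by all powers. Suppose $A$ is reducible, and pick the permutation matrix $P$ with
\[
PAP^T=\begin{pmatrix} A_{11} & A_{12}\\ 0 & A_{22}\end{pmatrix},\qquad A_{11}\in\mathbb{R}^{r\times r},\ A_{22}\in\mathbb{R}^{(n-r)\times(n-r)},\ 0<r<n.
\]
Since $(PAP^T)^t = P A^t P^T$, and since block matrices with the stated zero $(2,1)$-block are closed under multiplication (a one-line induction on $t$: the product of two such blocks has bottom-left block $0\cdot A_{11}+A_{22}\cdot 0 = 0$), we get $PA^tP^T=\begin{pmatrix} A_{11}^t & \ast\\ 0 & A_{22}^t\end{pmatrix}$ for every $t\ge 1$; in particular the $(n-r)\times r$ bottom-left block of $PA^tP^T$ is identically zero. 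Because conjugation by $P$ merely relabels the index pairs, this says that a fixed nonempty set of entries $(i,j)$ of $A^t$ vanishes simultaneously for every $t\ge 1$. Hence no power of $A$ is entrywise positive, i.e.\ $A$ is not primitive. Contrapositively, $A$ primitive $\Rightarrow$ $A$ irreducible, which is the lemma.

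An even shorter phrasing of the same point — the ``formal but not intuitive'' one the author alludes to — is: if $A$ is primitive then $A^t>0$ entrywise for some $t$, and then $PA^tP^T>0$ entrywise for \emph{every} permutation $P$, so $A^t$, and hence $A$, can never be put in the block form above. There is essentially no obstacle here; the only thing to keep straight is the quantifier difference between ``some single $t$ works for all $(i,j)$'' (primitivity) and ``for each $(i,j)$ some $t=t(i,j)$ works'' (the alternate definition of irreducibility given earlier) — the implication we need goes only in the easy direction, so this causes no trouble, and the ``proof'' is really just the block-triangular closure-under-products remark.
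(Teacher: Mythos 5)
Your proposal is correct, and you are right that the lemma's conclusion is a one-word typo for ``irreducible'' (the surrounding text about the converse, and the companion lemma ``irreducible $\Rightarrow$ $I+A$ primitive,'' confirm this). Your main argument, however, takes a genuinely different route from the paper's. The paper's entire proof is the symbol string $\exists\forall\rightarrow\forall\exists$: it reads ``primitive'' and ``irreducible'' in the algebraic senses of the definition immediately preceding the lemma ($\exists t\,\forall i,j:(A^t)_{ij}>0$ versus $\forall i,j\,\exists t:(A^t)_{ij}>0$), so the implication is a pure quantifier exchange --- exactly the ``easy direction'' you flag in your closing sentence. You instead work with the primary permutation/block-triangular definition of reducibility and prove the contrapositive by observing that the block upper-triangular form is closed under products, hence inherited by every power $A^t$, so no power can be entrywise positive. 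What each buys: your route is self-contained with respect to the block definition and never needs to know that the two notions of irreducibility coincide (the paper only establishes that equivalence later, via strong connectivity of $G(A)$), at the cost of an actual, if one-line, matrix computation; the paper's route is instantaneous but presupposes the algebraic characterization. One small caveat: your ``even shorter phrasing'' ($A^t>0$, so $A^t$ is not reducible, ``hence $A$'' is not) still tacitly uses the powers-preserve-block-form fact, since positivity of $A^t$ says nothing directly about zero blocks of $A$ itself --- but you acknowledge this in your final sentence, so the argument as a whole is complete.
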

\begin{Proof}
$\exists\forall\rightarrow\forall\exists$
\end{Proof}

It can be shown that the converse is false.
But we can establish a sort of converse in the following lemma.
(It is a sort of converse since $A$ and $I+A$ are related, and in particular
in our applications to spectral graph theory the latter will essentially 
have an interpretation in terms of a lazy random walk associated with the 
former.)
\begin{lemma}
Let $A\in\mathbb{R}^{n \times n}$ be a non-negative square matrix.
If $A$ is irreducible, then $I+A$ is primitive.
\end{lemma}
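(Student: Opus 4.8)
The plan is to prove primitivity by exhibiting a single exponent that makes \emph{every} entry of a power of $I+A$ strictly positive; the natural candidate is $t = n-1$, so the goal reduces to showing that $(I+A)^{n-1}$ is entrywise $>0$. I would work throughout with the combinatorial characterization of irreducibility stated above, namely that for every ordered pair $(i,j)$ there is some $t\in\mathbb{N}$ with $(A^{t})_{ij}>0$ (equivalently, the directed graph whose edges are the pairs $(i,j)$ with $A_{ij}>0$ is strongly connected); the fact that this is equivalent to the permutation-based definition of irreducibility is standard — the block-triangular form is precisely the statement that one block of vertices cannot be reached from its complement — and I would either cite it or note it in passing.

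First I would expand by the binomial theorem, which is legitimate since $I$ and $A$ commute: $(I+A)^{n-1}=\sum_{k=0}^{n-1}\binom{n-1}{k}A^{k}$, with the convention $A^{0}=I$. Because $A\ge 0$ entrywise, every summand is entrywise nonnegative, and hence $\big((I+A)^{n-1}\big)_{ij}>0$ if and only if $(A^{k})_{ij}>0$ for at least one $k$ with $0\le k\le n-1$. This turns the whole problem into a reachability statement: I must show that any $i$ can reach any $j$ in at most $n-1$ steps in the directed graph of $A$ (with the convention that $i$ reaches itself in $0$ steps).

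Next I would split into cases. For $i=j$ the $k=0$ term already contributes $\binom{n-1}{0}(I)_{ii}=1>0$, so all diagonal entries of $(I+A)^{n-1}$ are positive. For $i\ne j$, irreducibility gives some $t$ with $(A^{t})_{ij}>0$; since $(A^{t})_{ij}$ is a sum over walks $i=v_{0}\to v_{1}\to\cdots\to v_{t}=j$ of the nonnegative products $\prod_{\ell}A_{v_{\ell-1}v_{\ell}}$, at least one such walk has all of its edges of positive weight. If $t\ge n$, the pigeonhole principle forces a repeated vertex, and excising the closed sub-walk between two visits of that vertex produces a strictly shorter walk from $i$ to $j$ all of whose edges still have positive weight — hence a lower power of $A$ with a positive $(i,j)$ entry. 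Iterating, I reach a walk of some length $s$ with $1\le s\le n-1$ (the lower bound because $i\ne j$, so the walk cannot collapse to length $0$) and $(A^{s})_{ij}>0$, whence $\big((I+A)^{n-1}\big)_{ij}>0$ by the equivalence above. Combining the two cases, $(I+A)^{n-1}$ is strictly positive in every entry, so $I+A$ is primitive.

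The only genuine content — and the step I would be most careful with — is this walk-shortening argument: irreducibility by itself gives no bound at all on the exponent $t$, and the work is in converting "some power works" into "a power of size at most $n-1$ works." This is a clean pigeonhole argument on the $n$ vertices of the associated digraph, but the proof needs to state precisely that excising a cycle keeps all remaining edge-weights positive (so the shorter walk really does witness positivity of a lower power of $A$) and that for $i\ne j$ the shortening never drives the length below $1$.
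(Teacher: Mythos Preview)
Your proof is correct and follows essentially the same approach as the paper: expand $(I+A)^{t}$ by the binomial theorem and observe that irreducibility forces some power $A^{k}$ with $k\le t$ to have a positive $(i,j)$ entry. You use the tighter exponent $t=n-1$ and supply the walk-shortening pigeonhole argument in detail, whereas the paper simply takes $t=n$ and asserts in one line that the binomial sum is positive ``since $A$ is irreducible''; your version is the more careful of the two.
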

\begin{Proof}
Write out the binomial expansion
\[
\left(I+A\right)^n = \sum_{k=0}^{n} {n \choose k} A^k .
\]
This has all positive entries since $A$ is irreducible, i.e., it eventually 
has all positive entries if $k$ is large enough.
\end{Proof}

Note that a positive matrix may be viewed as the adjacency matrix of a 
weighted complete graph.

Let's be more precise about directed and undirected graphs.

\begin{definition}
A \emph{directed graph} $G(A)$ associated with an $n \times n$ nonnegative 
matrix $A$ consists of $n$ nodes/vertices $P_1,\ldots,P_n$, where an edge 
leads from $P_i$ to $P_j$ iff $A_{ij} \ne 0$.
\end{definition}

Since directed graphs are directed, the connectivity properties are a little
more subtle than for undirected graphs.
Here, we need the following.
We will probably at least mention other variants later.

\begin{definition}
A directed graph $G$ is \emph{strongly connected} if $\forall$ ordered 
pairs $(P_i,P_j)$ of vertices of $G$, $\exists$ a path, i.e., a sequence of 
edges, $(P_i,P_{l_1}), (P_{l_1},P_{l_2}),\ldots, (P_{l_{r-1}},P_{j})$, which  
leads from $P_i$ to $P_j$.
The \emph{length} of the path is $r$.
\end{definition}

Fact: The graph $G(A^k)$ of a nonnegative matrix $A$ consists of all paths 
of $G(A)$ of length $k$ (i.e. there is an edge from $i$ to $j$ in $G(A^k)$ iff there is a path of length $k$ from $i$ to $j$ in $G$).

Keep this fact in mind since different variants of spectral methods involve
weighting paths of different lengths in different ways.

Here is a theorem that connects the linear algebraic idea of irreducibility 
with the graph theoretic idea of connectedness.
Like many things that tie together notions from two different areas, it can 
seem trivial when it is presented in such a way that it looks obvious; but 
it really is connecting two quite different ideas.
We will see more of this later.

\begin{theorem}
An $n \times n$ matrix $A$ is irreducible iff the corresponding directed 
graph $G(A)$ is strongly connected.
\end{theorem}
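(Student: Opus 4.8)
I would prove both directions by contraposition, translating the one nonzero off-diagonal block of a reducible matrix into the combinatorial statement ``no arc crosses from the second group of vertices to the first.'' The whole proof rests on one observation: the set of vertices reachable from a fixed vertex absorbs all of its out-arcs.

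\textbf{($\Leftarrow$): reducible $\Rightarrow$ not strongly connected.} Suppose $A$ is reducible, so there is a permutation matrix $P$ with $PAP^T$ of the stated block form, where $A_{11}\in\mathbb{R}^{r\times r}$, $A_{22}\in\mathbb{R}^{(n-r)\times(n-r)}$, the lower-left $(n-r)\times r$ block is $0$, and $0<r<n$. Relabel the vertices of $G(A)$ by $P$ and set $S=\{1,\dots,r\}$, $T=\{r+1,\dots,n\}$. For $i\in T$ and $j\in S$ we have $(PAP^T)_{ij}=0$, i.e.\ there is no arc of $G(A)$ from any vertex of $T$ to any vertex of $S$. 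Hence every walk that starts in $T$ stays in $T$, so there is no directed path from vertex $r+1$ to vertex $1$, and $G(A)$ is not strongly connected. This direction is essentially immediate once the block structure is read as ``no arcs from group two to group one.''

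\textbf{($\Rightarrow$): not strongly connected $\Rightarrow$ reducible.} Suppose $G(A)$ is not strongly connected and fix an ordered pair $(P_i,P_j)$ with no directed path from $P_i$ to $P_j$. Let $R\subseteq[n]$ be the set of all vertices reachable from $P_i$ by a directed path, with $P_i\in R$ by convention. Then $R\neq\emptyset$ and $j\notin R$, so $R$ is a proper nonempty subset; put $r=|[n]\setminus R|$, so $0<r<n$. The key step: $R$ absorbs its out-arcs --- if $A_{uv}\neq 0$ with $u\in R$, then appending the arc $(u,v)$ to a path from $P_i$ to $P_u$ shows $v\in R$; hence $A_{uv}=0$ for all $u\in R$, $v\in[n]\setminus R$. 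Now let $P$ be a permutation matrix listing the vertices of $[n]\setminus R$ first (there are $r$ of them) and those of $R$ second. Then the lower-left $(n-r)\times r$ block of $PAP^T$ has row indices in $R$ and column indices in $[n]\setminus R$, hence is identically zero, so $A$ is reducible in the sense of the definition.

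\textbf{Main obstacle.} There is no deep difficulty here; the one substantive point is the absorption claim ($R$ contains all endpoints of arcs leaving $R$), and the only thing that genuinely requires care is bookkeeping of conventions --- which block of $PAP^T$ is forced to vanish, and correspondingly whether $R$ or its complement should be placed first so that the block sizes match the definition ($A_{11}$ of size $r$, lower-left block zero). One could instead route the proof through the Fact that the arcs of $G(A^k)$ are exactly the length-$k$ paths of $G(A)$, together with the characterization ``$A$ irreducible iff $\forall i,j\,\exists t\colon A^t_{ij}>0$,'' but reconciling that alternative definition with the permutation definition amounts to essentially the same reachable-set argument, so I would present the direct version above.
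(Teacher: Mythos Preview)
Your proof is correct and follows essentially the same route as the paper's: both directions are argued by contraposition, and the substantive direction builds the block decomposition from a reachability set (the paper uses the set of vertices ``connected to $P_i$'' and its complement, you use the set $R$ of vertices reachable from $P_i$). Your version is in fact a bit more careful than the paper's about which set is listed first so that the vanishing block lands in the lower-left position required by the definition.
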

\begin{Proof}
Let $A$ be an irreducible matrix.
Assume, for contradiction, that $G(A)$ is \emph{not} strongly connected.
Then, there exists an ordered pair of nodes, call them $(P_i,P_j)$, s.t. 
there does not exist a connection from $P_i$ to $P_j$.
In this case, let $S_1$ be the set of nodes connected to $P_i$, and let 
$S_2$ be the remainder of the nodes.
Note that there is no connection between any nodes $P_{\ell}\in S_2$ and
any node $P_q \in S_1$, since otherwise we sould have $P_{\ell}\in S_1$.
And note that both sets are nonempty, since $P_j\in S_1$ and $P_i\in S_2$.
Let $r = |S_1|$ and $n-r = |S_2|$.
Consider a permutation transformation $C=PAP^T$ that reorders the nodes of 
$G(A)$ such that
\[
\begin{cases}
    P_1,P_2,\cdots,P_r \in S_1 \\
    P_{r+1},P_{r+2},\cdots,P_n \in S_2
\end{cases}
\]
That is 
\[
C_{k\ell} = 0 \quad\forall
\begin{cases}
    k = r+1,r+2,\ldots,n \\
    \ell = 1,2,\ldots,r   .
\end{cases}
\]
But this is a contradiction, since $A$ is irreducible.

Conversely, assume that $G(A)$ is strongly connected, and assume for 
contradiction that $A$ is not irreducible.
Reverse the order of the above argument, and we arrive at the conclusion 
that $G(A)$ is not strongly connected, which is a contradiction.
\end{Proof}

We conclude by noting that, informally, there are two types of 
irreducibility.
To see this, recall that in the definition of reducibility/irreducibility, 
we have the following matrix:
\[
C = PAP^T 
  = \left( \begin{array}{cc} A_{11} & A_{12} \\
                                  0 & A_{22} 
           \end{array} 
    \right)  .
\]
In one type, $A_{12} \ne 0$: in this case, we can go from the first set to 
the second set and get stuck in some sort of sink.
(We haven't made that precise, in terms of random walk interpretations, but
there is some sort of interaction between the two groups.)
In the other type, $A_{12} = 0$: in this case, there are two parts that 
don't talk with each other, and so essentially there are two separate 
graphs/matrices.

\subsection{Basics of Perron-Frobenius theory}

Let's start with the following definition.
(Note here that we are using subscripts to refer to elements of a vector, 
which is inconsistent with what we did in the last class.)
\begin{definition}
A vector $x\in\mathbb{R}^{n}$ is \emph{positive} (resp, \emph{non-negative}) 
if all of the entries of the vector are positive (resp, non-negative), i.e., 
if $x_i > 0$ for all $i\in[n]$ (resp if $x_i \ge 0$ for all $i\in[n]$).
\end{definition}
A similar definition holds for $m \times n$ matrices.
Note that this is \emph{not} the same as SPD/SPSD matrices.

Let's also provide the following definition.
\begin{definition}
Let $\lambda_1,\ldots,\lambda_n$ be the (real or complex) eigenvalues of a 
matrix $A\in\mathbb{C}^{n \times n}$.
Then the \emph{spectral radius} 
$ \rho_A = \rho(A) = \max_i \left( \left| \lambda_i \right| \right) $
\end{definition}

Here is a basic statement of the Perron-Frobenius theorem.

\begin{theorem}[Perron-Frobenius]
Let $A\in\mathbb{R}^{n \times n}$ be an irreducible non-negative matrix.
Then,
\begin{enumerate}
\item
$A$ has a positive real eigenvalue equal to its spectral radium.
\item
That eigenvalue $\rho_A$ has algebraic and geometric multiplicity equal to 
one.
\item
The one eigenvector $x$ associated with the eigenvalue $\rho_A$ has all 
positive entries.
\item
$\rho_A$ increases when any entry of $A$ increases.
\item
There is no other non-negative eigenvector of $A$ different than $x$.
\item
If, in addition, $A$ is primitive, then each other eigenvalue $\lambda$ of 
$A$ satisfies $\left| \lambda \right| < \rho_A$.
\end{enumerate}
\end{theorem}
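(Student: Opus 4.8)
The plan is to prove the six parts in a carefully chosen order, building up geometric structure from an extremal problem. First I would establish existence of the positive eigenvalue $\rho_A$ by a compactness/extremal argument: consider $r(x) = \min_{i\,:\,x_i \ne 0}\frac{(Ax)_i}{x_i}$ over the simplex $\{x \ge 0,\ \sum_i x_i = 1\}$, and let $\rho = \sup_x r(x)$. One shows the sup is attained at some $z$ (this needs care, since $r$ is only upper semicontinuous — I would instead optimize over the image $A(\text{simplex})$, which is compact and has strictly positive coordinates when $A$ is irreducible because $(I+A)^{n-1}$ is positive by the lemma proved above). The maximizer $z$ satisfies $Az = \rho z$: if some coordinate of $Az - \rho z$ were strictly positive, then applying $(I+A)^{n-1}$ (positive matrix) would strictly increase $r$, a contradiction. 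Then since $z = \rho^{-1} Az > 0$ once we know $z \ge 0$ is a nonzero eigenvector and $A$ irreducible (again via $(I+A)^{n-1}z > 0$), we get part 3, and $\rho > 0$.

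Next I would show $\rho = \rho_A$, i.e. $\rho$ is the spectral radius. For any eigenvalue $\mu$ with eigenvector $w$, take $|w|$ componentwise; then $|\mu|\,|w_i| = |(Aw)_i| \le (A|w|)_i$, so $|\mu| \le r(|w|) \le \rho$. This gives part 1. For part 2, I would first show the eigenvalue $\rho_A$ is simple by proving that any eigenvector for $\rho_A$ is a scalar multiple of $z$: if $Av = \rho_A v$ with $v$ real and not proportional to $z$, then $z - tv$ is a nonnegative-but-not-positive eigenvector for suitable $t$, contradicting that eigenvectors for $\rho_A$ must be strictly positive (by the argument above applied to their nonzero part, using irreducibility). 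This shows geometric multiplicity one; algebraic multiplicity one then follows either from the left eigenvector $y > 0$ (obtained by applying the same argument to $A^T$, which is also irreducible) together with $y^T z > 0$ ruling out generalized eigenvectors, or from a perturbation/derivative argument. The left eigenvector $y>0$ also cleanly yields part 4: differentiating $\rho_A(t)$ as an entry $A_{k\ell}$ increases gives $\frac{d\rho_A}{dA_{k\ell}} = \frac{y_k z_\ell}{y^T z} > 0$. Part 5 is essentially a restatement of the simplicity argument: any nonnegative eigenvector has eigenvalue $\rho_A$ (pair it against the positive left eigenvector $y$: $A u = \mu u$ with $u\ge0$ gives $\mu\, y^T u = y^T A u = \rho_A\, y^T u$, and $y^T u > 0$), hence is proportional to $z$.

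Finally, part 6 uses primitivity. Here I would show that if $|\mu| = \rho_A$ for some eigenvalue $\mu$, then the equality case in $|\mu|\,|w| \le A|w|$ forces $A|w| = \rho_A |w|$, so $|w| = z$ up to scaling, and then the phases of the entries of $w$ must all be equal after the diffusion step — more precisely, writing $w_j = z_j e^{i\theta_j}$, equality in the triangle inequality for $(Aw)_i = \sum_j A_{ij} w_j$ forces $e^{i\theta_j}$ constant over all $j$ with $A_{ij} > 0$; iterating and using that some power $A^t$ is strictly positive (primitivity) forces all $\theta_j$ equal, so $\mu = \rho_A$. Thus no other eigenvalue has modulus $\rho_A$.

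The main obstacle I anticipate is the extremal existence argument in the first step: $r(x)$ is genuinely not continuous on the boundary of the simplex, so one must either replace $x$ by $(I+A)^{n-1}x$ to move into the strictly positive orthant before optimizing, or argue via a limiting sequence with a compactness extraction and then show the limit cannot lie on a "bad" face — this is the one place where irreducibility (through positivity of $(I+A)^{n-1}$) does the essential work, and getting that argument airtight rather than hand-wavy is the crux. The phase-alignment argument in part 6 is the second-most delicate point, but it is more routine once the structure of $z$ as the unique positive eigenvector is in hand.
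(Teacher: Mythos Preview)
Your proposal is correct and follows the same overall architecture as the paper: the Collatz--Wielandt extremal function $r(x) = \min_{i:x_i\ne 0}(Ax)_i/x_i$, with the continuity issue handled by pushing into the positive orthant via $(I+A)^{n-1}$ (the paper uses $(I+A)^n$, immaterially), the spectral-radius bound via $|\lambda|\,|w| \le A|w|$, the use of the left Perron eigenvector $y>0$ for $A^T$, and the phase-alignment argument for part 6. You have also correctly identified the one genuinely delicate point, namely the compactness step, and your fix is exactly the paper's.

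The substantive differences are in parts 2 and 4, and in their ordering. For simplicity (part 2), you use the standard subtract-off argument for geometric multiplicity and then kill Jordan blocks with $y^T(A-\rho_A I)v = 0$ versus $y^Tz>0$; the paper instead differentiates the characteristic polynomial, $\frac{d}{d\lambda}\det(\lambda I - A) = \sum_i \det(\lambda I - A_{(i)})$, and invokes the strict monotonicity result on the principal minors $A_{(i)}$ to conclude each summand is positive at $\rho_A$. For monotonicity (part 4), you use the simple-eigenvalue perturbation formula $d\rho_A/dA_{k\ell} = y_k z_\ell/(y^Tz)$, which requires simplicity first; the paper instead proves the comparison $0\le B\le A,\ B\ne A \Rightarrow \rho_B < \rho_A$ directly from the extremal characterization, which requires no simplicity and actually feeds into its simplicity proof. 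So the dependency graph is reversed: the paper does monotonicity $\to$ simplicity, you do simplicity $\to$ monotonicity. Both orderings work; yours is arguably cleaner for simplicity but yields a slightly weaker monotonicity statement (local/infinitesimal, needing an integration along an irreducible path, rather than the paper's global comparison with a possibly reducible $B$).
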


Before giving the proof, which we will do next class, let's first start with 
some ideas that will suggest how to do the proof.

Let $P=\left(I+A\right)^{n}$.
Since $P$ is positive, it is true that for every non-negative and 
non-null vector $v$, that we have that $Pv >0$ element-wise.
Relatedly, if $v \le w$ element-wise, and $v \ne w$, then $Pv < Pw$.

Let 
\[
Q = \left\{ x \in \mathbb{R}^{n} \mbox{ s.t. } x \geq 0 , x \neq 0 \right\}
\]
be the nonnegative orthant, excluding the origin.
In addition, let
\[
C = \left\{ x \in \mathbb{R}^{n} \mbox{ s.t. } x \geq 0 , ||x||=1 \right\} ,
\]
where $||\cdot||$ is any vector norm.
Clearly, $C$ is compact, i.e., closed and bounded.

Then, for all $z \in Q$, we can define the following function:
let 
\[
f(z) = \max \left\{ s\in\mathbb{R} : sz \le Az \right\}
     = \min_{1 \le i \le n,z_i \ne 0} \frac{\left(Az\right)_{i}}{z_i}
\]

Here are facts about $f$.
\begin{itemize}
\item
$f(rz) = f(z)$, for all $r > 0$.
\item
If $Az = \lambda z$, i.e., if $(\lambda,z)$ is an eigenpair, then 
$f(z) = \lambda$.
\item
If $sz \le Az$, then $sPz \le PAz = APz$, where the latter follows since 
$A$ and $P$ clearly commute.
So, 
\[
f(z) \le f(Pz) .
\]
In addition, if $z$ is \emph{not} an eigenvector of $A$, then $sz \ne Az$, 
for all $s$; and $sPz < APz$.
From the second expression for $f(z)$ above, we have that in this case that
\[
f(z) < f(Pz)  ,
\]
i.e., an inequality in general but a strict inequality if not an eigenvector.
\end{itemize}

This \emph{suggests} an idea for the proof: look for a positive vector that
maximizes the function $f$; show it is an eigenvector we want in the theorem; 
and show that it established the properties stated in the theorem.

\newpage

\section{%
(02/03/2015): 
Basic Matrix Results (3 of 3)}

Reading for today.
\begin{compactitem}
\item
Same as last class.
\end{compactitem}

\subsection{Review and overview}

Recall the basic statement of the Perron-Frobenius theorem from last class.

\begin{theorem}[Perron-Frobenius]
Let $A\in\mathbb{R}^{n \times n}$ be an irreducible non-negative matrix.
Then,
\begin{enumerate}
\item
$A$ has a positive real eigenvalue $\lambda_{max}$;
which is equal to the spectral radius; and $\lambda_{max}$
has an associated eigenvector $x$ with all positive entries.
\item
If $0 \le B \le A$, with $B \ne A$, then every eigenvalue $\sigma$ of $B$ 
satisfies $|\sigma| < \lambda_{max} = \rho_A$.
(Note that $B$ does not need to be irreducible.)
In particular, $B$ can be obtained from $A$ by zeroing out entries; and also
all of the diagonal minors $A_{(i)}$ obtained from $A$ by deleting the 
$i^{th}$ row/column have eigenvalues with absolute value strictly less than 
$\lambda_{max} = \rho_A$.
Informally, this says: $\rho_A$ increases when any entry of $A$ increases.
\item
That eigenvalue $\rho_A$ has algebraic and geometric multiplicity equal to 
one.
\item
If $y \ge 0$, $y \ne 0$ is a vector and $\mu$ is a number such that 
$A y \le \mu y$, then $y > 0$ and $\mu \ge \lambda_{max}$;
with $\mu = \lambda_{max}$ iff $y$ is a multiple of $x$.
Informally, this says: there is no other non-negative eigenvector of $A$ 
different than $x$.
\item
If, in addition, $A$ is primitive/aperiodic, then each other eigenvalue 
$\lambda$ of $A$ satisfies $\left| \lambda \right| < \rho_A$.
\item
If, in addition, $A$ is primitive/aperiodic, then
\[
\lim_{t \rightarrow \infty} \left( \frac{1}{\rho_A} A \right)^{t} = xy^T ,
\]
where $x$ and $y$ are positive eigenvectors of $A$ and $A^T$ with eigenvalue 
$\rho_A$, i.e., $Ax = \rho_A x$ and $A ^T y = \rho_A y$ (i.e., 
$y^TA = \rho_Ay^T$), normalized such that $x^Ty=1$.
\end{enumerate}
\end{theorem}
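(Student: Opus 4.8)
The plan is to obtain parts (1)--(3) by realizing $\rho_A$ and its eigenvector as the maximum value and a maximizer of the function $f$ set up above, and then to bootstrap the monotonicity, the ``no other nonnegative eigenvector'' statement, and the two primitivity statements from that maximizer together with the corresponding left eigenvector of $A^{T}$. Concretely, let $P=(I+A)^{n}$, which is strictly positive since $A$ is irreducible (by the lemmas above, $I+A$ is primitive). Put $r=\sup_{z\in Q}f(z)$. Because $f(z)\le f(Pz)$ and $Pz>0$, this supremum is unchanged if we restrict to $PC$, a compact set on which (being a set of strictly positive vectors) $f$ is continuous, so $r$ is attained at some $x=Pz_{0}>0$. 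If $Ax\ne rx$, then $Ax-rx\ge 0$ is nonzero, so $P(Ax-rx)>0$, i.e.\ $A(Px)>r(Px)$, giving $f(Px)>r$, a contradiction; hence $Ax=rx$ with $x>0$. Taking $z=\vec 1$ (no row of an irreducible $A$ vanishes, so $A\vec1>0$) shows $r=f(\vec 1)>0$. For any eigenpair $(\lambda,v)$ one has, entrywise, $|\lambda|\,|v|=|Av|\le A|v|$, so $f(|v|)\ge|\lambda|$ and thus $|\lambda|\le r$; hence $r=\rho_A=\lambda_{\max}$, which is part (1).

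For the monotonicity statement, part (2): let $0\le B\le A$, $B\ne A$, and $Bv=\sigma v$. Then $|\sigma|\,|v|\le B|v|\le A|v|$, so $f(|v|)\ge|\sigma|$, giving $|\sigma|\le r$. If $|\sigma|=r$, then $A|v|\ge r|v|$ with nonzero difference would again contradict maximality after applying $P$, so $A|v|=r|v|$; squeezing then forces $B|v|=A|v|$, i.e.\ $(A-B)|v|=0$ with $A-B\ge0$ nonzero and $|v|>0$ (since $(I+A)^n|v|=(1+r)^n|v|>0$), a contradiction; so $|\sigma|<r$. Applying this to the zero-padded principal submatrices $A_{(i)}$ (each $\le A$ and $\ne A$ by irreducibility) yields the claim about minors, and ``$\rho_A$ strictly increases when an entry increases'' is the same fact read the other way. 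I would then get algebraic simplicity (part (3)) from $\tfrac{d}{d\lambda}\det(\lambda I-A)\big|_{\lambda=r}=\sum_{i}\det\big(rI-A_{(i)}\big)$: each $A_{(i)}\ge0$ has spectral radius $<r$ by part (2), so $rI-A_{(i)}$ is a nonsingular $M$-matrix with positive determinant, whence the derivative is positive and $r$ is a simple root (so geometric multiplicity is also $1$). For part (4), apply parts (1)--(3) to $A^{T}$ to get a strictly positive left eigenvector $w$ with $A^{T}w=rw$; if $y\ge0$, $y\ne0$ and $Ay\le\mu y$, iterating gives $A^{k}y\le\mu^{k}y$, hence $Py\le(1+\mu)^{n}y$, and since $Py>0$ we get $y>0$; moreover $r\,w^{T}y=w^{T}(Ay)\le\mu\,w^{T}y$ with $w^{T}y>0$ forces $\mu\ge r$, and $\mu=r$ gives $w^{T}(\mu y-Ay)=0$ with $w>0$, hence $Ay=ry$ and $y$ is a multiple of $x$ by part (3).

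For parts (5)--(6), assume $A$ primitive, say $A^{m}>0$. If $Av=\lambda v$ with $|\lambda|=r$, then $r|v|=|Av|\le A|v|$ forces $A|v|=r|v|$, so $|v|$ is a positive multiple of $x$ and the triangle inequality $|Av|=A|v|$ holds entrywise; passing to $A^{m}>0$ (where $A^{m}|v|=|A^{m}v|$) forces all coordinates $v_{j}$ to share one common phase $e^{i\theta}$, so $v=e^{i\theta}x$ and $\lambda=r$; hence every other eigenvalue has modulus $<r$, which is part (5). For part (6), put $B=\tfrac1{r}A$; by parts (3) and (5) the eigenvalue $1$ of $B$ is simple and all other eigenvalues lie strictly inside the unit disk, so in a Jordan decomposition $B=S\,(1\oplus J')\,S^{-1}$ with $\rho(J')<1$ we have $B^{t}\to S\,(1\oplus 0)\,S^{-1}=:\Pi$, a rank-one matrix with $B\Pi=\Pi B=\Pi=\Pi^{2}$; writing $\Pi=xy^{T}$ with $Ax=rx$, $A^{T}y=ry$ and normalizing so that $x^{T}y=1$ identifies the limit, giving $\left(\tfrac1{\rho_A}A\right)^{t}\to xy^{T}$.

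The genuinely delicate steps, and where the hypotheses are actually consumed, are two. First, the jump from the easy geometric simplicity (the $\min_i x_i/y_i$ argument) to \emph{algebraic} simplicity in part (3); I would route this through part (2) and the $M$-matrix positivity of $\det(rI-A_{(i)})$ rather than attempt a bare generalized-eigenvector computation. Second, the equality-in-the-triangle-inequality analysis in part (5): one must invoke that \emph{some} power of $A$ is strictly positive (primitivity) to conclude that all coordinates of a modulus-$r$ eigenvector carry a single common phase; without aperiodicity this fails (as the $\left(\begin{smallmatrix}0&1\\1&0\end{smallmatrix}\right)$ example shows), and this is the step on which part (6)'s convergence rests.
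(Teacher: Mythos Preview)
Your proof is correct and follows essentially the same route as the paper: the Collatz--Wielandt function $f$ on the positive cone, compactness of $PC$ to produce a positive maximizer, the $|\lambda||z|\le A|z|$ squeeze for $\rho_A$ and monotonicity, the derivative-of-characteristic-polynomial argument via principal minors for algebraic simplicity, and the left-eigenvector pairing for part (4). The only cosmetic differences are in parts (5)--(6): the paper first reduces to a strictly positive matrix and reads the common-phase conclusion from a single row, whereas you pass to $A^m>0$ directly (equivalent); and for the limit the paper uses the explicit projector $H=xy^T$ and the invariant splitting $\mathbb{R}^n=\mathrm{im}\,H\oplus\ker H$, whereas you invoke Jordan form --- both arguments yield the same rank-one limit.
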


Today, we will do three things:
(1) we will prove this theorem;
(2) we will also discuss periodicity/aperiodicity issues;
(3) we will also briefly discuss the first connectivity/non-connectivity result for 
Adjacency and Laplacian matrices of graphs that will use the ideas we have
developed in the last few~classes.

Before proceeding, one note: 
an interpretation of a matrix $B$ generated from $A$ by zeroing out an 
entry or an entire row/column is that you can remove an edge from a graph 
or you can remove a node and all of the associated edges from a graph.
(The monotonicity provided by that part of this theorem will be important 
for making claims about how the spectral radius behaves when such changes 
are made to a graph.)
This obviously holds true for Adjacency matrices, and a similar statement also
holds true for Laplacian matrices.

\subsection{Proof of the Perron-Frobenius theorem}

We start with some general notation and definitions; then we prove each 
part of the theorem in~turn.

Recall from last time that we let $P=\left(I+A\right)^{n}$ and thus $P$ is positive.
Thus, for every non-negative and non-null vector $v$, then we have that $Pv >0$ 
element-wise; and (equivalently) if $v \le w$ element-wise, and $v \ne w$, then 
we have that $Pv < Pw$.
Recall also that we defined
\begin{eqnarray*}
Q &=& \left\{ x \in \mathbb{R}^{n} \mbox{ s.t. } x \geq 0 , x \neq 0 \right\} \\
C &=& \left\{ x \in \mathbb{R}^{n} \mbox{ s.t. } x \geq 0 , ||x||=1 \right\} ,
\end{eqnarray*}
where $||\cdot||$ is any vector norm.
Note in particular that this means that $C$ is compact, i.e., closed and bounded.
Recall also that, for all $z \in Q$, we defined the following function:
let 
\[
f(z) = \max \left\{ s\in\mathbb{R} : sz \le Az \right\}
     = \min_{1 \le i \le n,z_i \ne 0} \frac{\left(Az\right)_{i}}{z_i}
\]

Finally, recall several facts about the function $f$.
\begin{itemize}
\item
$f(rz) = f(z)$, for all $r > 0$.
\item
If $Az = \lambda z$, i.e., if $(\lambda,z)$ is an eigenpair, then 
$f(z) = \lambda$.
\item
In general, $ f(z) \le f(Pz) $; and if $z$ is \emph{not} an eigenvector of 
$A$, then $ f(z) < f(Pz) $.
(The reason for the former is that if $sz \le Az$, then $sPz \le PAz = APz$.
The reason for the latter is that in this case $sz \ne Az$, for all $s$, and 
$sPz < APz$, and by considering the second expression for $f(z)$ above.)
\end{itemize}

We will prove the theorem in several steps.

\subsection{Positive eigenvalue with positive eigenvector.}

Here, we will show that there is a positive eigenvalue $\lambda^{*}$ and that the associated eigenvector $x^{*}$ is a positive vector.

To do so, consider $P(C)$, the image of $C$ under the action of the operator $P$.
This is a compact set, and all vectors in $P(C)$ are positive.
By the second expression in definition of $f(\cdot)$ above, we have that $f$ is continuous of $P(C)$.
Thus, $f$ achieves its maximum value of $P(C)$, i.e., there exists a vector $x \in P(C)$ such that 
\[
f(x) = \sup_{z \in C} f(Pz) .
\]
Since $f(z) \le f(Pz)$, the vector $x$ realizes the maximum value $f_{max}$ of $f$ on $Q$.
So, 
\[
f_{max} = f(x) \le f(Px) \le f_{max}  .
\]
Thus, from the third property of $f$ above, $x$ is an eigenvector of $A$ with eigenvalue $f_{max}$.
Since $x \in P(C)$, then $x$ is a positive vector; and since $Ax > 0$ and $Ax = f_{max}x$, it follows that $f_{max}>0$.

(Note that this result shows that $f_{max}=\lambda^{*}$ is achieved on an eigenvector $x=x^{*}$, but it doesn't show yet that it is equal to the spectral radius.)

\subsection{That eigenvalue equals the spectral radius.}

Here, we will show that $f_{max}= \rho_A$, i.e., $f_{max}$ equals the spectral radius.

To do so, let $z\in\mathbb{C}^{n}$ be an eigenvector of $A$ with eigenvalue $\lambda\in\mathbb{C}$; and let $|z|$ be a vector, each entry of which equals $|z_i|$.
Then, $|z| \in Q$.

We claim that $ |\lambda| |z| \le A |z| $.
To establish the claim, rewrite it as $ |\lambda| |z_i| \le \sum_{k=1}^{n} A_{ik} |z_k| $.
Then, since $Az = \lambda z$, i.e., $\lambda z_i = \sum_{k=1}^{n} A_{ik} z_k $, and since $A_{ik}\ge0$, we have that 
\[
|\lambda| |z| = \left| \sum_{k=1}^{n} A_{ik} z_k \right| \le \sum_{k=1}^{n} A_{ik} |z_k| ,
\]
from which the claim follows.

Thus, by the definition of $f$ (i.e., since $f(z)=\min\frac{(Az)_{i}}{(z)_{i}}$, we have that $|\lambda| \le f(|z|)$.
Hence, $|\lambda| \le f_{max}$, and thus $\rho_A \le f_{max}$ (where $\rho_A$ is the spectral radius).
Conversely, from the above, i.e., since $f_{max}$ is an eigenvalue it must be $\le$ the maximum eigenvalue, we have that $f_{max} \le \rho_A$.
Thus, $f_{max}=\rho_A$.

\subsection{An extra claim to make.}
\label{sxn:extra-claim}

We would like to establish the following result:
\[
f(z) = f_{max} \Rightarrow \left( Az = f_{max}z \mbox{ and } z > 0 \right)  .
\]

To establish this result, observe that above it is shown that:
if $f(z)=f_{max}$, then $f(z)=f(Pz)$.
Thus, $z$ is an eigenvector of $A$ for eigenvalue $f_{max}$.
It follows that $Pz = \lambda z$, i.e., that $z$ is also an eigenvector of $P$.
Since $P$ is positive, we have that $Pz > 0$, and so $z$ is positive.

\subsection{Monotonicity of spectral radius.}

Here, we would like to show that $0 \le B \le A$ and $B \ne A$ implies that $\rho_B < \rho_A$.
(Recall that $B$ need \emph{not} be irreducible, but $A$ is.)

To do so, suppose that $Bz = \lambda z$, with $z \in \mathbb{C}^{n}$ and with $\lambda\in\mathbb{C}$.
Then, 
\[
|\lambda| |z| \le B |z| \le A |z| , 
\]
from which it follows that 
\[
|\lambda| \le f_A(|z|) \le \rho_A   , 
\]
and thus $\rho_B \le \rho_A$.

Next, assume for contradiction that $|\lambda| = \rho_A$.
Then from the above claim (in Section~\ref{sxn:extra-claim}), we have that $f_A(z) = \rho_A$.
Thus from above it follows that
$|z|$ is an eigenvector of $A$ for the eigenvalue $\rho_A$ and also that 
$z$ is positive.
Hence, $B |z| = A |z|$, with $z > 0$; but this is impossible unless $A=B$.

\textbf{Remark.}
Replacing the $i^{th}$ row/column of $A$ by zeros gives a non-negative matrix $A_{(i)}$ such that $0 \le A_{(i)} \le A$.
Moreover, $A_{(i)} \ne A$, since the irreducibility of $A$ precludes the possibility that all entries in a row are equal to zero.
Thus, for all matrices $A_{(i)}$ that are obtained by eliminating the $i^{th}$ row/column of $A$, the eigenvalues of $A_{(i)} < \rho$.

\subsection{Algebraic/geometric multiplicities equal one.}

Here, we will show that the algebraic and geometric multiplicity of $\lambda_{max}$ equal $1$.
Recall that the geometric multiplicity is less than or equal to the algebraic multiplicity, and that both are at least equal to one, so it suffices to prove this for the algebraic multiplicity.

Before proceeding, also define the following: given a square matrix $A$: 
\begin{itemize}
\item
Let $A_{(i)}$ be the matrix obtained by eliminating the $i^{th}$ row/column.
In particular, this is a smaller matrix, with one dimension less along each column/row.
\item
Let $A_{i}$ be the matrix obtained by zeroing out the $i^{th}$ row/column.
In particular, this is a matrix of the same size, with all the entries in one full row/column zeroed out.
\end{itemize}

To establish this result, here is a lemma that we will use; its proof (which we won't provide) boils down to expanding $\mbox{det}\left(\Lambda-A\right)$ along the $i^{th}$ row.
\begin{lemma}
Let $A$ be a square matrix, and let $\Lambda$ be a diagonal matrix of the same size with $\lambda_1,\ldots,\lambda_n$ (as variables) along the diagonal.
Then, 
\[
\frac{\partial}{\partial \lambda_i} \mbox{det}\left( \Lambda-A \right)
   = \mbox{det}\left( \Lambda_{(i)} - A_{(i)} \right)  ,
\]
where the subscript $(i)$ means the matrix obtained by eliminating the $i^{th}$ row/column from each matrix.
\end{lemma}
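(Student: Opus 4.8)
The plan is to use the cofactor (Laplace) expansion of the determinant along the $i^{th}$ row, exactly as hinted in the text. Write $M = \Lambda - A$, so that $M_{kk} = \lambda_k - A_{kk}$ and $M_{k\ell} = -A_{k\ell}$ for $k \ne \ell$. The key structural observation, which drives everything, is that the variable $\lambda_i$ appears in \emph{exactly one} entry of $M$, namely the diagonal entry $M_{ii} = \lambda_i - A_{ii}$, and it appears there linearly with coefficient $1$; equivalently, $\partial M_{k\ell}/\partial \lambda_i = \delta_{ik}\delta_{i\ell}$. So the whole computation reduces to tracking where that single entry shows up in the determinant.

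Next I would expand $\det(M)$ along row $i$: $\det(M) = \sum_{j=1}^{n} M_{ij}\, C_{ij}$, where $C_{ij} = (-1)^{i+j} m_{ij}$ and $m_{ij}$ is the $(i,j)$ minor of $M$, i.e., the determinant of the submatrix obtained from $M$ by deleting row $i$ and column $j$. The crucial point is that each cofactor $C_{ij}$ is built from a matrix that no longer contains row $i$ of $M$, and hence does not depend on $\lambda_i$. Differentiating the sum term by term, the product rule gives $\frac{\partial}{\partial \lambda_i}\det(M) = \sum_{j}\big(\partial M_{ij}/\partial\lambda_i\big)C_{ij} + \sum_j M_{ij}\big(\partial C_{ij}/\partial\lambda_i\big)$; the second sum vanishes entirely, and in the first sum only the $j=i$ term survives (since $\partial M_{ij}/\partial\lambda_i = \delta_{ij}$). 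Hence $\frac{\partial}{\partial\lambda_i}\det(M) = C_{ii} = (-1)^{2i} m_{ii} = m_{ii}$. Finally I would observe that deleting the $i^{th}$ row and the $i^{th}$ column of $\Lambda - A$ yields precisely $\Lambda_{(i)} - A_{(i)}$, so $m_{ii} = \det(\Lambda_{(i)} - A_{(i)})$, which is the claimed identity.

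The only point that warrants any care — the ``main obstacle,'' such as it is — is justifying cleanly that $C_{ij}$ is independent of $\lambda_i$ for \emph{every} $j$, including $j \ne i$. This holds because all of $\lambda_i$'s presence in $M$ is confined to row $i$ (indeed to the single entry $M_{ii}$), and row $i$ is exactly the row discarded when forming any cofactor $C_{ij}$ from an expansion along row $i$. An essentially equivalent alternative route is to invoke the general identity $\partial \det(M)/\partial M_{k\ell} = C_{k\ell}$ together with the chain rule: since $\lambda_i$ enters $M$ only through $M_{ii}$ with $\partial M_{ii}/\partial\lambda_i = 1$, one gets $\partial\det(M)/\partial\lambda_i = C_{ii} = m_{ii}$ directly. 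Either packaging rests on the same simple observation about where $\lambda_i$ lives, so no genuine difficulty arises — which is presumably why the text states the lemma and moves on without the details.
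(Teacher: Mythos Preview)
Your proof is correct and follows exactly the approach the paper indicates: expand $\det(\Lambda - A)$ along the $i^{th}$ row, note that $\lambda_i$ appears only in the $(i,i)$ entry, and conclude that the derivative equals the $(i,i)$ cofactor, which is $\det(\Lambda_{(i)} - A_{(i)})$. The paper itself omits the details and simply states that the proof ``boils down to expanding $\det(\Lambda - A)$ along the $i^{th}$ row,'' which is precisely what you have carried out.
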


Next, set $\lambda_i = \lambda$ and apply the chain rule from calculus to get 
\[
\frac{d}{d\lambda} \mbox{det}\left( \lambda I - A \right) = \sum_{i=1}^{n} \mbox{det}\left( \lambda I - A_{(i)} \right)  .
\]

Finally, note that
\[
\mbox{det}\left( \lambda I - A_{i} \right) = \lambda \mbox{det} \left( \lambda I - A_{(i)} \right)   .
\]
But by what we just proved (in the Remark at the end of last page), we have that $\mbox{det}\left( \rho_A I - A_{(i)} \right) > 0 $.
Thus, the derivative of the characteristic polynomial of $A$ is nonzero at $\rho_A$, and so the 
algegraic multiplicity equals $1$.

\subsection{No other non-negative eigenvectors, etc.}

Here, we will prove the claim about other non-negative vectors, including that there are no other non-negative eigenvectors.

To start, we claim that: $0 \le B \le A \Rightarrow f_{max}(B) \le f_{max}(A)$.
(This is related to but a little different than the similar result we had above.)
To establish the claim, note that if $z \in Q$ is s.t. $sz \le Bz$, then $sz \le Az$ (since $Bz \le Az$), and so $f_B(z) \le f_A(z)$, for all $z$.

We can apply that claim to $A^T$, from which it follows that $A^T$ has a positive eigenvalue, call it $\eta$. 
So, there exists a row vector, $w > 0$ s.t. $w^TA = \eta w^T$.
Recall that $x > 0$ is an eigenvector of $A$ with maximum eigenvalue $\lambda_{max}$.
Thus, 
\[
w^TAx = \eta w^Tx = \lambda_{max} w^Tx , 
\]
and thus $\eta = \lambda_{max}$ (since $w^Tx > 0$).

Next, suppose that $y \in Q$ and $Ay \le \mu y$.
Then, 
\[
\lambda_{max} w^Ty = w^T A y \le \mu w^Ty , 
\]
from which it follows that $\lambda_{max} \le \mu$.
(This is since all components of $w$ are positive and some components of $y$ is positive, and so $w^Ty > 0$).

In particular, if $Ay = \mu y$, then $\mu = \lambda_{max}$.

Further, if $y \in Q$ and $Ay \le \mu y$, then $\mu \ge 0$ and $y > 0$.
(This is since $0 < Py = \left(I+A\right)^{n-1}y \le \left( 1+\mu\right)^{n-1}y$.)

This proves the first two parts of the result; now, let's prove the last part of the result.

If $\mu = \lambda_{max}$, then $w^T (Ay - \lambda_{max}y ) = 0$.
But, $Ay - \lambda_{max} y \le 0$.
So, given this, from $w^T\left(Ay-\lambda_{max}y\right) = 0$, it follows that $Ay = \lambda_{max}y$.
Since $y$ must be an eigenvector with eigenvalue $\lambda_{max}$, the last result (i.e., that $y$ is a scalar multiple of $x$) follows since $\lambda_{max}$ has multiplicity $1$.

To establish the converse direction march through these steps in the other direction.

\subsection{Strict inequality for aperiodic matrices}

Here, we would like to establish the result that the eigenvalue we have been talking about is strictly larger in magnitude than the other eigenvalues, under the aperiodicity assumption.

To do so, recall that the $t^{th}$ powers of the eigenvalues of $A$ are the eigenvalues of $A^t$.
So, if we want to show that there does \emph{not} exist eigenvalues of a primitive matrix with absolute value $=\rho_A$, other than $\rho_A$, then it suffices to prove this for a positive matrix $A$.

Let $A$ be a positive matrix, and suppose that $Az = \lambda z$, with $z\in\mathbb{C}^{n}$, $\lambda\in\mathbb{C}$, and $|\lambda|=\rho_A$, in which case the goal is to show $\lambda < \rho_A$.

(We will do this by showing that any eigenvector with eigenvalue equal in magnitude to $\rho_A$ is the top eigenvalue.)
(I.e., we will show that such a $z$ equals $|z|$ and thus there is no other one with $\rho_A$.)

Then, 
\[
\rho_A |z| = |Az| \le A |z|  ,
\]
from which it follows that
\[
\rho_A \le f(|z|) \le \rho_A  ,
\]
which implies that $f(|z|) = \rho_A  $.
From a result above, this implies that $|z|$ is an eigenvector of $A$ with eigenvalue $\rho_A$.
Moreover, $|Az| = A|z|$.
In particular, 
\[
\left| \sum_{i=1}^{n} A_{1i} z_{i} \right|  = \sum_{i=1}^{n} A_{1i} | z_i |  .
\]
Since all of the entries of $A$ are positive, this implies that there exists a number $u\in\mathbb{C}$ (with $|u|=1$) s.t. for all $i\in[n]$, we have that $z_i = u|z_i|$.
Hence, $z$ and $|z|$ are collinear eigenvectors of $A$.
So, the corresponding eigenvalues of $\lambda$ and $\rho$ are equal, as required.

\subsection{Limit for aperiodic matrices}

Here, we would like to establish the limiting result.

To do so, note that $A^T$ has the same spectrum (including multiplicities) as $A$; and in particular the spectral radius of $A^T$ equals $\rho_A$.

Moreover, since $A^T$ is irreducible (a consequence of being primitive), we can apply the Perron-Frobenius theorem to it to get $yA = \rho_A y$.
Here $y$ is determined up to a scalar multiple, and so let's choose it s.t. $x^Ty = \sum_{i=1}^{n} x_iy_i = 1$.

Next, observe that we can decompose the $n$-dimensional vector space $\mathbb{R}^{n}$ into two parts, 
\[
\mathbb{R}^{n} = R \oplus N , 
\]
where both $R$ and $N$ are invariant under the action of $A$.
To do this, define the rank-one matrix $H=xy^T$, and: 
\begin{itemize}
\item
let $R$ be the \emph{image space} of $H$; and 
\item
let $N$ be the \emph{null space} of $H$.
\end{itemize}
Note that $H$ is a projection matrix (in particular, $H^2=H$), and thus $I-H$ is also a projection matrix, and the image space of $I-H$ is $N$.
Also, 
\[
AH = Axy^T = \rho_A xy^T = x \rho_A y^T = x y^T A = HA .
\]
So, we have a direct sum decomposition of the space $\mathbb{R}^{n}$ into $R \oplus N$, and this decomposition is invariant under the action of $A$.

Given this, observe that the restriction of $A$ to $N$ has all of its eigenvalues strictly less that $\rho_A$ in absolute value, while the restriction of $A$ to the one-dimensional space $R$ is simply a multiplication/scaling by $\rho_A$.
So, if $P$ is defined to be $P = \frac{1}{\rho_A}A$, then the restriction of $P$ to $N$ has its eigenvalues $<1$ in absolute value.
This decomposition is also invariant under all positive integral powers of $P$.
So, the restriction of $P^k$  to $N$ tends to zero as $k \rightarrow \infty$, while the restriction of $P$ to $R$ is the identity.
So, $\lim_{t \rightarrow \infty} \left( \frac{1}{\rho_A}A \right)^{t} = H = xy^T$.

\subsection{Additional discussion form periodicity/aperiodic and cyclicity/primitiveness}

%
%

Let's switch gears and discuss the periodicity/aperiodic and cyclicity/primitiveness issues.

(This is an algebraic characterization, and it holds for general non-negative matrices.  I think that most people find this less intuitive that the characterization in terms of connected components, but it's worth at least knowing about it.)

Start with the following definition.

\begin{definition}
The \emph{cyclicity} of an irreducible non-negative matrix $A$ is the g.c.d. (greatest common denominator) of the length of the cycles in the associated graph.
\end{definition}

\noindent
Let's let $\mathbb{N}_{ij}$ be a positive subset of the integers s.t.
\[
\{ t \in \mathbb{N} \mbox{ s.t. } (A^t)_{ij} > 0 \}  ,
\]
that is, it is the values of $ t \in \mathbb{N} $ s.t. the matrix $A^t$'s $(i,j)$ entry is positive (i.e. exists a path from $i$ to $j$ of length $t$) .
Then, to define $\gamma$ to be the cyclicity of $A$, first define $\gamma_i = \mbox{gcd}\left( \mathbb{N}_{ii} \right)$, and 
then clearly $\gamma = \mbox{gcd} \left( \{ \gamma_i \mbox{ s.t. } i \in V \} \right) $.
Note that each $\mathbb{N}_{ii}$ is closed under addition, and so it is a semi-group.

Here is a lemma from number theory (that we won't prove).

\begin{lemma}
A set $\mathbb{N}$ of positive integers that is closed under addition contains all but a finite number of multiples of its g.c.d.
\end{lemma}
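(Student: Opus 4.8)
Write $S$ for the set in the statement and $d=\gcd(S)$. First I would reduce to the case $d=1$: the set $S'=\{s/d:s\in S\}$ consists of positive integers (since $d$ divides every element of $S$), is closed under addition, and has $\gcd(S')=1$; a bound beyond which $S'$ contains every positive integer immediately yields a bound beyond which $S$ contains every multiple of $d$. So assume henceforth $\gcd(S)=1$, and the goal becomes to show that $S$ contains all sufficiently large positive integers.

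Next, fix any $a\in S$ and let $R=\{\,s\bmod a:s\in S\,\}\subseteq\mathbb{Z}/a\mathbb{Z}$. Because $S$ is closed under addition, $R$ is closed under the group operation of $\mathbb{Z}/a\mathbb{Z}$, and it is nonempty since $a\bmod a=0\in R$. A nonempty subset of a finite abelian group that is closed under the operation is a subgroup (each element has finite order, so its inverse is one of its positive powers, which again lies in the subset). Hence $R=g\,\mathbb{Z}/a\mathbb{Z}$ for some divisor $g\mid a$.

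The crucial step is to show $g=1$. For any $s\in S$ we have $s\bmod a\in R$, so $s\equiv kg\pmod a$ for some integer $k$, i.e.\ $a\mid s-kg$; since $g\mid a$ this forces $g\mid s$. Thus $g$ divides every element of $S$, so $g\mid\gcd(S)=1$ and $g=1$. Therefore $R=\mathbb{Z}/a\mathbb{Z}$: every residue class modulo $a$ is represented by some element of $S$. This is the one genuinely nontrivial point; I expect the rest to be routine bookkeeping.

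Finally, for each $j\in\{0,1,\ldots,a-1\}$ pick $s_j\in S$ with $s_j\equiv j\pmod a$, and set $M=\max_j s_j$. Given $n\ge M$, let $j$ be the residue of $n$ modulo $a$; then $n\ge M\ge s_j$ and $n\equiv s_j\pmod a$, so $n=s_j+ta$ with $t\ge 0$ an integer. Since $s_j\in S$ and, when $t\ge 1$, $ta$ is a sum of $t$ copies of $a\in S$, closure under addition gives $n\in S$. Hence $S\supseteq\{\,n:n\ge M\,\}$, which is the claim. (Alternatively, one could bypass the group-theoretic step by using B\'ezout to write $1=A-B$ with $A,B\in S$ and then noting that $n=(q-r)B+rA\in S$ whenever $n=qB+r$ with $0\le r<B$ and $q\ge r$, which holds once $n$ is large; this replaces the subgroup argument by a short estimate.)
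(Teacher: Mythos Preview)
Your proof is correct. The paper does not actually prove this lemma; it is introduced with the words ``Here is a lemma from number theory (that we won't prove),'' and is simply quoted as background for the discussion of cyclicity of nonnegative matrices. So there is no paper proof to compare against. Your argument---reduce to $\gcd=1$, show the residues modulo a fixed $a\in S$ form all of $\mathbb{Z}/a\mathbb{Z}$ via the subgroup trick, then cover all large integers by picking one representative per residue class---is a standard and clean proof of this fact about numerical semigroups, and the B\'ezout alternative you sketch at the end also works.
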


\noindent
From this it follows that $\forall i \in [n] , \gamma_i  = \gamma$.

The following theorem (which we state but won't prove) provides several related conditions for an irreducible matrix to be primitive.

\begin{theorem}
Let $A$ be an irreducible matrix.
Then, the following are equivalent.
\begin{enumerate}
\item
The matrix $A$ is primitive.
\item
All of the eigenvalues of $A$ different from its spectral radius $\rho_A$ satisfy $|\lambda| < \rho_A$.
\item
The sequence of matrices $\left( \frac{1}{\rho_A}A \right)^{t}$ converges to a positive matrix.
\item
There exists an $i \in [n]$ s.t., $\gamma_i = 1$.
\item
The cyclicity of $A$ equals $1$.
\end{enumerate}
\end{theorem}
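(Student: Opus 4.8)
The plan is to establish the cycle of implications $(1)\Rightarrow(2)\Rightarrow(3)\Rightarrow(1)$, which ties together the first three conditions, and then to attach the two cyclicity conditions via $(1)\Rightarrow(5)$, $(5)\Leftrightarrow(4)$, and $(5)\Rightarrow(1)$. Much of the work is already available: $(1)\Rightarrow(2)$ and $(1)\Rightarrow(3)$ are precisely parts~5 and~6 of the Perron--Frobenius theorem proved above (when $A$ is primitive, every eigenvalue $\lambda\ne\rho_A$ has $|\lambda|<\rho_A$, and $\left(\tfrac{1}{\rho_A}A\right)^{t}\to xy^T$, which is entrywise positive since $x,y$ are the positive Perron eigenvectors of $A$ and $A^T$).

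For $(2)\Rightarrow(3)$ I would reuse the direct-sum decomposition $\mathbb{R}^n=R\oplus N$ constructed in the limit step above, where $R$ is the one-dimensional $\rho_A$-eigenspace (one-dimensional because the algebraic multiplicity of $\rho_A$ was shown to equal~$1$) and $N$ is the complementary $A$-invariant subspace. Condition~$(2)$ says exactly that $\tfrac{1}{\rho_A}A$ restricted to $N$ has all eigenvalues of modulus strictly less than~$1$, so its powers tend to the zero operator on $N$ (even in the presence of Jordan blocks, since $t^k|\lambda|^t\to0$ when $|\lambda|<1$), while on $R$ it acts as the identity; hence $\left(\tfrac{1}{\rho_A}A\right)^{t}\to H=xy^T>0$. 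Then $(3)\Rightarrow(1)$ is immediate: if $\left(\tfrac{1}{\rho_A}A\right)^{t}$ converges to an entrywise-positive matrix, then (finitely many entries, each converging to a positive limit) its entries are positive for all large $t$, hence so are those of $A^t=\rho_A^t\left(\tfrac{1}{\rho_A}A\right)^{t}$, which is primitivity.

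Next, $(1)\Rightarrow(5)$: if $A^t>0$ entrywise for some $t$, then $A^{t+1}>0$ as well, since irreducibility forbids zero rows and columns and so no loss of positivity occurs in the product; whence $t,t+1\in\mathbb{N}_{ii}$ for every $i$, forcing $\gamma_i=\gcd(\mathbb{N}_{ii})$ to divide $\gcd(t,t+1)=1$; thus $\gamma_i=1$ for all $i$ and $\gamma=1$. The equivalence $(5)\Leftrightarrow(4)$ follows from the fact, already established above from the number-theory lemma on additive semigroups, that $\gamma_i=\gamma$ for every $i\in[n]$: then $\gamma=1$ iff some (equivalently every) $\gamma_i=1$.

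The remaining and most delicate step is $(5)\Rightarrow(1)$, which I expect to be the main obstacle. Fix $i$; the set $\mathbb{N}_{ii}$ is closed under addition (concatenate closed walks at $i$) and has greatest common divisor $\gamma_i=\gamma=1$, so by the number-theory lemma stated above it contains every sufficiently large positive integer, i.e.\ there is $N_i$ with $(A^t)_{ii}>0$ for all $t\ge N_i$. For an arbitrary ordered pair $(i,j)$, irreducibility supplies $s_{ij}$ with $(A^{s_{ij}})_{ij}>0$, and then for $t\ge N_i+s_{ij}$ nonnegativity of all powers of $A$ gives $(A^t)_{ij}\ge(A^{t-s_{ij}})_{ii}(A^{s_{ij}})_{ij}>0$. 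Taking $N=\max_{i,j}(N_i+s_{ij})$, a finite maximum, we obtain $A^t>0$ for all $t\ge N$, so $A$ is primitive. The points that require care are a correct invocation of the gcd/semigroup lemma and the uniformization over the finitely many pairs $(i,j)$, so that a single exponent works simultaneously for every entry.
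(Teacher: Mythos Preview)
The paper explicitly states this theorem without proof (``which we state but won't prove''), so there is no paper argument to compare against. Your proof is correct and well-organized: the cycle $(1)\Rightarrow(2)\Rightarrow(3)\Rightarrow(1)$ is cleanly handled by recycling the Perron--Frobenius machinery already established (parts~3, 5, and~6, plus the $R\oplus N$ decomposition), and your treatment of the cyclicity conditions $(4)\Leftrightarrow(5)$ via the ``all $\gamma_i$ equal $\gamma$'' fact, together with the semigroup argument for $(5)\Rightarrow(1)$, is the standard route and is executed carefully, including the uniformization over finitely many pairs $(i,j)$.
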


For completeness, note that sometimes one comes across the following definition.

\begin{definition}
Let $A$ be an irreducible non-negative square matrix.
The \emph{period} of $A$ is the g.c.d. of all natural numbers $m$ s.t. $\left(A^m\right)_{ii} > 0$ for some $i$.
Equivalently, the g.c.d. of the lengths of closed directed paths of the directed graph $G_A$ associated with $A$.
\end{definition}

\textbf{Fact.}
All of the statements of the Perron-Frobenius theorem for positive matrices remain true for irreducible aperiodic matrices.
In addition, all of those statements generalize to periodic matrices. 
The the main difference in this generalization is that for periodic matrices the ``top'' eigenvalue isn't ``top'' any more, in the sense that there are other eigenvalues with equal absolute value that are different: they equal the $p^{th}$ roots of unity, where $p$ is the periodicity. 

Here is an example of a generalization.

\begin{theorem}
\label{thm:pf-generalization}
Let $A$ be an irreducible non-negative $n \times n$ matrix, with period equal to $h$ and spectral radius equal to $\rho_A = r$.
Then, 
\begin{enumerate}
\item
$r > 0$, and it is an eigenvalue of $A$.
\item
$r$ is a simple eigenvalue, and both its left and right eigenspace are one-dimensional.
\item
$A$ has left/right eigenvectors $v$/$w$ with eigenvalue $r$, each of which has all positive entries.
\item
$A$ has exactly $h$ complex eigenvalues with absolute value $=r$; and each is a simple root of the characteristic polynomial and equals the $r \cdot h^{th}$ root of unity.
\item
If $h > 0$, then there exists a permutation matrix $P$ s.t.
\begin{equation}
PAP^T = \left( \begin{array}{ccccc} 0 & A_{12} &  &  & 0 \\
                                      & 0 & A_{23} &   &   \\
                                      &   & \ddots &  \ddots &   \\
                                     0 &  &   & 0 & A_{h-1,h} \\
                                    A_{h1} & 0  &   &   & 0 \\
               \end{array}
        \right)  .
\label{eqn:block-matrix-periodic}
\end{equation}
\end{enumerate}
\end{theorem}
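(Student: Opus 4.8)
The plan is to push everything possible back onto the irreducible Perron--Frobenius theorem already established (whose parts (1), (3), (4) never used primitivity), and then to do the genuinely ``periodic'' work combinatorially. Parts (1), (2), (3) are then immediate: an irreducible non-negative matrix has $r = \rho_A > 0$ as an eigenvalue, algebraically (hence geometrically) simple; applying the same theorem to $A^T$ (also irreducible) shows the left eigenspace is one-dimensional too; and the right/left eigenvectors $w,v$ may be taken strictly positive. So the substance is parts (5) and (4), and I would do (5) first since it drives (4). Throughout I assume $h \ge 2$; the case $h=1$ is the primitive case already treated.

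For part (5) I would build the cyclic coloring of $G(A)$. Fix a vertex $i_0$. For any vertex $j$, by strong connectivity (the theorem identifying irreducibility with strong connectedness of $G(A)$) there is a directed path $i_0 \to j$; let its length be $\ell$ and set $c(j) = \ell \bmod h \in \{0,1,\dots,h-1\}$. This is well defined: since $h$ is the period, i.e., the $\gcd$ of all closed-walk lengths in $G(A)$, every closed walk has length divisible by $h$; so if $p_1,p_2$ are two $i_0\to j$ paths of lengths $\ell_1,\ell_2$ and $q$ is a $j\to i_0$ path of length $m$, the closed walks $p_1q$ and $p_2q$ have lengths $\ell_1+m \equiv \ell_2+m \equiv 0 \pmod h$, hence $\ell_1\equiv\ell_2$. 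The same splicing shows every edge $j\to k$ of $G(A)$ satisfies $c(k)\equiv c(j)+1\pmod h$. Next I would check each class $V_s = \{j : c(j)=s\}$ is nonempty: if $V_s=\emptyset$ for some $1\le s\le h-1$, then $V_{s+1}$ receives no in-edges, so (as $i_0\in V_0\ne V_{s+1}$, strong connectivity) $V_{s+1}=\emptyset$ too; iterating around the cycle of residues forces $V_0=\emptyset$, contradicting $i_0\in V_0$. Ordering the vertices by class into consecutive blocks then produces exactly the form~\eqref{eqn:block-matrix-periodic}.

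For part (4) I would use the block structure twice. First, let $D=\mathrm{diag}(\omega^{c(j)})$ with $\omega=e^{2\pi i/h}$; since every edge shifts the class by one, $D^{-1}AD=\omega A$, so the spectrum of $A$ is invariant under multiplication by $\omega$, and combined with part (2) this shows $r,r\omega,\dots,r\omega^{h-1}$ are all simple eigenvalues of $A$. Second, writing $B=PAP^T$ in cyclic block form, $B^h$ is block-diagonal, the $s$-th diagonal block being the cyclic product $C_s=A_{s,s+1}A_{s+1,s+2}\cdots A_{s-1,s}$. All $C_s$ are cyclic rotations of the same product (each pair is of the form $PQ$ versus $QP$), hence share the same nonzero spectrum; each $C_s$ is irreducible (within $V_s$, any two vertices are joined by a $G(A)$-path of length a multiple of $h$, decomposable into length-$h$ steps) and primitive (the closed-walk lengths at a vertex $u\in V_s$ have $\gcd$ equal to $\gamma_u=h$, so their $h$-th parts have $\gcd$ $1$). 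By Perron--Frobenius for primitive matrices, each $C_s$ has $r^h$ as a strictly dominant simple eigenvalue; hence $r^h$ is an eigenvalue of $A^h$ of algebraic multiplicity exactly $h$ and every other eigenvalue of $A^h$ has modulus $<r^h$. Therefore the eigenvalues of $A$ of modulus $r$ are exactly $\{r\omega^k : k=0,\dots,h-1\}$, and by the multiplicity count each is simple. This yields parts (4) and, together with the above, completes the theorem.

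\textbf{Main obstacle.} I expect the combinatorial heart of part (5) to be the delicate step: proving the coloring $c(\cdot)$ is well defined and that all $h$ residue classes are actually realized. Everything there rests on the identification of the period with the $\gcd$ of closed-walk lengths and on the fact (from the earlier lemma and the additive-semigroup number-theory lemma) that $\gamma_i$ is independent of $i$, and on being careful to distinguish arbitrary walks from simple cycles when invoking these. Once the block cyclic form is in hand, the conjugation argument and the $B^h$ computation driving part (4) are essentially bookkeeping.
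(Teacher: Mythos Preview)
The paper states this theorem without proof; it is presented as ``an example of a generalization'' immediately after the discussion of cyclicity, and the text moves directly on to examples of block structures. So there is no paper proof to compare against.

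Your proposal is the standard argument and is sound. Parts (1)--(3) are correctly reduced to the already-proved irreducible Perron--Frobenius theorem (and its application to $A^T$). For part (5), the residue-class coloring $c(j)$ by path length from a base vertex is exactly the right construction; your well-definedness argument via splicing is correct, and the nonemptiness-of-classes argument is fine (though you should be explicit at the wrap-around step $s=h-1$: there $V_{s+1}=V_0\ni i_0$, and the contradiction comes from $i_0$ needing an in-edge in a strongly connected graph with $n\ge2$, which must come from $V_{h-1}=\emptyset$). For part (4), the diagonal conjugation $D^{-1}AD=\omega A$ giving rotational invariance of the spectrum, together with the analysis of $B^h$ as block-diagonal with primitive irreducible diagonal blocks $C_s$ each having $r^h$ as strictly dominant simple eigenvalue, is the standard route and yields the claim cleanly. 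Your identification of the combinatorial step in (5) as the crux is accurate: it is the only place where the definition of period does real work, and the paper's earlier remark that $\gamma_i$ is independent of $i$ (via the additive-semigroup lemma) is indeed what underpins it.
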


\subsection{Additional discussion of directness, periodicity, etc.}

Today, we have been describing Perron-Frobenius theory for non-negative matrices.  There are a lot of connections with graphs, but the theory can be developed algebraically and linear-algebraically, i.e., without any mention of graphs. 
(We saw a hint of this with the g.c.d. definitions.)
In particular, Theorem~\ref{thm:pf-generalization} is a statement about matrices, and it's fair to ask what this might say about graphs we will encounter.  
So, before concluding, let's look at it and in particular at Eqn.~(\ref{eqn:block-matrix-periodic}) and ask what that might say about graphs---and in particular undirected graphs---we will consider.

To do so, recall that the Adjacency Matrix of an undirected graph is symmetric; and, informally, there are several different ways (up to permutations, etc.) it can ``look like.''
In particular:
\begin{itemize}
\item
It can look like this:
\begin{equation}
A = \left( \begin{array}{cc} A_{11} & A_{12} \\
                             A_{12}^{T} & A_{22}
           \end{array}
    \right)   ,
\label{eqn:block-matrix-vanilla}
\end{equation}
where let's assume that all-zeros blocks are represented as $0$ and so each $A_{ij}$ is not all-zeros.
This corresponds to a vanilla graph you would probably write down if you were asked to write down a graph.
\item
It can look like this:
\begin{equation}
A = \left( \begin{array}{cc} A_{11} & 0 \\
                             0 & A_{22}
           \end{array}
    \right)  ,
\label{eqn:block-matrix-disconnected}
\end{equation}
in which case the corresponding graph is not connected.
\item
It can even look like this:
\begin{equation}
A = \left( \begin{array}{cc} 0 & A_{12} \\
                             A_{21} & 0
           \end{array}
    \right)  ,
\label{eqn:block-matrix-bipartite}
\end{equation}
which has the interpretation of having two sets of nodes, each of which has edges to only the other set, and which will correspond to a bipartite graph.
\item
Of course, it could be a line-like graph, which would look like a tridiagonal banded matrix, which is harder for me to draw in latex, or it can look like all sorts of other things.
\item
But it can\emph{not} look like this:
\begin{equation}
A = \left( \begin{array}{cc} A_{11} & A_{12} \\
                             0 & A_{22}
           \end{array}
    \right)  ,
\label{eqn:block-matrix-reducible1}
\end{equation}
and it can\emph{not} look like this:
\begin{equation}
A = \left( \begin{array}{cc} 0 & A_{12} \\
                             0 & 0
           \end{array}
    \right)  ,
\label{eqn:block-matrix-reducible2}
\end{equation}
where recall  we are assuming that each $A_{ij}$ is not all-zeros.
In both of these cases, these matrices are not symmetric.
\end{itemize}

\noindent
In light of today's results and looking forward, it's worth commenting for a moment on the relationship between Eqns.~(\ref{eqn:block-matrix-periodic}) and Eqns~(\ref{eqn:block-matrix-vanilla}) through (\ref{eqn:block-matrix-reducible2}).


Here are a few things to note.
\begin{itemize}
\item
One might think from Eqns.~(\ref{eqn:block-matrix-periodic}) that periodicity means that that the graph is directed and so if we work with undirected graphs we can ignore it.
That's true if the periodicity is $3$ or more, but note that the matrix of Eqn~(\ref{eqn:block-matrix-bipartite}) is periodic with period equal to $2$.
In particular,  Eqn~(\ref{eqn:block-matrix-bipartite}) is of the form of Eqn.~(\ref{eqn:block-matrix-periodic}) if the period $h=2$.
(It's eigenvalues are real, which they need to be since the matrix is symmetric, since the complex ``$2^{th}$ roots of unity,'' which equal $\pm1$, are both real.)
\item
You can think of Eqn.~(\ref{eqn:block-matrix-disconnected}) as a special case of Eqn.~(\ref{eqn:block-matrix-reducible1}), with the $A_{12}$ block equal to $0$, but it is not so helpful to do so, since its behavior is very different than for an irreducible matrix with $A_{12} \ne 0$.
\item
For directed graphs, e.g., the graph that would correspond to Eqn.~(\ref{eqn:block-matrix-reducible1})
(or Eqn.~(\ref{eqn:block-matrix-reducible2})), there is very little spectral theory.
It is of interest in practice since edges are often directed.
But, most spectral graph methods for directed graphs basically come up---either explicitly or implicitly---with some sort of symmetrized version of the directed graph and then apply undirected spectral graph methods to that symmetrized graph.
(Time permitting, we'll see an example of this at some point this semester.)
\item
You can think of Eqn.~(\ref{eqn:block-matrix-reducible1}) as corresponding to a ``bow tie'' picture (that I drew on the board and that is a popular model for the directed web graph and other directed graphs).
Although this is directed, it can be made irreducible by adding a rank-one update of the form $11^T$ to the adjacency matrix.
E.g., $A \rightarrow A+ \epsilon 11^T$.
This has a very natural interpretation in terms of random walkers, it is the basis for a lot of so-called ``spectral ranking'' methods, and it is a very popular way to deal with directed (and undirected) graphs.
In addition, for reasons we will point out later, we can get spectral methods to work in a very natural way in this particular case, even if the initial graph is undirected.
\end{itemize}

\newpage

\section{%
(02/05/2015): 
Overview of Graph Partitioning}

Reading for today.
\begin{compactitem}
\item
``Survey: Graph clustering,'' in Computer Science Review, by Schaeffer
\item
``Geometry, Flows, and Graph-Partitioning Algorithms,'' in CACM, by Arora, Rao, and Vazirani 
\end{compactitem}

The problem of \emph{graph partitioning} or \emph{graph clustering} refers
to a general class of problems that deals with the following task:
given a graph $G=(V,E)$, group the vertices of a graph into groups or 
clusters or communities.
(One might be interested in cases where this graph is weighted, directed, 
etc., but for now let's consider non-directed, possibly weighted, graphs.
Dealing with weighted graphs is straightforward, but extensions to directed 
graphs are more problematic.)
The graphs might be given or constructed, and there may or may not be extra
information on the nodes/edges that are available, but insofar as the black 
box algorithm that actually does the graph partitioning is concerned, all 
there is is the information in the graph, i.e., the nodes and edges or 
weighted edges.
Thus, the graph partitioning algorithm takes into account the node and edge 
properties, and thus it typically relies on some sort of ``edge counting''
metric to optimize.
Typically, the goal is to group nodes in such a manner that nodes within a 
cluster are more similar to each other than to nodes in different clusters, 
\emph{e.g.}, more and/or better edges within clusters and relatively few 
edges between clusters.


\subsection{Some general comments}

Two immediate questions arise.
\begin{itemize}
\item
A first question is to settle on an objective that captures this bicriteria. 
There are several ways to quantify this bicriteria which we will describe, 
but each tries to cut a data graph into $2$ or more ``good'' or ``nice'' 
pieces. 
\item
A second question to address is how to compute the optimal solution to that
objective.  
In some cases, it is ``easy,'' e.g., it is computable in low-degree 
polynomial time, while in other cases it is ``hard,'' e.g., it is 
intractable in the sense that the corresponding decision problem is NP-hard 
or NP-complete.
\end{itemize}
In the case of an intractable objective, people are often interested in 
computing some sort of approximate solution to optimize the objective that 
has been decided upon.
Alternatively, people may run a procedure without a well-defined objective 
stated and decided upon beforehand, and in some cases this procedure 
returns answers that are useful.
Moreover, the procedures often bear some sort of resemblance to the steps
of algorithms that solve well-defined objectives exactly.
Clearly, there is potential interest in understanding the relationship 
between these two complementary approaches: this will help people who run 
procedures know what they are optimizing; this can feed back and help to 
develop statistically-principled and more-scalable procedures; and so on.

Here, we will focus on several different methods (i.e., classes of 
algorithms, e.g., ``spectral graph algorithms'' as well as other classes 
of methods) that are very widespread in practice and that can be analyzed 
to prove strong bounds on the quality of the partitions found.
The methods are the following.
\begin{enumerate}
\item 
Spectral-based methods. 
This could include either global or local methods, both of which come with
some sort of Cheeger Inequality.
\item 
Flow-based methods. 
These have connections with the min-cut/max-flow theorem, and they can be 
viewed in terms of embeddings via their LP formulation, and here too there 
is a local improvement version.
\end{enumerate}
In addition, we will also probably consider methods that combine spectral 
and flow in various ways.
Note that most or all of the theoretically-principles methods people use 
have steps that boil down to one of these.
Of course, we will also make connections with methods such as local 
improvement heuristics that are less theoretically-principled but that are 
often important in practice.

Before doing that, we should point out something that has been implicit in
the discussion so far.
That is, while computer scientists (and in particular TCS) often draw a 
strong distinction between problems and algorithms, researchers in other 
areas (in particular machine learning and data analysis as well as 
quantitatively-inclined people in nearly every other applied area) often do 
not.
For the latter people, one might run some sort of procedure that solves 
something insofar as, e.g., it finds clusters that are useful by a 
downstream metric.
As you can imagine, there is a proliferation of such methods.
One of the questions we will address is when we can understand those 
procedures in terms of the above theoretically-principled methods.
In many cases, we can; and that can help to understand when/why these
algorithms work and when/why they don't.

Also, while we will mostly focus on a particular objective (called expansion 
or conductance) that probably is the combinatorial objective that most 
closely captures the bicriteria of being well-connected intra-cluster and
not well-connected inter-cluster, we will probably talk about some other 
related methods.
For example, finding dense subgraphs, and finding so-called good-modularity 
partitions.
Those are also of widespread interest; they will illustrate other ways that 
spectral methods can be used; and understanding the relationship between 
those objectives and expansion/conductance is important.

Before proceeding, a word of caution: 
For a given objective quantifying how ``good'' is a partition, it is 
\emph{not} the case that all graphs have good partitions---but all graph 
partitioning algorithms (as will other algorithms) will return some 
answer, \emph{i.e.}, they will give you some output clustering.  
In particular, there is a class of graphs called \emph{expanders} that do 
not have good clusters with respect to the so-called expansion/conductance
objective function.
(Many real data graphs have strong expander-like properties.)

In this case, i.e., when there are no good clusters, the simple answer is 
just to say don't do clustering.
Of course, it can sometimes in practice be difficult to tell if you are in 
that case.
(For example, with a thousand graphs and a thousand methods---that may or 
may not be related but that have different knobs and so are at least 
minorly-different---you are bound to find things look like clusters, and
controlling false discovery, etc., is tricky in general but in particular
for graph-based data.)
Alternatively, especially in practice, you might have a graph that has both
expander-like properties and non-expander-like properties, e.g., in different
parts of the graph.
A toy example of this could be given by the lollipop graph.
In that case, it might be good to know how algorithms behave on different 
classes of graphs and/or different parts of the graph.

Question (raised by this): Can we certify that there are no good clusters 
in a graph?
Or certify the nonexistence of hypothesized things more generally?
We will get back to this later.

Let's go back to finding an objective we want to consider.

As a general rule of thumb, when most people talk about clusters or 
communities (for some reason, in network and especially in social graph 
applications clusters are often called communities---they may have a 
different downstream, e.g., sociological motivation, but operationally they
are typically found with some sort of graph clustering algorithm) 
``desirable'' or ``good'' clusters tend to have the following properties:
\begin{enumerate}
\item  
Internally (intra) - well connected with other members of the cluster. 
Minimally, this means that it should be connected---but it is a challenge 
to guarantee this in a statistically and algorithmically meaningful manner.
More generally, this might mean that it is ``morally 
connected''---\emph{e.g.}, that there are several paths between vertices in 
intra-clusters and that these paths should be internal to the cluster.
(Note: this takes advantage of the fact that we can classify edges incident 
to $v \in C$ as internal (connected to other members of $C$) and 
external (connected to $\bar{C}$).
\item 
Externally (inter) - relatively poor connections between members of a cluster
and members of a different cluster.
For example, this might mean that there are very few edges with one 
endpoint in one cluster and the other endpoint in the other cluster.
\end{enumerate}
Note that this implies that we can classify edges, i.e., pairwise 
connections, incident to a vertex $v \in C$ into edges that are internal 
(connected to other members of $C$) and edges that are external (connected 
to members of $\bar{C}$).
This technically is well-defined; and, informally, it makes sense, since if 
we are modeling the data as a graph, then we are saying that things and
pairwise relationships between things are of primary importance. 

So, we want a relatively dense or well-connected (very informally, those two
notions are similar, but they are often different when one focuses on a 
particular quantification of the informal notion) induced subgraph with 
relatively few inter-connections between pieces.
Here are extreme cases to consider:
\begin{itemize}
\item
Connected component, \emph{i.e.}, the ``entire graph,'' if the graph is 
connected, or one connected component if the graph is not connected.
\item
Clique or maximal clique, \emph{i.e.}, complete subgraph or a maximal 
complete subgraph, \emph{i.e.}, subgraph in which no other vertices can be 
added without loss of the clique property.
\end{itemize}
But how do we quantify this more generally?


\subsection{A first try with min-cuts}

Here we will describe an objective that has been used to partition graphs.
Although it is widely-used for certain applications, it will have certain 
aspects that are undesirable for many other applications.
In particular, we cover it for a few reasons: 
first, as a starter objective before we get to a better objective; 
second, since the dual is related to a non-spectral way to partition graphs; 
and third, although it doesn't take into account the bi-criteria we have 
outlined, understanding it will be a basis for a lot of the stuff later.

\subsubsection{Min-cuts and the Min-cut problem}

We start with the following definition.

\begin{definition}
Let $G=(V,E)$ be a graph.
A \emph{cut} $C=(S,T)$ is a partition of the vertex set $V$ of $G$.
An \emph{$s$-$t$-cut} $C=(S,T)$ of $G=(V,E)$ is a cut $C$ s.t. $s \in S$ and
$t \in T$, where $s,t \in V$ are pre-specified source and sink vertices/nodes.
A \emph{cut set} is $\{(u,v) \in E : u \in S, v \in T \}$, i.e., the edges
with one endpoint on each side of the cut.
\end{definition}
The above definition applies to both directed and undirected graphs. Notice in the directed case, the cut set contains the edges from node in $S$ to nodes in $T$, but not those from $T$ to $S$.


Given this set-up the \emph{min-cut problem} is: find the ``smallest'' 
cut, \emph{i.e.}, find the cut with the ``smallest''  cut set, \emph{i.e.}
the smallest boundary (or sum of edge weights, more generally).
That is:
\begin{definition}
The \emph{capacity} of an $s$-$t$-cut is 
$c(S,T) = \sum_{(u,v)\in(S,\bar{S})} c_{uv}$.
In this case, the \emph{Min-Cut Problem} is to solve 
\[
\min_{s \in S,t \in \bar{S}} c(S,\bar{S})   .
\]
\end{definition}
That is, the problem is to find the ``smallest'' cut, where by smallest we
mean the cut with the smallest total edge capacity across it, i.e., with the smallest ``boundary.''

Things to note about this formalization:
\begin{enumerate}
\item 
Good: Solvable in low-degree polynomial time by a polynomial time algorithm. 
(As we will see, $\mbox{min-cut} = \mbox{max-flow}$ is related.)
\item 
Bad:   Often get very unbalanced cut.
(This is not \emph{necessarily} a problem, as maybe there are no good cuts, 
but for this formalization, this happens even when it is known that there
are good large cuts.
This objective tends to nibble off small things, even when there are bigger
partitions of interest.)
This is problematic for several reasons:
\begin{itemize}
   \item 
   \textbf{In theory.} 
   Cut algorithms are used as a sub-routine in divide and 
   conquer algorithm, and if we keep nibbling off small pieces then the
   recursion depth is very deep; alternatively, control over inference is 
   often obtained by drawing strength over a bunch of data that are
   well-separated from other data, and so if that bunch is very small
   then the inference control is weak.
   \item   
   \textbf{In practice.} 
   Often, we want to ``interpret'' the clusters or partitions, and 
   it is not nice if the sets returned are uninteresting or trivial. 
   Alternatively, one might want to do bucket testing or something 
   related, and when the clusters are very small, it might not be worth 
   the time.
\end{itemize}
(As a forward-looking pointer, we will see that an `improvement'' of the 
idea of cut and min-cut may also get very imbalanced partitions, but it does 
so for a more subtle/non-trivial reason.
So, this is a bug or a feature, but since the reason is somewhat trivial 
people typically view this as a bug associated with the choice of this 
particular objective in many applications.)
\end{enumerate}


\subsubsection{A slight detour: the Max-Flow Problem}

Here is a slight detour (w.r.t. spectral methods per se), but it is one that
we will get back to, and it is related to our first try objective.

Here is a seemingly-different problem called the Max-Flow problem.
\begin{definition}
The \emph{capacity} of an edge $e \in E$ is a mapping 
$c:E\rightarrow\mathbb{R}^{+}$, denoted $c_e$ or $c_{uv}$ (which will be a
constraint on the maximum amount of flow we allow on that edge).
\end{definition}
\begin{definition}
A \emph{flow} in a directed graph is a mapping $f:E\rightarrow\mathbb{R}$, denoted $f_e$ or
$f_{uv}$ s.t.:
\begin{itemize}
\item $f_{uv} \le c_{uv}, \quad \forall (u,v) \in E$ (Capacity constraint.)
\item
$\sum_{v:(u,v) \in E} f_{uv} = \sum_{v:(v,u) \in E} f_{vu}, \quad\forall u \in V\backslash \{s,t\}$ (Conservation of flow, except at source and sink.)
\item $f_e\geq 0 \quad \forall e\in E$ (obey directions)
\end{itemize}
A \emph{flow} in a undirected graph is a mapping $f:E\rightarrow\mathbb{R}$, denoted $f_e$. We arbitrarily assign directions to each edge, say $e=(u,v)$, and when we write $f_{(v,u)}$, it is just a notation for $-f_{(u,v)}$
\begin{itemize}
\item $|f_{uv}| \le c_{uv}, \quad \forall (u,v) \in E$ (Capacity constraint.)
\item
$\sum_{v:(u,v) \in E} f_{uv} = 0, \quad\forall u \in V\backslash \{s,t\}$ (Conservation of flow, except at source and sink.)
\end{itemize}
\end{definition}

\begin{definition}
The \emph{value of the flow} $|f| = \sum_{v \in V} |f_{sv}|$, where $s$ is the
source.  
(This is the amount of flow flowing out of $s$. It is easy to see that as all the nodes other than $s,t$ obey the flow conservation constraint, the flow out of $s$ is the same as the flow into $t$.
This is the amount of flow flowing from $s$ to $t$.)
In this case, the \emph{Max-Flow Problem} is 
\[
\max |f|   .
\]
\end{definition}

Note: what we have just defined is really the ``single commodity flow 
problem'' since there exists only $1$ commodity that we are routing and 
thus only $1$ source/sink pair $s$ and $t$ that we are routing from/to.  
(We will soon see an important generalization of this to something called
\emph{multi-commodity flow}, and this will be very related
to a non-spectral method for graph partitioning.)


Here is an important result that we won't prove.

\begin{theorem}[Max-Flow-Min-Cut Theorem]
The max value of an $s-t$ flow is equal to the min capacity of an $s-t$ cut.
\end{theorem}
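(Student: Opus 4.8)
The plan is to prove the two inequalities $\max|f|\le\min_{(S,T)}c(S,T)$ and $\max|f|\ge\min_{(S,T)}c(S,T)$ separately: the first is elementary (``weak duality''), while the second requires exhibiting one flow and one cut of equal value. I will phrase everything for directed graphs; the undirected case reduces to this by replacing each undirected edge with a pair of oppositely oriented directed edges, which is exactly the $f_{(v,u)}:=-f_{(u,v)}$ convention already introduced above.

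For the easy direction, I would first record the identity that for \emph{any} feasible $s$-$t$ flow $f$ and \emph{any} $s$-$t$ cut $(S,T)$ the value $|f|$ equals the net flow across the cut:
\[
|f| \;=\; \sum_{\substack{(u,v)\in E\\ u\in S,\, v\in T}} f_{uv} \;-\; \sum_{\substack{(u,v)\in E\\ u\in T,\, v\in S}} f_{uv}.
\]
This follows by summing the conservation constraint over all $u\in S\setminus\{s\}$ and adding the defining sum at $s$; every edge internal to $S$ appears twice with opposite sign and cancels. Since $0\le f_{uv}\le c_{uv}$, the right-hand side is at most $\sum_{u\in S,\,v\in T}c_{uv}=c(S,T)$, so $|f|\le c(S,T)$ for every flow and every cut; taking the supremum over $f$ and the infimum over cuts gives $\max|f|\le\min c(S,T)$.

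For the hard direction I would run an augmenting-path / residual-graph argument. First, the feasible flows form a compact convex polytope (a box $0\le f_{uv}\le c_{uv}$ intersected with the linear conservation equalities) and $|f|$ is linear, hence continuous, so a maximum flow $f^{*}$ exists. Given $f^{*}$, build the residual network $G_{f^{*}}$ on the same vertices: put a (forward) edge $u\to v$ with residual capacity $c_{uv}-f^{*}_{uv}$ whenever this is positive, and a (backward/cancellation) edge $v\to u$ with residual capacity $f^{*}_{uv}$ whenever this is positive. If $G_{f^{*}}$ contained a directed $s$-$t$ path, pushing the bottleneck residual capacity along it would produce another feasible flow of strictly larger value --- one checks that capacity and conservation constraints are both preserved --- contradicting maximality. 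Hence $t$ is unreachable from $s$ in $G_{f^{*}}$. Let $S$ be the set of vertices reachable from $s$; then $s\in S$, $t\in T:=V\setminus S$, so $(S,T)$ is an $s$-$t$ cut, and by the definition of $G_{f^{*}}$ every edge $(u,v)\in E$ with $u\in S,\,v\in T$ is saturated ($f^{*}_{uv}=c_{uv}$, else $v\in S$) and every edge $(u,v)\in E$ with $u\in T,\,v\in S$ is empty ($f^{*}_{uv}=0$, else the cancellation edge $v\to u$ forces $u\in S$). Substituting into the cut identity yields $|f^{*}|=c(S,T)$, and combined with weak duality, $\min c(S,T)\le c(S,T)=|f^{*}|=\max|f|\le\min c(S,T)$, so equality holds throughout.

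I expect the main obstacle to be the residual-network bookkeeping: defining the residual capacities so that the backward edges faithfully model the option of decreasing flow, checking that augmenting along an $s$-$t$ path in $G_{f^{*}}$ indeed returns a \emph{feasible} flow with value increased by exactly the bottleneck amount, and then converting ``no augmenting path'' into the two crossing-edge statements (all $S\to T$ edges saturated, all $T\to S$ edges empty). A secondary subtlety worth a remark is that, for irrational capacities, iterated augmentation need not terminate, which is why I would obtain $f^{*}$ abstractly by compactness rather than as the output of a Ford--Fulkerson loop; once $f^{*}$ is in hand, the proof uses only that it admits no augmenting path.
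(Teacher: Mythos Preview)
Your proof is correct and complete: the weak-duality direction via the flow-across-a-cut identity, followed by the residual-graph argument (existence of a maximum flow by compactness, no augmenting path implies the reachable set $S$ defines a saturated cut), is the standard Ford--Fulkerson proof, and your care about irrational capacities is a nice touch. The paper, however, does not prove this theorem at all --- it explicitly states the result ``that we won't prove'' and then moves on to discuss the LP formulation and the multicommodity generalization. So there is nothing to compare against; you have simply supplied what the paper chose to omit.
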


Here, we state the Max-Flow problem and the Min-Cut problem, in terms of the 
primal and dual optimization problems.

\textsc{Primal: (Max-Flow)}: 
\begin{align*} 
\max          & \quad |f|    \\
\mbox{s.t.~~} & \quad f_{uv} \le C_{uv}   ,\quad (uv)\in E \\
              & \sum_{v:(vu)\in E} f_{vu} - \sum_{v:(uv)\in E} f_{uv}  \le 0 , \quad u \in V \\
              & f_{uv} \ge 0
\end{align*} 

\textsc{Dual: (Min-Cut)}:
\begin{align*}
\min          & \quad \sum_{(i, j) \in E} c_{ij} d_{ij}   \\
\mbox{s.t.~~} & \quad d_{ij} - p_i + p_j \ge 0, (ij)\in E   \\
              & \quad p_s=1, p_t=0,    \\
              & \quad p_i \geq 0, i \in V    \\
              & \quad d_{ij} \geq 0, ij \in E 
\end{align*}

There are two ideas here that are important that we will revisit.
\begin{itemize}
\item
Weak duality: for any instance and any feasible flows and cuts, 
$\max flow \le \min cut$.
\item
Strong duality: for any instance, $\exists$ feasible flow and feasible cut
s.t. the objective functions are equal, \emph{i.e.}, s.t. 
$\max flow = \min cut$.
\end{itemize}

We are not going to go into these details here---for people who have seen 
it, it is just to set the context, and for people who haven't seen it, it is 
to give an important fyi.
But we will note the following.
\begin{itemize}
\item
Weak duality generalizes to many settings, and in particular to 
multi-commodity flow; but strong duality does not.
The next question is: does there exist a cut s.t. equality is achieved.
\item
We can get an approximate version of strong duality, \emph{i.e.}, an 
approximate Min-Cut-Max-Flow theorem in that multi-commodity case.
That we can get such a bound will have numerous algorithmic implications, 
in particular for graph partitioning.
\item
We can translate this (in particular, the all-pairs multi-commodity flow 
problem) into $2$-way graph partitioning problems (this should not be 
immediately obvious, but we will cover it later) and get nontrivial 
approximation guarantees.
\end{itemize}
About the last point: for flows/cuts we have introduced 
special source and sink nodes, $s$ and $t$, but when we apply it back to 
graph partitioning there won't be any special source/sink nodes, basically
since we will relate it to the all-pairs multi-commodity flow problem, 
i.e., where we consider all ${n \choose 2}$ possible source-sink pairs.


\subsection{Beyond simple min-cut to ``better'' quotient cut objectives}

The way we described what is a ``good'' clustering above was in terms of 
an intra-connectivity versus intra-connectivity bi-criterion.
So, let's revisit and push on that.
A related thing or a different way (that gives the same result in many 
cases, but sometimes does not) is to say that a bi-criterion is:
\begin{itemize}
\item
We want a good ``cut value''---not too many crossing edges---where 
\emph{cut value} is $E(S, \bar{S})$ or a weighted version of that.
I.e., what we just considered with min-cut.
\item
We want good ``balance'' properties---\emph{i.e.}, both sides of the cut should
be roughly the same size---so both $S, \bar{S}$ are the same size or
approximately the same size.
\end{itemize}


There are several ways to impose a balance condition.
Some are richer or more fruitful (in theory and/or in practice) than others.
Here are several.
First, we can add ``hard'' or \emph{explicit} balance conditions:
\begin{itemize}
\item
Graph bisection---find a min cut s.t. $|S| = |\bar{S}| = n/2$, \emph{i.e.}, 
ask for exactly $50$-$50$ balance.
\item
$\beta$-balanced cut---find a min cut s.t $|S| = \beta n $, 
$ |\bar{S}| = (1-\beta) n$, \emph{i.e.}, give a bit of wiggle room and ask 
for exactly (or more generally no worse than), say, a $70$-$30$ balance.
\end{itemize}
Second, there are also ``soft'' or \emph{implicit} balance conditions, where
there is a penalty and separated nodes ``pay'' for edges in the cut.
(Actually, these are not ``implicit'' in the way we will use the word later;
here it is more like ``hoped for, and in certain intuitive cases it is true.''
And they are not quite soft, in that they can still lead to imbalance; but
when they do it is for much more subtle and interesting reasons.)
Most of these are usually formalized as quotient-cut-style objectives:
\begin{itemize}
\item 
Expansion: 
$\frac{E(S,\bar{S})}{\frac{|S|}{n}}$ or
$\frac{E(S,\bar{S})}{\min\{|S|,|\bar{S}|\}}$ 
(def this as :h(S) )
(a.k.a. q-cut)
\item 
Sparsity: 
$\frac{E(S, \bar S)}{|S| |\bar S|}$   
(def this as :sp(S) )
(a.k.a. approximate-expansion)
\item 
Conductance: 
$\frac{E(S, \bar S)}{\frac{\mbox{Vol}\left(S\right)}{n}}$ 
or $\frac{E(S,\bar{S})}{\min(\mbox{Vol}(|S|),\mbox{Vol}(|\bar{S}|))}$
(a.k.a. Normalized cut)
\item 
Normalized-cut: 
$\frac{E(S,\bar{S})}{\mbox{Vol}(|S|)\cdot\mbox{Vol}(|\bar{S}|)}$
(a.k.a. approximate conductance)
\end{itemize}
Here, $E(S,\bar{S})$ is the number of edges between $S$ and $\bar{S}$, or 
more generally for a weighted graph the sum of the edge weights between $S$ 
and $\bar{S}$, and $\mbox{Vol}(S) = \sum_{ij\in E}{ \mbox{deg}(V_i)}$.
In addition, the denominator in all four cases correspond to different 
volume notions: the first two are based on the number of nodes in $S$, and
the last two are based on the number of edges in $S$ (i.e., the sum of the
degrees of the nodes in $S$.)

Before proceeding, it is worth asking what if we had taken a difference
rather than a ratio, e.g., $SA-VOL$, rather than $SA/VOL$.
At the high level we are discussing now, that would do the same thing---but, 
quantitatively, using an additive objective will generally give 
very different results than using a ratio objective, in particular when one 
is interested in fairly small clusters.

(As an FYI, the first two, i.e., expansion and sparsity, are typically used 
in the theory algorithms algorithms, since they tend to highlight the 
essential points; while the latter two, i.e., conductance and normalized 
cuts, are more often used in data analysis, machine learning, and other 
applications, since issues of normalization are dealt with better.)


Here are several things to note:
\begin{itemize}
\item
Expansion provides a \emph{slightly} stronger bias toward being 
well-balanced than sparsity (i.e. between a $10-90$ cut and a $40-60$ cut, the advantage in the denominator for the more balanced $40-60$ cut in expansion is $4:1$, while it is $2400:900<4:1$ in sparsity)
(
and there \emph{might}
be some cases where this is important.
That is, the product variants have a ``factor of $2$'' weaker preference
for balance than the min variants.
Similarly for normalized cuts versus conductance.
\item
That being said, that difference is swamped by the following.
Expansion and sparsity are the ``same''  (in the following sense:)
\[\min h(S) \approx \min  sp(S)\]
Similarly for normalized cuts versus conductance.
\item
Somewhat more precisely, although the expansion of any particular set 
isn't in general close to the sparsity of that set, The expansion problem and the sparsity problem are equivalent in the following sense:\\

It is clear that 
\[\argmin_S \Phi'(G)=\argmin_S n\Phi'(G)=\argmin_S \frac{C(S,\bar{S})}{\min\{|S|,|\bar{S}|\}}\frac{n}{\max\{|S|,|\bar{S}|\}}
\]
As $1<\frac{n}{\max\{|S|,|\bar{S}|\}}\leq 2$
, the min partition we find by optimizing sparsity will also give, off by a multiplicative factor of $2$, the optimal expansion.
As we will see this is small compared to $O(\log n)$ approximation from flow 
or the quadratic factor with Cheeger, and so is not worth worrying about from 
an optimization perspective.
Thus, we will be mostly cavalier about going back and forth.
\item
Of course, the sets achieving the optimal may be very different.
An analogous thing was seen in vector spaces---the optimal may rotate by 
$90$ degrees, but for many things you only need that the Rayleigh quotient 
is approximately optimal. 
Here, however, the situation is worse.
Asking for the certificates achieving the optimum is a more difficult 
thing---in the vector space case, this means making awkward ``gap'' 
assumptions, and in the graph case it means making strong and awkward 
combinatorial statements.
\item
Expansion $\ne$ Conductance, in general, except for regular graphs.
(Similarly, Sparsity $\ne$ Normalized Cuts, in general, except for regular 
graphs.)
The latter is in general preferable for heterogeneous graphs, \emph{i.e.}, 
very irregular graphs.
The reason is that there are closer connections with random walks and we get 
tighter versions of Cheeger's inequality if we take the weights into account.
\item
Quotient cuts capture exactly the surface area to volume bicriteria that
we wanted.
(As a forward pointer, a question is the following: what does this mean if 
the data come from a low dimensional space versus a high dimensional space;
or if the data are more or less expander-like; and what is the relationship 
between the original data being low or high dimensional versus the graph 
being expander-like or not?
\item
For ``space-like'' graphs, 
these two bicriteria as ``synergistic,'' in that they work together;
for expanders, they are ``uncoupled,'' in that the best cuts don't depend
on size, as they are all bad;
and for many ``real-world'' heavy-tailed informatics graphs, they are
``anti-correlated,'' in that better balance means worse cuts.
\item
An obvious question: are there other notions, e.g., of ``volume'' that 
might be useful and that will lead to similar results we can show about
this?
(In some cases, the answer to this is yes: we may revisit this later.)
Moreover, one might want to choose a different reweighting for 
statistical or robustness reasons.
\end{itemize}


(We will get back to the issues raised by that second-to-last point later 
when we discuss ``local'' partitioning methods.
We simply note that ``space-like'' graphs include, \emph{e.g.}, 
$\mathcal{Z}^{2}$ or random geometric graphs or ``nice'' planar graphs or 
graphs that ``live'' on the earth. 
More generally, there is a trade-off and we might get very imbalanced
clusters or even disconnected clusters. 
For example, for the $G(n, p)$ random graph model 
if $p \ge \log n^2 / n $ then we have an expander,
while for extremely sparse random graphs, \emph{i.e.}, $ p <~ \log n/n  $, 
then due to lack of concentration we can have deep small cuts but be 
expander-like at larger size scales.)


\subsection{Overview Graph Partition Algorithms}

Here, we will briefly describe the ``lay of the land'' when it comes to 
graph partitioning algorithms---in the next few classes, we will go into a 
lot more details about these methods.
There are three basic ideas you need to know for graph partitioning, in 
that nearly all methods can be understood in terms of some combination of 
these methods.
\begin{itemize}
\item
Local Improvement (and multi-resolution).
\item
Spectral Methods.
\item
Flow-based Methods.
\end{itemize}
As we will see, in addition to being of interest in clustering data 
graphs, graph partitioning is a nice test-case since it has been very 
well-studied in theory and in practice and there exists a large number of 
very different algorithms, the respective strengths and weaknesses of which
are well-known for dealing with it.

\subsubsection{Local Improvement}

\emph{Local improvement} methods refer to a class of methods that take an 
input partition and do more-or-less naive steps to get a better partition:
\begin{itemize}
\item
$70$s Kernighan-Lin.
\item
$80$s Fiduccia-Mattheyses---FM and KL start with a partition and improve the 
cuts by flipping nodes back and forth.
Local minima can be a big problem for these methods.
But they can be useful as a post-processing step---can give a big difference
in practice.
FM is better than KL since it runs in linear time, and it is still commonly 
used, often in packages.
\item
$90$s Chaco, Metis, etc.
In particular, METIS algorithm from Karypis and Kumar, works very well in 
practice, especially on low dimensional graphs.
\end{itemize}
The methods of the $90$s used the idea of local improvement, coupled with 
the basically linear algebraic idea of \emph{Multiresolution} get algorithms
that are designed to work well on space-like graphs and that can perform 
very well in practice.
The basic idea is:
\begin{itemize}
\item
Contract edges to get a smaller graph.
\item
Cut the resulting graph.
\item
Unfold back up to the original graph.
\end{itemize}
Informally, the basic idea is that if there is some sort of geometry, say 
the graph being partitioned is the road network of the US, i.e., that lives 
on a two-dimensional surface, then we can ``coarse grain'' over the 
geometry, to get effective nodes and edges, and then partition over the 
coarsely-defined graph.
The algorithm will, of course, work for any graph, and one of the 
difficulties people have when applying algorithms as a black box is that 
the coarse graining follows rules that can behave in funny ways when applied
to a graph that doesn't have the underlying geometry.

Here are several things to note:
\begin{itemize}
\item
These methods grew out of scientific computing and parallel processing, 
so they tend to work on ``space-like'' graphs, where there are nice
homogeneity properties---even if the matrices aren't low-rank, they might
be diagonal plus low-rank off-diagonal blocks for physical reasons, or
whatever.
\item
The idea used previously to speed up convergence of iterative methods.
\item
Multiresolution allows globally coherent solutions, so it avoids some of 
the local minima problems.
\item
$90$s: Karger showed that one can compute min-cut by randomly contracting 
edges, and so multiresolution may not be just changing the resolution at 
which one views the graph, but it may be taking advantage of this property 
also.
\end{itemize}

An important point is that local improvement (and even multiresolution 
methods) can easily get stuck in local optima.
Thus, they are of limited interest by themselves.
But they can be very useful to ``clean up'' or ``improve'' the output of 
other methods, e.g., spectral methods, that in a principled way lead to a 
good solution, but where the solution can be improved a bit by doing 
some sort or moderately-greedy local improvement.


\subsubsection{Spectral methods}

\emph{Spectral methods} refer to a class of methods that, at root, are a 
relaxation or rounding method derived from an NP-hard QIP and that involves
eigenvector computations.
In this case that we are discussing, it is the QIP formulation of the graph 
bisection problem that is relaxed.
Here is a bit of incomplete history.
\begin{itemize}
\item
Donath and Hoffman (ca. 72,73) introduced the idea of using the leading 
eigenvector of the Adjacency Matrix $A_G$ as a heuristic to find good 
partitions.
\item
Fiedler (ca. 73) associated the second smallest eigenvalue of the 
Laplacian $L_G$ with graph connectivity and suggested splitting the graph by 
the value along the associated eigenvector.
\item
Barnes and Hoffman (82,83) and Bopanna (87)
used LP, SDP, and convex programming methods to look at the leading
nontrivial eigenvector.
\item
Cheeger (ca. 70) established connections with isoperimetric relationships 
on continuous manifolds, establishing what is now known ad Cheeger's 
Inequality.
\item
$80$s: saw performance guarantees from Alon-Milman, Jerrum-Sinclair, etc.,
connecting $\lambda_2$ to expanders and rapidly mixing Markov chains.
\item
80s: saw improvements to approximate eigenvector computation, e.g., Lanczos
methods, which made computing eigenvectors more practical and easier.
\item
80s/90s: saw algorithms to find separators in certain classes of graphs, e.g., 
planar graphs, bounds on degree, genus, etc.
\item
Early 90s: saw lots of empirical work showing that spectral partitioning works
for ``real'' graphs such as those arising in scientific computing 
applications
\item
Spielman and Teng (96) 
showed that ``spectral partitioning works'' on bounded degree planar graphs 
and well-shaped meshed, i.e., in the application where it is usually applied.
\item
Guattery and Miller (95, 97)
showed ``spectral partitioning doesn't work'' on certain classes of graphs, 
e.g., the cockroach graph, in the sense that there are graphs for which the
quadratic factor is achieved.
That particular result holds for vanilla spectral, but similar constructions 
hold for non-vanilla spectral partitioning methods.
\item
Leighton and Rao (87, 98)
established a bound on the duality gap for multi-commodity flow problems, 
and used multi-commodity flow methods to get an $O(\log n)$ approximation to
the graph partitioning problem.
\item
LLR (95)
considered the geometry of graphs and algorithmic applications, and 
interpreted LR as embedding $G$ in a metric space, making the connection 
with the $O(\log n)$ approximation guarantee via Bourgain's embedding 
lemma.
\item
90s: saw lots of work in TCS on LP/SDP relaxations of IPs and randomized 
rounding to get $\{\pm 1\}$ solutions from fractional solutions.
\item
Chung (97)
focused on the normalized Laplacian for degree irregular graphs and the 
associated metric of conductance.
\item
Shi and Malik (99)
used normalized cuts for computer vision applications, which is essentially 
a version of conductance.
\item
Early 00s:
saw lots of work in ML inventing and reinventing and reinterpreting spectral 
partitioning methods, including relating it to other problems like 
semi-supervised learning and prediction (with, e.g., boundaries between 
classes being given by low-density regions).
\item
Early 00s:
saw lots of work in ML on manifold learning, etc., where one constructs a 
graph and recovers an hypothesized manifold; constructs graphs for 
semi-supervised learning applications; and where the diffusion/resistance 
coordinates are better or more useful/robust than geodesic distances.
\item
ARV (05)
got an SDP-based embedding to get an $O(\sqrt{\log n})$ approximation, which
combined ideas from spectral and flow; and there was related follow-up work.
\item
00s: saw local/locally-biased spectral methods and improvements to flow 
improve methods.
\item
00s: saw lots of spectral-like methods like viral diffusions with 
social/complex networks. 
\end{itemize}

For the moment and for simplicity, say that we are working with unweighted 
graphs.
The graph partitioning QIP is:
\begin{align*}
\min          & \quad x^TLx    \\
\mbox{s.t.~~} & \quad x^T1=0   \\
              & x_i\in\{-1,+1\}
\end{align*}
and the spectral relaxation is:
\begin{align*}
\min          & \quad x^TLx    \\
\mbox{s.t.~~} & \quad x^T1=0   \\
              & x_i\in\mathbb{R},\quad x^Tx=n
\end{align*}
That is, we relax $x$ from being in $\{-1,1\}$, which is a discrete/combinatorial 
constraint, to being a real continuous number that is $1$ ``on average.''
(One could relax in other ways---\emph{e.g.}, we could relax to say that 
it's magnitude is equal to $1$, but that it sits on a higher-dimensional
sphere.  We will see an example of this later.  Or other things, like 
relaxing to a metric.)
This spectral relaxation is not obviously a nice problem, e.g., it is not
even convex; but it can be shown 
that the solution to this relaxation can be computed as the second smallest 
eigenvector of $L$, the Fiedler vector, so we can use an eigensolver to get 
the eigenvector.

Given that vector, we then have to perform a \emph{rounding} to get an actual 
cut.
That is, we need to take the real-valued/fractional solution obtained from 
the continuous relaxation and round it back to $\{-1,+1\}$.
There are different ways to do that.

So, here is the basic spectral partitioning method.
\begin{itemize}
\item 
Compute an eigenvector of the above program. 
\item
Cut according to some rules, \emph{e.g.}, do a hyperplane rounding, or 
perform some other more complex rounding rule.
\item
Post process with a local improvements method.
\end{itemize}
The hyperplane rounding is the easiest to analyze, and we will do it here, 
but not surprisingly factors of $2$ can matter in practice; and 
so---\emph{when spectral is an appropriate thing to do}---other rounding 
rules often do better in practice.
(But that is a ``tweak'' on the larger question of spectral versus flow 
approximations.)
In particular, we can do local improvements here to make the output slightly
better in practice.
Also, there is the issue of what exactly is a rounding, e.g., if one 
performs a sophisticated flow-based rounding then one may obtain a better 
objective function but a worse cut value.
Hyperplane rounding involves:
\begin{itemize}
\item
Choose a split point $\hat{x}$ along the vector $x$
\item
Partition nodes into $2$ sets: $\{x_i<\hat{x}\}$ and $\{x_i > \hat{x} \}$
\end{itemize}
By \emph{Vanilla spectral}, we refer to spectral with hyperplane rounding 
of the Fiedler vector embedding.

Given this setup of spectral-based partitioning, what can go ``wrong'' with
this approach.
\begin{itemize}
\item
We can choose the wrong direction for the cut:
\begin{itemize}
\item
Example---Guattery and Miller construct an example that is ``quadratically
bad'' by taking advantage of the confusion that spectral has between 
``long paths'' and ``deep cuts.''
\item
Random walk interpretation---long paths can also cause slow mixing since 
the expected progress of a $t$-step random walk is $O(\sqrt{t})$.
\end{itemize}
\item
The hyperplane rounding can hide good cuts:
\begin{itemize}
\item
In practice, it is often better to post-process with FM to improve the 
solution, especially if want good cuts, i.e., cuts with good objective 
function value.
\end{itemize}
\end{itemize}
An important point to emphasize is that, although both of these examples of
``wrong'' mean that the task one is trying to accomplish might not work, 
i.e., one might not find the best partition, sometimes that is not all bad.
For example, the fact that spectral methods ``strip off'' long stringy 
pieces might be ok if, e.g., one obtains partitions that are ``nice'' in 
other ways.
That is, the direction chosen by spectral partitioning might be nice or 
regularized relative to the optimal direction.
We will see examples of this, and in fact it is often for this reason that
spectral performs well in practice.
Similarly, the rounding step can also potentially give an implicit 
regularization, compared to more sophisticated rounding methods, and we will 
return to discuss this.


\subsubsection{Flow-based methods}

There is another class of methods that uses very different ideas to 
partition graphs.
Although this will not be our main focus, since they are not spectral 
methods, we will spend a few classes on it, and it will be good to know 
about them since in many ways they provide a strong contrast with spectral 
methods.

This class of flow-based methods uses the ``all pairs'' multicommodity flow 
procedure to reveal bottlenecks in the graph.
Intuitively, flow should be ``perpendicular'' to the cut (i.e. in the sense of complementary slackness for LPs, and similar relationship between primal/dual variables to dual/primal constraints in general).
The idea is to route a large number of commodities \emph{simultaneously}
between random pairs of nodes and then choose the cut with the most edges
congested---the idea being that a bottleneck in the flow computation 
corresponds to a good~cut.

Recall that the single commodity max-flow-min-cut procedure has zero 
duality gap, but that is not the case for multi-commodity problem.
On the other hand, the $k$-multicommodity has $O(\log k)$ duality 
gap---this result is due to LR and LLR, and it says that there is an 
\emph{approximate} min-flow-max-cut.
Also, it implies an $O(\log n)$ gap for the all pairs problem.

The following is an important point to note.
\begin{claim}
The $O(\log n)$ is tight on expanders.
\end{claim}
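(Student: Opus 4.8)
The plan is to exhibit a family of constant-degree expanders on which the ratio between the sparsest cut and the maximum concurrent (all-pairs, uniform-demand) multicommodity flow is $\Omega(\log n)$; since Leighton--Rao already gives an $O(\log n)$ upper bound on that ratio, this shows the bound cannot be improved. Fix a $d$-regular expander $G=(V,E)$ on $n$ vertices with $d$ a constant (e.g.\ a random $3$-regular graph, as discussed earlier, or an explicit Ramanujan graph), with edge-expansion constant $\alpha>0$, i.e.\ $E(S,\bar S)\ge \alpha d\,\min\{|S|,|\bar S|\}$ for every $S\subseteq V$. Work with unit edge capacities and demand $1$ between every pair of vertices; the ratio $\Phi^*/f^*$ is invariant under rescaling all demands, so this normalization costs nothing. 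Here $f^*$ denotes the optimal concurrent flow value and $\Phi^* = \min_S E(S,\bar S)/(|S||\bar S|)$ the sparsity; the cut condition gives $f^*\le\Phi^*$ for free, and the point is to show $f^*\le\Phi^*/\Omega(\log n)$.

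First I would pin down $\Phi^*$. Expansion gives $\mathrm{sp}(S)=E(S,\bar S)/(|S||\bar S|)\ge \alpha d\,\min\{|S|,|\bar S|\}/(|S||\bar S|) = \alpha d/\max\{|S|,|\bar S|\}\ge \alpha d/n$, while $S=\{v\}$ a single vertex gives $\mathrm{sp}(S)=d/(n-1)$. Hence $\Phi^*=\Theta(d/n)=\Theta(1/n)$.

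The heart of the argument is the bound $f^*=O(1/(n\log n))$, via a ball-growth / flow-volume estimate. Since $G$ has maximum degree $d$, the ball of radius $k$ around any vertex contains at most $d^{k+1}$ vertices, so choosing $k=c\log n$ with $c=c(d)>0$ small enough makes every such ball have size at most $n^{1/2}$; therefore at most $n^{3/2}=o(n^2)$ ordered pairs lie within distance $k$, and at least a constant fraction of the $\binom n2$ pairs are at graph distance $\Omega(\log n)$ from each other. Now take any feasible concurrent flow of value $f$ and decompose it into flow paths $P$ carrying $f_P$ units. Its total \emph{flow volume} $\sum_P f_P|P| = \sum_e(\text{total flow on }e)$ is at most $\sum_e \mathrm{cap}(e)=|E|=dn/2 = O(n)$, because each edge carries at most its unit capacity. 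On the other hand, each of the $\Omega(n^2)$ far-apart pairs must route $f$ units over a path of length $\Omega(\log n)$, so the flow volume is $\Omega(f\,n^2\log n)$. Combining, $f\,n^2\log n = O(n)$, hence $f^*=O(1/(n\log n))$.

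Putting the two estimates together, $\Phi^*/f^* = \Omega\bigl((1/n)\big/(1/(n\log n))\bigr) = \Omega(\log n)$, which together with the Leighton--Rao $O(\log n)$ upper bound gives $\Phi^*/f^* = \Theta(\log n)$ on this family, proving the claim. The only real obstacle is the flow-volume lower bound: one must quantify ``most pairs are far apart'' correctly, but the $d^{k+1}$ bound on ball sizes makes this routine for constant $d$, and one must be careful to apply the identity $\sum_P f_P|P| = \sum_e(\text{flow on }e)$ with the capacity inequality in the right direction. No machinery beyond the expansion property and elementary counting is needed.
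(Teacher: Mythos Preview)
Your proof is correct and follows essentially the same approach as the paper. The paper's argument (given later as the Leighton--Rao theorem) also takes a constant-degree expander, lower-bounds the sparsest cut by $\Theta(1/n)$ via expansion, observes that bounded degree forces at least half of all vertex pairs to be at distance $\Omega(\log n)$, and then runs the same flow-volume/capacity counting argument to get $f^*=O(1/(n\log n))$; the only cosmetic difference is that the paper uses balls of size $n/2$ rather than your $n^{1/2}$ threshold.
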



For flow, there are connections to embedding and linear programming, so
as we will see, we can think of the algorithm as being:
\begin{itemize}
\item 
Relax flow to LP, and solve the LP.
\item 
Embed solution in the $\ell_1$ metric space.
\item 
Round solution to $\{0,1\}$.
\end{itemize}

\subsection{Advanced material and general comments}

We will conclude with a brief discussion of these results in a broader 
context.
Some of these issues we may return to later.

\subsubsection{Extensions of the basic spectral/flow ideas}

Given the basic setup of spectral and flow methods, both of which come 
with strong theory, here are some extensions of the basic ideas.
\begin{itemize}
\item
Huge graphs.
Here want to do computations depending on the size of the sets and not the
size of the graph, \emph{i.e.}, we don't even want to touch all the nodes 
in the graph, and we want to return a cut that is nearby an input seed set 
of nodes.
This includes ``local'' spectral methods---that take advantage of diffusion 
to approximate eigenvectors and get Cheeger-like guarantees.
\item
Improvement Methods.
Here we want to ``improve'' an input partition---there are both spectral 
and flow versions.
\item
Combining Spectral and Flow.
\begin{itemize}
\item
ARV solves an SDP, which takes time like $O(n^{4.5})$ or so; 
but we can do it faster (\emph{e.g.}, on graphs with $\approx 10^{5}$ 
nodes) using ideas related to approximate multiplicative weights.
\item
There are strong connections here to online learning---roughly since we can 
view ``worst case'' analysis as a ``game'' between a cut player and a 
matching player.
\item
Similarly, there are strong connections to boosting, which suggest that 
these combinations might have interesting statistical properties.
\end{itemize}
\end{itemize}

A final word to reemphasize: at least as important for what we will be doing 
as understanding when these methods work is understanding when these methods 
``fail''---that is, when they achieve their worst case 
quality-of-approximation guarantees:
\begin{itemize}
\item
Spectral methods ``fail'' on graphs with ``long stringy'' pieces, like that 
constructed by Guattery and Miller.
\item
Flow-based methods ``fail'' on expander graphs (and, more generally, on 
graphs where most of the $\binom{n}{2}$ pairs but most pairs are far 
$\log n$ apart).
\end{itemize}
Importantly, a lot of real data have ``stringy'' pieces, as well as 
expander-like parts; and so it is not hard to see artifacts of spectral 
and flow based approximation algorithms when they are run on real data.

\subsubsection{Additional comments on these methods}

Here are some other comments on spectral versus flow.
\begin{itemize}
\item
The SVD gives good ``global'' but not good ``local'' guarantees.
For example, it provides global reconstruction error, and going to the 
low-dimensional space might help to speed up all sorts of algorithms; but 
any pair of distances might be changed a lot in the low-dimensional space, 
since the distance constraints are only satisfied on average.
This should be contrasted with flow-based embedding methods and all sorts 
of other embedding methods that are used in TCS and related areas, where 
one obtains very strong local or pairwise guarantees.
There are two important (but not immediately obvious) consequences of this.
\begin{itemize}
\item
The lack of local guarantees makes it hard to exploit these embeddings 
algorithmically (in worst-case), whereas the pair-wise guarantees provided 
by other types of embeddings means that you can get worst-case bounds and
show that the solution to the subproblem approximates in worst case the 
solution to the original problem.
\item
That being said, the global guarantee means that one obtains results that are 
more robust to noise and not very sensitive to a few ``bad'' distances, which 
explains why spectral methods are more popular in many machine learning and
data analysis applications.
\item
That local guarantees hold for all pair-wise interactions to get worst-case
bounds in non-spectral embeddings essentially means that we are 
``overfitting'' or ``most sensitive to'' data points that are most far 
apart.
This is counter to a common design principle, e.g., exploited by Gaussian 
rbf kernels and other NN methods, that the most reliable information in the 
data is given by nearby points rather than far away points.
\end{itemize}
\end{itemize}


\subsection{References}
\begin{enumerate}
\item Schaeffer, "Graph Clustering", Computer Science Review 1(1): 27-64, 2007
\item Kernighan, B. W.; Lin, Shen (1970). "An efficient heuristic procedure for partitioning graphs". Bell Systems Technical Journal 49: 291-307.
\item CM Fiduccia, RM Mattheyses. "A Linear-Time Heuristic for Improving Network Partitions". Design Automation Conference. 
\item G Karypis, V Kumar (1999). "A Fast and High Quality Multilevel Scheme for Partitioning Irregular Graphs". Siam Journal on Scientific Computing.
\end{enumerate}

\newpage

\section{%
(02/10/2015): 
Spectral Methods for Partitioning Graphs (1 of 2):
Introduction to spectral partitioning and Cheeger's Inequality}

Reading for today.
\begin{compactitem}
\item
``Lecture Notes on Expansion, Sparsest Cut, and Spectral Graph Theory,'' by Trevisan
\end{compactitem}

Today and next time, we will cover what is known as \emph{spectral graph 
partitioning}, and in particular we will discuss and prove Cheeger's 
Inequality.
This result is central to all of spectral graph theory as well as a wide 
range of other related spectral graph methods.
(For example, the isoperimetric ``capacity control'' that it provides 
underlies a lot of classification, etc. methods in machine learning that are 
not explicitly formulated as partitioning problem.)
Cheeger's Inequality relates the quality of the cluster found with spectral 
graph partitioning to the best possible (but intractable to compute) 
cluster, formulated in terms of the combinatorial objectives of 
expansion/conductance. 
Before describing it, we will cover a few things to relate what we have done 
in the last few classes with how similar results are sometimes presented 
elsewhere.

\subsection{Other ways to define the Laplacian}

Recall that $L=D-A$ is the graph Laplacian, or we could work with the 
normalized Laplacian, in which case $L=I-D^{-1/2}AD^{-1/2}$.
While these definition might not make it obvious, the Laplacian actually has
several very intuitive properties (that could alternatively be used as 
definitions).
Here, we go over two of these.

\subsubsection{As a sum of simpler Laplacians}

Again, let's consider $d$-regular graphs. 
(Much of the theory is easier for this case, and expanders are more extremal in this case; but the theory goes through to degree-heterogeneous 
graphs, and this will be more natural in many applications, and so we will 
get back to this later.)

Recall the definition of the Adjacency Matrix of an unweighted graph 
$G=(V,E)$:
\[ 
A_{ij} = \left\{ \begin{array}{l l}
                    1 & \quad \text{if $(ij)\in E$}\\
                    0 & \quad \text{otherwise}
                 \end{array} 
         \right.  ,
\]
In this case, we can define the Laplacian as $L=D-A$ or the normalized 
Laplacian as $L=I-\frac{1}{d}A$.

Here is an alternate definition for the Laplacian $L=D-A$.
Let $G_{12}$ be a graph on two vertices with one edge between those two 
vertices, and define 
\[
L_{G_{12}} = \left( \begin{array}{cc} 1  & -1       \\
                                      -1 &  1 
                    \end{array} 
             \right)
\]
Then, given a graph on $n$ vertices with just one edge between vertex $u$ 
and $v$, we can define $L$ to be the all-zeros matrix, except for the 
intersection between the $u^{th}$ and $v^{th}$ column and row, where we 
define that intersection to be 
$$ L_{G_{uv}} = \left( \begin{array}{cc} 1  & -1       \\
                                         -1 &  1 
                       \end{array} 
                \right)  .$$
Then, for a general graph $G=(V,E)$, one we can define
\[
L_G = \sum_{(u,v) \in E} L_{G_{uv}}   .
\]
This provides a simpler way to think about the Laplacian and in particular 
changes in the Laplacian, e.g., when one adds or removes edges.
In addition, note also that this generalizes in a natural way to 
$ \sum_{(u,v) \in E} w_{uv}L_{G_{uv}}$ if the graph $G=(V,E,W)$ is weighted.

\textbf{Fact.}
This is identical to the definition $L=D-A$.
It is simple to prove this.

From this characterization, several things follow easily.
For example, 
\[
x^TLx = \sum_{(u,v) \in E} w_{uv} \left(x_u-x_v\right)^2  ,
\]
from which it follows that if $v$ is an eigenvector of $L$ with eigenvalue 
$\lambda$, then $v^TLv = \lambda v^Tv \ge 0$.
This means that every eigenvalue is nonnegative, i.e., $L$ is SPSD.

\subsubsection{In terms of discrete derivatives}

Here are some notes that I didn't cover in class that relate the Laplacian 
matrix to a discrete notion of a derivative.

In classical vector analysis, the Laplace operator is a differential 
operator given by the divergence of the gradient of a function in Euclidean 
space.  It is denoted:
\[
\nabla\cdot\nabla 
\mbox{ or }
\nabla^2
\mbox{ or }
\triangle
\]
In the cartesian coordinate system, it takes the form:
\[
\nabla = \left( \frac{\partial}{\partial x_1},\cdots,\frac{\partial}{\partial x_n} \right)  ,
\]
and so 
\[
\triangle f = \sum_{i=1}^{n} \frac{\partial^2 f}{\partial x_i^2} .
\]
This expression arises in the analysis of differential equations of many 
physical phenomena, e.g., electromagnetic/gravitational potentials, 
diffusion equations for heat/fluid flow, wave propagation, quantum 
mechanics, etc.

The \emph{discrete Laplacian} is defined in an analogous manner.
To do so, somewhat more pedantically, let's introduce a discrete analogue 
of the gradient and divergence operators in graphs.

Given an undirected graph $G=(V,E)$ (which for simplicity we take as 
unweighted), fix an \emph{arbitrary} orientation of the edges.
Then, let $K\in\mathbb{R}^{V \times E}$ be the edge-incidence matrix of 
$G$, defined as 
\[ 
K_{ue} = \left\{ \begin{array}{l l}
                    +1 & \quad \text{if edge $e$ exits vertex $u$}\\
                    -1 & \quad \text{if edge $e$ enters vertex $u$}\\
                    0  & \quad \text{otherwise}
                 \end{array} 
         \right.  .
\]
Then, 
\begin{itemize}
\item
define the \emph{gradient} as follows:
let $f:V\rightarrow\mathbb{R}$ be a function on vertices, viewed as a row
vector indexed by $V$;
then $K$ maps $f \rightarrow fK$, a vector indexed by $E$, measures the 
change of $f$ along edges of the graph; and if $e$ is an edge from $u$ to 
$v$, then $\left(fK\right)_{e} = f_u - f_v$.
\item
define the \emph{divergence} as follows:
let $g:E\rightarrow\mathbb{R}$ be a function on edges, viewed as a column 
vector indexed by $E$;
then $K$ maps $g \rightarrow Kg$, a vector indexed by $V$; if we think of 
$g$ as describing flow, then its divergence at vertex is the net outbound 
flow: $\left(Kg\right)_{v} = \sum_{e \mbox{ exits } v} g_e 
                           - \sum_{e \mbox{ enters } v} g_v $
\item
define the \emph{Laplacian} as follows:
it should map $f$ to $KK^Tf$, where $f:V\rightarrow\mathbb{R}$.
So, $L = L_G = KK^T$ is the discrete Laplacian.
\end{itemize}
Note that it is easy to show that 
\[ 
L_{uv} = \left\{ \begin{array}{l l}
                    -1            & \quad \text{if $(u,v) \in E$} \\
                    \mbox{deg}(u) & \quad \text{if $u=v$}  \\
                    0             & \quad \text{otherwise}
                 \end{array} 
         \right.  ,
\]
which is in agreement with the previous definition.
Note also that 
\[
fLf^T = fKK^Tf = \|fK\|_2^2 = \sum_{(u,v)\in E} \left( f_u-f_v \right)^2 ,
\]
which we will later interpret as a smoothness condition for functions on 
the vertices of the graph.

\subsection{Characterizing graph connectivity}

Here, we provide a characterization in terms of eigenvalues of the 
Laplacian of whether or not a graph is connected.
Cheeger's Inequality may be viewed as a ``soft'' version of this result.

\subsubsection{A Perron-Frobenius style result for the Laplacian}

What does the Laplacian tell us about the graph?
A lot of things.
Here is a start.
This is a Perron-Frobenius style result for the Laplacian.

\begin{theorem}
\label{thm:perrof-frob-lapl}
Let $G$ be a $d$-regular undirected graph, let $L=I-\frac{1}{d}A$ be the 
normalized Laplacian; and 
let $\lambda_1 \le \lambda_2 \le \cdots \le \lambda_n$ be the real 
eigenvalues, including multiplicity.
Then: 
\begin{enumerate}
\item
$\lambda_1 = 0$, and the associated eigenvector
$x_{1} = \frac{\vec{1}}{\sqrt{n}} 
               = \left( \frac{1}{\sqrt{n}}, \ldots, \frac{1}{\sqrt{n}} \right)$.
\item
$\lambda_2 \le 2$.
\item
$\lambda_k = 0$ iff $G$ has at least $k$ connected components.
(In particular, $\lambda_2>0$ iff $G$ is connected.)
\item
$\lambda_n = 2$ iff at least one connected component is bipartite.
\end{enumerate}
\end{theorem}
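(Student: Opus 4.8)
The plan is to funnel all four parts through a single object: the Laplacian quadratic form, which for the normalized $d$-regular Laplacian reads $x^{T}Lx = \frac{1}{d}\sum_{(u,v)\in E}(x_u-x_v)^2$ (this is the identity $L=D-A$ version divided by $d$, and it follows from the sum-of-simpler-Laplacians characterization above). Combined with the variational characterization of eigenvalues and the Spectral Theorem from the earlier lectures (so that we may freely pick a real orthonormal eigenbasis and talk about $\lambda_1,\dots,\lambda_n$), almost everything becomes a statement about when this sum of squares vanishes or is extremal. A second ingredient used repeatedly is the $d$-regularity identity $\sum_{(u,v)\in E}(x_u^{2}+x_v^{2}) = d\,x^{T}x$, since each vertex lies on exactly $d$ edges.

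For part (1), $d$-regularity gives $A\vec{1}=d\vec{1}$, hence $L\vec{1}=\vec{1}-\tfrac1d A\vec{1}=0$, so $0$ is an eigenvalue with eigenvector $\vec{1}/\sqrt{n}$; and since $x^{T}Lx\ge0$ for all $x$, every eigenvalue is nonnegative, so $\lambda_1=0$. For part (2) I would prove the stronger fact $\lambda_n\le 2$: edge-by-edge $(x_u-x_v)^{2}\le 2(x_u^{2}+x_v^{2})$, so $x^{T}Lx \le \frac{2}{d}\sum_{(u,v)\in E}(x_u^{2}+x_v^{2}) = 2x^{T}x$, and the variational characterization gives $\lambda_n=\max_x x^{T}Lx/x^{T}x\le 2$, in particular $\lambda_2\le2$. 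For part (3), observe $x^{T}Lx=0$ iff $x_u=x_v$ for every edge, i.e.\ iff $x$ is constant on each connected component; the space of such vectors has dimension $c$ = the number of connected components, so $\lambda_1=\cdots=\lambda_c=0<\lambda_{c+1}$, whence $\lambda_k=0$ iff $k\le c$.

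For part (4), the key algebraic step is the companion identity $2x^{T}x-x^{T}Lx=\frac{1}{d}\sum_{(u,v)\in E}(x_u+x_v)^{2}$ (expand $(x_u-x_v)^2$ and use the $d$-regularity identity). Since $\lambda_n\le2$, we have $\lambda_n=2$ iff $2$ is an eigenvalue, iff there is a nonzero $x$ with $x_u=-x_v$ on every edge. This decouples over connected components, so it suffices to understand one component $C$ on which $x\not\equiv0$: fixing $v_0\in C$ with $x_{v_0}\ne0$, the relation $x_u=-x_v$ along edges forces $|x_u|=|x_{v_0}|$ for all $u\in C$ and $x_u = x_{v_0}(-1)^{\mbox{dist}(v_0,u)}$; this assignment is consistent exactly when $C$ has no odd closed walk, i.e.\ $C$ is bipartite. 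Conversely, if a component $C$ has bipartition $(P,Q)$, put $x=+1$ on $P$, $x=-1$ on $Q$, $x=0$ off $C$; every edge is either inside $C$ (hence between $P$ and $Q$, so $x_u+x_v=0$) or outside it, so $x$ is a nonzero eigenvector with eigenvalue $2$.

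The main obstacle is the "consistency" sub-step in part (4): turning "$x_u=-x_v$ on all edges of a connected graph" into "either $x\equiv0$ or the graph is bipartite." I would make it clean by defining the $\pm1$-valued function $\epsilon(u)=(-1)^{\mbox{dist}(v_0,u)}$ and checking that $x_u=x_{v_0}\epsilon(u)$ is both forced (induct on distance from $v_0$ using a shortest path) and well-defined precisely when every cycle has even length — the standard two-colorability criterion for bipartiteness. The other parts are essentially bookkeeping once the two quadratic-form identities are in hand.
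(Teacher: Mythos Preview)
Your proposal is correct and follows essentially the same approach as the paper: both proofs funnel everything through the two quadratic-form identities $x^{T}Lx=\frac{1}{d}\sum_{(u,v)\in E}(x_u-x_v)^2$ and $2x^{T}x-x^{T}Lx=\frac{1}{d}\sum_{(u,v)\in E}(x_u+x_v)^2$, with the variational characterization supplying the eigenvalue bounds. The only cosmetic differences are that in part (3) the paper invokes Courant--Fischer explicitly to identify the $0$-eigenspace, whereas you read off its dimension directly as $\ker L$; and in part (4) the paper extracts the bipartite component via the level sets $A=\{v:x_v=a\}$, $B=\{v:x_v=-a\}$, $R=\{v:x_v\neq\pm a\}$ rather than your distance-parity coloring, but both arguments are equivalent unpackings of the constraint $x_u=-x_v$ along edges.
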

\begin{Proof}
Note that if $x\in\mathbb{R}^{n}$, then 
$x^TLx= \frac{1}{d}\sum_{(u,v)\in E} \left( x_u - x_v \right)^{2}$ and also
\[
\lambda_1 =   \min_{x\in\mathbb{R}^{n}\diagdown\{0\}} 
              \frac{x^TLx}{x^Tx} 
          \ge 0  .
\]
Take $\vec{1} = \left(1,\ldots,1\right)$, in which case 
$\vec{1}^TL\vec{1}=0$, and so $0$ is the smallest eigenvalue, and $\vec{1}$ 
is one of the eigenvectors in the eigenspace of this eigenvalue. This proves part $1$.

We also have the following formulation of $\lambda_k$ by Courant-Fischer:
\[
\lambda_k = \min_{\substack{S\subseteq \mathbb{R}^n \\  dim(S)=k}}\max_{x\in S\diagdown\{\vec{0}\}}\frac{x^TAx}{x^Tx}
            \frac{\sum_{(u,v)\in E} \left(x_u-x_v\right)^{2}}{d \sum_u x_u^2}
\]

So, if $\lambda_k = 0$, then $\exists$ a $k$-dimensional subspace $S$ such 
that $\forall x \in S$, we have 
$\sum_{(u,v)\in E} \left(x_u-x_v\right)^{2}=0$.
But this means that $\forall x\in S$, we have $x_u=x_v \quad \forall$ edges $(u,v)\in E$ with 
positive weight, and so $x_u = x_v$, for any $u,v$ 
in the same connected component.
This means that $x \in S$ is constant within each connected component of $G$.
So, $k=\mbox{dim}(S) \le \Xi$, where $\Xi$ is the number of connected 
components.

Conversely, if $G$ has $\ge k$ connected components, then we can let $S$ be 
the space of vectors that are constant on each component; and this $S$ has dimension $\ge k$. Furthermore, $\forall x \in S$, we have that 
$\sum_{(u,v) \in E} \left( x_u-x_v \right)^{2} = 0$. Thus $\max_{x\in S_k\diagdown\{\vec{0}\}}\frac{x^TAx}{x^Tx}=0$ for any dimension $k$ subspace $S_k$ of the $S$ we choose. Then it is clear from Courant-Fischer $\lambda_k=0$ as any $S_k$ provides an upperbound.

Finally, to study $\lambda_n = 2$, note that 
\[
\lambda_n = \max_{x\in\mathbb{R}^{n}\diagdown\{\vec{0}\}} \frac{x^TLx}{x^Tx}
\]
This follows by using the variational characterization of the eigenvalues of 
$-L$ and noting that $-\lambda_n$ is the smallest eigenvalue of $-L$.
Then, observe that $\forall x \in \mathbb{R}^{n}$, we have that
\[
2 - \frac{x^TLx}{x^Tx} = \frac{\sum_{(u,v) \in E} \left( x_u + x_v \right)^{2}}{d\sum_u x_u^2}\geq 0 ,
\]
from which it follows that $\lambda_n \le 2$ (also $\lambda_k\le 2$ for all $k=2,\ldots,n$).

In addition, if $\lambda_n = 2$, then $\exists x\ne 0$ s.t. 
$\sum_{(u,v)\in E} \left(x_u + x_v \right)^{2} = 0$.
This means that $x_u = -x_v$, for all edges $(u,v)\in E$.

Let $v$ be a vertex s.t. $x_v = a \ne 0$.
Define sets
\begin{eqnarray*}
A &=& \{ v: x_v = a \} \\
B &=& \{ v: x_v = -a \} \\
R &=& \{ v: x_v \ne \pm a \}  .
\end{eqnarray*}
Then, the set $A \cup B$ is disconnected from the rest of the graph, since 
otherwise an edge with an endpoint in $A \cup B$ and the other endpoint in $R$ 
would give a positive contribution to 
$\sum_{ij} A_{ij} \left( x_i + x_j \right)^{2} $.
Also, every edge incident on $A$ has other endpoint in $B$, and vice versa.
So $A \cup B$ is a bipartite connected component (or a collection of connected 
components) of $G$, with bipartition $A,B$.
\end{Proof}

(Here is an aside.
That proof was from Trevisan; Spielman has a somewhat easier proof, but it 
is only for two components.
I need to decide how much I want to emphasize the possibility of using $k$ 
eigenvectors for soft partitioning---I'm leaning toward it, since several 
students asked about it---and if I do I should probably go with the version 
of here that mentions $k$ components.)
As an FYI, here is Spielman's proof of $\lambda_2=0$ iff $G$ is 
disconnected; or, equivalently, that 
\[
\lambda_2 >0 \Leftrightarrow G \mbox{ is connected}  .
\]
Start with proving the first direction: if $G$ is disconnected, then 
$\lambda_2 = 0$.
If $G$ is disconnected, then $G$ is the union of (at least) $2$ graphs, call 
then $G_1$ and $G_2$.
Then, we can renumber the vertces so that we can write the Laplacian of $G$ 
as
\[
L_G = \left( \begin{array}{cc} L_{G_1} & 0       \\
                               0       & L_{G_2} 
             \end{array} 
      \right)
\]
So, $L_G$ has at least $2$ orthogonal eigenvectors with eigenvalue $0$, 
i.e., $\left( \begin{array}{c} 1 \\ 0 \end{array} \right)$ and 
$\left( \begin{array}{c} 0 \\ 1 \end{array} \right)$, where the two vectors 
are given with the same renumbering as in the Laplacians.
Conversely, if $G$ is connected and $x$ is an eigenvector such that 
$L_Gx = 0 x$, then, $L_G x = 0$, and 
$x^TL_Gx = \sum_{(ij)\in E} \left(x_i - x_j\right)^{2} = 0$.
So, for all $(u,v)$ connected by an edge, we have that $x_u = x_u$.
Apply this iteratively, from which it follows that $x$ is a constant 
vector, i.e., $x_u = x_v$, forall $u,v$.
So, the eigenspace of eigenvalue $0$ has dimension $1$.
This is the end of the aside.)

\subsubsection{Relationship with previous Perron-Frobenius results}

Theorem~\ref{thm:perrof-frob-lapl} is an important result, and it has 
several important extensions and variations.
In particular, the ``$\lambda_2>0$ iff $G$ is connected'' result is a 
``hard'' connectivity statement.
We will be interested in how this result can be extended to a ``soft'' connectivity, e.g., ``$\lambda_2$ is far from $0$ iff the graph is 
well-connected,'' and the associated Cheeger Inequality. 
That will come soon enough.
First, however, we will describe how this result relates to the previous 
things we discussed in the last several weeks, e.g., to the Perron-Frobenius 
result which was formulated in terms of non-negative matrices.

To do so, here is a similar result, formulated slightly differently.

\begin{lemma}
\label{thm:perrof-frob-adj}
Let $A_G$ be the Adjacency Matrix of a $d$ regular graph, and recall that it 
has $n$ real eigenvalues $\alpha_1 \ge \cdots \ge \alpha_n$ and $n$ 
associated orthogonal eigenvectors $v_i$ s.t. $A v_i = \lambda_i v_i$.
Then,
\begin{itemize}
\item
$\alpha_1 = d$, with 
$v_{1} = \frac{1}{\sqrt{n}} = \left( \frac{1}{\sqrt{n}}, \ldots, \frac{1}{\sqrt{n}} \right) $.
\item
$\alpha_n \ge -d$.
\item
The graph is connected iff $\alpha_1 > \alpha_2$.
\item
The graph is bipartite iff $\alpha_1 = -\alpha_n$, i.e., if $\alpha_n = -d$.
\end{itemize}
\end{lemma}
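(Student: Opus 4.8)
The cleanest route is to reduce everything to Theorem~\ref{thm:perrof-frob-lapl} via the affine change of variables relating $A_G$ to the normalized Laplacian. For a $d$-regular graph, $L = I - \frac{1}{d}A_G$, so $A_G v = \alpha v$ if and only if $L v = (1 - \tfrac{\alpha}{d}) v$. The map $\alpha \mapsto 1 - \alpha/d$ is affine and strictly decreasing, so it sends the sorted list $\alpha_1 \ge \cdots \ge \alpha_n$ to the sorted list $\lambda_1 \le \cdots \le \lambda_n$ with $\lambda_i = 1 - \alpha_i/d$ \emph{index by index}, and the eigenspaces are literally identical (in particular $v_1$ is unchanged). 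So I would open the proof by recording this correspondence, and then read off each of the four bullets.

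\textbf{Deriving the four claims.} First, by Theorem~\ref{thm:perrof-frob-lapl}(1), $\lambda_1 = 0$ with eigenvector $\vec{1}/\sqrt{n}$; inverting, $\alpha_1 = d$ with the same eigenvector. (One can double-check directly that $A_G \vec{1} = d\vec{1}$ since the graph is $d$-regular.) Second, the proof of Theorem~\ref{thm:perrof-frob-lapl} shows $\lambda_k \le 2$ for all $k$, hence $\lambda_n \le 2$, which translates to $1 - \alpha_n/d \le 2$, i.e.\ $\alpha_n \ge -d$. Third, Theorem~\ref{thm:perrof-frob-lapl}(3) gives that $G$ is connected iff $\lambda_2 > 0$; since $\lambda_2 > 0 \iff 1 - \alpha_2/d > 0 \iff \alpha_2 < d = \alpha_1$, and $\alpha_2 \le \alpha_1$ always, this is exactly $\alpha_1 > \alpha_2$. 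Fourth, Theorem~\ref{thm:perrof-frob-lapl}(4) gives $\lambda_n = 2$ iff at least one connected component of $G$ is bipartite; and $\lambda_n = 2 \iff \alpha_n = -d \iff \alpha_n = -\alpha_1$. So the equivalence ``$\alpha_1 = -\alpha_n$'' corresponds precisely to ``$G$ has a bipartite component.''

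\textbf{Main obstacle.} The delicate point is the bipartite bullet: as literally stated the lemma asserts ``$G$ is bipartite iff $\alpha_1 = -\alpha_n$,'' whereas Theorem~\ref{thm:perrof-frob-lapl} only yields ``\emph{some} connected component is bipartite iff $\alpha_n = -d$.'' The two agree when $G$ is connected, but for a disconnected $G$ with one bipartite and one non-bipartite component the right-hand side holds while $G$ is not bipartite. So I would either (i) state and prove this bullet under the standing assumption that $G$ is connected (natural here, since the connectivity bullet precedes it), or (ii) keep the general statement but prove the direct version: unfold the quadratic form $x^T(dI + A_G)x = \sum_{(u,v)\in E}(x_u + x_v)^2 \ge 0$ (using $d$-regularity to rewrite $d\sum_u x_u^2 = \sum_{(u,v)\in E}(x_u^2 + x_v^2)$), conclude $\alpha_n \ge -d$, and note that $\alpha_n = -d$ forces an $x \ne 0$ with $x_u = -x_v$ along every edge, which exhibits a bipartition of the subgraph where $x$ is supported. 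Either way I would make the subtlety explicit rather than gloss it. An alternative purely Perron--Frobenius proof is possible --- $A_G$ is nonnegative, irreducible when $G$ is connected, and bipartite $G$ is exactly the period-$2$ case --- but that forces appeal to the periodicity generalization Theorem~\ref{thm:pf-generalization} for the bipartite bullet, so the Laplacian reduction above is lighter.
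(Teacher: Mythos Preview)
Your approach is correct and matches the paper's. The paper does not give a formal proof of this lemma; it merely states it and then, in the surrounding discussion, records the affine correspondence $\alpha_i = d - \lambda_i$ (using the unnormalized Laplacian $D-A$, equivalent to your $\lambda_i = 1 - \alpha_i/d$ for the normalized one) and remarks that $L$ ``inherits'' its eigenstructure from $A$, together with a one-line check that $A\vec{1} = d\vec{1}$. Your write-up is in fact more careful than the paper's sketch, and your identification of the bipartite subtlety (that Theorem~\ref{thm:perrof-frob-lapl} yields only ``some component is bipartite'') is a genuine point the paper glosses over.
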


Lemma~\ref{thm:perrof-frob-adj} has two changes, relative to 
Theorem~\ref{thm:perrof-frob-lapl}.
\begin{itemize}
\item
The first is that it is a statement about the Adjacency Matrix, rather than 
the Laplacian.
\item
The second is that it is stated in terms of a ``scale,'' i.e., the 
eigenvalues depend on $d$.
\end{itemize}
When we are dealing with degree-regular graphs, then $A$ is trivially 
related to $L=D-A$ (we will see this below) and also trivially related to 
$L=I-\frac{1}{d}A$ (since this just rescales the previous $L$ by $1/d$).
We could have removed the scale from Lemma~\ref{thm:perrof-frob-adj} by
multiplying the Adjacency Matrix by $1/d$ (in which case, e.g., the 
eigenvalues would be in $[-1,1]$, rather than $[-d,d]$), but it is more 
common to remove the scale from the Laplacian.
Indeed, if we had worked with $L=D-A$, then we would have had the scale 
there too; we will see that below.

(When we are dealing with degree-heterogeneous graphs, the situation is more 
complicated.
The reason is basically since the eigenvectors of the Adjacency matrix and
unnormalized Laplacian don't have to be related to the diagonal degree 
matrix $D$, which defined the weighted norm which relates the normalized and 
unnormalized Laplacian.
In the degree-heterogeneous case, working with the normalized Laplacian will 
be more natural due to connections with random walks.
That can be interpreted as working with an unnormalized Laplacian, with 
an appropriate degree-weighted norm, but then the trivial connection with 
the eigen-information of the Adjacency matrix is lost.
We will revisit this below too.)

In the above, $A\in\mathbb{R}^{n \times n}$ is the Adjacency Matrix of an 
undirected graph $G=(V,E)$.
This will provide the most direct connection with the Perro-Frobenius results
we talked about last week.
Here are a few questions about the Adjacency Matrix.
\begin{itemize}
\item
Question: Is it symmetric? 
Answer: Yes, so there are real eigenvalues and a full set of orthonormal 
eigenvectors.
\item
Question: Is it positive?
Answer: No, unless it is a complete graph.
In the weighted case, it could be positive, if there were all the edges but 
they had different weights; but in general it is not positive, since some 
edges might be missing.
\item
Question: Is it nonnegative?
Answer: Yes.
\item
Question: Is it irreducible?
Answer: If no, i.e., if it is reducible, then 
\[
A = \left( \begin{array}{cc} A_{11} & A_{12}       \\
                                  0 & A_{22} 
           \end{array} 
    \right)
\]
must also have $A_{12}=0$ by symmetry, meaning that the graph is disconnected, in which case 
we should think of it as two graphs.
So, if the graph is connected then it is irreducible.
\item
Question: Is is aperiodic?
Answer: If no, then since it must be symmetric, and so it must look like
\[
A = \left( \begin{array}{cc}          0 & A_{12}       \\
                             A_{12}^{T} & 0
           \end{array} 
    \right) ,
\]
meaning that it is period equal to $2$, and so the ``second'' large 
eigenvalue, i.e., the one on the complex circle equal to a root of unity, 
is real and equal to $-1$.
\end{itemize}

How do we know that the trivial eigenvector is uniform?
Well, we know that there is only one all-positive eigenvector.
Let's try the all-ones vector $\vec{1}$.
In this case, we get 
\[
A \vec{1} = d \vec{1} ,
\]
which means that $\alpha_1 = d$ and 
$v_{1} = \frac{\vec{1}}{\sqrt{n}} = \left( \frac{1}{\sqrt{n}}, \ldots, \frac{1}{\sqrt{n}} \right) $.
So, the graph is connected if $\alpha_1 > \alpha_2$, 
and the graph is bipartite if $\alpha_1 = -\alpha_n$.

For the Laplacian $L=D-A$, there exists a close relationship between the 
spectrum of $A$ and $L$. 
(Recall, we are still considering the $d$-regular case.)
To see this, let $d = \alpha_1 \ge \ldots \alpha_n$ be the eigenvalues of 
$A$ with associated orthonormal eigenvectors $v_1,\ldots,v_n$.
(We know they are orthonormal, since $A$ is symmetric.)
In addition, let $ 0 \le \lambda_1 \le \cdots \le \lambda_n$ be the 
eigenvalues of $L$.
(We know they are all real and in fact all positive from the above 
alternative definition.)
Then, 
\[
\alpha_i = d-\lambda_i
\]
and
\[
A_G v_i = \left(dI-L_G\right) v_i = \left(d-\lambda_i\right)v_i  .
\]
So, $L$ ``inherits'' eigen-stuff from $A$.
So, even though $L$ isn't positive or non-negative, we get Perron-Frobenius 
style results for it, in addition to the results we get for it since it is a 
symmetric matrix.
In addition, if $L \rightarrow D^{-1/2}L D^{-1/2}$, then the eigenvalues of 
$L$ become in $[0,2]$, and so on.
This can be viewed as changing variables $y \leftarrow D^{-1/2}x$, and then 
defining Laplacian (above) and the Rayleigh quotient in the degree-weighted 
dot product.
(So, many of the results we will discuss today and next time go through to 
degree-heterogeneous graphs, for this reason.  
But some of the results, in particular the result having to do with expanders
being least like low-dimensional Euclidean space, do not.)

\subsection{Statement of the basic Cheeger Inequality}

We know the $\lambda_2$ captures a ``hard'' notion of connectivity, since 
the above result in Theorem~\ref{thm:perrof-frob-lapl} states that 
$\lambda_2 = 0 \Leftrightarrow G \mbox{ is disconnected}$.
Can we get a ``soft'' version of this?

To do so, let's go back to $d$-regular graphs, and recall the definition.

\begin{definition}
Let $G=(V,E)$ be a $d$-regular graph, and let $\left(S,\bar{S}\right)$ be a 
cut, i.e., a partition of the vertex set.
Then, 
\begin{itemize}
\item
the \emph{sparsity of $S$} is: 
$\sigma\left(S\right) = \frac{E\left(S,\bar{S}\right)}{\frac{d}{|V|}|S|\cdot|\bar{S}|}$
\item
the \emph{edge expansion of $S$} is:
$\phi\left(S\right) = \frac{E\left(S,\bar{S}\right)}{d|S|}$
\end{itemize}
\end{definition}
This definition holds for sets of nodes $S \subset V$, and we can extend 
them to hold for the graph $G$.
\begin{definition}
Let $G=(V,E)$ be a $d$-regular graph.
Then, 
\begin{itemize}
\item
the \emph{sparsity of $G$} is: 
$\sigma\left(G\right) = \min_{S \subset V: S \neq 0, S \neq V} \sigma\left(S\right)  $.
\item
the \emph{edge expansion of $G$} is:
$\phi\left(G\right) = \min_{S \subset V: |S| \le \frac{|V|}{2}} \phi\left(S\right)  $.
\end{itemize}
\end{definition}

For $d$-regular graphs, the graph partitioning problem is to find the 
sparsity or edge expansion of $G$.
Note that this means finding a number, i.e., the value of the objective 
function at the optimum, but people often want to find the corresponding 
set of nodes, and algorithms can do that, but the ``quality of 
approximation'' is that number.

\textbf{Fact.}
For all $d$ regular graphs $G$, and for all $S \subset V$ s.t. $|S| \le \frac{|V|}{2}$, 
we have that 
\[
\frac{1}{2}\sigma\left(S\right) \le \phi\left(S\right) \le \sigma\left(S\right) .
\]
Thus, since $\sigma\left(S\right) = \sigma\left(\bar{S}\right)$, we have that
\[
\frac{1}{2}\sigma\left(G\right) \le \phi\left(G\right) \le \sigma\left(G\right) .
\]
BTW, this is what we mean when we say that these two objectives are 
``equivalent'' or ``almost equivalent,'' since that factor of $2$ ``doesn't 
matter.'' 
By this we mean:
\begin{itemize}
\item
If one is interested in theory, then this factor of $2$ is well below the 
guidance that theory can provide.
That is, this objective is intractable to compute exactly, and the only 
approximation algorithms give quadratic or logarithmic (or square root of 
log) approximations.
If they could provide $1\pm\epsilon$ approximations, then this would matter, 
but they can't and they are much coarser than this factor of $2$.
\item
If one is interested in practice, then we can often do much better than this 
factor-of-$2$ improvement with various local improvement heuristics.
\item
In many cases, people actually write one and optimize the other.
\item
Typically in theory one is most interested in the number, i.e., the value of
the objective, and so we are ok by the above comment.
On the other hand, typically in practice, one is interested in using that 
vector to do things, e.g., make statements that the two clusters are close; 
but that requires stronger assumptions to say nontrivial about the actual 
cluster.
\end{itemize}

Given all that, here is the basic statement of Cheeger's inequality.

\begin{theorem}[Cheeger's Inequality]
Recall that
\[
\lambda_2 = \min_{x:x\perp\vec{1}}\max_{x:x\perp\vec{1}}\frac{x^TLx}{x^Tx}
\]
where $L=I-\frac{1}{d}A$.
Then, 
\[
\frac{\lambda_2}{2} \le \phi(G) \le \sqrt{ 2 \lambda_2 }  .
\]
\end{theorem}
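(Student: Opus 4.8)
The plan is to prove the two bounds separately. The lower bound $\lambda_2/2 \le \phi(G)$ is the ``easy'' direction, obtained by turning a sparse cut into a test vector for the Rayleigh quotient. Given any $S \subset V$ with $|S| \le |V|/2$, set $x = \mathbf{1}_S - \frac{|S|}{n}\vec{1}$, where $n = |V|$; then $x \perp \vec{1}$, and since $L\vec{1}=0$ we have $x^T L x = \mathbf{1}_S^T L \mathbf{1}_S = \frac{1}{d}E(S,\bar S)$, while $x^T x = |S| - \frac{|S|^2}{n} = \frac{|S||\bar S|}{n} \ge \frac{|S|}{2}$. The variational characterization $\lambda_2 = \min_{x\perp\vec{1},\,x\ne 0} x^TLx/x^Tx$ then gives $\lambda_2 \le \frac{E(S,\bar S)/d}{|S||\bar S|/n} = \sigma(S) \le 2\phi(S)$; minimizing over $S$ yields $\lambda_2 \le 2\phi(G)$.

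For the upper bound $\phi(G) \le \sqrt{2\lambda_2}$ --- the analysis of sweep-cut rounding of the Fiedler vector --- I would first reduce to a nonnegative vector with small support. Let $v$ be a second eigenvector, so $v\perp\vec{1}$ and $v^TLv = \lambda_2\|v\|^2$, and let $m$ be a median of the entries of $v$. Put $z = v - m\vec{1}$: then $z^TLz = v^TLv$ but $\|z\|^2 = \|v\|^2 + m^2 n \ge \|v\|^2$, so $z^TLz/z^Tz \le \lambda_2$, and by the choice of median $z$ has at most $n/2$ strictly positive and at most $n/2$ strictly negative entries. Write $z = z_+ - z_-$ with $z_\pm \ge 0$ of disjoint support. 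For every edge a case analysis on the signs of $z_i,z_j$ gives $(z_i-z_j)^2 \ge (z_{+,i}-z_{+,j})^2 + (z_{-,i}-z_{-,j})^2$, hence $z^TLz \ge z_+^TLz_+ + z_-^TLz_-$, while $\|z\|^2 = \|z_+\|^2 + \|z_-\|^2$; the mediant inequality (the ratio $\frac{a+c}{b+d}$ lies between $a/b$ and $c/d$ for positive $b,d$) then says one of $z_+,z_-$ --- call it $y$ --- satisfies $\delta := y^TLy/y^Ty \le \lambda_2$, $y \ge 0$, $y \ne 0$, and $|\mathrm{supp}(y)| \le n/2$.

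Now the sweep cut: for $t$ drawn uniformly from $(0, y_{\max}^2)$ set $S_t = \{i : y_i^2 > t\}$, so $1 \le |S_t| \le |\mathrm{supp}(y)| \le n/2$. Then $\Pr[i \in S_t] = y_i^2/y_{\max}^2$ and $\Pr[(i,j)\text{ is cut by }S_t] = |y_i^2-y_j^2|/y_{\max}^2$, so the ratio of expectations is
$$\frac{\mathbb{E}\,E(S_t,\bar S_t)}{\mathbb{E}\,d|S_t|} = \frac{\sum_{(i,j)\in E}|y_i^2-y_j^2|}{d\sum_i y_i^2}.$$
Writing $|y_i^2-y_j^2| = |y_i-y_j|\,|y_i+y_j|$ and applying Cauchy--Schwarz bounds the numerator by $\big(\sum_{(i,j)\in E}(y_i-y_j)^2\big)^{1/2}\big(\sum_{(i,j)\in E}(y_i+y_j)^2\big)^{1/2}$, where the first factor equals $\big(d\,y^TLy\big)^{1/2} = \big(d\delta\sum_i y_i^2\big)^{1/2}$ and the second is at most $\big(2\sum_{(i,j)\in E}(y_i^2+y_j^2)\big)^{1/2} = \big(2d\sum_i y_i^2\big)^{1/2}$ by $d$-regularity. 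Hence the ratio of expectations is at most $\sqrt{2\delta} \le \sqrt{2\lambda_2}$, so some realized $t$ gives $\phi(S_t) = E(S_t,\bar S_t)/(d|S_t|) \le \sqrt{2\lambda_2}$, and $|S_t| \le n/2$ gives $\phi(G) \le \sqrt{2\lambda_2}$.

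The routine parts are the easy direction and the Cauchy--Schwarz estimate; the two steps I expect to be the crux are the reduction in the second paragraph (the median shift is essential --- a shift by the mean would not control $\|z\|^2$ in the right direction --- and the positive/negative splitting rests on the convexity-type edge inequality) and the realization in the third paragraph that thresholding on $y_i^2$ rather than on $y_i$ is exactly what makes the averaging argument dovetail with Cauchy--Schwarz, with $d$-regularity supplying the bound on $\sum_{(i,j)\in E}(y_i+y_j)^2$.
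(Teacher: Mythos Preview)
Your proof is correct and follows essentially the same strategy as the paper: the easy direction via a test vector (the paper phrases it as a relaxation, but remarks that one could equivalently ``just provide a test vector''), and the hard direction via median shift, sweep cut, and Cauchy--Schwarz.

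The one organizational difference worth noting is in the hard direction. After the median shift, you explicitly split $z = z_+ - z_-$ and use the mediant inequality to pass to a nonnegative vector $y$ of small support before rounding; your threshold is then on $y_i^2$ with $t$ uniform, and the edge-cut probability factors \emph{exactly} as $|y_i-y_j|(y_i+y_j)$. The paper instead keeps the signed vector and rounds directly by drawing $t$ with density $2|t|$ on $[x_1,x_n]$, which handles both signs at once; the price is that the edge-cut probability is only \emph{bounded} by $|x_i-x_j|(|x_i|+|x_j|)$ (with an inequality needed in the different-sign case). Both routes are standard and lead to the same Cauchy--Schwarz endgame; yours isolates the sign issue up front, the paper's absorbs it into the sampling density. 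One small remark: your aside that ``a shift by the mean would not control $\|z\|^2$ in the right direction'' is slightly off --- since $v\perp\vec{1}$, the mean of $v$ is zero, so any shift increases $\|z\|^2$; the median is essential not for the norm inequality but for the support-size bound $|\mathrm{supp}(z_\pm)|\le n/2$.
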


\subsection{Comments on the basic Cheeger Inequality}

Here are some notes about the basic Cheeter Inequality.
\begin{itemize}
\item
This result ``sandwiches'' $\lambda_2$ and $\phi$ close to each other on 
both sides.
Clearly, from this result it immediatly follows that
\[
\frac{\phi(G)^{2}}{2} \le \lambda_2 \le 2 \phi(G)   .
\]
\item
Later, we will see that $\phi(G)$ is large, i.e., is bounded away from $0$, 
if the graph is well-connected.  
In addition, other related things, e.g., that random walks will mix 
rapidly, will also hold.
So, this result says that $\lambda_2$ is large if the graph is 
well-connected and small if the graph is not well-connected.
So, it is a soft version of the hard connectivity statement that we had 
before.
\item
The inequality $\frac{\lambda_2}{2} \le \phi(G)$ is sometimes known as the 
``easy direction'' of Cheeger's Inequality.
The reason is that the proof is more straightforward and boils down to 
showing one of two related things:
that you can present a test vector, which is roughly the indicator vector 
for a set of interest, and since $\lambda_2$ is a min of a Rayleigh 
quotient, then it is lower than the Rayleigh quotient of the test vector; 
or that the Rayleigh quotient is a relaxation of the sparsest cut problem, 
i.e., it is minimizing the same objective over a larger set.
\item
The inequality $\phi(G) \le \sqrt{ 2 \lambda_2 }$ is sometimes known as the 
``hard direction'' of Cheeger's Inequality.
The reason is that the proof is constructive and is basically a vanilla 
spectral partitioning algorithm.
Again, there are two related proofs for the ``hard'' direction of 
Cheeger.
One way uses a notion of inequalities over graphs; 
the other way can be formulated as a randomized rounding argument.
\item
Before dismissing the easy direction, note that 
it gives a polynomial-time certificate that a graph is expander-like,
\emph{i.e.}, that $\forall$ cuts (and there are $2^n$ of them to check) at
least a certain number of edges cross that cut.
(So the fact that is holds is actually pretty strong---we have a 
polynomial-time computable certificate of having no sparse cuts, which 
you can imagine is of interest since the naive way to check is to check
everything.)
\end{itemize}

Before proceeding, a question came up in the class about whether the upper 
or lower bound was more interesting or useful in applications.
It really depend on on what you want.
\begin{itemize}
\item
For example, if you are in a networking application where you are routing 
bits and you are interested in making sure that your network is 
well-connected, then you are most interested in the easy direction, since 
that gives you a quick-to-compute certificate that the graph is 
well-connected and that your bits won't get stuck in a bottleneck.
\item
Alternatively, if you want to run a divide and conquer algorithm or you want
to do some sort of statistical inference, both of which might require 
showing  that you have clusters in your graph that are well-separated from 
the rest of the data, then you might be more interested in the hard 
direction which provides an upper bound on the intractable-to-compute 
expansion and so is a certificate that there are some well-separated 
clusters.
\end{itemize}

\newpage

\section{%
(02/12/2015): 
Spectral Methods for Partitioning Graphs (2 of 2):
Proof of Cheeger's Inequality}

Reading for today.
\begin{compactitem}
\item
Same as last class.
\end{compactitem}

Here, we will prove the easy direction and the hard direction of 
Cheeger's Inequality.
Recall that what we want to show is that
\[
\frac{\lambda_2}{2} \le \phi(G) \le \sqrt{ 2 \lambda_2 }  .
\]

\subsection{Proof of the easy direction of Cheeger's Inequality}

For the easy direction, recall that what we want to prove is that 
\[
\lambda_2 \le \sigma(G) \le 2 \phi(G)  .
\]
To do this, we will show that the Rayleigh quotient is a relaxation of the 
sparsest cut problem.

Let's start by restating the sparsest cut problem:
\begin{eqnarray}
\nonumber
\sigma(G) &=& \min_{S \subset V : S \neq 0, S \neq V} \frac{E\left(S,\bar{S}\right)}{\frac{d}{|V|}|S|\cdot|\bar{S}|} \\
\nonumber
          &=& \min_{ x\in\{0,1\}^{n} \diagdown \{\vec{0},\vec{1} \} } \frac{ \sum_{{\{u,v\}}\in E} | x_u - x_v |   }{ \frac{d}{n}\sum_{\substack{ \{u,v\} \in V \times V }} | x_u - x_v |    } \\
\label{eqn:sparsity-quadratic}
          &=& \min_{ x\in\{0,1\}^{n} \diagdown \{\vec{0},\vec{1} \} } \frac{ \sum_{{\{u,v\}}\in E} | x_u - x_v |^2 }{ \frac{d}{n}\sum_{\substack{ \{u,v\} \in V \times V}}  | x_u - x_v |^2  }  ,
\end{eqnarray}
where the last equality follows since $x_u$ and $x_v$ are Boolean values, 
which means that $|x_u-x_v|$ is also a Boolean value.

Next, recall that 
\begin{equation}
\lambda_2 = \min_{x\in\mathbb{R}^{n}\diagdown\{\vec{0}\},x\perp\vec{1} } \frac{ \sum_{\{u,v\}\in E} |x_u-x_v|^2 }{ d\sum_v x_v^2 }  .
\label{eqn:lambda-quadratic-almost}
\end{equation}
Given that, we claim the following.
\begin{claim}
\begin{equation}
\lambda_2 = \min_{x\in\mathbb{R}^{n}\diagdown Span\{\vec{1}\} } \frac{ \sum_{\{u,v\}\in E} |x_u-x_v|^2 }{ \frac{d}{n} \sum_{\{u,v\}} |x_u-x_v|^2 }  .
\label{eqn:lambda-quadratic}
\end{equation}
\end{claim}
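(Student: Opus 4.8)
The plan is to derive (\ref{eqn:lambda-quadratic}) from the characterization (\ref{eqn:lambda-quadratic-almost}) by rewriting its denominator; the only real content is one algebraic identity together with a projection argument, and I expect no serious obstacle beyond bookkeeping about unordered pairs and the distinction between $x \perp \vec{1}$ and $x \notin \mathrm{Span}\{\vec{1}\}$.

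First I would record the ``pair-sum'' identity: for every $x \in \mathbb{R}^n$,
\[
\sum_{\{u,v\}} |x_u - x_v|^2 = n \sum_v x_v^2 - \Big( \sum_v x_v \Big)^2 ,
\]
where the left-hand sum runs over unordered pairs (consistent with the convention under which (\ref{eqn:sparsity-quadratic}) is correct, so the diagonal terms $u=v$ vanish). This follows by expanding $(x_u - x_v)^2 = x_u^2 - 2 x_u x_v + x_v^2$ and summing; concretely $\sum_{\{u,v\}} = \tfrac12 \sum_{u,v}$, and $\tfrac12(2n\|x\|_2^2 - 2(\vec{1}^T x)^2)$ gives the stated expression. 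Multiplying through by $d/n$, the denominator on the right-hand side of (\ref{eqn:lambda-quadratic}) becomes $d \sum_v x_v^2 - \tfrac{d}{n}(\sum_v x_v)^2$. Two consequences: (i) both numerator and denominator of the right-hand-side quotient are unchanged under $x \mapsto x + c\vec{1}$, so the quotient is well defined on $\mathbb{R}^n \setminus \mathrm{Span}\{\vec{1}\}$; and (ii) whenever $x \perp \vec{1}$, i.e.\ $\sum_v x_v = 0$, this denominator equals exactly $d\sum_v x_v^2$, the denominator appearing in (\ref{eqn:lambda-quadratic-almost}), while the numerator $\sum_{\{u,v\}\in E}|x_u-x_v|^2$ is literally the same (and likewise shift-invariant).

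Given (i) and (ii) I would finish by matching the two feasible sets. For the direction ``$\ge$'' in (\ref{eqn:lambda-quadratic}): the feasible set $\{x \ne \vec{0} : x \perp \vec{1}\}$ of (\ref{eqn:lambda-quadratic-almost}) is contained in $\mathbb{R}^n \setminus \mathrm{Span}\{\vec{1}\}$, and on it the two quotients agree by (ii), so the infimum of the right-hand side of (\ref{eqn:lambda-quadratic}) is at most $\lambda_2$. For the direction ``$\le$'': given any $x \notin \mathrm{Span}\{\vec{1}\}$, set $x_0 = x - \tfrac{1}{n}(\sum_v x_v)\vec{1}$; then $x_0 \perp \vec{1}$ and $x_0 \ne \vec{0}$ (otherwise $x \in \mathrm{Span}\{\vec{1}\}$), so by (i) the right-hand-side quotient at $x$ equals its value at $x_0$, which by (ii) is one of the quotients minimized in (\ref{eqn:lambda-quadratic-almost}) and hence is $\ge \lambda_2$. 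Combining the two directions yields the claimed equality; the main thing to be careful about is simply keeping the factor from the unordered-pair convention straight so that the $\perp \vec{1}$ restriction really does collapse the new denominator back to $d\|x\|_2^2$.
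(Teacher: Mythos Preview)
Your proposal is correct and follows essentially the same approach as the paper: both use the identity $\sum_{\{u,v\}}(x_u-x_v)^2 = n\|x\|_2^2 - (\vec{1}^T x)^2$ to match the two denominators on the subspace $x\perp\vec{1}$, and then use shift-invariance plus the projection $x\mapsto x - \tfrac{1}{n}(\vec{1}^T x)\vec{1}$ to pass between the constraint sets. Your write-up is slightly more explicit about the two directions of the equality, but the argument is the same.
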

\begin{Proof}
Note that
\begin{eqnarray*}
\sum_{u,v} |x_u-x_v|^2 &=& \sum_{uv} x_u^2 + \sum_{uv} x_v^2 - 2 \sum_{uv} x_u x_v \\
                       &=& 2n \sum_v x_v^2 -2 \left( \sum_v x_v \right)^2 .
\end{eqnarray*}
Note that for all $x\in\mathbb{R}^{n}\diagdown\{\vec{0}\}$ s.t. $x \perp \vec{1}$, we have that $\sum_vx_v=0$, so
\begin{eqnarray*}
\sum_v x_v^2 &=& \frac{1}{2n} \sum_{u,v} |x_u - x_v |^2 \\
             &=& \frac{1}{n} \sum_{ \{u,v\} } |x_u-x_v |^2  ,
\end{eqnarray*}
where the first sum is over unordered pairs $u,v$ of nodes, and where the second sum 
of over ordered pairs $\{u,v\}$ (i.e. we double count $(u,v)$ and $(v,u)$ in first sum, but not in second sum).
So, 
\[
\min_{x\in\mathbb{R}^{n}\diagdown\{\vec{0}\},x\perp\vec{1} } \frac{ \sum_{\{u,v\}\in E} |x_u-x_v|^2 }{ d\sum_v x_v^2 } =
\min_{x\in\mathbb{R}^{n}\diagdown\{0\},x\perp\vec{1} } \frac{ \sum_{\{u,v\}\in E} |x_u-x_v|^2 }{ \frac{d}{n} \sum_{\{u,v\}} |x_u-x_v|^2 }  .
\]
Next, we need to remove the part along the all-ones vector, since the claim 
doesn't have that.

To do so, let's choose an $x^{*}$ that maximizes 
Eqn.~(\ref{eqn:lambda-quadratic}).
Observe the following.
If we shift every coordinate of that vector $x^{*}$ by the same constant, 
then we obtain another optimal solution, since the shift will cancel in all 
the expressions in the numerator and denominator.
(This works for any shift, and we will choose a particular shift to get what
we want.)

So, we can define $x^{\prime}$ s.t. $x_{v}^{\prime} = x_v - \frac{1}{n} \sum_u x_u$, 
and note that the entries of $x^{\prime}$ sum to zero.
Thus $x^{\prime} \perp \vec{1}$. Note we need $x\not\in Span(\vec{1})$ to have $x^{\prime}\neq \vec{0}$
So, 
\[
\min_{x\in\mathbb{R}^{n}\diagdown\{0\},x\perp\vec{1} } \frac{ \sum_{\{u,v\}\in E} |x_u-x_v|^2 }{ \frac{d}{n} \sum_{\{u,v\}} |x_u-x_v|^2 } = 
\min_{x\in\mathbb{R}^{n}\diagdown Span\{\vec{1}\} } \frac{ \sum_{\{u,v\}\in E} |x_u-x_v|^2 }{ \frac{d}{n} \sum_{\{u,v\}} |x_u-x_v|^2 }  .
\]
This establishes the claim.
\end{Proof}

So, from Eqn.~(\ref{eqn:sparsity-quadratic}) and 
Eqn.~(\ref{eqn:lambda-quadratic}), it follows that $\lambda$ is a
continuous relaxation of $\sigma(G)$, and so 
$\lambda_2 \le \sigma(G)$, from which the easy direction of Cheeger's 
Inequality follows.

\subsection{Some additional comments}

Here are some things to note.
\begin{itemize}
\item
There is nothing required or forced on us about the use of this relaxation, 
and in fact there are other relaxations.  
We will get to them later.
Some of them lead to traditional algorithms, and one of them provides the 
basis for flow-based graph partitioning algorithms.
\item
Informally, this relaxation says that we can replace $x \in \{0,1\}^{n}$ or
$x\in \{ -1,1\}^{n}$ constraint with the constraint that $x$ satisfies this 
``on average.''
By that, we mean that $x$ in the relaxed problem is on the unit ball, but any 
particular value of $x$ might get distorted a lot, relative to its 
``original'' $\{0,1\}$ or $\{ -1,1\}$ value.
In particular, note that this is a very ``global'' constraint.
As we will see, that has some good features, e.g., many of the well-known 
good statistical properties; but, as we will see, it has the consequence 
that any particular local pairwise metric information gets distorted, and 
thus it doesn't lead to the usual worst-case bounds that are given only in 
terms of $n$ the size of the graph (that are popular in TCS).
\item
While providing the ``easy'' direction, this lemma gives a quick low-degree 
polynomial time (whatever time it takes to compute an exact or approximate 
leading nonrtivial eigenvector) certificate that a given graph is 
expander-like, in the sense that for all cuts, at least a certain number of 
edges cross it.
\item
There has been a lot of work in recent years using approaches like this one; 
I don't know the exact history in terms of who did it first, but it was
explained by Trevisan very cleanly in course notes he has had, and this and 
the proof of the other direction is taken from that.
In particular, he describes the randomized rounding method for the other 
direction.
Spielman has slightly different proofs.
These proofs here are a combination of results from them.
\item
We could have proven this ``easy direction'' by just providing a test vector.
E.g., a test vector that is related to an indicator vector or a partition.
We went with this approach to highlight similarities and differences with 
flow-based methods in a week or two.
\item
The other reason to describe $\lambda_2$ as a relaxation of $\sigma(G)$ is 
that the proof of the other direction that 
$\phi(G) \le \sqrt{ 2 \lambda_2 }$ can be structured as a randomized 
rounding algorithm, i.e., given a real-valued solution to 
Eqn.~(\ref{eqn:lambda-quadratic}), find a similarly good solution to 
Eqn.~(\ref{eqn:sparsity-quadratic}).
This is what we will do next time.
\end{itemize}

\subsection{A more general result for the hard direction}

For the hard direction, recall that what we want to prove is that 
\[
\phi(G) \le \sqrt{ 2 \lambda_2 }  .
\]
Here, we will state---and then we will prove---a more general result.
For the proof, we will use the randomized rounding method.
The proof of this result is algorithmic/constructive, and it can be seen 
as an analysis for the following algorithm.

\textsc{VanillaSpectralPartitioning}.
Given as input a graph $G=(V,E)$, a vector $x\in\mathbb{R}^{n}$, 
\begin{enumerate}
\item
Sort the vertices of $V$ in non-decreasing order of values of entries of 
$x$, i.e., let $V = \{ v_1,\cdots,v_n\}$, where $x_{v_1} \le \cdots\le x_{v_n}$.
\item
Let $i \in [n-1]$ be s.t. 
\[
\max \{ 
        \phi \left(\left\{ v_1    ,\cdots,v_i \right\}\right) , 
        \phi \left(\left\{ v_{i+1},\cdots,v_n \right\}\right) 
      \}, 
\]
is minimal.
\item
Output $S = \{ v_1,\ldots,v_i\}$ and $\bar{S} = \{ v_{i+1},\ldots v_n\}$.
\end{enumerate}
This is called a ``sweep cut,'' since it involves sweeping over the 
input vector and looking at $n$ (rather than $2^n$ partitions) to find a 
good partition.

We have formulated this algorithm as taking as input a graph $G$ and any 
vector $x$.
You might be more familiar with the version that takes as input a graph 
$G$ that first compute the leading nontrivial eigenvector and then 
performs a sweep cut.
We have formulated it the way we did for two reasons.
\begin{itemize}
\item
We will want to separate out the spectral partitioning question from the 
question about how to compute the leading eigenvector or some approximation
to it.
For example, say that we don't run an iteration ``forever,'' i.e., to the 
asymptotic state to get an ``exact'' answer to machine precision.
Then we have a vector that only approximates the leading nontrivial 
eigenvector. 
Can we still use that vector and get nontrivial results?
There are several interesting results here, and we will get back to this.
\item
We will want to separate out the issue of global eigenvector to something
about the structure of the relaxation.
We will see that we can use this result to get local and locally-biased 
partitions, using both optimization and random walk based idea.
In particular, we will use this to generalize to locally-biased spectral 
methods.
\end{itemize}

So, establishing the following lemma is sufficient for what we want.
\begin{lemma}
Let $G=(V,E)$ be a $d$-regular graph, and let $x\in\mathbb{R}^{n}$ be s.t. 
$x\perp\vec{1}$.
Define 
\[
R(x) = \frac{ \sum_{ \{u,v \} \in E } |x_u-x_v |^2 }{ d \sum_v x_v^2 }
\]
and let $S$ be the side with less than $|V|/2$ nodes of the output cut of \textsc{VanillaSpectralPartitioning}.
Then, 
\[
\phi(S) \le \sqrt{ 2 R(x) }
\]
\end{lemma}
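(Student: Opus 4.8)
The plan is to prove this more general lemma — which contains the hard direction of Cheeger's inequality, since for $x$ the second eigenvector of $L = I - \tfrac1d A$ one has $R(x) = \lambda_2$ — by a randomized ``threshold sweep'' rounding of the vector $x$, carried out after two normalizing reductions. First I would \emph{center} $x$: let $m$ be a median of its entries and set $y = x - m\vec 1$, so that at most $n/2$ of the entries of $y$ are positive and at most $n/2$ are negative. Because $x \perp \vec 1$ means $\sum_v x_v = 0$, we get $\sum_v y_v^2 = \sum_v x_v^2 + nm^2 \ge \sum_v x_v^2$, while $|y_u - y_v| = |x_u - x_v|$ on every edge, so $R(y) \le R(x)$; moreover the shift does not change the sorted order of the entries, so the sweep cuts that \textsc{VanillaSpectralPartitioning} examines on $y$ are exactly those it examines on $x$.

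Second I would pass to a \emph{one-sided} vector. Write $y = y^+ - y^-$ with $y^\pm$ the coordinatewise positive and negative parts. A short edge-by-edge check — the only nontrivial case being an edge with endpoints of opposite sign, where one uses $-2 y_u y_v \ge 0$ — gives $\sum_{\{u,v\}\in E}\bigl(|y^+_u - y^+_v|^2 + |y^-_u - y^-_v|^2\bigr) \le \sum_{\{u,v\}\in E}|y_u - y_v|^2$, together with $\|y^+\|_2^2 + \|y^-\|_2^2 = \|y\|_2^2$. By the mediant inequality $\min(\tfrac ab,\tfrac cd) \le \tfrac{a+c}{b+d}$, one of $z \in \{y^+, y^-\}$ satisfies $R(z) \le R(y) \le R(x)$; fix it. Now $z \ge 0$, its support has at most $n/2$ vertices, and since $z$ is a monotone function of $x$ on that support, every superlevel set $\{v : z_v \ge s\}$ with $s > 0$ is one of the sweep cuts considered by the algorithm and has at most $n/2$ vertices. (Using the mediant inequality here, rather than the cruder ``WLOG $\|y^+\| \ge \|y^-\|$'', is exactly what is needed to land the stated constant $\sqrt 2$ instead of a weaker factor.)

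The heart of the argument is the rounding step. Rescale so $\max_v z_v = 1$ (this changes neither $R(z)$ nor the available sweep cuts), pick $t$ uniformly in $(0,1]$, and let $S_t = \{v : z_v^2 \ge t\}$, so that $\Pr[v \in S_t] = z_v^2$ and hence $\mathbb E|S_t| = \sum_v z_v^2$. Since exactly one endpoint of an edge $\{u,v\}$ lies in $S_t$ with probability $|z_u^2 - z_v^2|$,
\[
\mathbb E\bigl[E(S_t,\bar S_t)\bigr] = \sum_{\{u,v\}\in E}|z_u - z_v|\,(z_u + z_v)
\le \Bigl(\sum_{\{u,v\}\in E}(z_u-z_v)^2\Bigr)^{1/2}\Bigl(\sum_{\{u,v\}\in E}(z_u+z_v)^2\Bigr)^{1/2}
\]
by Cauchy--Schwarz, and $\sum_{\{u,v\}\in E}(z_u+z_v)^2 \le 2\sum_{\{u,v\}\in E}(z_u^2+z_v^2) = 2d\sum_v z_v^2$ using $d$-regularity. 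The elementary averaging fact — if $A_t \ge 0$ and $B_t \ge 1$ are random, then some outcome has $A_t/B_t \le \mathbb E[A_t]/\mathbb E[B_t]$ (apply it with $B_t = |S_t| \ge 1$) — then yields a threshold $t$ with $\phi(S_t) = E(S_t,\bar S_t)/(d|S_t|) \le \sqrt{2 R(z)} \le \sqrt{2 R(x)}$.

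To finish, I would observe that for any cut $(A,\bar A)$ with $|A| \le |\bar A|$ one has $\phi(\bar A) = E(A,\bar A)/(d|\bar A|) \le \phi(A)$, so the quantity minimized in step 2 of \textsc{VanillaSpectralPartitioning} is precisely $\phi$ of the smaller side of that sweep cut; since $S_t$ is a sweep cut with $|S_t| \le |V|/2$, the algorithm's output $S$ satisfies $\phi(S) \le \phi(S_t) \le \sqrt{2R(x)}$, as claimed. I expect the rounding step — the probability computation, the Cauchy--Schwarz estimate with the regularity identity, and the averaging lemma — to be the real content; the two reductions are routine, and the only genuinely delicate point is the bookkeeping of constants (especially the mediant-inequality trick in the second reduction) needed to obtain the sharp $\sqrt 2$.
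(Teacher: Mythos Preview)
Your proof is correct and follows essentially the same randomized-rounding strategy as the paper: median-shift the vector (only decreasing $R$), choose a random threshold with probability proportional to $|t|$ so that vertices land in $S$ with probability $x_v^2$, bound $\mathbb E[E(S,\bar S)]$ by Cauchy--Schwarz together with $(|x_i|+|x_j|)^2 \le 2x_i^2+2x_j^2$ and $d$-regularity, and then use the ratio-of-expectations averaging fact. The only organizational difference is that the paper runs a single two-sided sweep with density $2|t|$ on $[x_1,x_n]$ and tracks $\min\{|S|,|\bar S|\}$ directly, whereas you first reduce to a one-sided nonnegative vector via the positive/negative-part split and the mediant inequality; both routes land the same constant $\sqrt 2$ through the identical estimate.
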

\noindent
Before proving this lemma, here are several things to note.
\begin{itemize}
\item
If we apply this lemma to a vector $x$ that is an eigenvector of $\lambda_2$, 
then $R(x) = \lambda_2$, and so we have that $\phi(G)\le \phi(S) \le \sqrt{2\lambda_2}$, 
i.e., we have the difficult direction of Cheeger's Inequality.
\item
On the other hand, any vector whose Rayleigh quotient is close to that of 
$\lambda_2$ also gives a good solution.
This ``rotational ambiguity'' is the usual thing with eigenvectors, and it 
is different than any approximation of the relatation to the original 
expansion IP.
We get ``goodness'' results for such a broad class of vectors to sweep over 
since we are measuring goodness rather modestly: only in terms of objective 
function value.
Clearly, the actual clusters might change a lot and in general will be very 
different if we sweep over two vectors that are orthogonal to each other.
\item
This result actually holds for vectors $x$ more generally, i.e., vectors 
that have nothing to do with the leading eigenvector/eigenvalue, as we will 
see below with locally-biased spectral methods, where we will use it to get 
upper bounds on locally-biased variants of Cheeger's Inequality.
\item
In this case, in ``eigenvector time,'' we have found a set $S$ with expansion
s.t. $\phi(S) \le \sqrt{\lambda_2} \le 2 \sqrt{\phi(G)}$.
\item
This is \emph{not} a constant-factor approximation, or any nontrivial 
approximation factor in terms of $n$; and it is incomparable with other 
methods (e.g., flow-based methods) that do provide such an approximation 
factor.
It is, however, nontrivial in terms of an important structural parameter of 
the graph.
Moreover, it is efficient and useful in many machine learning and data 
analysis applications.
\item
The above algorithm can be implemented in roughly 
$O\left( |V| \log|V| + |E| \right)$ time, assuming arithmetic operations and 
comparisons take constant time.  
This is since once we have computed 
$$
E\left( \{v_1,\ldots,v_i\},\{v_{i+1},\ldots,v_n\} \right)  ,
$$
it only takes 
$O(\mbox{degree}(v_{i+1}))$ time to compute
$$
E\left( \{v_1,\ldots,v_{i+1}\},\{v_{i+2},\ldots,v_n\} \right)  .
$$
\item
As a theoretical point, there exists nearly linear time algorithm to find 
a vector $x$ such that $R(x) \approx \lambda_2$, and so by coupling this 
algorithm with the above algorithm we can find a cut with expansion 
$O\left( \sqrt{\phi(G)} \right)$ in nearly-linear time.
Not surprisingly, there is a lot of work on providing good implementations 
for these nearly linear time algorithms.
We will return to some of these later.
\item
This quadratic factor is ``tight,'' in that there are graphs that are that
bad; we will get to these (rings or Guattery-Miller cockroach, depending on 
exactly how you ask this question) graphs below.
\end{itemize}

\subsection{Proof of the more general lemma implying the hard direction of Cheeger's Inequality}

Note that $\lambda_2$ is a relaxation of $\sigma(G)$ and the lemma provides a
rounding algorithtm for real vectors that are a solution of the relaxation.
So, we will view this in terms of a method from TCS known as randomized 
rounding.
This is a useful thing to know, and other methods, e.g., flow-based methods 
that we will discuss soon, can also be analyzed in a similar manner.

For those who don't know, here is the one-minute summary of randomized 
rounding.
\begin{itemize}
\item
It is a method for designing and analyzing the quality of approximation 
algorithms.
\item
The idea is to use the probabilistic method to convert the optimal solution 
of a relaxation of a problem into an approximately optimal solution of the 
original problem.
(The probabilistic method is a method from combinatorics to prove the 
existence of objects.
It involves randomly choosing objects from some specified class in some 
manner, i.e., according to some probability distribution, and showing that 
the objects can be found with probability $>0$, which implies that the 
object exists.  Note that it is an existential/non-constructive and not 
algorithmic/constructive method.)
\item
The usual approach to use randomized rounding is the following.
\begin{itemize}
\item 
Formulate a problem as an integer program or integer linear program (IP/ILP).
\item
Compute the optimal fractional solution $x$ to the LP relaxation of this IP.
\item
Round the fractional solution $x$ of the LP to an integral solution 
$x^{\prime}$ of the IP.
\end{itemize}
\item
Clearly, if the objective is a min, then 
$ \mbox{cost}(x) \le \mbox{cost}(x^{\prime})$.
The goal is to show that $\mbox{cost}(x^{\prime})$ is not much more that 
$\mbox{cost}(x)$.
\item
Generally, the method involves showing that, given any fractional solution 
$x$ of the LP, w.p. $>0$ the randomized rounding procedure produces an 
integral solution $x^{\prime}$ that approximated $x$ to some factor.
\item
Then, to be computationally efficient, one must show that 
$x^{\prime} \approx x$ w.h.p. (in which case the algorithm can stay 
randomized) or one must use a method like the method of conditional 
probabilities (to derandomize it).
\end{itemize}

Let's simplify notation:
let $V = \{1,\ldots,n\}$; and so $x_1 \le x_2 \le \cdots x_n$.
In this case, the goal is to show that there exists $i\in[n]$ w.t.
\[
\phi\left( \{ 1  ,\ldots,i \} \right) \le \sqrt{ 2 R(x) } 
\quad \mbox{and} \quad
\phi\left( \{ i+1,\ldots,n \} \right) \le \sqrt{ 2 R(x) }  .
\]

We will prove the lemma by showing that there exists a distribution $D$ over
sets $S$ of the form $\{ 1,\ldots,i\}$ s.t.
\begin{equation}
\frac{ \mathbb{E}_{S \sim D}\left\{ E(S,\bar{S}) \right\} }{\mathbb{E}_{S \sim D}\left\{ d\min\{ |S|,|\bar{S}| \} \right\} } \le \sqrt{2 R(x) }  .
\label{eqn:ratio-expectations}
\end{equation}

Before establishing this, note that Eqn.~(\ref{eqn:ratio-expectations}) does
\emph{not} imply the lemma.
Why? 
In general, it is the case that
$\mathbb{E}\left\{ \frac{X}{Y} \right\} \neq \frac{\mathbb{E}\left\{X\right\}}{\mathbb{E}\left\{Y\right\}} $, 
but it suffices to establish something similar.

\textbf{Fact.}
For random variables $X$ and $Y$ over the sample space, even though 
$\mathbb{E}\left\{ \frac{X}{Y} \right\} \neq \frac{\mathbb{E}\left\{X\right\}}{\mathbb{E}\left\{Y\right\}} $,
it is the case that 
\[
\mathbb{P}\left\{  \frac{X}{Y} \le \frac{\mathbb{E}\left\{X\right\}}{\mathbb{E}\left\{Y\right\}} \right\} > 0 , 
\]
provided that $Y > 0$ over the entire sample space.

But, by linearity of expectation, from 
Eqn.~(\ref{eqn:ratio-expectations}) it follows that
\[
\mathbb{E}_{S \sim D}\left[ E(S,\bar{S}) - d\sqrt{2R(x)}\min\{ |S|,|\bar{S}| \}\right] \le 0  .
\]
So, there exists a set $S$ in the sample space s.t.
\[
E(S,\bar{S}) - d\sqrt{2R(x)}\min\{ |S|,|\bar{S}| \} \le 0  .
\]
That is, for $S$ and $\bar{S}$, at least on of which has size $\le \frac{n}{2}$, 
\[
\phi(S) \le \sqrt{ 2 R(x) }  ,
\]
from which the lemma will follow.

So, because of this, it will suffice to establish Eqn.~(\ref{eqn:ratio-expectations}).
So, let's do that.

Assume, WLOG, that $x_{\lceil\frac{n}{2}\rceil} = 0$, i.e., the median of the entires 
of $x$ equals $0$; and $x_1^2+x_n^2=1$.
This is WLOG since, if $x\perp\vec{1}$, then adding a fixed constant $c$ to all 
entries of $x$ can only decrease the Rayleigh quotient:
\begin{eqnarray*} 
R\left(x+(c,\ldots,c)\right) 
   &=& \frac{ \sum_{ \{(u,v)\} \in E} |(x_u+c) - (x_v+c)|^2 }{ d \sum_v (x_v+c)^2 } \\
   &=& \frac{ \sum_{ \{(u,v)\} \in E} | x_u    -  x_v   |^2 }{ d \sum_v x_v^2 - 2dc\sum_v x_v + nc^2 } \\
   &=& \frac{ \sum_{ \{(u,v)\} \in E} | x_u    -  x_v   |^2 }{ d \sum_v x_v^2 + nc^2 } \\
   &\le& R(x) .
\end{eqnarray*} 
Also, 
multiplying all entries by fixed constant does \emph{not} change the value
of $R(x)$, nor does it change the property that $x_1 \le \cdots \le x_n$.

We have made these choices since they will allow us to define a distribution
$D$ over sets $S$ s.t.
\begin{equation}
\mathbb{E}_{S \sim D} \min\left\{ |S|,|\bar{S}| \right\} = \sum_i x_i^2
\label{eqn:expect2}
\end{equation}

Define a distribution $D$ over sets $\{1,\ldots,i\}$, $1\le i\le n-1$, as the 
outcome of the following probabilistic process.
\begin{enumerate}
\item
Choose a $t \in [ x_1,x_n ] \in \mathbb{R}$ with probability density function
equal to $f(t) = 2 |t|$, i.e., for $x_1 \le a \le b \le x_n$, let 
\[
\mathbb{P}\left[ a \le t \le b \right] 
   = \int_a^b 2|t| dt  
   = \left\{ \begin{array}{l l}
                 | a^2-b^2 | & \quad \text{if $a$ and $b$ have the same sign}\\
                 a^2+b^2     & \quad \text{if $a$ and $b$ have different signs}
              \end{array} 
      \right.  ,
\]
\item
Let $S = \{ u : x_i \le t \}$
\end{enumerate}

From this definition
\begin{itemize}
\item
The probability that an element $i \le \frac{n}{2}$ belongs to the smaller of
the sets $S,\bar{S}$ equals the probability of $i\in S$ and $|S|\le |\bar{S}|$, which 
equals the probability that the threshold $t$ is in the range $[x_i,0]$, 
which equals $x_i^2$.
\item
The probability that an element $i > \frac{n}{2}$ belongs to the smaller of
the sets $S,\bar{S}$ equals the probability of $i\in \bar{S}$ and $|S|\ge |\bar{S}|$, which 
equals the probability that the threshold $t$ is in the range $[0,x_i]$, 
which equals $x_i^2$.
\end{itemize}
So, Eqn.~(\ref{eqn:expect2}) follows from linearity of expectation.

Next, we want to estimate the expected number of edges between $S$ and $\bar{S}$, 
i.e., 
\[
\mathbb{E}\left[ E\left(S,\bar{S}\right)\right]
   = \sum_{(i,j)\in E} \mathbb{P}\left[ \mbox{edge } (i,j) \mbox{ is cut by } (S,\bar{S}) \right] .
\]

To estimate this, note that the event that the edge $(i,j)$ is cut by the 
partition $(S,\bar{S})$ happens when $t$ falls in between $x_i$ and $x_j$.
So, 
\begin{itemize}
\item
if $x_i$ and $x_j$ have the same sign, then
\[
\mathbb{P}\left[ \mbox{edge } (i,j) \mbox{ is cut by } (S,\bar{S}) \right]=|x_i^2-x_j^2|
\]
\item
if $x_i$ and $x_j$ have the different signs, then
\[
\mathbb{P}\left[ \mbox{edge } (i,j) \mbox{ is cut by } (S,\bar{S}) \right] = x_i^2 + x_j^2
\]
\end{itemize}

The following expression is an upper bound that covers both cases:
\[
\mathbb{P}\left[ \mbox{edge } (i,j) \mbox{ is cut by } (S,\bar{S}) \right] 
   \le | x_i - x_j | \cdot \left( |x_i| + |x_j| \right)  .
\]
Plugging into the expressions for the expected number of cut edges, and 
applying the Cauchy-Schwatrz inequality gives
\begin{eqnarray*}
\mathbb{E} E\left(S,\bar{S}\right)
   &\le& \sum_{(i,j) \in E} |x_i - x_j| \left( |x_i|+|x_j|\right)  \\
   &\le& \sqrt{ \sum_{(i,j)\in E} \left(x_i-x_j\right)^2 } \sqrt{ \sum_{(i,j)\in E} \left(|x_i|+|x_j|\right)^{2} }
\end{eqnarray*}

Finally, to deal with the expression $\sum_{(ij)\in E} \left(|x_i|+|x_j|\right)^{2}$,
recall that $(a+b)^2 \le 2a^2+2b^2$.
Thus,
\[
\sum_{(ij)\in E} \left(|x_i|+|x_j|\right)^{2} 
   \le \sum_{(ij) \in E} 2x_i^2 + 2x_j^2 
     = 2d \sum_i x_i^2  .
\]
Putting all of the pieces together, we have that
\[
\frac{ \mathbb{E} \left[E\left(S,\bar{S}\right) \right]}{ \mathbb{E}\left[d \min\{ |S|,|\bar{S}| \}\right] } 
 \le \frac{ \sqrt{ \sum_{(ij)\in E} \left(x_i-x_j\right)^2 } \sqrt{ 2d \sum_i x_i^2  } }{ d \sum_i x_i^2 }
 = \sqrt{ 2R(x) }  ,
\]
from which the result follows.

\newpage

\section{%
(02/17/2015): 
Expanders, in theory and in practice (1 of 2)}

Reading for today.
\begin{compactitem}
\item
``Expander graphs and their applications,'' in Bull. Amer. Math. Soc., by Hoory, Linial, and Wigderson
\end{compactitem}

\subsection{Introduction and Overview}

\emph{Expander graphs}, also called \emph{expanders}, are remarkable 
structures that are widely-used in TCS and discrete mathematics.
They have a wide range of applications:
\begin{itemize} 
\item
They reduce the need for randomness and are useful for derandomizing 
randomized algorithms---so, if random bits are a valuable resource and thus 
you want to derandomized some of the randomized algorithms we discussed 
before, then this is a good place to start.
\item
They can be used to find good error-correcting codes that are efficiently 
encodable and decodable---roughly the reason is that they spread things out.
\item
They can be used to provide a new proof of the so-called PCP theorem, which 
provides a new characterization of the complexity class NP, and applications
to the hardness of approximate computation.
\item
They are a useful concept in data analysis applications, since expanders 
look random, or are empirically quasi-random, and it is often the case that
the data, especially when viewed at large, look pretty noisy.
\end{itemize} 
For such useful things, it is somewhat surprising that (although they are 
very well-known in computer science and TCS in particular due to their 
algorithmic and complexity connections) expanders are almost unknown outside
computer science.
This is unfortunate since:
\begin{itemize}
\item
The world is just a bigger place when you know about expanders.
\item
Expanders have a number of initially counterintuitive properties, like they
are very sparse and very well-connected, that are typical of a lot of data 
and thus that are good to have an intuition about.
\item
They are ``extremal'' in many ways, so they are a good limiting case if you
want to see how far you can push your ideas/algorithms to work.
\item
Expanders are the structures that are ``most unlike'' low-dimensional 
spaces---so if you don't know about them then your understanding of the 
mathematical structures that can be used to describe data, as well as of 
possible ways that data can look will be rather limited, \emph{e.g.}, you 
might think that curved low-dimensional spaces are good ideas.
\end{itemize}
Related to the comment about expanders having extremal properties, if you 
know how your algorithm behaves on, say, expanders, hypercubes (which are 
similar and different in interesting ways), trees (which we won't get to as 
much, but will mention), and low-dimensional spaces, they you probably have 
a pretty good idea of how it will behave on your data.
That is very different than knowing how it will behave in any one of those 
places, which doesn't give you much insight into how it will behave more 
generally; this extremal property is used mostly by TCS people for algorithm
development, but it can be invaluable for understanding how/when your 
algorithm works and when it doesn't on your non-worst-case data.

We will talk about expander graphs.
One issue is that we can define expanders both for degree-homogeneous graphs
as well as for degree-heterogeneous graphs; and, although many of the basic 
ideas are similar in the two cases, there are some important differences 
between the two cases.
After defining them (which can be done via expansion/conductance or the 
leading nontrivial eigenvalue of the combinatorial/normalized Laplacian), we 
will focus on the following aspects of expanders and expander-like graphs.
\begin{itemize}
\item
Expanders are graphs that are very well-connected.
\item
Expanders are graphs that are sparse versions/approximations of a complete
graph.
\item
Expanders are graphs on which diffusions and random walks mix rapidly.
\item
Expanders are the metric spaces that are least like low-dimensional 
Euclidean spaces.
\end{itemize}
Along the way, we might have a chance to mention a few other things, e.g.:
how big $\lambda_2$ could be with Ramanujan graphs and Wigner's semicircle
result;
trivial ways with $d_{max}$ to extend the Cheeger Inequality to 
degree-heterogeneous graphs, as well as non-trivial ways with the normalized
Laplacian;
pseudorandom graphs, converses, and the Expander Mixing Lemma; and maybe 
others.

Before beginning with some definitions, we should note that we can't draw a 
meaningful/interpretable picture of an expander, which is unfortunate since 
people like to visualize things.
The reason for that is that there are no good ``cuts'' in an 
expander---relatedly, they embed poorly in low-dimensional spaces, which is
what you are doing when you visualize on a two-dimensional piece of paper.
The remedy for this is to compute all sorts of other things to try to get a 
non-visual intuition about how they behave.

\subsection{A first definition of expanders}

Let's start by working with $d$-regular graphs---we'll relax this regularity
assumption later.
But many of the most extremal properties of expanders hold for 
degree-regular graphs, so we will consider them first.
\begin{definition}
A graph $G=(V,E)$ is \emph{$d$-regular} if all vertices have the same degree
$d$, \emph{i.e.}, each vertex is incident to exactly $d$ edges.
\end{definition}
Also, it will be useful to have the following notion of the set of edges 
between two sets $S$ and $T$ (or from $S$ to $T$), both of which are subsets 
of the vertex set (which may or may not be the complement of each other).
\begin{definition}
For $S,T \subset V$, denote
\[
E(S,T)=\{(u,v)\in E|\; u\in S,\, v\in T\}   .
\]
\end{definition}

Given this notation, we can define the expansion of a graph.
(This is slightly different from other definitions I have given.)
\begin{definition}
The \emph{expansion} or \emph{edge expansion ratio} of a graph $G$ is
\[
h(G) = \min_{S:|S|\le\frac{n}{2}} \frac{E(S,\bar{S})}{|S|} 
\]
\end{definition}
Note that this is slightly different (just in terms of the scaling) than the 
edge expansion of $G$ which we defined before as:
$$\phi\left(G\right) 
    = \min_{S \subset V: |S| \le \frac{|V|}{2}} 
      \frac{E\left(S,\bar{S}\right)}{d|S|}  .$$
We'll use this today, since I'll be following a proof from HLW, and they use
this, and following their notation should make it easier.
There should be no surprises, except just be aware that there is a factor of 
$d$ difference from what you might expect.

(As an aside, recall that there are a number of extensions of this basic 
idea to measure other or more fine versions of this how well connected is a 
graph:
\begin{itemize}
\item
Different notions of boundary---\emph{e.g.}, vertex expansion.
\item
Consider size-resolved minimum---in Markov chains and how good communities 
are as a function of size.
\item
Different denominators, which measure different notions of the ``size'' of a 
set $S$:
\begin{itemize}
\item
Sparsity or cut ratio: $\min \frac{E(S,\bar{S})}{|S|\cdot|\bar{S}|}$---this 
is equivalent to expansion in a certain sense that we will get to.
\item
Conductance or NCut---this is identical for $d$-regular graphs but is more 
useful in practice and gives tighter bounds in theory if there is degree
heterogeneity.
\end{itemize}
\end{itemize}
We won't deal with these immediately, but we will get back to some later.
This ends the aside.)

In either case above, the expansion is a measure to quantify how 
well-connected is the graph.
Given this, informally we call a $d$-regular graph $G$ an \emph{expander} if 
$h(G) \geq \epsilon$ where $\epsilon$ is a constant. 
More precisely, let's define an expander:
\begin{definition}
A graph $G$ is a $(d,\epsilon)$-expander if it is $d$-regular and 
$h(G) \geq \epsilon$, where $\epsilon$ is a constant, independent of $n$.
\end{definition}
Alternatively, sometimes expansion is defined in terms of a sequence of 
graphs:
\begin{definition}
A sequence of $d$-regular graphs $\{G_i\}_{i \in \mathcal{Z}^{+}}$ is a 
family of \emph{expander graphs} if $\exists \epsilon > 0$ s.t.
$h(G_i)\geq\epsilon , \forall i$.
\end{definition}

\noindent
If we have done the normalization correctly, then $h(G) \in [0,d]$ and 
$\phi(G) \in [0,1]$, where large means more expander-like and small means 
that there are good partitions.
So, think of the constant $\epsilon$ as $d/10$ (and it would be $1/10$, if 
we used $\phi(G)$ normalization).
Of course, there is a theory/practice issue here, e.g., sometimes you are 
given a single graph and sometimes it can be hard to tell a moderately large
constant from a factor of $\log(n)$; we will return to these issues later.

\subsection{Alternative definition via eigenvalues}

Although expanders can be a little tricky and counterintuitive, there are a 
number of ways to deal with them.
One of those ways, but certainly not the only way, is to compute 
eigenvectors and eigenvalues associated with matrices related to the graph.
For example, if we compute the second eigenvalue of the Laplacian, then we 
have Cheeger's Inequality, which says that if the graph $G$ is an expander, 
then we have a (non-tight, due to the quadratic approximation) bound on the 
second eigenvalue, and vice versa.
That is, one way to test if a graph is an expander is to compute that 
eigenvalue and check.

Of central interest to a lot of things is $\lambda_2^{LAP}$, which is the
Fiedler value or second smallest eigenvalue of the Laplacian.
Two things to note:
\begin{itemize}
\item
If we work with Adjacency matrices rather than Laplacians, then we are 
interested in how far $\lambda_2^{ADJ}$ is from $d$.
\item
We often normalized things so as to interpret them in terms of a random 
walk, in which case the top eigenvalue $=1$ with the top eigenvector being
the probability distribution. 
In that case, we are interested in how far $\lambda_2$ is from $1$.
\end{itemize}
(Since I'm drawing notes from several different places, we'll be a little 
inconsistent on what the notation means, but we should be consistent within
each class or section of class.)

Here is Cheeger's Inequality, stated in terms of $h(G)$ above. 
\begin{itemize}
\item
If $ 0=\lambda_1 \leq \lambda_2 \leq \cdots \leq \lambda_n $ are the 
eigenvalues of the Laplacian (not normalized, i.e. $D-A$) of a $d$-regular graph $G$, then:
\[
\frac{\lambda_2}{2} \leq h(G) \leq \sqrt{ 2 d \lambda_2 }
\]
The $\sqrt{d}$ in the upper bound is due to our scaling.
\end{itemize}
Alternatively, here is Cheeger's Inequality, stated in terms of $h(G)$ for 
an Adjacency Matrix.
\begin{itemize}
\item
If $d=\mu_{1}\geq\mu_{2}\geq\ldots\geq\mu_{n}$ are the eigenvalues of the 
Adjacency Matrix $A(G)$ of $d$-regular graph $G$, then:
\[
\frac{d-\mu_{2}}{2}\leq h(G)\leq\sqrt{2d(d-\mu_{2})}
\]
\end{itemize}
Therefore, the expansion of the graph is related to its spectral gap
($d-\mu_{2}$). 
Thus, we can define a graph to be an expander if $\mu_2 \leq d-\epsilon $ or 
$\lambda_2 \geq \epsilon$ where $\lambda_2$ is the second eigenvalue of the 
matrix $L(G) = D - A(G)$ where $D$ is the diagonal degree matrix. 
Slightly more formally, here is the alternate definition of expanders:
\begin{definition}
A sequence of $d$-regular graphs $\{G_n\}_n\in\mathbb{N}$ is a family of expander graphs
if $|\lambda_i^{ADJ}|\le d- \epsilon$, \emph{i.e.} if all the eigenvalues of 
$A$ are bounded away from $d$
\end{definition}

\noindent
\textbf{Remark.}
The last requirement can be written as $\lambda_2^{LAP}\ge c, \forall n$, 
\emph{i.e.}, that all the eigenvalues of the Laplacian bounded below and 
away from $c>0$.

In terms of the edge expansion $\phi(G)$ we defined last time, this definition 
would become the following.

\begin{definition}
A family of constant-degree expanders is a family of graphs $\{ G_n \}_{n\in \mathbb{N}}$ s.t. 
each graph in $G_n$ is $d$-regular graph on $n$ vertices and such that there
exists an absolute constant $\phi^{*}$, independent of $n$, s.t. 
$\phi(G_n) \ge \phi^{*}$, for all $n$.
\end{definition}

\subsection{Expanders and Non-expanders}

A clique or a complete graph is an expander, if we relax the requirement 
that the $d$-regular graph have a fixed $d$, independent of $n$.
Moreover, $G_{n,p}$ (the random graph), for $p \gtrsim \frac{\log(n)}{n}$ is also an expander, 
with $d$ growing only weakly with $n$.
(We may show that later.)
Of greatest interest---at least for theoretical considerations---is the case 
that $d$ is a constant independent of $n$.

\subsubsection{Very sparse expanders}

In this case, the idea of an expander, \emph{i.e.}, an \emph{extremely} 
sparse and \emph{extremely} well-connected graph is nice; but do they exist?
It wasn't obvious until someone proved it, but the answer is YES.
In fact, a typical $d$-regular graph is an expander with high probability 
under certain random graph models.
Here is a theorem that we will not prove.

\begin{theorem}
Fix $d \in \mathbb{Z}^{+} \ge 3$.
Then, a randomly chosen $d$-regular graph is an expander w.h.p.
\end{theorem}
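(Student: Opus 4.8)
The plan is to prove this by the first-moment method in the configuration (pairing) model for random $d$-regular graphs, and then transfer the conclusion to the uniform model. Recall that in the pairing model one takes $n$ buckets each containing $d$ half-edges, for a total of $nd$ points (assume $nd$ even), and chooses a uniformly random perfect matching on these points; contracting each bucket to a vertex yields a $d$-regular multigraph. The key fact I would invoke is that for fixed $d$ the probability that the pairing produces a \emph{simple} graph is bounded below by a positive constant $c_d > 0$ as $n \to \infty$; hence any event that holds with probability $1 - o(1)$ in the pairing model also holds with probability $1 - o(1)$ in the uniform random simple $d$-regular graph. So it suffices to show that in the pairing model, with probability $1 - o(1)$, there is no set $S$ with $|S| \le n/2$ and $E(S,\bar{S}) < \epsilon |S|$, for a suitable constant $\epsilon = \epsilon(d) > 0$.

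First I would fix a subset $S$ with $|S| = s$, where $1 \le s \le n/2$. If $E(S,\bar{S}) \le \epsilon s$, then at least $ds - \epsilon s$ of the $ds$ half-edges emanating from $S$ are matched to other half-edges in $S$; in particular at least $\lceil (d-\epsilon)s/2 \rceil$ matched pairs lie entirely inside $S$. I would bound the probability of this event by summing over which half-edges form the internal pairs, estimating via the standard product formula for the probability that a prescribed collection of $\sim (d-\epsilon)s/2$ pairs is realized by the random matching. This produces a bound of the shape $\binom{ds}{(d-\epsilon)s}\,\big( (ds)/(nd) \big)^{(d-\epsilon)s/2}$ up to polynomial factors, i.e.\ roughly $\big( C \cdot (s/n) \big)^{((d-\epsilon)/2 - 1)\,s}$ for an absolute constant $C$; the exponent $(d-\epsilon)/2 - 1$ is positive precisely because $d \ge 3$ and $\epsilon$ is small, which is where the hypothesis $d \ge 3$ enters.

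Then I would run the union bound over all $\binom{n}{s}$ choices of $S$ and sum over $s$. Using $\binom{n}{s} \le (en/s)^s$, the contribution of sets of size $s$ is at most $\big( (en/s)\cdot (C s/n)^{(d-\epsilon)/2 - 1} \big)^s = \big( C'\, (s/n)^{(d-\epsilon)/2 - 2} \big)^s$. For $d \ge 4$ the exponent $(d-\epsilon)/2 - 2 > 0$ and the summand is small for every $s$; the borderline case $d = 3$ needs more care, since then $(d-\epsilon)/2 - 2 < 0$, so one either keeps sharper constants (the true estimate has the base behaving like a constant $<1$ once $\epsilon$ is small enough and $s$ is not too close to $n/2$) or treats the ranges $s \le \delta n$ and $\delta n \le s \le n/2$ separately — in the second range $\binom{n}{s} \le 2^n$ but the pairing probability is exponentially small with a strictly larger rate. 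Either way, for $\epsilon(d)$ chosen sufficiently small the total sum tends to $0$, which establishes the expansion property in the pairing model, and the transfer step above then finishes the proof.

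The main obstacle I expect is the $d = 3$ regime together with the large-$s$ range $s = \Theta(n)$: there the entropy factor $\binom{n}{s}$ and the pairing-probability factor are both exponential in $n$, so the constants must be tracked honestly rather than absorbed into $o(1)$, and one must verify that the chosen $\epsilon$ genuinely makes the exponential rate negative uniformly in $s$. A secondary technical point is making the "prescribed internal pairs are realized" estimate precise — cleanest is to reveal the matching sequentially, at each step dividing by the number of remaining free points — but that is routine once set up.
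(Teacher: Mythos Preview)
The paper does not prove this theorem at all: immediately before stating it, the text says ``Here is a theorem that we will not prove,'' and the subsequent remarks only explain why $d=1,2$ fail. So there is nothing in the paper to compare against.

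Your proposal is the standard argument (essentially the Bollob\'as first-moment computation in the configuration model, with the usual transfer to simple graphs via the $e^{-(d^2-1)/4}$ simplicity probability), and it is the right approach. The outline is sound: bound the probability that a fixed $s$-set has small boundary by counting internal pairings, union-bound over sets, and sum over $s$. You have also correctly flagged the genuine delicate point, namely $d=3$ with $s = \Theta(n)$, where the entropy and pairing-probability exponents are both linear in $n$ and one must compare rates honestly rather than rely on crude estimates; splitting into $s \le \delta n$ and $\delta n \le s \le n/2$ is indeed how this is handled. One small correction to your heuristic exponent calculation: the bound you wrote, $\big(C (s/n)\big)^{((d-\epsilon)/2 - 1)s}$, is a bit too loose in the constants to close the union bound at $d=3$ as stated, since after multiplying by $\binom{n}{s} \le (en/s)^s$ you need the net exponent of $s/n$ to be positive, and $(d-\epsilon)/2 - 2$ is negative for $d=3$. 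The fix is exactly what you indicate --- keep sharper constants (Stirling on the double factorials rather than the crude $(s/n)^m$ estimate) --- but be aware that the back-of-envelope version you wrote down does not literally close, so the ``more care'' you mention is not optional.
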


\noindent
\textbf{Remark.}
Clearly, the above theorem is false if $d=1$ (in which case we get a bunch 
of edges) or if $d=2$ (in which case we get a bunch of cycles); but it holds 
even for $d=3$.

\noindent
\textbf{Remark.}
The point of comparison for this should be if $d \gtrsim \frac{\log(n)}{n}$.
In this case, ``measure concentration'' in the asymptotic regime, and so it 
is plausible (and can be proved to be true) that the graph has no good 
partitions.
To understand this, recall that one common random graph model is the 
Erdos-Renyi $\mathcal{G}_{n,p}$ model, where there are $n$ vertices and 
edges are chosen to exist with probability $p$. 
(We will probably describe this ER model as well as some of its basic 
properties later; at a minimum, we will revisit it when we talk about 
stochastic blockmodels.)
The related $\mathcal{G}_{n,m}$ model is another common model where graphs 
with $n$ vertices and $m$ edges are chosen uniformly at random. 
An important fact is that if we set $p$ such that there are on average $m$ 
edges, then $\mathcal{G}_{n,m}$ is very similar (in strong senses of the 
word) to $\mathcal{G}_{n,p}$---if $p \geq \log n /n$. 
(That is the basis for the oft-made observation that $\mathcal{G}_{n,m}$ and $\mathcal{G}_{n,p}$ 
are ``the same.'') 
However, for the above definition of expanders, we require in addition that 
$d$ is a constant.  
Importantly, in that regime, the graphs are sparse enough that measure 
hasn't concentrated, and they are \emph{not} the same.
In particular, if $p = 3/n$, $\mathcal{G}_{n,p}$ usually generates a graph 
that is not connected (and there are other properties that we might return 
to later). 
However, (by the above theorem) $\mathcal{G}_{n,m}$ with corresponding 
parameters usually yields a connected graph with very high expansion. 
We can think of randomized expander construction as a version of 
$\mathcal{G}_{n,m}$, further constrained to $d$-regular graphs. 

\textbf{Remark.}
There are explicit deterministic constructions for expanders---they have 
algorithmic applications.
That is an FYI, but for what we will be doing that won't matter much.
Moreover, later we will see that the basic idea is still useful even when we 
aren't satisfying the basic definition of expanders given above, e.g., when 
there is degree heterogeneity, when a graph has good small but no good large 
cuts, etc.

\subsubsection{Some non-expanders}

It might not be clear how big is big and how small is small---in particular, 
how big can $h$ (or $\lambda$) be.
Relatedly, how ``connected'' can a graph be?
To answer this, let's consider a few graphs.

\begin{itemize}
\item
Path graph.
(For a path graph, $\mu_1 = \Theta(1/n^2)$.
If we remove $1$ edge, then we can cut the graph into two $50$-$50$ pieces.
\item
Two-dimensional $\sqrt{n} \times \sqrt{n}$ grid.
(For a $\sqrt{n} \times \sqrt{n}$ grid, $\mu_1 = \Theta(1/n)$.)
Here, you can't disconnect the graph by removing $1$ edge, and the removal
of a constant number of edges can only remove a constant number of vertices
from the graph.
But, it is possible to remove $\sqrt{n}$ of the edge, i.e., an 
$O(\frac{1}{\sqrt{n}})$ fraction of the total, and split the graph into two 
$50$-$50$ pieces.
\item 
For a 3D grid, $\mu_1 = \Theta(1/n^{2/3})$.
\item
A $k$-dimensional hypercube is still better connected.
But it is possible to remove a very small fraction of the edges (the 
edges of a dimension cut, which are $\frac{1}{k}= \frac{1}{\log(n)}$ fraction
of the total) and split half the vertices from the other half.
\item 
For a binary tree, e.g., a complete binary tree on $n$ vertices, 
$\mu_1 = \Theta(1/n)$.
\item 
For a $K_n - K_n$ dumbbell, (two expanders or complete graphs joined by an edge) $\mu_1 = \Theta(1/n)$.
\item 
For a ring on $n$ vertices, $\mu_1 = \Theta(1/ n)$.
\item
Clique.
Here, to remove a $p$ fraction of the vertices from the rest, you must 
remove $\ge p(1-p)$ fraction of the edges.
That is, it is very well connected.
(While can take a complete graph to be the ``gold standard'' for connectivity,
it does, however, have the problem that it is dense; thus, we will be 
interested in sparse versions of a complete that are similarly well-connected.)
\item 
For an expander, $\mu_1 = \Theta(1)$.
\end{itemize}

\noindent
\textbf{Remark.}
A basic question to ask is whether, say, 
$\mu_1 \sim \Theta(\mbox{poly}(1/ n))$ is 
``good'' or ``bad,'' say, in some applied sense?
The answer is that it can be either:
it can be bad, if you are interested in connectivity, e.g., a network where 
nodes are communication devices or computers and edges correspond to an 
available link; or
it can be good, either for algorithmic reasons if e.g. you are interested in 
divide and conquer algorithms, or for statistical reasons since this can be 
used to quantify conditional independence and inference.

\noindent
\textbf{Remark.}
Recall the quadratic relationship between $d-\lambda_2$ and $h$.
If $d-\lambda_2$ is $\Theta(1)$, then that is not much difference (a topic
which will return to later), but if it is $\Theta(1/n)$ or
$\Theta(1/n^2)$ then it makes a big difference.
A consequence of this is that by TCS standards, spectral partitioning does
a reasonably-good job partitioning expanders (basically since the quadratic 
of a constant is a constant), while everyone else would wonder why it makes
sense to partition expanders; while by TCS standards, spectral partitioning
does \emph{not} do well in general, since it has a worst-case approximation 
factor that depends on $n$, while everyone else would say that it does 
pretty well on their data sets.

\subsubsection{How large can the spectral gap be?}

A question of interest is: how large can the spectral gap be? 
The answer here depends on the relationship between $n$, the number of nodes
in the graph and $d$, the degree of each node (assumed to be the same for 
now.)
In particular, the answer is different if $d$ is fixed as $n$ grows or if 
$d$ grows with $n$ as $n$ grows.
As as extreme example of the latter case, consider the complete graph $K_n$
on $n$ vertices, in which case $d=n-1$.
The adjacency matrix of $K_n$ is $A_{K_n} = J-I$, where $J$ is the all-ones
matrix, and where $I=I_n$ is the diagonal identity matrix.
The spectrum of the adjacency matrix of $K_n$ is $\{n-1,-1,\ldots,-1\}$, and 
$\lambda=1$.
More interesting for us here is the case that $d$ is fixed and $n$ is large, 
in which case $n \gg d$, in which case we have the following theorem (which 
is due to Alon and Boppana).

\begin{theorem}[Alon-Boppana]
Denoting $\lambda=\max(|\mu_{2}|,|\mu_{n}|)$, we
have, for every $d$-regular graph:
\[
\lambda\geq2\sqrt{d-1}-o_{n}(1)
\]
\end{theorem}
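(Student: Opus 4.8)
The plan is to prove the lower bound $\lambda \ge 2\sqrt{d-1} - o_n(1)$ by a trace (moment) argument comparing the graph $G$ locally to the infinite $d$-regular tree $T_d$. First I would recall that $\lambda = \max(|\mu_2|,|\mu_n|) \ge |\mu_k|$ for every $k \ge 2$, so it suffices to produce some eigenvalue other than $\mu_1 = d$ that is close to $2\sqrt{d-1}$ in absolute value; equivalently, it suffices to lower-bound the second-largest eigenvalue $\mu_2$ (or control $\lambda$ directly via the spectral radius of $A$ restricted to the orthogonal complement of the all-ones vector). The key quantity is $\mathrm{Tr}(A^k) = \sum_i \mu_i^k$, which counts closed walks of length $k$ in $G$. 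For even $k = 2\ell$, every $\mu_i^{2\ell} \ge 0$, so $\sum_{i\ge 2}\mu_i^{2\ell} = \mathrm{Tr}(A^{2\ell}) - d^{2\ell}$, and hence $\lambda^{2\ell} \ge \frac{1}{n-1}\big(\mathrm{Tr}(A^{2\ell}) - d^{2\ell}\big)$. Taking $2\ell$-th roots and letting $\ell$ grow slowly with $n$ will give the bound, provided I can show $\mathrm{Tr}(A^{2\ell})$ is essentially $n$ times the number of closed walks of length $2\ell$ in $T_d$.

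The key steps, in order: (1) Recall that the number of closed walks of length $2\ell$ from a fixed root in $T_d$ grows like $(2\sqrt{d-1})^{2\ell}$ up to subexponential factors; more precisely, this count divided by $(2\sqrt{d-1})^{2\ell}$ is bounded below by something like $\Theta(\ell^{-3/2})$ (Catalan-number type asymptotics), which is all I need. (2) Observe that for every vertex $v$ of $G$, the number of closed walks of length $2\ell$ at $v$ in $G$ is at least the number of closed walks of length $2\ell$ at the root of $T_d$, because a ball of radius $\ell$ in any $d$-regular graph "covers" the corresponding ball in $T_d$: walks that stay within a tree-like neighborhood of $v$ are in bijection with tree walks, and extra cycles only add more closed walks. (3) Sum over all $n$ vertices to get $\mathrm{Tr}(A^{2\ell}) \ge n \cdot (\text{tree count}) \ge n \cdot c\,\ell^{-3/2} (2\sqrt{d-1})^{2\ell}$. (4) Plug into $\lambda^{2\ell} \ge \frac{1}{n-1}\big(\mathrm{Tr}(A^{2\ell}) - d^{2\ell}\big)$; choosing $\ell = \ell(n) \to \infty$ slowly (e.g. $\ell = \lfloor \log n / (2\log d)\rfloor$ or similar) makes $d^{2\ell} = o(n)$ so the subtracted term is negligible, giving $\lambda^{2\ell} \ge \frac{c}{2}\,\ell^{-3/2}(2\sqrt{d-1})^{2\ell}$. (5) Take $2\ell$-th roots: $\lambda \ge 2\sqrt{d-1}\cdot\big(\tfrac{c}{2}\ell^{-3/2}\big)^{1/(2\ell)} = 2\sqrt{d-1}(1 - o(1))$, since the polynomial-in-$\ell$ factor raised to the $1/(2\ell)$ power tends to $1$.

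The main obstacle I expect is step (2) -- making precise the claim that closed walks in $G$ at $v$ are at least as numerous as closed walks in $T_d$ at the root, and doing the bookkeeping cleanly. One has to argue that the universal cover of $G$ is $T_d$, that the covering map sends closed walks downstairs to closed walks, and that this map is surjective on walks of each fixed length (so no closed tree walk is "lost"); alternatively one fixes a spanning-tree-like local structure and counts non-backtracking excursions. A secondary subtlety is tracking the subexponential Catalan factor carefully enough to be sure it does not corrupt the $o_n(1)$ error -- but since we only raise it to the $1/(2\ell)$ power, any factor that is polynomial in $\ell$ (or even $e^{o(\ell)}$) is harmless, so I would not optimize this and would instead cite the standard return-probability asymptotic for the tree. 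Finally, I should double-check the case distinction $|\mu_2|$ vs $|\mu_n|$: for odd cycles / bipartite issues the negative end can dominate, but since we work with even powers $2\ell$ the argument naturally bounds $\lambda = \max(|\mu_2|,|\mu_n|)$ rather than $\mu_2$ alone, so no extra work is needed there.
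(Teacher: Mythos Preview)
The paper states the Alon--Boppana theorem without proof, so there is nothing to compare your argument against. Your trace-method outline is the standard proof and is correct. A couple of small remarks on the points you flagged:

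For step (2), the cleanest justification is the path-lifting property of the universal cover. Fix $v\in G$ and a lift $\tilde v$ in $T_d$. Every walk in $G$ starting at $v$ lifts uniquely to a walk in $T_d$ starting at $\tilde v$; the closed walks at $v$ are exactly those whose lift ends at some preimage of $v$, and in particular this includes all walks whose lift ends at $\tilde v$ itself. Hence the number of closed walks of length $2\ell$ at $v$ in $G$ is at least the number of closed walks of length $2\ell$ at $\tilde v$ in $T_d$, with no bookkeeping needed.

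For step (4), your suggested choice $\ell = \lfloor \log n/(2\log d)\rfloor$ gives $d^{2\ell}\approx n$, not $o(n)$; you want something like $\ell = \lfloor (1-\delta)\log n/(2\log d)\rfloor$ for a fixed small $\delta>0$, or more simply argue as follows: from $(n-1)\lambda^{2\ell}\ge nW_{2\ell}-d^{2\ell}$ you get $\lambda^{2\ell}\ge W_{2\ell}-\frac{d^{2\ell}}{n-1}$, so for each fixed $\ell$ one has $\liminf_{n}\lambda \ge W_{2\ell}^{1/(2\ell)}$, and then send $\ell\to\infty$ using $W_{2\ell}^{1/(2\ell)}\to 2\sqrt{d-1}$. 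This two-step limit avoids balancing $\ell$ against $n$ explicitly and sidesteps the Catalan constant entirely.
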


\noindent
So, the eigengap $d-\mu_2$ is not larger than $d-2\sqrt{d-1}$.
For those familiar with Wigner's semicircle law, note the similar form.

The next question is: How tight is this?
In fact, it is pretty close to tight in the following sense:
there exists constructions of graphs, called Ramanujan graphs, 
where the second eigenvalue of $L(G)$ is $\lambda_1(G) = d- 2 \sqrt{d-1}$, 
and so the tightness is achieved. 
Note also that this is of the same scale as Wigner's semicircle law; the 
precise statements are somewhat different, but the connection should not be
surprising.

\subsection{Why is $d$ fixed?}

A question that arises is why is $d$ fixed in the definition, since there 
is often degree variability in practice.
Basically that is since it makes things harder, and so it is significant 
that expanders exist even then.
Moreover, for certain theoretical issues that is important.
But, in practice the idea of an expander is still useful, and so we go into
that here.

We can define expanders: i.t.o. boundary expansion; or i.t.o. $\lambda_2$.
The intuition is that it is well-connected and then get lots of nice 
properties:
\begin{itemize}
\item
Well-connected, so random walks converge fast.
\item
Quasi-random, meaning that it is empirically random (although in a fairly 
weak sense).
\end{itemize}

Here are several things to note:
\begin{itemize}
\item
Most theorems in graph theory go through to weighted graphs, if you are
willing to have factors like $\frac{w_{max}}{w_{min}}$---that is a problem 
if there is very significant degree heterogeneity or heterogeneity in 
weights, as is common.
So in that case many of those results are less interesting.
\item
In many applications the data are extremely sparse, like a constant number
of edges on average (although there may be a big variance).
\item
There are several realms of $d$, since it might not be obvious what is 
big and what is small:
\begin{itemize}
\item
$d=n$: complete (or nearly complete) graph.
\item
$d=\Omega(\mbox{polylog}(n))$: still dense, certainly in a theoretical 
sense, as this is basically the asymptotic region.  
\item
$d=\Theta(\mbox{polylog}(n))$:
still sufficiently dense that measure concentrates, \emph{i.e.}, enough 
concentration for applications;
Harr measure is uniform, and there are no ``outliers''
\item
$d=\Theta(1)$: In this regime things are very sparse, $G_{nm} \ne G_{np}$, 
so you have a situation where the graph has a giant component but isn't 
fully connected; so $3$-regular random graphs are different than $G_{np}$ 
with $p=\frac{3}{n}$.
\end{itemize}
You should think in terms of $d=\Theta(\mbox{polylog}(n))$ at most, although 
often can't tell $O(\log n)$ versus a big constant, and comparing trivial 
statistics can hide what you want.
\item
The main properties we will show will generalize to degree variability.
In particular:
\begin{itemize}
\item
High expansion $\rightarrow$ high conductance.
\item
Random walks converge to ``uniform'' distribution $\rightarrow$ 
random walks converge to a distribution that is uniform over the edges, 
meaning proportional to the degree of a node. 
\item
Expander Mixing Property $\rightarrow$ 
Discrepancy and Empirical Quasi-randomness
\end{itemize}
\end{itemize}

So, for theoretical applications, we need $d=\Theta(1)$; but
for data applications, think i.t.o. a graph being expander-like, i.e., think 
of some of the things we are discussing as being relevant for the properties
of that data graph, if:
(1) it has good conductance properties; and
(2) it is empirically quasi-random.
This happens when data are extremely sparse and pretty noisy, both of which 
they often are.

\subsection{Expanders are graphs that are very well-connected}

Here, we will describe several results that quantify the idea that expanders 
are graphs that are very well-connected.

\subsubsection{Robustness of the largest component to the removal of edges}

Here is an example of a lemma characterizing how constant-degree graphs with 
constant expansion are very sparse graphs with extremely good connectivity 
properties.
In words, what the following lemma says is that the removal of $k$ edges cannot 
cause more that $O\left(\frac{k}{d}\right)$ vertices to be disconnected from 
the rest.
(Note that it is always possible to disconnect $\frac{k}{d}$ vertices after 
removing $k$ edges, so the connectivity of an expander is the best possible.)

\begin{lemma}
Let $G=(V,E)$ be a regular graph with expansion $\phi$.
Then, after an $\epsilon < \phi$ fraction of the edges are adversarially 
removed, the graph has a connected component that has at least $1-\frac{\epsilon}{2\phi}$
fraction of the vertices.
\end{lemma}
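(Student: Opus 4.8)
The plan is to let $F$ be the set of removed edges, so that $|F| \le \epsilon |E| = \epsilon n d/2$, where $n=|V|$ and $d$ is the common degree, and to study the connected components $C_1,\ldots,C_m$ of $G' = (V, E\setminus F)$, ordered so that $|C_1|\ge|C_2|\ge\cdots$. The workhorse observation I would record first is this: if $S$ is any union of components of $G'$ with $1\le|S|\le n/2$, then no edge of $G'$ crosses the cut $(S,\bar S)$, so \emph{every} edge of $G$ crossing that cut belongs to $F$; combined with the expansion hypothesis in the normalization $E_G(S,\bar S)\ge \phi\, d\,|S|$ for $|S|\le n/2$, this gives $|F|\ge \phi\, d\,|S|$, i.e. $|S|\le \frac{|F|}{\phi d}\le \frac{\epsilon n}{2\phi}$.

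Next I would show the largest component satisfies $|C_1|>n/2$. Suppose not; then every $C_i$ has $|C_i|\le n/2$. Split $F$ into ``internal'' edges (both endpoints in one component) and ``crossing'' edges $F_{\mathrm{cross}}$ (endpoints in two distinct components), and note that summing over all components the number of $G$-edges leaving $C_i$ counts each crossing edge exactly twice, while every edge of $G$ between two distinct components of $G'$ must have been removed. Hence $2|F_{\mathrm{cross}}| = \sum_i \bigl(\text{$G$-edges leaving }C_i\bigr) \ge \sum_i \phi\, d\,|C_i| = \phi\, d\, n$, using the expansion bound for each component (all of size $\le n/2$, all nonempty and proper). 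This yields $|F|\ge|F_{\mathrm{cross}}|\ge \phi\, d\, n/2 > \epsilon\, d\, n/2 \ge |F|$, since $\epsilon<\phi$ — a contradiction. So $|C_1|>n/2$.

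Finally, with $|C_1|>n/2$ I would take $W = V\setminus C_1$, the union of all other components, so $|W|<n/2$. Since $C_1$ is a connected component of $G'$, no edge of $G'$ joins $W$ to $C_1$, so every edge of $G$ between $W$ and $\bar W = C_1$ lies in $F$; applying the workhorse observation to $W$ gives $|W|\le \frac{\epsilon n}{2\phi}$, hence $|C_1| = n - |W| \ge \bigl(1-\tfrac{\epsilon}{2\phi}\bigr) n$, which is the claim. The degenerate situations are harmless: $W=\emptyset$ makes the claim trivial, and $\phi=0$ is excluded since $\epsilon<\phi$ forces $\phi>0$.

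I do not expect a real obstacle here; the only points needing care are the normalization of $\phi$ (there is a factor of $d$ between the quantity $h(G)$ used in HLW and the $\phi(G)$ of these notes, which must be tracked to land on $1-\frac{\epsilon}{2\phi}$ rather than $1-\frac{\epsilon}{2d\phi}$ or similar) and the double-counting bookkeeping in the crossing-edge argument. The case split $|C_1|>n/2$ versus otherwise is the one structural subtlety, and it is essentially forced: when $\epsilon<\phi$ the target fraction $1-\frac{\epsilon}{2\phi}$ exceeds $\tfrac12$, so the component it names must be the unique one of size $>n/2$, and ruling out a ``shattered'' $G'$ (all components $\le n/2$) is exactly where the hypothesis $\epsilon<\phi$ gets used.
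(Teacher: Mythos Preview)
Your proof is correct and follows essentially the same approach as the paper: the same case split on whether $|C_1|\le n/2$, the same double-counting bound $|F|\ge \tfrac12\sum_i E_G(C_i,\bar C_i)$ to derive a contradiction in the first case, and the same application of the expansion bound to $S=V\setminus C_1$ in the second. If anything, your bookkeeping is slightly more explicit (separating $F_{\mathrm{cross}}$ from internal removed edges, and flagging the normalization of $\phi$).
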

\begin{Proof}
Let $d$ be the degree of $G$.
Let $E^{\prime} \subseteq E$ be an arbitrary subset of 
$\le \epsilon |E| = \epsilon d \frac{|V|}{2}$ edges.
Let $C_1,\ldots,C_m$ be the connected components of the graph $(V,E\diagdown E^{\prime})$, 
ordered s.t. 
\[
|C_1| \ge |C_2| \ge \cdots |C_m|  .
\]
In this case, we want to prove that 
\[
|C_1| \ge |V|\left( 1-\frac{2\epsilon}{\phi} \right)
\]
To do this, note that 
\[
|E^{\prime}| \ge \frac{1}{2} \sum_{ij} E\left(C_i,C_j\right)
               = \frac{1}{2} \sum_i    E\left(C_i,V\diagdown C_i\right)  .
\]
So, if $|C_1| \le \frac{|V|}{2}$, then 
\[
|E^{\prime}| \ge \frac{1}{2} \sum_i d \phi |C_i| = \frac{1}{2} d \phi |V| ,
\]
which is a contradiction if $\phi > \epsilon$.
On the other hand, if $|C_1| \ge \frac{|V|}{2}$, then let's 
define $S$ to be $S= C_2 \cup \ldots \cup C_m$.
Then, we have 
\[
|E^{\prime}| \ge E(C_1,S) \ge d \phi |S|  ,
\]
which implies that
\[
|S| \le \frac{\epsilon}{2\phi} |V|  ,
\]
and so $|C_1 \ge \left( 1-\frac{\epsilon}{2\phi}\right) |V|$, 
from which the lemma follows.
\end{Proof}

\subsubsection{Relatedly, expanders exhibit quasi-randomness}

In addition to being well-connected in the above sense (and other senses), 
expanders also ``look random'' in certain senses.

\paragraph{One direction}

For example, here I will discuss connections with something I will call 
``empirical quasi-randomness.''
It is a particular notion of things looking random that will be useful for
what we will discuss.
Basically, it says that the number of edges between any two subsets of nodes
is very close to the expected value, which is what you would see in a random 
graph.
Somewhat more precisely, it says that when $\lambda$ below is small, then 
the graph has the following quasi-randomness property: for every two disjoint 
sets of vertices, $S$ and $T$, the number of edges between $S$ and $T$ is 
close to $\frac{d}{n}|S|\cdot|T|$, i.e., what we would expect a random graph
with the same average degree $d$ to have.
(Of course, this could also hide other structures of potential interest, as 
we will discuss later, but it is a reasonable notion of ``looking random'' 
in the large scale.)
Here, I will do it in terms of expansion---we can generalize it and do it 
with conductance and discrepancy, and we may do that later.

We will start with the following theorem, called the ``Expander Mixing 
Lemma,'' which shows that if the spectral gap is large, then the number of 
edges between two subsets of the graph vertices can be approximated by the 
same number for a random graph, \emph{i.e.}, what would be expected on 
average, so it looks empirically random.
Note that $\frac{d}{n}|S|\cdot|T|$ is the average value of the number of 
edges between the two sets of nodes in a random graph; also, note that 
$\lambda\sqrt{|S|\cdot|T|}$ is an ``additive'' scale factor, which might be
very large, e.g., too large for the following lemma to give an interesting 
bound, in particular when one of $S$ or $T$ is very small.

\begin{theorem}[Expander Mixing Lemma]
Let $G=(V,E)$ be a $d$-regular graph, with $|V|=n$ and 
$\lambda=\max(|\mu_{2}|,|\mu_{n}|)$, where $\mu_i$ is the $i$-th largest eigenvalue 
of the (non-normalized) Adjacency Matrix.
Then, for all $S,T\subseteq V$, we have the following:
\[
\left|\left|E(S,T)\right|-\frac{d}{n}|S|\cdot|T|\right|
   \leq\lambda\sqrt{|S|\cdot|T|}   .
\]
\end{theorem}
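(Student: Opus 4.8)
The plan is to write $|E(S,T)|$ as a bilinear form in the adjacency matrix $A$ and expand the two indicator vectors in the eigenbasis of $A$ guaranteed by the Spectral Theorem. First I would introduce the $\{0,1\}$ indicator vectors $\mathbf{1}_S,\mathbf{1}_T\in\mathbb{R}^{n}$ and observe that
\[
|E(S,T)| = \mathbf{1}_S^{T} A \,\mathbf{1}_T = \sum_{u\in S,\; v\in T} A_{uv},
\]
recording the minor bookkeeping point about how pairs in $S\cap T$ are counted (this convention does not affect the estimate). Next, using the Spectral Theorem together with the Perron--Frobenius-style result for $d$-regular graphs stated earlier, fix an orthonormal eigenbasis $v_1,\ldots,v_n$ of $A$ with $Av_i=\mu_i v_i$, ordering $\mu_1=d\ge\mu_2\ge\cdots\ge\mu_n$ and $v_1=\tfrac{1}{\sqrt{n}}\vec{1}$.

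Then I would write $\mathbf{1}_S=\sum_i\alpha_i v_i$ and $\mathbf{1}_T=\sum_i\beta_i v_i$, and record three facts: the components along $v_1$ are $\alpha_1=\langle\mathbf{1}_S,v_1\rangle=|S|/\sqrt{n}$ and $\beta_1=|T|/\sqrt{n}$; the Parseval identities $\sum_i\alpha_i^{2}=\|\mathbf{1}_S\|_2^{2}=|S|$ and $\sum_i\beta_i^{2}=|T|$; and the diagonalized bilinear form $\mathbf{1}_S^{T}A\,\mathbf{1}_T=\sum_i\mu_i\alpha_i\beta_i$. Separating the $i=1$ term, which equals $d\alpha_1\beta_1=\tfrac{d}{n}|S|\cdot|T|$, gives
\[
|E(S,T)|-\frac{d}{n}|S|\cdot|T| \;=\; \sum_{i\ge 2}\mu_i\alpha_i\beta_i .
\]

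To finish, bound $|\mu_i|\le\lambda$ for every $i\ge 2$ (this is exactly the role of the definition $\lambda=\max(|\mu_2|,|\mu_n|)$, since the eigenvalues are sorted), and then apply the triangle inequality and Cauchy--Schwarz:
\[
\Bigl|\sum_{i\ge 2}\mu_i\alpha_i\beta_i\Bigr|
\;\le\; \lambda\sum_{i\ge 2}|\alpha_i|\,|\beta_i|
\;\le\; \lambda\sqrt{\sum_{i\ge 2}\alpha_i^{2}}\,\sqrt{\sum_{i\ge 2}\beta_i^{2}}
\;\le\; \lambda\sqrt{|S|\cdot|T|},
\]
which is the claimed inequality.

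Honestly there is no serious obstacle here; it is a short spectral argument, and the ``hard part'' is only keeping the conventions straight. The two points requiring care are: (i) pinning down the exact meaning of $E(S,T)$ and of the form $\mathbf{1}_S^{T}A\,\mathbf{1}_T$ when $S$ and $T$ are not disjoint (and, in any weighted generalization, what ``$d$'' and ``$\lambda$'' should be); and (ii) using that $\lambda$ dominates $|\mu_i|$ for \emph{all} $i\ge 2$ --- not just the near-$d$ eigenvalue $\mu_2$ but also the most negative eigenvalue $\mu_n$ --- since a large $|\mu_n|$ (e.g.\ a nearly bipartite component) would otherwise break the bound.
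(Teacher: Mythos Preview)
Your proposal is correct and follows essentially the same argument as the paper: expand the characteristic vectors of $S$ and $T$ in the orthonormal eigenbasis of $A$, peel off the $v_1=\tfrac{1}{\sqrt{n}}\vec{1}$ contribution to get the $\tfrac{d}{n}|S|\cdot|T|$ term, and bound the remainder via $|\mu_i|\le\lambda$ for $i\ge 2$ together with Cauchy--Schwarz and Parseval. The paper's proof is line-for-line the same computation.
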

\begin{proof}
Define $\chi_{S}$ and $\chi_{T}$ to be the characteristic vectors of $S$ 
and $T$. 
Then, if $\{v_{j}\}_{j=1}^{n}$ are orthonormal eigenvectors of $A_{G}$, 
with $v_1 = \frac{1}{\sqrt{n}}(1,\ldots,1)$, then we can write the expansion
of $\chi_{S}$ and $\chi_{T}$ in terms of those eigenvalues as: 
$\chi_{S}=\sum_{i}\alpha_{i}v_{i}$ and $\chi_{T}=\sum_{j}\beta_{j}v_{j}$. 
Thus,
\begin{eqnarray*}
\left|E(S,T)\right|     
   &=& \chi_{S}^{T}A\chi_{T}     \\
   &=& \left( \sum_{i}\alpha_{i}v_{i} \right) A \left( \sum_{j}\beta_{j}v_{j} \right)     \\
   &=& \left( \sum_{i}\alpha_{i}v_{i} \right) \left( \sum_{j}\mu_{j}\beta_{j}v_{j} \right)\\
   &=& \sum_i\mu_i\alpha_i\beta_i \quad\mbox{since the $v_i$'s are orthonormal}  .
\end{eqnarray*}
Thus, 
\begin{eqnarray*}
\left|E(S,T)\right|     
   &=& \sum\mu_{i}\alpha_{i}\beta_{i}     \\
   &=& \mu_{1}\alpha_1\beta_1+\sum_{i\geq 2}\mu_{i}\alpha_{i}\beta_{i}  \\
   &=& d\frac{|S|.|T|}{n}+\sum_{i\geq1}\mu_{i}\alpha_{i}\beta_{i}   ,
\end{eqnarray*}
where the last inequality is because, 
$\alpha_{1}= \langle \chi_{S},\frac{\overrightarrow{1}}{\sqrt{n}} \rangle=\frac{|S|}{\sqrt{n}}$ and (similarly) $\beta_{1}=\frac{|T|}{\sqrt{n}}$, and $\mu_{1}=d$.
Hence,
\begin{eqnarray*}
\left|\left|E(S,T)\right|-\frac{d}{n}|S|\cdot|T|\right|
   &   =& \left|\sum_{i=2}^{n}\mu_{i}\alpha_{i}\beta_{i}\right|   \\
   &\leq& \sum_{i\geq 2}|\mu_{i}\alpha_{i}\beta_{i}|   \\
   &\leq& \lambda\sum_{i\geq1}|\alpha_{i}||\beta_{i}|   \\
   &\leq& \lambda||\alpha||_{2}||\beta||_{2}     
       =  \lambda||\chi_{S}||_{2}||\chi_{T}||_{2}   
       =  \lambda\sqrt{|S|\cdot|T|}
\end{eqnarray*}
\end{proof}

\paragraph{Other direction}

There is also a partial converse to this result:
\begin{theorem}[Bilu and Linial]
Let $G$ be a $d$-regular graph, and suppose that
\[
\left| E(S,T) - \frac{d}{n}|S|\cdot|T| \right|
   \leq \rho \sqrt{|S|\cdot|T|}
\]
holds $\forall$ disjoint $S$,$T$ and for some $\rho > 0$.
Then
\[
\lambda \le O\left( \rho\left( 1+ \log(\frac{d}{\rho}) \right) \right)
\]
\end{theorem}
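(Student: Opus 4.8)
\emph{Proof strategy.} The plan is to reduce the claim to a spectral-norm bound on the centered adjacency matrix, then to bootstrap the hypothesis---which is really a bound on a cut-type norm---up to a bound on $\lambda$ by a dyadic decomposition of the extremal vector. First I would set $A' := A - \frac{d}{n}J$, where $J$ is the all-ones matrix. Since $G$ is $d$-regular, $A'\vec{1} = 0$ and $A'$ agrees with $A$ on $\vec{1}^{\perp}$, so the eigenvalues of $A'$ are those of $A$ with the top one $d$ replaced by $0$; hence $\|A'\|_2 = \max(|\mu_2|,|\mu_n|) = \lambda$, and it suffices to bound $|f^T A' f|$ for a unit eigenvector $f \perp \vec{1}$ with $A'f = \mu f$, $|\mu| = \lambda$. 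Next I would upgrade the hypothesis: the bound $|E(S,T) - \frac{d}{n}|S||T|| \le \rho\sqrt{|S||T|}$ for disjoint $S,T$ implies, at a constant-factor loss, the bound $|\langle \mathbf{1}_S, A'\mathbf{1}_T\rangle| \le O(\rho)\sqrt{|S||T|}$ for \emph{all} pairs $S,T$: the case $S=T$ (where $\langle\mathbf{1}_S, A'\mathbf{1}_S\rangle = 2|E(S)| - \frac{d}{n}|S|^2$) follows by averaging the disjoint-set bound over a uniformly random bipartition of $S$, and overlapping $S,T$ reduce to disjoint pieces by inclusion--exclusion. This in turn yields, for any two ``single-scale'' vectors $u,v$ whose nonzero entries all lie in $[\beta,2\beta]$ in absolute value, the bilinear bound $|u^T A' v| \le O(\rho)\|u\|_2\|v\|_2$, by splitting each of $u,v$ into positive and negative parts and applying the set estimate to the four resulting pairs.

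Then I would decompose $f = \sum_{j\ge 1} f^{(j)}$ into dyadic level sets, where $f^{(j)}$ keeps the entries with $|f_i| \in (2^{-j}, 2^{-j+1}]$ (every entry satisfies $|f_i| \le \|f\|_2 = 1$). Writing $s_j := |\mathrm{supp}(f^{(j)})|$ and $a_j := 2^{-j}\sqrt{s_j} \asymp \|f^{(j)}\|_2$, the single-scale estimate gives $|\langle f^{(j)}, A'f^{(k)}\rangle| \le O(\rho)\, a_j a_k$, so $\lambda = |f^T A' f| \le O(\rho)\big(\sum_j a_j\big)^2$. Since the $f^{(j)}$ have disjoint supports, $\sum_j a_j^2 \asymp \|f\|_2^2 = 1$, and Cauchy--Schwarz gives $\big(\sum_j a_j\big)^2 \le L$, where $L$ is the number of nonempty levels. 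Hence $\lambda \le O(\rho L)$, and the whole theorem comes down to bounding $L$.

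The hard part is cutting $L$ down from the trivial bound $O(\log n)$ (magnitudes can range from $1$ down to $1/\sqrt{n}$) to the claimed $O(1 + \log(d/\rho))$. Here I would truncate: write $f = g + h$, where $g$ collects the top $\approx \log_2(Cd/\rho)$ magnitude levels and $h$ the rest, so that $|g^T A' g| = O(\rho\log(d/\rho))$ by the level argument applied to $g$; the remaining terms I would absorb via the crude bounds $\|A'\|_2 = \lambda$ and $\|A\|_2 = d$, giving $|g^T A' h| \le \lambda\|g\|_2\|h\|_2$ and $|h^T A' h| \le \lambda\|h\|_2^2$, so $\lambda \le O(\rho\log(d/\rho)) + O(\lambda\|h\|_2)$, which closes as soon as one can force $\|h\|_2 \le \tfrac{1}{10}$. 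Controlling $\|h\|_2$ is exactly where the eigenvector structure of $f$ must be used, and this is the crux: a ``flat'' vector (all nonzero entries within a factor $2$) is a single level and already gives $\lambda = O(\rho)$, while the eigenvalue relation $\mu f_i = \sum_{j\sim i} f_j$ together with Cauchy--Schwarz over the $d$ neighbors forces $\|f\|_\infty \le \sqrt{d}/\lambda$, which limits how slowly the dyadic profile of $f$ can decay below its peak. Combining these two facts in a case split --- ``spread out'' versus ``spiky,'' with the truncation threshold calibrated at relative scale $\rho/d$ --- is what pins the number of relevant levels at $O(1 + \log(d/\rho))$ and yields $\lambda = O(\rho(1 + \log(d/\rho)))$.

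I expect this truncation / level-counting step (getting $\log(d/\rho)$ rather than $\log n$, and making the bootstrap inequality $\lambda \le O(\rho\log(d/\rho)) + c\lambda$ genuinely close with $c<1$) to be the main technical obstacle; the reduction to operator norm, the extension of the discrepancy bound, and the dyadic summation are all routine once set up.
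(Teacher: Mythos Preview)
The paper states this theorem without proof, so I compare your outline to the original Bilu--Linial argument. Your overall architecture---reduce to the operator norm of $A' = A - \tfrac{d}{n}J$, dyadically decompose the extremal eigenvector by magnitude, and bound the bilinear form level by level---is exactly the Bilu--Linial approach, and your preliminary steps (extending from disjoint to all pairs via random bipartition, the single-scale bilinear bound, Cauchy--Schwarz over levels) are correct and standard.

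The gap is in your handling of the crux. Your plan is to truncate $f = g + h$ at the top $\log(Cd/\rho)$ levels, bound $|g^T A' g| = O(\rho\log(d/\rho))$, and then absorb the cross terms using $\|A'\|_2 = \lambda$ together with $\|h\|_2 \le \tfrac{1}{10}$. But nothing forces $\|h\|_2$ to be small: the eigenvector may carry nearly all its $\ell_2$ mass in tiny coordinates spread over many vertices. The inequality $\|f\|_\infty \le \sqrt{d}/\lambda$ that you invoke goes the wrong way---it says the vector is flat when $\lambda$ is \emph{large}, whereas you need information precisely when $\lambda$ is not yet known to be small. No eigenvector-regularity or case-split argument along these lines will close the bootstrap.

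What actually works is to drop the tail-norm route and instead exploit the \emph{row-sum} constraint: every row of $A'$ has $\ell_1$ norm at most $2d$. This yields a second bilinear estimate,
\[
|\langle f^{(i)}, A' f^{(j)}\rangle|
\;\le\; \|f^{(i)}\|_1 \cdot \|A' f^{(j)}\|_\infty
\;\le\; \bigl(2^{-i+1}s_i\bigr)\bigl(2d\cdot 2^{-j+1}\bigr)
\;\le\; 8d\,2^{\,i-j}\,\|f^{(i)}\|_2^2,
\]
which decays geometrically in $j-i$. You then use the discrepancy bound $O(\rho)\|f^{(i)}\|\|f^{(j)}\|$ on the near-diagonal band $|i-j|\le \log_2(d/\rho)$ (contributing $O(\rho\log(d/\rho))$ after AM--GM and summing $\|f^{(i)}\|^2$), and the row-sum bound on the far band $|i-j|>\log_2(d/\rho)$ (contributing $O(\rho)$ after summing the geometric tail). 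No bootstrap and no control of $\|h\|_2$ is needed; the factor $\log(d/\rho)$ appears simply as the width of the band where the discrepancy estimate beats the row-sum estimate.
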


\subsubsection{Some extra comments}

We have been describing these results in terms of regular and unweighted
graphs for simplicity, especially of analysis since the statements of the 
theorems don't change much under generalization.
Important to note: these results can be generalized to weighted graphs 
with irregular number of edges per nodes using discrepancy.
Informally, think of these characterizations as intuitively defining 
what the interesting properties of an expander are for real data, or what
an expander is more generally, or what it means for a data set to look
expander-like.

Although we won't worry too much about those issues, it is important to 
note that for certain, mostly algorithmic and theoretical applications, 
the fact that $d=\Theta(1)$, etc. are very important.


\subsection{Expanders are graphs that are sparse versions/approximations of a complete graph}

To quantify the idea that constant-degree expanders are sparse 
approximations to the complete graph, we need two steps:
\begin{enumerate}
\item
first, a way to say that two graphs are close; and 
\item
second, a way to show that, with respect to that closeness measure, 
expanders and the complete graph are close.
\end{enumerate}

\subsubsection{A metric of closeness between two graphs}

For the first step, we will view a graph as a Laplacian and vice versa, and
we will consider the partial order over PSD matrices.
In particular, recall that for a symmetric matrix $A$, we can write
\[
A \succeq 0
\] 
to mean that 
\[
A \in PSD
\]
(and, relatedly, $A \succ 0$ to mean that it is PD).
In this case, we can write $A \succeq B$ to mean that $A-B \succeq 0$.
Note that $\succeq$ is a partial order.
Unlike the real numbers, where every pair is comparable, for symmetric 
matrices, some are and some are not.
But for pairs to which it does apply, it acts like a full order, in 
that, e.g., 
\begin{eqnarray*}
& & A \succeq B \mbox{ and } B \succeq C \mbox{ implies } A \succeq C   \\
& & A \succeq B \mbox{ implies that } A + C \succeq B+C  ,
\end{eqnarray*}
for symmetric matrices $A$, $B$, and $C$.

By viewing a graph as its Laplacian, we can use this to define an inequality
over graphs.  In particular, for graphs $G$ and $H$, we can write 
\[
 G \succeq H \mbox{ to mean that } L_G \succeq L_H   .
\]
In particular, from our previous results, we know that if $G=(V,E)$ is 
a graph and $H=(V,F)$ is a subgraph of $G$, then $L_G \succeq L_H$.
This follows since the Laplacian of a graph is the sum of the Laplacians 
of its edges: i.e., since $F \subseteq E$, we have
\[
L_G = \sum_{e in E} L_e 
    = \sum_{e \in F} L_e + \sum_{e in E \diagdown F} L_e
    \preceq \sum_{e \in F} L_e = L_H , 
\]
which follows since $\sum_{e \in E \diagdown F} L_e \succeq 0$.

That last expression uses the additive property of the order; now let's look 
at the multiplicative property that is also respected by that order.

If we have a graph $G=(V,E)$ and a graph $H=(V,E^{\prime})$, let's define the
graph $c \cdot H$ to be the same as the graph $H$, except that every edge is
multiplied by $c$.
Then, we can prove relationships between graphs such as the following.

\begin{lemma}
If $G$ and $H$  are graphs s.t. 
\[
G \succeq c \cdot H
\]
then, for all $k$ we have that
\[
\lambda_k(G) \ge c \lambda_k(H)  .
\]
\label{lem:graphic-ineq-mult}
\end{lemma}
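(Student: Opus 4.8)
The plan is to derive this directly from the Courant--Fischer min--max characterization of eigenvalues established earlier (the same one used in the proof of Cheeger's inequality). First I would unpack the hypothesis: by definition $G \succeq c\cdot H$ means $L_G - c\,L_H \succeq 0$, i.e. $x^T L_G x \ge c\, x^T L_H x$ for every $x \in \mathbb{R}^n$. Dividing by $x^T x$ for $x \ne 0$, the Rayleigh quotients satisfy $\frac{x^T L_G x}{x^T x} \ge c\,\frac{x^T L_H x}{x^T x}$ pointwise. Note that $c\cdot H$ is only meaningful as a (nonnegatively) weighted graph when $c \ge 0$, so we may assume $c \ge 0$; the case $c=0$ is trivial.

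Next I would invoke Courant--Fischer: with eigenvalues listed in nondecreasing order, $\lambda_k(M) = \min_{\dim S = k}\ \max_{x \in S\setminus\{\vec 0\}} \frac{x^T M x}{x^T x}$. Fix an arbitrary $k$-dimensional subspace $S$ and let $x^\ast \in S\setminus\{\vec 0\}$ be a maximizer of the Rayleigh quotient of $L_H$ over $S$. Then
\[
\max_{x \in S\setminus\{\vec 0\}} \frac{x^T L_G x}{x^T x} \ \ge\ \frac{(x^\ast)^T L_G x^\ast}{(x^\ast)^T x^\ast} \ \ge\ c\,\frac{(x^\ast)^T L_H x^\ast}{(x^\ast)^T x^\ast} \ =\ c\,\max_{x \in S\setminus\{\vec 0\}} \frac{x^T L_H x}{x^T x},
\]
where the middle step uses the pointwise bound together with $c \ge 0$. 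Taking the minimum over all $k$-dimensional subspaces $S$ on both sides, and pulling the nonnegative constant $c$ out of the minimum, yields $\lambda_k(G) = \lambda_k(L_G) \ge c\,\lambda_k(L_H) = c\,\lambda_k(H)$, which is exactly the claim.

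I do not expect a genuine obstacle here; the argument is a couple of lines once the min--max theorem is in hand. The only things to be careful about are bookkeeping: (i) fixing one consistent convention for the ordering of the $\lambda_k$ and the matching min--max form (the nondecreasing-order form used above for Cheeger is the convenient one, and one should flag that the excerpt has been slightly inconsistent about orderings), and (ii) recording that $c \ge 0$, so that moving $c$ through the minimum and through the inequality is legitimate and does not flip its direction. An alternative route would use the additive/ordering properties of $\succeq$ together with a Weyl-type monotonicity statement for eigenvalues, but routing everything through Courant--Fischer keeps the proof self-contained given what has already been proved in the notes.
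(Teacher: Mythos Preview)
Your proposal is correct and follows exactly the approach the paper indicates: the paper's proof simply states ``The proof is by the min-max Courant-Fischer variational characterization. We won't do it in detail,'' and your argument is precisely the detailed execution of that plan. Your care about the sign of $c$ and the ordering convention for eigenvalues is appropriate bookkeeping that the paper omits.
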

\begin{Proof}
The proof is by the min-max Courant-Fischer variational characterization.
We won't do it in detail.
See DS, 09/10/12.
\end{Proof}

\noindent
From this, we can prove more general relationships, e.g., bounds if edges 
are removed or rewieghted.
In particular, the following two lemmas are almost corollaries of
Lemma~\ref{lem:graphic-ineq-mult}.

\begin{lemma}
If $G$ is a graph and $H$ is obtained by adding an edge to $G$ or 
increasing the weight of an edge in $G$, then, for all $i$, we have
that $\lambda_i(G) \le \lambda_i(H)$.
\end{lemma}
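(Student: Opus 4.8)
The plan is to observe that this lemma is an immediate consequence of Lemma~\ref{lem:graphic-ineq-mult} together with the edge-wise additive structure of the Laplacian recalled above. First I would note that for any graph we have $L_G = \sum_{(u,v)\in E} w_{uv} L_{G_{uv}}$, and that each elementary edge Laplacian $L_{G_{uv}}$ is SPSD, since $x^T L_{G_{uv}} x = (x_u - x_v)^2 \ge 0$ for all $x\in\mathbb{R}^n$. Hence, whether $H$ is obtained from $G$ by adding a new edge $(u,v)$ (with some positive weight) or by increasing the weight of an existing edge $(u,v)$, we have in both cases $L_H = L_G + \delta\, L_{G_{uv}}$ for some $\delta > 0$, so that $L_H - L_G = \delta\, L_{G_{uv}} \succeq 0$, i.e.\ $L_H \succeq L_G$, which in the graph partial order reads $H \succeq G$.

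Writing this as $H \succeq 1\cdot G$ puts it exactly in the hypothesis of Lemma~\ref{lem:graphic-ineq-mult} with $c = 1$ (with the roles of $G$ and $H$ in that lemma interchanged), and that lemma then gives $\lambda_i(H) \ge 1\cdot\lambda_i(G) = \lambda_i(G)$ for every $i$, which is the claim. Alternatively, and perhaps more transparently, I could give the direct argument via the Courant--Fischer characterization: since $x^T L_H x = x^T L_G x + \delta(x_u - x_v)^2 \ge x^T L_G x$ for all $x\in\mathbb{R}^n\diagdown\{\vec{0}\}$, every quantity of the form $\min_{\dim S = k}\ \max_{x\in S\diagdown\{\vec{0}\}} \frac{x^T L x}{x^T x}$ is monotone nondecreasing when $L_G$ is replaced by $L_H$, and this is precisely the statement $\lambda_k(G) \le \lambda_k(H)$.

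The only real obstacle here is bookkeeping rather than content: one must check that the eigenvalue convention used in our variational characterization orders the eigenvalues nondecreasingly, so that the min-max formula we invoke for $\lambda_k$ is indeed the one manifestly monotone in the quadratic form, and one must confirm that both ``adding an edge'' and ``increasing a weight'' are captured by adding a nonnegative multiple of a single $L_{G_{uv}}$. Given that, the write-up is only a few lines; I would present it as the short Courant--Fischer argument, or simply as a corollary of Lemma~\ref{lem:graphic-ineq-mult}.
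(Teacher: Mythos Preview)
Your proposal is correct and takes essentially the same approach as the paper: the paper explicitly states that this lemma is ``almost a corollary of Lemma~\ref{lem:graphic-ineq-mult}'' and gives no further proof, and your argument shows precisely how to realize it as such a corollary (establishing $H \succeq 1\cdot G$ via the SPSD-ness of the added edge Laplacian, then invoking the lemma with $c=1$). Your alternative direct Courant--Fischer argument is likewise in line with the paper, since that is the stated proof method for Lemma~\ref{lem:graphic-ineq-mult} itself.
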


\begin{lemma}
If $G=(V,E,W_1)$ is a graph and $H=(V,E,W_2)$ is a graph that differs from
$G$ only in its weights, then
\[
G \succeq \min_{e \in E} \frac{w_1(e)}{w_2(e)} H  .
\]
\end{lemma}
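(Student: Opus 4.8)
The plan is to reduce everything to the edge-decomposition of the Laplacian established just above. Recall that for any weighted graph one can write the Laplacian as a weighted sum of single-edge Laplacians $L_e$, each of which is PSD since $x^T L_e x = (x_u - x_v)^2 \ge 0$ for the edge $e=(u,v)$. Since $G=(V,E,W_1)$ and $H=(V,E,W_2)$ share the same edge set $E$, they are assembled from the same family $\{L_e\}_{e\in E}$: namely $L_G = \sum_{e\in E} w_1(e)\, L_e$ and $L_H = \sum_{e\in E} w_2(e)\, L_e$.

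Next I would set $c = \min_{e\in E} \frac{w_1(e)}{w_2(e)}$, which is well-defined because all edge weights are strictly positive. Then the relevant difference of Laplacians is
\[
L_G - c\, L_H = \sum_{e\in E} \bigl( w_1(e) - c\, w_2(e) \bigr) L_e .
\]
By the choice of $c$, for every $e\in E$ we have $c \le w_1(e)/w_2(e)$, hence $w_1(e) - c\, w_2(e) \ge 0$. So the right-hand side is a conic (nonnegative) combination of the PSD matrices $L_e$, and therefore $L_G - c\, L_H \succeq 0$, which by definition of the graphic order means $G \succeq c\cdot H$, i.e. $G \succeq \bigl(\min_{e\in E}\tfrac{w_1(e)}{w_2(e)}\bigr) H$.

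There is essentially no hard step here; the statement is a direct consequence of the additive and multiplicative structure of the PSD partial order discussed immediately above (one could equally phrase it edgewise, applying $w_1(e) L_e \succeq c\, w_2(e) L_e$ for each $e$ and summing). The only points deserving a sentence of care are that the weights are positive so that the ratios and their minimum are meaningful, and that a conic combination of PSD matrices remains PSD, which is immediate from the Rayleigh-quotient characterization. If one wanted the companion bound $G \preceq \bigl(\max_{e\in E}\tfrac{w_1(e)}{w_2(e)}\bigr) H$, the identical argument runs with all inequalities reversed.
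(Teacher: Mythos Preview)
Your proof is correct and follows exactly the approach the paper has set up: decompose both Laplacians over the common edge set as $L_G=\sum_e w_1(e)L_e$ and $L_H=\sum_e w_2(e)L_e$, then observe that $L_G - cL_H$ is a nonnegative combination of the PSD matrices $L_e$. The paper itself does not spell out a proof beyond calling the lemma an ``almost corollary'' of the preceding material, but the edge-decomposition argument you give is precisely what that remark is pointing to.
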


Given the above discussion, we can use this to define the notion that two 
graphs approximate each other, basically by saying that they are close if 
their Laplacian quadratic forms are close.
In particular, here is the definition.

\begin{definition}
Let $G$ and $H$ be graphs.
We say that $H$ is a $c$-approximation to $H$ if 
\[
cH \succeq G \succeq \frac{1}{c} H  .
\]
\label{def:graph-c-approx}
\end{definition}

\noindent
As a special case, note that if $c=1+\epsilon$, for some $\epsilon\in(0,1)$, 
then we have that the two graphs are very close.

\subsubsection{Expanders and complete graphs are close in that metric}

Given this notion of closeness between two graphs, we can now show that 
constant degree expanders are sparse approximations of the complete 
graph.
The following theorem is one formalization of this idea.
This establishes the closeness; and, since constant-degree expanders are 
very sparse, this result shows that they are sparse approximations of 
the complete graph.
(We note in passing that it is know more generally that every graph can
be approximated by a complete graph; this graph sparsification problem is
of interest in many areas, and we might return to it.)

\begin{theorem}
For every $\epsilon > 0$, there exists a $d > 0$ such that for all 
sufficiently large $n$, there is a $d$ regular graph $G_n$ that is a
$1\pm\epsilon$ approximation of the complete graph $K_n$
\end{theorem}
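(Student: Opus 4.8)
The plan is to reduce the statement to a quantitative spectral-gap bound and then feed in the existence of near-optimal spectral expanders. First I would fix the normalization. The complete graph has Laplacian $L_{K_n} = nI - J$, which annihilates $\vec{1}$ and equals $n$ times the identity on $\vec{1}^{\perp}$; a $d$-regular graph $G$ has $L_G = dI - A$, which annihilates $\vec{1}$ (eigenvector $\vec{1}/\sqrt{n}$, by the Perron--Frobenius-style Lemma~\ref{thm:perrof-frob-adj}) and on $\vec{1}^{\perp}$ has eigenvalues $d-\mu_i$ lying in $[d-\lambda,\,d+\lambda]$, where $\lambda = \max(|\mu_2|,|\mu_n|)$. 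Since $L_{K_n}$ scales like $n$ and $L_G$ like $d$, I would take the object compared with $K_n$ to be the $d$-regular expander with every edge reweighted by $n/d$, so its weighted Laplacian is $\tilde L = nI - \tfrac{n}{d}A$; its weighted degree is $n$, matching the scale of $K_n$. On $\vec{1}^{\perp}$ the eigenvalues of $\tilde L$ are $n(1-\mu_i/d) \in [\,n(1-\lambda/d),\, n(1+\lambda/d)\,]$, so everything hinges on making $\lambda/d$ small.

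Next I would convert this eigenvalue sandwich into the operator-ordering statement of Definition~\ref{def:graph-c-approx}. Writing any $x\in\mathbb{R}^{n}$ as $x = a\vec{1}+y$ with $y\perp\vec{1}$, one has $x^{T}L_{K_n}x = n\|y\|_2^2$ and $x^{T}\tilde L x \in [\,n(1-\lambda/d)\|y\|_2^2,\ n(1+\lambda/d)\|y\|_2^2\,]$, hence
\[
(1-\lambda/d)\,L_{K_n} \preceq \tilde L \preceq (1+\lambda/d)\,L_{K_n}
\]
(one can alternatively read this off from Lemma~\ref{lem:graphic-ineq-mult} restricted to $\vec{1}^{\perp}$). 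Setting $\delta := \lambda/d$ and $c := 1/(1-\delta)$, and using $1/(1-\delta)\ge 1+\delta$ so that the lower bound is the binding one, this gives $c\,L_{K_n} \succeq \tilde L \succeq \tfrac1c L_{K_n}$, i.e. the reweighted expander is a $c$-approximation of $K_n$. Given $\epsilon\in(0,1)$, choose $\delta$ small enough that $1/(1-\delta)\le 1+\epsilon$ (e.g. $\delta\le\epsilon/2$). It then remains to exhibit, for some fixed $d$ and all sufficiently large $n$, a $d$-regular graph with $\lambda/d\le\delta$.

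Finally I would supply such graphs. The key input is that $\lambda$ can be pushed essentially down to $2\sqrt{d-1}$: by Friedman's theorem a uniformly random $d$-regular graph on $n$ vertices has $\lambda\le 2\sqrt{d-1}+o_n(1)$ with probability tending to $1$ (a cruder bound $\lambda = O(\sqrt{d})$ w.h.p.\ would already suffice), and alternatively the explicit Ramanujan constructions mentioned earlier achieve $\lambda\le 2\sqrt{d-1}$. Thus for any fixed $d\ge 3$ there exists, for all large $n$, a $d$-regular graph $G_n$ with $\lambda/d \le 2\sqrt{d-1}/d + o_n(1) \le 2/\sqrt{d}+o_n(1)$. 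Picking $d$ so large that $2/\sqrt{d}<\delta/2$ yields $\lambda/d<\delta$ for all sufficiently large $n$, and the previous paragraph closes the argument.

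I expect the main obstacle to be precisely this last ingredient. The earlier existence theorem only guarantees constant edge expansion $h(G)=\Theta(d)$, which via Cheeger's inequality gives only $\mu_2 \le (1-c)d$ for a constant $c$ (and says nothing about $\mu_n$) --- i.e.\ $\lambda/d$ bounded away from $1$ but not close to $0$. To drive $\lambda/d\to 0$ one genuinely needs near-Ramanujan spectral expanders, and the Alon--Boppana bound $\lambda\ge 2\sqrt{d-1}-o_n(1)$ shows this is essentially the best possible; the theorem works only because $2\sqrt{d-1}/d\to 0$ as $d\to\infty$, so the whole proof is really about choosing $d$ large after locking in $n$-uniform near-optimal expansion.
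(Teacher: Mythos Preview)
Your proposal is correct and follows essentially the same route as the paper: translate the $(1\pm\epsilon)$-approximation into the two-sided spectral sandwich $(1-\delta)L_{K_n}\preceq \tilde L\preceq (1+\delta)L_{K_n}$ on $\vec{1}^{\perp}$, reduce this to $\lambda/d\le\delta$, and then invoke the existence of good spectral expanders. The only cosmetic difference is that the paper rescales $K_n$ down by $d/n$ rather than rescaling $G$ up by $n/d$; these are equivalent. You are in fact more careful than the paper on the existence step---the paper simply posits a $d$-regular graph with $|\alpha_i|\le\epsilon d$ for all $i\ge2$ and verifies the approximation, leaving the existence of such graphs implicit (the needed Ramanujan background appears earlier in the notes), whereas you correctly spell out that one must choose $d\gtrsim 1/\epsilon^2$ and appeal to Friedman or explicit Ramanujan constructions to get $\lambda\approx 2\sqrt{d-1}$.
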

\begin{Proof}
Recall that a constant-degree expander is a $d$-regular graph whose 
Adjacency Matrix eigenvalues satisfy
\begin{equation}
|\alpha_i| \le \epsilon d ,
\label{eqn:expander-eigenval-bound}
\end{equation}
for all $i \ge 2$, for some $\epsilon < 1$.
We will show that graphs satisfying this condition also satisfy the
condition of Def.~\ref{def:graph-c-approx} (with $c=1+\epsilon$) to be 
a good approximation of the complete graph.

To do so, recall that 
\[
\left(1-\epsilon\right) H \preceq G \preceq \left(1+\epsilon\right) H 
\]
means that 
\[
\left(1-\epsilon\right) x^TL_Hx \le x^TL_Gx \le \left(1+\epsilon\right) x^TL_Hx   .
\]
Let $G$ be the Adjacency Matrix of the graph whose eigenvalues 
satisfy Eqn.~(\ref{eqn:expander-eigenval-bound}).
Given this, recall that the Laplacian eigenvalues satisfy 
$\lambda_i = d-\alpha_i$, and so all of the non-zero eigenvalues of 
$L_G$ are in the interval between $\left(1-\epsilon\right)d$ and
$\left(1+\epsilon\right)d$.
I.e., for all $x$ s.t. $x \perp \vec{1}$, we have that
\[
\left(1-\epsilon\right) x^Tx \le x^TL_Gx \le \left(1+\epsilon\right) x^Tx   .
\]
(This follows from Courant-Fischer or by expanding $x$ is an eigenvalue basis.)

On the other hand, for the complete graph $K_n$, we know that all vectors 
$x$ that are $\perp \vec{1}$ satisfy
\[
x^T L_{K_n}x = n x^Tx .
\]
So, let $H$ be the graph 
\[
H = \frac{d}{n} K_{n}   ,
\]
from which it follows that
\[
x^TL_Hx = d x^Tx .
\]
Thus, the graph $G$ is an $\epsilon$-approximation of the graph $H$,
from which the theorem follows.
\end{Proof}

\noindent
For completeness, consider $G-H$ and let's look at its norm to see that 
it is small.
First note that 
\[
\left(1-\epsilon\right)H \preceq G \preceq \left(1+\epsilon\right)H 
\mbox{ implies that }
-\epsilon H \preceq G-H \preceq \epsilon H .
\]
Since $G$ and $H$ are symmetric, and all of the eigenvalues of 
$\epsilon H$ are either $0$ or $d$, this tells us that 
\[
\|L_G - L_H\|_2 \le \epsilon d  . 
\]

\subsection{Expanders are graphs on which diffusions and random walks mix rapidly}

We will have more to say about different types of diffusions and random 
walks later, so for now we will only work with one variant and establish 
one simple variant of the idea that random walks on expander graphs mix or 
equilibrate quickly to their equilibrium distribution. 

Let $G=(V,E,W)$ be a weighted graph, and we want to understand something 
about how random walks behave on $G$.
One might expect that if, e.g., the graph was a dumbbell graph, then 
random walks that started in the one half would take a very long time to
reach the other half; on the other hand, one might hope that if there 
are no such bottlenecks, e.g., bottlenecks revealed by the expansion 
of second eigenvalue, than random walks would mix relatively quickly.

To see this, let $p_t \in \mathbb{R}^{n}$, where $n$ is the number of 
nodes in the graph, be a probability distribution at time $t$. 
This is just some probability distribution over the nodes, \emph{e.g.}, 
it could be a discrete Dirac $\delta$-function, \emph{i.e.}, the 
indicator of a node, at time $t=0$; it could be the uniform distribution; 
or it could be something else.
Given this distribution at time $t$, the transition rule that governs 
the distribution at time $t+1$ is:
\begin{itemize}
\item
To go to $p_{t+1}$, move to a neighbor with probability $\sim$ the weight 
of the edge.
(In the case of unweighted graphs, this means that move to each 
neighbor with equal probability.)
That is, to get to $p_{t+1}$ from $p_t$, sum over neighbors
\[
p_{t+1}(u) = \sum_{v:(u,v) \in E} \frac{W(u,v)}{d(v)} p_{t}(v)
\]
where $d(v) = \sum_u W(u,v)$ is the weighted degree of $v$.
\end{itemize}

As a technical point, there are going to be bottlenecks, and so we will 
often consider a ``lazy'' random walk, which removed that trivial 
bottleneck that the graph is bipartite thus not mixing (i.e. the stationary distribution doesn't exist) 
and only increases the mixing time by a factor of two (intuitively, on expectation in two steps in the ``lazy'' walk we walk one step as in the simple random walk)---which doesn't 
matter in theory, since there we are interested in polynomial versus 
exponential times, and in practice the issues might be easy to diagnose 
or can be dealt with in less aggressive ways.
Plus it's nicer in theory, since then things are SPSD.

By making a random walk ``lazy,'' we mean the following:
Let
\[
p_{t+1}(u) = \frac{1}{2} p_{t}(u) 
           + \frac{1}{2} \sum_{v:(u,v) \in E} \frac{W(u,v)}{d(v)} p_{t}(v)  .
\]
That is, $p_{t+1} = \frac{1}{2}\left(I+AD^{-1}\right)p_{t}$, and so the
transition matrix $W_{G}=A_{G} D_{G}^{-1}$ is replaced with
$W_{G} = \frac{1}{2}\left(I+A_{G}D_{G}^{-1}\right)$---this is an asymmetric 
matrix that is similar in some sense to the normalized Laplacian.

Then, after $t$ steps, we are basically considering $W_G^t$, in the sense
that
\[
p_0 \rightarrow p_t = W p_{t-1} = W^2 p_{t-2} = \cdots = W^t p_t  .
\]

\textbf{Fact.}
Regardless of the initial distribution, the lazy random walk converges to 
$\pi(i) = \frac{d(i)}{\sum_j d(j)}$, which is the right eigenvector of $W$
with eigenvalue $1$.

\textbf{Fact.}
If $1=\omega_1 \ge \omega_2 \ge \cdots \omega_n \ge 0$ are eigenvalues of
$W$, with $\pi(i) = \frac{d(i)}{\sum_j d(j)}$, then $\omega_2$ governs the
rate of convergence to the stationary distribution.

There are a number of ways to formalize this ``rate of mixing'' result, 
depending on the norm used and other things.
In particular, a very good way is with the total variation distance, which 
is defined as:
\[
\|p-q\|_{TVD} = \max_{S \subseteq V} \left\{ \sum_{v \in S} p_v - \sum_{v \in S} q_v \right\}
              = \frac{1}{2} \|p-q\|_1 .
\]
(There are other measures if you are interested in mixing rates of Markov chains.)
But the basic point is that if $1-\omega_2$ is large, \emph{i.e.}, you 
are an expander, then a random walk converges fast.
For example:
\begin{theorem}
Assume $G=(V,E)$ with $|V|=n$ is $d$-regular, $A$ is the adjacency matrix 
of $G$, and $\hat{A}=\frac{1}{d}A$ is the transition matrix of a random 
walk on $G$, i.e., the normalized Adjacency Matrix. 
Also, assume $\lambda=\max(|\mu_{2}|,|\mu_{n}|)=\alpha d$ (recall $\mu_i$ is the $i$-th largest eigenvalue of $A$, not $\hat{A}$).
Then
\[
||\hat{A}^{t}p-u||_{1}\leq\sqrt{n}\alpha^{t}   , 
\]
where $u$ is the stationary distribution of the random walk, which is the uniform distribution in the undirected $d$-regular graph, and $p$ is
an arbitrary initial distribution on $V$.

In particular, 
if $t \ge \frac{c}{1-\alpha}\log\left(\frac{n}{\epsilon}\right)$, for 
some absolute constant $c$ independent of $n$,
then $\| u - \hat{A}^{t} p \| \le \epsilon$.
\end{theorem}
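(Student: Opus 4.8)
The plan is to decompose the initial distribution into its stationary part plus an orthogonal remainder, track how $\hat{A}^t$ shrinks the remainder using the spectral gap, and then convert the resulting $\ell_2$ bound into the claimed $\ell_1$ bound; the ``in particular'' clause is then just an elementary manipulation of logarithms.

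First I would use $d$-regularity to observe that $\hat{A}=\frac1d A$ is symmetric, so by the Spectral Theorem it has an orthonormal eigenbasis $v_1,\ldots,v_n$ with real eigenvalues $\omega_i=\mu_i/d$, where $v_1=\frac{1}{\sqrt n}\vec 1$ and $\omega_1=1$; in particular $\hat A u = u$ for the uniform vector $u=\frac1n\vec 1$ (each row of $\hat A$ sums to $1$), and $|\omega_i|\le\alpha$ for all $i\ge 2$ by the hypothesis $\lambda=\max(|\mu_2|,|\mu_n|)=\alpha d$. Writing $p = u + w$ with $w = p-u$, I note $\vec 1^T w = 1-1 = 0$, so $w\perp v_1$ and hence $w=\sum_{i\ge 2}c_i v_i$. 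Then $\hat A^t p - u = \hat A^t w = \sum_{i\ge 2}\omega_i^t c_i v_i$, and orthonormality gives $\|\hat A^t p - u\|_2^2 = \sum_{i\ge 2}\omega_i^{2t}c_i^2 \le \alpha^{2t}\sum_{i\ge 2}c_i^2 = \alpha^{2t}\|w\|_2^2$.

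Next I would bound $\|w\|_2$. Since $p$ is a probability vector, $p^T u = \frac1n$ and $\|u\|_2^2=\frac1n$, so $\|w\|_2^2 = \|p\|_2^2 - 2p^Tu + \|u\|_2^2 = \|p\|_2^2 - \frac1n \le \|p\|_2^2 \le \|p\|_1^2 = 1$, the last inequality using $p\ge 0$. Hence $\|\hat A^t p - u\|_2 \le \alpha^t$, and by Cauchy--Schwarz across the $n$ coordinates $\|\hat A^t p - u\|_1 \le \sqrt n\,\|\hat A^t p - u\|_2 \le \sqrt n\,\alpha^t$, which is the first claim. For the ``in particular'' clause I would invoke the elementary inequality $\ln(1/\alpha)\ge 1-\alpha$ for $\alpha\in(0,1)$ (equivalent to $\ln x \le x-1$): it suffices to have $\sqrt n\,\alpha^t\le\epsilon$, i.e. $t\ln(1/\alpha)\ge\ln(\sqrt n/\epsilon)$, and since $\ln(\sqrt n/\epsilon)=\tfrac12\ln n+\ln(1/\epsilon)\le\ln(n/\epsilon)$, taking $t\ge\frac{1}{1-\alpha}\ln(n/\epsilon)$ forces $t(1-\alpha)\ge\ln(\sqrt n/\epsilon)$ and hence $\|u-\hat A^t p\|_1\le\epsilon$ (the constant $c$ can absorb any change of log base or slack).

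There is no serious obstacle here; the only points that need care are (i) the use of symmetry of $\hat A$, which is exactly where $d$-regularity enters and which is why the degree-heterogeneous case must instead work in the degree-weighted inner product with the normalized Laplacian, and (ii) the norm bookkeeping, in particular the unavoidable loss of a $\sqrt n$ factor in passing from $\ell_2$ to $\ell_1$ and the crude but sufficient bound $\|p-u\|_2\le 1$.
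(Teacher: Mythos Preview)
Your proof is correct and follows essentially the same approach as the paper. The only cosmetic difference is that the paper packages the spectral argument as an operator-norm bound on $\hat{A}^t-\hat{J}$ (where $\hat{J}=\tfrac{1}{n}\vec{1}\vec{1}^T$, so $\hat{J}p=u$) and then uses $\|\hat{A}^t p - u\|_2\le\|\hat{A}^t-\hat{J}\|_2\|p\|_2$, whereas you decompose $p=u+w$ directly and bound $\|\hat{A}^t w\|_2$; these are equivalent, and your explicit treatment of the ``in particular'' clause via $\ln(1/\alpha)\ge 1-\alpha$ is a detail the paper omits.
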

\begin{proof}
Let us define the matrix $\hat J = \frac{1}{n} \vec{1} \vec{1}^\top $,
where, as before, $\vec{1}$ is the all ones vector of length $n$. Note that,
for any probability vector $p$, we have
\begin{align*}
	\hat J p &= \frac{1}{n} \vec{1} \vec{1}^\top p \\
	&= \frac{1}{n} \vec{1} \cdot 1 \\
	&= u.
\end{align*}
Now, since $\hat A = \frac{1}{d} A $ we have $\hat \mu_i = \mu_i/d$,
where $\hat \mu_i$ denotes the $i$th largest eigenvalue of $\hat A$, and the
eigenvectors of $\hat A$ are equal to those of $A$. Hence, we have
\begin{align*}
	\big \Vert \hat A ^t - \hat J \big \Vert_2 &= \max_{w:\Vert w \Vert_2 \leq 1} \Vert (\hat A^t - \hat J) w \Vert_2 \\
	&= \sigma_{\max} \left( \hat A^t - \hat J \right) \\
	&= \sigma_{\max} \left( \sum_{i=1}^{n} \hat \mu_i^t v_i v_i^\top - \frac{1}{n} \vec{1} \vec{1}^\top  \right) \\
	&\overset{(a)}{=}  \sigma_{\max} \left(  \sum_{i=2}^{n} \hat \mu_i^t v_i v_i^\top  \right) \\
	&= \max\{ \vert \hat \mu_2^t \vert, \vert \hat \mu_n^t \vert\} \\
	&= \alpha^t,
\end{align*}
where $(a)$ follows since $v_1 = \frac{1}{\sqrt{n}} \vec{1}$ and $\hat
\mu_1 = 1$. Then,
\begin{align*}
	\big \Vert \hat A^t p - u \big \Vert_1 
	&\leq \sqrt{n} \big\Vert \hat A^t p - u \big\Vert_2 \\
	&\leq \sqrt{n} \big\Vert \hat A^t p - \hat J p \big\Vert_2 \\
	&\leq \sqrt{n} \big\Vert \hat A^t - \hat J \big\Vert_2 \big \Vert p \big \Vert_2 \\
	&\leq \sqrt{n} \alpha^t,
\end{align*}
which concludes the proof.
\end{proof}

This theorem shows that if the spectral gap is large (i.e. $\alpha$ is small), 
then we the walk mixes rapidly.

This is one example of a large body of work on rapidly mixing Markov 
chains.  
For example, there are extensions of this to degree-heterogeneous graphs 
and all sorts of other thigns
Later, we might revisit this a little, when we see how tight this is; 
in particular, one issue that arises when we discuss local and locally-biased 
spectral methods is that how quickly a random walk mixes depends on not 
only the second eigenvalue but also on the size of the set achieving that
minimum conductance value.

\newpage

\section{%
(02/19/2015): 
Expanders, in theory and in practice (2 of 2)}

Reading for today.
\begin{compactitem}
\item
Same as last class.
\end{compactitem}

Here, we will describe how expanders are the metric spaces that are least 
like low-dimensional Euclidean spaces (or, for that matter, any-dimensional 
Euclidean spaces).
Someone asked at the end of the previous class about what would an expander
``look like'' if we were to draw it.
The point of these characterizations of expanders---that they don't have 
good partitions, that they embed poorly in low dimensional spaces, etc.---is
that \emph{you can't draw them to see what they look like}, or at least you 
can't draw them in any particularly meaningful way.
The reason is that if you could draw them on the board or a two-dimensional
piece of paper, then you would have an embedding into two dimensions.
Relatedly, you would have partitioned the expander into two parts, i.e., 
those nodes on the left half of the page, and those nodes on the right half 
of the page.
Any such picture would have roughly as many edges crossing between the two 
halves as it had on either half, meaning that it would be a 
non-interpretable mess.
This is the reason that we are going through this seemingly-circuitous 
characterizations of the properties of expanders---they are important, but
since they can't be visualized, we can only characterize their properties 
and gain intuition about their behavior via these indirect means.

\subsection{Introduction to Metric Space Perspective on Expanders}

To understand expanders from a metric space perspective, and in particular to
understand how they are the metric spaces that are least like low-dimensional 
Euclidean spaces, let's back up a bit to the seemingly-exotic subject of 
\emph{metric spaces} (although in retrospect it will not seem so exotic or be 
so surprising that it is relevant).
\begin{itemize}
\item
Finite-dimensional Euclidean space, i.e., $\mathbb{R}^{n}$, with $n < \infty$, 
is an example of a metric space that is very nice but that is also quite 
nice/structured or limited.
\item
When you go to infinite-dimensional Hilbert spaces, things get much more 
complex; but $\infty$-dimensional RKHS, as used in ML, are 
$\infty$-dimensional Hilbert spaces that are sufficiently regularized that 
they inherit most of the nice properties of $\mathbb{R}^{n}$.
\item
If we measure distances in $\mathbb{R}^{n}$ w.r.t. other norms, \emph{e.g.}, 
$\ell_1$ or $\ell_{\infty}$, then we step outside the domain of Hilbert 
spaces to the domain of Banach spaces or normed vector spaces.
\item
A graph $G=(V,E)$ is completely characterized by its shortest path or 
geodesic metric; so the metric space is the nodes, with the distance being 
the geodesic distance between the nodes.
Of course, you can modify this metric by adding nonnegative weights to 
edges like with some nonlinear dimensionality reduction methods.
Also, you can assign a vector to vertices and thus view a graph geometrically.
(We will get back to the question of whether there are other distances that 
one can associate with a graph, e.g., resistance of diffusion based distances; 
and we will ask what is the relationship between this and geodesic distance.)
\item
The data may not be obviously a matrix or a graph.
Maybe you just have similarity/dissimilarity information, \emph{e.g.}, 
between DNA sequences, protein sequences, or microarray expression levels.
Of course, you might want to relate these things to matrices or graphs in 
some way, as with RKHS, but let's deal with metrics first.
\end{itemize}
So, let's talk aobut metric spaces more generally.
The goal will be to understand how good/bad things can be when we consider 
metric information about the data.


So, we start with a definition:
\begin{definition}
$(X,d)$ is a \emph{metric space} if
\begin{itemize}
\item
$d:X \times X \rightarrow \mathbb{R}^{+}$ (nonnegativity)
\item
$d(x,y)=0$ iff $x=y$   
\item
$d(x,y)=d(y,x)$ (symmetric)
\item
$d(x,y) \le d(x,z)+d(z,y)$ (triangle inequality)
\end{itemize}
\end{definition}
The idea is that there is a function over the set $X$ that takes as input 
pairs of variables that satisfies a generalization of what our intuition
from Euclidean distances is: namely, nonnegativity, the second condition 
above, symmetry, and the triangle inequality.
Importantly, this metric does not need to come from a dot product, and so
although the intuition about distances from Euclidean spaces is the 
motivation, it is significantly different and more general.
Also, we should note that if various conditions are satisfied, then 
various metric-like things are obtained:
\begin{itemize}
\item
If the second condition above is relaxed, but the other conditions are 
satisfied, then we have a \emph{psuedometric}.
\item
If symmetry is relaxed, but the other conditions are satisfied, 
then we have a \emph{quasimetric}.
\item
If the triangle inequality is relaxed , but the other conditions are 
satisfied, then we have a \emph{semimetric}.
\end{itemize}
We should note that those names are not completely standard, and to confuse
matters further sometimes the relaxed quantities are called metrics---for
example, we will encounter the so-called \emph{cut metric} describing 
distances with respect to cuts in a graph, which is not really a metric 
since the second condition above is not satisfied.

More generally, the distances can be from a Gram matrix, a kernel, or even 
allowing algorithms in an infinite-dimensional space.

Some of these metrics can be a little counterintuitive, and so for a range 
of reasons it is useful to ask how similar or different two metrics are,
\emph{e.g.}, can we think of a metric as a tweak of a low-dimensional space, 
in which case we might hope that some of our previous machinery might apply.
So, we have the following question:
\begin{question}
How well can a given metric space $(X,d)$ be approximated by $\ell_2$, 
where $\ell_2$ is the metric space $(\mathbb{R}^{n},||\cdot||)$, where 
$\forall x,y\in\mathbb{R}^{n}$, we have 
$||x-y||^2=\sum_{i=1}^{n}(x_i-y_i)^2$.
\end{question}
The idea here is that we want to replace the metric $d$ with something 
$d'$ that is ``nicer,'' while still preserving distances---in that case, 
since a lot of algorithms use only distances, we can work with $d'$ in the 
nicer place, and get results that are algorithmically and/or statistically 
better without introducing too much error.
That is, maybe it's faster without too much loss, as we formulated it
before; or maybe it is better, in that the nicer place introduced some 
sort of smoothing.
Of course, we could ask this about metrics other than $\ell_2$; we just 
start with that since we have been talking about it.

There are a number of ways to compare metric spaces.
Here we will start by defining a measure of distortion between two metrics.
\begin{definition}
Given a metric space $(X, d)$ and our old friend the metric space 
$(\mathbb{R}^{n}, \ell_2)$, and a mapping f: $X \rightarrow \mathbb{R}^{n}$:
\begin{itemize}
\item
$expansion(f) = max_{x_1, x_2 \in X} \frac{\left||{f(x_1) - f(x_2)}\right||_2}{d(x_1, x_2)}$
\item
$contraction(f) = max \frac{d(x_1, x_2)}{\left||{f(x_1) - f(x_2)}\right||}$
\item
$distortion(f) = expansion(f) \cdot contraction(f)$
\end{itemize}
\end{definition}

As usual, there are several things we can note:
\begin{itemize}
\item
An embedding with distortion $1$ is an \emph{isometry}.
This is very limiting for most applications of interset, which is OK since 
it is also unnecessarily strong notion of similarity for most applications 
of interest, so we will instead look for low-distortion embeddings.
\item
There is also interest in embedding into $\ell_1$, which we will return to 
below when talking about graph partitioning.
\item
There is also interest in embedding in other ``nice'' places, like trees, 
but we will not be talking about that in this class.
\item
As a side comment, a Theorem of Dvoretzky 
says that any embedding into normed spaces, $\ell_2$ is the hardest.
So, aside from being something we have already seen, this partially 
justifies the use of $\ell_2$ and the central role of $\ell_2$ in embedding 
theory more generally.
\end{itemize}

Here, we should note that we have already seen one example (actually, 
several related examples) of a low-distortion embedding.
Here we will phrase the JL lemma that we saw before in our new nomenclature.

\begin{theorem}[JL Lemma]
Let $X$ be an $n$-point set in Euclidean space, \emph{i.e.}, 
$X \subset \ell_2^n$, and fix $\epsilon \in (0,1]$.
Then $\exists$ a $(1+\epsilon)$-embedding of $X$ into $\ell_2^k$, 
where $k=O\left(\frac{\log n}{\epsilon^2}\right)$.
\end{theorem}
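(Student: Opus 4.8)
The plan is to prove this by the probabilistic method, using a random linear map. First I would let $\Phi = \tfrac{1}{\sqrt{k}} G \in \mathbb{R}^{k \times n}$ where $G$ has i.i.d.\ standard Gaussian entries (equivalently, up to the normalization factor, $\Phi$ is projection onto a uniformly random $k$-dimensional subspace). Since $\Phi$ is linear, the images of the points satisfy $f(x_i) - f(x_j) = \Phi(x_i - x_j)$, so it suffices to show that with positive probability $\Phi$ preserves the lengths of all $\binom{n}{2}$ difference vectors $x_i - x_j$ up to a multiplicative factor lying in $[1-\epsilon, 1+\epsilon]$; then $\mathrm{distortion}(f) \le \tfrac{1+\epsilon}{1-\epsilon} = 1 + O(\epsilon)$, and replacing $\epsilon$ by a constant multiple of itself absorbs the $O(\cdot)$.

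The core of the argument is a one-vector concentration statement. For a fixed unit vector $v \in \mathbb{R}^n$, the coordinates $(\Phi v)_i = \tfrac{1}{\sqrt k}\langle g_i, v\rangle$ (where $g_i$ is the $i$th row of $G$ and $\|v\|=1$) are i.i.d.\ $N(0,1/k)$, so $\TNormS{\Phi v} = \tfrac1k \sum_{i=1}^k \langle g_i,v\rangle^2$ is $\tfrac1k$ times a $\chi^2_k$ random variable and has mean $1$. I would then establish the two-sided tail bound
\[
\Probab{\, \bigl| \TNormS{\Phi v} - 1 \bigr| > \epsilon \,} \;\le\; 2\, e^{-c\,\epsilon^2 k}
\]
for an absolute constant $c>0$, by the Chernoff / moment-generating-function method: using $\Expect{e^{t \sum_i \langle g_i,v\rangle^2}} = (1-2t)^{-k/2}$ for $t < 1/2$, optimizing $t$ in the Markov bound for each of the two tails, and simplifying the resulting exponent, which for the upper tail is of the form $-\tfrac{k}{2}\bigl(\epsilon - \ln(1+\epsilon)\bigr)$. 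Applying this with $v = (x_i-x_j)/\|x_i-x_j\|$ for every pair and taking a union bound gives total failure probability at most $2\binom{n}{2}\,e^{-c\epsilon^2 k}$; choosing $k = \lceil (C\log n)/\epsilon^2\rceil$ with $C$ a sufficiently large absolute constant makes this strictly less than $1$, so a length-preserving (hence low-distortion) $\Phi$ exists.

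I expect the only genuine obstacle to be the concentration inequality; the linearity reduction, the union bound, and the rescaling of $\epsilon$ are routine bookkeeping. The delicate point inside that step is that the naive Chernoff estimate yields an exponent of order $k(\epsilon^2 - \epsilon^3)$ rather than $k\epsilon^2$, and one needs the hypothesis $\epsilon \in (0,1]$ precisely to bound $\epsilon^2 - \epsilon^3 \ge \tfrac12\epsilon^2$ (equivalently, to control $\ln(1+\epsilon)$ against $\epsilon - \tfrac{\epsilon^2}{6}$) and thereby keep $c$ an absolute constant independent of $\epsilon$. An alternative to the explicit $\chi^2$ computation would be to invoke a general Bernstein-type tail bound for sums of i.i.d.\ sub-exponential random variables, but the self-contained Gaussian calculation is cleaner to present here and is what I would write out.
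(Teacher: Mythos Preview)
The paper does not actually prove the JL Lemma; it is stated without proof as a known result (the surrounding text only restates the conclusion informally and then moves on to Bourgain's theorem). Your proposal is the standard correct proof via random Gaussian projection, $\chi^2$ concentration, and a union bound over the $\binom{n}{2}$ pairs, and the discussion of why $\epsilon \in (0,1]$ is needed to keep the exponent of order $\epsilon^2 k$ is accurate.

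One small quibble: the parenthetical that $\Phi$ is ``equivalently, up to the normalization factor, projection onto a uniformly random $k$-dimensional subspace'' is not literally true---a scaled Gaussian matrix is not an orthogonal projection, and $\|\Phi v\|^2$ has a different distribution in the two models (exactly $\chi^2_k/k$ for the Gaussian versus a Beta-related distribution for the true projection). Both models yield the JL conclusion, and your actual computation uses only the Gaussian, so this does not affect the argument; just drop or soften that aside.
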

That is, Johnson-Lindenstrauss says that we can map $x_i \rightarrow f(x)$ 
such that distance is within $1 \pm \epsilon$ of the original.

A word of notation and some technical comments:
For $x\in\mathbb{R}^{d}$ and $p \in [1,\infty)$, the $\ell_p$ norm of $x$ is 
defined as $||x||_p = \left( \sum_{i=1}^{d} |x_i|^p \right)^{1/p}$.
Let $\ell_p^d$ denote the space $\mathbb{R}^{d}$ equipped with the $\ell_p$ 
norm.
Sometimes we are interested in embeddings into some space $\ell_p^d$, with 
$p$ given but the dimension $d$ unrestricted, \emph{e.g.}, in \emph{some}
Euclidean space s.t. $X$ embeds well.
Talk about: $\ell_p = $ the space of all sequences $(x_1,x_2, \ldots)$, with 
$||x||_p < \infty$, with $||x||_p$ defined as 
$||x||_p = \left( \sum_{i=1}^{\infty} |x_i|^p \right)^{1/p}$.
In this case, embedding into $\ell_p$ is shorthand for embedding into 
$\ell_p^d$ for some $d$.

Here is an important theorem related to this and that we will return to later.
\begin{theorem}[Bourgain]
Every $n$-point metric space $(X,d)$ can be embedded into Euclidean space 
$\ell_2$ with distortion $\leq O(\log n)$.
\end{theorem}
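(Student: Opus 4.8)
The plan is to use Bourgain's original construction: a random embedding $f:X\to\ell_2$ whose coordinates are (normalized) distances from $x$ to random subsets of $X$ sampled at a geometric range of densities, together with $O(\log n)$ independent repetitions at each density so that concentration kicks in. Concretely, set $m=\lceil\log_2 n\rceil$ and $N=C\log n$ for a constant $C$ to be fixed. For each scale $i\in\{1,\dots,m\}$ and repetition $j\in\{1,\dots,N\}$, let $A_{i,j}\subseteq X$ include each point of $X$ independently with probability $2^{-i}$ (throwing in a fixed basepoint so no set is empty), and put $f(x)_{i,j}=\frac{1}{\sqrt{mN}}\,d(x,A_{i,j})$, where $d(x,A)=\min_{a\in A}d(x,a)$. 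The easy half is non-expansion: each coordinate is $1$-Lipschitz since $|d(x,A)-d(y,A)|\le d(x,y)$, so $\|f(x)-f(y)\|_2^2\le\frac{1}{mN}\sum_{i,j}d(x,y)^2=d(x,y)^2$ deterministically. It remains to lower-bound $\|f(x)-f(y)\|_2$ by $\Omega(d(x,y)/\log n)$ simultaneously for all pairs.

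For the contraction bound, fix $x\ne y$, write $\rho=d(x,y)$, and define radii $r_0=0$ and, for $t\ge1$, $r_t=$ the smallest $r$ with $|\bar B(x,r)|\ge2^t$ and $|\bar B(y,r)|\ge2^t$; truncate at the first index $\hat t$ with $r_{\hat t}\ge\rho/4$, redefining $r_{\hat t}:=\rho/4$ (here $\hat t\le m$ since a ball of radius $\ge$ diameter contains all $n\le2^m$ points). The key lemma: for each $t\le\hat t$, a random set $A$ at density $2^{-t}$ satisfies $|d(x,A)-d(y,A)|\ge\Delta_t:=r_t-r_{t-1}$ with probability at least an absolute constant $c_0>0$. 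I would prove it by observing that just inside radius $r_t$ one of the two balls, say $B^\circ(x,r_t)$, has fewer than $2^t$ points while $\bar B(y,r_{t-1})$ has at least $2^{t-1}$ points; these two balls are disjoint (a common point would force $\rho\le r_t+r_{t-1}\le\rho/2$), so the events ``$A$ misses $B^\circ(x,r_t)$'' and ``$A$ meets $\bar B(y,r_{t-1})$'' are independent, each has constant probability ($(1-2^{-t})^{2^t}\gtrsim1$ and $1-(1-2^{-t})^{2^{t-1}}\gtrsim1$), and on their intersection $d(x,A)\ge r_t$ and $d(y,A)\le r_{t-1}$.

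Then I would assemble the bound. By a Chernoff estimate over the $N$ repetitions at scale $t$, with probability $1-e^{-\Omega(N)}$ at least $c_0N/2$ of the sets $A_{t,j}$ realize the good event, so
\[
\|f(x)-f(y)\|_2^2\ \ge\ \frac{1}{mN}\sum_{t=1}^{\hat t}\frac{c_0N}{2}\,\Delta_t^2\ =\ \frac{c_0}{2m}\sum_{t=1}^{\hat t}\Delta_t^2\ \ge\ \frac{c_0}{2m^2}\Big(\sum_{t=1}^{\hat t}\Delta_t\Big)^2\ =\ \frac{c_0}{2m^2}\Big(\frac{\rho}{4}\Big)^2,
\]
where the middle step is Cauchy--Schwarz (using $\hat t\le m$) and the last equality is the telescoping $\sum_t\Delta_t=r_{\hat t}-r_0=\rho/4$. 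Hence $\|f(x)-f(y)\|_2=\Omega(\rho/m)=\Omega(d(x,y)/\log n)$. Choosing $C$ large enough that the per-pair failure probability $m\cdot e^{-\Omega(N)}$ is $o(n^{-2})$ and union-bounding over the $\binom{n}{2}$ pairs, with positive probability every pair satisfies the bound, so a fixed embedding of distortion $O(\log n)$ exists (into $\mathbb{R}^{mN}$, though the theorem only asserts the distortion).

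The main obstacle is the contraction argument: getting the scale sequence $r_t$ and the $\rho/4$ truncation right so that the two balls used at scale $t$ are genuinely disjoint (which is what makes the ``miss one, hit the other'' events independent), and then tracking the factor-$m$ loss in passing from $\sum_t\Delta_t$ to $\sum_t\Delta_t^2$ via Cauchy--Schwarz alongside the $1/m$ coming from averaging over $m$ scales --- together these produce the $(\log n)^2$ in the squared norm, i.e.\ exactly the $\log n$ distortion claimed, and they are the reason the bound cannot be pushed below $\log n$ by this method. The remaining ingredients --- $1$-Lipschitz coordinates, the Chernoff concentration, and the final union bound --- are routine.
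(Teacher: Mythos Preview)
Your proof is correct and follows exactly the approach the paper sketches: the Fr\'echet-type embedding whose coordinates are distances to random subsets sampled at geometrically increasing sizes (equivalently, geometric densities $2^{-i}$), with $O(\log n)$ independent copies at each of the $O(\log n)$ scales. The paper, however, gives only a two-line ``Proof Idea'' --- it states the construction and asserts that it works, without any of the contraction analysis --- whereas you supply the full argument: the radius sequence $r_t$, the disjoint-balls lemma giving a constant-probability gap $\Delta_t$ at each scale, the Chernoff boost over repetitions, the Cauchy--Schwarz passage from $\sum_t\Delta_t^2$ to $(\sum_t\Delta_t)^2$, and the final union bound. So there is nothing to compare beyond level of detail; you have written out precisely what the paper leaves to the reader.
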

\begin{proof} [Proof Idea.]
(The proof idea is nifty and used in other contexts, but we won't use it 
much later, except to point out how flow-based methods do something similar.)
The basic idea is given $(X, d)$, map each point $x \rightarrow \phi(x)$ 
in $O(\log^2 n)$-dimensional space with coordinates equal to the distance to $S \subseteq X$ 
where $S$ is chosen randomly.
That is, given $(X,d)$, map every point $x \in X$ to $\phi(x)$, an 
$O(\log^2 n)$-dimensional vector, where coordinates in $\phi(\cdot)$ 
correspond to subsets $S \subseteq X$, and s.t. the $s$-th in $\phi(x)$ is
$d(x,S) = \min_{s \in S} d(x,s)$.
Then, to define the map, specify a collection of subsets we use selected 
carefully but randomly---select $O(\log n)$ subsets of size $1$, $O(\log n)$ 
subsets of size $2$, of size $4$, $8$, $\ldots$, $\frac{n}{2}$.
Using that, it works, \emph{i.e.}, that is the embedding.
\end{proof}
Note that the dimension of the Euclidean space was originally $O(\log^2n)$, 
but it has been improved to $O(\log n)$, which I think is tight.
Note also that the proof is algorithmic in that it gives an efficient 
randomized algorithm.

Several questions arise:
\begin{itemize}
\item
Q: Is this bound tight? 
A: YES, on expanders.
\item
Q: Let $c_2(X,d)$ be the distortion of the embedding of $X$ into $\ell_2$;
can we compute $c_2(X,d)$ for a given metric?
A: YES, with an SDP.
\item
Q: Are there metrics such that $c_2(X,d) \ll \log n$?
A: YES, we saw it with JL, \emph{i.e.}, high-dimensional Euclidean spaces, 
which might be trivial since we allow the dimension to float in the 
embedding, but there are others we won't get to.
\end{itemize}

\subsubsection {Primal}

The problem of whether a given metric space is $\gamma$-embeddable into 
$\ell_2$ is polynomial time solvable.
Note: this does not specify the dimension, just whether there is some 
dimension; asking the same question with dimension constraints or a fixed 
dimension is in general much harder.
Here, the condition that the distortion $\leq \gamma$ can be expressed as a
system of linear inequalities in Gram matrix correspond to vectors 
$\phi(x)$.
So, the computation of $c_2(x)$ is an SDP---which is easy or hard, depending 
on how you view SDPs---actually, given an input metric space $(X,d)$ and an 
$\epsilon > 0$, we can determine $c_2(X,d)$ to relative error $\leq\epsilon$ 
in $\mbox{poly}(n,1/\epsilon)$ time.

Here is a basic theorem in the area:
\begin{theorem}[LLR]
$\exists$ a poly-time algorithm that, given as input a metric space $(X,d)$,
computes $c_2(X,d)$, where $c_2(X,d)$ is the least possible distortion of
any embedding of $(X,d)$ into $(\mathbb{R}^{n},\ell_2)$.
\end{theorem}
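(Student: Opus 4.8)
The plan is to recognize the computation of $c_2(X,d)$ as a semidefinite optimization problem and then invoke a polynomial-time SDP solver (ellipsoid or interior-point method). First I would normalize away the scale ambiguity: since distortion is scale-invariant, any embedding $f:X\to\mathbb{R}^m$ may be rescaled so that $\mathrm{contraction}(f)=1$, i.e. $d(x,y)^2 \le \|f(x)-f(y)\|^2$ for all $x,y\in X$, in which case $\mathrm{distortion}(f)^2 = \max_{x,y}\frac{\|f(x)-f(y)\|^2}{d(x,y)^2}$. Thus $c_2(X,d)^2$ is exactly the infimum, over all such $f$ in any dimension $m$, of that maximum ratio.

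The key modeling observation is that the squared Euclidean distances of a configuration $f(x_1),\ldots,f(x_n)$ depend only on the Gram matrix $G\in\mathbb{R}^{n\times n}$ with $G_{ij}=\langle f(x_i),f(x_j)\rangle$, via $\|f(x_i)-f(x_j)\|^2 = G_{ii}+G_{jj}-2G_{ij}$; and a symmetric $G$ arises as such a Gram matrix of points in \emph{some} $\mathbb{R}^m$ (indeed $m=\mathrm{rank}(G)\le n$ suffices) if and only if $G\succeq 0$. This eliminates the vectors entirely, and it is precisely why leaving the target dimension unrestricted costs nothing: the Gram matrix of an embedding into any $\ell_2^m$ is an $n\times n$ PSD matrix, and conversely every $n\times n$ PSD matrix factors into an explicit embedding. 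Hence I would set up the SDP: minimize $t$ over a symmetric variable $G\in\mathbb{R}^{n\times n}$ and $t\in\mathbb{R}$ subject to $G\succeq 0$ and, for every pair $\{i,j\}$, the linear constraints $d(x_i,x_j)^2 \le G_{ii}+G_{jj}-2G_{ij} \le t\, d(x_i,x_j)^2$. These are linear in $(G,t)$ together with one PSD constraint, so this is a genuine SDP, and its optimal value $t^*$ satisfies $c_2(X,d)=\sqrt{t^*}$: an optimal $G$ factors (Cholesky/eigendecomposition) into an embedding achieving distortion $\sqrt{t^*}$, while any embedding yields a feasible $(G,t)$ with $t=\mathrm{distortion}^2$ after the normalization above. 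A polynomial-time SDP algorithm then outputs $t^*$, hence $c_2(X,d)$, to any prescribed relative accuracy $\epsilon$ in time polynomial in $n$ and $1/\epsilon$ (or $\log(1/\epsilon)$, depending on the solver).

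The main obstacle is the standard fine print of SDP solvability rather than the modeling. One must verify the feasible region is nonempty (any embedding works, e.g. the one from Bourgain's theorem proved above), that it becomes bounded after fixing the normalization $\mathrm{contraction}=1$ together with a centering constraint such as $G\vec{1}=0$ (equivalently $\sum_i f(x_i)=0$), and that a polynomial bound holds on the bit-complexity of a near-optimal solution. It is convenient to use Bourgain's bound $1\le t^*\le O(\log^2 n)$ to bound the search interval a priori, which also lets one replace the optimization by a binary search over feasibility SDPs ``does there exist $G\succeq 0$ with $d(x_i,x_j)^2\le G_{ii}+G_{jj}-2G_{ij}\le \gamma^2 d(x_i,x_j)^2$ for all $i,j$?''. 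Carrying through the ellipsoid-method analysis to certify the claimed running time and relative-error guarantee is the only technically delicate step; everything else is a direct translation of the definition of $c_2$ into a semidefinite program.
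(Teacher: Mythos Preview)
Your proposal is correct and follows essentially the same approach as the paper: normalize so that $\mathrm{contraction}(f)=1$, pass from the embedding vectors to their Gram matrix $Z$ (your $G$), express the distortion-$\gamma$ condition as the linear-in-$Z$ inequalities $d(x_i,x_j)^2 \le Z_{ii}+Z_{jj}-2Z_{ij} \le \gamma^2 d(x_i,x_j)^2$ together with $Z\succeq 0$, and invoke an SDP solver. Your write-up is in fact more careful than the paper's about the fine print (feasibility, boundedness via centering, the a priori Bourgain bound, and the binary-search-over-$\gamma$ variant), but the core idea is the same.
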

\begin{proof}
The proof is from HLW, and it is based on semidefinite programming.
Let $(X,d)$ be the metric space, 
let $|X|=n$, and let $f:X\rightarrow\ell_2$.
WLOG, scale $f$ s.t. $contraction(f)=1$.
Then, $distortion(f)\leq\gamma$ iff
\begin{equation}
d(x_i,x_j)^2 \leq ||f(x_i)-f(x_j)||^2 \leq \gamma^2 d(x_i,x_j)^2   .
\label{eqn:llr1-eq1}
\end{equation}
Then, let $u_i=f(x_i)$ be the $i$-th row of the embedding matrix $U$, and 
let $Z=UU^T$.
Note that $Z\in PSD$, and conversely, if $Z\in PSD$, then $Z=UU^T$, for some
matrix $U$.
Note also:
\begin{eqnarray*}
||f(x_i)-f(x_j)||^2 &=& ||u_i-u_j||^2 \\
                    &=& (u_i-u_j)^T(u_i-u_j) \\
                    &=& u_i^Tu_i + u_j^Tu_j - 2u_i^Tu_j  \\
                    &=& Z_{ii} + Z_{jj} - 2Z_{ij}  .
\end{eqnarray*}
So, instead of finding a $u_i=f(x_i)$ s.t.~(\ref{eqn:llr1-eq1}) holds, we
can find a $Z \in PSD$ s.t. 
\begin{equation}
d(x_i,x_j)^2 \leq Z_{ii} + Z_{jj} - 2Z_{ij} \leq \gamma^2 d(x_i,x_j)^2   .
\label{eqn:llr1-eq2}
\end{equation}
Thus, $c_2\leq\gamma$ iff $\exists Z \in SPSD$ s.t.~(\ref{eqn:llr1-eq2})
holds $\forall ij$.
So, this is an optimization problem, and we can solve this with simplex, 
interior point, ellipsoid, or whatever; and all the usual issues apply.
\end{proof}

\subsubsection {Dual}

The above is a Primal version of the optimization problem.
If we look at the corresponding Dual problem, then this gives a 
characterization of $c_2(X,d)$ that is useful in proving lower bounds.
(This idea will also come up later in graph partitioning, and elsewhere.)
To go from the Primal to the Dual, we must take a nonnegative linear 
combination of constraints.
So we must write $Z \in PSD$ in such a way, since that is the constraint 
causing a problem; the following lemma will do that.

\begin{lemma}
$Z \in PSD$ iff $\sum_{ij} q_{ij} z_{ij} \geq 0, \forall Q \in PSD$.
\end{lemma}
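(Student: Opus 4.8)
The plan is to recognize the quantity $\sum_{ij} q_{ij} z_{ij}$ as the trace inner product $Q \bullet Z = \Trace{QZ}$ (using that $Q$ and $Z$ are symmetric, as $Z = UU^T$ is in our application), and then to establish the two directions separately: the forward direction from the Spectral Theorem, and the reverse direction by testing against a rich enough family of PSD matrices, namely the rank-one ones. This is just the statement that the cone of PSD matrices is self-dual under the trace inner product.

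First I would prove the ``only if'' direction. Suppose $Z \succeq 0$. By the Spectral Theorem, write $Z = \sum_k \lambda_k v_k v_k^T$ with orthonormal eigenvectors $v_k$ and eigenvalues $\lambda_k \ge 0$ (nonnegativity of the $\lambda_k$ being exactly the content of $Z \succeq 0$). Then for any $Q \succeq 0$,
\[
\sum_{ij} q_{ij} z_{ij} = \Trace{QZ} = \sum_k \lambda_k \Trace{Q v_k v_k^T} = \sum_k \lambda_k \, v_k^T Q v_k \ge 0,
\]
since each summand is nonnegative: $\lambda_k \ge 0$, and $v_k^T Q v_k \ge 0$ because $Q \succeq 0$.

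For the ``if'' direction, I would use a well-chosen family of test matrices. Suppose $\sum_{ij} q_{ij} z_{ij} \ge 0$ for every $Q \succeq 0$. For an arbitrary vector $v \in \mathbb{R}^n$, the rank-one matrix $Q = vv^T$ is PSD, so the hypothesis gives $0 \le \Trace{(vv^T)Z} = v^T Z v$. Since $v$ was arbitrary, $Z \succeq 0$. (If one does not wish to assume $Z$ symmetric at the outset, observe $v^T Z v = v^T \tfrac{1}{2}(Z+Z^T) v$, so the argument shows the symmetric part of $Z$ is PSD, which is all that is needed, since the constraint is phrased via the symmetric Gram matrix.)

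There is essentially no hard step here; the only points requiring a little care are (i) that it suffices to test only against the rank-one PSD matrices $Q = vv^T$ — immediate, since these lie in $\{Q \succeq 0\}$ — and (ii) keeping the symmetry conventions straight so that $\sum_{ij} q_{ij} z_{ij}$ genuinely equals $\Trace{QZ}$. The payoff is that this self-duality is exactly what lets us dualize the SDP feasibility problem of the previous subsection: we may replace the awkward constraint $Z \succeq 0$ by the (infinitely many, but each linear in $Z$) inequalities $\sum_{ij} q_{ij} z_{ij} \ge 0$, one for each $Q \succeq 0$, and then form nonnegative linear combinations of these together with the distortion constraints~(\ref{eqn:llr1-eq2}) to obtain a lower-bound certificate on $c_2(X,d)$.
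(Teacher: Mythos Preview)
Your proof is correct and follows essentially the same approach as the paper: both use rank-one test matrices $Q=vv^T$ for the ``if'' direction, and for the ``only if'' direction you spectrally decompose $Z$ whereas the paper decomposes $Q$ into rank-one pieces --- a symmetric and inessential difference.
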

\begin{proof}
First, we will consider rank $1$ matrices;
the general result will follow since general PSD matrices are a linear 
combination of rank-$1$ PSD matrices of the form $qq^T$, \emph{i.e.}, 
$Q=qq^T$.

First, start with the $\Leftarrow$ direction:
for $q\in\mathbb{R}^{n}$, let $Q$ be $PSD$ matrix s.t. $Q_{ij}=q_iq_j$;
then
\[
q^tZq = \sum_{ij} q_iZ_{ij}q_j = \sum_{ij} Q_{ij}z_{ij} \geq 0 ,
\]
where the inequality follows since $Q$ is $PSD$.
Thus, $Z \in PSD$.

For the $\Rightarrow$ direction:
let $Q$ be rank-$1$ PSD matrix; thus, it has the form $Q=qq^T$ or 
$Q_{ij}=q_iq_j$, for $q\in\mathbb{R}^{n}$.
Then, 
\[
\sum_{ij}Q_{ij}z_{ij} = \sum_{ij}q_iZ_{ij}q_j \geq 0   ,
\]
where the inequality follows since $A$ is $PSD$.

Thus, since $Q \in PSD$ implies that 
$Q = \sum_i q_i q_i^T = \sum_i \Omega_i$, with $\Omega_i$ being a rank-$i$ 
PSD matrix, the lemma follows by working through things.
\end{proof}

Now that we have this characterization of $Z \in PSD$ in terms of a set 
of (nonnegative) linear combination of constraints, we are ready to get 
out Dual problem which will give us the nice characterization of $c_2(X,d)$.

Recall finding an embedding $f(x_i) = u_i$ iff finding a matrix $Z$ iff 
$\sum_{ij} q_{ij}z_{ij} \geq 0, \forall Q \in SPSD$.
So, the Primal constraints are:
\renewcommand{\labelenumi}{\Roman{enumi}.}
\begin{enumerate}
\item $\sum q_{ij}z_{ij}  \geq 0 $ for all $Q \epsilon PSD$
\item $z_{ii} + z_{jj} - 2z_{ij} \geq {d(x_i,x_j)}^2 $
\item $\gamma^2{d(x_i,x_j)}^2 \geq z_{ii} + z_{jj} - 2z_{ij}$   ,
\end{enumerate}
which hold $\forall ij$.
Thus, we can get the following theorem.

\begin{theorem}[LLR]
\[
C_2(X,d) = \max_{(P \epsilon PSD, P.1=0)} 
           \sqrt{\frac{\sum_{P_{ij} >0} P_{ij}d(x_i,x_j)^2}{-\sum_{(P_{ij} <0)} P_{ij}d(x_i,x_j)^2}}
\]
\end{theorem}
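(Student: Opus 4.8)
The plan is to derive this formula as the dual of the semidefinite feasibility program already written down, i.e., to apply SDP duality to the system of constraints (I)--(III). Recall that $c_2(X,d)$ is the smallest $\gamma$ for which there is a matrix $Z$ satisfying (I) (which by the Lemma above is just ``$Z$ is PSD''), (II), and (III); equivalently, $c_2(X,d)^2$ is the optimal value of the SDP that minimizes $t \in \mathbb{R}$ over symmetric $Z$ subject to $Z \succeq 0$, $Z_{ii}+Z_{jj}-2Z_{ij} \ge d(x_i,x_j)^2$, and $Z_{ii}+Z_{jj}-2Z_{ij} \le t\, d(x_i,x_j)^2$ for all $ij$. (This is a genuine SDP since the $d(x_i,x_j)^2$ are constants; it is feasible, e.g.\ by Bourgain's theorem, and bounded below since $t \ge 1$.) It is convenient to rewrite $Z_{ii}+Z_{jj}-2Z_{ij} = \langle L_{ij}, Z\rangle$, where $L_{ij} := L_{G_{ij}}$ is the elementary edge-Laplacian from the earlier part of these notes, so that (II) reads $\langle L_{ij}, Z\rangle \ge d(x_i,x_j)^2$ and (III) reads $\langle L_{ij}, Z\rangle \le t\, d(x_i,x_j)^2$.

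For the easy (weak-duality) inequality $c_2(X,d) \ge \max_P \sqrt{\cdots}$, I would build an infeasibility certificate by hand. Take nonnegative multipliers $\beta_{ij} \ge 0$ for the constraints (II) and $\delta_{ij} \ge 0$ for the constraints (III), and add up the resulting valid inequalities; collecting the $Z$-terms, this says $\langle M, Z\rangle \ge \sum_{ij}\beta_{ij} d(x_i,x_j)^2 - t\sum_{ij}\delta_{ij} d(x_i,x_j)^2$ for every feasible $Z$, where $M := \sum_{ij}(\beta_{ij}-\delta_{ij})L_{ij}$. Since each $L_{ij}$ has zero row sums, $M\vec{1}=0$. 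If in addition $-M$ is PSD, then the Lemma above (applied with $Q = -M$) gives $\langle M, Z\rangle \le 0$ for every PSD $Z$, hence for every feasible $Z$, which forces $\sum_{ij}\beta_{ij} d(x_i,x_j)^2 \le t\sum_{ij}\delta_{ij} d(x_i,x_j)^2$. Writing $P := -M$, a PSD matrix with $P\vec{1} = 0$, one checks that $M = \sum_{ij}(\beta_{ij}-\delta_{ij})L_{ij}$ forces $\beta_{ij} - \delta_{ij} = -M_{ij} = P_{ij}$ for $i \neq j$; choosing $\beta_{ij}$ and $\delta_{ij}$ to be the positive and negative parts of $P_{ij}$ (the condition $P\vec1 = 0$ is exactly what makes the diagonal of $\sum_{ij}(\beta_{ij}-\delta_{ij})L_{ij}$ come out right), the bound becomes $t \ge \frac{\sum_{P_{ij}>0}P_{ij}d(x_i,x_j)^2}{-\sum_{P_{ij}<0}P_{ij}d(x_i,x_j)^2}$. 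Taking $t = c_2(X,d)^2$ and the supremum over all such $P$ gives the $\ge$ direction.

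For the reverse inequality I would invoke strong duality for semidefinite programs: the SDP above has zero duality gap and an attained dual optimum, and, after eliminating the multiplier for the constraint $Z \succeq 0$ via the Lemma exactly as in the previous paragraph, the dual program is precisely the maximization over $P \succeq 0$ with $P\vec1 = 0$ of the ratio $\frac{\sum_{P_{ij}>0}P_{ij}d(x_i,x_j)^2}{-\sum_{P_{ij}<0}P_{ij}d(x_i,x_j)^2}$, whose optimal value therefore equals $c_2(X,d)^2$. The square root in the statement is just the passage from $t = \gamma^2$ back to $\gamma$.

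The main obstacle I expect is making the strong-duality step rigorous: one needs a constraint qualification (Slater / strict feasibility) to guarantee zero gap and attainment, and here there is slack, since after normalizing so that $contraction(f) = 1$ one can rescale $Z$ to make the inequalities (II), (III) strict for a slightly larger $t$, so some instance is strictly feasible; one must also check that the normalization built into the primal is consistent with the dual condition being simply $P\vec1 = 0$ with no further scaling constraint. Everything else, i.e., identifying $\beta_{ij}, \delta_{ij}$ with the positive/negative parts of $P$, verifying that $P\vec1=0$ is exactly the condition for $-P$ to decompose into elementary Laplacians, and bookkeeping over ordered versus unordered pairs, is routine once the duality framework is in place.
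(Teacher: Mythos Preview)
Your proposal is correct and follows essentially the same route as the paper. Both arguments take nonnegative combinations of the constraints (I)--(III), use the Lemma to rewrite the PSD constraint as $\langle Q,Z\rangle \ge 0$ for all PSD $Q$, and identify the multipliers on (II) and (III) with the positive and negative parts of the off-diagonal of $P$; the paper does this by directly specifying ``multiply (II) by $P_{ij}/2$ when $P_{ij}>0$ and (III) by $-P_{ij}/2$ when $P_{ij}<0$'' and then cancelling the $z_{ij}$-terms using $P\vec{1}=0$, which is exactly your $\beta_{ij},\delta_{ij}$ construction. The only cosmetic difference is that the paper phrases the hard direction as ``for $\gamma < C_2(X,d)$ the system is infeasible, hence some combination of constraints yields a contradiction'' (i.e., a Farkas/theorem-of-alternatives statement) rather than explicitly naming it as SDP strong duality with Slater, as you do; your explicit mention of the constraint qualification is, if anything, more careful than the paper's treatment.
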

\begin{proof}
The dual program is the statement that for $\gamma < C_2(X,d)$, thre must
exist a non-negative combination of the constraints of the primal problem 
that yields a contradiction.

So, we will assume $\gamma < C_2(x,d)$ and look for a contradiction, 
\emph{i.e.}, look for a linear combination of constraints such that the 
primal gives a contradiction.
Thus, the goal is to construct a nonnegative linear combination of 
primal constraints  to give a contradiction s.t. 
$Q.Z= \sum q_{ij}z_{ij} \geq 0$.

Recall that the cone of PSD matrices is convex. 

The goal is to zero out the $z_{ij}$.

(I) above says that $Q\cdot Z = \sum_{ij}q_{ij}z_{ij} \geq 0$.
Note that since PSD cone is convex, a nonnegative linear combination of
the form 
$\sum_{k} \alpha_k \langle Q,Z \rangle = P \cdot z$, for some $P \in PSD$.
So, modifying first constraint from the primal, you get 
\renewcommand{\labelenumi}{\Roman{enumi}'.}
\begin{enumerate}
\item 
$\sum_{ij} P_{ij}Z_{ij} = P \cdot Z  \geq 0 $, for some $P \epsilon PSD $ \\
To construct $P$, choose the elements such that you zero out $z_{ij}$ in the
following manner.
\begin{itemize}
\item
If $P_{ij}>0$, multiply second constraint from primal by $ P_{ij}/2$, (\emph{i.e.}, the constraint ${d(x_i,x_j)}^2 \leq z_{ii} + z_{jj} - 2z_{ij} $)
\item 
If $P_{ij}<0$, multiply third constraint from primal by $-P_{ij}/2$, (\emph{i.e.}, the constraint $z_{ii} + z_{jj} - 2z_{ij} \leq \gamma^2{d(x_i,x_j)}^2 $)
\item
If $P_{ij}=0$, multiply by $0$ constraints involving $z_{ij}$.
\end{itemize}
This gives
\begin{eqnarray*}
\frac{P_{ij}}{2}\left( z_{ii} + z_{jj} - 2z_{ij} \right) 
   &\geq& \frac{P_{ij}}{2} {d(x_i,x_j)}^2      \\
-\frac{P_{ij}}{2} \gamma^2{d(x_i,x_j)}^2 
   &\geq& -\frac{P_{ij}}{2} \left( z_{ii} + z_{jj} - 2z_{ij} \right) 
\end{eqnarray*}
from which it follows that you can modify the other constraints from primal
to be:
\item 
$\sum_{{ij}, P_{ij}>0} \frac{P_{ij}}{2} (z_{ii} + z_{jj} - 2z_{ij}) 
   \geq \sum_{{ij}, P_{ij}>0} \frac{P_{ij}}{2} d(x_i,x_j)^{2}$
\item 
$\sum_{{ij}, P_{ij}<0} \frac{P_{ij}}{2} (z_{ii} + z_{jj} - 2z_{ij}) 
   \geq \sum_{{ij}, P_{ij}<0} \frac{P_{ij}}{2} \gamma^2(d(x_i,x_j)^2)$
\end{enumerate}
If we add those two things, then we get, 
\[
\sum_i P_{ii}z_{ii} + \sum_{ij:P_{ij}>0} \frac{P_{ij}}{2}(z_{ii}+z_{jj}) + \sum_{ij} \frac{P_{ij}}{2}(z_{ii}+z_{jj}) \geq \text{RHS Sum} ,
\]
and so
\[
\sum_i P_{ii}z_{ii} + \sum_{ij:P_{ij} \ne 0} \frac{P_{ij}}{2}(z_{ii}+z_{jj}) \geq \text{RHS Sum},
\]
and so, since we choose $P$ s.t. $P\cdot\vec{1} = \vec{0}$, (i.e. $\sum_j P_{ij}=0$ for all $i$, and $\sum_i P_{ij}=0$ for all $j$ by symmetry) we have that
\[
0    = \sum_i\left(P_{ii} + \sum_{j:P_{ij}\ne 0} P_{ij}\right)z_{ij} 
  \geq RHS
     = \sum_{ij:P_{ij}>0} \frac{P_{ij}}{2}d(x_i,x_j)^2
     + \sum_{ij:P_{ij}<0} \frac{P_{ij}}{2}\gamma^2d(x_i,x_j)^2
\]
So, it follows that
\[
0 \geq \sum_{ij:P_{ij}>0} P_{ij}d(x_i,x_j)^2 + \sum_{ij:P_{ij}<0}\gamma^2d(x_i,x_j)^2 .
\]
This last observation is FALSE if
\[
\gamma^2 < \frac{ \sum_{ij:P_{ij}>0} P_{ij} d(x_i,x_j)^2 }{ \sum_{ij:P_{ij}<0} (-P_{ij}) d(x_i,x_j)^2 } 
\]
and so the theorem follows.

(In brief,
adding the second and third constraints above,
\begin{center}
$ 0 \geq \sum_{ij, P_{ij} >0 } P_{ij} d(x_i,x_j)^{2} +  \sum_{ij, P_{ij} <0 }P_{ij}\gamma^2(d(x_i,x_j)^2) $
\end{center}
This will be false if you choose $\gamma$ to be small---in particular, it 
will be false if $\gamma^2 \leq top/bottom$, from which the theorem will 
follow.
\end{proof}

\subsection{Metric Embedding into $\ell_{2}$}

We will show that expanders embed poorly in $\ell_2$---this is the basis 
for the claim that they are the metric spaces that are least like 
low-dimensional spaces in a very strong sense.

It is easy to see that an expander can be embedded into $\ell_{2}$ with
distortion $O(\log\, n)$ (just note that any graph can be embedded with 
distortion equal to its diameter)---in fact, any metric space can be 
embedded with that distortion. 
We will show that this result is tight, and thus that expanders are the
worst.

The basic idea for showing that expanders embed poorly in $\ell_2$ is:
If $G$ is a $d$-regular, $\epsilon$-expander, then $\lambda_2$ of $A_G$ is
$< d-\delta$ for $\delta = \delta(d,\epsilon) \ne \delta(n)$.
The vertices of a bounded degree graph can be paired up s.t. every pair of 
vertices are a distance $\Omega(\log n)$.
We can then let $B$ be a permutation matrix for the pairing, and use the
matrix $P = dI - A + \frac{\delta}{2}(B-I)$.

Note: We can have a simpler proof, using the theorem of Bourgain that 
expanders don't embed well in $\ell_2$, since we can embed in 
$\mathbb{R}^{diameter}$, and $diameter(expander)=\log n$.
But we go through this here to avoid (too much) magic.

Start with the following definitions:
\begin{definition}
A \emph{Hamilton cycle} in a graph $G=(V,E)$ is a cycle that visits every
vertex exactly once (except for the start and end vertices).
\end{definition}
\begin{definition}
A \emph{matching} is a set of pairwise non-adjacent edges, \emph{i.e.}, 
no two edges share a common vertex.
A vertex is \emph{matched} if it is incident to an edge.
A \emph{perfect matching} is a matching that matched all the vertices of 
a graph.
\end{definition}


The following theorem is the only piece of magic we will use here:
\begin{theorem}
A simple graph with $n \ge 3$ edges is Hamiltonian if every vertex has 
degree $\ge \frac{n}{2}$.
\end{theorem}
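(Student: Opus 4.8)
The statement is Dirac's theorem (here ``$n \ge 3$ edges'' should read ``$n \ge 3$ vertices''), and the plan is the classical longest-path / rotation argument. First I would dispose of connectivity: if $G$ were disconnected, its smallest connected component would contain at most $n/2$ vertices, so any vertex in that component would have degree strictly less than $n/2$, contradicting the hypothesis. Hence $G$ is connected, and I may assume this from now on.

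Next, pick a longest path $P = v_0 v_1 \cdots v_k$ in $G$. Maximality forces every neighbor of $v_0$ and every neighbor of $v_k$ to lie on $P$; otherwise $P$ could be extended at one of its endpoints. The heart of the argument is to produce a cycle $C$ whose vertex set is exactly $\{v_0, \dots, v_k\}$. Define $S = \{\, i \in \{1,\dots,k\} : v_0 v_i \in E \,\}$ and $T = \{\, i \in \{1,\dots,k\} : v_{i-1} v_k \in E \,\}$. Both are subsets of an index set of size $k \le n-1$, and since $\deg(v_0) \ge n/2$, $\deg(v_k) \ge n/2$, and all those neighbors lie on $P$, we get $|S| \ge n/2$ and $|T| \ge n/2$, so $|S| + |T| \ge n > k$ and therefore $S \cap T \neq \emptyset$. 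Choosing $i \in S \cap T$ gives the edges $v_0 v_i$ and $v_{i-1} v_k$, and then $v_0 v_1 \cdots v_{i-1} v_k v_{k-1} \cdots v_i v_0$ is a cycle through all $k+1$ vertices of $P$; call it $C$.

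Finally I would upgrade $C$ to a Hamilton cycle. If $k+1 < n$, then since $G$ is connected there is a vertex $w \notin V(C)$ adjacent to some vertex of $C$; cutting $C$ open at that vertex and prepending $w$ yields a path on $k+2$ vertices, contradicting the maximality of $P$. Hence $k+1 = n$, and $C$ is a Hamilton cycle of $G$.

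I expect the only genuinely delicate point to be the crossing step $|S|+|T| \ge n > k \Rightarrow S \cap T \neq \emptyset$ together with verifying that the reversed-segment walk $v_0 \cdots v_{i-1} v_k v_{k-1} \cdots v_i v_0$ really is a cycle visiting each $v_j$ exactly once; everything else is routine bookkeeping. If I wanted a different route I would keep the Ore-type closure proof in reserve, but the longest-path argument is the cleanest and matches the surrounding exposition (and it is exactly what is needed downstream: applying it to the ``far-apart'' graph on the vertices of a bounded-degree expander gives a perfect matching into pairs at distance $\Omega(\log n)$).
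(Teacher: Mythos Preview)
Your proof is correct and is the standard longest-path/rotation proof of Dirac's theorem. However, the paper does not actually prove this statement: it is introduced there as ``the only piece of magic we will use here'' and is quoted without proof, serving only as an input to the subsequent lemma about matchings of far-apart vertices. So there is nothing in the paper to compare your argument against; you have supplied a complete proof where the paper deliberately omitted one.
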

Note that if every vertex has degree $\ge \frac{n}{2}$, then the graph is
actually quite dense, and so from Szemeredi-type results relating dense
graphs to random graphs it might not be so surprising that there is a lot
of wiggle room.

\textbf{Note:}
A cycle immediately gives a matching.

Thus, we have the following lemma:
\begin{lemma}
Let $G=(V,E)$ be a $d$-regular graph, with $|V|=n$
If $H=(V,E')$ is a graph with the same vertex set as
$G$, in which two vertices $u$ and $v$ are adjacent iff 
$d_{G}(u,v)\geq \lfloor\log_{k}n\rfloor$.
Then, $H$ has a matching with $n/2$ edges.
\end{lemma}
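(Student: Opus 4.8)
The plan is to show that $H$ has large minimum degree and then invoke the Dirac-type Hamiltonicity theorem stated just above. The geometric input is that balls in a bounded-degree graph grow only exponentially in the radius, so a ball whose radius is below $\log_d n$ cannot contain too large a fraction of the vertices; hence in $H$ most vertices are adjacent to a given one.

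First I would bound the size of a ball $B_G(u,r)$ of radius $r$ around a fixed vertex $u$. Since $G$ is $d$-regular, at most $d(d-1)^{i-1}$ vertices lie at distance exactly $i$ from $u$, so
\[
|B_G(u,r)| \;\le\; 1 + \sum_{i=1}^{r} d\,(d-1)^{i-1} \;\le\; d^{\,r+1}
\]
for $d \ge 2$. Taking the base $k$ of the logarithm in the statement to be the degree $d$, and setting $r = \lfloor \log_d n\rfloor - 1$ (the threshold minus one), this gives $|B_G(u,r)| \le d^{\lfloor \log_d n\rfloor} \le n$; being slightly more careful with the radius, or assuming $n$ is large enough, or using the sharper geometric-series bound on the right-hand side, pushes this below $n/2$. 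Consequently, a vertex $u$ of $H$ fails to be adjacent only to the vertices inside $B_G(u,\lfloor\log_d n\rfloor-1)$, of which there are at most $n/2$, so $\mathrm{deg}_H(u) \ge n - n/2 = n/2$.

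Now $H$ is a simple graph on $n \ge 3$ vertices with minimum degree at least $n/2$, so by the Hamiltonicity theorem above it contains a Hamilton cycle $C$. A cycle on $n$ vertices has $n$ edges, and selecting every other edge of $C$ produces a matching in $H$ with $\lfloor n/2\rfloor$ edges — exactly $n/2$ when $n$ is even, and $\lfloor n/2\rfloor$ in general. That is the claimed matching; this is also exactly the ``a cycle immediately gives a matching'' remark recorded just before the lemma.

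The main obstacle is entirely in the bookkeeping of the middle step: the naive ball-growth estimate only yields $|B_G(u,\cdot)| \le n$ at radius $\lfloor\log_d n\rfloor - 1$, whereas we need it strictly below $n/2$ in order to conclude minimum degree $\ge n/2$ and apply the sufficient condition for Hamiltonicity. One therefore has to sharpen the ball count (sum the geometric series with the $(d-1)^{i-1}$ factor exactly), or shrink the radius by a fixed additive constant, or absorb the stray factor of $2$ into a ``for all sufficiently large $n$'' clause. The small degenerate cases ($d = 1,2$, where $G$ is a matching or a union of cycles and the expander setting does not apply, and $n$ odd, where one loses a single edge of the matching) should likewise be noted separately.
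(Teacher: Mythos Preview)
Your approach is correct and essentially identical to the paper's: bound the ball $B_G(u,r)$ using degree regularity, conclude $\deg_H(u)\ge n/2$, invoke Dirac's theorem for a Hamilton cycle, and pull out the matching. The paper handles the bookkeeping you flagged by asserting the cruder bound $|B_G(u,r)|\le k^r$ (rather than your $d^{r+1}$), which at $r=\lfloor\log_k n\rfloor-1$ gives $\le n/k\le n/2$ directly; this bound is a bit sloppy for very small $r$, but it sidesteps the factor-of-$2$ issue you noted without further work.
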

\begin{proof}
Since $G$ is a $d$-regular graph, hence for any vertex $x\in V$ and any 
value $r$, it has $\leq k^{r}$ vertices $y\in V$ can have $d_{G}(x,y)\leq r$,
\emph{i.e.}, only that many vertices are within a distance $r$.
If $r = \lfloor \log_{k} n \rfloor -1 $, then $\exists \leq \frac{n}{2}$ 
vertices within distance $r$; that is, at least half of the nodes of $G$ 
are further than $\log_{k}n-2$ from $x$; this means every node in $H$ has 
at least degree $n/2$.
So $H$ has a Hamilton cycle and thus a perfect matching, and by the 
above theorem the lemma follows.
\end{proof}

Finally we get to the main theorem 
that says that expanders embed poorly in $\ell_2$---note that this 
is a particularly strong statement or notion of nonembedding, as by Bourgain we know any graph (with the graph distance metric) can be embedded into $\ell_2$ with distortion $O(\log n)$, so expander is the worst case in this sense.
\begin{theorem} 
Let $d \geq 3$, and let $\epsilon > 0$.
If $G=(V,E)$ is a $(n,d)$-regular graph with $\lambda_2(A_G)\leq d-\epsilon$
and $|V|=n$, then
\[
C_{2}(G)=\Omega(\log\, n)
\]
where the constant inside the $\Omega$ depends on $d,\epsilon$.
\end{theorem}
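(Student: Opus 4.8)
The plan is to invoke the dual (LLR) characterization of $C_2$ proved above: for any symmetric positive semidefinite matrix $P$ with $P\vec 1 = 0$,
\[
C_2(G)^2 \;\ge\; \frac{\sum_{i,j:\,P_{ij}>0} P_{ij}\, d_G(i,j)^2}{-\sum_{i,j:\,P_{ij}<0} P_{ij}\, d_G(i,j)^2},
\]
where $d_G$ is the geodesic metric on $V$. So it suffices to exhibit one feasible $P$ making this ratio $\Omega(\log^2 n)$. Following the idea sketched in the excerpt, I would take
\[
P \;=\; dI - A_G + \tfrac{\delta}{2}\,(B - I),
\]
where $\delta \le \epsilon$ is a constant (depending only on $d,\epsilon$; one may take $\delta=\epsilon$) and $B$ is the $0$--$1$ adjacency matrix of a perfect matching $M$ of $V$ whose matched pairs are pairwise far apart in $G$. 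The design principle is that the negative off-diagonal support of $P$ will be exactly the edges of $G$ (contributing only $d_G=1$ to the denominator), the positive off-diagonal support will be the pairs of $M$ (contributing $d_G = \Omega(\log n)$ to the numerator), and positive semidefiniteness will be inherited from the spectral gap $d-\lambda_2(A_G)\ge\epsilon$.

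The crux — and the step I expect to be the main obstacle — is producing $M$: a set of $\Omega(n)$ disjoint pairs $\{u,v\}$ with $d_G(u,v)=\Omega(\log n)$. This is exactly the matching lemma stated above. In a $d$-regular graph the ball of radius $r$ about any vertex has at most $\sum_{i=0}^{r}d^i = O(d^r)$ vertices, so for $r=\lfloor\log_d n\rfloor-c$ with a suitable constant $c$ every such ball has fewer than $n/2$ vertices; hence in the auxiliary graph $H$ on $V$ with $u\sim v$ iff $d_G(u,v)\ge\lfloor\log_d n\rfloor$, every vertex has $H$-degree $\ge n/2$. By Dirac's theorem (a simple graph on $\ge 3$ vertices with minimum degree $\ge n/2$ is Hamiltonian), $H$ has a Hamilton cycle, and alternate edges of that cycle give a perfect matching of $H$, i.e.\ a perfect matching $M$ of $V$ all of whose pairs are at $G$-distance $\ge\lfloor\log_d n\rfloor$. (For $n$ odd one either drops a vertex — using that $C_2$ is monotone under restriction of the metric — or leaves one vertex unmatched and corrects $P$ by a rank-one term; both cost only constants, provided $c$ is taken large enough that the degree slack in $H$ survives removing a vertex.)

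With $M$ (hence $B$) in hand the rest is routine checking. Feasibility: $A_G\vec 1 = d\vec 1$ and $B\vec 1=\vec 1$ give $P\vec 1 = 0$; and for $x\perp\vec 1$ we have $x^{\top}(dI-A_G)x \ge (d-\lambda_2(A_G))\|x\|_2^2 \ge \epsilon\|x\|_2^2$, while the spectrum of the symmetric matrix $B-I$ lies in $\{-2,-1,0\}$ so $x^{\top}\tfrac{\delta}{2}(B-I)x \ge -\delta\|x\|_2^2$; hence $x^{\top}Px \ge (\epsilon-\delta)\|x\|_2^2 \ge 0$ and $P\succeq 0$. Support: for $i\ne j$, $P_{ij} = -(A_G)_{ij}+\tfrac{\delta}{2}B_{ij}$, and an edge of $G$ has $d_G=1<\lfloor\log_d n\rfloor$ so it is never a matched pair (once $n$ is large enough that $\lfloor\log_d n\rfloor\ge 2$); thus $P_{ij}=-1$ on edges of $G$, $P_{ij}=\delta/2$ on pairs of $M$, and $P_{ij}=0$ otherwise, while $P_{ii}=d-\delta/2>0$ is irrelevant since $d_G(i,i)=0$. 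Therefore the numerator is $\delta\sum_{\{u,v\}\in M} d_G(u,v)^2 \ge \delta\cdot\tfrac n2\cdot\lfloor\log_d n\rfloor^2$ and the denominator is $\sum_{\{u,v\}\in E}2\cdot 1 = nd$, so
\[
C_2(G)^2 \;\ge\; \frac{\delta\,\tfrac n2\,\lfloor\log_d n\rfloor^2}{nd} \;=\; \frac{\delta}{2d}\,\lfloor\log_d n\rfloor^2,
\]
i.e.\ $C_2(G)=\Omega_{d,\epsilon}(\log n)$, as claimed. An SDP-free variant gives the same bound: normalize an embedding $f$ to contraction $1$ and center it coordinatewise; then $\sum_{\{u,v\}\in E}\|f(u)-f(v)\|_2^2 \le \gamma^2|E|$, the same sum equals $\sum_k f_k^{\top}L_G f_k \ge \epsilon\sum_u\|f(u)\|_2^2$, and for each matched pair the triangle inequality forces one endpoint to have norm $\ge\tfrac12\lfloor\log_d n\rfloor$, so $\sum_u\|f(u)\|_2^2 = \Omega(n\log^2 n)$; combining yields $\gamma=\Omega(\log n)$.
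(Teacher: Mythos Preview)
Your proof is correct and follows essentially the same route as the paper: the same LLR dual characterization, the same test matrix $P=(dI-A_G)+\tfrac{\epsilon}{2}(B-I)$ with $B$ the adjacency matrix of a long-distance perfect matching obtained via Dirac's theorem, the same PSD verification splitting $P$ into the Laplacian piece (which gives $\ge\epsilon\|x\|^2$ from the spectral gap) and the matching piece (whose spectrum is bounded below by $-2$), and the same numerator/denominator count. Your extra care with the odd-$n$ case and the alternative SDP-free argument at the end are nice additions not in the paper.
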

\begin{proof}
To prove the lower bound, we use the characterization from the last section
that for the minimum distortion in embedding a metric space $(X,d)$ into 
$l_{2}$, denoted by $C_{2}(X,d)$, is:
\begin{equation}
C_{2}(X,d)=\max_{P\in PSD,P.\overrightarrow{1}=\overrightarrow{0}}\sqrt{\frac{\sum_{p_{ij}>0}p_{ij}d(x_{i},x_{j})^{2}}{-\sum_{p_{ij}<0}p_{ij}d(x_{i},x_{j})^{2}}}
\label{eq:c2_x_d}
\end{equation}
and we will find some $P$ that is feasible that gives the LB.

Assume $B$ is the adjacency matrix of the matching in $H$, whose existence 
is  proved in the previous lemma. 
Then, define 
\[
P=(dI-A_{G})+\frac{\epsilon}{2}(B-I)  . 
\]
Then, we claim that $P\overrightarrow{1}=\overrightarrow{0}$. To see this, notice both $(dI-A_G)$ and $I-B$ are Laplacians (not normalized), as $B$ is the adjacency matrix of a perfect matching (i.e. $1$-regular graph).
Next, we claim that $P\in\mbox{PSD}$.
This proof of this second claim is because, 
for any $x\perp\overrightarrow{1}$, we have
\[
x^{T}(dI-A_{G})x \geq dx^Tx-x^TAx \geq (d-\lambda_2)||x||^2 \geq \epsilon||x||^{2}
\]
(by the assumption on $\lambda_{2}$); and
\begin{eqnarray*}
x^{T}(B-I)x
   &   =& \sum_{(i,j)\in B}2x_{i}x_{j}-\sum_{i}x_{i}^{2}     \\
   &   =& \sum_{(i,j)\in B}(2x_{i}x_{j}-x_{i}^{2}-x_{j}^{2})  \qquad\mbox{since $||x||^2=\sum_{(i,j)\in B} x_i^2+x_j^2$}    \\
   &\geq& -2\sum_{(i,j)\in B} x_{i}^{2}+x_{j}^{2}    \\
   &   =& -2||x||^{2}  
\end{eqnarray*}

\noindent
The last line is since $||x||^2=\sum_{(i,j)\in B} x_i^2+x_j^2$ and since $B$ is a matching so each $i$ shows up in the sum exactly once.

So, we have that
\begin{eqnarray*}
x^TPx &   =& x^T(dI-A_G)x + x^T\frac{\epsilon}{2}(B-I)x   \\
      &\geq& \epsilon||x||^2 -\frac{2||x||^2\epsilon}{2}  \\
      &   =& 0   .
\end{eqnarray*}

Next evaluate the numerator and the denominator.
\begin{eqnarray*}
-\sum_{P_{ij}<0} d(i,j)^2 P_{ij} &   =& dn   \\\
 \sum_{P_{ij}>0} d(i,j)^2P_{ij} &\geq& \frac{\epsilon}{2}n\lfloor\log_{d}n\rfloor^2
\end{eqnarray*}
where the latter follows since the distances of edges in $B$ are at least 
$\lfloor \log_{d}n\rfloor$.
Thus, for this $P$, we have that:
\[
\sqrt{\frac{\sum_{p_{ij}>0}p_{ij}d(x_{i},x_{j})^{2}}{-\sum_{p_{ij}<0}p_{ij}d(x_{i},x_{j})^{2}}} \geq \sqrt{\frac{ \frac{\epsilon}{2}n\lfloor\log_kn\rfloor^2 }{dn}}
             \sim \Theta(\log n)
\]
and thus, 
from~(\ref{eq:c2_x_d}), that $C_2$ is at least this big, \emph{i.e.}, that:
\[
C_{2}(G)\geq\Omega(\log\, n)
\]
\end{proof}

\newpage

\section{%
(02/24/2015): 
Flow-based Methods for Partitioning Graphs (1 of 2)}

Reading for today.
\begin{compactitem}
\item
``Multicommodity max-flow min-cut theorems and their use in designing approximation algorithms,'' in JACM, by Leighton and Rao 
\item
``Efficient Maximum Flow Algorithms,'' in CACM, by Goldberg and Tarjan 
\end{compactitem}

\subsection{Introduction to flow-based methods}

Last time, we described the properties of expander graphs and showed that 
they have several ``extremal'' properties.
Before that, we described a vanilla spectral partitioning algorithms, which 
led to the statement and proof of Cheeger's Inequality.
Recall that one direction viewed $\lambda_2$ as a relaxation of the 
conductance or expansion problem; while the other direction gave a 
``quadratic'' bound as well as a constructive proof of a graph partitioning 
algorithm.
The basic idea was to compute a vector, show that it is a relaxation of the
original problem, and show that one doesn't loose too much in the process.
Later, we will see that there are nice connections between these methods and 
low-dimensional spaces and hypothesized manifolds.

Lest one think that this is the only way to compute partitions, we turn now 
to a \emph{very} different method to partition graphs---it is based on the
ideas of single-commodity and multi-commodity flows.
It is \emph{not} a spectral method, but it is important to know about for 
spectral methods (e.g., when/why spectral methods work and how to diagnose 
things when they don't work as one expects): the reason is basically that
flow-based graph algorithms ``succeed'' and ``fail'' in very different ways 
than spectral methods; and the reason for this is that they too implicitly 
involve embedding the input graph in a metric/geometric place, but one which 
is very different than the line/clique that spectral methods implicitly 
embed the input into.

\subsection{Some high-level comments on spectral versus flow}

Recall that the key idea underlying graph partitioning algorithms is take a 
graph $G=(V,E)$ and spread out the vertices $V$ in some abstract space while 
\emph{not} spreading out edges $E$ too much, and then to partition the 
vertices in that space into tow (or more) sets.
\begin{itemize}
\item
\textbf{Spectral methods} do this by putting nodes on an eigenvector and then 
partitioning based on a sweep cut over the partitions defined by that 
eigenvector. 
For spectral methods, several summary points are worth noting.
\begin{itemize}
\item
They achieve ``quadratic'' worst-case guarantees from Cheeger's Inequality.
This quadratic factor is ``real,'' e.g., it is not an artifact of the 
analysis and there are graphs on which it is achieved.
\item
They are ``good'' when graph has high conductance or expansion, \emph{i.e.}, 
when it is a good expander (in either the degree-weighted or 
degree-unweighted sense).
\item
They are associated with some sort of underlying geometry, e.g., as defined 
by where the nodes get mapped to in the leading eigenvector; but it is not 
really a metric embedding (or at least not a ``good'' metric embedding since 
it only preserves average distances---which is typical of these 
$\ell_2$-based methods---and not the distance between every pair of nodes.
\end{itemize}
\item
\textbf{Flow-based methods} do this by using multi-commodity flows to reveal 
bottlenecks in the graph and then partitioning based on those bottlenecks.
To contrast with spectral methods, note the following summary points for 
flow-based methods.
\begin{itemize}
\item
They achieve $O(\log n)$ worst-case guarantees.
This $O(\log n)$ is ``real,'' in that it too is not an artifact of the 
analysis and there are graphs on which it is achieved.
\item
They are ``good'' when the graph has sparse cuts, \emph{i.e.}, when it is 
not a good expander.
\item
They are also associated with a geometry, but one which is very different 
than that of spectral methods.
In particular, although not immediately-obvious, they can be viewed as 
embedding a finite metric space of graph into $\ell_1$ metric, \emph{i.e.}, 
a very different metric than before, and then partitioning there.
\end{itemize}
\end{itemize}

\noindent
The point here is that these two methods are in many ways complementary, in 
the sense that they succeed and fail in different places.
Relatedly, while ``real'' data might not be exactly one of the idealized 
places where spectral or flow-based methods succeed or fail, a lot of 
graph-based data have some sort of low-dimensional structure, and a lot of 
graph-based data are sufficiently noisy that it is fruitful to view them as 
having expander-like properties. 

Finally, the comments about spectral methods being good on expanders and 
flow methods being good on non-expanders might seem strange.
After all, most people who use spectral methods are not particularly 
interested in partitioning graphs that do not have good partitions (and 
instead they take advantage of results that show that spectral methods find
good partitions in graphs that are ``morally'' low-dimensional, e.g., graphs 
like bounded-degree planar graphs and well-shaped meshes).
The reason for this comment is that the metric we have been using to 
evaluate cluster quality is the objective function and not the quality of the
clusters. 
By this measure, the quadratic of a constant (which is the expansion value of
an expander) is a constant, while the quadratic of, e.g., $1/\sqrt{n}$ is 
very large, i.e., the spectral guarantee is nontrivial in one case and not
in the other.
For flow-based methods by contrast, the $O(\log n)$ is much larger than the 
constant value of an expander and so gives trivial results, while this is 
much smaller than, e.g., $1/\sqrt{n}$, leading to nontrivial results on the
objective for morally low-dimensional graphs.
That these qualitative guides are the opposite of what is commonly done in 
practice is a real challenge for theory, and it is a topic we will return to 
later.

\subsection{Flow-based graph partitioning}

Recall the basic ideas about flow and the single commodity flow problem:
\begin{itemize}
\item
$G=(V,E)$ is a graph, and each edge $e \in E$ has capacity $C(e)$, which is
the maximum amount of flow allowed through it.
Also, we are given a source $s$ and a sink $t$.
\item
(We are going to be applying it to graph partitioning by trying to use
network flow ideas to reveal bottlenecks in the graph, and then cut there.)
\item
So, the goal is to route as much flow $s \rightarrow t$ without violating
any of the constraints.
\item
Max-flow is the maximum amount of such flow.
\item
Min-Cut $=$ the minimum amount of capacity to be removed from a network to 
disconnect $s$ from $t$;
that is,
\[
Min-Cut =\min_{U \subset V: s \in U,t \in \bar{U}} \sum_{e\in(U,\bar{U})}c(e)  .
\]
(Note that there is no ratio here, since we can assume that demand $=1$, 
WLOG, but that we won't be able to do this later.)
\item
Weak duality is one thing and is relatively easy to show:
\begin{claim}
$maxflow \le mincut$
\end{claim}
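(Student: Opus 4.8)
The plan is to establish the stronger pointwise statement: for \emph{every} feasible flow $f$ and \emph{every} $s$-$t$ cut $(U,\bar U)$ with $s\in U$ and $t\in\bar U$, one has $|f|\le c(U,\bar U)=\sum_{(u,v)\in E(U,\bar U)}c_{uv}$. The claim $maxflow\le mincut$ then follows immediately by taking the maximum over feasible flows on the left and the minimum over $s$-$t$ cuts on the right.

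First I would fix such an $f$ and such a cut, and rewrite $|f|$ as the \emph{net} flow crossing the cut. By definition $|f|$ equals the net outflow at $s$, namely $\sum_{v:(s,v)\in E}f_{sv}-\sum_{v:(v,s)\in E}f_{vs}$; and by the conservation constraint the analogous net-outflow expression vanishes at every other node $u\in U$. Summing the net-outflow expression over all $u\in U$, every edge with both endpoints in $U$ contributes once with a $+$ sign and once with a $-$ sign, so these internal terms cancel, leaving
\[
|f|=\sum_{(u,v)\in E(U,\bar U)}f_{uv}-\sum_{(v,u)\in E(\bar U,U)}f_{vu}.
\]

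The second step is the inequality itself. By the capacity constraint $f_{uv}\le c_{uv}$, the first sum is at most $\sum_{(u,v)\in E(U,\bar U)}c_{uv}=c(U,\bar U)$; and by $f_e\ge 0$ the second sum is nonnegative, so subtracting it only decreases the right-hand side. Hence $|f|\le c(U,\bar U)$. For undirected graphs the same argument goes through with the convention $f_{(v,u)}=-f_{(u,v)}$ and the constraint $|f_{uv}|\le c_{uv}$; for directed graphs one just remembers that the cut set $E(U,\bar U)$ consists only of edges from $U$ to $\bar U$, which is precisely why the backward sum appears with a minus sign and can be dropped.

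I do not expect a genuine obstacle here: the only place requiring care is the telescoping in the first step, i.e.\ checking that the internal edges cancel and that edges oriented \emph{into} $U$ are accounted for as well as those leaving it; everything after that is a single line. As a bonus, this computation exhibits the complementary-slackness structure that later motivates flow/cut duality: equality $|f|=c(U,\bar U)$ forces every forward edge of the cut to be saturated and every backward edge to carry zero flow.
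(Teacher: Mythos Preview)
Your proof is correct and follows the same idea as the paper, which gives only a one-sentence sketch (``all flows from $s\to t$ must be routed through edges in $(U,\bar U)$, so the total flow is bounded by the capacity in mincut''). You have simply filled in the standard details---the telescoping sum over $U$ and the use of the capacity and nonnegativity constraints---that the paper leaves implicit.
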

\begin{proof}
For all $ U \subseteq V$ that has $s$ and $t$ on opposite sides, all flows 
from $s \rightarrow t$ must be routed through edges in $(U,\bar{U})$, so the 
total flow is bounded by the capacity in mincut.
\end{proof}
\item
Strong duality is another thing and is harder:
\begin{claim}
$maxflow = mincut$
\end{claim}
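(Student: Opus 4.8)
The plan is to prove the two inequalities separately. Weak duality, $\mathrm{maxflow} \le \mathrm{mincut}$, is already established above, so it suffices to produce a single feasible flow $f$ together with an $s$--$t$ cut $(S,\bar S)$ for which $|f| = c(S,\bar S)$. Once we have such a matching pair, weak duality sandwiches everything: $|f| \le \mathrm{maxflow} \le \mathrm{mincut} \le c(S,\bar S) = |f|$, so all four quantities coincide and in particular $\mathrm{maxflow} = \mathrm{mincut}$.

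First I would fix a \emph{maximum} flow $f$, i.e.\ a feasible flow attaining the supremum in the Max-Flow Problem. Such an $f$ exists because the set of feasible flows is a nonempty (it contains the zero flow), bounded, closed subset of $\mathbb{R}^{E}$ --- it is cut out by finitely many linear inequalities and equalities and is bounded coordinatewise by the capacities --- while $|f|$ is a continuous linear functional, hence attains its maximum on this compact set. (For rational capacities one can alternatively invoke termination of the Ford--Fulkerson augmenting-path procedure.)

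The key construction is the \emph{residual graph} $G_f$. It has the same vertex set $V$; for each $(u,v)\in E$ with $f_{uv} < c_{uv}$ we add a forward arc $u \to v$ of residual capacity $c_{uv} - f_{uv} > 0$, and for each $(u,v)\in E$ with $f_{uv} > 0$ we add a backward arc $v \to u$ of residual capacity $f_{uv} > 0$. The central claim is that $G_f$ contains no directed path from $s$ to $t$: if $P$ were such a path, let $\delta > 0$ be the minimum residual capacity along $P$ and push $\delta$ units along $P$ --- adding $\delta$ on forward arcs and subtracting $\delta$ on backward arcs; the result is again a feasible flow, now of value $|f| + \delta$, contradicting maximality of $f$. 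Checking that this augmented object really is a feasible flow (capacity bounds hold by the choice of $\delta$, conservation holds at every internal vertex on $P$, and nothing changes elsewhere) is short but is the fiddly step, and it is the one place where the bookkeeping for undirected edges --- where $f_{(v,u)}$ is merely shorthand for $-f_{(u,v)}$ --- needs care; I expect this to be the main obstacle to a clean writeup.

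Finally, let $S$ be the set of vertices reachable from $s$ in $G_f$ and $\bar S = V \setminus S$. Then $s \in S$ and, by the no-path claim, $t \in \bar S$, so $(S,\bar S)$ is a genuine $s$--$t$ cut. For every $(u,v)\in E$ with $u \in S$ and $v \in \bar S$ we must have $f_{uv} = c_{uv}$ (otherwise a forward residual arc $u\to v$ would put $v$ in $S$), and for every $(u,v)\in E$ with $u\in \bar S$ and $v\in S$ we must have $f_{uv} = 0$ (otherwise a backward residual arc $v\to u$ would put $u$ in $S$). Combining these saturation facts with the identity $|f| = \sum_{(u,v)\in E:\ u\in S,\, v\in\bar S} f_{uv} - \sum_{(u,v)\in E:\ u\in\bar S,\, v\in S} f_{uv}$ --- which itself follows by summing the conservation constraints over all vertices of $S$ and cancelling internal terms --- gives $|f| = \sum_{(u,v)\in E:\ u\in S,\, v\in\bar S} c_{uv} = c(S,\bar S)$, as required. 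As a remark I would note that this argument is special to the single-commodity case; the analogous strong duality fails for multi-commodity flow, which is precisely why the $O(\log n)$ gap of Leighton--Rao enters later, and one could alternatively derive this theorem from LP strong duality applied to the primal/dual pair written out above.
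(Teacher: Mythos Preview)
Your proof is correct and complete: it is the standard Ford--Fulkerson residual-graph argument, with the existence of a maximum flow justified by compactness, and the cut $(S,\bar S)$ extracted from the set of vertices reachable from $s$ in $G_f$.

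The paper, however, does not prove this claim at all. Its ``proof'' reads in full: \emph{``Suffices to show that mincut bound is always achievable is harder. We won't do it here.''} So there is nothing to compare against; you have supplied a full argument where the paper deliberately omits one. Your closing remark --- that this strong duality is specific to the single-commodity case and fails for multi-commodity flow, motivating the later $O(\log n)$ Leighton--Rao gap --- is exactly the point the paper wants the reader to take away, so it fits well.
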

\begin{proof}
Suffices to show that mincut bound is always achievable is harder.
We won't do it here.
\end{proof}
\end{itemize}

\noindent
All of this discussion so far is for $k=1$ single-commodity flow. 

There are also multi-commodity versions of this flow/cut problem that have 
been widely studied. 
Here is the basic definition.

\begin{definition}
Given $k\ge1$, each with a source $s_i$ and sink $t_i$ and also given 
demands $D_i$ for each, i.e., we have $(s_i,t_i,D_i)$ for each $i\in[k]$, 
the \emph{multi-commodity flow problem} is to simultaneously route $D_i$ 
units of flow from $s_i$ 
to $t_i$, for all $i$, while respecting capacity constraints, \emph{i.e.}, 
s.t.  the total amount of all commodities passing through any edge $\le$ 
capacity of that edge.
There are several variants, including:
\begin{itemize}
\item 
\emph{Max throughput flow}: maximize the amount of flow summed over all 
commodities.
\item 
\emph{Max concurrent flow}: specify a demand $D_i$ for each commodity, and
then maximize the \emph{fraction} of demand $D_i$ that can be 
simultaneously shipped or routed by flow:
\[
   \max f \text{~s.t.~}fD_i \text{~units of flow go from $s_i$ to $t_i$.}  
\]
\emph{i.e.}, the maximum $f$ s.t. $fD_i$ units of capacity $i$ are 
simultaneously routed without violating the capacity constraints.
\end{itemize}
\end{definition}

That is, for this multicommodity flow problem, if the flow of commodity $i$ 
along edge $(u,v)$ is $f_i(u,v)$, then an \emph{assignment of flow} 
satisfies:
\begin{itemize}
\item
Capacity constraints.
\[
\sum_{i=1}^{k}|f_i(u,v)| \le c(u,v)
\]
\item
Flow conservation
\begin{eqnarray*}
 & & \sum_{w \in V} f_i(u,w) = 0 \quad \mbox{when } u \ne s_i,t_i \\
 & & \mbox{For all } u,v \quad f_i(u,v) = -f_i(v,u)
\end{eqnarray*}
\item 
Demand satisfaction.
\[
\sum_{w \in V} f_i(s_i,w) = \sum_{w \in V} f_i (w,t) = D_i
\]
\end{itemize}
Then the goal is to find a flow that maximizes one of the above variants of 
the problem.

We can also define a related cut problem.
\begin{definition}
The \emph{MinCut} or \emph{Sparsest Cut} $\Xi$---of an undirected 
multicommodity flow problem---is the mincut over all cuts of the ratio of 
the capacity of cut to the demand of the cut, 
\emph{i.e.}, 
\[
   \Xi=\text{min cut} = \rho = \Xi = \min_{U\subseteq V}\frac{C(U,\bar U)}{D(U,\bar U)}   ,
\]
where
\begin{align*}
    C(U,\bar U)&=\sum_{e\in (U,\bar U)}C(e)\\
    D(U,\bar U)&=\sum_{\substack{i:s_i\in U\\t_i\in \bar U \text{or vice versa}}}D_i
\end{align*}
That is, $C(U,\bar{U})$ is the sum of capacities linking $U$ to $\bar{U}$;
and $D(U,\bar{U})$ is the sum of demands with source and sinks on opposite
sides of the $(U,\bar{U})$ cut.
\end{definition}

Finally, we point out the following two variants (since they will map to the 
expansion and conductance objectives when we consider the application of 
multi-commodity flow to graph partitioning).
There are, of course, other variants of flow problems that we don't consider 
that are of interest if one is primarily interested in flow problems.

\begin{definition}
The \emph{Uniform Multi-Commodity Flow Problem (UMFP)}: here, there is a 
demand for every pair of nodes, and the demand for every commodity is the 
same, \emph{i.e.}, they are uniform, WLOG we take to equal $1$. 
The \emph{Product Multicommodity Flow Problem (PMFP)}: here, there is a 
nonnegative weight $\pi(\cdot)$, for all nodes $u \in V$, \emph{i.e.}, 
$\pi:V\to R^+$.  
The weights of demands between a pair of nodes $u$ and $v$ are 
$\pi(v_i)\pi(v_j)$. 
\end{definition}
Here are several comments about the last definition.
\begin{itemize}
\item 
UMFP is a special case of PMFP with $\pi(i)=1$.
\item
If $\pi(i)=1$, then we will get a way to approximate the expansion objective.
\item
If $\pi(v)=deg(v)$, then we will get a way to approximate the conductance
objective.
\end{itemize}

The MaxFlow-MinCut for single commodity is nice since it relates two 
fundamental graph theoretic entities (that are in some sense dual) via the 
min-max theorem.
But, prior to LR, very little was known about the relationship between 
MaxFlow and MinCut in this more general multi-commodity flow setting for 
general graphs.
For certain special graphs, it was known that there was an equality, i.e., a 
zero duality gap; but for general graphs, it is only known that MaxFlow is 
within a fraction $k$ of the MinCut.
This results can be obtained since each commodity can be optimized 
separately in the obvious trivial way by using $\frac{1}{k}$ of the capacity 
of the edges---this might be ok for $k=\Theta(1)$, but it is is clearly bad 
if $k \sim n$ or $k \sim n^2$.
Somewhat more technically, if we consider the LP formulation of the Max 
Multicommodity Flow Problem, then we can make the following observations.
\begin{itemize}
\item
The dual of this is the LP relaxation of the Min Multicut Problem, 
\emph{i.e.}, the optimal integral solution to the dual is the Mun Multicut.
\item
In general, the vertices of the \emph{dual polytope} are NOT integral.
\item
But, for single commodity flow, they are integral, and so MaxFlow-MinCut
theorem is a consequence of LP duality.
\item
For certain special graphs, it can be shown that they are integral, in which 
case one has zero duality gap for those graphs.
\item
Thus, for multicommodity: MaxFlow = MinFractional, \emph{i.e.}, relaxed, 
Multicut.
\end{itemize}

Here are several facts that we spend some time discussion:
\begin{paragraph}{Fact 1}
If we have $k$ commodities, then one can show that max-flow/min-cut gap 
$\le O(\log k)$.
This can be shown directly, or it can be shown more generally via metric 
embedding methods.
\end{paragraph}
\begin{paragraph}{Fact 2}
If certain conditions are satisfied, then the duality gap = 0.  
If one look at dual polytope, then whether or not this is the case depend on
whether the optimal solution is integral or not.
This, in turn, depends on special structure of the input.
\end{paragraph}
\begin{paragraph}{Fact 3}
For $k$ commodities, 
LR showed that the worst case (over input graph) gap $\Omega(\log k)$.
LLR interpreted this geometrically in terms of embeddings.
The worst case is achieved, and it is achieved on expanders.  
\end{paragraph}

Here, we will spend some time showing these results directly.
Then, next time we will describe it more generally from the metric embedding
perspective, since that will highlight better the similarities and 
differences with spectral methods.

\subsection{Duality gap properties for flow-based methods}

Here, we will show that there is a non-zero duality gap of size 
$\Theta(\log n)$.
We will do it in two steps: 
first, by illustrating a particular graph (any expander) that has a gap at 
least $\Omega(\log n)$; and
second, 
by showing that the gap is $O(\log n)$, i.e., is never worse than that.

Let's start by showing that there is a graph for which the duality gap 
is nonzero by showing a graph for which it is at least $\Theta(\log n)$. 

\begin{theorem}[LR]
$\forall n, \exists$ an $n$-node UMFP with MaxFlow $f$ and MinCut $\Xi$ 
s.t. 
\[
f \le O\left( \frac{\Xi}{\log n} \right)
\]
\end{theorem}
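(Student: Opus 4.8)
The plan is to take as the hard instance a constant-degree expander under the uniform demand pattern, and to force the max-concurrent-flow value down by a volume-counting argument while showing the sparsest cut stays at scale $\Theta(1/n)$. Fix $d\ge 3$ and let $G=(V,E)$ be a $d$-regular $(d,\epsilon)$-expander on $n$ vertices, which exists (with high probability over a random $d$-regular graph) by the theorem quoted earlier; give every edge capacity $1$ and take the UMFP in which each of the $\binom{n}{2}$ vertex pairs has demand $1$, with $f$ the maximum fraction of demand routable simultaneously.

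\textbf{The cut side.} For any $U$ with $|U|=k\le n/2$ the capacity crossing the cut is $C(U,\bar U)=E(U,\bar U)$, which satisfies $h(G)\,k\le E(U,\bar U)\le dk$, i.e. $E(U,\bar U)=\Theta(k)$; and the demand crossing is $D(U,\bar U)=k(n-k)\in[kn/2,\,kn]$ since $n-k\ge n/2$. Hence every cut has ratio $C(U,\bar U)/D(U,\bar U)=\Theta(1/n)$, so $\Xi=\Theta(1/n)$; in particular $\Xi=\Omega(1/n)$, which is the only direction I will actually need.

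\textbf{The flow side.} I would bound $f$ from above by a ``flow volume'' estimate. Any feasible routing of $f$ units of each commodity decomposes into path flows (cycles discarded), and a unit of commodity $\{u,v\}$ travels a path of length at least $d_G(u,v)$, so
\[
\sum_{e\in E}\Big(\text{total flow on }e\Big)\;=\;\sum_{\{u,v\}}\sum_{e}|f_{\{u,v\}}(e)|\;\ge\; f\sum_{\{u,v\}} d_G(u,v).
\]
The left-hand side is at most $\sum_{e\in E}c(e)=|E|=dn/2=O(n)$ by the capacity constraints. The crucial ingredient is the lower bound $\sum_{\{u,v\}}d_G(u,v)=\Omega(n^2\log n)$: by the ball-growth bound already used in the metric-embedding lecture, from any vertex at most $d^{\,r}$ vertices lie within distance $r$, so taking $r=\lfloor\log_d(n/2)\rfloor-1$ shows at least half the vertices are at distance $\Omega(\log n)$ from it, whence the distance sum is $\Omega(n^2\log n)$. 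Combining, $\Omega\!\big(f\,n^2\log n\big)\le O(n)$, i.e. $f=O\!\left(\tfrac{1}{n\log n}\right)$.

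\textbf{Conclusion and main obstacle.} Putting the two estimates together, $f=O\!\left(\tfrac{1}{n\log n}\right)=O\!\left(\tfrac{\Xi}{\log n}\right)$ using $\Xi=\Omega(1/n)$, which is exactly the claimed bound (and exhibits a $\Omega(\log n)$ duality gap $\Xi/f$). The only nontrivial step is the $\Omega(n^2\log n)$ lower bound on the distance sum — equivalently, that typical pairwise distances are $\Omega(\log n)$ — which is where bounded degree is used essentially; one must also note in passing that the expander is connected with $\Theta(\log n)$ diameter so the distances are well defined and the ``half the vertices are far'' counting is valid. The flow-decomposition step and the cut computation are routine by comparison.
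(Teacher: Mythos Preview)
Your proof is correct and follows essentially the same approach as the paper: both take a constant-degree expander as the hard instance, lower-bound $\Xi$ by $\Omega(1/n)$ from the expansion property, and upper-bound $f$ by a capacity-volume argument using that bounded degree forces at least half the vertex pairs to be at distance $\Omega(\log n)$. The paper works with $d=3$ specifically and phrases the flow side as ``at least half the commodities have shortest-path length $\geq \log n - 2$, each consuming $f(\log n-2)$ capacity'' rather than via the full distance sum, but this is the same counting argument.
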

\begin{proof}
Consider any graph with certain expansion properties.
In particular, let $G$ be a $3$-regular $n$-node graph with unit edge 
capacities s.t.  
\[
C(U,\bar{U})    = |(U,\bar{U})|
             \geq C_{onst} \min\{|U|,|\bar{U}|\} ,\quad \forall U,V
\]
\emph{i.e.}, an expander.
(Such graphs exist by Margoulis, zig-zag, etc. constructions; and moreover a
randomly-selected $3$-regular graph satisfies this w.h.p.)
The first claim is that:
\begin{eqnarray*}
\Xi    &=& \min_{U \subseteq V} \frac{(U,\bar{U})|}{|U|\cdot|\bar{U}|}  \\
    &\geq& \min_{U \subseteq V} \frac{C_{onst}}{\max\{|U|,|\bar{U}|\}}  \\
       &=& \frac{C_{onst}}{n-1}  .
\end{eqnarray*}
The second claim is that:
\begin{claim}
The MaxFlow for UMFP is $\leq \frac{6}{(n-1)(\log n -1)}$, which is 
$\Theta(\log n)$ smaller than MinCut.
\end{claim}
\begin{proof}[Proof of claim]
Since the graph is $3$-regular, $\exists \leq \frac{n}{2}$ nodes within 
a distance $\log n - 3$ of $v \in V$.
So, for at least half of the $n \choose 2$ commodities, the shortest 
path connecting $s_i$ and $t_i$ has at least $\log n -2 $ edges.
To sustain a flow of $f$ for such a commodity, at least $f(\log n -2)$ 
capacity must be used by commodity.
To sustain a flow $f$, $\forall {n \choose 2}$ commodities, the capacity
of the network must be $\geq \frac{1}{2}{n \choose 2}f(\log n-2)$.
\end{proof}
Since the graph is $3$-regular with unit capacity, the total capacity is
$\leq \frac{3n}{2}$.
So, 
\[
\frac{1}{2}{n \choose 2}f(\log n -2 ) \leq CAPACITY \leq \frac{3n}{2}  .
\]
So, 
\begin{eqnarray*}
f &\leq& \frac{3n}{{n \choose 2}(\log n-2)}  \\
     &=& \frac{6}{(n-1)(\log n -2)}          \\
  &\leq& \frac{6\Xi}{C_{onst}(\log n -2)} \quad\mbox{since } \Xi \geq \frac{C_{onst}}{n-1} \\
     &=& O\left(\frac{\Xi}{\log n} \right)  .
\end{eqnarray*}
That is, the MaxFlow for the UMFP on an expander $G$ is $\ge \Theta(\log n)$ 
factor smaller than the MinCut.
\end{proof}


An expander has diameter $O(\log n)$, and so for expanders, the gap can't be
worse than $\Theta(\log n)$, but the following shows that this is true more 
generally.

\begin{theorem}[LR]
The MinCut of UMFP can never be more than $\Theta(\log n)$ factor larger
than MaxFlow, \emph{i.e.},
\[
\Omega\left(\frac{\Xi}{\log n}\right) \leq f \leq \Xi  .
\]
\end{theorem}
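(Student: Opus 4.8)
The plan is to prove the two inequalities separately; the right-hand inequality $f \le \Xi$ is the easy ``weak duality'' direction, while $\Xi \le O(\log n)\, f$ is the substantive claim, proved via Bourgain's embedding theorem. For $f \le \Xi$: any feasible concurrent flow of value $f$ must, for every cut $(U,\bar U)$, route $f\cdot D(U,\bar U)$ units across the edges of $(U,\bar U)$, which cannot exceed $C(U,\bar U)$; minimizing over $U$ gives $f\le\Xi$. It is cleaner, though, to route the whole argument through linear programming. I would write the maximum concurrent flow LP for the UMFP (path-based variables $x_P\ge 0$, the objective being the common fraction $f$), take its dual, and identify the dual with the ``fractional sparsest cut'' program: nonnegative edge lengths $\ell_e$, objective $\min \sum_e c_e \ell_e$ subject to $\sum_{i<j}\mathrm{dist}_\ell(i,j)\ge 1$, where $\mathrm{dist}_\ell$ is the shortest-path metric induced by $\ell$. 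Strong LP duality, together with the observation that it suffices to optimize over shortest-path metrics, yields $f = \min_{\mu} \frac{\sum_e c_e \mu_e}{\sum_{i<j}\mu(i,j)}$, the minimum over all (pseudo)metrics $\mu$ on $V$. Since each cut $(U,\bar U)$ gives the cut pseudometric $\delta_U$, a feasible metric, and $\Xi$ is exactly this ratio minimized over the restricted class of cut metrics, we get $f\le\Xi$.

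For the hard direction, I would start from an optimal metric $\mu$ with $f = \frac{\sum_e c_e \mu_e}{\sum_{i<j}\mu(i,j)}$ and apply Bourgain's theorem (stated earlier): embed the $n$-point space $(V,\mu)$ into $\ell_2$ with distortion $O(\log n)$, then compose with the isometric embedding $\ell_2 \hookrightarrow \ell_1$ (or invoke an $\ell_1$-valued version of Bourgain directly), obtaining, after rescaling, a map $g:V\to\mathbb{R}^k$ with $\mu(u,v)\le \|g(u)-g(v)\|_1 \le O(\log n)\,\mu(u,v)$. Next use the standard fact that an $\ell_1$ pseudometric decomposes into a nonnegative combination of cut metrics, $\|g(u)-g(v)\|_1 = \sum_{U}\lambda_U\,\delta_U(u,v)$ with $\lambda_U\ge 0$. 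Substituting this into the ratio $\frac{\sum_e c_e \|g\|_1(e)}{\sum_{i<j}\|g\|_1(i,j)}$ and expanding numerator and denominator as $\lambda$-weighted sums over cuts, an averaging argument (a ratio of nonnegative combinations is at least the smallest term-ratio) produces a single cut $U^\ast$ with $\frac{C(U^\ast,\bar U^\ast)}{D(U^\ast,\bar U^\ast)} \le \frac{\sum_e c_e \|g\|_1(e)}{\sum_{i<j}\|g\|_1(i,j)}$. Finally, the distortion bounds give $\|g\|_1(e)\le O(\log n)\,\mu_e$ in the numerator and $\|g\|_1(i,j)\ge \mu(i,j)$ in the denominator, so the right side is $\le O(\log n)\cdot\frac{\sum_e c_e \mu_e}{\sum_{i<j}\mu(i,j)} = O(\log n)\,f$. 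Hence $\Xi \le \frac{C(U^\ast,\bar U^\ast)}{D(U^\ast,\bar U^\ast)} \le O(\log n)\, f$, i.e. $f\ge\Omega(\Xi/\log n)$.

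The step I expect to be the main obstacle is the LP-duality identity $f = \min_\mu \frac{\sum_e c_e \mu_e}{\sum_{i<j}\mu(i,j)}$: one must be careful about path- versus edge-based flow variables, the placement of the demand normalization, and the passage from arbitrary nonnegative edge lengths to the induced shortest-path metric (so that the triangle inequality holds at the vertices). Everything downstream — Bourgain's theorem and the cut decomposition of $\ell_1$ — is either quoted from earlier in the notes or is a one-line averaging argument, so the load-bearing content is concentrated in (i) this duality identity and (ii) the fact that ``best metric to best cut'' costs only $O(\log n)$. As a self-contained alternative that avoids metric embeddings, one could instead run Leighton and Rao's original region-growing argument directly on the dual lengths $\ell_e$: grow balls around each source, use a volume/potential function to locate a radius where the boundary capacity is small relative to the enclosed demand, and charge; I would present the embedding route as the main plan and mention region-growing as a remark.
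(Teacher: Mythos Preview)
Your proposal is correct and is precisely the metric-embedding argument the paper develops: the proof given at this theorem is only a stub (``We won't actually go through this now since there is a more elegant and enlightening formulation''), and the next lecture fills it in exactly along your lines---LP duality identifying $f$ with the minimum of $\frac{\sum_e c_e \mu_e}{\sum_{i<j}\mu(i,j)}$ over metrics $\mu$, Bourgain's theorem to embed the optimal metric into $\ell_1$ with $O(\log n)$ distortion, the cut-cone decomposition of $\ell_1$, and the ratio-averaging lemma to extract a single cut. Your remark that Leighton--Rao's original region-growing argument is an alternative to the embedding route is also apt and matches the historical order (LR preceded LLR's embedding interpretation).
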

\begin{proof}
We can do this by showing a polynomial-time algorithm that finds a cut
$(U,\bar{U})$ s.t. 
\[
\frac{C(U,\bar{U})}{|U|\cdot|\bar{U}|} \leq O(f(\log n))
\]
where the LHS is the \emph{ratio cost} of $(U,\bar{U})$;
and recall that 
$\Xi = \min_{S \subseteq V} \frac{C(U,\bar{U})}{|U|\cdot|\bar{U}|}$.
The algorithm is based on LP Dual---the dual of Multicommodity Flow is the 
problem of assigning a fixed weight (where the weights are thought of as 
distances) to edges of $g$ s.t. one maximizes the cumulative distance 
between source and sink pairs.
(We won't actually go through this now
since there is a more elegant and enlightening formulation.)
\end{proof}

In the UMFP, the demand across cut $(U,\bar{U})$ is given by: 
$D(U,\bar U)=|U||\bar U|$.
So, in particular, the mincut is given by:
\begin{align*}
\text{min cut:}\quad\Xi &= \min_{U\subseteq V}\frac{C(U,\bar U)}{|U||\bar U|}\\
    &=\min_{U\subseteq V}\frac{E(U,\bar U)}{|U||\bar U|} \quad\text{if all capacities $=1$}  ,
\end{align*}
where $C(U,\bar{U})= \sum_{e\in(U,\bar{U})} C(e)$.

This is, if the demands are uniform and all the capacities are equal to one, 
then from UMFP we get our old friend, the sparsest cut $\sim$ best expansion.
Among other things, this implies that the $O(\log n)$ approximation for 
general MultiCommodity Flow in ``inherited'' by the algorithm for the 
sparsest cuts problem, and many other related problems as well.
In particular, one way to use flow is the following: 
check all $2^n$ cuts, and use the single-commodity zero-duality gap result 
to show that we can take the one with the best single cwommodity flow to 
get the one with the best mincut.
What these results say is that, instead, we can consider the all-pairs
multi-commodity flow problem and check a lot less and get a result that is 
only a little worse.

\subsection{Algorithmic Applications}
\label{sec:algor-appl}

Let's go back to our old friend the \emph{sparsest cut problem}, and here we 
will make explicit connections with flow based graph partitioning by 
viewing it from an optimization perspective.
This will in turn provide us with an algorithm (that is mentioned in the 
proof of the above theorem) to solve the problem.

Recall that in the sparsest cut problem, we are given: a graph $G = (V,E)$; 
a cost function $c(e),\forall e \in E$, \emph{i.e.}, 
$c: E \rightarrow \mathbb{R}^{+}$; and $k$ pairs of nodes/vertices 
$(s_i, t_i)$. 
We will write the problem as an Integer Program (IP).
To do so, 
\begin{itemize}
\item
Let $x(e)$, $e \in E$, where $x(e) \in \{0,1\}$ to indicate if an edge $e$ 
is cut;
\item
let $y(i)$, $i \in [k]$, where $y(i) \in \{0,1\}$ to indicate if commodity 
$i$ is cut, \emph{i.e.}, is disconnected by this cut; and 
\item
let $\mathcal{P}_i$, $i=1,2,\ldots,k$ be the set of paths between $s_i$ and 
$t_i$.
\end{itemize} 
Then, what we want is to solve:
\begin{align*}
\min          & \frac{\sum_{e \in E}{c(e)x(e)}}{\sum_{i=1}^k{d(i)y(i)}}  \\
\mbox{s.t.~~} & \sum_{e \in P}{x(e)} \geq y(i), \quad \forall P \in \mathcal{P}_i, \quad \forall i=1,2,\ldots,k  \\
              & y(i) \in \{0, 1\}, \quad i \in[k] \\
              & x(e) \in \{0, 1\}, \quad e \in E
\end{align*}


We want to consider the relaxation of this to the case where replace the
last two constraints by $y(i)\geq 0$ and $x(e)\geq 0$
In doing so, note that if $(x,y)$ is a feasible fractional solution, then
$(\alpha x,\alpha y)$ is also a feasible fractional solution with the same 
objective function value.
So, WLOG, we can choose the normalization $\alpha$ s.t.
$\sum_{i=1}^{k} d(i)y(i) = 1$ to get the following LP:
\begin{align*}
\min          & \sum_{e \in E} c(e)x(e)   \\
\mbox{s.t.~~} & \sum_{i=1}^{k} d(i)y(i) = 1 \\
              & \sum_{e \in P}{x(e)} \geq y(i), \quad \forall P \in \mathcal{P}_i, \quad \forall i=1,2,\ldots,k  \\
              & y(i) \geq 0 \quad ,  x(e) \geq 0 .
\end{align*}


Below, we will show that we can compute a cut with sparsity ratio within 
a factor of $O(\log k)$ of this optimal fractional solution.

BTW, before we do that, let's write the LP dual:
\begin{align*}
\max          & \quad \alpha \\
\mbox{s.t.~~} & \sum_{p\in\mathcal{P}_i} f(p) \geq \alpha d(i) ,\quad \forall i \in [k] \\ 
              & \sum_{i=1}^{k} \sum_{p\in\mathcal{P}_i(e)} f(p) \leq c(e) ,\quad \forall e \in E \\
              & f(p) \geq 0, \quad \forall P \in \mathcal{P}_i, i \in [k]  .
\end{align*}
This is the Max Concurrent Flow problem, and $O(\log k)$ approximation gives
an \emph{approximate MaxFlow-MinCut Theorem}.

So, to solve this sparsest cut problem, our strategy will be:
\begin{itemize}
\item 
Solve the LP (either the primal or the dual).
\item 
Round the solution to an integral value.
\end{itemize}
Thus, there are some similarities with spectral---we first solve something
to get a real-valued solution, and then we have to ``round'' to get an 
integral solution, and the ball game will be to show that we don't loose 
too much. 
Of course, solving the LP will \emph{implicitly} embed us in a very different
place than solving an eigenvector problem, so we will expect to see different
artifactual properties between the two approximation algorithms.

The above discussion gives algorithms that run in polynomial time:
\begin{itemize}
\item
Off the shelf LP (due to a connection with LP).
\item
Algorithms for approximately optimal flows of distance functions 
(\emph{i.e.}, take advantage of the structure of the LP).
\item
Fastest update of sparsest cut algorithm is $\tilde{O}(n^2)$, with 
Benczur-Karger sparsification.
\item
Standard algorithm that runs in something like $\tilde{O}(n^{3/2})$ with 
push-relabel methods.
\item
Finally, there is a lot of work on using Laplacian-based solvers to do better, 
and we may return to these later.
\end{itemize}

Important Note: The way you would actually solve this in practice is 
to use some sort of push-relabel code, which is relatively fast, as opposed 
to the general LP procedure just described, which is easier to analyze 
theoretically.

\subsection{Flow Improve}

Here is an aside that with luck we will get back to later.
These algorithms have a running time that large enough that it can be 
challenging to apply to very large graphs---e.g., $O(n^{3/2})$ or especially 
$O(n^2)$ is certainly too large for ``massive'' graphs.
(Implementations of faster algorithms are still very much a topic of research.)
A question arises, can we do something like ``local spectral'' (which, 
recall, consisted roughly of a few steps of a random walk), to do a local 
flow improvement?

The answer is YES---and here is it.
The so-called \texttt{Improve} algorithm of AL as well as the related MQI 
method:
this algorithm is useful by itself for improving partitions from, say, Metis 
or some other very fast procedures; and 
it is useful as a way to speed up spectral and/or as one way to combine 
spectral-based algorithms with flow-based algorithms.
In more detail, the goal of a local improvement algorithm is: 
Given a partition, find a strictly better partition.
A local improvement algorithm is useful in the following contexts:
    \begin{itemize}
    \item METIS -- post process with a flow based improvement heuristic.
    \item Vanilla spectral: post process with improvement method.
    \item Local improvement at one step online iterative algorithm.
    \end{itemize}
MQI and Improve essentially construct and solve a sequence of $s$-$t$ MinCut 
problems on a modified quotient cut objective to add and remove vertices 
from a proposed cut.
(We won't describe them in detail now.)
Here is the basic theorem stating their running time and approximation 
quality bounds.

\begin{theorem}
Let $A\subseteq V$ s.t. $\pi(A)\le \pi(\bar A)$, and let 
$S=\textsc{Improve}(A)$ be the output of the Flow Improve Algorithm.
Then
\begin{enumerate}
\item 
If $C\subseteq A$, (\emph{i.e.}, $\forall C \subseteq A$) then 
$Q(S)\le Q(C)$ (where $Q(S) = \frac{|\partial S|}{\mbox{Vol}(S)}$)
\item 
If $C$ is such that 
\begin{align*}
   \frac{\pi(A\cap C)}{\pi(C)}
      \ge \frac{\pi(A)}{\pi(V)}+\epsilon\frac{\pi(\bar A)}{\pi(V)}  ,
      \quad\mbox{for some $\epsilon$}
\end{align*}
\emph{i.e.}, if $C$ is $\epsilon$-more-correlated with $A$ than random,
\emph{i.e.}, if the fraction of $C$ that is in $A$ is $\epsilon$-better
than random,
then $Q(S)\le \frac{1}{\epsilon}Q(C)$ \emph{i.e.}, bound on nearby cuts.
\item
The algorithm takes time:
(1) $\pi(V)^2$ iterations if vertex weights $\pi(V) \in \mathcal{Z}$; and
(2) $m$ iterations if the edges are unweighted.
\end{enumerate}
\end{theorem}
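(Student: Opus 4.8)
The plan is to recall the objective that \textsc{Improve} actually optimizes, show by a parametric-min-cut argument that its output $S$ globally minimizes that objective, and then obtain all three parts by short algebraic comparisons with the quotient $Q$. Throughout write $\pi=\mathrm{Vol}$ (the degree measure that defines $Q(S)=|\partial S|/\pi(S)$), and for a vertex set $S$ set $\mu_A(S)=\pi(A\cap S)-\tfrac{\pi(A)}{\pi(V)}\pi(S)$; the ``reference objective'' is $\phi_A(S)=\tfrac{|\partial S|}{\mu_A(S)}$ defined on sets with $\mu_A(S)>0$ (such $S$ being exactly those more correlated with $A$ than a uniformly random set of the same volume). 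Note $\mu_A(A)=\pi(A)\pi(\bar A)/\pi(V)>0$ by the hypothesis $\pi(A)\le\pi(\bar A)$, so $A$ itself is feasible.

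\textbf{Parametric construction and correctness (the main obstacle).} For a parameter $\alpha>0$ form the graph $G_\alpha$: add a source $s$ with edges $s\to v$ of capacity $\alpha\,\pi(v)$ for $v\in A$, a sink $t$ with edges $v\to t$ of capacity $\alpha\,\tfrac{\pi(A)}{\pi(V)}\pi(v)$ for all $v\in V$, and keep the original edges with their capacities. A direct computation shows the $s$--$t$ cut with source side $S$ has capacity $|\partial S|+\alpha\,\pi(A)-\alpha\,\mu_A(S)$; hence the min-cut value is $\alpha\,\pi(A)+g(\alpha)$ where $g(\alpha)=\min_S\bigl[\,|\partial S|-\alpha\,\mu_A(S)\,\bigr]\le 0$ (the empty set realizes $0$), $g$ is concave and nonincreasing, and $g(\alpha)=0$ precisely when $\alpha\le\phi_A^\star:=\min_S\phi_A(S)$. \textsc{Improve} is then the Dinkelbach/parametric loop: start at $\alpha=\phi_A(A)$ (a finite upper bound on $\phi_A^\star$); solve the min-cut in $G_\alpha$; if the source side $S$ (with a fixed tie-break, say the maximal one) is nonempty then $\phi_A(S)<\alpha$, so reset $\alpha\leftarrow\phi_A(S)$ and repeat, otherwise halt and return the last $S$ found. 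The invariant $\phi_A^\star\le\alpha$, with $\alpha$ strictly decreasing each round, follows from the $g$-characterization above, and the loop can halt only with $g(\alpha)=0$, i.e.\ at $\alpha=\phi_A^\star$ with the returned $S$ a minimizer. Establishing this global optimality of the output from the parametric-flow correctness is the one genuinely nontrivial step; everything below is bookkeeping.

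\textbf{Parts 1 and 2 (sandwiching $\phi_A$ and $Q$).} Two elementary bounds suffice. (i) Since $\pi(A\cap S)\le\pi(S)$, we get $\mu_A(S)\le\tfrac{\pi(\bar A)}{\pi(V)}\pi(S)$, hence $\phi_A(S)\ge\tfrac{\pi(V)}{\pi(\bar A)}\,Q(S)$ for \emph{every} set $S$. (ii) If $C\subseteq A$ then $\mu_A(C)=\tfrac{\pi(\bar A)}{\pi(V)}\pi(C)$ exactly, so $\phi_A(C)=\tfrac{\pi(V)}{\pi(\bar A)}\,Q(C)$; combining with (i) and $\phi_A(S)=\phi_A^\star\le\phi_A(C)$ gives $Q(S)\le Q(C)$, which is Part 1. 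For Part 2, the hypothesis $\tfrac{\pi(A\cap C)}{\pi(C)}\ge\tfrac{\pi(A)}{\pi(V)}+\epsilon\,\tfrac{\pi(\bar A)}{\pi(V)}$ rearranges (multiply by $\pi(C)$) to $\mu_A(C)\ge\epsilon\,\tfrac{\pi(\bar A)}{\pi(V)}\pi(C)$, so $\phi_A(C)\le\tfrac1\epsilon\,\tfrac{\pi(V)}{\pi(\bar A)}\,Q(C)$; feeding this together with (i) into $\phi_A(S)\le\phi_A(C)$ and cancelling $\tfrac{\pi(V)}{\pi(\bar A)}>0$ yields $Q(S)\le\tfrac1\epsilon\,Q(C)$.

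\textbf{Part 3 (iteration count).} Each round is a single $s$--$t$ min-cut on a graph with $n+2$ vertices and $m+2n$ edges. The objective values $\phi_A(S_i)$ strictly decrease, and with a fixed tie-break the source-side sets of a parametric max-flow are nested, so there are $O(n)$ distinct ones. To get the stated bounds, clear the common denominator $\pi(V)$: when $\pi$ is integral, $\pi(V)\mu_A(S)=\pi(V)\pi(A\cap S)-\pi(A)\pi(S)$ is a positive integer at most $\pi(A)\pi(\bar A)\le\pi(V)^2/4$, giving $O(\pi(V)^2)$ iterations; when the edges are unweighted, $|\partial S|$ is an integer in $[0,m]$, giving $O(m)$ iterations. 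The only delicate-but-routine point is verifying that this counting of distinct achievable numerators/denominators genuinely caps the number of loop iterations (the standard finiteness analysis of Dinkelbach's method / parametric max-flow specialized to $G_\alpha$), which I would cite rather than reprove.
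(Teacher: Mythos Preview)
The paper does not prove this theorem. It is stated in Section~10.6 as a summary of the Andersen--Lang \textsc{Improve} result, preceded by ``(We won't describe them in detail now.)'', and no argument is supplied. So there is no in-paper proof to compare against.

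That said, your proposal is essentially the proof from the original Andersen--Lang paper, and the core steps are right. Your $\mu_A(S)=\pi(A\cap S)-\tfrac{\pi(A)}{\pi(V)}\pi(S)$ is a constant multiple of their denominator $D(S)=\pi(S\cap A)-\tfrac{\pi(A)}{\pi(\bar A)}\pi(S\cap\bar A)$ (indeed $D(S)=\tfrac{\pi(V)}{\pi(\bar A)}\mu_A(S)$), so the objectives coincide up to a harmless scaling. The parametric min-cut / Dinkelbach argument showing that the output globally minimizes $\phi_A$ is exactly the right engine, and your sandwich inequalities
\[
\frac{\pi(V)}{\pi(\bar A)}\,Q(S)\le \phi_A(S)\le \phi_A(C)
\]
together with the exact and $\epsilon$-approximate evaluations of $\phi_A(C)$ cleanly give Parts~1 and~2.

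One point to tighten in Part~3: merely counting distinct integer denominators $\pi(V)\mu_A(S)$ (at most $\pi(A)\pi(\bar A)\le \pi(V)^2/4$) does not by itself bound the number of Dinkelbach iterations, since distinct ratios can share a denominator or not. The clean argument in Andersen--Lang is that the sequence of optimal $\alpha$ values, and the corresponding optimal sets, are nested by parametric max-flow monotonicity, and then one counts the breakpoints of the piecewise-linear function $g(\alpha)$ directly. You gesture at this (``nested'', ``which I would cite''), but if you want the stated $\pi(V)^2$ and $m$ bounds to follow from your counting of numerators/denominators you should say explicitly why strict decrease of $\phi_A$ combined with integrality of either $|\partial S|$ or $\pi(V)\mu_A(S)$ caps the number of rounds (e.g., each round strictly decreases an integer-valued numerator at fixed denominator scale, or invoke the breakpoint bound for parametric flows).
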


\newpage

\section{%
(02/26/2015): 
Flow-based Methods for Partitioning Graphs (2 of 2)}

Reading for today.
\begin{compactitem}
\item
Same as last class.
\end{compactitem}

Recall from last time that we are looking at flow-based graph partitioning
algorithm.
Last time, we covered the basics of flow-based methods, and we showed how 
they are very different than spectral methods.
This time, we will discuss flow-based graph partitioning from an 
embedding perspective.
We will see that flow-based algorithms implicitly embed the data in a metric 
space, but one that is very different than the place where spectral-based 
algorithms embed the data.  
Thus, not only do they run different steps operationally and get 
incomparable quality of approximation bounds, but also they implicitly put 
the data in a very different place---thus ``explaining'' many of the 
empirical results that are empirically observed.

(BTW, I have made some comments that spectral methods can scale up to 
much larger input graphs by using diffusion and random walk ideas, a topic 
we will get back to later.  For the moment, let me just note that the way 
flow is used is not \emph{immediately} relevant for such ``massive'' data.  
For example, the running time of a typical flow-based algorithm will be 
$O(n^2)$, since it involves a multi-commodity variant of flow; single 
commodity variants of flow-based algorithms run in $O(n^{3/2})$ time; and 
more recent work has focused on using Laplacian solver ideas to do even 
better, i.e., to run in time that is nearly-linear in the size of the input.  
There is a lot of interest, mostly from within TCS so far, in this area; and
these fast solvers hold the promise to make these methods applicable to much 
larger graphs.  
I'm hoping to return to some of these Laplacian solver topics at the 
end of the semester, and I'm also planning on giving at least give a brief 
introduction to some of the ideas about how to couple spectral methods later.)

\subsection{Review some things about $\ell_1$ and $\ell_2$}

Let's review a few things about $\ell_1$ and $\ell_2$ and related topics.

\begin{definition}
The \emph{$\ell_p$-norm} on $\mathbb{R}^{k}$ is 
$||x||_p = \left( \sum_{i=1}^{k} |x_i| \right)^{1/p}$.
A finite metric space $(X,\rho)$ is \emph{realized} in $\ell_p^k$ if 
$\exists f:X \rightarrow \mathbb{R}^{k}$ s.t. $\rho(x,y) = ||f(x)-f(y)||_2$,
and it is an \emph{$\ell_p$-metric} if it can be realized  by $\ell_p^k$ for some $k$.
The \emph{metric induced by $\ell_p$} is: 
$d(x,y) = ||x-y||_p ,\quad \forall x,y \in \ell_p$.
\end{definition}
For flow-based methods, we will be most interested in $\ell_1$, while 
for spectral-based methods, we are interested in $\ell_2$.
The $\ell_p$ norm (except for $p=\infty$, which we won't discuss here) is 
usually of at most more theoretical interest.

The spaces $\ell_1$ and $\ell_2$ are very different.
(Think $\ell_1$ regression, i.e., least absolute deviations, versus $\ell_2$
regression, i.e., least-squares regression; or think $\ell_1$ regularized
$\ell_2$ regression, i.e., the lasso, versus $\ell_2$ regularized $\ell_2$
regression, i.e., ridge regression.
These differences are often ``explained'' in terms of differences between the
unit ball in the $\ell_1$ norm versus the unit ball in the $\ell_2$ norm, 
with the former being ``pointy'' and the latter being ``smooth.''
In particular, note that in those situations $\ell_1$ often has some sort of 
connection with sparsity and sparse solutions.)

Here is a comparison between $\ell_1$ and $\ell_2$ with respect to the 
spectral/flow algorithms for the graph partitioning problem we have been 
considering.
\begin{itemize}
\item 
$\ell_2$ norm:
\begin{itemize}
\item
Good for dimension reduction.
\item
Efficient polynomial time algorithm to compute embedding of any finite 
metric space.
\item
Connections to low-dimensional manifolds, diffusions, etc.
\end{itemize}
\item 
$\ell_1$ norm:
\begin{itemize}
\item
No good for dimension reduction.
\item
NP-hard to compute the optimal embedding.
\item
Connections to partitions/cuts, multicommodity flow, etc.
\end{itemize}
\end{itemize}

The following is a fundamental result in the area that is also central to
understanding why flow-based graph partitioning algorithms work.

\begin{theorem}[Bourgain]
Every $n$-point metric space $d$ admits an $\alpha$-distortion 
embedding into $\ell_p$, $\forall p$, with $\alpha = O(\log n)$.
\end{theorem}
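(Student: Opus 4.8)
The plan is to prove this via the classical Fréchet/Bourgain embedding: map each point to a vector whose coordinates record its distances to randomly chosen subsets of $X$ of geometrically decreasing densities. Concretely, let $(X,d)$ have $|X|=n$, write $d(x,S)=\min_{s\in S}d(x,s)$ for nonempty $S\subseteq X$, set $L=\lceil\log_2 n\rceil$, and fix a repetition count $N=\Theta(\log n)$. For each scale $t\in\{1,\dots,L\}$ and each $i\in\{1,\dots,N\}$ draw an independent random $S_{t,i}\subseteq X$ including each point independently with probability $2^{-t}$, and define $f:X\to\mathbb{R}^{NL}$ by $f(x)_{(t,i)}=c\cdot d(x,S_{t,i})$ for a normalization $c$ chosen per norm. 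The crucial elementary observation is that $x\mapsto d(x,S)$ is $1$-Lipschitz, i.e.\ $|d(x,S)-d(y,S)|\le d(x,y)$ by the triangle inequality, so $f$ is automatically non-expansive once $c$ is set correctly and the only real work is a lower bound showing distances do not shrink by more than $O(\log n)$. I would carry out the argument in detail for $\ell_1$ and $\ell_2$ (the two cases used in the course for flow-based and spectral methods) and remark that the identical construction handles every $\ell_p$ after adjusting $c$ and, in the lower bound, keeping a constant fraction of ``large'' coordinates at each scale rather than merely controlling a first moment.

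Easy direction (expansion). For any $x,y$ and any coordinate, $|f(x)_{(t,i)}-f(y)_{(t,i)}|\le c\,d(x,y)$, hence $\|f(x)-f(y)\|_p\le c(NL)^{1/p}d(x,y)$; taking $c=(NL)^{-1/p}$ (so $c=1/(NL)$ for $p=1$ and $c=1/\sqrt{NL}$ for $p=2$) makes $\|f(x)-f(y)\|_p\le d(x,y)$. This uses nothing beyond the triangle inequality.

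Hard direction (contraction) — the heart of the proof. Fix $x\ne y$, put $\Delta=d(x,y)$, and for $q=0,1,2,\dots$ let $\rho_q^x$ (resp.\ $\rho_q^y$) be the least radius for which the closed ball around $x$ (resp.\ $y$) contains at least $2^q$ points, and set $\hat\rho_q=\min\{\rho_q^x,\rho_q^y,\Delta/2\}$. Then $\hat\rho_0=0$, the $\hat\rho_q$ are nondecreasing, and there is a first index $\hat q\le L+1$ with $\hat\rho_{\hat q}=\Delta/2$, so $\sum_{q=1}^{\hat q}(\hat\rho_q-\hat\rho_{q-1})=\Delta/2$. The key one-scale lemma: for $1\le q\le\hat q$, a random set $S$ of density $2^{-q}$ satisfies $|d(x,S)-d(y,S)|\ge\hat\rho_q-\hat\rho_{q-1}$ with probability at least an absolute constant $c_0>0$. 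To see this, note that the choice of $\hat\rho_{q-1}$ forces at least one of the closed balls of radius $\hat\rho_{q-1}$ (centered at $x$ or $y$; call its center $u$, the other $v$) to contain $\ge 2^{q-1}$ points, while the open ball of radius $\hat\rho_q$ around $v$ contains $<2^q$ points; such a set is hit by $S$ with probability $\ge 1-e^{-1/2}$ and such a set is missed by $S$ with probability $\ge (1-2^{-q})^{2^q}\ge 1/4$; and because $\hat\rho_{q-1},\hat\rho_q\le\Delta/2$ these two point sets are disjoint (a common point would give $d(u,v)<\Delta$), so the two events are independent and the joint event has probability $\ge c_0=(1-e^{-1/2})/4$, on which $d(u,S)\le\hat\rho_{q-1}$ and $d(v,S)\ge\hat\rho_q$. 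Taking expectations, $\Expect{|d(x,S_{q,i})-d(y,S_{q,i})|}\ge c_0(\hat\rho_q-\hat\rho_{q-1})$, and summing over $i\le N$ and $q\le\hat q$ gives $\Expect{\sum_{q,i}|d(x,S_{q,i})-d(y,S_{q,i})|}\ge c_0 N\Delta/2$. Since each term lies in $[0,\Delta]$, Hoeffding over the $N$ independent copies at each scale makes the empirical sum at least $(c_0/2)N\Delta$ for this pair except with probability $e^{-\Omega(N)}$; choosing $N=\Theta(\log n)$ with a large enough constant and union-bounding over the $\binom n2$ pairs, with positive probability every pair satisfies $\sum_{q,i}|d(x,S_{q,i})-d(y,S_{q,i})|\ge(c_0/2)N\Delta$, i.e.\ $\|f(x)-f(y)\|_1\ge(c_0/2)N\Delta$, which after the normalization $c=1/(NL)$ yields $\|f(x)-f(y)\|_1\ge\Omega(\Delta/L)=\Omega(\Delta/\log n)$.

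Combining the two directions, $f$ realizes $(X,d)$ in $\ell_1$ with distortion $O(\log n)$; the same construction and estimates (with the coordinate-count version of the concentration step for $p>1$) give the bound in $\ell_2$ and in every $\ell_p$, proving the theorem. I expect the main obstacle to be precisely the one-scale lemma in the contraction argument: one must choose the ball radii so that simultaneously one ball is ``small enough'' to be avoided and a companion ball is ``large enough'' to be hit at the \emph{same} density, while keeping the two events on disjoint point sets so they are independent — the telescoping family $\hat\rho_q$ is exactly what reconciles these competing demands, and getting the book-keeping right (open vs.\ closed balls, which center plays which role at a given scale, the truncation at $\Delta/2$) is where the care is needed. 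A secondary point is the passage to general $\ell_p$, where one must phrase the Chernoff step as ``$\Omega(N)$ coordinates are each $\Omega(\hat\rho_q-\hat\rho_{q-1})$'' rather than as a bound on a sum; this is routine but must be stated at the level of coordinate counts.
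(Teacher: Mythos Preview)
Your proposal is correct and follows exactly the approach the paper sketches: the Fr\'echet embedding $x\mapsto(d(x,S_{t,i}))_{t,i}$ with random subsets at geometrically decreasing densities and $O(\log n)$ repetitions per scale. The paper gives only a brief proof idea (``map each point to distances to randomly chosen subsets of sizes $1,2,4,\dots,n/2$, with $O(\log n)$ subsets per size''), whereas you supply the full contraction argument via the telescoping radii $\hat\rho_q$ and the one-scale lemma; this is the standard fleshing-out of Bourgain's proof and matches the paper's intent precisely.
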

\begin{proof}[Proof idea]
The proof is similar to but more general than the proof for the 
corresponding embedding claim for $\ell_2$.
The idea is: to use the so-called Frechet embedding method, where the 
embedding is given by the distance from points to carefully randomly chosen 
subsets of nodes.
\end{proof}

\noindent
Note that we saw the $\ell_2$ version of this before.
Note also that the original embedding had $2^n$ dimensions, but LLR proved 
that it can be done with $O(\log^2 n)$ dimensions.

\subsection{Connection between $\ell_1$ metrics and cut metrics}


First, recall what a metric is.
\begin{definition}
A \emph{metric} is a function $d:V \times V \rightarrow \mathbb{R}$ s.t.:
(1) $d(x,y)=0 \mbox{ if } x=0$ (and sometimes $=0 \mbox{ iff } x=y$);
(2) $d(x,y) = d(y,x)$; and
(3) $d(x,y) + d(x,z) \leq d(x,y)$
\end{definition}
(Recall also that sometimes the word ``metric'' if one or more of those 
conditions is/are relaxed.)

Next, recall the definition of the ``cut metric,'' and recall that this is 
really not a metric but is instead just a semi-metric.
\begin{definition}
Given $G=(V,E)$ and a set $S \subset V$, $\delta_S$ is the ``cut metric'' 
for $S$ if
\[
\delta_S(i,j) = | \chi_S(i) - \chi_S(j) | ,
\]
where 
\[
\chi_S(i) = \left\{ \begin{array}{ll}
                           0 \text{ if } i \in S   \\
                           1 \text{ otherwise}       .
                    \end{array}
                \right.
\]
Thus
\[
\delta_S(i,j) = \left\{ \begin{array}{ll}
                           0 \text{ if } i,j \in S,\mbox{ or } i,j\in\bar{S} \\
                           1 \text{ otherwise}     .
                   \end{array}
                \right.
\]
(That is, if $\delta_s(x,y)$ is the indicator of $x$ and $y$ being on 
different sides of $S$.)
\end{definition}

There are important connection between $\ell_1$ metrics and Cut metrics.
In particular:
\begin{itemize}
\item
There exists a representation of the $\ell_1$ metrics as a conical 
combination of cut metrics.
\item
Cut metrics are the extreme rays of the $\ell_1$ cone.
\item
For these reasons, instead of minimizing the ratio of linear functions over 
the convex cone, we can minimize the ratio over the extreme rays of the 
convex cone.
Minimum ratio function over cone $\iff $ minimum over extreme rays. 
\end{itemize}

We'll spend most of today going over these results.

\textbf{Fact:}
An $n$-point metric space can be associated with a vector in 
$\mathbb{R}^{n \choose 2}$, with each coordinate corresponding to a pair
of vertices.

\textbf{Fact:}
Given a metric $d$, we will refer to the corresponding vector as $\bar{d}$.
Then, $\alpha\bar{d}+(1-\alpha)\bar{d}^{'}$ is a metric, 
$\forall\alpha\in[0,1]$.
In addition, $\forall\alpha\geq0$, $\forall\bar{d}$, $\alpha\bar{d}$ is a 
metric.
So, the set of all metrics forms a convex cone in $\mathbb{R}^{n \choose 2}$.
In somewhat more detail, we have the following result:
\begin{claim}
The set of $\ell_1$ metrics is a convex cone, \emph{i.e.}, if $d_1$ and 
$d_2$ $\in \ell_1 \mbox{ metrics}$, and if 
$\lambda_1,\lambda_2 \in \mathbb{R}^{+}$, then 
$ \lambda_1 d_1 + \lambda_2 d_2 \in \ell_1\quad\mbox{metrics}$.
\end{claim}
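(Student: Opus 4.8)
The plan is to exhibit an explicit $\ell_1$ realization of $\lambda_1 d_1 + \lambda_2 d_2$ built out of the given realizations of $d_1$ and $d_2$, using two elementary operations on $\ell_1$ embeddings: nonnegative scaling and coordinate concatenation (direct sum).

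First I would unpack the definition. Since $d_1$ and $d_2$ are $\ell_1$ metrics on the same $n$-point set $X=\{1,\dots,n\}$, there are integers $k_1,k_2$ and maps $f_1:X\to\mathbb{R}^{k_1}$, $f_2:X\to\mathbb{R}^{k_2}$ with $d_1(i,j)=\|f_1(i)-f_1(j)\|_1$ and $d_2(i,j)=\|f_2(i)-f_2(j)\|_1$ for all $i,j$. The relevant point about scaling is that for $\lambda\ge 0$ we have $\|\lambda u-\lambda v\|_1=\lambda\|u-v\|_1$, because $|\lambda|=\lambda$; this is exactly where the hypothesis $\lambda_1,\lambda_2\ge 0$ enters (a negative multiple of a metric is not a metric at all, so the nonnegativity cannot be dropped).

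Next I would define $g:X\to\mathbb{R}^{k_1+k_2}$ by concatenating coordinates, $g(i)=(\lambda_1 f_1(i),\,\lambda_2 f_2(i))$. Because the $\ell_1$ norm of a concatenated vector is the sum of the $\ell_1$ norms of its two blocks,
\[
\|g(i)-g(j)\|_1=\lambda_1\|f_1(i)-f_1(j)\|_1+\lambda_2\|f_2(i)-f_2(j)\|_1=\lambda_1 d_1(i,j)+\lambda_2 d_2(i,j),
\]
so $g$ realizes $\lambda_1 d_1+\lambda_2 d_2$ in $\ell_1^{\,k_1+k_2}$, which is precisely what is needed for it to be an $\ell_1$ metric. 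If one also wants to confirm that $\lambda_1 d_1+\lambda_2 d_2$ deserves to be called a metric (or semimetric) in the first place — symmetry, vanishing on the diagonal, triangle inequality — this is immediate, as each such property is preserved under nonnegative linear combinations; but it also comes for free from the realization just given.

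There is essentially no hard step here; the only things to watch are that the two embeddings may live in spaces of different dimension (handled by the direct sum, at the cost of adding dimensions) and that nonnegativity of the coefficients is what makes scaling compatible with the $\ell_1$ norm. The same argument shows more generally that any finite nonnegative combination $\sum_\ell \lambda_\ell d_\ell$ of $\ell_1$ metrics is an $\ell_1$ metric, realized in dimension $\sum_\ell k_\ell$; combined with the observation that a single cut metric $\delta_S$ is an $\ell_1$ metric (realized by the scalar map $\chi_S:X\to\mathbb{R}$), this is the "easy direction" of the correspondence between $\ell_1$ metrics and conical combinations of cut metrics that we will use next.
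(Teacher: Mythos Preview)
Your proof is correct and uses essentially the same mechanism as the paper: the coordinate-wise additivity of the $\ell_1$ norm, so that scaled concatenation of embeddings realizes the conical combination. The paper phrases this via the observation that an $\ell_1$ metric decomposes as a sum of line metrics $d^{(i)}(x,y)=|x_i-y_i|$, one per coordinate, but this is just your concatenation argument read in reverse; your version is more explicit and complete.
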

\begin{proof}
Recall that the line metric is an $\ell_1$ metric.
Let $d^{(i)}(x,y) = |x_i - y_i|$, for $x,y \in \mathbb{R}^{c}$.
If $d \in \ell_1$ metric, then it is the sum of line metrics.
\end{proof}

\noindent
\textbf{Fact.}
The analogous result does \emph{not} hold for $\ell_2$.

Next, we have the following theorem:

\begin{theorem}
Let $d$ be a finite $\ell_1$ metric.
Then, $d$ can be written as
\[
d = \sum_{S \subset [n]} \alpha_S \delta_S   ,
\]
for some constant $\alpha_S$ and cut metrics $\delta_S$.
That is,
\begin{eqnarray*}
{CUT}_{n} 
   &=& \{ d : d = \sum_{S \subseteq V} \alpha_S \delta_S , \alpha \geq 0\}  \\
   &=& \mbox{Positive cone generated by cut metrics} \\
   &=& \mbox{All $n$-point subsets of $\mathbb{R}^{n}$, under the $\ell_1$ norm.}
\end{eqnarray*}
\end{theorem}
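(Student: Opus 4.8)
The statement asserts the equality of three sets: the cone $\mathrm{CUT}_n$ of nonnegative combinations of cut semimetrics, and the set of $n$-point $\ell_1$ metrics (equivalently, $n$-point subsets of $\mathbb{R}^n$ under $\|\cdot\|_1$). The plan is to prove the two inclusions separately. The inclusion $\mathrm{CUT}_n \subseteq \{\ell_1 \text{ metrics}\}$ is essentially free: a single cut metric $\delta_S$ is realized by the $1$-dimensional embedding $i \mapsto \chi_S(i) \in \mathbb{R}$, so $\delta_S$ is an $\ell_1$ metric; and since the $\ell_1$ metrics form a convex cone (the Claim proved just above, that $\lambda_1 d_1 + \lambda_2 d_2$ is $\ell_1$ whenever $d_1,d_2$ are and $\lambda_1,\lambda_2 \ge 0$), any finite nonnegative combination $\sum_S \alpha_S \delta_S$ is again an $\ell_1$ metric. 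So the real content is the reverse direction: every finite $\ell_1$ metric lies in $\mathrm{CUT}_n$.

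\textbf{Reduction to line metrics.} For this direction I would first reduce to the one-dimensional case. If $d$ is an $\ell_1$ metric on $[n]$, realized by $f:[n]\to\mathbb{R}^k$ with $d(i,j) = \|f(i)-f(j)\|_1 = \sum_{\ell=1}^k |f(i)_\ell - f(j)_\ell|$, then $d = \sum_{\ell=1}^k d^{(\ell)}$, where $d^{(\ell)}(i,j) = |f(i)_\ell - f(j)_\ell|$ is the ``line metric'' induced by the $\ell$-th coordinate. Each $d^{(\ell)}$ is a line metric, and (by the cone property, now applied to $\mathrm{CUT}_n$, which is visibly closed under addition and nonnegative scaling) it suffices to show that every line metric is a nonnegative combination of cut metrics. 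Note $k$ is finite, so this is a finite sum.

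\textbf{Decomposing a line metric.} Fix one coordinate and let $a_1 \le a_2 \le \cdots \le a_n$ be the values $f(1)_\ell,\dots,f(n)_\ell$ sorted in nondecreasing order, with $\pi$ the corresponding permutation. For $t \in \{1,\dots,n-1\}$ let $S_t = \{\pi(1),\dots,\pi(t)\}$ be the set of the $t$ smallest points, and assign weight $\alpha_{S_t} = a_{t+1} - a_t \ge 0$ (these are the consecutive gaps; gaps equal to $0$, coming from ties, simply contribute nothing). I would then verify, for any pair $i,j$ with $a_{\pi^{-1}(i)} < a_{\pi^{-1}(j)}$, that $\delta_{S_t}(i,j) = 1$ precisely for the indices $t$ with $\pi^{-1}(i) \le t < \pi^{-1}(j)$, so that $\sum_{t=1}^{n-1}\alpha_{S_t}\,\delta_{S_t}(i,j) = \sum_{t=\pi^{-1}(i)}^{\pi^{-1}(j)-1}(a_{t+1}-a_t)$, which telescopes to $a_{\pi^{-1}(j)} - a_{\pi^{-1}(i)} = |f(i)_\ell - f(j)_\ell| = d^{(\ell)}(i,j)$. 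Hence $d^{(\ell)} = \sum_{t=1}^{n-1}\alpha_{S_t}\delta_{S_t} \in \mathrm{CUT}_n$, and summing over $\ell$ gives $d \in \mathrm{CUT}_n$ as a combination of at most $k(n-1)$ cut metrics.

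\textbf{Remaining points and the main obstacle.} To finish the equality with ``$n$-point subsets of $\mathbb{R}^n$'' I would observe that the decomposition just produced realizes $d$ in $\ell_1^m$ with $m \le \binom{n}{2}$ (one coordinate per nontrivial subset actually used), and a standard dimension-reduction / Carathéodory-style argument brings this down to $m \le n$ (or even $n-1$); this is a minor point and I would only sketch it. The genuinely delicate step is the telescoping verification in the line-metric case: one must be careful to handle ties among the $a_t$ (equal values should be grouped so the separating cuts and the gaps match up) and to confirm that exactly the claimed cuts separate a given pair. Everything else is bookkeeping with the cone structure. I expect no conceptual obstacle beyond getting this sorting-and-threshold argument stated cleanly.
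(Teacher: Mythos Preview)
Your proof is correct and follows essentially the same approach as the paper: both directions use the same ideas, with the key step being the coordinate-wise decomposition of a line metric into threshold cuts weighted by consecutive gaps. Your treatment is in fact more careful than the paper's (you spell out the telescoping, handle ties explicitly, and address the dimension-reduction to $\mathbb{R}^n$, which the paper simply asserts).
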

\begin{proof}
Consider any metric $d \in {CUT}_{n}$.
Then, $\forall S$ with $\alpha_S > 0$, we have a dimension, and in that 
dimension, we can put
\[
  \begin{cases}
    \alpha_S & \mbox{if } x \in \bar{S} \\
    0        & \mbox{if } x \in S .
  \end{cases}
\]
So, ${CUT}_{n} \subseteq \ell_1\quad\mbox{Metrics}$.

For the other direction, consider a set of $n$ points from $\mathbb{R}^{n}$.
Take one dimension $d$ and sort the points in increasing values along that 
dimension.
Say that we get $v_1,\ldots,v_k$ as the set of distinct values;
then define $k-1$ cut metrics: $S_i = \{ x : x_d < v_{i+1} \}$, and 
let $\alpha_i = v_{i+1} - v_{i}$, \emph{i.e.}, $k-1$ coeff.
So, along this dimension, we have that 
\[
|x_d-y_d|=\sum_{i=1}^{k}\alpha_{i}\delta_{S_i}  ,
\]
But, one can construct cut metrics for every dimension.
So, we have cut metrics in ${CUT}_{n}$, $\forall$ $n$-point metrics $\ell_1$;
thus, $\ell_1 \subseteq CUT$.
\end{proof}

\subsection{Relating this to a graph partitioning objective}

Why is this result above useful?
The usefulness of this characterization is that we are going to want to 
to optimize functions, and rather than optimize functions over all cut 
metrics, \emph{i.e.}, over the extreme rays, we will optimize over the full convex cone, \emph{i.e.}, over $\ell_1$ metrics.

This leads us to the following lemma:
\begin{lemma}
Let $C \subset \mathbb{R}^{n}$ be a convex cone, and let 
$f,g:\mathbb{R}^{n,+}\rightarrow\mathbb{R}^{+}$ be linear functions.
And assume that $\min_{x \in C} \frac{f(x)}{g(x)}$ exists.
Then
\[
\min_{x \in C} \frac{f(x)}{g(x)}
   = \min_{x\mbox{ in extreme rays of } C} \frac{f(x)}{g(x)}  .
\]
\end{lemma}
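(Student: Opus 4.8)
The plan is to exploit the scale invariance of the ratio $f(x)/g(x)$ together with the fact that every element of the cone is a conical combination of its extreme rays. Write $\mu = \min_{x\in C} f(x)/g(x)$. One inequality is immediate: extreme rays lie in $C$, so $\mu \le \min_{r\ \text{extreme ray}} f(r)/g(r)$. The work is all in the reverse direction.

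For the reverse inequality I would first reduce to the case where $C$ is finitely generated, which is exactly the situation in the application, where $C=\mathrm{CUT}_n$ is the positive hull of the finitely many cut metrics $\delta_S$. In that case the Minkowski--Weyl theorem lets me write any $x\in C$ as $x=\sum_i \lambda_i r_i$ with $\lambda_i\ge 0$ and each $r_i$ on an extreme ray of $C$. (For a general closed pointed cone one would instead slice by the affine hyperplane $\{g=1\}$, apply a Minkowski/Carath\'eodory argument to the resulting compact convex set, and observe that minimizing the linear function $f$ over a polytope is attained at a vertex, i.e.\ on an extreme ray; I would mention this only as a remark, since the polyhedral case is all that is needed here.) Because $f$ and $g$ are linear, $f(x)=\sum_i \lambda_i f(r_i)$ and $g(x)=\sum_i \lambda_i g(r_i)$.

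Next I would invoke the elementary ``mediant'' inequality: if $b_1,\dots,b_m>0$ and $a_1,\dots,a_m\ge 0$, then $\frac{\sum_i a_i}{\sum_i b_i}\ge \min_i \frac{a_i}{b_i}$ (multiply $a_i \ge (\min_j a_j/b_j)\,b_i$ through and sum over $i$). Applying this with $a_i=\lambda_i f(r_i)$ and $b_i=\lambda_i g(r_i)$ gives
\[
\frac{f(x)}{g(x)} \;=\; \frac{\sum_i \lambda_i f(r_i)}{\sum_i \lambda_i g(r_i)} \;\ge\; \min_i \frac{f(r_i)}{g(r_i)} \;\ge\; \min_{r\ \text{extreme ray}}\frac{f(r)}{g(r)},
\]
and taking the infimum over $x\in C$ yields $\mu \ge \min_{r}\, f(r)/g(r)$, which closes the argument.

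The point to be careful about is positivity of the denominator: the ratios $f(r_i)/g(r_i)$ and $g(x)$ must be well defined and nonzero, so I would record the standing assumption (implicit in ``$g:\mathbb{R}^{n,+}\to\mathbb{R}^+$'' and ``the minimum exists'') that $g(x)>0$ for every nonzero $x\in C$, in particular on every extreme ray; in the application this is just the statement that every nontrivial cut has positive demand. If one wanted to allow extreme rays with $g(r)=0$, one notes $f(r)\ge 0$ there, so such a ray only raises the ratio and can be dropped from the combination without affecting the inequality; this is a minor case check rather than a real obstacle. The only genuinely non-trivial ingredient is the extreme-ray representation of cone elements, which is clean for polyhedral cones, and that is exactly the regime in which the lemma is used.
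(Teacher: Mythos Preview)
Your proof is correct and follows essentially the same route as the paper: decompose the optimizer (or an arbitrary $x$) as a conical combination of extreme rays, use linearity of $f$ and $g$, and apply the mediant inequality $\frac{\sum a_i}{\sum b_i}\ge\min_i \frac{a_i}{b_i}$. The paper's version is terser---it simply asserts the extreme-ray decomposition of the optimizer without the polyhedral/Minkowski--Weyl discussion and does not address the $g(r)=0$ case---so your treatment is, if anything, more careful than the original.
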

\begin{proof}
Let $x_0$ be the optimum.
Since $x_0 \in C$, we have that $x_0 = \sum_i a_i y_i$, where 
$a_i \in \mathbb{R}^{+}$ and $y_i \in\quad\mbox{extreme rays of}\quad C$.
Thus,
\begin{eqnarray*}
\frac{f(x_0)}{g(x_0)} = \frac{f(\sum_i a_i y_i)}{g(\sum_i a_i y_i)} 
   &=& \frac{  \sum_i f(a_i y_i) }{ \sum_i g(a_i y_i)  }  \\
   &\geq& \frac{ f(a_j y_j) }{ g(a_j y_j) }   \qquad \mbox{where $j$ is the min value} \\
   &=& \frac{ f(y_i) }{ g(y_i) }  ,
\end{eqnarray*}
where the first and third line follow by linearity, 
and where the second line follows since 
\[
\frac{ \sum_i \alpha_i }{ \sum_i \beta_i } \geq \min_j \frac{ \alpha_j }{ \beta_j }
\] 
in general.
\end{proof}

To see the connections of all of this to sparsest cut problem, recall that 
given a graph $G=(V,E)$ we define the conductance $h_G$ and sparsity 
$\phi_G$ as follows: 
\begin{align*}
   h_G & := \min_{S \subseteq V} \frac{E(S, \bar{S})}{\min \{ |S|, |\bar{S}| \}} \\
\phi_G & := \min_{S \subseteq V} \frac{E(S, \bar{S})}{\frac{1}{n} |S| |\bar{S}|}   ,
\end{align*} 
and also that:
\[
h_G \leq \phi_G \leq 2 h_G .
\]
(This normalization might be different than what we had a few classes ago.)

Given this, we can write sparsest cut as the following optimization problem:
\begin{lemma}
Solving
\[
\phi_G = \min_{S \subseteq V} \frac{E(S, \bar{S})}{\frac{1}{n} |S| |\bar{S}|}
\]
is equivalent to solving:
\begin{align*}
\min          & \sum_{(ij)\in E} d_{ij}  \\
\mbox{s.t.~~} & \sum_{ij \in V} d_{ij} = 1  \\
              & d \in \ell_1\mbox{metric}
\end{align*}
\end{lemma}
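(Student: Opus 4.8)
The plan is to prove the two inequalities between the optimal values and then observe that the optimizers correspond. Write $\mathrm{OPT}_{\mathrm{LP}}$ for the optimum of the linear program on the right-hand side. I will use two facts already available: (i) cut metrics $\delta_S$ are $\ell_1$ metrics and, more strongly, $\mathrm{CUT}_n$ equals the set of finite $\ell_1$ metrics, a convex cone whose extreme rays are exactly the nonnegative multiples of the $\delta_S$; and (ii) the ratio-over-a-cone lemma: for a convex cone $C$ and linear $f,g\ge 0$ with $g>0$ on $C\setminus\{0\}$, one has $\min_{x\in C} f(x)/g(x)=\min_{x\ \text{extreme ray of}\ C} f(x)/g(x)$.

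First I would show $\mathrm{OPT}_{\mathrm{LP}} \le \min_{S}\frac{E(S,\bar S)}{|S|\,|\bar S|}$. Fix any cut $S$ with $S\neq\emptyset,V$. The cut metric $\delta_S$ satisfies $\sum_{ij\in E}\delta_S(i,j)=E(S,\bar S)$ and $\sum_{ij\in V}\delta_S(i,j)=|S|\,|\bar S|$ (summing over unordered pairs). Since $\delta_S$ is an $\ell_1$ metric, so is the scaled metric $d:=\delta_S/(|S|\,|\bar S|)$, which is feasible for the LP, its $V$-sum being $1$, and whose objective value is $E(S,\bar S)/(|S|\,|\bar S|)$. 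Taking the best such $S$ gives the claimed bound.

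Next I would prove the reverse inequality, which is where the cone machinery does the work. The feasible set of the LP before imposing $\sum_{ij\in V} d_{ij}=1$ is the cone $C$ of $\ell_1$ metrics on $V$; the normalization constraint merely picks one representative per ray, so $\mathrm{OPT}_{\mathrm{LP}}=\min_{d\in C\setminus\{0\}} f(d)/g(d)$ with $f(d)=\sum_{ij\in E} d_{ij}$ and $g(d)=\sum_{ij\in V} d_{ij}$, both linear and nonnegative; moreover $g>0$ on $C\setminus\{0\}$ because a nonnegative metric with vanishing total distance is identically zero. By the ratio-over-a-cone lemma the minimum is attained at an extreme ray, i.e.\ at some $\delta_S$, so $\mathrm{OPT}_{\mathrm{LP}}=\min_S f(\delta_S)/g(\delta_S)=\min_S E(S,\bar S)/(|S|\,|\bar S|)$.

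Combining the two bounds, $\mathrm{OPT}_{\mathrm{LP}}=\min_S E(S,\bar S)/(|S|\,|\bar S|)$, which equals $\tfrac1n\phi_G$ with the normalization of $\phi_G$ used here; in particular the two optimization problems share the same minimizing set, since an optimal $\ell_1$ metric, once decomposed into cut metrics, must place all its weight on minimum-ratio cuts, and conversely the cut metric of a sparsest cut is LP-optimal. The main obstacle I anticipate is bookkeeping rather than conceptual: keeping the constant factor ($n$, and the ordered-versus-unordered convention in the $V$-sum) consistent with the stated definition of $\phi_G$, and checking the hypotheses of the ratio-over-a-cone lemma — namely that the extreme rays of $\mathrm{CUT}_n$ are precisely the $\delta_S$ and that $g>0$ away from the origin, which needs only that $d$ is a genuine nonnegative (semi)metric.
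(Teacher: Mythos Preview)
Your proposal is correct and follows essentially the same route as the paper: identify the cut ratio with $f(\delta_S)/g(\delta_S)$ for the linear functionals $f(d)=\sum_{ij\in E}d_{ij}$ and $g(d)=\sum_{ij\in V}d_{ij}$, invoke the ratio-over-a-cone lemma together with the fact that cut metrics are the extreme rays of $\mathrm{CUT}_n=\ell_1$-metrics, and then normalize by scaling to impose $g(d)=1$. Your observation about the $1/n$ factor is well taken; the paper's own proof actually establishes $\mathrm{OPT}_{\mathrm{LP}}=\min_S E(S,\bar S)/(|S|\,|\bar S|)$ and is equally loose about reconciling this with the stated normalization of $\phi_G$.
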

\begin{proof}
Let $\delta_S = $ the cut metric for $S$.
Then, 
\[
 \frac{ |E(S,\bar{S})| }{ |S|\cdot|\bar{S}| }
   = \frac{ \sum_{ij \in E} \delta_S(i,j) }{ \sum_{\forall ij} \delta_S(i,j) }
\]
So, 
\[
\phi_G = \min_S \frac{ \sum_{ij \in E} \delta_S(i,j) }{ \sum_{\forall ij} \delta_S(i,j) }
\]
Since $\ell_1$-metrics are linear combinations of cut metrics, and cut 
metrics are extreme rays of $\ell_1$ from the above lemma, this ratio is 
minimized at one of the extreme rays of the cone.
So, 
\[
\phi_G = \min_{d\in\ell_1} \frac{ \sum_{ij \in E} d_S(ij) }{ \sum_{\forall ij} d_S(ij) }   .
\]
Since this is invariant to scaling, WLOG we can assume 
$\sum_{\forall ij} d_{ij} = 1$; and we get the lemma.
\end{proof}

\subsection{Turning this into an algorithm}

It is important to note that the above formulation is still intractable---we 
have just changed the notation/characterization.
But, the new notation/characterization suggests that we might be able to 
\emph{relax} (optimize the same objective function over a larger set) the 
optimization problem---as we did with spectral, if you recall.


So, the relaxation we will consider is the following:
relax the requirement that $d \in \ell_1 \mbox{ Metric}$ to 
$d \in \mbox{ Any Metric}$.
We can do this by adding $3{n \choose 3}$ triangle inequalities to get the 
following LP:
\begin{align*}
\lambda^{*} = \min & \sum_{ij \in E} d_{ij}              \\
\mbox{s.t.~~}      & \sum_{\forall ij \in V} d_{ij} = 1  \\
                   & d_{ij} \geq 0                       \\
                   & d_{ij} = d_{ji}                     \\
                   & d_{ij} \leq d_{ik} + d_{jk} \quad\forall i,j,k \quad\mbox{triples}
\end{align*}
(Obviously, since there are a lot of constraints, a naive solution won't be 
good for big data, but we will see that we can be a bit smarter.)
Clearly, 
\[
\lambda^{*} \leq \phi^{*}  = \mbox{ Solution with $d \in \ell_1\quad\mbox{Metric constraint}$}
\]
(basically since we are minimizing over a larger set).
So, our goal is to show that we don't loose too much, \emph{i.e.}, that:
\[
 \phi^{*} \leq O(\log n) \lambda^{*}  .
\] 



Here is the \textsc{Algorithm}.
Given as input a graph $G$, do the following:
\begin{itemize} 
\item 
Solve the above LP to get a metric/distance 
$d:V \times V \rightarrow \mathbb{R}^{+}$.
\item 
Use the (constructive) Bourgain embedding result to embed $d$ into an 
$\ell_1$ metric (with, of course an associated $O(\log n)$ distortion).
\item 
Round the $\ell_1$ metric (the solution) to get a cut. 
\begin{itemize} 
	\item For each dimension/direction, covert the $\ell_1$ 
              embedding/metric along that to a cut metric representation. 
	\item Choose the best. 
\end{itemize} 
\end{itemize} 
Of course, this is what is going on under the hood---if you were actually 
going to do it on systems of any size you would use something more 
specialized, like specialized flow or push-relabel code.

Here are several things to note.
\begin{itemize}
\item
If we have $\ell_1$ embedding with distortion factor $\xi$ then can 
approximate the cut up to $\xi$. 
\item
Everything above is polynomial time, as we will show in the next theorem.
\item
In practice, we can solve this with specialized code to solve the dual of 
corresponding multicommodity flow.
\item
Recall that one can ``localize'' spectral by running random walks from 
a seed node.
Flow is hard to localize, but recall the \textsc{Improve} algorithm, 
but which is still $\tilde{O}(n^{3/2})$.
\item
We can combine spectral and flow, as we will discuss, in various ways.
\end{itemize}

\begin{theorem}
The algorithm above is a polynomial time algorithm to provide an 
$O(\log n)$ approximation to the sparsest cut problem.
\end{theorem}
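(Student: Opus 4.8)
The plan is to verify three things about the algorithm displayed above and then chain them together: (i) the LP relaxation is solvable in polynomial time; (ii) the Bourgain/LLR embedding of the resulting metric into $\ell_1$ is constructive and runs in polynomial time, losing only a factor $\alpha = O(\log n)$; and (iii) the rounding step loses nothing beyond this distortion, because by the cut-cone theorem an explicit $\ell_1$ embedding exhibits its own metric as a conical combination of cut metrics, and by the ratio-over-extreme-rays lemma the best such cut metric is at least as good (in sparsity ratio) as the $\ell_1$ metric itself. The only place the $O(\log n)$ enters is the Bourgain distortion, since the LP optimum is in general an arbitrary metric rather than an $\ell_1$ metric.

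For the quality-of-approximation bound I would argue as follows. Let $\lambda^*$ be the LP optimum with optimal metric $d^*$ (normalized so $\sum_{ij} d^*_{ij}=1$), and let $\phi^*$ denote the value of the same objective restricted to $\ell_1$ metrics, which by the earlier lemma equals the sparsest-cut value $\phi_G$. Since every $\ell_1$ metric is a metric, $\lambda^* \le \phi^*$. Apply the constructive Bourgain theorem (LLR form) to $d^*$ to obtain $f:V\to\ell_1^k$ with $k=O(\log^2 n)$ such that, after rescaling $f$ so its contraction is $1$, we have $\tfrac1\alpha d^*(i,j)\le \|f(i)-f(j)\|_1\le d^*(i,j)$ with $\alpha=O(\log n)$. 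Set $\tilde d(i,j)=\|f(i)-f(j)\|_1$, an $\ell_1$ metric, and write $\tilde d=\sum_S\alpha_S\delta_S$ via the cut-cone theorem. The ratio lemma gives
\[
\min_{S}\frac{\sum_{(ij)\in E}\delta_S(i,j)}{\sum_{ij}\delta_S(i,j)}
\;\le\;\frac{\sum_{(ij)\in E}\tilde d(i,j)}{\sum_{ij}\tilde d(i,j)}
\;\le\;\frac{\sum_{(ij)\in E} d^*(i,j)}{\tfrac1\alpha\sum_{ij} d^*(i,j)}
\;=\;\alpha\,\lambda^*\;\le\;\alpha\,\phi^*.
\]
Hence the best cut $S$ obtained by converting each of the $k$ coordinates of $f$ into a cut-metric representation and taking the minimum has sparsity ratio within $O(\log n)$ of the optimum, which is exactly the claimed approximation guarantee.

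For the running time I would note: the LP has $\binom n2$ variables and $O(n^3)$ constraints (normalization, nonnegativity, symmetry, triangle inequalities), so it is solved in polynomial time by the ellipsoid method or an interior-point method (and in practice one solves the dual max-concurrent-flow with specialized flow/push-relabel code, as remarked earlier); the LLR embedding is a randomized polynomial-time procedure producing $O(\log^2 n)$ coordinates; and the rounding considers, per coordinate, at most $n$ threshold cuts, so $O(n\log^2 n)$ candidate cuts total, each evaluable in polynomial time. The step I expect to be the main obstacle is not any single computation but making sure the cut-cone/rounding step is genuinely algorithmic and that the chain of inequalities above correctly tracks the normalization: one must use that an \emph{explicit} $\ell_1$ embedding (coordinate by coordinate, via the sorted-values construction in the proof of the cut-cone theorem) yields the decomposition $\tilde d=\sum_S\alpha_S\delta_S$ whose cuts are precisely the sweep cuts examined by the algorithm, so that the ratio lemma applies to a cut the algorithm actually inspects. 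It is also worth recording that this $O(\log n)$ factor cannot be improved in general, since by the earlier Leighton–Rao lower bound the flow/cut gap is $\Omega(\log n)$ on expanders.
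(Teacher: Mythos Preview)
Your proposal is correct and follows essentially the same route as the paper: solve the metric LP, apply the constructive Bourgain/LLR embedding into $\ell_1^{O(\log^2 n)}$ with distortion $O(\log n)$, decompose the resulting $\ell_1$ metric into $O(n\log^2 n)$ cut metrics via the coordinate-wise sweep construction, and invoke the ratio-over-extreme-rays lemma to pass to the best cut, with the chain of inequalities you wrote matching the paper's. Your treatment is in fact slightly more careful than the paper's sketch in two places: you pin down the normalization of the embedding (contraction $=1$) so the numerator/denominator bounds go the right way, and you explicitly flag that the cut-cone decomposition must be the algorithmic sweep-cut one so the minimizing $S$ is among the cuts the algorithm actually inspects.
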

\begin{proof}
First, note that solving the LP is a polynomial time computation to get a 
metric $d^{*}$.
Then, note that the Bourgain embedding lemma is constructive.
Finding an embedding of $d^{*}$ to $d \in \ell_{1}^{O(\log^2 n)}$ with 
distortion $O(\log n)$.
So, we can write $d$ as a linear combination of $O(n \log^2 n)$ cut metrics
$d = \sum_{S \in \mathcal{S}} \alpha_S\delta_S$, where 
$|\mathcal{S}| = O( n \log^2 n)$.
Note:
\begin{eqnarray*}
\min_{S\in\mathcal{S}} \frac{ \sum_{ij \in E} \delta_S(ij) }{ \sum_{\forall ij} \delta_S(ij) }   
   &\leq& \frac{ \sum_{ij \in E} d_{ij} }{ \sum_{\forall ij} d_{ij} }   \\
   &\leq& O(\log n)\frac{ \sum_{ij \in E} d^{*}_{ij} }{ \sum_{\forall ij} d^{*}_{ij} }   ,
\end{eqnarray*}
where the first inequality follows since $d$ is in the cone of cut metrics, 
and where the second inequality follows from Bourgain's theorem.
But, 
\[
\frac{ \sum_{ij \in E} d^{*}_{ij} }{ \sum_{\forall ij} d^{*}_{ij} }
   = \min_{d^{'} \mbox{ is metric}} \frac{ \sum_{ij \in E} d^{'}_{ij} }{ \sum_{\forall ij} d^{'}_{ij} }   \\
   \leq \min_{\forall S} \frac{ \sum_{ij \in E} d_{S}(ij) }{ \sum_{\forall ij} d_{S}(ij) }   ,
\]
where the equality follows from the LP solution and the inequality follows 
since LP is a relaxation of a cut metric.
Thus, 
\[
\min_{S\in\mathcal{S}} \frac{ \sum_{ij \in E} \delta_S(ij) }{ \sum_{\forall ij} \delta_S(ij) }   
\leq 
O(\log n) 
\min_{\forall S} \frac{ \sum_{ij \in E} d_{S}(ij) }{ \sum_{\forall ij} d_{S}(ij) }   .
\]
This establishes the theorem.
\end{proof}

So, we can also approximate the value of the objective---how do we actually
find a cut from this? 
(Note that sometimes in the theory of approximation algorithms you 
\emph{don't} get anything more than an approximation to the optimal number, 
but that is somewhat dissatisfying if you want you use the output of the 
approximation algorithm for some downstream data application.)

To see this:
\begin{itemize}
\item
Any $\ell_1$ metric can be written as a conic combination of cut 
metrics---in our case, with $O(n \log^n)$ 
nonzeros---$d^{\sigma} = \sum_S \alpha_S \delta_S$.
\item
So, pick the best cut from among the ones with nonzero $\alpha$ in the 
cut decomposition of $d^{\sigma}$.
\end{itemize}

\subsection{Summary of where we are}

Above we showed that 
\begin{eqnarray*}
\phi_G &=& \min_{S \subset V} \frac{E(S,\bar{S})}{|S||\bar{S}|} \\
       &=& \min  \sum_{ij \in E} d_{ij}   \\ 
       & & \mbox{s.t.~~}  \sum_{ij \in V} d_{ij}=1 \\
       & & d \in \ell_1\mbox{ metric}
\end{eqnarray*}
can be approximated by relaxing it to
\begin{align*}
\min          & \sum_{ij \in E} d_{ij}     \\
\mbox{s.t.~~} & \sum_{ij \in V} d_{ij} = 1 \\
              & d \in \mbox{Metric}
\end{align*}

This relaxation is different than 
the relaxation associated 
with spectral, where we showed that
\[
\phi = \min_{x \in \{0,1\}^V} \frac{ A_{ij} |x_i-x_j|^2 }{ \frac{1}{n}\sum_{ij} |x_i-x_j|^2 }
\]
can be relaxed to 
\[
d-\lambda_2 = \min_{x\perp\vec{1}} \frac{ A_{ij} (x_i-x_j)^2 }{ \frac{1}{n}\sum_{ij} (x_i-x_j)^2 }
\]
which can be solved with the second eigenvector of the Laplacian.

Note that these two relaxations are very different and incomparable, in the 
sense that one is not uniformly better than the other.
This is related to them succeeding and failing in different places, and it 
is related to them parametrizing problems differently, and it can be used
to diagnose the properties of how each class of algorithms performs on 
real data.
Later, we will show how to generalize this previous flow-based result and 
combine it with spectral.

Here are several questions that the above discussion raises.
\begin{itemize}
\item
What else can you relax to? 
\item
In particular, can we relax to something else and improve the $O(\log n)$
factor?
\item
Can we combine these two incomparable ideas to get better bounds in 
worst-case and/or in practice?
\item
Can we combine these ideas to develop algorithms that smooth or regularize
better in applications for different classes of graphs?
\item
Can we use these ideas to do better learning, e.g., semi-supervised 
learning on graphs?
\end{itemize}
We will address some of these questions later in the term, as there is a lot 
of interest in these and related questions.

\newpage

\section{%
(03/03/2015): 
Some Practical Considerations (1 of 4):
How spectral clustering is typically done in practice}

Reading for today.
\begin{compactitem}
\item
``A Tutorial on Spectral Clustering,'' in Statistics and Computing, by von Luxburg
\end{compactitem}

Today, we will shift gears.
So far, we have gone over the theory of graph partitioning, including 
spectral (and non-spectral) methods, focusing on \emph{why} and \emph{when}
they work.
Now, we will describe a little about \emph{how} and \emph{where} these 
methods are used.
In particular, for the next few classes, we will talk somewhat informally 
about some practical issues, e.g., how spectral clustering is done in 
practice, how people construct graphs to analyze their data, connections 
with linear and kernel dimensionality reduction methods.
Rather than aiming to be comprehensive, the goal will be to provide
illustrative examples (to place these results in a broader context and also 
to help people define the scope of their projects).
Then, after that, we will get back to some theoretical questions making 
precise the how and where.
In particular, we will then shift to talk about how diffusions and random walks 
provide a robust notion of an eigenvector and how they can be used to extend 
many of the vanilla spectral methods we have been discussing to very 
non-vanilla settings.
This will then lead to how we can use spectral graph methods for other 
related problems like manifold modeling, stochastic blockmodeling, Laplacian
solvers, etc.

Today, we will follow the von Luxburg review.
This review was written from a machine learning perspective, and in many 
ways it is a very good overview of spectral clustering methods; but beware: 
it also makes some claims (e.g., about the quality-of-approximation 
guarantees that can be proven about the output of spectral graph methods) 
that---given what we have covered so far---you should immediately see are 
not correct.

\subsection{Motivation and general approach}

The motivation here is two-fold.
\begin{itemize}
\item
Clustering is an extremely common method for what is often called 
\emph{exploratory data analysis}. 
For example, it is very common for a person, when confronted with a new 
data set, to try to get a first view of the data by identifying subsets of 
it that have similar behavior or properties.
\item
Spectral clustering methods in particular are a very popular class of 
clustering methods.
They are usually very simple to implement with standard linear algebra 
libraries; and they often outperform other methods such as $k$-means, 
hierarchical clustering, etc.
\end{itemize}

The first thing to note regarding general approaches is that Section 2 of 
the von Luxburg review starts by saying ``Given a set of data points 
$x_1,\ldots,x_n$ and some notion of similarity $s_{ij} \ge 0$ between all 
pairs of data points $x_i$ and $x_j$ ...''
That is, the data are vectors.
Thus, any graphs that might be constructed by algorithms are constructed from
primary data that are vectors and are useful as intermediate steps only.
This will have several obvious and non-obvious consequences.
This is a very common way to view the data (and thus spectral graph methods), 
especially in areas such as statistics, machine learning, and areas that are
not computer science algorithms.
That perspective is not good or bad per se, but it is worth emphasizing that 
difference.
In particular, the approach we will now discuss will be very different than what 
we have been discussing so far, which is more common in CS and TCS and 
where the data were a graph $G=(V,E)$, e.g., the single web graph out there, 
and thus in some sense a single data point.
Many of the differences between more algorithmic and more machine learning
or statistical approaches can be understood in terms of this difference.
We will revisit it later when we talk about manifold modeling, stochastic 
blockmodeling, Laplacian solvers, and related topics.

\subsection{Constructing graphs from data}

If the data are vectors with associated similarity information, then an obvious 
thing to do is to represent that data as a graph $G=(V,E)$, where each vertex 
$v \in V$ is associated with a data point $x_i$ an edge $e = (v_i v_j) \in E$ is 
defined if $s_{ij}$ is larger than some threshold.
Here, the threshold could perhaps equals zero, and the edges might be 
weighted by $s_{ij}$.
In this case, an obvious idea to cluster the vector data is to cluster the nodes 
of the corresponding graph.

Now, let's consider how to specify the similarity information $s_{ij}$.
There are many ways to construct a similarity graph, given the vectors 
$\{x_i\}_{i=1}^{n}$ data points as well as pairwise similarity (or distance) 
information $s_{ij}$ (of $d_{ij}$).  
Here we describe several of the most popular.
\begin{itemize}
\item
\textbf{$\epsilon$-NN graphs.}
Here, we connect all pairs of points with distance $d_{ij} \le \epsilon$.
Since the distance ``scale'' is set (by $\le \epsilon$), it is common not
to including the weights.  
The justification is that, in certainly idealized situations, including weights 
would not incorporate more information.
\item
\textbf{$k$-NN graphs.}
Here, we connect vertex $i$ with vertex $j$ if $v_i$ is among the $k$-NN of 
$v_i$, where NNs are given by the distance $d_{ij}$.
Note that this is a directed graph.
There are two common ways to make it undirected.
First, ignore directions; and
second, include an edge if 
($v_i$ connects to $v_j$ AND $v_j$ connects to $v_i$) or if 
($v_i$ connects to $v_j$ OR $v_j$ connects to $v_i$).
In either of those cases, the number of edges doesn't equal $k$; sometimes 
people filter it back to exactly $k$ edges per node and sometimes not.
In either case, weights are typically included.
\item
\textbf{Fully-connected weighted graphs.}
Here, we connect all points with a positive similarity to each other.  
Often, we want the similarity function to represent local neighborhoods, 
and so $s_{ij}$ is either transformed into another form or constructed
to represent this.
A popular choice is the Gaussian similarity kernel
\[
s(x_i,x_j) = \exp \left ( \frac{1}{2\sigma^2} \| x_i-x_j \|_2^2 \right)  ,
\]
where $\sigma$ is a parameter that, informally, acts like a width.
This gives a matrix that has a number of nice properties, e.g., it is 
positive and it is SPSD, and so it is good for MLers who like kernel-based 
methods.
Moreover, it has a strong mathematical basis, e.g., in scientific computing.
(Of course, people sometimes use this $s_{ij}=s(x_i,x_j)$ information to 
construct $\epsilon$-NN or $k$-NN graphs.)
\end{itemize}
Note that in describing those various ways to construct a graph from the 
vector data, we are already starting to see a bunch of knobs that can be 
played with, and this is typical of these graph construction methods.

Here are some comments about that graph construction approach.
\begin{itemize}
\item
Choosing the similarity function is basically an art.
One of the criteria is that typically one is not interested in resolving
differences that are  large, i.e., between moderately large and very large 
distances, since the goal is simply to ensure that those points are not close 
and/or since (for domain-specific reasons) that is the least reliable similarity 
information.
\item
Sometimes this approach is of interest in semi-supervised and transductive
learning.
In this case, one often has a lot of unlabeled data and only a little bit of 
labeled data; and one wants to use the unlabeled data to help define some 
sort of geometry to act as a prior to maximize the usefulness of the labeled 
data in making predictions for the unlabeled data.
Although this is often thought of as defining a non-linear manifold, you should 
think of it at using unlabeled data to specify a data-dependent model class to 
learn with respect to.
(That makes sense especially if the labeled and unlabeled data come from the
same distribution, since in that case looking at the unlabeled data is akin
to looking at more training data.)
As we will see, these methods often have an interpretation in terms of a 
kernel, and so they are used to learn linear functions in implicitly-defined
feature spaces anyway.
\item
$k$-NN, $\epsilon$-NN, and fully-connected weighted graphs are all the same 
in certain very idealized situations, but they can be very different in practice.
$k$-NN  often homogenizes more, which people often like, and/or it connects
points of different ``size scales,'' which people often find~useful.
\item
Choosing $k$, $\epsilon$, and $\sigma$ large can easily ``short circuit''
nice local structure, unless (and sometimes even if) the local structure is 
extremely nice (e.g., one-dimensional).
This essentially injects large-scale noise and expander-like structure; and
in that case one should expect very different properties of the constructed 
graph (and thus very different results when one runs algorithms).
\item
The fully-connected weighted graph case goes from being a rank-one complete 
graph to being a diagonal matrix, as one varies $\sigma$.
An important question (that is rarely studied) is how does that graph look 
like as one does a ``filtration'' from no edges to a complete graph.
\item
Informally, it is often thought that mutual-$k$-NN is between $\epsilon$-NN
and $k$-NN: it connects points within regions of constant density, but it 
doesn't connect regions of very different density.
(For $\epsilon$-NN, points on different scales don't get connected.)
In particular, this means that it is good for connecting clusters of 
different densities.
\item
If one uses a fully-connected graph and then sparsifies it, it is often hoped
that the ``fundamental structure'' is revealed and is nontrivial.
This is true in some case, some of which we will return to later, but it is also
very non-robust.
\item
As a rule of thumb, people often choose parameters s.t. $\epsilon$-NN and 
$k$-NN graphs are at least ``connected.''
While this seems reasonable, there is an important question of whether it 
homogenizes too much, in particular if there are interesting 
heterogeneities in the graph.
\end{itemize}

\subsection{Connections between different Laplacian and random walk matrices}

Recall the combinatorial or non-normalized Laplacian 
\[
L=D-W  , 
\]
and the normalized Laplacian 
\[
L_{sym} = D^{-1/2}LD^{-1/2} = I - D^{-1/2}WD^{-1/2}   .
\]
There is also a random walk matrix that we will get to more detail on in a 
few classes and that for today we will call the (somewhat non-standard name)
random walk Laplacian 
\[
L_{rw} = D^{-1}L = I-D^{-1}W  = D^{-1/2}L_{sym}D^{1/2} .
\]
Here is a lemma connecting them.
\begin{lemma}
Given the above definitions of $L$, $L_{sym}$, and $L_{rw}$, we have
the following.
\begin{enumerate}
\item
For all $x\in\mathbb{R}^{n}$, 
\[
x^TL_{sym}x = \frac{1}{2}
              \sum_{ij} W_{ij} 
                        \left( \frac{x_i}{\sqrt{d_i}} 
                             - \frac{x_j}{\sqrt{d_j}} \right)^{2}  .
\]
\item
$\lambda$ is an eigenvalue of $L_{rw}$ with eigenvector $u$ 
iff 
$\lambda$ is an eigenvalue of $L_{sym}$ with eigenvector $w=D^{1/2}u$ 
\item
$\lambda$ is an eigenvalue of $L_{rw}$ with eigenvector $u$ 
iff 
$\lambda$ and $u$ solve the generalized eigenvalue problem $Lu = \lambda D u$.
\item
$0$ is an eigenvalue of $L_{rw}$ with eigenvector $\vec{1}$ 
iff 
$0$ is an eigenvalue of $L_{sym}$ with eigenvector $D^{1/2}\vec{1}$ 
\item
$L_{sym}$ and $L_{rw}$ are PSD and have $n$ non-negative real-valued 
eigenvalues $0 = \lambda_1 \le \cdots \le \lambda_n$.
\end{enumerate}
\end{lemma}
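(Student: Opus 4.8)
The plan is to prove the five parts in the order stated, since each later part reuses algebraic identities set up in the earlier ones; the only substantive ingredient is one invocation of the Spectral Theorem proved earlier in these notes, and everything else is bookkeeping with the three conjugate forms $L$, $L_{sym}=D^{-1/2}LD^{-1/2}$, $L_{rw}=D^{-1}L$. For part (1), I would change variables to $y=D^{-1/2}x$: then $x^TL_{sym}x=(D^{-1/2}x)^TL(D^{-1/2}x)=y^TLy$, and I apply the already-established weighted quadratic-form identity $y^TLy=\frac{1}{2}\sum_{ij}W_{ij}(y_i-y_j)^2$ (the weighted version of $x^TLx=\sum_{(ij)\in E}(x_i-x_j)^2$). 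Substituting $y_i=x_i/\sqrt{d_i}$ gives the claimed formula.

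Parts (2)--(4) all follow from the two identities $L_{rw}=D^{-1}L$ and $L_{rw}=D^{-1/2}L_{sym}D^{1/2}$, the latter checked in one line: $D^{-1/2}(D^{-1/2}LD^{-1/2})D^{1/2}=D^{-1}L$. For (2): $L_{rw}u=\lambda u \iff D^{-1/2}L_{sym}D^{1/2}u=\lambda u \iff L_{sym}(D^{1/2}u)=\lambda(D^{1/2}u)$, multiplying on the left by the invertible $D^{1/2}$, which is an equivalence of eigenpairs with $w=D^{1/2}u$. For (3): $L_{rw}u=\lambda u \iff D^{-1}Lu=\lambda u \iff Lu=\lambda Du$. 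Part (4) is the $\lambda=0$, $u=\vec{1}$ instance of (2), together with the check that $\vec{1}$ is genuinely in the kernel of $L_{rw}$: $L_{rw}\vec{1}=(I-D^{-1}W)\vec{1}=\vec{1}-D^{-1}(W\vec{1})=\vec{1}-D^{-1}(D\vec{1})=0$, using that the row sums of $W$ are exactly the diagonal entries of $D$.

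For part (5): $L_{sym}=D^{-1/2}LD^{-1/2}$ is real symmetric, so the Spectral Theorem gives $n$ real eigenvalues and an orthonormal eigenbasis; part (1) shows its Rayleigh quotient is a nonnegative combination of squares, hence $L_{sym}\succeq 0$ and all eigenvalues lie in $[0,\infty)$; and part (4) exhibits $0$ as an eigenvalue (eigenvector $D^{1/2}\vec{1}$), so after sorting we may write $0=\lambda_1\le\cdots\le\lambda_n$. For $L_{rw}$, part (2) shows $u\mapsto D^{1/2}u$ is a bijection between eigenvectors of $L_{rw}$ and of $L_{sym}$ that preserves eigenvalues, so $L_{rw}$ has exactly the same real, nonnegative spectrum, including $0$; equivalently, $L_{rw}$ is similar (via $D^{1/2}$) to the SPSD matrix $L_{sym}$, which is all that is needed.

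The main obstacle --- such as it is --- is not a computation but a subtlety of phrasing: $L_{rw}$ itself is not symmetric, so one should not try to argue ``$L_{rw}$ is PSD'' directly as a statement about a symmetric quadratic form. The clean route is to prove all spectral claims for the symmetric $L_{sym}$ and then transport them to $L_{rw}$ through the similarity established in (2). With that ordering fixed, the rest is routine substitution.
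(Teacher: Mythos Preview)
Your proposal is correct and complete. The paper itself does not actually prove this lemma --- it states it and then remarks ``Hopefully none of these claims are surprising by now'' --- so there is no approach to compare against. Your argument is the standard one, and your closing remark about handling the non-symmetric $L_{rw}$ via similarity to $L_{sym}$ rather than by a direct quadratic-form argument is exactly the right way to read the slightly loose ``$L_{rw}$ is PSD'' phrasing in the statement.
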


\noindent
Hopefully none of these claims are surprising by now, but they do make 
explicit some of the connections between different vectors and different 
things that could be computed, e.g., one might solve the generalized
eigenvalue problem $Lu = \lambda D u$ or run a random walk to approximate 
$u$ and then from that rescale it to get a vector for $L_{sym}$.

\subsection{Using constructed data graphs}

Spectral clustering, as it is often used in practice, often involves first 
computing several eigenvectors (or running some sort of procedures that 
compute some sort of approximate eigenvectors) and then performing 
$k$-means in a low-dimensional space defined by them.
Here are several things to note.
\begin{itemize}
\item
This is harder to analyze than the vanilla spectral clustering we have so far 
been considering.
The reason is that one must analyze the $k$ means algorithm also.
In this context, $k$-means is essentially used as a rounding algorithm.
\item
A partial justification of this is provided by the theoretical result on using 
the leading $k$ eigenvectors that you considered on the first homework.
\item
A partial justification is also given by a result we will get to below that shows
that it works in very idealized situations.
\end{itemize}

We can use different Laplacians in different ways, as well as different 
clustering, $k$-means, etc. algorithms in different ways to get spectral-like 
clustering algorithms.
Here, we describe $3$ canonical algorithms (that use $L$, $L_{rw}$, and 
$L_{sym}$) to give an example of several related approaches.

Assume that we have $n$ points, $x_1,\ldots,x_n$, that we measure pairwise
similarities $s_{ij} = s(x_i,x_j)$ with symmetric nonnegative similarity 
function, and that we denote the similarity matrix by 
$S = \left(S_{ij}\right)_{i,j=1,\ldots,n}$.
The following algorithm, let's call it \textsc{PopularSpectralClustering}, 
takes as input a similarity matrix $S\in\mathbb{R}^{n \times n}$ and a 
positive integer $k\in\mathbb{Z}^{+}$ which is the number of clusters; and 
it returns $k$ clusters.
It does the following steps.
\begin{enumerate}
\item
Construct a similarity graph (e.g., with $\epsilon$-NN, $k$-NN, 
fully-connected graph, etc.)
\item
Compute the unnormalized Laplacian $L=D-A$.
\item
\begin{itemize}
\item
If (use $L$) \\
then compute the first $k$ eigenvectors $u_1,\ldots,u_k$ of $L$,
\item
else if (use $L_{rw}$) \\
then compute first $k$ generalized eigenvectors $u_1,\ldots,u_k$ of the 
generalized eigenvalue problem $Lu = \lambda D u$.
(Note by the above that these are eigenvectors of $L_{rw}$.)
\item
else if (use $L_{sym}$) \\
then compute the first $k$ eigenvectors $u_1,\ldots,u_k$ of $L_{sym}$.
\end{itemize}
\item
Let $U\in\mathbb{R}^{n \times k}$ be the matrix containing the vectors 
$u_1,\ldots,u_k$ as columns.
\item
\begin{itemize}
\item
If (use $L_{sym}$) \\
then $u_{ij} \leftarrow u_{ij} / \left( \sum_k u_{ik}^{2} \right)^{1/2}$, 
i.e., normalize $U$ row-wise.
\end{itemize}
\item
For $i=\{1,\ldots,n\}$, let $y_i\in\mathbb{R}^{k}$ be a vector containing 
the $i^{th}$ row of $U$.
\item
Cluster points $\left( y_i \right)_{i\in[n]}$ in $\mathbb{R}^{k}$ with a 
$k$-means algorithm into clusters, call them $C_1,\ldots,C_k$.
\item
Return:
clusters $A_1,\ldots,A_k$, with $A_i = \{ j : y_j \in C_i \} $.
\end{enumerate}

Here are some comments about the \textsc{PopularSpectralClustering} algorithm.
\begin{itemize}
\item
The first step is to construct a graph, and we discussed above that 
there are a lot of knobs.
In particular, the \textsc{PopularSpectralClustering} algorithm is not 
``well-specified'' or ``well-defined,'' in the sense that the algorithms
we have been talking about thus far are.
It might be better to think of this as an algorithmic approach, with 
several knobs that can be played with, that comes with suggestive 
but weaker theory than what we have been describing so~far.
\item
$k$-means is often used in the last step, but it is not necessary, and it is
not particularly principled (although it is often reasonable if the data
tend to cluster well in the space defined by $U$).
Other methods have been used but are less popular, presumably since $k$-means
is good enough and there are enough knobs earlier in the pipeline that the last step 
isn't the bottleneck to getting good results.
Ultimately, to get quality-of-approximation guarantees for an algorithm 
like this, you need to resort to a Cheeger-like bound or a heuristic 
justification or weaker theory that provides justification in idealized cases.
\item
In this context, $k$-means is essentially a rounding step to take a continuous 
embedding, provided by the continuous vectors $\{ y_i \}$, where 
$y_i \in \mathbb{R}^{k}$, and put them into one of $k$ discrete values.
This is analogous to what the sweep cut did.
But we will also see that this embedding, given by $\{y_i\}_{i=1}^{n}$ can be 
used for all sorts of other things.
\item
Remark: If one considers the $k$-means objective function, written as an IP 
and then relaxes it (from having the constraint that each data point goes into 
one of $k$ clusters, written as an orthogonal matrix with one nonzero per 
column, to being a general orthogonal matrix), then you get an objective function, 
the solution to which can be computed by computing a truncated SVD, i.e., 
the top $k$ singular vectors.
This provides a $2$-approximation to the $k$-means objective.
There are better approximation algorithms for the $k$-means objective, when 
measured by the quality-of-approximation, but this does provide an interesting 
connection.
\item
The rescaling done in the ``If (use $L_{sym}$) then'' is typical of many 
spectral algorithms, and it can be the source of confusion.
(Note that the rescaling is done with respect to 
$\left(P_U\right)_{ii} = \left(UU^T\right)_{ii}$, i.e., the statistical leverage scores 
of $U$, and this means that more ``outlying'' points get down-weighted.)
From what we have discussed before, it should not be surprising that we need 
to do this to get the ``right'' vector to work with, e.g., for the Cheeger theory 
we talked about before to be as tight as possible.
On the other hand, if you are approaching this from the perspective of 
engineering an algorithm that returns clusters when you expect them, it 
can seem somewhat ad hoc.
There are many other similar ad hoc and seemingly ad hoc decisions that are 
made when engineering implementations of spectral graph methods, and this 
lead to a large proliferation of spectral-based methods, many of which are
very similar ``under the hood.''
\end{itemize}

All of these algorithms take the input data $x_i\in\mathbb{R}^{n}$ and change 
the representation in a lossy way to get data points $y_i\in\mathbb{R}^{k}$.
Because of the properties of the Laplacian (some of which we have been 
discussing, and some of which we will get back to), this \emph{often} 
enhances the cluster properties of the data.

In idealized cases, this approach works as expected, as the following example 
provides.
Say that we sample data points from $\mathbb{R}$ from four equally-spaced
Gaussians, and from that we construct a NN graph.
(Depending on the rbf width of that graph, we might have an essentially
complete graph or an essentially disconnected graph, but let's say we choose 
parameters as the pedagogical example suggests.)
Then $\lambda_1=0$; $\lambda_2$, $\lambda_3$, and $\lambda_4$ are small; and 
$\lambda_5$ and up are larger.
In addition, $v_1$ is flat; and higher eigenfunctions are sinusoids of 
increasing frequency.
The first few eigenvectors can be used to split the data into the four 
natural clusters (they can be chosen to be worse linear combinations, but 
they can be chosen to split the clusters as the pedagogical example suggests).
But this idealized case is chosen to be ``almost disconnected,'' and so it 
shouldn't be surprising that the eigenvectors can be chosen to be almost 
cluster indicator vectors.
Two things: the situation gets much messier for real data, if you consider 
more eigenvectors; and the situation gets much messier for real data, if the 
clusters are, say, 2D or 3D with realistic noise.

\subsection{Connections with graph cuts and other objectives}

Here, we will briefly relate what we have been discussing today with what 
we discussed over the last month.
In particular, we describe the graph cut point of view to this spectral 
clustering algorithm.
I'll follow the notation of the von Luxburg review, so you can go back to
that, even though this is very different than what we used before.
The point here is not to be detailed/precise, but instead to remind you what 
we have been covering in another notation that is common, especially in ML, 
and also to derive an objective that we haven't covered but that is a popular 
one to which to add constraints.

To make connections with the  \textsc{PopularSpectralClustering} algorithm 
and MinCut, RatioCut, and NormalizedCut, recall that 
\[
\mbox{RatioCut}(A_1,\ldots,A_k) 
   = \frac{1}{2} \sum_{i=1}^{k} \frac{W\left(A_i,\bar{A}_i\right)}{|A_i|}
   = \sum_{i=1}^{k} \frac{ \mbox{cut}\left( A_i,\bar{A}_i \right) }{ |A_i| } ,
\]
where 
$\mbox{cut}(A_1,\ldots,A_k) = \frac{1}{2}\sum_{i=1}^{k}\left( A_i,\bar{A}_i \right)$.

First, let's consider the case $k=2$ (which is what we discussed before).
In this case, we want to solve the following problem:
\begin{equation}
\min_{A \subset V} \mbox{RatioCut}\left(A,\bar{A}\right)  .
\label{eqn:ratio-cut-1}
\end{equation}
Given $A \subset V$, we can define a function 
$f = \left( f_1,\ldots,f_n \right)^T \in \mathbb{R}^{n}$ s.t.
\begin{equation}
f_{i} = \left\{ \begin{array}{l l}
                   \sqrt{ |\bar{A}|/|A| } & \quad \text{if $v_i \in A$}  \\
                  -\sqrt{ |A|/|\bar{A}| } & \quad \text{if $v_i \in \bar{A}$}
                \end{array} 
        \right.  .
\label{eqn:f-indicator-2}
\end{equation}
In this case, we can write Eqn.~(\ref{eqn:ratio-cut-1}) as follows:
\begin{align*}
\min_{A \subset V} & f^TLf   \\
\mbox{s.t.~~}      & f \perp \vec{1}  \\
                   & f \mbox{ defined as in Eqn.~(\ref{eqn:f-indicator-2})}  \\
                   & \|f\|=\sqrt{n}   .
\end{align*}
In this case, we can relax this objective to obtain
\begin{align*}
\min_{A \subset V} & f^TLf   \\
\mbox{s.t.~~}      & f \perp \vec{1}  \\
                   & \|f\|=\sqrt{n}   ,
\end{align*}
which can then be solved by computing the leading eigenvectors of $L$.

Next, let's consider the case $k > 2$ (which is more common in practice).
In this case, given a partition of the vertex set $V$ into $k$ sets 
$A_i,\ldots,A_k$, we can define $k$ indicator vectors 
$h_j = \left( h_{ij},\ldots,h_{nj}\right)^T$ by 
\begin{equation}
h_{ij} = \left\{ \begin{array}{l l}
                     1/ \sqrt{ |A_j| } & \quad v_i \in A_j, \quad i\in[n],j\in[k]  \\ 
                     0                 & \quad \text{otherwise}
                  \end{array} 
          \right.  .
\label{eqn:f-indicator-k}
\end{equation}
Then, we can set the matrix $H \in \mathbb{R}^{n \times k}$ as the matrix 
containing those $k$ indicator vectors as columns, and observe that 
$H^TH=I$, i.e., $H$ is an orthogonal matrix (but a rather special one, since
it has only one nonzero per row).

We note the following observation; this is a particular way to write the 
RatioCut problem as a Trace problem that appears in many places.
\begin{claim}
$\mbox{RatioCut}(A_1,\ldots,A_k) = \mbox{Tr}(H^TLH)$
\end{claim}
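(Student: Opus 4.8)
The plan is to verify the claim by direct computation, expressing the matrix trace $\mathrm{Tr}(H^TLH)$ as a sum over the columns of $H$ and identifying each term with the corresponding summand in the definition of $\mathrm{RatioCut}$. First I would recall the quadratic-form identity for the combinatorial Laplacian: for any vector $f\in\mathbb{R}^n$, we have $f^TLf = \tfrac12\sum_{i,j} W_{ij}(f_i-f_j)^2$ (or equivalently $\sum_{(ij)\in E} W_{ij}(f_i-f_j)^2$ on the edge set), which was established earlier in the excerpt both via the edge-decomposition of $L$ and via the discrete-derivative $L = KK^T$ formulation. Then, since $\mathrm{Tr}(H^TLH) = \sum_{j=1}^k h_j^TLh_j$ where $h_j$ is the $j$-th column of $H$, it suffices to show that $h_j^TLh_j = \mathrm{cut}(A_j,\bar A_j)/|A_j|$ for each $j$.

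Next I would carry out this per-column calculation. Fix $j$ and apply the quadratic-form identity to $f = h_j$, whose entries are $1/\sqrt{|A_j|}$ on $A_j$ and $0$ elsewhere, as in Eqn.~(\ref{eqn:f-indicator-k}). The only pairs $(i,i')$ that contribute a nonzero $(h_{ij}-h_{i'j})^2$ are those with exactly one endpoint in $A_j$, in which case $(h_{ij}-h_{i'j})^2 = 1/|A_j|$. Counting such ordered (or unordered, with the matching factor of $\tfrac12$) pairs that are actually edges gives $h_j^TLh_j = \tfrac{1}{|A_j|}\cdot W(A_j,\bar A_j) = \mathrm{cut}(A_j,\bar A_j)/|A_j|$, where I use the convention $\mathrm{cut}(A_j,\bar A_j) = W(A_j,\bar A_j)$ from the definition of RatioCut above. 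Summing over $j$ and comparing with $\mathrm{RatioCut}(A_1,\ldots,A_k) = \sum_{j=1}^k \mathrm{cut}(A_j,\bar A_j)/|A_j|$ finishes the argument. (One can alternatively phrase this using $L = D - W$ directly: $h_j^T D h_j = \sum_{i\in A_j} d_i/|A_j|$ and $h_j^T W h_j = \sum_{i,i'\in A_j} W_{ii'}/|A_j|$, and the difference counts exactly the edges leaving $A_j$; I would mention this as a sanity check but present the quadratic-form route as the main line.)

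There is essentially no hard part here — this is a routine bookkeeping lemma, and the ``main obstacle'' is only notational: being careful about the factor of $\tfrac12$ in the symmetric sum versus the edge-set sum, and being consistent about whether $\mathrm{cut}$ and $W(\cdot,\cdot)$ denote the sum of weights on the boundary or twice that. I would set the conventions explicitly at the start (matching those used earlier for $\mathrm{RatioCut}$ and $\mathrm{cut}$ in the excerpt) so that the identity comes out with no stray constants. The only conceptual point worth flagging for the reader is that the orthogonality $H^TH = I$ is \emph{not} needed for this particular claim — it is the relaxation step afterward (dropping the requirement that $H$ have the special indicator form of Eqn.~(\ref{eqn:f-indicator-k}) and keeping only $H^TH=I$) that uses it — so the claim is purely an exact algebraic rewriting of the RatioCut objective as a trace, valid for every genuine partition.
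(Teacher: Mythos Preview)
Your proposal is correct and follows essentially the same approach as the paper: decompose $\mathrm{Tr}(H^TLH)=\sum_j h_j^TLh_j$, identify each $h_j^TLh_j$ with $\mathrm{cut}(A_j,\bar A_j)/|A_j|$, and sum. In fact your write-up is more detailed than the paper's, which simply asserts $h_i^TLh_i=\mathrm{cut}(A_i,\bar A_i)/|A_i|$ without the quadratic-form computation you supply.
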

\begin{proof}
Observe that 
\[
h_i^TLh_i = \frac{ \mbox{cut}\left( A_i,\bar{A}_i \right)}{ |A_i| }
\]
and also that
\[
h_i^TLh_i = \left(H^TLH\right)_{ii} .
\]
Thus, we can write
\[
\mbox{RatioCut}\left(A_1,\ldots,A_k\right)
   = \sum_{i=1}^{k} h_i L h_i 
   = \sum_{i=1}^{k} \left( H^T L H \right)_{ii}
   = \mbox{Tr}\left(H^TLH\right)   .
\]
\end{proof}

So, we can write the problem of 
\[
\min \mbox{RatioCut}\left(A_1,\ldots,A_k\right)  
\]
as follows:
\begin{align*}
\min_{A_1,\ldots,A_k} & \mbox{Tr}\left(H^TLH\right)  \\
\mbox{s.t.~~}         & H^TH=I  \\
                      & H \mbox{ defined as in Eqn.~(\ref{eqn:f-indicator-k})}   .
\end{align*}
We can relax this by letting the entries of $H$ be arbitrary elements of 
$\mathbb{R}$ (still subject to the overall orthogonality constraint on $H$) 
to get 
\begin{align*}
\min_{H\in\mathbb{R}^{n \times k}} & \mbox{Tr}\left(H^TLH\right)  \\
\mbox{s.t.~~}                      & H^TH=I    ,
\end{align*}
and the solution to this is obtained by computing the first $k$ eigenvectors 
of $L$.

Of course, similar derivations could be provided for the NormalizedCut 
objective, in which case we get similar results, except that we deal with 
degree weights, degree-weighted constraints, etc.
In particular, for $k>2$, if we define indicator vectors
$h_j = \left( h_{ij},\ldots,h_{nj}\right)^T$ by 
\begin{equation}
h_{ij} = \left\{ \begin{array}{l l}
                     1/ \sqrt{ \mbox{Vol}(A_j) } & \quad v_i \in A_j, \quad i\in[n],j\in[k]  \\ 
                     0                 & \quad \text{otherwise}
                  \end{array} 
          \right.  .
\label{eqn:f-indicator-kgt2}
\end{equation}
then the problem of minimizing NormalizedCut is 
\begin{align*}
\min_{A_1,\ldots,A_k} & \mbox{Tr}\left(H^TLH\right)  \\
\mbox{s.t.~~}         & H^TDH=I  \\
                      & H \mbox{ defined as in Eqn.~(\ref{eqn:f-indicator-kgt2})}   ,
\end{align*}
and if we let $T = D^{1/2}H$, then the spectral relaxation is 
\begin{align*}
\min_{T\in\mathbb{R}^{n \times k}} & \mbox{Tr}\left(T^TD^{-1/2}LD^{-1/2}T\right)  \\
\mbox{s.t.~~}                      & T^TT=I    ,
\end{align*}
and the solution $T$ to this trace minimization problem is given by the leading 
eigenvectors of $L_{sym}$.
Then $H=D^{-1/2}T$, in which case $H$ consists of the first $k$ eigenvectors of
$L_{rw}$, or the first $k$ generalized eigenvectors of $Lu=\lambda D u$.

Trace optimization problems of this general for arise in many related 
applications.
For example:
\begin{itemize}
\item
One often uses this objective as a starting point, e.g., to add sparsity or other 
constraints, as in one variation of ``sparse PCA.''
\item
Some of the methods we will discuss next time, i.e., LE/LLE/etc. do 
something very similar but from a different motivation, and this provides
other ways to model the data.
\item
As noted above, the $k$-means objective can actually be written as an 
objective with a similar constraint matrix, i.e., if $H$ is the cluster indicator 
vector for the points, then $H^TH=I$ and $H$ has one non-zero per row.
If we relax that constraint to be any orthogonal matrix such that 
$H^TH=I$, then we get an objective function, the solution to which is the
truncated SVD; and this provides a $2$ approximation to the $k$-means 
problem.
\end{itemize}

\newpage

\section{%
(03/05/2015): 
Some Practical Considerations (2 of 4):
Basic perturbation theory and basic dimensionality reduction }

Reading for today.
\begin{compactitem}
\item
``A kernel view of the dimensionality reduction of manifolds,'' in ICML, by Ham, et al.
\end{compactitem}

Today, we will cover two topics: the Davis-Kahan-$\sin\left(\theta\right)$ theorem, which is a basic result from matrix perturbation theory that can be used to understand the robustness of spectral clustering in idealized cases; and basic linear dimensionality reduction methods that, while not spectral graph methods by themselves, have close connections and are often used with spectral graph methods.

\subsection{Basic perturbation theory}

One way to analyze spectral graph methods---as well as matrix algorithms
much more generally---is via matrix perturbation theory.
Matrix perturbation theory asks: how do the eigenvalues and eigenvectors of
a matrix $A$ change if we add a (small) perturbation $E$, i.e., if we are
working the the matrix $\tilde{A}=A+E$?
Depending on the situation, this can be useful in one or more of several~ways.
\begin{itemize}
\item
\textbf{Statistically.}
There is often noise in the input data, and we might want to make claims 
about the unobserved processes that generate the observed data.
In this case, $A$ might be the hypothesized data, e.g., that has some nice
structure; we observe and are working with $\tilde{A}=A+E$, where $E$ might 
be Gaussian noise, Gaussian plus spiked noise, or whatever; and we want to 
make claims that algorithms we run on $\tilde{A}$ say something about the 
unobserved $A$.
\item
\textbf{Algorithmically.}
Here, one has the observed matrix $A$, and one wants to make claims about 
$A$, but for algorithmic reasons (or other reasons, but typically algorithmic 
reasons if randomness is being exploited as a computational resource), one 
performs random sampling or random projections and computes on the
sample/projection.
This amounts to constructing a sketch $\tilde{A}=A+E$ of the full input matrix 
$A$, where $E$ is whatever is lost in the construction of the sketch, and one 
wants to provide guarantees about $A$ by computing on $\tilde{A}$.
\item
\textbf{Numerically.}
This arises since computers can't represent real numbers exactly, i.e., there is
round-off error, even if it is at the level of machine precision, and thus it
is of interest to know the sensitivity of problems and/or algorithms to such
round-off errors.
In this case, $A$ is the answer that would have been computed in exact 
arithmetic, while $\tilde{A}$ is the answer that is computed in the presence
of round-off error.
(E.g., inverting a non-invertible matrix is very sensitive, but inverting
an orthogonal matrix is not, as quantified by the condition number of the
input matrix.)
\end{itemize}
The usual reference for matrix perturbation theory is the book of 
Stewart and Sun, which was written primarily with numerical issues in mind.

Most perturbation theorems say that some notion of distance between 
eigenstuff, e.g., eigenvalues of subspaces defined by eigenvectors of $A$ 
and $\tilde{A}$, depend on the norm of the error/perturbation $E$, often 
times something like a condition number that quantifies the robustness of 
problems.
(E.g., it is easier to estimate extremal eigenvalues than eigenvectors that 
are buried deep in the spectrum of $A$, and it is easier if $E$ is smaller.)

For spectral graph methods, certain forms of matrix perturbation theory can 
provide some intuition and qualitative guidance as to when spectral 
clustering works.
We will cover one such results that is particularly simple to state and 
think about. 
When it works, it works well; but since we are only going to describe a 
particular case of it, when it fails, it might fail ungracefully. 
In some cases, more sophisticated variants of this result can provide 
guidance.

When applied to spectral graph methods, matrix perturbation theory is usually
used in the following way.
Recall that if a graph has $k$ disconnected components, then 
$0 = \lambda_1 = \lambda_2 = \ldots = \lambda_k = < \lambda_{k+1} $, and
the corresponding eigenvectors $v_1,v_2,\ldots,v_k$ can be chosen to be 
the indicator vectors of the connected components.
In this case, the connected components are a reasonable notion of clusters, 
and the $k$-means algorithm should trivially find the correct clustering.
If we let $A$ be the Adjacency Matrix for this graph, then recall that it 
splits into $k$ pieces.
Let's assume that this is the idealized unobserved case, and the data that
we observe, i.e., the graph we are given or the graph that we construct is 
a noisy version of this, call it $\tilde{A} = A+E$, where $E$ is the 
noise/error.
Among other things $E$ will introduce ``cross talk'' between the clusters, 
so they are no longer disconnected.
In this case, if $E$ is small, then we might hope that perturbation theory
would show that only the top $k$ eigenvectors are small, well-separated 
from the rest, and that the $k$ eigenvectors of $\tilde{A}$ are perturbed 
versions of the original indicator vectors.

As stated, this is not true, and the main reason for this is that 
$\lambda_{k+1}$ (and others) could be very small.
(We saw a version of this before, when we showed that we don't actually need
to compute the leading eigenvector, but instead any vector whose Rayleigh
quotient was similar would give similar results---where by similar results we 
mean results on the objective function, as opposed to the actual clustering.)
But, if we account for this, then we can get an interesting perturbation 
bound.
(While interesting, in the context of spectral graph methods, this bound is 
somewhat weak, in the sense that the perturbations are often much larger 
and the spectral gap are often much larger than the theorem permits.)

This result is known as the Davis-Kahan theorem; and it is used to bound the
distance between the eigenspaces of symmetric matrices under symmetric
perturbations.
(We saw before that symmetric matrices are much ``nicer'' than general 
matrices. 
Fortunately, they are very common in machine learning and data analysis, 
even if it means considering correlation matrices $XX^T$ or $X^TX$.
Note that if we relaxed this requirement here, then this result would be 
false, and to get generalizations, we would have to consider all sorts of 
other messier things like pseudo-spectra.)

To bound the distance between the eigenspaces, let's define the notion of 
an angle (a canonical or principal angle) between two subspaces.

\begin{definition}
Let $\mathcal{V}_1$ and $\mathcal{V}_2$ be two $p$-dimensional subspaces of
$\mathbb{R}^{d}$, and let $V_1$ and $V_2$ be two orthogonal matrices (i.e., 
$V_1^TV_1=I$ and $V_2^TV_2=I$) spanning $\mathcal{V}_1$ and $\mathcal{V}_2$.
Then the \emph{principal angles} $\{\theta_i\}_{i=1}^{d}$ are s.t.
$\cos(\theta_i)$ are the singular values of $V_1^TV_2$.
\end{definition}

Several things to note.
First, for $d=1$, this is the usual definition of an angle between two 
vectors/lines.
Second, one can define angles between subspaces of different dimensions, 
which is of interest if there is a chance that the perturbation introduces rank
deficiency, but we won't need that here.
Third, this is actually a full vector of angles, and one could choose the
largest to be \emph{the} angle between the subspaces, if one wanted.

\begin{definition}
Let $\sin\left(\theta\left(\mathcal{V}_1,\mathcal{V}_2\right)\right)$ be the
diagonal matrix with the sine of the canonical angles along the diagonal.
\end{definition}

Here is the Davis-Kahan-$\sin\left(\theta\right)$ theorem.
We won't prove it.

\begin{theorem}[Davis-Kahan]
\label{thm:davis-kahan}
Let $A,E\in\mathbb{R}^{n \times n}$ be symmetric matrices, and
consider $\tilde{A}=A+E$.
Let $S_1 \subset \mathbb{R}$ be an interval; and
denote by $\sigma_{S_1}\left(A\right)$ the eigenvalues of $A$ in $S_1$, and
by $V_1$ the eigenspace corresponding to those eigenvalues.
Ditto for $\sigma_{S_1}\left(\tilde{A}\right)$ and $\tilde{V}_1$.
Define the distance between the interval $S_1$ and the spectrum of $A$ 
outside of $S_1$ as
\[
\delta = \min \{  \| \lambda-s \| : \lambda\mbox{ is eigenvalue of }A, 
                                   \lambda \notin S_1, s \in S_1 \}   .
\]
Then the distance 
$d\left(V_1,\tilde{V}_1\right) 
   = \|\sin \theta \left( V_1,\tilde{V}_1 \right) \|$ 
between the two subspaces $V_1$ and $\tilde{V}_1$ can be bounded as 
\[
d\left(V_1,\tilde{V}_1\right) \le \frac{\|E\|}{\delta}  ,
\]
where $\|\cdot\|$ denotes the spectral or Frobenius norm.
\end{theorem}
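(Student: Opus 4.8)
The plan is to reduce the statement to a bound on the norm of a product of spectral projectors, and then extract that bound from a Sylvester-type identity relating the two spectral decompositions. Let $P$ denote the orthogonal projector onto $V_1$ (the eigenspace of $A$ for the eigenvalues in $S_1$) and $\tilde P$ the orthogonal projector onto $\tilde V_1$. The first step is to recall the standard fact, coming from the CS decomposition / SVD of $V_1^T\tilde V_1$, that the principal angles satisfy $\|\sin\theta(V_1,\tilde V_1)\| = \|(I-P)\tilde P\|$ (equivalently $\|(I-\tilde P)P\|$) for both the spectral norm and the Frobenius norm, provided $\dim V_1 = \dim \tilde V_1$ (which we may assume by the way $S_1$ is chosen, so that $\sigma_{S_1}(A)$ and $\sigma_{S_1}(\tilde A)$ have the same cardinality). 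So it suffices to bound $\|(I-P)\tilde P\|$.

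Second, I would exploit that $P$ commutes with the symmetric $A$, so $A$ is block diagonal with respect to the splitting $\mathbb{R}^n = V_1 \oplus V_1^\perp$: writing $A_2 = (I-P)A(I-P)$ for the block acting on $V_1^\perp$, its spectrum is exactly the set of eigenvalues of $A$ lying outside $S_1$, so by definition of $\delta$ every point of $\mathrm{spec}(A_2)$ is at distance at least $\delta$ from every point of $S_1$. The key computation: take any eigenvector $\tilde v$ of $\tilde A = A + E$ with eigenvalue $\tilde\lambda \in S_1$. Apply $(I-P)$ to $\tilde A\tilde v = \tilde\lambda \tilde v$ and use $(I-P)A = A_2(I-P)$ to obtain the identity $(A_2 - \tilde\lambda I)\bigl((I-P)\tilde v\bigr) = -(I-P)E\tilde v$. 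Since $\tilde\lambda \in S_1$ lies at distance at least $\delta$ from $\mathrm{spec}(A_2)$, the restriction of $A_2 - \tilde\lambda I$ to $V_1^\perp$ is invertible with $\|(A_2-\tilde\lambda I)^{-1}\|\le 1/\delta$, hence $\|(I-P)\tilde v\| \le \delta^{-1}\|(I-P)E\tilde v\| \le \delta^{-1}\|E\|\,\|\tilde v\|$.

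Third, I would assemble this over an orthonormal basis $\tilde v_1,\dots,\tilde v_p$ of $\tilde V_1$ consisting of eigenvectors of the symmetric $\tilde A$. For the spectral norm, every unit vector in the range of $\tilde P$ is such a combination, so the per-eigenvector bound transfers directly to give $\|(I-P)\tilde P\|_2 \le \|E\|_2/\delta$. For the Frobenius norm, writing $\tilde P = \sum_j \tilde v_j\tilde v_j^T$ and using orthonormality, $\|(I-P)\tilde P\|_F^2 = \sum_j \|(I-P)\tilde v_j\|^2 \le \delta^{-2}\sum_j \|E\tilde v_j\|^2 \le \|E\|_F^2/\delta^2$, the last step because $\{\tilde v_j\}$ extends to an orthonormal basis. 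Combining with the first step yields $d(V_1,\tilde V_1) \le \|E\|/\delta$ in either norm. (This is the classical Davis--Kahan argument; one could alternatively phrase the middle steps via the Sylvester operator $X \mapsto A_2 X - X A_1$, but the per-eigenvector version directly produces the stated constant.)

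The main obstacle I anticipate is not the Sylvester identity itself but the bookkeeping around the definition of the angle: one has to pin down that the geometric quantity $\|\sin\theta(V_1,\tilde V_1)\|$ really equals $\|(I-P)\tilde P\|$, which needs the CS decomposition together with the matching-dimension hypothesis; and one must be careful that the separation constant that actually appears is the gap between $S_1$ and the spectrum of $A$ outside $S_1$ --- not a gap involving $\tilde A$ --- which is precisely what the argument above delivers because $A$ (through $A_2$) stays on the left and the entire perturbation is pushed onto the right-hand side. A secondary point is the possibility that $\tilde V_1$ and $V_1$ have different dimensions; this does not occur under the standard assumption on $S_1$, and I would simply state that assumption rather than handle rectangular principal-angle machinery.
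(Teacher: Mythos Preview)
The paper does not prove this theorem (it explicitly says ``We won't prove it'' immediately before the statement), so there is nothing to compare against. Your outline is the standard Davis--Kahan argument and is essentially correct.

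One step needs tightening. In the spectral-norm assembly you write that ``every unit vector in the range of $\tilde P$ is such a combination, so the per-eigenvector bound transfers directly.'' This does not follow as stated: from $\|(I-P)\tilde v_j\|\le \|E\|_2/\delta$ for each eigenvector you cannot conclude the same bound for a general unit combination $w=\sum_j c_j\tilde v_j$, because the resolvents $(A_2-\tilde\lambda_j I)^{-1}$ differ across $j$ and the triangle inequality would only give an extra factor of $\sqrt{p}$. The clean fix is exactly the Sylvester route you mention in your parenthetical: writing $X=(I-P)\tilde V$ and $R=-(I-P)E\tilde V$, the identity becomes $A_2X-X\tilde\Lambda=R$; since $A_2$ and $\tilde\Lambda$ are both symmetric with spectra separated by at least $\delta$, the standard bound on the Sylvester operator for normal coefficients yields $\|X\|\le\|R\|/\delta$ in any unitarily invariant norm, giving the spectral and Frobenius cases simultaneously. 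Your column-by-column Frobenius computation is fine as written; it is only the spectral-norm step that should be routed through the operator bound rather than argued ``directly.''
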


What does this result mean?
For spectral clustering, let $L$ be the original (symmetric, and SPSD) 
hypothesized matrix, with $k$ disjoint clusters, and let $\tilde{L}$ be the 
perturbed observed matrix.
In addition, we want to choose the interval such that the first $k$ 
eigenvalues of both $L$ and $\tilde{L}$ are in it, and so let's choose the 
interval as follows.
Let $S_1 = [0,\lambda_k]$ (where we recall that the first $k$ eigenvalues of
the unperturbed matrix equal $0$); in this case, 
$ \delta = | \lambda_k - \lambda_{k+1} | $, i.e., $\delta$ equals the 
``spectral gap'' between the $k^{th}$ and the $(k+1)^{st}$ eigenvalue.

Thus, the above theorem says that the bound on the distance $d$ between the 
subspaces defined by the first $k$ eigenvectors of $L$ and $\tilde{L}$ is 
less if:
(1) the norm of the error matrix $\|E\|$ is smaller; and 
(2) the value of $\delta$, i.e., the spectral gap, is larger.
(In particular, note that we need the angle to be less than 90 degrees to 
get nontrivial results, which is the usual case; otherwise, rank is lost).

This result provides a useful qualitative guide, and there are some 
more refined versions of it, but note the following.
\begin{itemize}
\item
It is rarely the case that we see a nontrivial eigenvalue gap in real data.
\item
It is better to have methods that are robust to slow spectral decay.
Such methods exist, but they are more involved in terms of the linear algebra, 
and so many users of spectral graph methods avoid them.
We won't cover them here.
\item
This issue is analogous to what we saw with Cheeger's Inequality, were we 
saw that we got similar bounds on the objective function value for any 
vector whose Rayleigh quotient was close to the value of $\lambda_2$, but 
the actual vector might change a lot (since if there is a very small spectral 
gap, then permissible vectors might ``swing'' by 90 degrees).
\item
BTW, although this invalidates the hypotheses of Theorem~\ref{thm:davis-kahan}, 
the results of spectral algorithms might still be useful, basically since they 
are used as intermediate steps, i.e., features for some other task.
\end{itemize}
That being said, knowing this result is useful since it suggests and 
explains some of the eigenvalue heuristics that people do to make vanilla
spectral clustering work.

As an example of this, recall the row-wise reweighting we was last time.
As a general rule, eigenvectors of orthogonal matrices are robust, but not
otherwise in general.
Here, this manifests itself in whether or not the components of an 
eigenvector on a given component are ``bounded away from zero,'' meaning that
there is a nontrivial spectral gap.
For $L$ and $L_{rw}$, the eigenvectors are indicator vectors, so there is 
no need to worry about this, since they will be as robust as possible to 
perturbation.
But for $L_{sym}$, the eigenvector is $D^{1/2}\vec{1}_A$, and if there is 
substantial degree variability then this is a problem, i.e., for low-degree 
vertices their entries can be very small, and it is difficult to deal with 
them under perturbation.
So, the row-normalization is designed to robustify the algorithms.

This ``reweigh to robustify'' is an after-the-fact justification.
One could alternately note that all the results for degree-homogeneous 
Cheeger bounds go through to degree-heterogeneous cases, if one puts in 
factors of $d_{max}/d_{min}$ everywhere.
But this leads to much weaker bounds than if one considers conductance and 
incorporates this into the sweep cut.
I.e., from the perspective of optimization objectives, the reason to reweigh is 
to get tighter Cheeger's Inequality guarantees.

\subsection{Linear dimensionality reduction methods}

There are a wide range of methods that do the following: construct a graph 
from the original data; and then perform computations on the graph to do 
feature identification, clustering, classification, regression, etc. on the original 
data.
(We saw one example of this when we constructed a graph, computed its top $k$
eigenvectors, and then performed $k$-means on the original data in the space
thereby defined.)  
These methods are sometimes called \emph{non-linear dimensionality 
reduction methods} since the constructed  graphs can be interpreted as 
so-called kernels and since the resulting methods can be interpreted as 
kernel-based machine learning methods.
Thus, they indirectly boil down to computing the SVD---indirectly in that it is
in a feature space that is implicitly defined by the kernel.
This general approach is used for many other problems, and so we will 
describe it in some~detail.

To understand this, we will first need to understand a little bit about 
\emph{linear dimensionality reduction methods} (meaning, basically, those 
methods that directly boil down to the computing the SVD or truncated SVD of the 
input data) as well as kernel-based machine learning methods.
Both are large topics in its own right, and we will only touch the surface.

\subsubsection{PCA (Principal components analysis)}

Principal components analysis (PCA) is a common method for linear 
dimensionality that seeks to find a ``maximum variance subspace'' to describe
the data.
In more detail, say we are given $\{x_i\}_{i=1}^{n}$, with each 
$x_i\in\mathbb{R}^{m}$, and let's assume that the data have been centered in 
that $\sum_ix_i=0$.
Then, our goal is to find a subspace $P$, and an embedding 
$\vec{y}_i=P\vec{x}_i$, where $P^2=P$, s.t.
\[
\mbox{Var}(\vec{y}) = \frac{1}{n} \sum_i ||Px_i||^2
\]
is largest, \emph{i.e.}, maximize the projected variance, or where
\[
\mbox{Err}(\vec{y}) = \frac{1}{n} \sum_i ||x_i-Px_i||_2^2
\]
is smallest, \emph{i.e.}, minimize the reconstruction error.
Since Euclidean spaces are so structured, the solution to these two problems 
is identical, and is basically given by computing the SVD or truncated~SVD:
\begin{itemize}
\item
Let $C = \frac{1}{n}\sum_i x_i x_i^T$, \emph{i.e.}, $C \sim XX^T$.
\item
Define the variance as $\mbox{Var}(\vec{y}) = \mbox{Trace}(PCP)$
\item
Do the eigendecomposition to get 
$C = \sum_{i=1}^{m} \lambda_i \hat{e}_i\hat{e}_i^T$, 
where $\lambda_1 \geq \lambda_2 \geq \cdots \lambda_m \geq 0$.
\item
Let $P = \sum_{i=1}^{d} \hat{e}_i\hat{e}_i^T$, and then project onto this 
subspace spanning the top $d$ eigenfunctions of $C$.
\end{itemize}



\subsubsection{MDS (Multi-Dimensional Scaling)}

A different method (that boils down to taking advantage of the same 
structural result the SVD) is that of Multi-Dimensional Scaling (MDS), which
asks for the subspace that best preserves inter-point distances.
In more detail, say we are given $\{x_i\}_{i=1}^{n}$, with $x_i\in\mathbb{R}^{D}$, 
and let's assume that the data are centered in that $\sum_ix_i=0$.
Then, we have $\frac{n(n-1)}{2}$ pairwise distances, denoted $\Delta_{ij}$.
The goal is to find vectors $\vec{y}_i$ such that:
\[
|| \vec{y}_i - \vec{y}_j || \approx \Delta_{ij}
\]
We have the following lemma:
\begin{lemma}
If $\Delta_{ij}$ denotes the Euclidean distance of zero-mean vectors, then 
the inner products are
\[
G_{ij} = \frac{1}{2}\left( \sum_k\left( \Delta_{ik}^{2} + \Delta_{kj}^{2}\right) - \Delta_{ij}^2 -\sum_{kl} \Delta_{kl}^{2}  \right)
\]
\end{lemma}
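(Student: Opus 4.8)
The plan is to recover the Gram matrix entries $G_{ij} = \langle x_i, x_j\rangle$ from the squared distances $\Delta_{ij}^2 = \|x_i - x_j\|^2$, using only the expansion of the squared norm together with the zero-mean assumption $\sum_k x_k = 0$. The starting identity is $\Delta_{ij}^2 = \|x_i\|^2 + \|x_j\|^2 - 2\langle x_i, x_j\rangle$, i.e. $G_{ij} = \tfrac12(\|x_i\|^2 + \|x_j\|^2 - \Delta_{ij}^2)$. The obstacle — and really the only content of the lemma — is that the individual norms $\|x_i\|^2$ are not directly given; we only have pairwise distances. So the task is to express $\|x_i\|^2$ (and $\|x_j\|^2$, and the global correction term) purely in terms of the $\Delta$'s by averaging appropriately and exploiting $\sum_k x_k = 0$.

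First I would average $\Delta_{ik}^2 = \|x_i\|^2 + \|x_k\|^2 - 2\langle x_i, x_k\rangle$ over $k$ (I will keep unnormalized sums to match the statement as written, though normalizing by $n$ is the cleaner bookkeeping). Summing over $k$ gives $\sum_k \Delta_{ik}^2 = n\|x_i\|^2 + \sum_k \|x_k\|^2 - 2\langle x_i, \sum_k x_k\rangle = n\|x_i\|^2 + \sum_k \|x_k\|^2$, where the cross term vanishes by the centering assumption. Similarly $\sum_k \Delta_{kj}^2 = n\|x_j\|^2 + \sum_k\|x_k\|^2$. Next I would double-average: $\sum_{k,l}\Delta_{kl}^2 = 2n\sum_k\|x_k\|^2 - 2\|\sum_k x_k\|^2 = 2n\sum_k\|x_k\|^2$, again killing the cross term. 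These three relations let me solve for the unknowns: from the double sum, $\sum_k\|x_k\|^2 = \tfrac{1}{2n}\sum_{k,l}\Delta_{kl}^2$; substituting back, $n\|x_i\|^2 = \sum_k\Delta_{ik}^2 - \tfrac{1}{2n}\sum_{k,l}\Delta_{kl}^2$, and likewise for $\|x_j\|^2$.

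Then I would plug these expressions for $\|x_i\|^2$ and $\|x_j\|^2$ into $G_{ij} = \tfrac12(\|x_i\|^2 + \|x_j\|^2 - \Delta_{ij}^2)$ and collect terms. The result is the classical ``double centering'' formula: $G_{ij} = -\tfrac12\big(\Delta_{ij}^2 - \tfrac1n\sum_k\Delta_{ik}^2 - \tfrac1n\sum_k\Delta_{kj}^2 + \tfrac{1}{n^2}\sum_{k,l}\Delta_{kl}^2\big)$, which rearranges to exactly the displayed identity $G_{ij} = \tfrac12\big(\sum_k(\Delta_{ik}^2 + \Delta_{kj}^2) - \Delta_{ij}^2 - \sum_{k,l}\Delta_{kl}^2\big)$ modulo the normalization conventions (the paper's display is evidently using averaged sums, so I would either state it with $\tfrac1n$ and $\tfrac{1}{n^2}$ factors or match the paper's normalization and note the convention). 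I expect no genuine difficulty here — the whole proof is three averagings plus one substitution — but the one place to be careful is the normalization/indexing of the sums, since the displayed formula as printed is dimensionally consistent only if the inner sums are interpreted as averages; I would make that explicit rather than leave it ambiguous. A one-line sanity check at the end: the formula should give $G$ symmetric and, via $G = -\tfrac12 H \Delta^{(2)} H$ with $H = I - \tfrac1n \mathbf{1}\mathbf{1}^T$, should reproduce $G = X^T X$ when the $\Delta$'s genuinely come from centered vectors.
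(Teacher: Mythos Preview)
Your approach is correct and is the standard derivation of the classical double-centering formula. The paper actually states this lemma without any proof, so there is nothing to compare against; your three averaging steps plus the substitution into $G_{ij}=\tfrac12(\|x_i\|^2+\|x_j\|^2-\Delta_{ij}^2)$ are exactly the right argument. Your observation about the missing $\tfrac{1}{n}$ and $\tfrac{1}{n^2}$ normalizations is also correct: the displayed formula in the paper is only right if $\sum_k$ and $\sum_{kl}$ are read as averages, and you are right to make that explicit.
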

Since the goal is to preserve dot products (which are a proxy for and in some 
cases related to distances), we will choose $\vec{y}_i$ to minimize
\[
\mbox{Err}(\vec{y}) = \sum_{ij} \left( G_{ij}-\vec{y}_i\cdot\vec{y}_j\right)^{2}
\]
The spectral decomposition of $G$ is given as
\[
G = \sum_{i=1}^{n} \lambda_i \hat{v}_i\hat{v}_i^T 
\]
where $\lambda_1 \geq \lambda_2 \geq \cdots \lambda_n \geq 0$.
In this case, the optimal approximation is given by
\[
y_{i\xi} = \sqrt{\lambda_{\xi}}v_{\xi i}
\]
for $\xi = 1,2,\ldots,d$, with $d \leq n$, 
which are simply scaled truncated eigenvectors.
Thus $G \sim X^T X$.

\subsubsection{Comparison of PCA and MDS}

At one level of granularity, PCA and MDS are ``the same,'' since they both
boil down to computing a low-rank approximation to the original data.
It is worth looking at them in a little more detail, since they come from 
different motivations and they generalize to non-linear situations in 
different ways.
In addition, there are a few points worth making as a comparison with 
some of the graph partitioning results we discussed.

To compare PCA and MDS:
\begin{itemize}
\item
$C_{ij} = \frac{1}{n}\sum_k x_{ki}x_{kj}$ is a 
$m \times m$ covariance matrix and takes roughly $O((n+d)m^2)$ time
to compute.
\item
$G_{ij} = \vec{x}_i\cdot\vec{x}_j$ is an $n \times n$ Gram matrix and
takes roughly $O((m+d)n^2)$ time to compute.
\end{itemize}

Here are several things to note:
\begin{itemize}
\item
PCA computes a low-dimensional representation that most faithfully 
preserves the covariance structure, in an ``averaged'' sense.
It minimizes the reconstruction error
\[
E_{PCA} = \sum_i || x_i = \sum_{\xi=1}^{m} (x_i \cdot e_{\xi})e_{\xi} ||_2^2  ,
\]
or equivalently it finds a subspace with minimum variance.
The basis for this subspace is given by the top $m$ eigenvectors of the 
$d \times d$ covariance matrix $C=\frac{1}{n}\sum_i x_i x_i^T$.
\item
MDS computes a low-dimensional representation of the high-dimensional data
that most faithfully preserve inner products, \emph{i.e.}, that minimizes
\[
E_{MDS} = \sum_{ij} \left( x_i \cdot x_j - \phi_i \cdot \phi_j \right)^2
\]
It does so by computing the Gram matrix of inner products 
$G_{ij} = x_i \cdot x_j$, so $G \approx X^TX$.
It the top $m$ eigenvectors of this are $\{v_{i}\}_{i=1}^{m}$ and
the eigenvalues are $\{\lambda_{i}\}_{i=1}^{m}$, then the 
embedding MDS returns is $\Phi_{i\xi} = \sqrt{\lambda_{\xi}}v_{\xi i}$.
\item
Although MDS is designed to preserve inner products, it is often motivated to 
preserve pairwise distances.
To see the connection, let 
\[
S_{ij} =  ||x_i-x_j||^2
\]
be the matrix of squared inter-point distances.
If the points are centered, then a Gram matrix consistent with these 
squared distances can be derived from the transformation 
\[
G = -\frac{1}{2} \left(I-uu^T\right) S \left(I-uu^T\right)
\]
where $u = \frac{1}{\sqrt{n}}( 1,\cdots,1)$.
\end{itemize}
Here are several additional things to note with respect to PCA and MDS and
kernel methods:
\begin{itemize}
\item
One can ``kernelize'' PCA by writing everything in terms of dot products.
The proof of this is to say that we can ``map'' the data $A$ to a feature 
space $\mathcal{F}$ with $\Phi(X)$.
Since $C = \frac{1}{n}\sum_{j=1}^{n}\phi(x_j)\phi(x_j)^T$ is a covariance
matrix, PCA can be computed from solving the eigenvalue problem:
Find a $\lambda > 0$ and a vector $v \ne 0$ s.t.
\begin{equation}
\lambda v = C v = \frac{1}{n} \sum_{j=1}^{n} (\phi(x_j)\cdot v) \phi(x_j)  .
\label{eqn:pca-ker1}
\end{equation}
So, all the eigenvectors $v_i$ with $\lambda_i$ must be in the span of the 
mapped data, \emph{i.e.}, $v \in \mbox{Span}\{\phi(x_1),\ldots,\phi(x_n)\}$, 
\emph{i.e.}, $v = \sum_{i=1}^{n} \alpha_i \phi(x_i)$ for some set of 
coefficients $\{\alpha_i\}_{i=1}^{n}$.
If we multiply~(\ref{eqn:pca-ker1}) on the left by $\phi(x_k)$, then we 
get
\[
\lambda(\phi(x_k)\cdot v) = (\phi(x_k)\cdot C v ) , \qquad k=1,\ldots,n   .
\]
If we then define
\begin{equation}
K_{ij} = (\phi(x_i),\phi(x_j)) = k(x_i,x_j) \in \mathbb{R}^{n \times n}  ,
\label{eqn:pca-ker2}
\end{equation}
then to compute the eigenvalues we only need
\[
\lambda \vec{\alpha} 
   = K \vec{\alpha} , \qquad \alpha = (\alpha_1,\ldots,\alpha_n)^T  .
\]
Note that we need to normalize $(\lambda_k,\alpha^k)$, and we can do so by
$\hat{K} = K - 1_n K - K 1_n - 1_n K 1_n$.
To extract features of a new pattern $\phi(x)$ onto $v^k$, we need
\begin{equation}
\label{eqn:pca-ker3}
(v^k\cdot\phi(x)) = \sum_{i=1}^{m} \alpha_i^k \phi(x_i)\cdot\phi(x)
                  = \sum_{i=1}^{m} \alpha_i^k k(x_i,x)  .
\end{equation}
So, the nonlinearities enter:
\begin{itemize}
\item
The calculation of the matrix elements in~(\ref{eqn:pca-ker2}).
\item
The evaluation of the expression~(\ref{eqn:pca-ker3}).
\end{itemize}
But, we can just compute eigenvalue problems, and there is no need to go 
explicitly to the high-dimensional space.
For more details on this, see ``An Introduction to Kernel-Based Learning
Algorithms,'' by by M{\"u}ller et al.  or ``Nonlinear component analysis 
as a kernel eigenvalue problem,'' by Sch{\"o}lkopf et al.
\item
Kernel PCA, at least for isotropic kernels $K$, where 
$K(x_i,x_j)=f(||x_i-x_j||)$, is a form of MDS and vice versa.
For more details on this, see ``On a Connection between Kernel PCA and 
Metric Multidimensional Scaling,'' by Williams and ``Dimensionality 
Reduction: A Short Tutorial,'' by Ghodsi.
To see this, recall that 
\begin{itemize}
\item
From the distances-squared, $\{\delta_{ij}\}_{ij}$, where 
$\delta_{ij} = ||x_i-x_j||^2_2 = (x_i - x_j)^T (x_i - x_j)$, we can 
construct a matrix $A$ with $A_{ij} = - \frac{1}{2} \delta_{ij}$.
\item
Then, we can let $B = HAH$, where $H$ is a ``centering'' matrix 
($H=I-\frac{1}{n} 1.1^T$).
This can be interpreted as centering, but really it is just a projection 
matrix (of a form not unlike we we have seen).
\item
Note that $B = HX(HX)^T$, (and $b_{ij} = (x_i-\bar{x})^T(x_j-\bar{x})$, with 
$\bar{x}=\frac{1}{n}\sum_ix_i$), and thus $B$ is SPSD.
\item
In the feature space, $\tilde{\delta}_{ij}$ is the Euclidean distance:
\begin{eqnarray*}
\tilde{\delta}_{ij} &=& (\phi(x_i) - \phi(x_j))^T (\phi(x_i) - \phi(x_j)) \\
                    &=& ||\phi(x_i) - \phi(x_j)||_2^2  \\
                    &=& 2(1-r(\delta_{ij}))   ,
\end{eqnarray*}
where the last line follows since with an isotropic kernel, where
$k(x_i,x_j) = r(\delta_{ij})$.
(If $K_{ij} = f(||x_i-x_j||)$, then $K_{ij} = r(\delta_{ij})$ ($r(0) = 1$).)
In this case, $A$ is such that $A_{ij} = r(\delta_{ij}) -1$, $A=K-1.1^T$, so 
(fact) $HAH=HKH$.
The centering matrix annihilates $11^T$, so $HAH=HKH$.
\end{itemize}
So, $K_{MDS} = -\frac{1}{2}(I-ee^T)A(I-ee^T)$, where $A$ is the matrix of 
squared distances.
\end{itemize}
So, the bottom line is that PCA and MDS take the data matrix and use SVD to 
derive embeddings from eigenvalues.
(In the linear case both PCA and MDS rely on SVD and can be constructed in 
$O(mn^2)$ time ($m > n$).)
They are very similar due to the linear structure and SVD/spectral theory.
If we start doing nonlinear learning methods or adding additional constraints, 
then these methods generalize in somewhat different ways.


\subsubsection{An aside on kernels and SPSD matrices}

The last few comments were about ``kernelizing'' PCA and MDS.
Here, we discuss this kernel issue somewhat more generally.

Recall that, given a collection $\mathcal{X}$ of data points, which are 
often but not necessarily elements of $\mathbb{R}^{m}$, techniques such as 
linear Support Vector Machines (SVMs), Gaussian Processes (GPs), Principle 
Component Analysis (PCA), and the related Singular Value Decomposition 
(SVD), identify and extract structure from $\mathcal{X}$ by computing linear 
functions, i.e., functions in the form of dot products, of the data.
(For example, in PCA the subspace spanned by the first $k$ eigenvectors is 
used to give a $k$ dimensional model of the data with minimal residual; thus, 
it provides a low-dimensional representation of the data.)

Said another way, these algorithms can be written in such a way that they 
only ``touch'' the data via the correlations between pairs of data points.
That is, even if these algorithms are often written in such as way that they
access the actual data vectors, they can be written in such a way that they 
only accesses the correlations between pairs of data vectors.
In principle, then, given an ``oracle'' for a different correlation matrix, 
one could run the same algorithm by providing correlations from the oracle, 
rather than the correlations from the original correlation matrix.

This is of interest essentially since it provides much greater flexibility 
in possible computations; or, said another way, it provides much greater 
flexibility in statistical modeling, without introducing too much additional
computational expense.
For example, in some cases, there is some sort of nonlinear structure in the 
data; or the data, e.g. text, may not support the basic linear operations of 
addition and scalar multiplication.
More commonly, one may simply be interested in working with more flexible
statistical models that depend on the data being analyzed, without making 
assumptions about the underlying geometry of the hypothesized data.

In these cases, a class of statistical learning algorithms known as 
\emph{kernel-based learning methods} have proved to be quite useful. 
These methods implicitly map the data into much higher-dimensional spaces, 
e.g., even up to certain $\infty$-dimensional Hilbert spaces, where 
information about their mutual positions (in the form of inner products) is 
used for constructing classification, regression, or clustering rules.
There are two points that are important here.
First, there is often an efficient method to compute inner products between 
very complex or even infinite dimensional vectors.
Second, while general $\infty$-dimensional Hilbert spaces are relatively 
poorly-structured objects, a certain class of $\infty$-dimensional Hilbert 
spaces known as Reproducing kernel Hilbert spaces (RKHSs) are 
sufficiently-heavily regularized that---informally---all of the ``nice''
behaviors of finite-dimensional Euclidean spaces still hold.
Thus, kernel-based algorithms provide a way to deal with nonlinear structure 
by reducing nonlinear algorithms to algorithms that are linear in some 
(potentially $\infty$-dimensional but heavily regularized) feature space 
$\mathcal{F}$ that is non-linearly related to the original input space.

The generality of this framework should be emphasized.
There are some kernels, e.g., Gaussian rbfs, polynomials, etc., that might 
be called \emph{a priori kernels}, since they take a general form that 
doesn't depend (too) heavily on the data; but there are other kernels that 
might be called \emph{data-dependent kernels} that depend very strongly 
on the data.
In particular, several of the methods to construct graphs from data that 
we will discuss next time, e.g., Isomap, local linear embedding, Laplacian 
eigenmap, etc., can be interpreted as providing data-dependent kernels.
These methods first induce some sort of local neighborhood structure on the 
data and then use this local structure to find a global embedding of the 
data into a lower dimensional space. 
The manner in which these different algorithms use the local information to 
construct the global embedding is quite different; but in general they can 
be interpreted as kernel PCA applied to specially-constructed Gram matrices.
Thus, while they are sometimes described in terms of finding non-linear
manifold structure, it is often more fruitful to think of them as 
constructing a data-dependent kernel, in which case they are useful or not 
depending on issues related to whether kernel methods are useful or whether 
mis-specified models are useful.

\newpage

\section{%
(03/10/2015): 
Some Practical Considerations (3 of 4):
Non-linear dimension reduction methods}

Reading for today.
\begin{compactitem}
\item
``Laplacian Eigenmaps for dimensionality reduction and data representation,'' in Neural Computation, by Belkin and Niyogi
\item
``Diffusion maps and coarse-graining: a unified framework for dimensionality reduction, graph partitioning, and data set parameterization,'' in IEEE-PAMI, by Lafon and Lee 
\end{compactitem}

Today, we will describe several related methods to identify structure in data.
The general idea is to do some sort of ``dimensionality reduction'' that is 
more general than just linear structure that is identified by a straightforward 
application of the SVD or truncated SVD to the input data.
The connection with what we have been discussing is that these procedures 
construct graphs from the data, and then they perform eigenanalysis on those 
graphs in order to construct ``low-dimensional embeddings'' of the data.
These (and many other related) methods are often called  \emph{non-linear 
dimension reduction methods}.
In some special cases they identify structure that is meaningfully non-linear; 
but it is best to think of them as constructing graphs to then construct 
data-dependent representations of the data that (like other kernel methods) 
is linear in some nonlinearly transformed version of the data.

\subsection{Some general comments}

The general framework for these methods is the following.
\begin{itemize}
\item 
Derive some (typically sparse) graph from the data, \emph{e.g.}, by connecting 
nearby data points with an $\epsilon$-NN or $k$-NN rule.
\item 
Derive a matrix from the graph (viz. adjacency matrix, Laplacian matrix).
\item 
Derive an embedding of the data into $\mathbb{R}^{d}$ using the eigenvectors 
of that matrix.
\end{itemize}

\emph{Many} algorithms fit this general outline.
Here are a few things worth noting about them.
\begin{itemize}
\item
They are not really algorithms (like we have been discussion) in the sense that 
there is a well-defined objective function that one is trying to optimize (exactly 
or approximately) and for which one is trying to prove running time or 
quality-of-approximation bounds.
\item
There exists theorems that say when each of these method ``works,'' but those 
theoretical results have assumptions that tend to be rather strong and/or 
unrealistic.
\item
Typically one has some sort of intuition and one shows that the algorithm works 
on some data in certain specialized cases.
It is often hard to generalize beyond those special cases, and so it is probably 
best to think of these as ``exploratory data analysis'' tools that construct 
data-dependent kernels.
\item
The intuition underlying these methods is often that the data ``live'' on a 
low-dimensional manifold.
Manifolds are very general structures; but in this context, it is best to think of them 
as being ``curved'' low-dimensional spaces.
The idealized story is that the processes generating the data have only a few 
degrees of freedom, that might not correspond to a linear subspace, and we want 
to reconstruct or find that low-dimensional manifold.
\item
The procedures used in constructing these data-driven kernels 
depend on relatively-simple algorithmic primitives:
shortest path computations;
least-squares approximation;
SDP optimization; and
eigenvalue decomposition.
Since these primitives are relatively simple and well-understood and since they 
can be run relatively quickly, non-linear dimensionality reduction methods that 
use them are often used to explore the data. 
\item
This approach is often of interest in semi-supervised learning, where there is a lot
of unlabeled data but very little labeled data, and where we want to use the 
unlabeled data to construct some sort of ``prior'' to help with predictions.
\end{itemize}
There are a large number of these and they have been reviewed elsewhere;
here we will only review those that will be related to the algorithmic and
statistical problems we will return to later.

\subsection{ISOMAP}

ISOMAP takes as input vectors $x_i \in \mathbb{R}^{D}, i=1,\ldots,n$, and
it gives as output vectors $y_i \in \mathbb{R}^{d}$, where $d \ll D$.
The stated goal/desire is to make near (resp. far) points stay close
(resp. far); and the idea to achieve this is to preserve geodesic distance 
along a submanifold.
The algorithm uses geodesic distances in an MDS computation:
\begin{enumerate}
\item 
\textbf{Step 1.}
Build the nearest neighbor graph, using $k$-NN or $\epsilon$-NN.
The choice here is a bit of an art.
Typically, one wants to preserve properties such as that the data are 
connected and/or that they are thought of as being a discretization of a 
submanifold.
Note that the $k$-NN here scales as $O(n^2D)$.
\item 
\textbf{Step 2.}
Look at the shortest path or geodesic distance between all pairs of points.
That is, compute geodesics.
Dijkstra's algorithm for shortest paths runs in 
$O(n^2 \log n + n^2 k )$ time.
\item 
\textbf{Step 3.}
Do Metric Multi-Dimensional scaling (MDS) based on $A$, the shortest path 
distance matrix.
The top $k$ eigenvectors of the Gram matrix then give the embedding.
They can be computed in $\approx O(n^d)$ time.
\end{enumerate}

\emph{Advantages}
\begin{itemize}
\item Runs in polynomial time.
\item There are no local minima.
\item It is non-iterative.
\item It can be used in an ``exploratory'' manner.     
\end{itemize}
\emph{Disadvantages}
\begin{itemize}
\item 
Very sensitive to the choice of $\epsilon$ and $k$, which is an art that is 
coupled in nontrivial ways with data pre-processing.
\item 
No immediate ``out of sample extension''' since so there is not obvious 
geometry to a graph, unless it is assumed about the original data.
\item 
Super-linear running time---computation with all the data points can be 
expensive, if the number of data points is large.
A solution is to choose ``landmark points,'' but for this to work one needs
to have already sampled at very high sampling density, which is often not 
realistic.
\end{itemize}
These strengths and weaknesses are not peculiar to ISOMAP; they are 
typical of other graph-based spectral dimensionality-reduction methods 
(those we turn to next as well as most others).

\subsection{Local Linear Embedding (LLE) }

For LLE, the input is vectors $x_i\in\mathbb{R}^{D},i=1,\ldots,n$; and the 
output is vectors $y_i\in\mathbb{R}^{d}$, with $d \ll D$.

\emph{Algorithm}
\begin{description}
\item [Step 1 : Construct the Adjacency Graph]  
There are two common variations:
\begin{enumerate}
\item  $\epsilon$ neighborhood 
\item  K Nearest neighbor Graph 
\end{enumerate}
Basically, this involves doing some sort of NN search; the metric of 
closeness or similarity used is based on prior knowledge; and the usual 
(implicit or explicit) working assumption is that the neighborhood in the 
graph $\approx$ the neighborhood of the underlying hypothesized manifold, 
at least in a ``local linear'' sense.
\item [Step 2 : Choosing  weights]  
That is, construct the graph.
Weights $W_{ij}$ must be chosen for all edges $ij$.
The idea is that each input point and its $k$-NN can be viewed as samples 
from a small approximately linear patch on a low-dimensional submanifold
and we can choose weights $W_{ij}$ to get small reconstruction error.
That is, weights are chosen based on the projection of each data point on 
the linear subspace generated by its neighbors. 
\begin{eqnarray*}
& \min \sum_i ||x_i - \sum_j W_{ij}x_j ||_2^2 &= \Phi(W)   \\
& \mbox{s.t. } W_{ij} = 0 \mbox{ if } (ij) \not\in E   \\
& \quad \sum_j W_{ij}=1, \forall i
\end{eqnarray*}
\item [Step 3 : Mapping to Embedded Co-ordinates] 
Compute output $y \in \mathbb{R}^d$ by solving the same equations, but now
with the $y_i$ as variables.
That is, let 
\[
\Psi(y)=\sum_{i} ||y_i - \sum_j W_{ij} y_j ||^2  ,
\]
for a fixed $W$, and then we want to solve
\begin{eqnarray*}
& \min \Psi(y)  \\
& \quad \mbox{s.t. } \sum_i y_i = 0 \\
& \quad \frac{1}{N} \sum_i y_i y_i^T = I
\end{eqnarray*}
To solve this minimization problem reduces to finding eigenvectors 
corresponding to the $d+1$ lowest eigenvalues of the the positive definite 
matrix $(I-W)'(I-W)=\Psi$.
\end{description}

Of course, the lowest eigenvalue is uninteresting for other reasons, so it is 
typically not included.
Since we are really computing an embedding, we could keep it if we wanted, 
but it would not be useful (it would assign uniform values to every node or 
values proportional to the degree to every node) to do downstream tasks 
people want to do.

\subsection{Laplacian Eigenmaps (LE)}

For LE, (which we will see has some similarities with LE), the input is 
vectors $x_i\in\mathbb{R}^{D},i=1,\ldots,n$; and the 
output is vectors $y_i\in\mathbb{R}^{d}$, with $d \ll D$.
The idea is to compute a low-dimensional representation that preserves
proximity relations (basically, a quadratic penalty on nearby points), mapping 
nearby points to nearby points, where ``nearness'' is encoded in $G$.

\emph{Algorithm}
\begin{description}
\item [Step 1 : Construct the Adjacency Graph]  
Again, there are two common variations:
\begin{enumerate}
\item  $\epsilon$-neighborhood 
\item  $k$-Nearest Neighbor Graph 
\end{enumerate}
\item [Step 2 : Choosing weights] 
We use the following rule to assign weights to neighbors:
\[ 
W_{ij}=\left\{ \begin{array}{c c}
e^{-\frac{||x_i-x_j||^2}{4t}} & \text{if vertices i \& j are connected by an edge} \\
0  & \text{otherwise} \end{array} \right.
\] 
Alternatively, we could simply set $W_{ij}=1$ for vertices $i$ and $j$ that
are connected by an edge---it should be obvious that this gives similar
results as the above rule under appropriate limits, basically since the 
exponential decay introduces a scale that is a ``soft'' version of this
``hard'' $0$-$1$ rule.
As a practical matter, there is usually a fair amount of ``cooking'' to get 
things to work, and this is one of the knobs to turn to cook things.
\item [Step 3 : Eigenmaps] 
Let 
\[
\Psi(y)=\sum_{i,j}\frac{w_{ij}||y_i-y_j||^2}{\sqrt{D_{ii} \cdot D_{jj}}}
\]
where $ D=\mbox{Diag}\{\sum_i w_{ij} : j=1 (1) n \}$, 
and we compute $ y \in \mathbb{R}^d $ such that
\begin{eqnarray*}
& \min \Psi(y) \\
& \quad \mbox{s.t. } \sum_i y_i = 0 \quad\mbox{centered} \\
& \quad \mbox{and } \frac{1}{N} \sum_i y_i y_i^T = I \quad\mbox{unit covariance}
\end{eqnarray*}
that is, s.t. that is minimized for each connected component of the graph.
This can be computed from the bottom $d+1$ eigenvectors of 
$\mathcal{L} = I-D^{-1/2}WD^{-1/2}$ , after dropping the bottom eigenvector 
(for the reason mentioned above).
\end{description}

LE has close connections to analysis on manifold, and understanding it will 
shed light on when it is appropriate to use and what its limitations are.
\begin{itemize}
\item
Laplacian in $\mathbb{R}^{d}$: $\Delta f = -\sum_i \frac{a\partial^2 f  }{ \partial x_i^2 }$
\item
Manifold Laplacian: change is measured along tangent space of the manifold.
\end{itemize}
The weighted graph $\approx$ discretized representation of the manifold.
There are a number of analogies, \emph{e.g.}, Stokes Theorem (which 
classically is a statement about the integration of differential forms which
generalizes popular theorems from vector calculus about integrating over
a boundary versus integration inside the region, and which thus generalizes
the fundamental theorem of calculus):
\begin{itemize}
\item
Manifold: $\int_{M} ||\nabla f||^2 = \int_{M} f \Delta f$
\item
Graph: $\sum_{ij} (f_i- f_j)^2 = f^T L f$
\end{itemize}
An extension of LE to so-called Diffusion Maps (which we will get to next time) 
will provide additional insight on these connections.

Note that the Laplacian is like a derivative, and so minimizing it will be 
something like minimizing the norm of a derivative.

\subsection{Interpretation as data-dependent kernels}

As we mentioned, these procedures can be viewed as constructing 
data-dependent kernels.
There are a number of technical issues, mostly having to do with the 
discrete-to-continuous transition, that we won't get into in this brief discussion.
This perspective provides light on why they work, when they work, and when 
they might not be expected to work.

\begin{itemize}
\item
ISOMAP.
Recall that for MDS, we have
$\delta_{ij}^{2} = (x_i-x_j)^T(x_i-x_j)=$ dissimilarities.
Then $A$ is a matrix s.t. $A_{ij} = -\delta_{ij}$, and 
$B=HAH$, with $H=I_n-\frac{1}{n}11^T$, a ``centering'' or ``projection'' 
matrix.
So, if the kernel $K(x_i,x_j)$ is stationary, i.e., if it is a function of 
$||x_i-x_j||^2 = \delta_{ij}^{2}$, as it is in the above construction,
then $K(x_i,x_j)=r(\delta_{ij})$, for some $r$ that scales s.t. $r(0)=1$.
Then $\tilde{\delta}_{ij}^{2}$ is the Euclidean distance in feature space, 
and if $A$ s.t. $A_{ij} = r(\delta_{ij})-1$, then $A=K-11^T$.
The ``centering matrix'' $H$ annihilates $11^T$, so $HAH=HKH$.
(See the Williams paper.)
So, 
\[
K_{ISOMAP} = HAH = H \mathcal{D} H 
                 = (I-11^T) \mathcal{D} (I-11^T)
\]
where $\mathcal{D}$ is the squared geodesic distance matrix.
\item
LE.
Recall that LE minimizes 
$\psi^T L \psi = \frac{1}{2}\sum_{ij} (psi_i-\psi_j)^2 W_{ij}$, 
and doing this involved computing eigenvectors of $L$ or $\mathcal{L}$, 
depending on the construction.
The point is that $L$ has close connections to diffusions on a graph---think
of it in terms of a continuous time Markov chain:
\[
\frac{\partial \psi(t)}{\partial t} = - L \psi(t)   .
\]
The solution is a Green's function or heat kernel, related to the matrix 
exponential of $L$:
\[
K_t = \exp(-Lt) = \sum_{\xi} \phi_{\xi}\phi_{\xi}^{T} e^{-\lambda_{\xi}t} ,
\]
where $\phi_{\xi}$, $\lambda_{\xi}$ are the eigenvectors/eigenvalues 
of $L$.
Then, $\psi(t) = K_t \psi(0)$ is the general solution, and
under assumptions one can show that
\[
K_L = \frac{1}{2}L^{+}
\]
is related to the ``commute time'' distance of diffusion:
\[
C_{ij} \sim L_{ii}^{+} + L_{jj}^{+} - L_{ij}^{+} = L_{ji}^{+}   .
\]
For the difference between the commute time in a continuous time Markov 
chain and the geodesic distance on the graph, think of the dumbbell 
example; we will get to this in more detail next time.
\item
LLE.
Recall that this says that we are approximating each point as a linear 
combination of neighbors.
Let $(W_{n})_{ij}, i,j\in[n]$ be the weight of a point $x_j$ in the 
expansion of $x_i$; then one can show that
\[
K_n(x_i,x_j) = \left( (I-W_n)^T(I-W_n) \right)_{ij}
\]
is PD on the domain $\mathcal{X}:x_i,\ldots,x_n$.
Also, one can show that if $\lambda$ is the largest eigenvalue of 
$(I-W)^T(I-W)=M$, then
\[
K_{LLE} = \left( (\lambda-1)I + W^T + W - W^TW \right)
\]
is a PSD matrix, and thus a kernel.
Note also that, under assumptions, we can view $M=(I-W^T(I-W)$ as 
$\mathcal{L}^2$.
\end{itemize}

\subsection{Connection to random walks on the graph: more on LE and diffusions}

Recall that, give a data set consisting of vectors, we can construct a graph 
$G=(V,E)$ in one of several ways.
Given that graph, consider the problem of mapping $G$ to a \emph{line} s.t.
connected points stay as close together as possible.
Let $\vec{y}=(y_1,\ldots,y_n)^T$ be the map, and by a ``good'' map we will
mean one that minimized $\sum_{ij} (y_i-y_j)^2W_{ij}$ under appropriate 
constraints, \emph{i.e.}, it will incur a big penalty if $W_{ij}$ is large
and $y_i$ and $y_j$ are far apart.
This has important connections with diffusions and random walks that we now discuss.

We start with the following claim:
\begin{claim}
$\frac{1}{2}\sum_{i,j} w_{ij} (y_i-y_j)^2=y^{T}L y$
\end{claim}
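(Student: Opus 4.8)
This is a standard identity. The plan is to expand the left-hand side directly and regroup terms to recognize the quadratic form $y^T L y$ with $L = D - W$.

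First I would write out the summand: $(y_i - y_j)^2 = y_i^2 - 2 y_i y_j + y_j^2$, and substitute into $\frac{1}{2}\sum_{i,j} w_{ij}(y_i-y_j)^2$. This splits into three double sums: $\frac{1}{2}\sum_{i,j} w_{ij} y_i^2$, $-\sum_{i,j} w_{ij} y_i y_j$, and $\frac{1}{2}\sum_{i,j} w_{ij} y_j^2$. For the first sum, I would collapse the $j$ index using $\sum_j w_{ij} = d_i = D_{ii}$ (the weighted degree), giving $\frac{1}{2}\sum_i d_i y_i^2$. By the symmetry $w_{ij} = w_{ji}$, the third sum equals the first, so together they contribute $\sum_i d_i y_i^2 = y^T D y$. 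The middle term is manifestly $-y^T W y$.

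Combining, the left-hand side equals $y^T D y - y^T W y = y^T (D - W) y = y^T L y$, which is the claim. I would also note in passing that this makes nonnegativity of $L$ (i.e. $L$ being SPSD) transparent, consistent with the alternative definition of the Laplacian as a sum of edge-Laplacians given earlier in the excerpt.

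There is essentially no obstacle here — the only point requiring a moment of care is the bookkeeping of ordered versus unordered pairs and the use of symmetry of $W$ to merge the $y_i^2$ and $y_j^2$ contributions; once $w_{ij}=w_{ji}$ is invoked the computation is immediate.
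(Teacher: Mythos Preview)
Your proof is correct and essentially identical to the paper's: both expand the square, use symmetry of $W$ together with $D_{ii}=\sum_j W_{ij}$ to collapse the $y_i^2$ and $y_j^2$ terms into $y^TDy$, and recognize the cross term as $-y^TWy$.
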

\begin{proof}
Recall that $W_{ij}$ is symmetric and that $D_{ii}= \sum_j W_{ij}$.
Then:
\begin{align*} \frac{1}{2}\sum_{i,j} w_{ij} (y_i-y_j)^2&=\frac{1}{2}\sum_{i,j} w_{ij}y_i^2+w_{ij} y_j^2-2w_{ij}y_iy_j\\
                                 &=\sum_{i} D_{ii} y_i^2-\sum_{ij} w_{ij} y_iy_j\\
                                 &=y^T L y \end{align*}
where $L=D-W$, W is the weight matrix and L is Laplacian, a
symmetric, positive semidefinite matrix that can be thought of as an
operator on functions defined on vertices of G.
\end{proof}

The following, which we have seen before, is a corollary:
\begin{claim}
$L$ is SPSD.
\end{claim}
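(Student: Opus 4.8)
The plan is to verify the two defining properties of an SPSD matrix separately: symmetry and positive semidefiniteness. Both will follow almost immediately, the second from the quadratic form identity just established in the preceding claim.

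First I would handle symmetry. Write $L = D - W$, where $W$ is the weight matrix of the undirected graph $G$, hence $W_{ij} = W_{ji}$, and $D = \mathrm{Diag}\{\sum_j W_{ij}\}$ is diagonal. Then $L^T = D^T - W^T = D - W = L$, so $L$ is symmetric; in particular (by the spectral theorem invoked earlier in the excerpt) it has a full set of real eigenvalues and orthonormal eigenvectors, and $y^T L y \in \mathbb{R}$ for all $y \in \mathbb{R}^n$.

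Next I would establish positive semidefiniteness. By the claim immediately preceding, for every $y \in \mathbb{R}^n$ we have
\[
y^T L y = \frac{1}{2}\sum_{i,j} w_{ij}(y_i - y_j)^2 .
\]
Since the graph is (possibly weighted but) nonnegatively weighted, every $w_{ij} \ge 0$, and each $(y_i - y_j)^2 \ge 0$, so the right-hand side is a sum of nonnegative terms and hence $y^T L y \ge 0$. Combining this with symmetry shows $L \succeq 0$, i.e., $L$ is SPSD. As a byproduct this also re-derives that all eigenvalues of $L$ are nonnegative: if $Lv = \lambda v$ with $v \ne 0$, then $\lambda \|v\|_2^2 = v^T L v \ge 0$, so $\lambda \ge 0$.

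There is essentially no hard part here — the content has already been done in the identity $y^T L y = \frac{1}{2}\sum_{ij} w_{ij}(y_i-y_j)^2$. The only things to be mildly careful about are that symmetry of $L$ uses symmetry of $W$ (so this argument is for undirected graphs, as assumed) and that the nonnegativity of the quadratic form uses nonnegativity of the edge weights; if negative weights were allowed the conclusion would fail.
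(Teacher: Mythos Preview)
Your proposal is correct and follows the same approach as the paper: the paper's proof is the single line ``Immediate, since $W_{ij}>0$,'' which is exactly your argument that the preceding quadratic-form identity $y^T L y = \frac{1}{2}\sum_{i,j} w_{ij}(y_i-y_j)^2$ together with nonnegative weights gives $y^T L y \ge 0$. You have simply spelled out the symmetry step and the eigenvalue consequence, which the paper leaves implicit.
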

\begin{proof}
Immediate, since $W_{ij}>0$.
\end{proof}

Recall that the solution to 
\begin{eqnarray*}
\arg \min y^T L y\\
 \textrm{s.t.} y^T Dy=1   ,
\end{eqnarray*} 
where $D$ gives a measure of the importance of vertices,
turns out to be given by solving a generalized eigenvector problem
$$
Ly=\lambda Dy   ,
$$ 
\emph{i.e.}, computing the bottom eigenvectors of that eigenproblem.
Actually, we usually solve, 
\begin{eqnarray*}
\arg \min y^T L y\\
 \textrm{s.t.} \quad yD\vec{1}=0\\
     y^T Dy=1   ,
\end{eqnarray*} 
the reason being that since $\vec{1}=(1,\ldots,1)$ is an eigenvector with 
eigenvalue $0$, it is typically removed.
As we saw above, the condition $y^T D\vec{1} = 0$ can be interpreted as 
removing a ``translation invariance'' in $y$ and so is removed.
Alternatively, it is uninteresting if these coordinates are going to be used simple 
to provide an embedding for other downstream tasks.
Also, the condition $y^T D y=1$ removes an arbitrary scaling factor in the
embedding, which is a more benign thing to do if we are using the coordinates 
as an embedding.

From the previous discussion on Cheeger's Inequality, we know that putting the
graph on a line (and, in that setting, sweeping to find a good partition) can reveal
when the graph doesn't ``look like'' a line graph.
Among other things, the distances on the line may be distorted a lot, relative to 
the original distances in the graph (even if they are preserved ``on average'').

Thus, more generally, the goal is to embed a graph in $\mathbb{R}^{n}$ s.t. 
distances are ``meaningful,'' where meaningful might mean the same as or very 
similar to distances in the graph.
Let $G=(V,E,W)$.
We know that if $W$ is symmetric, $W=W^T$, and point-wise positive, 
$W(x,y)\geq0$, then we can interpret pairwise similarities as probability mass 
transitions in a Markov chain on a graph.
Let $d(x) = \sum_z W(x,z) = $ the degree of node $x$.
Then let $P$ be the $n \times n$ matrix with entries
\[
P(x,w) = \frac{W(x,y)}{d(x)}
\]
which is the transition probability of going from $x$ to $y$ in one step, 
which equals the first order neighborhood of the graph.
Then, $P^{t}$ is higher order neighborhoods of the graph; this is sometimes 
interpreted as $\approx$ the ``intrinsic geometry'' of an underlying hypothesized 
manifold.

If the graph is connected, as it usually is due to data preprocessing, 
then
\[
\lim_{t \rightarrow \infty} P^{t}(x,y) = \phi_{0}(y)  ,
\]
where $\phi_{0}$ is the unique stationary distribution
\[
\phi_{0} = \frac{d(x)}{\sum_z d(z)}
\]
of the associated Markov chain.
(Note The Markov chain is reversible, as detailed-balance is satisfied:
$\phi_{0}(x)P_{1}(x,y) = \phi_{0}(y)P_1(y,x)$.)

Thus, graph $G$ defines a random walk.  
For some node $i$ the probability going from $i$ to $j$ is 
$P'_{ij}=\dfrac{w_{ij}}{d_i}$, where $d_i=\sum_{j} w_{ij}$. 
Consider if you are at node $i$ and you are move from $i$ in the following 
way:
$$
\begin{cases}
\textrm{move to a neighbor chosen u.a.r, w.p.=  } \dfrac{1}{d_i} & \textrm{w.p.}\frac{1}{2} \\
\textrm{stay at node i } & \textrm{w.p.} \frac{1}{2}
\end{cases}
$$
Then the transition matrix 
$P \in {\mathbb{R}}^{n \times m} = \dfrac{1}{2}I + \dfrac{1}{2} P'$. 
This is the so-called ``lazy'' random walk.

\textbf{Fact:} 
If $G$ is connected, then for any measure initial $v$ on the vertex, 
$\lim_{t\rightarrow \infty} P^{t} v=\dfrac{d_i}{\sum_{j} d_j} = \phi_0(i)$. 
This $\phi$ converges to the stationary distribution. 
$P$ is related to the normalized Laplacian.

If we look at the pre-asymptotic state, for $1 \ll t \ll \infty$, we could 
define similarity between vertex $x$ and $z$ in terms of the similarity 
between two density $P_t(x,\cdot)$ and $P_t(z,\cdot)$.
That is,
\begin{itemize}
\item
For $t \in (0,\infty)$, we want a metric between nodes s.t. $x$ and $z$ are 
close if $P_{t}(x,\cdot)$ and $P_{t}(z,\cdot)$ are close.
\item
There are various notions of closeness:
\emph{e.g.}, $\ell_1$ norm (see the Szummer and Jaakkola reference);
Kullback-Leibler divergences; $\ell_2$ distances; and so on.
\item
The $\ell_2$ distance is what we will focus on here (although we might 
revisit some of the others later). 
\end{itemize}

In this setting, the $\ell_2$ distance is defined as
\begin{eqnarray}
\nonumber
D_t^2(x,z) &=& ||P_t(x,\cdot)-P_t(z,\cdot)||^2_{1/\phi_0}  \\
           &=& \sum_{y} \frac{(P_t(x,y)-P_t(z,y))^2}{\phi_0(y)}   .
\label{eqn:dist1_xxx}
\end{eqnarray}
(So, in particular, the weights $\frac{1}{\phi_{0}(y)}$ penalize discrepancies 
on domains of low density more than high-density domains.)

This notion of distance between points will be small if they are connected
by many path.
The intuition is that if there are many paths, then there is some sort of 
geometry, redundancy, and we can do inference like with low-rank 
approximations.
(BTW, it is worth thinking about the difference between minimum spanning trees 
and random spanning trees, or degree and spectral measures of ranking like 
eigenvector centrality.)


\newpage

\section{%
(03/12/2015): 
Some Practical Considerations (4 of 4):
More on diffusions and semi-supervised graph construction}

Reading for today.
\begin{compactitem}
\item
``Transductive learning via spectral graph partitioning,'' in ICML, by Joachims
\item
``Semi-Supervised Learning Using Gaussian Fields and Harmonic Functions,'' in ICML, by Zhu, Ghahramani, and Lafferty 
\item
``Learning with local and global consistency,'' in NIPS, by Zhou et al.
\end{compactitem}

Today, we will continue talking about the connection between non-linear dimensionality reduction methods and diffusions and random walks.
We will also describe several methods for constructing graphs that are ``semi-supervised,'' in the sense that that there are labels associated with some of the data points, and we will use these labels in the process of constructing the graph.  
These will give rise to similar expressions that we saw with the unsupervised methods, although there will be some important differences.

\subsection{Introduction to diffusion-based distances in graph construction}

Recall from last time that we have our graph $G$, and if we run a random walk to the asymptotic state, then we converge to the leading non-trivial eigenvector.
Here, we are interested in looking at the 
pre-asymptotic state, i.e., at a time $t$ such that $1 \ll t \ll \infty$, and we want to define similarity between vertex $x$ and $z$, i.e., a metric between nodes $x$ and $z$, such that $x$ and $z$ are close if $P_{t}(x,\cdot)$ and $P_{t}(z,\cdot)$ are close.
Although there are other distance notions one could use, we will work with the $\ell_2$ distance.
In this setting, the $\ell_2$ distance is defined as
\begin{eqnarray}
\nonumber
D_t^2(x,z) &=& ||P_t(x,\cdot)-P_t(z,\cdot)||^2_{1/\phi_0}  \\
           &=& \sum_{y} \frac{(P_t(x,y)-P_t(z,y))^2}{\phi_0(y)}   .
\label{eqn:dist1}
\end{eqnarray}

Suppose the transition matrix $P$ has $q$ left and right eigenvectors
and eigenvalues $|\lambda_0|\geq|\lambda_1|\geq\ldots\geq|\lambda_n|\geq0$ 
s.t.
\begin{eqnarray*}
\phi_j^TP &=& \lambda_j\phi_j^T  \\
P\psi_j &=& \lambda \psi_j   ,
\end{eqnarray*}
where we note that $\lambda_0=1$ and $\psi_0=\vec{1}$.
Normalize s.t.
$\phi_k\psi_{\ell} = \delta_{k\ell}$
so that 
\begin{eqnarray*}
\| \phi_{\ell} \|^2_{1/\phi_0} &=& \sum_x \frac{ \phi_{\ell}^2(x) }{ \phi_0(x) } = 1 \\
\| \psi_{\ell} \|^2_{\phi_0} &=& \sum_x \psi_{\ell}^2(x) \phi_0(x) = 1 ,
\end{eqnarray*}
i.e., normalize the left (resp. right) eigenvector of $P$ w.r.t. $1/\phi_0$ 
(resp. $\phi_0$).
If $P_t(x,y)$ is the kernel of the $t^{th}$ iterate of $P$, then we have the 
following spectral decomposition: 
\begin{equation}
P_t(x,y)=\sum_j \lambda_j^t \psi_j(x)\phi_j(y)   .  
\label{eqn:spectdecomp1}
\end{equation}
(This is essentially a weighted PCA/SVD of $P^t$, where as usual the first $k$ 
terms provide the ``best'' rank-$k$ approximation, where the ``best'' is w.r.t.
$\|A\|^2 = \sum_x \sum_y \phi_0(x) A(x,y)^2 \phi_0(y)$.)
If we insert Eqn.~(\ref{eqn:spectdecomp1}) into Eqn.~(\ref{eqn:dist1}), then one 
can show that the $L_2$ distance is: 
\begin{eqnarray}
\label{eqn:diffdist1}
D_t^2(x,z) &=& \sum_{j=1}^{n-1} \lambda_j^{2t} (\psi_j(x)-\psi_j(z))^2   \\
\label{eqn:diffdist2}
           &\approx& \sum_{j=1}^{k} \lambda_j^{2t} (\psi_j(x)-\psi_j(z))^2   \\
\label{eqn:diffdist3}
           &\approx& \sum_{j=1}^{k}  (\psi_j(x)-\psi_j(z))^2  .
\end{eqnarray}
Eqn.~(\ref{eqn:diffdist1}) says that this provides a legitimate distance, and 
establishing this is on HW2.
The approximation of Eqn.~(\ref{eqn:diffdist2}) holds if the eigenvalues 
decay quickly, and  the approximation of Eqn.~(\ref{eqn:diffdist3}) holds if the 
large eigenvalues are all roughly the same size.
This latter expression is what is provided by LE.

\subsection{More on diffusion-based distances in graph construction}

Recall from last time that LE can be extended to include all of the eigenvectors
and also to weighting the embedding by the eigenvalues.
This gives rise to an embedding that is based on diffusions, sometimes it is called 
Diffusion Maps, that defines a distance in the Euclidean space that is related to 
a diffusion-based distance on the graph.
In particular, once we choose $k$ in some way, then here is the picture of the 
Diffusion Map:
\begin{center}
Map $\Psi_t: X \rightarrow \left(\begin{array}{c}
\lambda_1^t \psi_1(x)\\
\vdots\\
\lambda_k^t\psi_k(x)
\end{array}\right)$
\end{center}
and so
\begin{eqnarray*}
D_t^2(x,z) &\approx& \sum_{j=1}^{k} \lambda_j^{2t} (\psi_j(x)-\psi_j(z))^2  \\
           &\approx& ||\psi_t(x)-\psi_t(z)||^2  .
\end{eqnarray*}
This is a low-dimensional embedding, where the Euclidean distance in the 
embedded space corresponds to the ``diffusion distance'' in the original 
graph.
Here is the picture of the relationships:
\begin{center}

\[\begin{array}{ccc}
G &\leftrightarrow &R^n\\
| &\quad &| \\
\textrm{diffusion} & \leftrightarrow &||\cdot||_2
\end{array} \]
\end{center}

\textbf{Fact:} This defines a distance. If you think of a \texttt{diffusion} 
notion of distance in a graph G, this identically equals the Euclidean 
distance $||\cdot ||_2$ between $\Psi(i)$ and $\Psi(j)$. 
The diffusion notion of distance is related to the commute time between 
node $i$ and node $j$.
We will describe this next time when we talk about resistor networks.

In particular, Laplacian Eigenmaps chooses $k=k^*$, for some fixed $k^*$ 
and sets $t=0$. 
Under certain nice limits, this is close to the Laplace-Beltrami operator on 
the hypothesized manifold. 
But, more generally, these results will hold even if the graph is not drawn 
from such a nice setting.
\begin{itemize}
\item
None of this discussion assumes that the original vector data come from 
low-dimensional manifolds, although one does get a nice interpretation in 
that case.
\item
It is fair to ask what happens with spectral decomposition when there is 
not manifold, e.g., a star graph or a constant-degree expander.
Many of the ideas wetland about earlier in the semester would be relevant
in this case.
\end{itemize}

\subsection{A simple result connecting random walks to NCUT/conductance}

There are connections with graph partitioning, but understanding the 
connection with random walkers opens up a lot of other possibilities.
Here, we describe one particularly simple result: that a random walker, 
starting in the stationary distribution, goes between a set and its complement
with probability that depends on the NCUT of that set.
Although this result is simple, understanding it will be good since it will open 
up several other possibilities:
what if one doesn't run a random walk to the asymptotic state; 
what if one runs a random walk just a few steps starting from an arbitrary
distribution; 
when does the random walk provide regularization; 
and so on?

Here, we provide an interpretation for the NCUT objective (and thus related 
to normalized spectral clustering as well as conductance): when minimizing 
NCUT, we are looking for a suit through the graph s.t. the random walk 
seldom transitions from $A$ to $\bar{A}$, where $A \subset V$.
(This result says that conductance/NCUT is not actually looking for good 
cuts/partitions, but instead should be interpreted as providing bottlenecks
to diffusion.)

\begin{lemma}
Let $P=D^{-1}W$ be a random walk transition matrix.
Let $ \left( X_t \right)_{t \in \mathbb{Z}^{+}}$ be the random walk starting at 
$X_0 = \pi$, i.e., starting at the stationary distribution.
For disjoint subsets $A,B \subset V$, let 
$\mathbb{P}\left[ B | A \right] = \mathbb{P}\left[ X_1 \in B | X_0 \in A \right]$.
Then, 
\[
NCUT\left(A,\bar{A}\right) 
   = \mathbb{P}\left[ \bar{A} | A \right] + \mathbb{P}\left[ A | \bar{A} \right]  .
\]
\end{lemma}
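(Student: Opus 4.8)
The plan is to expand both conditional probabilities directly using the definition of the stationary random walk and then recognize the result as the two terms in the NCUT objective. First I would record the relevant normalizations: with $P = D^{-1}W$ we have $P_{ij} = W_{ij}/d_i$ where $d_i = \sum_k W_{ik}$, and the stationary distribution (the right eigenvector of $P$ with eigenvalue $1$, as noted earlier) is $\pi_i = d_i/\mathrm{Vol}(V)$ with $\mathrm{Vol}(V) = \sum_k d_k$ and $\mathrm{Vol}(A) = \sum_{i\in A} d_i$. Since $X_0 = \pi$ means $\mathbb{P}[X_0 = i] = \pi_i$, the event probabilities I need are $\mathbb{P}[X_0 \in A] = \mathrm{Vol}(A)/\mathrm{Vol}(V)$, and similarly for $\bar A$.

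Next I would compute the joint probability of the transition. By the law of total probability over the starting vertex,
\[
\mathbb{P}[X_0\in A,\, X_1\in\bar A] = \sum_{i\in A}\sum_{j\in\bar A}\pi_i P_{ij} = \sum_{i\in A}\sum_{j\in\bar A}\frac{d_i}{\mathrm{Vol}(V)}\cdot\frac{W_{ij}}{d_i} = \frac{1}{\mathrm{Vol}(V)}\sum_{i\in A}\sum_{j\in\bar A}W_{ij} = \frac{E(A,\bar A)}{\mathrm{Vol}(V)},
\]
where the $d_i$ cancels cleanly, and $E(A,\bar A)$ is the total weight of edges between $A$ and $\bar A$ (which is symmetric in $A,\bar A$). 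Dividing by $\mathbb{P}[X_0\in A]$ gives $\mathbb{P}[\bar A\mid A] = E(A,\bar A)/\mathrm{Vol}(A)$, and the identical computation with the roles of $A$ and $\bar A$ swapped gives $\mathbb{P}[A\mid\bar A] = E(A,\bar A)/\mathrm{Vol}(\bar A)$.

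Finally I would add the two and compare against the definition of NCUT from the earlier overview of quotient-cut objectives, namely $\mathrm{NCUT}(A,\bar A) = \frac{E(A,\bar A)}{\mathrm{Vol}(A)} + \frac{E(A,\bar A)}{\mathrm{Vol}(\bar A)}$, which matches term for term. There is essentially no hard step here — the argument is a one-line cancellation of $d_i$ against the transition probability $W_{ij}/d_i$. The only thing to be careful about is bookkeeping of conventions: which normalization of $\mathrm{NCUT}$ (and of $E(A,\bar A)$, with or without a factor of $2$) is being used, and checking that $P=D^{-1}W$ is right-stochastic so that $\mathbb{P}[X_1=j\mid X_0=i]=P_{ij}$ rather than $P_{ji}$. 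Once those conventions are fixed consistently, the lemma follows immediately.
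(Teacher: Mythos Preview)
Your proposal is correct and follows essentially the same approach as the paper: compute the joint probability $\mathbb{P}[X_0\in A,\,X_1\in B]=\sum_{i\in A,j\in B}\pi_i P_{ij}$, observe the cancellation of $d_i$, divide by the marginal $\mathbb{P}[X_0\in A]=\mathrm{Vol}(A)/\mathrm{Vol}(V)$, and then invoke the definition of NCUT. Your explicit flagging of the convention issues (row-stochastic $P$, normalization of $E(A,\bar A)$ and NCUT) is a nice touch that the paper's terse version leaves implicit.
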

\begin{proof}
Observe that 
\begin{eqnarray*}
\mathbb{P}\left[ X_0 \in A \mbox{ and } X_1 \in B \right]
   &=& \sum_{i \in A,j \in B} \mathbb{P}\left[ X_0 = i \mbox{ and } X_1 = j \right] \\
   &=& \sum_{i \in A,j \in B} \pi_i P_{ij}  \\
   &=& \sum_{i \in A,j \in B} \frac{ d_i }{ \mbox{Vol}(V) } \frac{ W_{ij} }{ d_i } \\
   &=& \frac{ 1 }{ \mbox{Vol}(V) } \sum_{i \in A,j \in B} W_{ij} 
\end{eqnarray*}
From this, we have that
\begin{eqnarray*}
\mathbb{P}\left[ X_1 \in B | X_0 \in A \right]
   &=& \frac{ \mathbb{P}\left[ X_0 \in A \mbox{ and } X_1 \in B \right] }{ \mathbb{P}\left[ X_0 \in A \right] } \\
   &=& \left( \frac{1}{\mbox{Vol}(V) } \sum_{i \in A, j\in B } W_{ij} \right) \left( \frac{ \mbox{Vol}(A) }{ \mbox{Vol}(V) } \right)^{-1} \\
   &=& \frac{ 1 }{ \mbox{Vol}(V) } \sum_{i \in A,j \in B} W_{ij} 
\end{eqnarray*}
The lemma follows from this and the definition of NCUT.
\end{proof}

\subsection{Overview of semi-supervised methods for graph construction}

Above, we constructed graphs/Laplacians in an unsupervised manner.
That is, there were just data points that were feature vectors without any classification/regression labels associated with them; and we constructed graphs from those unlabeled data points by using various NN rules and optimizing objectives that quantified the idea that nearby data points should be close or smooth in the graph.
The graphs were then used for problems such as data representation and unsupervised graph clustering that don't involve classification/regression labels, although in some cases they were also used for classification or regression problems.  

Here, we consider the situation in which some of the data have labels and we want to construct graphs to help make predictions for the unlabeled data.
In between the extremes of pure unsupervised learning and pure supervised learning, there are semi-supervised learning, transductive learning, and several other related classes of machine learning methods.
It is in this intermediate regime where using labels for graph construction is most interesting.

Before proceeding, here are a few things to note.
\begin{itemize}
\item
If unlabeled data and labeled data come from the same distribution, then there is no difference between unlabeled data and test data.
Thus, this transductive setup amounts to being able to see the test data (but not the labels) before doing training.
(Basically, the transductive learner can look at all of the data, including the test data with labels and as much training data as desired, to structure the hypothesis space.)
\item
People have  used labels to augment the graph construction process in an explicit manner.
Below, we will describe an example of how this is done implicitly in several cases.
\item
We will see that linear equations of a certain form that involve Laplacian matrices will arise.
This form is rather natural, and it has strong similarities with what we saw previously in the unsupervised setting.
In addition, it also has strong similarities with local spectral methods and locally-biased spectral methods that we will get to in a few weeks.
\end{itemize}

We will consider three approaches (those of Joachims, ZGL, and Zhao et al.).
Each of these general approaches considers constructing an augmented graph, with extra nodes, $s$ and $t$, that connect to the nodes that have labels.
Each can be understood within the following framework.
Let $L=D-A = B^TCB$, where $B$ is the unweighted edge-incidence matrix.
Recall, then, that the $s,t$-mincut problem is: 
\[
\min_{x \mbox{ s.t. } x_s=1,x_t=0} \|Bx \|_{C,1} 
   = \min_{x \mbox{ s.t. } x_s=1,x_t=0} \sum_{(u,v) \in E} C_{(u,v)}|x_u-x_v |  . 
\]
The $\ell_2$ minorant of this $s,t$-mincut problem is:
\[
\min_{x \mbox{ s.t. } x_s=1,x_t=0} \|Bx \|_{C,2} 
   = \min_{x \mbox{ s.t. } s=1,x_t=0} \left( \sum_{(u,v) \in E} C_{(u,v)} |x_u-x_v |^2 \right)^{1/2}  . 
\] 
This latter problem s equivalent to:
\[
\min_{x \mbox{ s.t. } x_s=1,x_t=0} \frac{1}{2}  \|Bx \|_{C,2}^{2} 
   = \min_{x \mbox{ s.t. } x_s=1,x_t=0} \sum_{(u,v) \in E} C_{(u,v)} |x_u-x_v |^2   
   = \min_{x \mbox{ s.t. } x_s=1,x_t=0} x^T L x   . 
\] 
The methods will construct Laplacian-based expressions by considering various types of $s$-$t$ mincut problems and then relaxing to the associated $\ell_2$ minorant.

\subsection{Three examples of semi-supervised graph construction methods}

The Joachims paper does a bunch of things, e.g., several engineering heuristics that are difficult to relate to a general setting but that no doubt help the results in practice, but the following is essentially what he does.
Given a graph and labels, he wants to find a vector $\vec{y}$ s.t. it satisfies the bicriteria:
\begin{enumerate}
\item
it minimizes $y^TLy$, and
\item
it has values
$$
\begin{cases}
 1 & \textrm{ for nodes in class $j$} \\
-1 & \textrm{ for nodes not in class $j$}
\end{cases}
$$
\end{enumerate}
What he does is essentially the following.  
Given $G=(V,E)$, he adds extra nodes to the node set:
$$
\begin{cases}
s & \textrm{ with the current class} \\
t & \textrm{ with the other class}
\end{cases}
$$
Then, there is the issue about what to add as weights when those nodes connect to nodes in $V$.
The two obvious weights are $1$ (i.e., uniform) or equal to the degree.
Of course, other choices are possible, and that is a parameter one could play with.
Here, we will consider the uniform case.
Motivated by problem with mincut (basically, what we described before, where it tends to cut off small pieces), Joachims instead considers NCUT and so arrives at the following problem: 
\[
 Y = \left( D_S + L \right) ^{-1} S ,
\]
where $D_S$ is a diagonal matrix with the row sums of $S$ on the diagonal, and $S$ is a vector corresponding to the class connected to the node with label $s$.
Note that since most of the rows of $S$ equal zero (since they are unlabeled nodes) and the rest have only 1 nonzero, $D_S$ is a diagonal indicator vector that is sparse.

ZGL is similar to Joachims, except that they strictly enforce labeling on the labeled samples.
Their setting is that they are interpreting this in terms of a Gaussian random field on the graph (which is like a NN method, except that nearest labels are captures in terms of random vectors on $G$).
They provide hard constraints on class labels.
In particular, they want to solve
\begin{eqnarray*}
\min_y & & \frac{1}{2} y^T L y \\
\mbox{s.t.} & & 
y_i = 
\left\{ \begin{array}{l l}
                    1 & \quad \text{node $i$ is labeled in class $j$}\\
                    0 & \quad \text{node $i$ is labeled in another class}\\
                    \mbox{free} & \quad \text{otherwise}
                 \end{array} 
         \right. 
\end{eqnarray*}
This essentially involves constructing a new graph from $G$ ,where the labels involve adding extra nodes, $s$ and $t$, where $s$ links to the current class and $t$ links to the other class.
For ZGL, weights $=\infty$ between $s$ and the current class as well as between $t$ and the other class.

Zhao, et al. extend this to account explicitly for the bicriteria that:
\begin{enumerate}
\item
\label{point:zhao1}
nearby points should have the same labels; and 
\item
\label{point:zhao2}
points on the same structure (clusters of manifold) have the same label.
\end{enumerate}
Note that Point~\ref{point:zhao1} is a ``local'' condition, in that it depends on nearby data points, while Point~\ref{point:zhao2} is a ``global'' condition, in that it depends on large-scale properties of clusters.
The point out that different semi-supervised methods take into account these two different properties and weight them in different ways.
Zhao et al. try to take both into account by ``iteratively'' spreading ZGL to get a classification function that has local and global consistency properties and is sufficiently smooth with respect to labeled and unlabeled data.

In more detail, here is the Zhao et al. algorithm.
\begin{enumerate}
\item
Form the affinity matrix $W$ from an rbf kernel.
\item
Construct $\tilde{W} = D^{-1/2}WD^{-1/2}$.
\item
Iterate $Y(t+1) = \alpha \tilde{W} Y + \left(1-\alpha\right)S $, where $\alpha \in (0,1)$ is a parameter, and where $S$ is a class label vector.
\item
Let $Y^{*}$ be the limit, and output it.
\end{enumerate} 
Note that $\tilde{Y}^{*} = \left(1-\alpha\right)\left(1-\alpha\tilde{W}\right)S$.
Alternatively, 
\[
Y = \left( L + \alpha D \right)^{-1} S = \left( D - \beta A \right)^{-1} S ,
\]
where $Y$ is the prediction vector, and where $S$ is the matrix of labels.
Here, we have chosen/defined $\alpha = \frac{1-\beta}{\beta}$ to relate the two forms.

Then the ``mincut graph'' has $G=(V,E)$, with nodes $s$ and $t$ such that
$$
\begin{cases}
s & \textrm{ connects to each sample labeled with class $j$ with weight $\alpha$} \\
t & \textrm{ connects to \emph{all} nodes in $G$ (except $s$) with weight $\alpha(d_i-s_i)$}
\end{cases}
$$
The $\ell_2$ minorant of this mincut graph has:
\[
\min_{x \mbox{ s.t. } x_s=1,x_t=0} \frac{1}{2}  \|Bx \|_{C,2}^{2}  
   = \min_{x \mbox{ s.t. } x_s=1,x_t=0} 
      \frac{1}{2}
      \left( \begin{array}{c} 1 \\ y \\ 0 \end{array} \right)^{T}
      \left( \begin{array}{ccc} \alpha e^Ts & -\alpha s^T & 0 \\
                         -\alpha s & \alpha D + L & \alpha(d-s) \\
                         0 & -\alpha(d-s)^T & \alpha e^T (d-s) \\
       \end{array}
\right)
\left( \begin{array}{c} 1 \\ y \\ 0 \end{array} \right)  .
\] 
In this case, $y$ solves $\left(\alpha D + L\right) y = \alpha s$.

We will see an equation of this form below in a  somewhat different context.
But for the semi-supervised learning context, we can interpret this as a class-specific smoothness constraint.
To do so, define
\[
A(Y) = \frac{1}{2}\left( \sum_{ij=1}^{n} W_{ij} \| \frac{1}{\sqrt{D_{ii}}}Y_i - \frac{1}{\sqrt{D_{jj}}} Y_j\|^2
                                    - \mu \sum_{j=1}^{n} \| Y_i - S_i \|^2 \right)
\]
to be the ``cost'' associated with the prediction $Y$.
The first term is a ``smoothness constraint,'' which is the sum of local variations, and it reflects that a good classification should not change much between nearby points.
The second term is a ``fitting constraint,'' and it says that a good classification function should not change much from the initial assignment provided by the labeled data.
The parameter $\mu$ givers the interaction between the two terms.
To see how this gives rise to the previous expression, observe that
\[
\frac{\partial Q(Y)}{\partial Y} |_{Y=Y^*}
   = Y^* - WY^* + \mu \left( Y^* - S \right) = 0 , 
\]
from which it follows that 
\[
Y^* - \frac{1}{1+\mu} WY^* - \frac{\mu}{1+\mu} S = 0  .
\]
If we define $\alpha = \frac{1}{1+\mu}$ and $\beta = \frac{\mu}{1+\mu}$, so that
$\alpha+\beta=1$, then we have that
\[
\left(I-\alpha W \right) Y^* = \beta S  ,
\]
and thus that
\[
Y^* = \beta \left( 1-\alpha W \right)^{-1} S  .
\]

\newpage

\section{%
(03/17/2015): 
Modeling graphs with electrical networks}

Reading for today.
\begin{compactitem}
\item
``Random Walks and Electric Networks,'' in arXiv, by Doyle and Snell
\end{compactitem}

\subsection{Electrical network approach to graphs}

So far, we have been adopting the usual approach to spectral graph theory:
understand graphs via the eigenvectors and eigenvalues of associated 
matrices.
For example, given a graph $G=(V,E)$, we defined an adjacency matrix $A$ 
and considered the eigensystem $A v = \lambda v$, and we also defined the
Laplacian matrix $L=D-A$ and considered the Laplacian quadratic form 
$x^TLx - \sum_{(ij) \in E} (x_i-x_j)^2$.
There are other ways to think about spectral graph methods that, while
related, are different in important ways.
In particular, one can draw from \emph{physical intuition} and define 
physical-based models from the graph $G$, and one can also consider more
directly vectors that are obtained from various \emph{diffusions and random 
walks} on $G$.
We will do the former today, and we will do the latter next time.

\subsection{A physical model for a graph}

In many physical systems, one has the idea that there is an equilibrium state and that
the system goes back to that equilibrium state when disturbed.
When the system is very near equilibrium, the force pushing it back to the equilibrium 
state is quadratic in the displacement from equilibrium, one can often define a 
potential energy that in linear in the displacement from equilibrium, and then the 
equilibrium state is the minimum of that potential energy function.

In this context, let's think about the edges of a graph $G=(V,E)$ as physical 
``springs,'' in which case the weights on the edges correspond to a spring constant 
$k$.
Then, the force, as a function of the displacement $x$  from equilibrium, is $F(x)=kx$, 
and the corresponding potential energy is $U(x) = \frac{1}{2}kx^2$.
In this case, i.e., if the graph is viewed as a spring network, then if we nail down 
some of the vertices and then let the rest settle to an equilibrium position, then we 
are interested in finding the minimum of the potential energy 
\[
\sum_{(ij)\in E} \left(x_i-x_j\right)^{2} = x^TLx   , 
\]
subject to the constraints on the nodes we have nailed down.
In this case, the energy is minimized when the non-fixed vertices have values equal to
\[
x_i = \frac{1}{d_i} \sum_{(ij) \in E} x_j  ,
\]
i.e., when the value on any node equals is the average of the values on its neighbors.
(This is the so-called \emph{harmonic property} which is very important, e.g., in 
harmonic analysis.)

As we have mentioned previously and will go into in more detail below, eigenvectors 
can be unstable things, and having some physical intuition can only help; so let's go 
a little deeper into these connections.

First, recall that the \emph{standard/weighted geodesic graph metric} defines a distance $d(a,b)$ between vertices $a$ and $b$ as the length of the minimum-length path, i.e., number of edges or the sum of weights over edges, on the minimum-length path connecting $a$ and $b$.  
(This is the ``usual'' notion of distance/metric on the nodes of a graph, but it will be different than distances/metrics implied by spectral methods and by what we will discuss today.)

Here, we will \emph{model a graph $G=(V,E)$ as an electrical circuit}.  
(By this, we mean a circuit that arises in electromagnetism and electrical engineering.)
This will allow us to use physical analogues, and it will allow us to get more robust proofs for several results.  
In addition, it allow us to define another notion of distance that is closer to diffusions.

As background, here are some physical facts from electromagnetism that we would like to mimic and that we would like our model to incorporate.
\begin{itemize}
\item
A basic \emph{direct current electrical circuit} consists of a \emph{battery} and one or more \emph{circuit elements} connected by wires.
Although there are other circuit elements that are possible, here we will only consider the use of resistors.
A battery consists of two distinct vertices, call them $\{a,b\}$, one of which is the source, the other of which is the sink.
(Although we use the same terms, ``source'' and ``sink,'' as we used with flow-based methods, the sources and since here will obey different rules.)
A \emph{resistor} between two points $a$ and $b$, i.e., between two nodes in $G$, has an associated (undirected and symmetric) quantity $r_{ab}$ called a \emph{resistance} (and an associated conductance $c_{ab} = \frac{1}{r_{ab}}$).
Also, there is a \emph{current} $Y_{ab}$ and a \emph{potential difference} $V_{ab}$ between nodes $a$ and $b$.
\end{itemize}

Initially, we can define the resistance between two nodes that are connected by an edge to depend (typically inversely) on the weight of that edge, but we want to extend the idea of resistance to a resistance between any two nodes.
To do so, an important notion is that of \emph{effective resistance}, which is the following.
Given a collection of resistors between nodes $a$ and $b$, they can be replaced with a single effective resistor with some other resistance.
Here is how the value of that effective resistance is determined.
\begin{itemize}
\item
If $a$ and $b$ have a node $c$ between them, i.e., the resistors are in series, and there are resistances $r_1=R_{ac}$ and $r_2 = R_{cb}$, then the effective resistance between $a$ and $b$ is given by $R_{ab}=r_1+r_2$.
\item
If $a$ and $b$ have no nodes between them but they are connected by two edges with resistances $r_1$ and $r_2$, i.e., the resistors are in parallel, then the effective resistance between $a$ and $b$ is given by $R_{ab} = \frac{1}{\frac{1}{r_1}+\frac{1}{r_2}}$.
\item
These rules can be applied recursively.
\end{itemize}
From this it should be clear that the number of paths as well as their lengths contribute to the effective resistance.
In particular, having $k$ parallel edges/paths leads to an effective resistance that is decreased by $\frac{1}{k}$; and adding the first additional edge between two nodes has a big impact on the effective resistance, but subsequent edges have less of an effect.
Note that this is vaguely similar to the way diffusions and random walks behave, and distances/metrics they might imply, as opposed to geodesic paths/distances defined above, but there is no formal connection (yet!).

Let a voltage source be connected between vertices $a$ and $b$, and let $Y>0$ be the net current out of source $a$ and into course $b$. 
Here we define two basic rules that our resistor networks must obey.

\begin{definition}
\label{def:kirchoff-current}
The \emph{Kirchhoff current law} states that the current $Y_{ij}$ between vertices $i$ and $j$ (where $Y_{ij}=-Y_{ji}$) satisfies 
\[
\sum_{j \in N(i)} Y_{ij} = 
\left\{ \begin{array}{l l}
                     Y & \quad i=a  \\
                    -Y & \quad i=b  \\
                    \mbox{free} & \quad \text{otherwise}
                 \end{array}
         \right. ,
\]
where $N(i)$ refers to the nodes that are neighbors of node $i$.
\end{definition}

\begin{definition}
\label{def:kirchoff-potential}
The \emph{Kirchhoff circuit/potential law} states that for every cycle $C$ in the network, 
\[
\sum_{(ij)\in C} Y_{ij} R_{ij} = 0.
\]
\end{definition}
From Definition~\ref{def:kirchoff-potential}, it follows that there is a so-called \emph{potential function} on the vertices/nodes of the graph. 
This is known as Ohm's Law.
\begin{definition}
\label{def:ohms-law}
\emph{Ohm's Law} states that, to any vertex $i$ in the vertex set of $G$, there is an associated potential, call it $V_i$, such that for all edges $(ij) \in E$ in the graph
\[
Y_{ij}R_{ij} = V_i - V_j  .
\]
\end{definition}

Given this potential function, we can define the effective resistance between any two nodes in $G$, i.e., between two nodes that are not necessarily connected by an edge.
\begin{definition}
Given two nodes, $a$ and $b$, in $G$, the effective resistance is $R_{ab} = \frac{V_a-V_b}{Y} $.
\end{definition}

\textbf{Fact.} 
Given a graph $G$ with edge resistances $R_{ij}$, and given some source-sink pair $(a,b)$, the effective resistance exists, it is unique, and (although we have defined it in terms of a current) it does \emph{not} depend on the net current.

\subsection{Some properties of resistor networks}

Although we have started with this physical motivation, there is a close connection between resistor networks and what we have been discussing so far this semester.

To see this, let's start with the following definition, which is a special case of the Moore-Penrose pseudoinverse.

\begin{definition}
The \emph{Laplacian pseudoinverse} is the unique matrix satisfying: 
\begin{enumerate}
\item
$L^{+}\vec{1}=0$; and
\item
For all $w \perp \vec{1}: \quad L^{+}w=v \mbox{ s.t. } Lv=w \mbox{ and } v\perp\vec{1} $ .
\end{enumerate}
\end{definition}

Given this, we have the following theorem.
Note that here we take the resistances on edges to be the inverse of the weights on those edges, which is probably the most common choice.

\begin{theorem}
Assume that the resistances of the edges of $G=(V,E)$ are given by $R_{ij} = \frac{1}{w_{ij}}$.
Then, the effective resistance between any two nodes $a$ and $b$ is given by:
\begin{eqnarray*}
R_{ab} &=& \left(e_a-e_b\right)^{T} L^{+} \left(e_a-e_b\right) \\
            &=& L_{aa}^{+} -2L_{ab}^{+} + L_{bb}^{+}  .
\end{eqnarray*}
\end{theorem}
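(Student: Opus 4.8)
The plan is to set up the electrical network equations as a linear system in the node potentials and then recognize that system as being governed by the Laplacian. Concretely, send a unit current $Y=1$ into $a$ and out of $b$, and let $V\in\mathbb{R}^n$ be the vector of node potentials. Combining Ohm's Law (Definition~\ref{def:ohms-law}), which says $Y_{ij} = w_{ij}(V_i-V_j)$ since $R_{ij}=1/w_{ij}$, with the Kirchhoff current law (Definition~\ref{def:kirchoff-current}), the net current out of node $i$ is $\sum_{j\in N(i)} w_{ij}(V_i-V_j) = (LV)_i$. So the current-conservation conditions become exactly $LV = e_a - e_b$, where the right-hand side is the external current injection vector. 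This is the key translation step: the physics reduces to solving a Laplacian linear system.

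Next I would solve this system. Note $e_a-e_b \perp \vec{1}$ (its entries sum to zero), so it lies in the range of $L$ (recall $L$ is SPSD with kernel spanned by $\vec{1}$ for a connected graph), and the system $LV = e_a-e_b$ is solvable; the solution is unique up to adding a multiple of $\vec{1}$, which matches the physical fact that potentials are only defined up to a constant and that the effective resistance (a potential \emph{difference}) doesn't depend on that choice. Pick the solution $V = L^{+}(e_a-e_b)$, using the defining properties of the Laplacian pseudoinverse. Then by definition of effective resistance, $R_{ab} = \frac{V_a - V_b}{Y} = V_a - V_b = (e_a-e_b)^T V = (e_a-e_b)^T L^{+}(e_a-e_b)$. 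Expanding the quadratic form using $(e_a-e_b)^T L^{+}(e_a-e_b) = L^{+}_{aa} - L^{+}_{ab} - L^{+}_{ba} + L^{+}_{bb} = L^{+}_{aa} - 2L^{+}_{ab} + L^{+}_{bb}$ (using symmetry of $L^{+}$, which follows from symmetry of $L$) gives the second form.

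There is a small gap to fill: the theorem statement presupposes (via the ``Fact'' quoted just before it) that the effective resistance exists, is unique, and is independent of the net current $Y$. The existence/uniqueness I would get for free from the solvability analysis above (range and kernel of $L$); the independence-of-$Y$ claim follows by linearity — scaling the injected current by $\lambda$ scales $V$ by $\lambda$ and hence scales $V_a-V_b$ by $\lambda$, so the ratio $(V_a-V_b)/Y$ is invariant. I would also want to observe that this $V$ is consistent with the Kirchhoff potential law (Definition~\ref{def:kirchoff-potential}): defining $Y_{ij}=w_{ij}(V_i-V_j)$ automatically makes $\sum_{(ij)\in C} Y_{ij}R_{ij} = \sum_{(ij)\in C}(V_i-V_j) = 0$ around any cycle, so the construction is self-consistent and we haven't over- or under-determined the system.

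The main obstacle — really the only nontrivial point — is being careful about the uniqueness/well-definedness of the potential vector and making sure the pseudoinverse solution is the ``right'' one. Everything else is bookkeeping: translating three physical laws into $LV = e_a - e_b$, inverting on the orthogonal complement of $\vec{1}$, and expanding a quadratic form. The physical Fact preceding the theorem could be invoked to short-circuit the existence discussion, but I think it is cleaner to derive it directly from properties of $L$, since that also makes transparent \emph{why} the Laplacian pseudoinverse appears. One should double-check the connectedness assumption is in force (otherwise $e_a-e_b$ need not be in the range of $L$ when $a,b$ are in different components, and the effective resistance is infinite) — I would state that caveat explicitly.
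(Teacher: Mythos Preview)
Your proposal is correct and follows exactly the approach the paper sketches: the paper's proof only states that currents and potentials satisfying Kirchhoff's laws exist and are unique (up to a constant for potentials), then defers all details to a homework exercise. You have supplied precisely those details --- reducing the physics to $LV = e_a - e_b$, solving via $L^{+}$, and expanding the quadratic form --- so your write-up is actually more complete than what appears in the text.
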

\begin{proof}
The idea of the proof is that, given a graph, edge resistances, and net current, there always exists currents $Y$ and potentials $V$ satisfying Kirchhoff's current and potential laws; in addition, the vector of potentials is unique up to a constant, and the currents are unique.
I'll omit the details of this since it is part of HW2.
\end{proof}

Since the effective resistance between any two nodes is well-defined, we can define the total effective resistance of the graph.
(This is sometimes called the Kirchhoff index.)

\begin{definition}
The \emph{total effective resistance} is $R^{tot} =\sum_{ij=1}^{n} R_{ij}$.
\end{definition}

Before proceeding, think for a minute about why one might be interested in such a thing.
Below, we will show that the effective resistance is a distance; and so the total effective resistance is the sum of the distances between all pairs of points in the metric space.
Informally, this can be used to measure the total ``size'' or ``capacity'' of a graph. 
We used a similar thing (but for the geodesic distance) when we showed that expander graphs had a $\Theta\left(\log(n)\right)$ duality gap.
In that case, we did this, essentially, by exploiting the fact that there was a lot of flow to route and since most pairs of nodes were distance $\Theta\left(\log(n)\right)$ apart in the geodesic distance.

The quantity $R^{tot}$ can be expressed exactly in terms of the Laplacian eigenvalues (all of them, and not just the first one or first few).
Here is the theorem (that we won't prove).

\begin{theorem}
Let $\lambda_i$ be the Laplacian eigenvalues.
Then, $R^{tot} = n \sum_{i=1}^{n} \frac{1}{\lambda_i}$.
\end{theorem}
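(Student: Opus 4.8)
The plan is to reduce $R^{tot}$ to the trace of the Laplacian pseudoinverse $L^{+}$ and then evaluate that trace via the spectral theorem. First I would invoke the previous theorem, which we may assume, giving the pairwise effective resistance in the form
$R_{ij} = (e_i-e_j)^{T}L^{+}(e_i-e_j) = L^{+}_{ii} - 2L^{+}_{ij} + L^{+}_{jj}$,
where $e_i$ is the $i$-th standard basis vector. (Note this already forces $R_{ii}=0$, consistent with effective resistance being a metric.) The standing assumption throughout is that $G$ is connected, since otherwise $R^{tot}$ is infinite and there is nothing to prove.

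Second, I would sum this expression over all pairs. The key simplification is that the cross terms disappear: $\sum_{i,j} L^{+}_{ij} = \vec{1}^{T} L^{+} \vec{1} = 0$, because $L^{+}\vec{1}=0$ by the defining property of the pseudoinverse. Hence $\sum_{i,j} R_{ij} = \sum_{i,j}\bigl(L^{+}_{ii}+L^{+}_{jj}\bigr) = 2n\,\mathrm{Tr}(L^{+})$ when the sum runs over ordered pairs, or $n\,\mathrm{Tr}(L^{+})$ when it runs over unordered pairs $\{i,j\}$ with $i<j$ (using $R_{ij}=R_{ji}$ and $R_{ii}=0$). I would pin down which convention the definition of $R^{tot}$ uses so that the leading constant comes out as stated.

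Third, to evaluate $\mathrm{Tr}(L^{+})$ I would apply the spectral theorem for the real symmetric matrix $L$: write $L = \sum_{i=1}^{n}\lambda_i u_i u_i^{T}$ with $\{u_i\}$ orthonormal, $\lambda_1 = 0$ and $u_1 = \vec{1}/\sqrt{n}$ (here $0$ is a simple eigenvalue because $G$ is connected, by the earlier Perron--Frobenius-style result for the Laplacian). From the characterization of $L^{+}$ this yields $L^{+} = \sum_{i\ge 2}\lambda_i^{-1} u_i u_i^{T}$, so $\mathrm{Tr}(L^{+}) = \sum_{i\ge 2}\lambda_i^{-1}\|u_i\|_2^{2} = \sum_{i\ge 2}\lambda_i^{-1}$. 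Combining with the previous paragraph gives $R^{tot} = n\sum_{i\ge 2} 1/\lambda_i$; in particular the sum "$\sum_{i=1}^{n}$" in the statement must be read as a sum over the nonzero Laplacian eigenvalues, since $\lambda_1=0$ is excluded.

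There is no genuinely hard step here — once the pseudoinverse formula for $R_{ij}$ is available, the argument is essentially bookkeeping. The points that require care are: (i) getting the leading constant $n$ correct, which hinges entirely on whether $R^{tot}$ sums ordered or unordered pairs together with the fact that $R_{ii}=0$; (ii) handling the zero eigenvalue and the pseudoinverse properly, i.e.\ that $\mathrm{Tr}(L^{+})$ involves only the nonzero $\lambda_i$ and that $L^{+}\vec{1}=0$ kills the cross terms; and (iii) the connectivity assumption. If one preferred not to assume the previous theorem, the alternative route is to return to Kirchhoff's current and potential laws and verify directly that $v = L^{+}(e_i-e_j)$ defines an admissible potential for unit current injected at $i$ and extracted at $j$, but that merely re-derives the earlier identity $R_{ij}=(e_i-e_j)^{T}L^{+}(e_i-e_j)$ and then one proceeds exactly as above.
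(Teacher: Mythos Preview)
Your argument is correct and is the standard way to establish this identity. The paper explicitly states this theorem without proof (``Here is the theorem (that we won't prove)''), so there is no paper proof to compare against. Your handling of the two delicate points --- that the sum over $\lambda_i$ in the statement must be read as excluding the zero eigenvalue, and that the leading factor $n$ depends on whether $R^{tot}$ sums over ordered or unordered pairs --- is exactly right; the corollary $\frac{n}{\lambda_2}\le R^{tot}\le \frac{n(n-1)}{\lambda_2}$ that follows in the paper confirms the unordered-pair convention and hence the factor $n$ rather than $2n$.
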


Of course, we can get a (weak) bound on $R^{tot}$ using just the leading nontrivial Laplacian eigenvalue.

\begin{corollary}
\[
\frac{n}{\lambda_2} \le R^{tot} \le \frac{n(n-1)}{\lambda_2}
\]
\end{corollary}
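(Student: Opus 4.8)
The plan is to derive both inequalities directly from the exact formula $R^{tot} = n\sum_{i=2}^{n}\frac{1}{\lambda_i}$ stated in the preceding theorem. (Note that the sum must run over the nonzero eigenvalues $\lambda_2 \le \cdots \le \lambda_n$; the $i=1$ term is excluded since $\lambda_1 = 0$. Throughout we assume $G$ is connected, so that $\lambda_2 > 0$ by Theorem~\ref{thm:perrof-frob-lapl}, and hence every term $1/\lambda_i$ for $i \ge 2$ is finite and positive.) Given this, everything reduces to elementary estimates on a sum of $n-1$ positive terms.

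For the lower bound, I would simply retain the single term $i=2$ and discard the remaining $n-2$ terms, each of which is strictly positive:
\[
\sum_{i=2}^{n}\frac{1}{\lambda_i} \;\ge\; \frac{1}{\lambda_2},
\]
so multiplying by $n$ gives $R^{tot} \ge \frac{n}{\lambda_2}$. For the upper bound, I would use the ordering $0 < \lambda_2 \le \lambda_i$ for all $i \ge 2$, which gives $\frac{1}{\lambda_i} \le \frac{1}{\lambda_2}$; summing the $n-1$ such terms yields
\[
\sum_{i=2}^{n}\frac{1}{\lambda_i} \;\le\; (n-1)\,\frac{1}{\lambda_2},
\]
and multiplying by $n$ gives $R^{tot} \le \frac{n(n-1)}{\lambda_2}$. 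Combining the two displays establishes the claimed sandwich.

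There is essentially no hard step here: the only thing worth flagging is the bookkeeping around the zero eigenvalue and the connectedness hypothesis that makes $\lambda_2 > 0$, so that $R^{tot}$ is finite and the bounds are meaningful. One could also remark that the upper bound is tight (up to the $n-1$ factor being genuinely needed) when all nonzero eigenvalues coincide, as for the complete graph $K_n$, and that the lower bound is the one-eigenvalue analogue of the exact trace formula — in the same spirit as how earlier in the notes only $\lambda_2$ (rather than the full spectrum) was used to bound various quantities via Cheeger-type arguments.
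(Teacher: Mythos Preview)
Your proposal is correct and is exactly the intended argument: the paper states this result as an immediate corollary of the formula $R^{tot} = n\sum_i \frac{1}{\lambda_i}$ without giving any proof, and your derivation---dropping all but the $i=2$ term for the lower bound and bounding each $\frac{1}{\lambda_i}$ by $\frac{1}{\lambda_2}$ for the upper bound---is precisely the elementary estimate that makes it a corollary. Your care in noting that the sum must exclude $\lambda_1=0$ and that connectedness ensures $\lambda_2>0$ is appropriate bookkeeping.
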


Next, we show that the effective resistance is a distance function. 
For this reason, it is sometimes called the resistance distance.

\begin{theorem}
The effective resistance $R$ is a metric.
\end{theorem}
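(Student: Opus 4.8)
The plan is to verify the four metric axioms for $R(a,b)$ using the characterization $R_{ab} = (e_a-e_b)^T L^+ (e_a-e_b)$ just established. Since $L^+$ is symmetric positive semidefinite (it has the same eigenvectors as $L$ with eigenvalues $1/\lambda_i$ on the orthogonal complement of $\vec{1}$, and $0$ on $\vec{1}$), the quadratic form $x \mapsto x^T L^+ x$ is a PSD quadratic form, so we are essentially checking that a PSD quadratic form restricted to difference vectors $e_a - e_b$ gives a metric. That is a standard fact, and the proof is short.

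First I would handle nonnegativity and symmetry: $R_{ab} = (e_a-e_b)^T L^+ (e_a-e_b) \ge 0$ because $L^+ \succeq 0$, and symmetry $R_{ab}=R_{ba}$ is immediate since $(e_a-e_b) = -(e_b-e_a)$ and the form is even. Next, the identity-of-indiscernibles axiom: if $a=b$ then $e_a-e_b=0$ so $R_{aa}=0$; conversely, if $R_{ab}=0$ I would argue that the current-flow picture forces $a=b$ when the graph is connected — inject unit current from $a$ to $b$, then $R_{ab}=(V_a-V_b)/Y$; if this is zero the potential is constant, hence no current flows, contradicting $Y>0$ unless $a=b$. (Alternatively, write $L^+ = \sum_{i\ge 2}\frac{1}{\lambda_i} v_i v_i^T$ for a connected graph; then $(e_a-e_b)^T L^+(e_a-e_b)=0$ iff $e_a-e_b \perp v_i$ for all $i\ge 2$, i.e. $e_a-e_b \in \mathrm{span}\{\vec{1}\}$, which forces $a=b$.)

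The only axiom requiring real work is the triangle inequality: $R_{ac} \le R_{ab} + R_{bc}$. The cleanest route is to use a square-root/embedding trick. Since $L^+ \succeq 0$, we can write $L^+ = M^T M$ for some matrix $M$ (e.g. $M = (L^+)^{1/2}$), and then $R_{ab} = \|M(e_a-e_b)\|_2^2 = \|M e_a - M e_b\|_2^2$. So $\sqrt{R_{ab}} = \|Me_a - Me_b\|_2$ is an honest Euclidean distance between the points $Me_a$, $Me_b$ in $\mathbb{R}^n$, hence satisfies the triangle inequality: $\sqrt{R_{ac}} \le \sqrt{R_{ab}} + \sqrt{R_{bc}}$. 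Squaring and using $2\sqrt{R_{ab}}\sqrt{R_{bc}} \ge 0$ gives $R_{ac} \le R_{ab} + R_{bc}$, as desired. (In fact this shows the stronger statement that $\sqrt{R}$ is a metric, and $R$ itself is then a metric too.)

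The main obstacle — really the only subtlety — is making sure the effective resistance from the circuit definition actually coincides with the Laplacian-pseudoinverse formula so that we may invoke PSD-ness; but that is exactly the content of the preceding theorem (whose proof is deferred to the homework), so I am entitled to assume it. A secondary point to be careful about is connectivity: for a disconnected graph the resistance between nodes in different components is infinite, so one should either restrict to connected $G$ (the standing assumption after data preprocessing) or allow $R$ to take the value $+\infty$ and check the axioms still hold in the extended sense. I would state the connectedness assumption explicitly and then the argument above goes through verbatim.
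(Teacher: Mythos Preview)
Your handling of nonnegativity, symmetry, and identity of indiscernibles is fine and matches the paper's argument almost verbatim (the paper also uses that $e_a-e_b$ lies in the null space of $L^+$, i.e., in $\mathrm{span}\{\vec{1}\}$, iff $a=b$). The problem is your triangle inequality.

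You correctly establish that $\sqrt{R_{ab}}=\|Me_a-Me_b\|_2$ is a Euclidean distance, hence $\sqrt{R_{ac}}\le\sqrt{R_{ab}}+\sqrt{R_{bc}}$. But squaring this yields
\[
R_{ac}\le R_{ab}+R_{bc}+2\sqrt{R_{ab}R_{bc}},
\]
and the cross term sits on the \emph{wrong side}: you cannot drop a nonnegative term from the right-hand side of an inequality. The claim ``$\sqrt{R}$ a metric $\Rightarrow$ $R$ a metric'' is false in general (three collinear points at $0,1,2$ give $d^2(0,2)=4>1+1$). So your argument, as written, does not prove the triangle inequality for $R$ itself.

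What is actually needed is the cross-term bound $(e_a-e_b)^TL^+(e_b-e_c)\le 0$, since expanding $R_{ac}=(e_a-e_b+e_b-e_c)^TL^+(e_a-e_b+e_b-e_c)$ gives $R_{ac}=R_{ab}+R_{bc}+2(e_a-e_b)^TL^+(e_b-e_c)$. Writing $V_{bc}=L^+(e_b-e_c)$ for the potential induced by unit current from $b$ to $c$, this cross term equals $V_{bc}(a)-V_{bc}(b)$, and one must show this is $\le 0$, i.e., that no vertex has potential exceeding the source. The paper proves exactly this as a separate claim via the discrete maximum principle: at every vertex other than the source and sink the potential is a weighted average of its neighbors' potentials, so the maximum is attained at the source $b$. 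That harmonic/maximum-principle step is the missing idea in your argument; the pure PSD/embedding route only delivers the weaker statement that $\sqrt{R}$ is a metric.
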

\begin{proof}
We will establish the three properties of a metric.

First, from the above theorem, $R_{ij}=0 \iff i=j$.
The reason for this is since $e_i-e_j$ is in the null space of $L^{+}$ (which is the $\mbox{span}(\vec{1}$)) iff $i=j$.
Since the pseudoinverse of $L$ has eigenvalues $0,\lambda_2^{-1},\ldots,\lambda_n^{-1}$, it is PSD, and so $R_{ij} \ge 0$.

Second, since the pseudoinverse is symmetric, we have that $R_{ij}=R_{ji}$.

So, the only nontrivial thing is to show the triangle inequality holds.

To do so, we show two claims.
\begin{claim}
\label{claim:resist-metric-claim1}
Let $Y_{ab}$ be the vector 
$e_a-e_b= 
\left\{ \begin{array}{l l}
                     1 & \quad \mbox{ at } a  \\
                    -1 & \quad \mbox{ at } b  \\
                     0 & \quad \text{elsewhere,}
                 \end{array}
         \right.   
$
and let $V_{ab} = L^{+} Y_{ab}$.
Then, $V_{ab}(a) \ge V_{ab}(c) \ge V_{ab}(b)$, for all $c$.
\end{claim}
\begin{proof}
Recall that $V_{ab}$ is the induced potential when we have $1$ Amp going in $a$ and $1$ Amp coming out of $b$. For every vertex $c$, other than $a$ and $b$, the total flow is $0$, which means $\sum_{x\sim c}\frac{1}{R_{xc}}(V_{ab}(x)-V_{ab}(c))=0$, and it is easy to see $V_{ab}(c)=\frac{\sum_{x\sim c}C_{xc}V_{ab}(x)}{\sum_{x\sim c}C_{xc}}$ where $C_{xc}=\frac{1}{R_{xc}}$ is the conductance between $x,c$. $V_{ab}(c)$ has a value equal to the weighted average of values of $V_{ab}(x)$ at its neighbors.
We can use this to prove the claim by contradiction.
Assume that there exists a $c$ s.t. $V_{ab}(c) > V_{ab}(a) $.
If there are several such nodes, then let $c$ be the node s.t. $V_{ab}(c)$ is the largest.
In this case, $V_{ab}(c)$ is larger than the values at its neighbors.
This is a contradiction, since $V_c$ is a weighted average of the potentials at its neighbors. The proof of the other half of the claim is similar. (also $V_{ab}(a)\geq V_{ab}(b)$ as $V_{ab}(a)-V_{ab}(b)=R_{ab}\geq 0$)
\end{proof}

\begin{claim}
$R_{eff}(a,b) + R_{eff}(b,c) \ge R_{eff}(a,c) $
\end{claim}
\begin{proof}
Let $Y_{ab}$ and $Y_{bc}$ be the external current from sending one unit of current from $a \rightarrow b$ and $b \rightarrow c$, respectively.
Note that $Y_{ac} = Y_{ab} + Y_{bc}$.
Define the voltages
$V_{ab} = L^{+} Y_{ab}$, 
$V_{bc} = L^{+} Y_{bc}$, and
$V_{ac} = L^{+} Y_{ac}$.
By linearity, $V_{ac} = V_{ab} + V_{bc}$. 
Thus, it follows that
\[
R_{eff}(a,c) = Y_{ac}^TV_{ac}= Y_{ac}^TV_{ab} + Y_{ac}^{T}V_{bc} .
\]
By Claim~\ref{claim:resist-metric-claim1}, it follows that
\[
Y_{ac}^TV_{ab} = V_{ab}(a) - V_{ab}(c) \le V_{ab}(a) - V_{ab}(b) = R_{eff}(a,b) .
\]
Similarly, $Y_{ac}^T V_{bc} \le R_{eff}(b,c)$. 
This establishes the claim.
\end{proof}
The theorem follows from these two claims.
\end{proof}

Here are some things to note regarding the resistance distance.
\begin{itemize}
\item
$R_{eff}$ is non-increasing function of edge weights.
\item
$R_{eff}$ does not increase when edges are added.
\item
$R^{tot}$ strictly decreases when edges are added and weights are increased.
\end{itemize}
Note that these observations are essentially claims about the distance properties of two graphs, call them $G$ and $G^{\prime}$, when one graph is constructed from the other graph by making changes to one or more edges.

We have said that both geodesic distances and resistances distances are legitimate notions of distances between the nodes on a graph.
One might wonder about the relationship between them.
In the same way that there are different norms for vectors in $\mathbb{R}^{n}$, e.g., the $\ell_{1}$, $\ell_{2}$, and $\ell_{\infty}$, and those norms have characteristic sizes with respect to each other, so too we can talk about the relative sizes of different distances on nodes of a graph.
Here is a theorem relating the resistance distance with the geodesic distance.

\begin{theorem}
For $R_{eff}$ and the geodesic distance $d$:
\begin{enumerate}
\item
$R_{eff}(a,b) = d(a,b)$ iff there exists only one path between $a$ and $b$.
\item
$R_{eff}(a,b) < d(a,b)$ otherwise.
\end{enumerate}
\end{theorem}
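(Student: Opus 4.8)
The plan is to recast the effective resistance as the minimum of a flow energy, and then compare the optimal (electrical) flow with the flow that simply routes one unit of current along a shortest path. First I would establish the variational description: among all unit $a$--$b$ flows $\theta$ (edge functions obeying the Kirchhoff current law of Definition~\ref{def:kirchoff-current} with net current $1$), the electrical current $Y$ is the unique minimizer of the energy $\mathcal{E}(\theta)=\sum_{(ij)\in E}R_{ij}\theta_{ij}^2$. This follows from strict convexity of $\mathcal{E}$ (all $R_{ij}>0$) together with a Lagrange-multiplier check showing any minimizer satisfies the Kirchhoff potential law of Definition~\ref{def:kirchoff-potential}, hence equals $Y$ by the uniqueness of the electrical flow noted earlier. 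Using Ohm's law (Definition~\ref{def:ohms-law}) and telescoping, $\mathcal{E}(Y)=\sum_{(ij)}Y_{ij}(V_i-V_j)=V_a-V_b=R_{ab}$, so $R_{eff}(a,b)$ is exactly this minimum energy. (Throughout I read $d$ as the geodesic distance with edge lengths equal to the resistances $R_{ij}$, which reduces to the usual unweighted distance when $R_{ij}\equiv 1$; flagging this convention is needed since the naive ``sum of weights'' metric with $R_{ij}=1/w_{ij}$ would not literally match.)

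Next I would prove the easy inequality $R_{eff}(a,b)\le d(a,b)$. Let $P$ be a shortest $a$--$b$ path, so $\sum_{e\in P}R_e=d(a,b)$ (writing $R_e$ for $R_{ij}$). Routing one unit of current along $P$ and nothing elsewhere is a unit $a$--$b$ flow of energy $\sum_{e\in P}R_e=d(a,b)$, so the variational characterization gives $R_{eff}(a,b)\le d(a,b)$; equivalently, one may invoke the monotonicity fact that adding edges never increases $R_{eff}$, starting from the subgraph consisting of $P$ alone. This already yields statement (ii) up to strictness and reduces the theorem to analyzing when equality can hold.

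For the equality case I would argue both directions via the structure of the electrical flow. If there is a unique simple $a$--$b$ path $P$, I claim the unit flow along $P$ already satisfies both Kirchhoff laws and hence equals $Y$: uniqueness of the simple path forces every two-connected block of $G$ to meet $P$ in at most one vertex, so the edges off $P$ break into components each dangling from a single vertex $v\in P$; defining the potential along $P$ by successive drops $R_e$ and extending it to be constant on each dangling component (equal to the value at its attachment vertex) is consistent with zero current off $P$. Then $R_{eff}(a,b)=\mathcal{E}(Y)=\sum_{e\in P}R_e=d(a,b)$. Conversely, suppose $G$ has at least two simple $a$--$b$ paths and, for contradiction, $R_{eff}(a,b)=d(a,b)$; with $P$ a shortest path, equality in the previous bound forces the unit flow along $P$ to be an energy minimizer, hence $Y$, so $Y_e=\pm1$ on $P$ and $Y_e=0$ off $P$. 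Take a second simple path $P'\ne P$; telescoping Ohm's law along $P'$ gives $V_a-V_b=\sum_{e\in P'}(\pm R_eY_e)$, and since edges of $P'$ off $P$ carry no current while $E(P)\cap E(P')\subsetneq E(P)$ (two distinct simple $a$--$b$ paths cannot have one edge set contain the other), we get $V_a-V_b\le\sum_{e\in P\cap P'}R_e<\sum_{e\in P}R_e=d(a,b)$, contradicting $V_a-V_b=R_{eff}(a,b)=d(a,b)$. Hence $R_{eff}(a,b)<d(a,b)$, which together with the first direction gives statement (i) and the strict form of (ii).

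The main obstacle I anticipate is the equality case, and within it two points needing care: (a) making Thomson's principle and the uniqueness of the minimizing flow genuinely rigorous from the Kirchhoff--Ohm axioms rather than merely quoting it; and (b) the combinatorial lemma that a unique simple $a$--$b$ path forces all other edges into components hanging off single vertices of $P$, which is precisely what permits the consistent construction of potentials. Everything else (the upper bound, the telescoping estimate along $P'$, the convexity argument) should be routine once these are in place.
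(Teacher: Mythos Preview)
Your proof is correct and takes a genuinely different route from the paper's. The paper argues directly from Kirchhoff's laws: in the single-path case, the current law forces $Y_{ij}=Y$ on every edge of the unique path, so telescoping gives $R_{ab}=\sum_{(ij)\in P}R_{ij}=d(a,b)$ immediately (no need to construct potentials on dangling components). For the multi-path case, the paper does not argue by contradiction via Thomson's principle; instead it analyzes what happens when one additional $a$--$b$ path is adjoined to $P$, splitting the current as $Y_2+Y_3=Y$ with $Y_2,Y_3>0$, computes explicitly that $R_{ab}$ strictly drops, and then invokes the previously stated Rayleigh monotonicity fact (adding edges cannot increase $R_{eff}$) to pass from this two-path situation to the general case.

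Your approach via Thomson's variational principle is cleaner for the strict inequality: once you know the energy minimizer is unique, the second-path telescoping argument handles every multi-path graph at once, without any reduction to a two-path configuration or appeal to monotonicity. The cost is that you must first establish Thomson's principle from the Kirchhoff--Ohm axioms (which you correctly flag), and your single-path direction requires the structural lemma about dangling components that the paper's bare-hands argument avoids. Both are sound; yours is more conceptual, the paper's more elementary but leaning on monotonicity as a black box.
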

\begin{proof}
If there is only one path $P$ between $a$ and $b$, then $Y_{ij}=Y$, for all $ij$ on this path (by Kirchhoff current law), and $V_i-V_j = Y R_{ij}$.
It follows that
\[
R_{ab} = \frac{V_a-V_b}{Y} = \sum_{(ij) \in P} \frac{V_i-V_i}{Y} = \sum_{(ij) \in P} V_{ij} = d_{ab}  .
\]
If a path between $a$ and $b$ is added, so that now there are multiple paths between $a$ and $b$, this new path might use part of the path $P$.
If it does, then call that part of the path $P_1$; consider the rest of $P$, and call the shorter of these $P_2$ and the larger $P_3$.

Observe that the current through each edge of $P_1$ is $Y$; and, in addition, that the current through each edge of $P_2$ and $P_3$ is the same for each edge in the path, call them $Y_2$ and $Y_3$, respectively.
Due to Kirchhoff current law and Kirchhoff circuit/potential law, we have that 
$Y_2+Y_3=Y$ and also that $Y_2,Y_3 > 0$, from which it follows that $Y_2 < Y$.
Finally, 
\begin{eqnarray*}
R_{ab} &=& \frac{V_a-V_b}{Y} \\
   &=& \sum_{(ij) \in P_1} \frac{V_i-V_j}{Y} + \sum_{(ij) \in P_2} \frac{V_i-V_j}{Y}  \\
   &=& \sum_{(ij) \in P_1} \frac{V_i-V_j}{Y} + \sum_{(ij) \in P_2} \frac{V_i-V_j}{Y_2}  \\
   &=& \sum_{(ij) \in P_1}R_{ij} + \sum_{(ij) \in P_2} R_{ij}  \\
   &=& d(a,b)
\end{eqnarray*}
The result follows since $R_{eff}$ doesn't increase when edges are added.
\end{proof}

In a graph that is a tree, there is a unique path between any two vertices, and so we have the following result.

\begin{claim}
The metrics $R_{eff}$ and $d$ are the same in a tree.
That is, on a tree, $R_{eff}(a,b)=d(a,b)$, for all nodes $a$ and $b$.
\end{claim}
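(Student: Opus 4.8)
The plan is to invoke the theorem just proved, namely that $R_{eff}(a,b) = d(a,b)$ if and only if there exists exactly one path between $a$ and $b$. The entire content of the claim is therefore the observation that a tree has the \emph{unique path property}: between any two vertices $a$ and $b$ in a tree, there is exactly one simple path. So the first step I would take is to recall (or state) that a tree is by definition a connected acyclic graph, and then argue the unique path property. Connectedness guarantees \emph{at least} one path between $a$ and $b$. For uniqueness, I would argue by contradiction: if there were two distinct simple paths $P$ and $P'$ from $a$ to $b$, then their symmetric difference (the edges in one but not the other) would contain a cycle, since we could walk along $P$ from $a$ to the first vertex where $P$ and $P'$ diverge, follow $P$ until they rejoin, and then return along $P'$ — the divergence point and rejoin point together with the two sub-arcs form a closed walk containing at least one edge, hence a cycle. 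This contradicts acyclicity.

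Having established that any two vertices in a tree are joined by exactly one path, the second step is simply to apply part (1) of the preceding theorem: $R_{eff}(a,b) = d(a,b)$ for all $a,b$. Since this holds for every pair of nodes, the two metrics $R_{eff}$ and $d$ coincide on the tree, which is the statement of the claim. One small point of care: the preceding theorem was stated for currents obeying Kirchhoff's laws with edge resistances $R_{ij}$, and the geodesic distance $d$ was defined as the sum of edge weights (here, resistances) along the minimum-length path; in a tree there is only one path so ``minimum-length'' is automatic, and the correspondence is immediate.

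I do not anticipate any real obstacle here — the claim is essentially a corollary. The only thing that requires a sentence of justification is the graph-theoretic fact that trees have the unique-path property, and that is standard. If I wanted to be fully self-contained I would spell out the symmetric-difference-contains-a-cycle argument above; otherwise I would just cite it as a basic fact about trees and move directly to applying the theorem. So the write-up would be: (i) one short paragraph establishing the unique path property of trees, and (ii) one sentence applying part (1) of the prior theorem to conclude $R_{eff} \equiv d$.
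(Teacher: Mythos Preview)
Your proposal is correct and matches the paper's approach exactly: the paper simply remarks that in a tree there is a unique path between any two vertices and then states the claim as an immediate consequence of part (1) of the preceding theorem, without further proof. Your write-up is, if anything, more detailed than the paper's treatment.
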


\textbf{Fact.}
$R_{eff}$ can be used to bound several quantities of interest, in particular the commute time, the cover time, etc.
We won't go into detail on this.

Here is how $R_{eff}$ behaves in some simple examples.
\begin{itemize}
\item
Complete graph $K_n$.  
Of all graphs, this has the minimum $R_{eff}^{tot}$: $R_{eff}^{tot}(K_n)=n-1$.
\item
Path graph $P_n$.  
Among connected graphs, the path graph has the maximum $R_{eff}^{tot}$: 
$R_{eff}^{tot}(P_n) = \frac{1}{6}(n-1)n(n+1)$.
\item
Star $S_n$.
Among trees, this has the minimum $R_{eff}^{tot}$: $R_{eff}^{tot}(S_n) = (n-1)^2$.
\end{itemize}

\subsection{Extensions to infinite graphs}

All of what we have been describing so far is for finite graphs.
Many problems of interest have to do with infinite graphs.
Perhaps the most basic is whether random walks are recurrent.
In addition to being of interest in its own right, considering this question on infinite graphs should provide some intuition for how random walked based spectral methods perform on the finite graphs we have been considering.

\begin{definition}
A random walk is \emph{recurrent} if the walker passes through every point with probability $1$, 
or equivalently if the walker returns to the starting point with probability $1$.
Otherwise, the random walk is \emph{transient}.
\end{definition}

Note that---if we were to be precise---then we would have to define this for a single node, be precise about which of those two notions we are considering, etc.
It turns out that those two notions are equivalent and that a random walk is recurrent for one node iff it is recurrent for any node in the graphs.
We'll not go into these details here.

For irreducible, aperiodic random walks on finite graphs, this discussion is of less interest, since a random walk will eventually touch every node with probability proportional to its degree; but consider three of the simplest infinite graphs:
$\mathbb{Z}$, $\mathbb{Z}^2$, and $\mathbb{Z}^3$.
Informally, as the dimension increases, there are more neighbors for each node and more space to get lost in, and so it should be harder to return to the starting node.
Making this precise, i.e., proving whether a random walk on these graphs is recurrent is a standard problem, one version of which appears on HW2.

The basic idea for this that you need to use is to use something called Rayleigh's Monotonicity Law as well as the procedures of shorting and cutting.
Rayleigh's Monotonicity Law is a version of the result we described before, which says that $R_{eff}$ between two points $a$ and $b$ varies monotonically with individual resistances.
Then, given this, one can use this to do two things to a graph $G$:
\begin{itemize}
\item
\emph{Shorting} vertices $u$ and $v$: this is ``electrical vertex identification.''
\item
\emph{Cutting} edges between $u$ and $v$: this is ``electrical edge deletion.''
\end{itemize}
Both of these procedures involve constructing a new graph $G^{\prime}$ from the original graph $G$ (so that we can analyze $G^{\prime}$ and make claims about $G$).
Here are the things you need to know about shorting and cutting:
\begin{itemize}
\item
Shorting a network can only decrease $R_{eff}$.
\item
Cutting a network can only increase $R_{eff}$.
\end{itemize}
For $\mathbb{Z}^2$, if you short in ``Manhattan circles'' around the origin, then this only decreases $R_{eff}$, and you can show that $R_{eff}=\infty$ on the shorted graph, and thus $R_{eff}=\infty$ on the original $\mathbb{Z}^2$.
For $\mathbb{Z}^3$, if you cut in a rather complex way, then you can show that $R_{eff} < \infty$ on the cut graph,  meaning that $R_{eff} < \infty$ on the original $\mathbb{Z}^{3}$.
This, coupled with the following theorem, establish the result random walks on $\mathbb{Z}^2$ are recurrent, but random walks on $\mathbb {Z}^{3}$ are transient.

\begin{theorem}
A network is recurrent iff $R_{eff} = \infty$.
\end{theorem}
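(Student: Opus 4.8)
The plan is to prove the equivalence ``a network (i.e., an infinite connected graph on which we run the simple or weighted random walk) is recurrent iff $R_{eff}=\infty$'' where $R_{eff}$ is understood as the effective resistance from a fixed basepoint $a$ to infinity. First I would make this notion precise: fix a vertex $a$, exhaust the graph by an increasing sequence of finite subgraphs $G_1 \subset G_2 \subset \cdots$ with $\bigcup_n G_n = G$, short together all vertices of $G$ outside $G_n$ into a single node $\partial_n$, and let $R_n$ be the effective resistance between $a$ and $\partial_n$ in this finite shorted network. By Rayleigh's Monotonicity Law, shorting more can only decrease resistance, so $R_n$ is monotone nondecreasing in $n$ (each $G_{n+1}$ shorts fewer vertices than $G_n$), and we define $R_{eff} = \lim_{n\to\infty} R_n \in (0,\infty]$. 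A standard fact to invoke here (which I would cite as following from Rayleigh monotonicity / the series–parallel reductions discussed above) is that this limit is independent of the choice of exhausting sequence.

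The core of the argument is to connect $R_n$ to an \emph{escape probability}. The key step is the identity: if $p_n$ denotes the probability that a random walk started at $a$ reaches $\partial_n$ (equivalently, leaves $G_n$) before returning to $a$, then
\[
p_n = \frac{1}{c_a \, R_n},
\]
where $c_a = \sum_{j \in N(a)} c_{aj}$ is the total conductance at $a$. I would prove this by the harmonic-function / potential-theory machinery already set up: inject current $Y$ at $a$ and extract it at $\partial_n$, let $V$ be the induced potential with $V(\partial_n)=0$, normalized so $V(a)=1$; then $V$ is harmonic (the averaging property) off $\{a, \partial_n\}$, so by the standard coupling between harmonic functions and random walks, $V(x)$ equals the probability that the walk from $x$ hits $a$ before $\partial_n$. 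Computing the current out of $a$ two ways — as $Y = (V(a)-V(\partial_n))/R_n = 1/R_n$ via Ohm's law, and as $Y = \sum_{j\in N(a)} c_{aj}(V(a)-V(j)) = c_a(1 - \mathbb{E}[\text{hit } a \text{ before } \partial_n \mid \text{first step}]) = c_a \, p_n$ via Kirchhoff's current law — gives the identity. This is the step I expect to be the main obstacle, since it requires care in establishing the harmonic-function/random-walk correspondence on the finite shorted network and in handling the laziness/periodicity bookkeeping so the ``escape'' probability is the right quantity.

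Given the identity, the theorem follows quickly. The probability that the walk from $a$ eventually escapes to infinity (never returns to $a$) is $p_\infty = \lim_{n\to\infty} p_n$, which exists since $\{$escape before hitting $a$ by leaving $G_n\}$ is a decreasing sequence of events with intersection $\{$never return to $a\}$. Then
\[
p_\infty = \lim_{n\to\infty} \frac{1}{c_a R_n} = \frac{1}{c_a R_{eff}}.
\]
Hence $p_\infty = 0$ (the walk returns to $a$ with probability $1$, i.e.\ the network is recurrent) iff $R_{eff} = \infty$, and $p_\infty > 0$ (positive escape probability, transience) iff $R_{eff} < \infty$. To finish I would note the standard fact that recurrence/transience at the basepoint $a$ is equivalent to recurrence/transience at every vertex (via irreducibility), so the dichotomy is a property of the network, not of $a$; this justifies the unqualified statement. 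I would also remark that this is exactly the tool used above for $\mathbb{Z}^2$ versus $\mathbb{Z}^3$: shorting in Manhattan circles gives a lower bound $R_n \geq \sum 1/(\text{circumference})$ that diverges for $\mathbb{Z}^2$, while the Nash-Williams-style cutting argument gives a convergent upper bound for $\mathbb{Z}^3$.
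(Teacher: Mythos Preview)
The paper does not actually prove this theorem; it is stated without proof in the section on extensions to infinite graphs, with Doyle and Snell's ``Random Walks and Electric Networks'' given as the background reading and the applications to $\mathbb{Z}^d$ deferred to homework. Your proposal is precisely the classical Doyle--Snell argument: define $R_{eff}$ via an exhausting sequence of finite networks with the exterior shorted, establish the escape-probability identity $p_n = 1/(c_a R_n)$ by identifying the harmonic voltage with the hitting probability and computing the current out of $a$ two ways, and pass to the limit. This is correct and is exactly the proof the paper is implicitly deferring to.
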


Using these ideas to prove the recurrence claims is left for HW2: getting the result for $\mathbb{Z}$ is straightforward; getting it for $\mathbb{Z}^2$ is more involved but should be possible; and getting it for $\mathbb{Z}^3$ is fairly tricky---look it up on the web, but it is left as extra credit.

\newpage

\section{%
(03/19/2015): 
Diffusions and Random Walks as Robust Eigenvectors}

Reading for today.
\begin{compactitem}
\item
``Implementing regularization implicitly via approximate eigenvector computation,'' in ICML, by Mahoney and Orecchia
\item
``Regularized Laplacian Estimation and Fast Eigenvector Approximation,'' in NIPS, by Perry and Mahoney
\end{compactitem}

Last time, we talked about electrical networks, and we saw that we could reproduce some of the things we have been doing with spectral methods with more physically intuitive techniques.
These methods are of interest since they are typically more robust than using eigenvectors and they often lead to simpler proofs.
Today, we will go into more detail about a similar idea, namely whether we can interpret random walks and diffusions as providing robust or regularized or stable analogues of eigenvectors.
Many of the most interesting recent results in spectral graph methods adopt this approach of using diffusions and random walks rather than eigenvectors.
We will only touch on the surface of this approach.

\subsection{Overview of this approach}

There are several advantages to thinking about diffusions and random walks as providing a robust alternative to eigenvectors.
\begin{itemize}
\item
New insight into spectral graph methods.
\item
Robustness/stability is a good thing in many situations.
\item
Extend global spectral methods to local spectral analogues.
\item
Design new algorithms, e.g., for Laplacian solvers.
\item
Explain why diffusion-based heuristics work as they do in social network, computer vision, machine learning, and many other applications.
\end{itemize}

Before getting into this, step back for a moment, and recall that spectral methods have many nice theoretical and practical properties.
\begin{itemize}
\item
\textbf{Practically.}
Efficient to implement; can exploit very efficient linear algebra routines; perform very well in practice, in many cases better than theory would suggest.
(This last claim means, e.g., that there is an intuition in areas such as computer vision and social network analysis that even if you could solve the best expansion/conductance problem, you wouldn't want to, basically since the approximate solution that spectral methods provide is ``better.'')
\item
\textbf{Theoretically.}
Connections between spectral and combinatorial ideas; and connections between Markov chains and probability theory that provides a geometric viewpoint.
\end{itemize}
Recently, there have been very fast algorithms that combine spectral and combinatorial ideas.
They rely on an optimization framework, e.g., solve max flow problems by relating them to these spectral-based optimization ideas.
These use diffusion-based ideas, which are a relatively new trend in spectral graph theory.

To understand better this new trend, recall that the classical view of spectral methods is based on Cheeger's Inequality and involves computing an eigenvector and performing sweep cuts to reveal sparse cuts/partitions.
The new trend is to replace eigenvectors with vectors obtained by running random walks.
This has been used in: 
\begin{itemize}
\item
fast algorithms for graph partitioning and related problems;
\item
local spectral graph partitioning algorithms; and
\item
analysis of real social and information networks.
\end{itemize}
There are several different types of random walks, e.g., Heat Kernel, PageRank, etc., and different walks are better in different situations.

So, one question is: Why and how do random walks arise naturally from an optimization framework?

One advantage of a random walk is that to compute an eigenvector in a very large graph, a vanilla application of the power method or other related iterative methods (especially black box linear algebra methods) might be too slow, and so instead one might run a random walk on the graph to get a quick approximation.

Let $W=AD^{-1}$ be the natural random walk matrix, and let $L=D-A$ be the Laplacian.
As we have discussed, it is well-known that the second eigenvector of the Laplacian can be computed by iterating $W$.
\begin{itemize}
\item
For ``any'' vector $y_0$ (or ``any'' vector $y_0$ s.t. $y_0D^{-1}\vec{1}=0$ or any random vector $y_0$ s.t. $y_0D^{-1}\vec{1}=0$), we can compute $D^{-1}W^{t}y_0$; and we can take the limit as $t \rightarrow \infty$ to get
\[
v_2(L) = \lim_{t \rightarrow} \frac{ D^{-1}W^{t} y_0 }{ \| W^{t}y_0 \|_{D^{-1}} }  ,
\]
where $v_2(L)$ is the leading nontrivial eigenvector of the Laplacian.
\item
If time is a precious resource, then one alternative is to avoid iterating to convergence, i.e., don't let $t \rightarrow \infty$ (which of course one never does in practice, but by this we mean don't iterate to anywhere near machine precision), but instead do some sort of ``early stopping.''
In that case, one does not obtain an eigenvector, but it is of interest to say something about the vector that is computed.
In many cases, this is useful, either as an approximate eigenvector or as a locally-biased analogue of the leading eigenvector.
This is very common in practice, and we will look at it in theory.
\end{itemize}

Another nice aspect of replacing an eigenvector with a random walk or by truncating the power iteration early is that the vectors that are thereby returned are more robust.
The idea should be familiar to statisticians and machine learners, although in a somewhat different form.
Say that there is a ``ground truth'' graph that we want to understand but that the measurement we make, i.e., the graph that we actually see and that we have available to compute with, is a  noisy version of this ground truth graph.
So, if we want to compute the leading nontrivial eigenvector of the unseen graph, then computing the leading nontrivial eigenvector of the observed graph is in general not a particularly good idea.
The reason is that it can be very sensitive to noise, e.g., mistakes or noise in the edges.
On the other hand, if we perform a random walk and keep the random walk vector, then that is a better estimate of the ground truth eigendirection.
(So, the idea is that eigenvectors are unstable but random walks are not unstable.)  

A different but related question is the following: why are random walks useful in the design of fast algorithms?
(After all, there is no ``ground truth'' model in this case---we are simply running an algorithm on the graph that is given, and we want to prove results about the algorithm applied to that graph.)
The reason is similar, but the motivation is different.
If we want to have a fast iterative algorithm, then we want to work with objects that are stable, basically so that we can track the progress of the algorithm.  
Working with vectors that are the output of random walks will be better in this sense.
Today, we will cover an optimization perspective on this.
(We won't cover the many applications of these ideas to graph partitioning and related algorithmic problems.)

\subsection{Regularization, robustness, and instability of linear optimization}

Again, take a step back.
What is regularization?
The usual way it is described (at least in machine learning and data analysis) is the following.

We have an optimization problem
\begin{eqnarray*}
\min_{x \in \mathcal{S}}  f(x)  ,
\end{eqnarray*}
where $f(x)$ is a (penalty) function, and where $\mathcal{S}$ is some constraint set.
This problem might not be particularly well-posed or well-conditioned, in the sense that the solution might change a lot if the input is changed a little.
In order to get a more well-behaved version of the optimization problem, e.g., one whose solution changes more gradually as problem parameters are varied, one might instead try to solve the problem
\begin{eqnarray*}
\min_{x \in \mathcal{S}}  f(x) + \lambda g(x)  ,
\end{eqnarray*}
where $\lambda \in \mathbb{R}^{+}$ is a parameter, and where $g(x)$ is (regularization) function.
The idea is that $g(x)$ is ``nice'' in some way, e.g., it is convex or smooth, and $\lambda$ governs the relative importance of the two terms, $f(x)$ and $g(x)$.
Depending on the specific situation, the advantage of solving the latter optimization problem is that one obtains a more stable optimum, a unique optimum, or smoothness conditions.
More generally, the benefits of including such a regularization function in ML and statistics is that:
one obtains increased stability; 
one obtains decreased sensitivity to noise; and 
one can avoid overfitting.

Here is an illustration of the instability of eigenvectors.
Say that we have a graph $G$ that is basically an expander except that it is connected to two small poorly-connected components. 
That is, each of the two components is well-connected internally but poorly-connected to the rest of $G$, e.g., connected by a single edge.
One can easily choose the edges/weights in $G$ so that the leading non-trivial eigenvector of $G$ has most of its mass, say a $1-\epsilon$ fraction of its mass, on the first small component.
In addition, one can then easily construct a perturbation, e.g., removing one edge from $G$ to construct a graph $G^{\prime}$ such that $G^{\prime}$ has a $1-\epsilon$ fraction of its mass on the second component.
That is, a small perturbation that consists of removing one edge can completely shift the eigenvector---not only its direction but also where in $\mathbb{R}^{n}$ its mass is supported.

Let's emphasize that last point.
Recalling our discussion of the Davis-Kahan theorem, as well as the distinction between the Rayleigh quotient objective and the actual partition found by performing a sweep cut, we know that if there is a small spectral gap, then eigenvectors can swing by 90 degrees.
Although the example just provided has aspects of that, this example here is even more sensitive: not only does the direction of the vector change in $\mathbb{R}^{n}$, but also the mass along the coordinate axes in $\mathbb{R}^{n}$ where the eigenvector is localized changes dramatically under a very minor perturbation to~$G$.

To understand this phenomenon better, here is the usual quadratic optimization formulation of the leading eigenvector problem.
For simplicity, let's consider a $d$-regular graph, in which case we get the following.
\begin{eqnarray}
\label{eqn:spectral-quadratic-formulation}
\mbox{Quadratic Formulation}: & \frac{1}{d} \min_{x\in\mathbb{R}^{n}} & x^TLx  \\
\nonumber
                & \mbox{s.t.} &  \|x\| = 1  \\
\nonumber
                & &  x \perp \vec{1}  .
\end{eqnarray}
This is an optimization over vectors $x\in\mathbb{R}^{n}$.
Alternatively, we can consider the following optimization problem over SPSD matrices.
\begin{eqnarray}
\label{eqn:spectral-sdp-formulation}
\mbox{SDP Formulation}: & \frac{1}{d} \min_{X\in\mathbb{R}^{n \times n}} & L \bullet X  \\
\nonumber
                & \mbox{s.t.} & I \bullet X = 1  \\
\nonumber
                & & J \bullet X = 0  \\
\nonumber
                & & X \succeq 0  ,
\end{eqnarray}
Recall that $ I \bullet X = \Trace{X} $ and that $J= 11^T$.
Recall also that if a matrix $X$ is rank-one and thus can be written as $X=xx^T$, then $L \bullet X = x^TLx$.

These two optimization problems, Problem~\ref{eqn:spectral-quadratic-formulation} and
Problem~\ref{eqn:spectral-sdp-formulation}, are equivalent, in that if $x^{*}$ is the vector solution to the former and $X^{*}$ is a solution of the latter, then $X^{*}=x^{*}{x^{*}}^{T}$.
In particular, note that although there is no constraint on the SDP formulation that the solution is rank-one, the solution turns out to be rank one.

Observe that this is a linear SDP, in that the objective and all the constraints are linear.
Linear SDPs, just as LPs, can be very unstable.
To see this in the simpler setting of LPs, consider a convex set $S \subset \mathbb{R}^{n}$ and a linear optimization problem:
\begin{equation*}
f(c) = \mbox{arg}\min_{x \in S} c^Tx  .
\end{equation*}
The optimal solution $f(c)$ might be very unstable to perturbations of $c$, in that we can have 
\[
\|c^{\prime}-c\| \le \delta \quad \mbox{and} \quad \|f(c^{\prime})-f(c) \| \gg \delta  .
\]
(With respect to our Linear SDP, think of the vector $x$ as the PSD variable $X$ and think of the vector $c$ as the Laplacian $L$.)
That is, we can change the input $c$ (or $L$) a little bit and the solution changes a lot.
One way to fix this is to introduce a regularization term $g(x)$ that is strongly convex.

So, consider the same convex set $S \subset \mathbb{R}^{n}$ and a regularized linear optimization problem
\begin{equation*}
f(c) = \mbox{arg}\min_{x \in S} c^Tx + \lambda g(x)  ,
\end{equation*}
where $\lambda\in\mathbb{R}^{+}$ is a parameter and where $g(x)$ is $\sigma$-strongly convex.
Since this is just an illustrative example, we won't define precisely the term $\sigma$-strongly convex, but we note that $\sigma$ is related to the derivative of $f(\cdot)$ and so the parameter $\sigma$ determines how strongly convex is the function $g(x)$.
Then, since $\sigma$ is related to the slope of the objective at $f(c)$, and since the slope of the new objective at $f(c) < \delta$, strong convexity ensures that we can find a new optimum $f(c^{\prime})$ at distance $< \frac{\delta}{\sigma}$.
So, we have 
\begin{equation}
\label{eqn:robustness-regularized-linear-optimiz}
\|c^{\prime}-c\| \le \delta \Rightarrow \|f(c^{\prime})-f(c) \| < \delta/\sigma  ,
\end{equation}
i.e., the strong convexity on $g(x)$ makes the problem stable that wasn't stable before.

\subsection{Structural characterization of a regularized SDP}

How does this translate to the eigenvector problem?
Well, recall that the leading eigenvector of the Laplacian solves the SDP, where $X$ appears linearly in the objective and constraints, as given in Problem~\ref{eqn:spectral-sdp-formulation}.
We will show that several different variants of random walks exactly optimize regularized versions of this SDP.
In particular they optimize problems of the form 
\begin{eqnarray}
\label{eqn:spectral-sdp-formulation-regularized}
\mbox{SDP Formulation}: & \frac{1}{d} \min_{X\in\mathbb{R}^{n \times n}} & L \bullet X + \lambda G(X)  \\
\nonumber
                & \mbox{s.t.} & I \bullet X = 1  \\
\nonumber
                & & J \bullet X = 0  \\
\nonumber
                & & X \succeq 0  ,
\end{eqnarray}
where $G(X)$ is an appropriate regularization function that depends on the specific form of the random walk and that (among other things) is strongly convex.

To give an interpretation of what we are doing, consider the eigenvector decomposition of $X$, where
\begin{equation}
X = \sum_{i} p_i v_i v_i^T , \quad \mbox{where} \quad
\left\{ \begin{array}{l l}
                    \forall i \quad p_i \ge 0  \\
                    \sum_i p_i = 1  \\
                    \forall i \quad v_i^T \vec{1} = 0
                 \end{array}
         \right.
\end{equation}
I've actually normalized things so that the eigenvalues sum to $1$.
If we do this, then the eigenvalues of $X$ define a probability distribution.
If we don't regularize in Problem~\ref{eqn:spectral-sdp-formulation-regularized}, i.e., if we set $\lambda=0$, then the optimal solution to Problem~\ref{eqn:spectral-sdp-formulation-regularized} puts all the weight on the second eigenvector (since $X^{*}=x^{*}{x^{*}}^{T}$).
If instead we regularize, then the regularization term ensures that the weight is spread out on all the eigenvectors, i.e., the optimal solution $X^{*}= \sum_i \alpha_i v_i v_i^T$, for some set of coefficients $\{\alpha_i\}_{i=1}^{n}$.
So, the solution is not rank-one, but it is more stable.

\textbf{Fact.}
If we take this optimization framework and put in ``reasonable'' choices for $G(X)$, then we can recover algorithms that are commonly used in the design of fast algorithms and elsewhere.
That the solution is not rank one makes sense from this perspective: if we iterate $t \rightarrow \infty$, then all the other eigendirections are washed out, and we are left with the leading direction; but if we only iterate to a finite $t$, then we still have admixing from the other eigendirections.

To see this in more detail, consider the following three types of random walks.
(Recall that $M=AD^{-1}$ and $W=\frac{1}{2}\left(I+M\right)$.)
\begin{itemize}
\item
\textbf{Heat Kernel.}
$H_t = \exp\left(-tL\right) = \sum_{k=0}^{\infty} \frac{\left(-t\right)^{k}}{k!} L^k = \sum_{i=1}^{n} e^{-\lambda_it} P_i$, where $P_i$ is a projection matrix onto that eigendirection.
\item
\textbf{PageRank.}
$R_{\gamma} = \gamma \left( I - \left( 1-\gamma \right) M \right)^{-1}$.
This follows since the PageRank vector is the solution to $$ \pi(\gamma,s) = \gamma s + (1-\gamma) M \pi(\gamma,s)  ,$$ which can be written as $$\pi(\gamma,s) = \gamma \sum_{t=0}^{\infty} \left(1-\gamma \right)^{t} M^t s = R_{\gamma} s  .$$
\item
\textbf{Truncated Lazy Random Walk.}
$W_{\alpha} = \alpha I + (1-\alpha) M$.
\end{itemize}
These are formal expressions describing the action of each of those three types of random walks, in the sense that the specified matrix maps the input to the output: to obtain the output vector, compute the matrix and multiply it by the input vector.
Clearly, of course, these random walks would not be implemented by computing these matrices explicitly; instead, one would iteratively apply a one-step version of them to the input vector.

Here are the three regularizers we will consider.
\begin{itemize}
\item
\textbf{von Neumann entropy.}
Here, $G_H = \Trace{ X \log X } = \sum_i p_i \log p_i$.
\item
\textbf{Log-determinant.}
Here, $G_D = -\log\det \left( X \right)$.
\item
\textbf{Matrix $p$-norm, $p >0$.}
Here, $G_p = \frac{1}{p} \| X \|_p^p = \frac{1}{p} \Trace{ X^p } = \frac{1}{p} \sum_i p_i^p$.
\end{itemize}
And here are the connections that we want to establish.
\begin{itemize}
\item
$G=G_H \Rightarrow^{entropy} X^{*} \sim H_t$, with $t=\lambda$.
\item
$G=G_D \Rightarrow^{logdet} X^{*} \sim R_{\gamma}$, with $\gamma \sim \lambda$.
\item
$G=G_p \Rightarrow^{p-norm} X^{*} \sim W_{\alpha}^{t}$, with $t \sim \lambda$.
\end{itemize}

Here is the basic structural theorem that will allow us to make precise this connection between random walks and regularized SDPs.
Note that its proof is a quite straightforward application of duality ideas.
\begin{theorem}
\label{thm:reg-rand-walk-struct}
Recall the regularized SDP of Problem~\ref{eqn:spectral-sdp-formulation-regularized}, and let $\lambda = 1/\eta$.
If $G$ is a connected, weighted, undirected graph, then let $L$ be the normalized Laplacian.
The the following are sufficient conditions for $X^{*}$ to be the solution of the regularized SDP.
\begin{enumerate}
\item
$X^{*} = \left( \nabla G \right)^{-1} \left( \eta \left( \lambda^{*} I - L \right) \right)$, for some $\lambda^{*}\in\mathbb{R}$.
\item
$I \bullet X^{*} = 1$.
\item
$X^{*} \succeq 0$.
\end{enumerate}
\end{theorem}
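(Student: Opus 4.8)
The plan is to prove Theorem~\ref{thm:reg-rand-walk-struct} by a standard Lagrangian-duality / KKT argument, exploiting the fact that the objective $L \bullet X + \lambda G(X)$ is strictly convex on the positive-definite cone (since $G$ is strongly convex there and the linear term does not affect convexity), so that first-order stationarity conditions are both necessary and sufficient for optimality. First I would set up the Lagrangian for the regularized SDP of Problem~\ref{eqn:spectral-sdp-formulation-regularized}, introducing a scalar multiplier $\lambda^{*}$ for the trace constraint $I \bullet X = 1$, a scalar multiplier $\mu$ for the constraint $J \bullet X = 0$, and a matrix multiplier $S \succeq 0$ for the conic constraint $X \succeq 0$. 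Writing the Lagrangian as
\[
\mathcal{L}(X,\lambda^{*},\mu,S) = L \bullet X + \lambda G(X) - \lambda^{*}(I\bullet X - 1) - \mu (J \bullet X) - S \bullet X ,
\]
I would take the matrix gradient with respect to $X$ and set it to zero, yielding
\[
L + \lambda \nabla G(X) - \lambda^{*} I - \mu J - S = 0 .
\]

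Next I would argue that we may restrict attention to feasible $X$ (those satisfying $J\bullet X = 0$, i.e.\ $X$ supported on $\vec{1}^{\perp}$), and moreover that complementary slackness $S\bullet X = 0$ together with $S,X \succeq 0$ forces $S$ to vanish on the range of $X$; combined with the hypothesis $X^{*} \succeq 0$ (item 3) and the connectedness of $G$ — which guarantees via Theorem~\ref{thm:perrof-frob-lapl} that the constant vector is the unique null direction one must remove — one shows the $\mu J$ and $S$ terms can be absorbed, leaving $\lambda \nabla G(X^{*}) = \lambda^{*} I - L$ on the relevant subspace. Dividing by $\lambda = 1/\eta$ gives $\nabla G(X^{*}) = \eta(\lambda^{*} I - L)$, and inverting the gradient (which is legitimate since $G$ is strongly convex, hence $\nabla G$ is a bijection on the appropriate domain) yields precisely condition (1): $X^{*} = (\nabla G)^{-1}(\eta(\lambda^{*} I - L))$. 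Condition (2), $I \bullet X^{*} = 1$, is then exactly the statement that $\lambda^{*}$ is chosen so that primal feasibility (the trace constraint) holds, which pins down $\lambda^{*}$ uniquely. Since the problem is strictly convex with a compact feasible region (the spectraplex intersected with $\vec{1}^{\perp}$), Slater's condition holds and these KKT conditions are sufficient, not merely necessary — so any $X^{*}$ meeting (1)–(3) is the optimum.

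The main obstacle I anticipate is the careful handling of the subspace constraint $J \bullet X = 0$ and the conic multiplier $S$: one must verify that the gradient equation $\nabla G(X^{*}) = \eta(\lambda^{*}I - L)$ can be made to hold as an equality of full $n\times n$ matrices (not merely on $\vec{1}^{\perp}$), which requires checking that $\nabla G$ of a matrix annihilating $\vec{1}$ behaves correctly on the $\vec{1}$ direction — e.g.\ for $G_H = \Trace{X\log X}$ this needs a limiting/continuity argument since $X$ is singular, and for $G_D = -\log\det X$ the domain is strictly the positive-definite cone so one works with the restriction to $\vec{1}^{\perp}$ throughout. I would resolve this by consistently viewing all matrices as operators on the $(n-1)$-dimensional space $\vec{1}^{\perp}$, where $L$ restricted there is positive definite and $\nabla G$ restricted there is a genuine bijection; the three specific regularizers $G_H$, $G_D$, $G_p$ then give, upon inverting, the matrix exponential (heat kernel), the matrix inverse resolvent (PageRank), and the matrix power (lazy walk) respectively, which is how one later reads off the explicit correspondences $X^{*}\sim H_t$, $X^{*}\sim R_\gamma$, $X^{*}\sim W_\alpha^t$. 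A secondary, purely bookkeeping point is confirming that the normalization $I\bullet X^{*}=1$ is compatible with a solution $\lambda^{*}$ existing and being real, which follows from a monotonicity/intermediate-value argument on $\lambda^{*} \mapsto \Trace{(\nabla G)^{-1}(\eta(\lambda^{*}I - L))}$.
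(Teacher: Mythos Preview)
Your approach is essentially the same as the paper's: both set up the Lagrangian of the regularized SDP, take the matrix gradient to obtain $\nabla G(X^{*}) = \eta(\lambda^{*}I - L)$, invert $\nabla G$, and invoke convex duality to certify optimality. The paper is slightly terser---it drops the $J\bullet X$ multiplier entirely, sets the conic multiplier $U=0$ directly, and closes with a one-line weak-duality argument ($h(\lambda^{*},0)$ equals the primal value at $X^{*}$, hence $X^{*}$ is optimal)---whereas you frame the same step as KKT sufficiency via strict convexity and Slater; your more careful handling of the $\vec{1}^{\perp}$ subspace is something the paper simply glosses over.
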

\begin{proof}
Write the Lagrangian $\mathcal{L}$ of the above SDP as
\[
\mathcal{L} = L \bullet X + \frac{1}{\eta}G(X) - \lambda \left( I \bullet X - 1 \right) - U \bullet X  ,
\]
where $\lambda\in\mathbb{R}$ and where $U \succeq 0$.
Then, the dual objective function is 
\[
h\left( \lambda,U \right) = \min_{X \succeq 0} \mathcal{L}(X,\lambda,U)  .
\]
Since $G(\cdot)$ is strictly convex, differentiable, and rotationally invariant, the gradient of $G$ over the positive semi-definite cone is invertible, and the RHS is minimized when
\[
X = \left( \nabla G \right)^{-1} \left( \eta \left(- L +  \lambda^{*} I + U \right) \right)  ,
\]
where $\lambda^{*}$ is chosen s.t. $I \bullet X^{*} = 1$.
Hence,
\begin{eqnarray*}
h \left( \lambda^{*},0 \right) 
   &=& L \bullet X^{*} + \frac{1}{\eta} G(X^{*}) - \lambda^{*} \left( I \cdot X^{*} - 1 \right)  \\
   &=& L \bullet X^{*} + \frac{1}{\eta} G(X^{*}) .
\end{eqnarray*}
By Weak Duality, this implies that $X^{*}$ is the optimal solution to the regularized SDP.
\end{proof}

\subsection{Deriving different random walks from Theorem~\ref{thm:reg-rand-walk-struct}}

To derive the Heat Kernel random walk from Theorem~\ref{thm:reg-rand-walk-struct}, let's do the following.
Since 
\[
G_H(X) = \Trace{X \log(X) } - \Trace{X}  , 
\]
it follows that $\left(\nabla G\right)(X)=\log(x)$ and thus that $\left( \nabla G \right)^{-1}(Y) = \exp(Y)$, from which it follows that 
\begin{eqnarray*}
X^{*} &=& \left( \nabla G \right)^{-1} \left( \eta \left ( \lambda I - L \right) \right) \\
         &=& \exp\left( \eta \left( \lambda I - L \right) \right)  \quad\mbox{for an appropriate choice of $\eta,\lambda$}\\
         &=& \exp \left( -\eta L \right) \exp \left( \eta \lambda \right) \\
         &=& \frac{ H_{\eta} }{ \Trace{ H_{\eta} } }  ,
\end{eqnarray*}
where the last line follows if we set $\lambda = \frac{-1}{\eta}\log\left( \Trace{ \exp \left( -\eta L \right) } \right)$

To derive the PageRank random walk from Theorem~\ref{thm:reg-rand-walk-struct}, we follow a similar derivation.
Since 
\[
G_D(X) = -\log\det\left(X\right)  , 
\]
it follows that $\left(\nabla G \right)(X) = -X^{-1}$ and thus that $\left( \nabla G \right)^{-1}(Y) = -Y^{-1}$, from which it follows that
\begin{eqnarray*}
X^{*} &=& \left( \nabla G \right)^{-1} \left( \eta \left ( \lambda I - L \right) \right) \\
         &=& -\left( \eta \left( \lambda I-L \right) \right)^{-1}  \quad\mbox{for an appropriate choice of $\eta,\lambda$}\\
         &=& \frac{ D^{-1/2} R_{\gamma} D^{-1/2} }{ \Trace{ R_{\gamma} } }  ,
\end{eqnarray*}
for $\eta,\lambda$ chosen appropriately.

Deriving the truncated iterated random walk  or other forms of diffusions is similar.

We will go into more details on the connection with PageRank next time, but for now we just state that the solution can be written in the form
\[
x^{*} = c \left( L_G - \alpha L_{K_n} \right)^{+} D s   ,
\]
for a constant $c$ and a parameter $\gamma$.  
That is, it is of the form of the solution to a linear equation, i.e., $L_G^{+}s$, except that there is a term that moderates the effect of the graph by adding the Laplacian of the complete graph. 
This is essentially a regularization term, although it is not usually described as such.
See the Gleich article for more details on this.

\subsection{Interpreting Heat Kernel random walks in terms of stability}

Here, we will relate the two previous results for the heat kernel.
Again, for simplicity, assume that $G$ is $d$-regular.
Recall that the Heat Kernel random walk is a continuous time Markov chain, modeling the diffusion of heat along the edges of $G$.
Transitions take place in continuous time $t$ with an exponential distribution:
\[
\frac{\partial \rho(t) }{\partial t} = -L \frac{\rho(t)}{d}
\Rightarrow 
\rho(t) = \exp\left( -\frac{t}{d}L \right) \rho(0)  .
\]
That is, this describes the way that the probability distribution changes from one step to the next and how it is related to $L$.
In particular, the Heat Kernel can be interpreted as a Poisson distribution over the number of steps of the natural random walk $W=AD^{-1}$, where we get the following:
\[
e^{ - \frac{t}{d} L } = e^{-t} \sum_{k=1}^{\infty} \frac{t^k}{k!} W^k   .
\]
What this means is: pick a number of steps from the Poisson distribution; and then perform that number of steps of the natural random walk.

So, if we have two graphs $G$ and $G^{\prime}$ and they are close, say in an $\ell_{\infty}$ norm sense, meaning that the edges only change a little, then we can show the following.
(Here, we will normalize the two graphs so that their respective eigenvalues sum to $1$.)
The statement analogous to Statement~\ref{eqn:robustness-regularized-linear-optimiz} is the following.
\[
\| G - G^{\prime} \|_{\infty}
\Rightarrow 
\| \frac{ H_{G}^{t} }{ I \bullet H_{G}^{t} } - \frac{ H_{G^{\prime}}^{t} }{ I \bullet H_{G^{\prime}}^{t} }   \|_{1}
\le 
t \delta  .
\]
Here, $\| \cdot \|_{1}$ is some other norm (the $\ell_1$ norm over the eigenvalues) that we won't describe in detail.
Observe that the bound on the RHS depends on how close the graphs are ($\delta$) as well as how long the Heat Kernel random walk runs ($t$).
If the graphs are far apart ($\delta$ is large), then the bound is weak.
If the random walk is run for a long time ($t \rightarrow \infty$), then the bound is also very weak.
But, if the walk is nor run too long, then we get a robustness result.
And, this follows from the strong convexity of the regularization term that the heat kernel is implicitly optimizing exactly.

\subsection{A statistical interpretation of this implicit regularization result}

Above, we provided three different senses in which early-stopped random walks can be interpreted as providing a robust or regularized notion of the leading eigenvectors of the Laplacian: e.g., in the sense that, in addition to approximating the Rayleigh quotient, they also exactly optimize a regularized version of the Rayleigh quotient.
Some people interpret regularization in terms of statistical priors, and so let's consider this next.

In particular, let's now give a statistical interpretation to this implicit regularization result.
By a ``statistical interpretation,'' I mean a derivation analogous to the manner in which $\ell_2$ or $\ell_1$ regularized $\ell_2$ regression can be interpreted in terms of a Gaussian or Laplace prior on the coefficients of the regression problem.
This basically provides a Bayesian interpretation of regularized linear regression.
The derivation below will show that the solutions to the Problem~\ref{eqn:spectral-sdp-formulation-regularized} that random walkers implicitly optimize can be interpreted as a regularized estimate of the pseudoinverse of the Laplacian, and so in some sense it provides a Bayesian interpretation of the implicit regularization provided by random~walks.

To start, let's describe the analogous results for vanilla linear regression.  
For some (statistics) students, this is well-known; but for other (non-statistics) students, it likely is not.  The basic idea should be clear; and we cover it here to establish notation and nomenclature.
Let's assume that we see $n$ predictor-response pairs in $\mathbb{R}^{p}\times\mathbb{R}$, call them $\{\left( x_i,y_i \right) \}_{i=1}^{n}$, and the goal is to find a parameter vector $\beta\in\mathbb{R}^{p}$ such that $\beta^{T}x_i \approx y_i$.
A common thing to do is to choose $\beta$ by minimizing the RSS (residual sum of squares), i.e., 
choosing 
\[
F(\beta) = RSS(\beta) = \sum_{i=1}^{n} \| y_i - \beta^Tx_i \|_2^2   .
\]
Alternatively, we could optimize a regularized version of this objective.
In particular, we have
\begin{eqnarray*}
\mbox{Ridge regression:} & & \min_{\beta} F(\beta) + \lambda \| \beta \|_2^2   \\
\mbox{Lasso regression:} & & \min_{\beta} F(\beta) + \lambda \| \beta \|_1    .
\end{eqnarray*}
To derive these two versions of regularized linear regression, let's model $y_i$ as independent random variables with distribution dependent on $\beta$ as follows:
\begin{equation}
\label{eqn:ls-eq0}
 y_i \sim N \left( \beta^T x, \sigma^2 \right)  ,
\end{equation}
i.e., each $y_i$ is a Gaussian random variable with mean $\beta^Tx_i$ and known variance $\sigma^2$.
This induces a conditional density for $y$ as follows:
\begin{equation}
\label{eqn:ls-eq1}
p\left( y | \beta \right) \sim \exp\{ \frac{-1}{2\sigma^2}F(\beta) \}  , 
\end{equation}
where the constant of proportionality depends only on $y$ and $\sigma$.
From this, we can derive the vanilla least-squares estimator.
But, we can also assume that $\beta$ is a random variable with distribution $p(\beta)$, which is known as a prior distribution, as follows:
\begin{equation}
\label{eqn:ls-eq2}
p(\beta) \sim \exp \{ -U(\beta) \}  ,
\end{equation}
where we adopt that functional form without loss of generality.
Since these two random variables are dependent, upon observing $y$, we have information on $\beta$, and this can be encoded in a posterior density, $p\left( \beta | y \right)$, which can be computed from Bayes' rule as follows:
\begin{eqnarray}
\nonumber
p\left( \beta | y \right) 
   &\sim& p\left(y | \beta \right) p(\beta) \\
   &\sim& \exp \{ \frac{-1}{2\sigma^2} F(\beta) - U(\beta) \}  .
\label{eqn:ls-eq3}
\end{eqnarray}
We can form the MAP, the maximum a posteriori, estimate of $\beta$ by solving 
\[
\max_{\beta} p\left( \beta | y \right) \mbox{ \textbf{iff} } \min_{\beta} - \log p\left( \beta | y \right)  .
\]
From this we can derive ridge regression and Lasso regression:
\begin{eqnarray*}
U(\beta) &=& \frac{\lambda}{2\sigma^2} \|\beta\|_2^2 \quad \Rightarrow \mbox{Ridge regression} \\
U(\beta) &=& \frac{\lambda}{2\sigma^2} \|\beta\|_1     \quad \Rightarrow \mbox{Lasso regression}
\end{eqnarray*}

To derive the analogous result for regularized eigenvectors, we will follow the analogous setup.
What we will do is the following.
Given a graph $G$, i.e., a ``sample'' Laplacian $L$, assume it is a random object drawn from a ``population'' Laplacian $\mathcal{L}$.
\begin{itemize}
\item
This induces a conditional density for $L$, call it $p\left( L | \mathcal{L} \right)$.
\item
Then, we can assume prior information about the population Laplacian $\mathcal{L}$ in the form of $p\left(\mathcal{L}\right)$.
\item
Then, given the observed $L$, we can estimate the population Laplacian by maximizing its posterior density $p\left(\mathcal{L} | L\right)$.
\end{itemize}
While this setup is analogous to the derivation for least-squares, there are also differences.
In particular, one important difference between the two approaches is that here there is one data point, i.e., the graph/Laplacian is a single data point, and so we need to invent a population from which it was drawn.
It's like treating the entire matrix $X$ and vector $y$ in the least-squares problem as a single data point, rather than $n$ data points, each of which was drawn from the same distribution.
(That's not a minor technicality: in many situations, including the algorithmic approach we adopted before, it is more natural to think of a graph as a single data point, rather than as a collection of data points, and a lot of statistical theory breaks down when we observe $N=1$ data point.)

In more detail, recall that a Laplacian is an SPSD matrix with a very particular structure, and let's construct/hypothesize a population from which it was drawn.
To do so, let's assume that nodes $n$ in the population and in the sample have the same degrees.
If $d=\left(d_1,\ldots, d_n \right)$ is the degree vector, and $D = \mbox{deg}\left( d_1,\ldots,d_n \right)$ is the diagonal degree matrix, then we can define the set 
\[
\chi = \{ X : X \succeq 0 , X D^{1/2}\vec{1} = 0 , \mbox{rank}(X) = n-1  \}  .
\]
So, the population Laplacian and sample Laplacian are both members of $\chi$.
To model $L$, let's use a scaled Wishart matrix with expectation $\mathcal{L}$.
(This distribution plays the role of the Gaussian distribution in the least-squares derivation.  Note that this is a plausible thing to assume, but other assumptions might be possible too.)
Let $m \ge n-1$ be a scale parameter (analogous to the variance), and suppose that $L$ is distributed over $\chi$ as $\frac{1}{m}\mbox{Wishart}\left(\mathcal{L},m\right)$.
Then $\mathbb{E}\left[ L | \mathcal{L} \right] = \mathcal{L}$ and $L$ has the conditional density
\begin{equation}
\label{eqn:lap-cond-likelihood}
p\left( L | \mathcal{L} \right) \sim \frac{1}{\det\left(\mathcal{L}\right)^{m/2}} \exp \{ \frac{-m}{2} \Trace{ L \mathcal{L}^{+} } \}  .
\end{equation}
This is analogous to Eqn~(\ref{eqn:ls-eq1}) above.
Next, we can say that $\mathcal{L}$ is a random object with prior density $p\left(\mathcal{L}\right)$, which without loss of generality we can take to be of the following form:
\[
p\left(\mathcal{L}\right) \sim \exp \{ -U\left(\mathcal{L}\right) \}  ,
\]
where $U$ is supported on a subset $\bar{\chi} \subseteq \chi$.
This is analogous to Eqn~(\ref{eqn:ls-eq2}) above.
Then, observing $L$, the posterior distribution for $\mathcal{L}$ is the following:
\begin{eqnarray*}
p\left(\mathcal{L}|L \right)
   &\sim& p \left( L | \mathcal{L} \right) p \left( \mathcal{L} \right) \\
   &\sim& \exp \{ \frac{m}{2}\Trace{ L\mathcal{L}^{+} } + \frac{m}{2} \log \det \left( \mathcal{L}^{+} \right) - U\left(\mathcal{L} \right) \}   ,
\end{eqnarray*}
with support determined by $\bar{\chi}$.
This is analogous to Eqn~(\ref{eqn:ls-eq3}) above.

If we denote by $\hat{\mathcal{L}}$ 
the MAP estimate of $\mathcal{L}$, then it follows that $\hat{\mathcal{L}}^{+}$ is the solution of the following optimization problem.
\begin{eqnarray}
\label{eqn:sdp-stat}
 & \min_{X} & \Trace{L \bullet X} + \frac{2}{m} U\left(X^{+}\right) - \log\det\left(X\right) \\
\nonumber
                & \mbox{s.t.} &  X \in \bar{\chi} \subseteq \chi .
\end{eqnarray}
If $\bar{\chi} = \{ X : \Trace{ X } =1 \} \cap \chi$, then
Problem~\ref{eqn:sdp-stat} is the same as 
Problem~\ref{eqn:spectral-sdp-formulation-regularized}, except for the factor of $\log\det\left(x\right)$.

This is almost the regularized SDP we had above.

Next, we present a prior that will be related to the PageRank procedure.
This will make the connection with the regularized SDP more precise.
In particular, we present a prior for the population Laplacian that permits us to exploit the above estimation framework to show that the MAP estimate is related to a PageRank computation.

The criteria for the prior are so-called neutrality and invariance conditions.
It is to be supported on $\chi$; and in particular, for any $X \in \chi$, it will have rank $n-1$ and satisfy $XD^{1/2}1=0$.
The prior will depend only on the eigenvalues of the Laplacian (or equivalently of the inverse Laplacian).
Let $\mathcal{L}^{+} = \tau O \Lambda O$ be the spectral decomposition of the inverse Laplacian, where $\tau$ is a scale parameter.
We will require that the distribution for $\lambda = \left( \lambda_1,\ldots, \lambda_n \right)$ be exchangeable (i.e., invariant under permutations) and neutral (i.e., $\lambda(v)$ is independent of $\frac{\lambda(u)}{1-\lambda(v)}$, for $u \ne v$, for all $v$).
The only non-degenerate possibility is that $\lambda$ is distributed as a Dirichlet distribution  as follows:
\begin{equation}
\label{eqn:dirichlet-prior}
p\left(\mathcal{L}\right) \sim p(\tau) \prod_{v=1}^{n-1} \lambda(v)^{\alpha-1} ,
\end{equation}
where $\alpha$ is a so-called shape parameter.
Then, we have the following lemma.

\begin{lemma}
\label{lem:reg-sdp-stat-prior-pagerank}
Given the conditional likelihood for $L$ given $\mathcal{L}$ in Eqn.~(\ref{eqn:lap-cond-likelihood}) and the prior density for $\mathcal{L}$ given in Eqn.~(\ref{eqn:dirichlet-prior});
if $\hat{\mathcal{L}}$ is the MAP estimate of $\mathcal{L}$, then 
\[
\frac{ \hat{\mathcal{L}}^{+} }{ \Trace{ \hat{\mathcal{L}}^{+} } }
\] 
solves the regularized SDP, with $G(X) = - -\log\det(X)$ and with the value of $\eta$ given in the proof~below.
\end{lemma}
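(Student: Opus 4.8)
The plan is to carry out the Bayesian computation set up in the paragraphs preceding the lemma: form the posterior $p(\mathcal{L}\mid L)\propto p(L\mid\mathcal{L})\,p(\mathcal{L})$, take negative logarithms, reduce the resulting matrix optimization to a scalar optimization over the eigenvalues of $\mathcal{L}^{+}$, and then verify that the first-order conditions of that reduced problem match the sufficient conditions of Theorem~\ref{thm:reg-rand-walk-struct} for the regularized SDP in the case $G(X)=-\log\det(X)$ (the intended regularizer $G_{D}$); equivalently, one may start from Problem~\ref{eqn:sdp-stat} and specialize $U$ to the Dirichlet prior, but going straight from the posterior is cleaner. Up to constants, $-\log p(\mathcal{L}\mid L)=\frac{m}{2}\Trace{L\mathcal{L}^{+}}+\frac{m}{2}\log\det(\mathcal{L})+U(\mathcal{L})$ with $U(\mathcal{L})=-(\alpha-1)\sum_{v}\log\lambda(v)-\log p(\tau)$, where $\mathcal{L}^{+}=\tau O\Lambda O^{T}$, the diagonal $\Lambda=\mathrm{diag}(\lambda(\cdot))$ lies on the probability simplex, and (on the rank-$(n-1)$ range) $\log\det(\mathcal{L})=-(n-1)\log\tau-\sum_{v}\log\lambda(v)$.

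The first real step is to optimize over the eigenbasis $O$. Since the prior term and the $\log\det$ term depend only on the eigenvalues, only $\Trace{L\mathcal{L}^{+}}$ sees $O$, and by the von Neumann / Ky Fan trace inequality this is minimized, for fixed eigenvalues, exactly when $\mathcal{L}^{+}$ is diagonalized in the eigenbasis of $L$; since $L$ and any $X\in\chi$ both annihilate the trivial direction $D^{1/2}\vec{1}$, the inequality is applied on its orthogonal complement and forces $\hat{\mathcal{L}}$ and $L$ to be simultaneously diagonalizable, with the subsequent eigenvalue optimization automatically selecting the minimizing (reverse-ordered) pairing. Indexing the $n-1$ nontrivial eigenvalues of $L$ as $0<\ell_{2}\le\cdots\le\ell_{n}$ and writing $\mu_{i}=\tau\lambda(i)$ for those of $\mathcal{L}^{+}$, the problem collapses to
\[
\min_{\tau>0,\ \lambda\in\Delta}\ \frac{m\tau}{2}\sum_{i}\ell_{i}\lambda(i)-\Big(\tfrac{m}{2}+\alpha-1\Big)\sum_{i}\log\lambda(i)-\tfrac{m}{2}(n-1)\log\tau-\log p(\tau).
\]
Introducing a multiplier $\nu$ for $\sum_{i}\lambda(i)=1$, the stationarity condition is $\tfrac{m\tau}{2}\ell_{i}-(\tfrac{m}{2}+\alpha-1)/\lambda(i)-\nu=0$, i.e. $\mu_{i}\propto 1/(\ell_{i}-\nu')$ for a scalar $\nu'$ pinned down by the simplex constraint, while the $\tau$-optimization against $p(\tau)$ fixes the overall scale $\hat\tau$.

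It remains to match this against Theorem~\ref{thm:reg-rand-walk-struct}. With $G=-\log\det$ one has $(\nabla G)^{-1}(Y)=-Y^{-1}$, so a candidate SDP optimum is $X^{*}=\big(\eta(L-\lambda^{*}I)\big)^{-1}$ on the complement of $D^{1/2}\vec{1}$, with eigenvalues $1/\big(\eta(\ell_{i}-\lambda^{*})\big)$ --- precisely the shape of $\hat{\mathcal{L}}^{+}/\Trace{\hat{\mathcal{L}}^{+}}=\big(\mu_{i}/\sum_{j}\mu_{j}\big)_{i}$ found above (the proportionality constant in $\mu_{i}$ cancels under the normalization, which is why $\tau$ and $p(\tau)$ drop out). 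Matching forces $\lambda^{*}=\nu'$ and $\eta=\sum_{j}1/(\ell_{j}-\nu')=\tfrac{m\hat\tau}{m+2(\alpha-1)}$; the remaining conditions $I\bullet X^{*}=1$ and $X^{*}\succeq0$ hold by construction (normalization, and $\mu_{i}>0$ since the constraint forces $\nu'<\ell_{2}$), so the sufficiency direction of Theorem~\ref{thm:reg-rand-walk-struct} completes the proof.

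The main obstacle is the bookkeeping around the degeneracy: $L$ and $\mathcal{L}^{+}$ are both rank $n-1$, so the trace inequality, the pseudo-$\log\det$, and the trace normalization all have to be executed on the orthogonal complement of $D^{1/2}\vec{1}$, and one must also check that $\tfrac{m}{2}+\alpha-1>0$ (true since $m\ge n-1$) so that the reduced objective is strictly convex and the stationary point above is the genuine MAP. The residual work --- confirming that the nuisance scale $\tau$ together with its prior genuinely washes out under normalization, reappearing only inside the constant $\eta$, and pinning down $\nu'$ and hence $\eta$ explicitly --- is then routine once the reduced problem is in hand.
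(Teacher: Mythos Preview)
Your argument is correct, but it is more elaborate than the paper's. The paper never diagonalizes: it simply reparametrizes $\mathcal{L}^{+}=\tau\,\Theta$ with $\Trace{\Theta}=1$, uses $\det(\mathcal{L}^{+})=\tau^{\,n-1}\det(\Theta)$ to rewrite the log-posterior as
\[
-\tfrac{m\tau}{2}\,\Trace{L\Theta}+\tfrac{m+2(\alpha-1)}{2}\,\log\det(\Theta)+g(\tau),
\]
and observes that, at the MAP value $\hat\tau$, the $\Theta$-part is exactly the objective of the regularized SDP of Problem~\ref{eqn:spectral-sdp-formulation-regularized} with $G=-\log\det$ and $\eta=\tfrac{m\hat\tau}{m+2(\alpha-1)}$. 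No von Neumann/Ky~Fan step, no eigenvalue KKT, and no appeal to Theorem~\ref{thm:reg-rand-walk-struct} is needed, because the identification is made at the level of the matrix optimization problem itself rather than its optimizer. Your route---diagonalize first, solve the simplex problem, then match the resulting $1/(\ell_i-\nu')$ shape against the structural form $X^{*}=(\eta(L-\lambda^{*}I))^{-1}$---reaches the same $\eta$ and has the minor bonus of making $\lambda^{*}$ and the PSD/trace checks explicit, but the paper's change-of-variables argument is shorter and avoids the eigenbasis bookkeeping entirely.
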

\begin{proof}
For $\mathcal{L}$ in the support set of the posterior, we can define $\tau = \Trace{ \mathcal{L}^{+} }$ and $\Theta = \frac{1}{\tau} \mathcal{L}^{+}$, so that 
$\mbox{rank}(\Theta) = n-1$ and $\Trace{\Theta} = 1$.
Then, $p\left(\mathcal{L}\right) \sim \exp \{ -U\left( \mathcal{L} \right) \}$, where 
\[
U\left(\mathcal{L}\right) 
   = - \log \{ p(\tau) \det\left(\Theta\right)^{\alpha-1} \} 
   = -(\alpha-1) \log \det \left(\Theta\right) - \log \left( p(\tau) \right)  .
\]
Thus, 
\begin{eqnarray*}
p\left( \mathcal{L} | L \right) 
   &\sim& \exp \{ -\frac{m}{2}\Trace{ L\mathcal{L}^{+} } + \frac{m}{2} \log\det\left(\mathcal{L}^{+}\right) - U\left(\mathcal{L}\right) \} \\ 
   &\sim& \exp \{ \frac{-m\tau}{2}\Trace{L\Theta} + \frac{m+2(\alpha-1)}{2}\log\det\left(\Theta\right) + g(\tau) \}  ,
\end{eqnarray*}
where the second line follows since $\det\left(\mathcal{L}^{+}\right) = \tau^{n-1}\det\left(\Theta\right)$, and where $g(\tau) = \frac{m(n-1)}{2}\log(\tau) + \log p(\tau)$.
If $\hat{\mathcal{L}}$ maximizes the posterior likelihood, then define $\hat{\tau} = \Trace{\hat{\mathcal{L}^{+}}}$ and $\hat{\Theta} = \frac{1}{\tau}\hat{\mathcal{L}}^{+}$, and so $\hat{\Theta}$ must minimize
$\Trace{L\hat{\Theta}} - \frac{1}{\eta} \log\mbox{det}\left( \hat{\Theta} \right)$, where
\[
\eta = \frac{ m \hat{\tau} }{ m+2(\alpha-1) }
\]
This $\hat{\Theta}$ solves the regularized SDP with $G(x) = -\log\det\left(X\right)$.
\end{proof}

\textbf{Remark.}
Lemma~\ref{lem:reg-sdp-stat-prior-pagerank} provides a statistical interpretation of the regularized problem that is optimized by an approximate PageRank diffusion algorithm, in the sense that it gives a general statistical estimation procedure that leads to the Rayleigh quotient as well as statistical prior related to PageRank.
One can write down priors for the Heat Kernel and other random walks; see the two references if you are interested.
Note, however, that the prior for PageRank makes things particularly simple. 
The reason is that the extra term in Problem~\ref{eqn:sdp-stat}, i.e., the $\log\det\left(X\right)$ term, is of the same form as the regularization function that the approximate PageRank computation implicitly regularizes with respect to.
Thus, we can choose parameters to make this term cancel. 
Otherwise, there are extra terms floating around, and the statistical interpretation is more complex.

\newpage

\section{%
(03/31/2015): 
Local Spectral Methods (1 of 4):
Introduction and Overview}

Reading for today.
\begin{compactitem}
\item
``Spectral Ranking'', in arXiv, by Vigna
\item
``PageRank beyond the Web,'' in SIAM Review, by Gleich
\end{compactitem}

Last time, we showed that certain random walks and diffusion-based methods, \emph{when not run to the asymptotic limit}, exactly solve regularized versions of the Rayleigh quotient objective (in addition to approximating the Rayleigh quotient, in a manner that depends on the specific random walk and how the spectrum of $L$ decays).
There are two ways to think about these results.
\begin{itemize}
\item
One way to think about this is that one runs almost to the asymptotic state and then one gets a vector that is ``close'' to the leading eigenvector of $L$.
Note, however, that the statement of implicit regularization from last time does \emph{not} depend on the initial condition or how long the walk was run.
(The value of the regularization parameter, etc., does, but the form of the statement does not.)
Thus ...
\item
Another way to think about this is that one starts at any node, say a localized ``seed set'' of nodes, e.g., in which all of the initial probability mass is on one node or a small number of nodes that are nearby each other in the graph topology, and then one runs only a small number of steps of the random walk or diffusion.
In this case, it might be more natural/useful to try to quantify the idea that: if one starts the random walk on the small side of a bottleneck to mixing, and if one runs only a few steps of a random walk, then one might get stuck in that small set.
\end{itemize}

The latter is the basic idea of so-called \emph{local spectral methods}, which are a class of algorithms that have received a lot of attention recently. 
Basically, they try to extend the ideas of global spectral methods, where we compute eigenvectors, random walks, etc., that reveal structure about the entire graph, e.g., that find a partition that is quadratically-good in the sense of Cheeger's Inequality to the best conductance/expansion cut in the graph, to methods that reveal interesting structure in locally-biased parts of the graph.
Not only do these provide locally-biased versions of global spectral methods, but since spectral methods are often used to provide a ranking for the nodes in a graph and/or to solve other machine learning problems, these also can be used to provide a locally-biased or personalized version of a ranking function and/or to solve other machine learning problems in a locally-biased manner.

\subsection{Overview of local spectral methods and spectral ranking}

Here is a brief history of local and locally-biased spectral methods.
\begin{itemize}
\item
LS: developed a basic locally-biased mixing result in the context of mixing of Markov chains in convex bodies.
They basically show a partial converse to the easy direction of Cheeger's Inequality---namely, that if the conductance $\phi(G)$ of the graph $G$ is big, then \emph{every} random walk must converge quickly---and from this they also show that if the random walk fails to converge quickly, then by examining the probability distribution that arises after a few steps one can find a cut of small conductance.
\item
ST: used the LS result to get an algorithm for local spectral graph partitioning that used truncated random walks.
They used this to find good well-balanced graph partitions in nearly-linear time, which they then used as a subroutine in their efforts to develop nearly linear time solvers for Laplacian-based linear systems (a topic to which we will return briefly at the end of the semester).
\item
ACL/AC: improved the ST result by computing a personalized PageRank vector.
This improves the fast algorithms for Laplacian-based linear solvers, and it is of interest in its own right, so we will spend some time on it.
\item
C: showed that similar results can be obtained by doing heat kernel computations.
\item
AP: showed that similar results can be obtained with an evolving set method (that we won't discuss in detail).
\item
MOV: provided an optimization perspective on these local spectral methods.
That is, this is a locally-biased optimization objective that, if optimized exactly, leads to similar locally-biased Cheeger-like bounds.
\item
GM: characterized the connection between the strongly-local ACL and the weakly-local MOV in terms of $\ell_1$ regularization (i.e., a popular form of sparsity-inducing regularizations) of $\ell_2$ regression problems.
\end{itemize}

There are several reasons why one might be interested in these methods.
\begin{itemize}
\item
Develop faster algorithms.
This is of particular interest if we can compute locally-biased partitions without even touching all of the graph.
This is the basis for a lot of work on nearly linear time solvers for Laplacian-based linear systems.
\item
Improved statistical properties.
If we can compute locally-biased things, e.g., locally-biased partitions, without even touching the entire graph, then that certainly implies that we are robust to things that go on on the other side of the graph.
That is, we have essentially engineered some sort of regularization into the approximation algorithm; and it might be of interest to quantify this.
\item
Locally exploring graphs.
One might be interested in finding small clusters or partitions that are of interest in a small part of a graph, e.g., a given individual in a large social graph, in situations when those locally-biased clusters are not well-correlated with the leading or with any global eigenvector.
\end{itemize}
We will touch on all these themes over the next four classes.

For now, let $G=(V,E)$, and recall that $\mbox{Vol}(G) = \sum_{v \in T} d_v$ (so, in particular, we have that $\mbox{Vol}(G) = 2 |E| = 2m$.
Also, $A$ is the Adjacency Matrix, $W = D^{-1}A$, and $\mathcal{L} = I - D^{-1}A$ is the random walk normalized Laplacian.
For a vector $x\in\mathbb{R}^{n}$, let's define its support as 
\[
\mbox{Supp}(v) = \{ i \in V = [n] : v_i \ne 0 \}  .
\]
Then, here is the transition kernel for that vanilla random walk.
\[
\mathbb{P}\left[ x_{t+1}=j | x_t = i \right] = 
   \begin{cases}
    \frac{1}{d} \quad \text{if }  i \sim j \\
    0 \quad \text{otherwise}  .
   \end{cases}  
\]
If we write this as a transition matrix operating on a (row) vector, then we have that
\[
p(t) = s W^t  ,
\]
where $W$ is the transition matrix, and where $s=p(0)$ is the initial distribution, with $\|s\|_1 = 1$.
Then, $p = \vec{1}^T D/(\vec{1}^T D \vec{1})$ is the stationary distribution, i.e., 
\[
\lim_{t\rightarrow\infty} \mathbb{P}\left[ x_t = i \right] = \frac{d_i}{\mbox{Vol}(G)} ,
\]
independent of $s = p(0)$, as long as $G$ is connected and not bipartite.
(If it is bipartite, then let $W \rightarrow W_{LAZY} = \frac{1}{2} \left( I + D^{-1}A \right)$, and the same results holds.)

There are two common interpretations of this asymptotic random walk.
\begin{itemize}
\item
Interpretation 1: the limit of a random walk.
\item
Interpretation 2: a measure of the importance of a node.
\end{itemize}
With respect to the latter interpretation, think of an edge as denoting importance, and then what we want to find is the important nodes (often for directed graphs, but we aren't considering that here).
Indeed, one of the simplest \emph{centrality measures} in social graph analysis is the degree of a node.
For a range of reasons, e.g., since that is easy to spam, a refinement of that is to say that important nodes are those nodes that have links to important nodes.

This leads to  a large area known as \emph{spectral ranking} methods.
This area applies the theory of matrices or linear maps---basically eigenvectors and eigenvalues, but also related things like random walks---to matrices that represent some sort of relationship between entities.
This has a long history, most recently made well-known by the PageRank procedure (which is one version of it).
Here, we will follow Vigna's outline and description in his ``Spectral Ranking'' notes---his description is very nice since it provides the general picture in a general context, and then he shows that with several seemingly-minor tweaks, one can obtain a range of related spectral graph~methods.

\subsection{Basics of spectral ranking}

To start, take a step back, and let $M \in \mathbb{R}^{n \times n}$, where each column/row of $M$ represents some sort of entity, and $M_{ij}$ represents some sort of \emph{endorsement or approval} of entity $j$ from entity $i$.
(So far, one could potentially have negative entries, with the obvious interpretation, but this will often be removed later, basically since one has more structure if entries must be nonnegative.)
As Vigna describes, Seeley (in 1949) observed that one should define importance/approval recursively, since that will capture the idea that an entity is important/approved if other important entities think is is important/approved.
In this case, recursive could mean that
\[
r = rM ,
\]
i.e., that the index of the $i^{th}$ node equals the weighted sum of the indices of the entities that endorse the $i^{th}$ node.
This isn't always possible, and indeed Seeley considers nonnegative matrices that don't have any all-zeros rows, in which case uniqueness, etc., follow from the Perron-Frobenius ideas we discussed before.
This involves the left eigenvectors, as we have discussed; one could also look at the right eigenvectors, but the endorsement/approval interpretation fails to hold.

Later, Wei (in 1952) and Kendall (in 1955) were interested in \emph{ranking} sports teams, and they said essentially that better teams are those teams that beat better teams.
This involves looking at the rank induced by 
\[
\lim_{k\rightarrow\infty} M^k \vec{1}^T
\]
and then appealing to Perron-Frobenius theory.
The significant point here is three-fold.
\begin{itemize}
\item
The motivation is very different than Seeley's endorsement motivation.
\item
Using dominant eigenvectors on one side or the other dates back to mid 20th century, i.e., well before recent interest in the topic.
\item
The relevant notion of convergence in both of these motivating applications is that of \emph{convergence in rank} (where rank means the rank order of values of nodes in the leading eigenvector, and not anything to do with the linear-algebraic rank of the underlying matrix).
In particular, the actual values of the entires of the vector are not important.
This is very different than other notions of convergence when considering leading eigenvectors of Laplacians, e.g., the value of the Rayleigh quotient.
\end{itemize}

Here is the generalization.
Consider matrices $M$ with real and positive dominant eigenvalue $\lambda$ and its eigenvector, i.e., vector $r$ such that $ \lambda r = rM$, where let's say that the dimension of the eigenspace is one.

\begin{definition}
The \emph{left spectral ranking} associated with $M$ is, or is given by, the dominant left eigenvector.
\end{definition}

If the eigenspace does not have dimension one, then there is the usual ambiguity problem (which is sometimes simply assumed away, but which can be enforced by a reasonable rule---we will see a common way to do the latter in a few minutes), but if the eigenspace has dimension one, then we can talk of \emph{the} spectral ranking.
Note that it is defined only up to a constant: this is not a problem if all the coordinates have the same sign, but it introduces an ambiguity otherwise, and how this ambiguity is resolved can lead to different outcomes in ``boundary cases'' where it matters; see the Gleich article for examples and details of this.

Of course one could apply the same thing to $M^T$.  
The mathematics is similar, but the motivation is different.
In particular, Vigna argues that the endorsement motivation leads to the left eigenvectors, while the influence/better-than motivation leads to the right eigenvectors.

The next idea to introduce is that of ``damping.''
(As we will see, this will have a reasonable generative ``story'' associated with it, and it will have a reasonable statistical interpretation, but it is also important for a technical reason having to do with ensuring that the dimension of the eigenspace is one.)
Let $M$ be a zero-one matrix; then $\left( M^k \right)_{ij}$ is the number of directed paths from $i \rightarrow j$ in a directed graph defined by $M$.
In this case, an obvious idea of measuring the importance of $i$, i.e., to measure the number of paths going into $j$, since they represent recursive endorsements, which is given by
\[
\vec{1}\left( I + M + M^2 + \cdots \right) = I + \sum_{k=0}^{\infty} M^{k}   ,
\]
does \emph{not} work, since the convergence of the equation is not guaranteed, and it does not happen in~general.

If, instead, one can guarantee that the spectral radius of $M$ is less than one, i.e., that $\lambda_0 < 1$, then this infinite sum does converge.
One way to do this is to introduce a damping factor $\alpha$ to obtain
\[
\vec{1}\left( I + \alpha M + \alpha^2 M^2 + \cdots \right) = I + \sum_{k=0}^{\infty} \left( \alpha M\right)^{k}  .
\]
This infinite sum does converge, as the spectral radius of $\alpha M$ is strictly less than $1$, if $\alpha < \frac{1}{\lambda_0}$.
Katz (in 1953) proposed this.
Note that 
\[
\vec{1} \sum_{k=0}^{\infty} \left( \alpha M \right)^{k} = \vec{1} \left( I - \alpha M \right)^{-1} .
\]
So, in particular, we can compute this index by solving the \emph{linear system} as follows.
\[
x \left( I - \alpha M \right) = \vec{1} .
\]
This is a particularly well-structured system of linear equations, basically a system of equations where the constraint matrix can be written in terms of a Laplacian.
There has been a lot of work recently on developing fast solvers for systems of this form, and we will get back to this topic in a few weeks.

A generalization of this was given by Hubbel (in 1965), who said that one could define a status index $r$ by using a recursive equation $r = v+rM$, where $v$ is a ``boundary condition'' or ``exogenous contribution.''
This gives 
\[
r = v \left( I - M \right)^{-1} = v \sum_{k=0}^{\infty} M^k  .
\]
So, we are generalizing from the case where the exogenous contribution is $\vec{1}$ to an arbitrary vector $v$.
This converges if $\lambda_0 < 1$; otherwise, one could introduce an damping factor (as we will do below).

To see precisely how these are all related, let's consider the basic spectral ranking equation
\[
\lambda_0 r = r M  .
\]
If the eigenspace of $\lambda_0$ has dimension greater than $1$, then there is no clear choice for the ranking $r$.
One idea in this case is to perturb $M$ to satisfy this property, but we want to apply a ``structured perturbation'' in such a way that many of the other spectral properties are not damaged.
Here is the relevant theorem, which is due to Brauer (in 1952), and which we won't prove.

\begin{theorem}
Let $A \in \mathbb{C}^{n \times n}$, let 
\[
\lambda_0, \lambda_1, \ldots, \lambda_{n-1}
\] 
be the eigenvalues of $A$, and let $x\in\mathbb{C}^{n}$ be nonzero  vector such that $Ax^T = \lambda_0 x^T$.
Then, for all vectors $v\in\mathbb{C}^{n}$, the eigenvalues of $A+x^Tv$ are given by 
\[
\lambda_0 + vx^T, \lambda_1, \ldots, \lambda_{n-1}  .
\] 
\end{theorem}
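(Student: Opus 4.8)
The plan is to prove Brauer's theorem by a direct eigenvector-construction argument, treating the claimed eigenvalue $\lambda_0 + vx^T$ separately from the remaining eigenvalues $\lambda_1,\ldots,\lambda_{n-1}$.

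First I would verify that $\lambda_0 + vx^T$ is an eigenvalue of $A + x^Tv$ with eigenvector $x^T$ itself. This is a one-line computation: $(A + x^Tv)x^T = Ax^T + x^T(vx^T) = \lambda_0 x^T + x^T(vx^T) = (\lambda_0 + vx^T)x^T$, where we used that $vx^T$ is a scalar and associativity of matrix multiplication. So the new matrix inherits $x^T$ as an eigenvector but with the shifted eigenvalue.

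Next I would handle the other $n-1$ eigenvalues. The cleanest route is to work with a Schur-type or similarity reduction: since $x^T \ne 0$, extend $x^T$ to a basis and let $P$ be an invertible matrix whose first column is $x^T$. Then $P^{-1}AP$ is block upper-triangular with $(1,1)$ entry $\lambda_0$ and lower-right $(n-1)\times(n-1)$ block whose eigenvalues are $\lambda_1,\ldots,\lambda_{n-1}$ (counted with multiplicity); this is the standard deflation step, valid because $x^T$ spans an invariant subspace. Applying the same $P$ to $A + x^Tv$ gives $P^{-1}(A + x^Tv)P = P^{-1}AP + (P^{-1}x^T)(vP)$. Since $P^{-1}x^T = e_1$ (the first standard basis vector), the rank-one perturbation only modifies the first \emph{row} of the transformed matrix. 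Hence the matrix $P^{-1}(A+x^Tv)P$ is still block upper-triangular with the same lower-right $(n-1)\times(n-1)$ block, and its $(1,1)$ entry becomes $\lambda_0 + (vP)_1 = \lambda_0 + vx^T$. Reading off the characteristic polynomial from the block-triangular form, the eigenvalues are exactly $\lambda_0 + vx^T, \lambda_1, \ldots, \lambda_{n-1}$, as claimed.

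The main obstacle — really the only subtle point — is justifying the deflation step cleanly: that one can choose $P$ with first column $x^T$ so that $P^{-1}AP$ is block upper-triangular with the advertised lower block, and that the perturbation $x^Tv$ maps under this change of basis to something supported only on the first row so that the lower block is untouched. Both facts rest on $x^T$ being an eigenvector (hence spanning a $1$-dimensional invariant subspace) and on the identity $P^{-1}x^T = e_1$; once these are in place the conclusion about eigenvalues-with-multiplicity follows from expanding $\det(\lambda I - P^{-1}(A+x^Tv)P)$ along the block structure. An alternative, purely determinantal proof would use the matrix determinant lemma, $\det(\lambda I - A - x^Tv) = \det(\lambda I - A)\det\!\big(1 - v(\lambda I - A)^{-1}x^T\big)$ for $\lambda$ not an eigenvalue of $A$, then analyze how the scalar factor shifts the root at $\lambda_0$; but the basis-change argument is more transparent and handles multiplicities without a limiting argument, so I would present that one.
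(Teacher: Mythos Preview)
Your proof is correct. The paper itself does not prove this theorem; it simply states it, attributes it to Brauer (1952), and explicitly says ``which we won't prove.'' So there is no approach in the paper to compare against.

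Your argument is the standard deflation proof and it is carried out cleanly. The first step, checking that $x^T$ remains an eigenvector of $A+x^Tv$ with eigenvalue $\lambda_0+vx^T$, is fine. The second step via the similarity $P$ with first column $x^T$ is exactly right: $P^{-1}AP$ has first column $\lambda_0 e_1$ (hence block upper-triangular), and since $P^{-1}x^T = e_1$, the rank-one term $P^{-1}(x^Tv)P = e_1(vP)$ only touches the first row, leaving the $(n-1)\times(n-1)$ lower-right block intact with eigenvalues $\lambda_1,\ldots,\lambda_{n-1}$. Your computation of the new $(1,1)$ entry as $\lambda_0 + (vP)_1 = \lambda_0 + vx^T$ is correct since the first column of $P$ is $x^T$. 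The remark about the matrix determinant lemma alternative is accurate but, as you say, the basis-change route handles multiplicities directly and is the better choice to present.
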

That is, if we perturb the original matrix by an rank-one update, where the rank-one update is of the form of the outer product of an eigenvector of the matrix and an arbitrary vector, then one eigenvalue changes, while all the others stay the same.

In particular, this result can be used to split degenerate eigenvalues and to introduce a \emph{gap} into the spectrum of $M$.
To see this, let's consider a rank-one convex perturbation of our matrix $M$ by using a vector $v$ such that $vx^T = \lambda_0$ and by applying the theorem to $\alpha M$ and $\left(1-\alpha\right)x^Tv$.
If we do this then we get 
\[
\lambda_0 r = r \left( \alpha M + \left(1-\alpha\right) x^Tv \right) .
\]
Next, note that $\alpha M + \left(1-\alpha\right) x^Tv$ has the same dominant eigenvalues as $M$, but with algebraic multiplicity $1$, and all the other eigenvalues are multiplied by $\alpha \in (0,1)$.

This ensures a unique $r$.
The cost is that it introduces extra parameters ($\alpha$ is we set $v$ to be an all-ones vector, but the vector $v$ if we choose it more generally).
These parameters can be interpreted in various ways, as we will see.

An important consequence of this approach is that $r$ is defined only up to a constant, and so we can impose the constraint that $rx^T=\lambda_0$.
(Note that if $x = \vec{1}$, then this says that the sum of $r$'s coordinates is $\lambda_0$, which if all the coordinates have the same sign means that $\|r\|_1 = \lambda_0$.)
Then we get 
\[
\lambda_0 r = \alpha r M + \left(1-\alpha\right) \lambda_0 v .
\]
Thus, 
\[
r \left( \lambda_0 - \alpha M \right) = \left( 1-\alpha \right) \lambda_0 v  .
\]
From this it follows that 
\begin{eqnarray*}
r &=& \left(1-\alpha\right) \lambda_0 v \left( \lambda_0 - \alpha M \right)^{-1}  \\
  &=& \left(1-\alpha\right) v \left( 1 - \frac{\alpha}{\lambda_0} M \right)^{-1}  \\
  &=& \left(1-\alpha\right) v \sum_{k=0}^{\infty} \left( \frac{\alpha}{\lambda_0} M \right)^{k}  \\
  &=& \left(1-\lambda_0\beta\right) v \sum_{k=0}^{\infty} \left( \beta M \right)^{k}  ,
\end{eqnarray*}
which converges if $\alpha < 1$, i.e., if $\beta < \frac{1}{\lambda_0}$.

That is, this approach shows that the Katz-Hubbel index can be obtained as a rank-one perturbation of the original matrix.
In a little bit, we will get to what this rank-one perturbation ``means'' in different situations.

To review, we started with a matrix $M$ with possibly many eigenvectors associated with the dominant eigenvalue, and we introduced a structured perturbation get a specific eigenvector associated with $\lambda_0$, given the boundary condition~$v$.

The standard story is that if we start from a generic nonnegative matrix and normalize its rows to get a stochastic matrix, then we get a Markovian spectral ranking, which is the limit distribution of the random walk.
Here, we are slightly more general, as is captured by the following definition.

\begin{definition}
Given the matrix $M$, the sampled spectral ranking of $M$ with boundary condition $v$ and dumping factor $\alpha$ is
\[
r_0 = \left(1-\alpha\right) v \sum_{k=0}^{\infty} \left( \frac{\alpha}{\lambda_0} M \right)^{k}  ,
\]
for $|\alpha| < 1$.
\end{definition}

The interpretation of the boundary condition (from the sampled to the standard case) is the following.
\begin{itemize}
\item
In the damped case, the Markov chain is restarted to a fixed distribution $v$, and there is a single stationary distribution which is the limit of every starting distribution.
\item
In the standard case, $v$ is the starting distribution from which we capture the limit using an eigenprojection.
\end{itemize}
While in some sense equivalent, these two interpretations suggest different questions and interpretations, and we will consider both of these over the next few classes.

\subsection{A brief aside}

Here is an aside we will get back to over the next few classes.

Consider a vanilla random walk, where at each time step, we follow the graph with some probability $\beta$, and we randomly jump to any uniformly-chosen node in the graph with probability $1-\beta$.
This Markov chain has stationary distribution 
\[
p_{\beta} = \frac{1-\beta}{n}\vec{1} + \beta p_{\beta} W .
\]
This is often known as PageRank, which has received a lot of attention in web ranking, but from the above discussion it should be clear that it is one form of the general case of spectral ranking methods.
We can also ask for a ``personalized'' version of this, by which we informally mean a ranking of the nodes that in some sense is conditioned on or personalized for a given node or a given seed set of nodes.
We can get this by using a personalized PageRank, by randomly jumping (not to any node chosen uniformly at random but) to a ``seed set'' $s$ of nodes.
This PPR is the unique solution to 
\[
p_{\beta}(s) = \left(1-\beta\right)s + \beta p_{\beta}(s) W  , 
\]
i.e., it is of the same form as the expression above, except that an all-ones vector has been replaced by a seed set or boundary condition vector $V$.
This latter expression solves the linear equation
\[
\left( I - \beta W \right) p_{\beta} (s) = \left( 1-\beta \right) s  .
\]
We can write this expression as an infinite sum as 
\[
p_{\beta}(s) = \left( 1-\beta \right) s + \beta \sum_{t=0}^{\infty} \beta^t \left(sW\right)^t
\]
Thus, note that the following formulations of PageRank (as well as spectral ranking more generally) are equivalent.
\begin{itemize}
\item
$\left(I-\beta W \right) x = \left( 1-\beta \right) s$
\item
$\left( \gamma D + L \right) z = \gamma s$, where $\beta=\frac{1}{1+\gamma}$ and $x = Dz$.
\end{itemize}
As noted above, this latter expression is of the form of Laplacian-based linear equations.
It is of the same form that arises in those semi-supervised learning examples that we discusses.
We will talk toward the end of the term about how to solve equations of this form more generally.

\newpage

\section{%
(04/02/2015):
Local Spectral Methods (2 of 4):
Computing spectral ranking with the push procedure}

Reading for today.
\begin{compactitem}
\item
``The Push Algorithm for Spectral Ranking'', in arXiv, by Boldi and Vigna
\item
``Local Graph Partitioning using PageRank Vectors,'' in FOCS, by Andersen, Chung, and Lang
\end{compactitem}

Last time, we talked about spectral ranking methods, and we observed that they can be computed as eigenvectors of certain matrices or as the solution to systems of linear equations of certain constraint matrices.
These computations can be performed with a black box solver, but they can also be done with specialized solvers that take advantage of the special structure of these matrices.
(For example, a vanilla spectral ranking method, e.g., one with a preference vector $\vec{v}$ that is an all-ones vector $\vec{1}$, has a large eigenvalue gap, and this means that one can obtain a good solution with just a few steps of a simple iterative method.)
As you can imagine, this is a large topic.
Here, we will focus in particular on how to solve for these spectral rankings in the particular case when the preference vector $\vec{v}$ is has small support, i.e., when it has its mass localized on a small seed set of nodes.
This is a particularly important use case, and the methods developed for it are useful much more generally.

In particular, in this class, we will describe how to compute personalized/local spectral rankings with a procedure called the \emph{push procedure}.
This has several interesting properties, in general, but in particular when computing locally-biased or personalized spectral rankings in a large graph.
As with the last class, we will follow the Boldi-Vigna's notes on ``The Push Algorithm for Spectral Ranking'' on the topic.

\subsection{Background on the method}

To start, let $M \in \mathbb{R}^{n \times n}$ be a nonnegative matrix; and WLOG assume that $\| M \|_1 =1$, i.e., assume that $M$ is stochastic.
(Actually, this assumes that $M$ is sub-stochastic and that at least one row of $M$ sums to $1$; this distinction won't matter for what we do, but it could matter in other cases, e.g., when dealing with directed graphs, etc.)
Then, let $v$ be a nonnegative vector s.t. $\|v\|_1=1$, i.e., assume that $v$ is a distribution, and let $\alpha \in [0,1)$.

Then, recall that we defined the spectral ranking of $M$ with preference vector $v$ and damping factor $\alpha$ to be
\begin{eqnarray*}
r &=& \left(1-\alpha\right) v \left(I-\alpha M\right)^{-1} \\
  &=& \left(1-\alpha\right) v \sum_{k=0}^{\infty} \alpha^k M^k  .
\end{eqnarray*}
For this, note that $r$ need not be a distribution unless $M$ is stochastic (although it is usually applied in that case).
Note also that the linear operator is defined for all $\alpha \in \left[0,\frac{1}{\rho(M)} \right)$, but estimating $\rho(M)$ can be difficult.
(We'll see an interesting example next week of what arises when we push this parameter to the limit.)

From this perspective, while it is difficult to ``guess'' what is the spectral ranking $r$ associated with a preference vector $v$, it is ``easy'' to solve the inverse problem:
given $r$, solve 
$$
v = \frac{1}{1-\alpha} r \left(I-\alpha M \right) .
$$
(While this equation is always true, the resulting preference vector might not be a distribution; otherwise, one could obtain any spectral ranking using a suitable preference vector.)

While this is obvious, it has important consequences for computing spectral rankings and approximate spectral rankings that we will now describe.

Consider the indicator vector $\mathcal{X}_x(z) = [ x=z]$.
If we want to obtain the vector $\left(1-\alpha\right)\mathcal{X}_x$ as a spectral ranking, then the associated preference vector $v$ has the simple form:
\begin{eqnarray}
\label{eqn:pref-vect}
v &=& \frac{1}{1-\alpha}\left(1-\alpha\right) \mathcal{X}_x \left( I-\alpha M\right) \\
\nonumber
  &=& \mathcal{X}_x - \alpha \sum_{x \rightarrow y} M_{xy} \mathcal{X}_y \\
\nonumber
  &=& \mathcal{X}_x - \alpha \frac{1}{d(x)} \sum_{x \rightarrow y} \mathcal{X}_y  , 
\end{eqnarray}
where note that if $M_{xy} = \frac{1}{d(x)}$ is the natural random walk, then we can take it out of the summation.

This is true in general.
The important point for us here is that if $v$ is highly concentrated, e.g., if it is an indicator vector of $s$ small set of nodes, and if $\alpha$ is not too close to $1$, then most of the updates done by the linear solver or by an iterative method either don't update much or update below a threshold level.
Motivated by this, we will discuss the so-called \emph{push algorithm}---this uses a particular form of updates to reduce computational burden.
This was used in work of Jeh and Widom and also of Berkhin on personalized page rank, and it was used with this name by ACL.
Although they all apply it to PageRank, it applies for the steady state of Markov chains with restart, and it is basically an algorithm for spectral ranking with damping (see the Boldi, Lonati, Santini, Vigna paper).

\subsection{The basic push procedure}

The basic idea of this approach is that, rather that computing an \emph{exact} PPR vector by iterating the corresponding Markov chain (e.g., with vanilla matrix-vector multiplies) until it converges, it is also possible to consider computing an \emph{approximate} PPR vector much more efficiently.
Recall that the PPR vector is the unique solution to 
\[
\pi_{\alpha}(s) = \left(1-\alpha\right)s + \alpha \pi_{\alpha}(s) W  ,
\]
and it can be written as an infinite sum
\[
\pi_{\alpha}(s) = \left(1-\alpha\right)s + \left(1-\alpha\right) \sum_{t=1}^{\infty} \alpha^t \left(s W\right)^t .
\]
With this notation, we can define the following notion of approximation of a PPR vector.
\begin{definition}
An \emph{$\epsilon$-approximate PageRank vector} $\pi_{\alpha}(s)$ is any PageRank vector $\pi_{\alpha}(s-r)$, where $r$ is nonnegative and $r(v) \le \epsilon d_v$, for all $v \in V$.
\end{definition}

\textbf{Fact.} 
The approximation error of an $\epsilon$-approximate PageRank vector on any set of nodes $S$ can be bounded in terms of the $\mbox{Vol}(S)$ and $\epsilon$.
Here is a basic lemma from ACL.
\begin{lemma}
For all $\epsilon$-approximate PageRank vectors $\pi_{\alpha}(s-r)$, and for all $S \subset V$, we have
\[
\pi_{\alpha}(s) 1^T_S \ge \pi_{\alpha}(s-r) 1^T_S \ge \pi_{\alpha}(s) 1^T_S - \epsilon \mbox{Vol}(S)  .
\]
\end{lemma}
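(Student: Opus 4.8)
The plan is to exploit two structural properties of the personalized PageRank operator $s \mapsto \pi_{\alpha}(s)$: that it is linear in the seed vector, and that it is monotone on nonnegative seed vectors. Both follow immediately from the series representation $\pi_{\alpha}(s) = (1-\alpha)\sum_{t\ge 0}\alpha^{t} s W^{t}$, which converges absolutely because $\alpha < 1$ and $W$ is (sub)stochastic, so the term-by-term rearrangements and comparisons below are all legitimate.

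First I would write $s = (s-r) + r$ and use linearity to get $\pi_{\alpha}(s) = \pi_{\alpha}(s-r) + \pi_{\alpha}(r)$. Since $r$ is nonnegative and $W$ has nonnegative entries, every term $\alpha^{t}(rW)^{t}$ is entrywise nonnegative, hence $\pi_{\alpha}(r)$ is a nonnegative vector; dotting with $1^{T}_{S}$ (i.e.\ summing over the coordinates in $S$) gives $\pi_{\alpha}(s)1^{T}_{S} - \pi_{\alpha}(s-r)1^{T}_{S} = \pi_{\alpha}(r)1^{T}_{S} \ge 0$, which is the left inequality of the lemma.

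For the right inequality it remains to bound $\pi_{\alpha}(r)1^{T}_{S}$ by $\epsilon\,\mbox{Vol}(S)$. The key observation is that the unnormalized degree (row) vector $d = \vec{1}D$ is fixed by the walk matrix: $dW = \vec{1}D D^{-1}A = \vec{1}A = d$ for an undirected graph (and likewise $d$ is fixed by the lazy walk $\tfrac{1}{2}(I + D^{-1}A)$), so $\pi_{\alpha}(d) = (1-\alpha)d\sum_{t\ge 0}\alpha^{t} = d$. By the definition of an $\epsilon$-approximate PageRank vector, $r(v)\le \epsilon d_{v}$ for every $v$, i.e.\ $r \le \epsilon d$ entrywise, so $\epsilon d - r$ is a nonnegative seed vector; applying the operator and using linearity plus nonnegativity of $\pi_{\alpha}(\epsilon d - r)$ gives $\pi_{\alpha}(r) \le \epsilon\,\pi_{\alpha}(d) = \epsilon d$ entrywise, whence $\pi_{\alpha}(r)1^{T}_{S} \le \epsilon\, d\, 1^{T}_{S} = \epsilon\sum_{v\in S} d_{v} = \epsilon\,\mbox{Vol}(S)$. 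Combining with the identity from the previous paragraph yields $\pi_{\alpha}(s-r)1^{T}_{S} = \pi_{\alpha}(s)1^{T}_{S} - \pi_{\alpha}(r)1^{T}_{S} \ge \pi_{\alpha}(s)1^{T}_{S} - \epsilon\,\mbox{Vol}(S)$.

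The only real content — and the step I would double-check — is the fixed-vector identity $\pi_{\alpha}(d) = d$; everything else is bookkeeping with linearity and entrywise monotonicity. The subtleties to watch are (i) keeping the row-vector/matrix-on-the-right convention straight so that ``$d$ is a left fixed vector of $W$'' is the statement actually invoked, (ii) confirming the identity persists for whichever walk matrix $W$ (lazy or not) is in force, and (iii) confirming absolute convergence of the geometric series so that the entrywise comparisons term by term are valid.
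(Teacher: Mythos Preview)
Your proof is correct. The paper itself does not prove this lemma; it merely states it as ``a basic lemma from ACL'' and moves on, so there is no in-paper argument to compare against. Your approach---linearity of $s\mapsto\pi_{\alpha}(s)$, entrywise monotonicity from nonnegativity of $W$, and the fixed-point identity $\pi_{\alpha}(d)=d$ coming from $dW=d$---is exactly the standard argument (and is essentially what ACL do). One trivial typo: you wrote ``$\alpha^{t}(rW)^{t}$'' where you clearly mean $\alpha^{t}\, r\, W^{t}$.
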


Here is an algorithm to compute an $\epsilon$-approximate PageRank vector; let's call this algorithm \textsc{ApproxPR($s,\alpha,\epsilon$)}.
\begin{enumerate}
\item
Let $p=\vec{0}$ and $r = \vec{s}$.
\item
While $r_u \ge \epsilon d_u$ for some vertex $u$,
\begin{itemize}
\item
Pick and $u$ such that $r_u \ge \epsilon d_n$
\item
Apply \textsc{Push($u$)}
\end{itemize}
\item
Return the vectors $p$ and $r$
\end{enumerate}
And here is the \textsc{Push($u$)} algorithm that is called by the \textsc{ApproxPR($s,\alpha,\epsilon$)} algorithm.
\begin{enumerate}
\item
Let $p^{\prime} = p$ and $r^{\prime} = r$, except for the following updates:
\begin{compactitem}
\item
$p_u^{\prime} = p_u + \alpha r_u$
\item
$r_u^{\prime} = \left( 1-\alpha \right) \frac{r_u}{2}$
\item
$r_v^{\prime} = r_v + \left(1-\alpha\right) \frac{r_u}{2d_u}$, for all vertices $v$ such that $(u,v) \in E$.
\end{compactitem}
\end{enumerate}
Note that we haven't specified the order in which the pushes are executed, i.e., in which the \textsc{Push($u$)} algorithm is called by the \textsc{ApproxPR($s,\alpha,\epsilon$)} algorithm, and so they can be done in different ways, leading to slightly different algorithms.

Here is the theorem that ACL establishes about this procedure.
\begin{theorem}
Algorithm \textsc{ApproxPR($s,\alpha,\epsilon$)} has the following properties.
\begin{itemize}
\item
For all starting vertices with $\|s\|_1 \le 1$ and for all $\epsilon \in (0,1]$, the algorithm returns an $\epsilon$-approximate $p$ for $p_{\alpha}(s)$.
\item
The support of $p$ satisfies 
\[
\mbox{Vol(Supp}(p)) \le \frac{2}{(1+\alpha)\epsilon}  .
\]
\item
The running time of the algorithm is $O\left( \frac{1}{\alpha\epsilon} \right)$.
\end{itemize}
\end{theorem}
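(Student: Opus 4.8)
The plan is to prove the three bullet points of the theorem in turn, since each one isolates a different invariant of the \textsc{Push} operation. The unifying idea is that \textsc{Push}($u$) preserves the ``spectral ranking identity'' relating $p$ and $r$, namely that $p + p_\alpha(r) = p_\alpha(s)$ (equivalently, $p$ is exactly the exact answer for the portion of mass already processed, and $r$ records the unprocessed residual). So first I would check this invariant: initially $p = \vec 0$ and $r = s$, so $p + p_\alpha(r) = p_\alpha(s)$ trivially. Then I would verify that a single \textsc{Push}($u$) leaves $p + p_\alpha(r)$ unchanged. This is the key algebraic computation: writing the update as $p' = p + \alpha r_u \mathcal{X}_u$ and $r' = r - r_u \mathcal{X}_u + (1-\alpha) r_u \mathcal{X}_u W$ (the lazy-walk version splits the last term, but the structure is the same), one uses linearity of $p_\alpha(\cdot)$ together with the defining relation $p_\alpha(x) = (1-\alpha)x + \alpha p_\alpha(x)W$ applied to $x = \mathcal{X}_u$ to see the change in $p$ exactly cancels the change in $p_\alpha(r)$. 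Given this invariant, the first bullet follows on termination: the loop exits only when $r_u < \epsilon d_u$ for every $u$, which is precisely the definition of $p = p_\alpha(s - r)$ being an $\epsilon$-approximate PageRank vector (with the nonnegativity of $r$ maintained throughout, since every update only adds nonnegative quantities to residual entries).

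For the running time bound (the third bullet), the plan is a potential-function / amortized argument tracking $\|r\|_1$. Each \textsc{Push}($u$) removes $\alpha r_u$ worth of mass from the residual and moves it permanently into $p$; since the push is only executed when $r_u \ge \epsilon d_u$, each push decreases $\|r\|_1$ by at least $\alpha \epsilon d_u$. Because $\|r\|_1$ starts at $\|s\|_1 \le 1$ and stays nonnegative, the total decrease is at most $1$, so $\sum_{\text{pushes at }u} \alpha \epsilon d_u \le 1$, i.e.\ the sum of degrees of vertices pushed (with multiplicity) is at most $\frac{1}{\alpha\epsilon}$. Since a \textsc{Push}($u$) touches $O(d_u)$ entries (the neighbors of $u$), the total work is $\sum_{\text{pushes}} O(d_u) = O\!\left(\frac{1}{\alpha\epsilon}\right)$, assuming the bookkeeping of ``which vertices currently violate $r_u \ge \epsilon d_u$'' is done with an appropriate queue/stack in $O(1)$ amortized time per operation.

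For the support bound (the second bullet), I would argue as follows: a vertex $u$ only receives a nonzero $p_u$ if \textsc{Push}($u$) was called, and when that push is called $r_u \ge \epsilon d_u$ and the push adds $\alpha r_u \ge \alpha \epsilon d_u$ to $p_u$. So every vertex in the support of $p$ contributed at least $\alpha\epsilon d_u$ to $\|p\|_1$ at some point; since entries of $p$ only increase, $\|p\|_1 \ge \alpha\epsilon \sum_{u \in \mathrm{Supp}(p)} d_u = \alpha\epsilon\, \mathrm{Vol}(\mathrm{Supp}(p))$. It remains to bound $\|p\|_1$; using $p = p_\alpha(s-r)$ with $\|s-r\|_1 \le \|s\|_1 \le 1$ and the fact that $p_\alpha$ applied to a sub-distribution has $\ell_1$ norm at most that of its argument (for the lazy walk; here the factor $(1+\alpha)/2$ appears because of the $1/2$ in the lazy step, which is what produces the stated constant $\frac{2}{(1+\alpha)\epsilon}$ rather than $\frac{1}{\alpha\epsilon}$), one gets $\mathrm{Vol}(\mathrm{Supp}(p)) \le \frac{\|p\|_1}{\alpha\epsilon} \le \frac{2}{(1+\alpha)\epsilon}$ after tracking the lazy-walk normalization carefully.

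The main obstacle I anticipate is getting the constants exactly right in the lazy-walk setting --- the statement uses $W_{\mathrm{LAZY}} = \frac12(I + D^{-1}A)$ implicitly (that is why $(1-\alpha)\tfrac{r_u}{2}$ and $(1-\alpha)\tfrac{r_u}{2d_u}$ appear in \textsc{Push}), and the $(1+\alpha)$ in the support bound comes from how the ``$\alpha$ stays, $(1-\alpha)$ diffuses, and half of the diffused mass stays at $u$'' accounting interacts with the threshold condition $r_u \ge \epsilon d_u$. Keeping the non-lazy and lazy bookkeeping straight, and confirming that the residual never goes negative and that the $\ell_1$-norm potential argument is tight enough to yield $O(1/(\alpha\epsilon))$ rather than something worse, is where the care is needed; the rest is the linearity-of-$p_\alpha$ invariant, which is essentially forced once one writes down the update in matrix form.
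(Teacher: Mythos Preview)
Your proposal is correct and follows essentially the same approach as the paper's sketch: the paper explicitly says the proof rests on (i) \textsc{Push}$(u)$ preserving the invariant $p=\pi_\alpha(s-r)$, (ii) the stopping condition giving the $\epsilon$-approximation, and (iii) the running time following from $\|r\|_1$ decreasing by at least $\alpha\epsilon d_u$ per push. You supply considerably more detail than the paper (which gives only a two-line sketch and itself flags inconsistencies in the $\alpha$ vs.\ $1{-}\alpha$ convention), and you correctly identify the lazy-walk constant bookkeeping as the one place requiring care; your direct argument yields $\mathrm{Vol}(\mathrm{Supp}(p))\le 1/(\alpha\epsilon)$, and sharpening to $2/((1+\alpha)\epsilon)$ indeed requires tracking the per-push local decrease $r_u(1+\alpha)/2$ rather than just the $\alpha r_u$ transferred to~$p$.
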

The idea of the proof---outlined roughly above---is that Algorithm \textsc{Push($u$)}  preserves the approximate PageRank condition $p = \pi \left( s-r \right)$; and the stopping condition ensures that it is an $\epsilon$ approximation.
The running time follows since $\|r\|_1=1$ and it decreases by $\alpha\epsilon d_n$ at each time \textsc{Push($u$)}  is called.

\subsection{More discussion of the basic push procedure}

The basic idea of the method is to keep track of two vectors, $p$ which is the vector of current approximations and $r$ which is the vector of residuals, such that the following global invariant is satisfied at every step of the algorithm.
\begin{equation}
\label{eqn:invariant}
p  + \left(1-\alpha\right) r \left(I-\alpha M\right)^{-1} = \left(1-\alpha\right) v \left(I-\alpha M\right)^{-1}  .
\end{equation}
Initially, $p=0$ and $r=v$, and so this invariant is trivially satisfied.
Subsequently, at each step the push method increases $p$ and decreases $r$ to keep the invariant satisfied.

To do so, it iteratively ``pushes'' some node $x$.
A \emph{push on $x$} adds $\left(1-\alpha\right) r_x \mathcal{X}_x$ to the vector $p$.
To keep the invariant try, the method must update $r$.
To do so, think of $r$ as a preference vector, in which we are just trying to solve Eqn.~(\ref{eqn:pref-vect}).
By linearity, if we subtract 
\[
r_x \left( \mathcal{X}_x - \alpha \sum_{x \rightarrow y} M_{xy} \mathcal{X}_y \right)
\]
from $r$, then the value $\left( 1-\alpha \right) r \left( I-\alpha M \right)^{-1}$ will decrease by $\left(1-\alpha\right) r_x \mathcal{X}_x$, thus preserving the invariant.

This is a good choice since
\begin{compactitem}
\item
We zero out the $x^{th}$ entry of $r$.
\item
We add a small positive quantities to a small set of entries (small set, if the graph is sparse, which is the use case we are considering).
This increases the $\ell_1$ norm of $p$ by $\left( 1-\alpha\right)r_x$, and it decreases $r$ by at least the same amount.
\end{compactitem}

Since we don't create negative entries in this process, it always holds that 
\[
\|p\|_1 + \|r\|_1 \le 1 ,
\]
and thus we can keep track of two norms at each update.
The error in the estimate is then given by the following.
\begin{eqnarray*}
\| \left(1-\alpha \right) r \left( I - \alpha M \right)^{-1} \|_1
  &=& \left(1-\alpha\right) \| r \sum_{k \ge 0} \alpha^k M^k \|_1   \\
  &\le& \left(1-\alpha\right) \|r \|_1 \sum_{k \ge 0} \alpha^k \| M^k \|_1   \\
  &\le& \| r \|_1
\end{eqnarray*}
So, in particular, we can control the absolute error of the algorithm by controlling the $\ell_1$ error of the residual.

\subsection{A different interpretation of the same process}

Here is a different interpretation of the push method.
Some might find useful---if so, good, and if not, then just ignore the following.

Just as various random walks can be related to diffusion of heat/mass, one can think of PageRank as related to the diffusion of a substance where some fraction of that substance gets stuck in place at each time step---e.g., diffusing paint, where the point of paint is that part of the paint dries in place and then stops diffusing/flowing.
(Think here of doing the updates to push asynchronously.)
At each time step, $1-\alpha$ fraction of the paint dries in place, and $\alpha$ fraction of the paint does a lazy random walk.
So, we need to keep track of two quantities: the amount of wet paint (that still moves at the next step) and the amount of dry paint (that is probability mass that won't move again).
If we let 
\begin{itemize}
\item
$p : V \rightarrow \mathbb{R}^{n}$ be a vector that says how much pain is stuck at each vertex.
\item
$r : V \rightarrow \mathbb{R}^{n}$ be a vector saying how much wet paint remains at each vertex.
\end{itemize} 
Then, at $t=0$, $r^0 = \mathcal{X}_u$, and these vectors evolve as
\begin{eqnarray*}
p^{t+1} &=& p^t + \left(1-\alpha\right) r^t \\
r^t &=& \alpha r \hat{W}  .
\end{eqnarray*}
(Note that I have swapped sides where the vector is multiplying, so I think I have inconsistencies here to fix.  Also, I think I refer to $\alpha$ and $1-\alpha$ inconsistently, so that must be fixed too)

Given this, if we let $p^{\infty}$ be where paint is dried at the end of the process, then 
\begin{eqnarray*}
p^{\infty} 
  &=& \left(1-\alpha\right) \sum_{t \ge 0} r^t \\
  &=& \left( 1-\alpha \right) \sum_{t \ge 0} \alpha^t r^0 \hat{W}^t  \\
  &=& \sum_{t \ge 0} \alpha^t \mathcal{X}_u \hat{W}^t
\end{eqnarray*}
This is simply PPR, with a different scaling and $\alpha$.
Recall here that $\hat{W} = \frac{1}{2}I+\frac{1}{2}W$.
Since $\left(I+X\right)^{-1} = \sum_{i=0}^{\infty} X^{i}$, if the spectral radius of $X$ is less than $1$, 
it follows that 
\begin{eqnarray*}
p^{\infty} &=& \left(1-\alpha\right) \mathcal{X}_u \left(I - \alpha \hat{W} \right)^{-1} \\
  &=& \left(1-\alpha \right) \mathcal{X}_u \left( \left(1-\frac{\alpha}{2} \right) I - \frac{\alpha}{2}W \right)^{-1} \\
  &=& \gamma \mathcal{X}_u \left( I - \left( 1-\gamma \right) W \right)^{-1}   ,
\end{eqnarray*}
where the parameter $\gamma$ is defined in terms of the parameter $\alpha$.

Note that in this we don't need to update the above time process but we can ignore time and do it in an asynchronous manner.
That is, we can compute this by solving a linear equation or running a random walk in which one keeps track of two vectors that has this interpretation in terms of diffusing paint.
But, Jeh-Widom and Berkhin note that rather than doing it with these equations, one can instead choose a vertex, say that a fraction $\alpha$ of paint at that vertex is dry, and then push the wet paint to the neighbors according to the above rule.
(That is, we can ignore time and do it asynchronously.)
But this just gives us the push process.
To see this, let $\pi_{p,r}$ be the vector of ``dried paint'' that we eventually compute.  
Then
\begin{eqnarray*}
\pi_{p,r} &=& p + \left(1-\alpha\right) \sum_{t \ge 0} r \alpha^t W^t \\
   &=& p + \left( 1-\alpha \right) r \left( I -\alpha W \right)^{-1} 
\end{eqnarray*}
In this case, the updates we wrote above, written another way, are the following: pick a vertex $u$ and create $p^{\prime}$ and $r^{\prime}$ as follows.
\begin{itemize}
\item
$p^{\prime}(u) = p(u) + \left(1-\alpha\right) r(u)$
\item
$r^{\prime}(u) = 0$
\item
$r^{\prime}(v) = r(v) + \frac{\alpha}{d(u)} r(u)$, for all neighbors $v$ of $u$.
\end{itemize}
Then, it can be shown that
\[
\pi_{p^{\prime},r^{\prime}} = \pi_{p,r}  ,
\]
which is the invariant that we noted before.

So, the idea of computing the approximate PageRank vectors is the following.
\begin{itemize}
\item
Pick a vertex where $r(u)$ is large or largest and distribute the paint according to that rule.
\item
Choose a threshold $\epsilon$ and don't bother to process a vertex if $r(u) \le \epsilon d(u)$.
\end{itemize}
Then, it can be shown that
\begin{lemma}
The process will stop within $\frac{1}{\epsilon\left(1-\alpha\right)}$ iterations.
\end{lemma}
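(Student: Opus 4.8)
The plan is to track the $\ell_1$ norm of the residual vector $r$ and to show it drops by a fixed positive amount at every push while never going negative; since it starts at (at most) $1$, this immediately caps the number of pushes.

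First I would record two invariants that persist throughout the execution. (i) \emph{Nonnegativity of $r$:} the only update that lowers an entry is $r'(u)=0$, and since $r(u)\ge 0$ beforehand this merely zeros out a nonnegative number; every neighbor update $r'(v)=r(v)+\frac{\alpha}{d(u)}r(u)$ adds a nonnegative quantity, so $r\ge 0$ is preserved from the initial $r^0=\mathcal{X}_u\ge 0$. (ii) \emph{Initialization:} $r^0=\mathcal{X}_u$, so $\|r^0\|_1=1$; if instead one begins from a sub-distribution $s$ with $\|s\|_1\le 1$ the bound only improves.

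Next I would compute the change in $\|r\|_1$ induced by a single $\textsc{Push}(u)$. The push removes the entire coordinate mass $r(u)$ from vertex $u$ and redistributes $\frac{\alpha}{d(u)}r(u)$ to each of the $d(u)$ neighbors of $u$, for a total of $\alpha\,r(u)$ returned to $r$. Using nonnegativity to identify $\|r\|_1$ with the coordinate sum $\sum_v r(v)$ (so that no cancellation occurs), this gives
\[
\|r\|_1-\|r'\|_1 = r(u)-\alpha\,r(u)=(1-\alpha)\,r(u).
\]
A push is performed only at a vertex with $r(u)\ge\epsilon\,d(u)$, and since $d(u)\ge 1$ for any vertex that is ever a candidate (an isolated vertex receives no mass and is never selected), we obtain $r(u)\ge\epsilon$, hence each push decreases $\|r\|_1$ by at least $(1-\alpha)\epsilon$. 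Concluding is then a one-line counting argument: after $k$ pushes we have $0\le\|r\|_1\le 1-k(1-\alpha)\epsilon$, which is impossible once $k>\frac{1}{(1-\alpha)\epsilon}$, so the process must halt after at most $\frac{1}{\epsilon(1-\alpha)}$ pushes.

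The argument is essentially self-contained, so there is no serious obstacle; the one place requiring care is the norm bookkeeping — one must use that the $d(u)$ neighbor updates sum to exactly $\alpha\,r(u)$ and that $r\ge 0$ lets us replace $\|r\|_1$ by the coordinate sum, so redistribution cannot create cancellations. A secondary subtlety is the degree lower bound $d(u)\ge 1$, which is where (effectively) connectedness, or at least the absence of isolated vertices from the active support, enters. Finally I would note the parallel for the \textsc{ApproxPR} variant of the push (with the extra laziness factor $\tfrac12$ and with $\alpha$ playing the role of the absorbed fraction): the identical computation there yields a per-push decrease of $\alpha\epsilon d_u\ge\alpha\epsilon$, recovering the $O(1/(\alpha\epsilon))$ running-time bound stated in the theorem above.
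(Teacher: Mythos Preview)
Your proof is correct and follows exactly the approach the paper uses (and sketches earlier for the analogous \textsc{ApproxPR} running-time bound): track $\|r\|_1$, observe it starts at $1$, stays nonnegative, and drops by at least $(1-\alpha)\epsilon$ at every push because the push rule redistributes only an $\alpha$-fraction of $r(u)$ while $r(u)\ge\epsilon d(u)\ge\epsilon$. Your bookkeeping is careful and the remark on the $\alpha\leftrightarrow 1-\alpha$ convention swap between the two versions of the push is exactly the inconsistency the paper itself flags.
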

(Note there is an inconsistency with $\alpha$ and $1-\alpha$, i.e., whether $\alpha$ is the teleporting or non-teleporting probability, that needs to be fixed.)

\subsection{Using this to find sets of low-conductance}

This provides a way to ``explore'' a large graph.
Two things to note about this.
\begin{itemize}
\item
This way is very different that DFS or BFS, which are not so good if the diameter is very small, as it is in many real-world graphs.
\item
The sets of nodes that are found will be different than with a geodesic metric.
In particular, diffusions might get stuck in good conductance sets.
\end{itemize}

Following up on that second point, ACL showed how to use approximate PPR to find sets of low conductance, if they start from a random vector in a set.
Importantly, since they do it with the Approximate PPR vector, the number of nodes that is touched is proportional to the size of the output set.
So, in particular, if there is a set of small conductance, then the algorithm will run very quickly.

Here is the basic idea.
ACL first did it, and a simpler approach/analysis can be found in AC07.
\begin{itemize}
\item
Given $\pi_v$, construct $q_v(u) = \frac{ \pi_v(u) }{ d(u) }$.
\item
Number the vertices, WLOG, such that $q_v(1) \ge q_v(2) \ge \cdots q_v(n)$, and let $S_k = \{ 1,\ldots,k \}$.
\end{itemize}
Then, it can be shown that starting from a random vertex in the set of low conductance, then one of the sets has low conductance.

Computationally, these and related diffusion-based methods use only local information in the graph.
If one can look at the entire graph, then one can get information about global eigenvectors, which approximate sparse cuts.
But, if the graphs are huge, then one might be interested in the following question: 
\begin{itemize}
\item
Given a graph $G=(V,E)$ and a node $v \in V$, find a set of nodes $V$ such that the Cheeger ratio $h_S$ is small.
\end{itemize}
Recall that $h_S = \frac{|E(S,\bar{S})|}{\min \{ \mbox{Vol}(S),\mbox{Vol}(\bar{S}) \} }$.
Before, we were interested in good global clusters, and thus we defined $h_G = \min_{S \subset V} h_S$ and tried to optimize it, e.g., by computing global eigenvectors; but here we are not interested in global eigenvectors and global clusters.

Note that it is not immediately obvious that diffusions and PageRank are related to this notion of local clustering, but we will see that they are.
In particular, we are interested in the following.
\begin{itemize}
\item
Given a seed node $s$, we want to find a small set of nodes that is near $s$ and that is well connected internally but that is less well connected with the rest of the graph.
\end{itemize}
The intuition is that if a diffusion is started in such a cluster, then it is unlikely to leave the cluster since the cluster is relatively poorly-connected with the rest of the graph.
This is often true, but one must be careful, since a diffusion that starts, e.g., at the boundary of a set $S$ can leave $S$ at the first step.
So, the precise statement is going to be rather complicated.

Here is a lemma for a basic diffusion process of a graph $G$.

\begin{lemma}
Let $C \subset V$ be a set with Cheeger ratio $h_C$, and let $t_0$ be a time parameter.
Define $C_{t_0}$ to be a subset of $C$ such that for all $v \in C_{t_0}$ and all $t \le t_0$, we have that $\vec{1}_v^T W^t \vec{1}_{\bar{C}}^T \le t_0 h_C$.
Then, the volume of $C_{t_0}$ is such that 
\[
\mbox{Vol}\left( C_{t_0} \right) \ge \frac{1}{2} \mbox{Vol}\left( C \right)  .
\]
\end{lemma}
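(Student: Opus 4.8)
The plan is to interpret $p_t(v) := \vec{1}_v^T W^t \vec{1}_{\bar C}^T$ as the probability that a (lazy) random walk started at $v$ sits in $\bar C$ at step $t$, and to control the $\pi_C$-weighted average of the escape probabilities, where $\pi_C$ is the degree distribution restricted to $C$, i.e.\ $\pi_C(v) = d_v/\mbox{Vol}(C)$ for $v \in C$ and $0$ otherwise. Since $\pi_C(S) = \mbox{Vol}(S)/\mbox{Vol}(C)$ for any $S \subseteq C$, the conclusion $\mbox{Vol}(C_{t_0}) \ge \tfrac12 \mbox{Vol}(C)$ is exactly $\pi_C(C_{t_0}) \ge \tfrac12$; so it suffices to show the set of $v \in C$ violating the defining inequality has $\pi_C$-mass at most $\tfrac12$, which I would extract from Markov's inequality once the average escape probability is bounded by $\tfrac12 t_0 h_C$.

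The key structural step is a monotonicity claim: for the lazy walk and every $s \ge 0$, $(\pi_C W^s)(v) \le \pi_C(v)$ for all $v \in C$. To prove it, let $\pi$ be the global stationary distribution, $\pi(v) = d_v/\mbox{Vol}(G)$, which satisfies $\pi W = \pi$. Then $\pi_C \le \tfrac{\mbox{Vol}(G)}{\mbox{Vol}(C)}\pi$ componentwise (with equality on $C$), and since $W$ has nonnegative entries it maps nonnegative row vectors to nonnegative row vectors, so $\pi_C W^s \le \tfrac{\mbox{Vol}(G)}{\mbox{Vol}(C)}\pi W^s = \tfrac{\mbox{Vol}(G)}{\mbox{Vol}(C)}\pi$; restricting to $v \in C$ gives exactly $\pi_C(v)$. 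This is the step I expect to be the main obstacle to state cleanly, since it is what converts the isoperimetric hypothesis on $C$ into a bound that is uniform in time rather than one that degrades as the walk mixes.

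Given this, I would bound the escape probability one step at a time. Let $\tau$ be the first time the walk enters $\bar C$; then $p_t(v) \le P_v[\tau \le t]$, because being in $\bar C$ at time $t$ forces $\tau \le t$, and $t \mapsto P_v[\tau \le t]$ is nondecreasing. By a union bound over the step at which the walk first crosses, $P_{\pi_C}[\tau \le t_0] \le \sum_{s=1}^{t_0} P_{\pi_C}[X_{s-1}\in C,\ X_s \in \bar C]$, and for each $s$, writing the one-step lazy-walk crossing probability from $v \in C$ as $\tfrac12 e(v,\bar C)/d_v$ with $e(v,\bar C)$ the number of neighbors of $v$ in $\bar C$, the monotonicity claim gives $P_{\pi_C}[X_{s-1}\in C, X_s\in\bar C] \le \sum_{v\in C}\tfrac{d_v}{\mbox{Vol}(C)}\cdot \tfrac{e(v,\bar C)}{2 d_v} = \tfrac{E(C,\bar C)}{2\,\mbox{Vol}(C)} \le \tfrac{h_C}{2}$, the last inequality holding whether or not $\mbox{Vol}(C)\le\mbox{Vol}(\bar C)$. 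Summing over $s$ yields $P_{\pi_C}[\tau\le t_0] \le \tfrac12 t_0 h_C$.

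Finally, since $P_{\pi_C}[\tau \le t_0] = \sum_{v\in C}\pi_C(v)\,P_v[\tau\le t_0]$, Markov's inequality shows the set $B = \{v\in C : P_v[\tau\le t_0] > t_0 h_C\}$ has $\pi_C(B) \le \tfrac12$. Taking $C_{t_0} = C\setminus B$ gives $\mbox{Vol}(C_{t_0})\ge\tfrac12\mbox{Vol}(C)$, and for every $v\in C_{t_0}$ and every $t\le t_0$ we have $p_t(v)\le P_v[\tau\le t]\le P_v[\tau\le t_0]\le t_0 h_C$, which is the defining property of $C_{t_0}$. (If one instead uses the non-lazy walk $W=D^{-1}A$, the same argument goes through with $h_C$ replaced by $2h_C$ in the threshold, so apart from the monotonicity claim the only real subtlety is bookkeeping the lazy-walk factor of $\tfrac12$.)
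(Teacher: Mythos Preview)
The paper does not actually prove this lemma; it only states it and then remarks informally that ``this is a complicated lemma, but it implies that for vertex set $C \subset V$ with small Cheeger ratio, there are many nodes $v \in C$ such that the probability of a random walk starting at $v$ leaves $C$ is low.'' So there is no argument in the paper to compare against.

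Your proof is correct and is essentially the standard one (going back to Lov\'asz--Simonovits and used in Spielman--Teng). The three pieces---the monotonicity $(\pi_C W^s)(v)\le \pi_C(v)$ on $C$ via domination by the rescaled stationary measure, the step-by-step bound $P_{\pi_C}[X_{s-1}\in C,\,X_s\in\bar C]\le \tfrac12 h_C$ for the lazy walk, and the Markov-inequality extraction of $C_{t_0}$---are all stated accurately, and the chain $p_t(v)\le P_v[\tau\le t]\le P_v[\tau\le t_0]\le t_0 h_C$ gives exactly the uniform-in-$t$ conclusion the lemma asks for. Your remark that the non-lazy walk would lose the factor of $\tfrac12$ is also right; since the lemma as stated does not specify which walk, your handling of both cases is appropriate.
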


This is a complicated lemma, but it implies that for vertex set $C \subset V$ with small Cheeger ratio, that there are many nodes $v \in C$ such that the probability of a random walk starting at $v$ leaves $C$ is~low.

There is a similar result for the PageRank diffusion process.
The statement is as follows.

\begin{lemma}
Let $C \subset V$ have Cheeger ratio $h_C$, and let $\alpha \in (0,1]$.
Then, there is a subset $C_{\alpha} \subseteq C$ with volume $\mbox{Vol}\left( C_{\alpha} \right) \ge \frac{1}{2} \mbox{Vol}\left( C \right)$ such that for all $v \in C_{\alpha}$, the PageRank vector $\pi_{\alpha}(1_V)$ satisfies
\[
\pi_{\alpha}\left(1_V\right) \vec{1}_C^T \ge 1- \frac{h_C}{\alpha}  .
\]
\end{lemma}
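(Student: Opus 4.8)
The plan is to mimic the proof of the analogous lemma for the vanilla random walk diffusion (the one stated just before this lemma), transferring that argument to the PageRank diffusion via the ``infinite sum'' representation of the PageRank vector. Recall that $\pi_{\alpha}(s) = \alpha \sum_{t=0}^{\infty} (1-\alpha)^t \, s W^t$ (up to the usual $\alpha \leftrightarrow 1-\alpha$ bookkeeping that must be fixed consistently once at the start). The quantity we want to lower bound, $\pi_{\alpha}(1_v)\,\vec{1}_C^T$, is the total PageRank mass that node $v$'s personalized PageRank vector places inside $C$; equivalently, $1 - \pi_{\alpha}(1_v)\vec{1}_C^T = \pi_{\alpha}(1_v)\vec{1}_{\bar C}^T$ is the ``escaped'' mass. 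So the first step is to write
\begin{equation*}
\pi_{\alpha}(1_v)\,\vec{1}_{\bar C}^T = \alpha \sum_{t=0}^{\infty} (1-\alpha)^t \, \vec{1}_v^T W^t \vec{1}_{\bar C}^T ,
\end{equation*}
and recognize each term $\vec{1}_v^T W^t \vec{1}_{\bar C}^T$ as exactly the escape probability of the $t$-step lazy walk that appeared in the previous lemma.

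Next I would introduce the right subset $C_\alpha \subseteq C$. For the random walk lemma, the subset $C_{t_0}$ was defined as those $v$ for which $\vec{1}_v^T W^t \vec{1}_{\bar C}^T \le t_0 h_C$ for all $t \le t_0$, and it was shown that $\mathrm{Vol}(C_{t_0}) \ge \tfrac12 \mathrm{Vol}(C)$ by a Markov-inequality / averaging argument over the stationary-weighted distribution on $C$ (the key input being that $\sum_{v \in C} d_v \, \vec{1}_v^T W^t \vec{1}_{\bar C}^T \le \mathrm{Vol}(C) \cdot h_C$ roughly, since each step leaks at most an $h_C$ fraction of the mass in $C$). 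The natural move here is: either invoke the random-walk lemma directly with $t_0 = 1/\alpha$ to get a set $C_{t_0}$ of volume $\ge \tfrac12 \mathrm{Vol}(C)$ on which all the relevant escape probabilities are controlled, or define $C_\alpha$ directly by an analogous condition involving the discounted sum. Then on this good set I would bound
\begin{equation*}
\alpha \sum_{t=0}^{\infty} (1-\alpha)^t \, \vec{1}_v^T W^t \vec{1}_{\bar C}^T \le \alpha \sum_{t=0}^{\infty} (1-\alpha)^t \cdot (\text{per-step leakage bound}) .
\end{equation*}
A cleaner route: use the one-step bound that the walk leaks at most $h_C$ mass per step, so $\vec{1}_v^T W^t \vec{1}_{\bar C}^T \le t \, h_C$ for $v$ deep in $C$ (this is essentially a union bound over $t$ steps), and then $\alpha \sum_t (1-\alpha)^t \, t \, h_C = h_C \cdot (1-\alpha)/\alpha \le h_C/\alpha$, giving the claimed $1 - h_C/\alpha$.

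The volume bound $\mathrm{Vol}(C_\alpha) \ge \tfrac12 \mathrm{Vol}(C)$ should then follow by the same averaging argument as in the random-walk case: the $d_v$-weighted average over $v \in C$ of the escaped PageRank mass is at most (something like) $h_C/\alpha$ times a small constant, so the set of $v$ where the escaped mass exceeds a fixed multiple of that is of volume at most $\tfrac12 \mathrm{Vol}(C)$; the complement within $C$ is $C_\alpha$. One has to be slightly careful that the preference vector for the averaged PageRank computation is the $d_v$-proportional (stationary) measure restricted to $C$, which interacts nicely with $W$'s reversibility, exactly as in the previous lemma's proof.

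The main obstacle, I expect, is getting the constants and the $\alpha$ versus $1-\alpha$ convention lined up so that the bound comes out as exactly $1 - h_C/\alpha$ rather than $1 - c \, h_C/\alpha$ for some stray constant $c$, and making the ``per-step leakage is at most $h_C$'' step fully rigorous for walks started away from the deep interior (the walk started near $\partial C$ can leak immediately, which is precisely why one must pass to the subset $C_\alpha$ and why the factor $\tfrac12$ on the volume appears). A secondary subtlety is that $W$ here should be the lazy walk $\hat W = \tfrac12(I + D^{-1}A)$ for the bipartiteness reasons discussed earlier, and one should check the leakage and reversibility statements hold for $\hat W$; since laziness only slows mixing by a constant factor and preserves the stationary measure, this should go through but deserves a sentence. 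If a self-contained argument proves annoying, the cleanest fallback is to black-box the already-stated random-walk lemma with $t_0 = \lceil 1/\alpha \rceil$, take $C_\alpha := C_{t_0}$, split the PageRank sum at $t_0$, bound the head using the lemma's guarantee and the tail $\alpha\sum_{t > t_0}(1-\alpha)^t \le (1-\alpha)^{t_0} \le e^{-1}$ (absorbing this into the statement, or tightening $t_0$), and verify the arithmetic closes.
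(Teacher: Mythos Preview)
The paper states this lemma without proof; it is quoted from ACL. Your overall strategy---express the PageRank vector as a discounted sum of lazy-walk distributions, average over $v \in C$ with degree weights, bound the averaged escape mass, then apply Markov's inequality to carve out $C_\alpha$---is the ACL argument and is correct.

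One point in your write-up needs sharpening. In your ``cleaner route'' you assert $\vec{1}_v^T W^t \vec{1}_{\bar C} \le t\,h_C$ for $v$ ``deep in $C$'' and then sum over $t$. No natural notion of ``deep'' makes that per-vertex bound hold; the right order is to average first. With $\psi_C(u) = d_u/\mbox{Vol}(C)$ for $u \in C$, one proves the averaged bound $\psi_C W^t(\bar C) \le t\,h_C/2$ directly: the lazy walk satisfies the max-principle $\max_u q_t(u)/d_u \le \max_u q_0(u)/d_u$ (since $(qW)(v)/d_v$ is a convex combination of the values $q(u)/d_u$), so starting from $\psi_C$ we have $q_{t-1}(u)/d_u \le 1/\mbox{Vol}(C)$ for all $t$, and hence each step leaks at most $\tfrac{1}{2} \cdot \tfrac{1}{\mbox{Vol}(C)} \cdot |E(C,\bar C)| = h_C/2$ into $\bar C$. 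Summing the PageRank series against this gives
\[
\pi_\alpha(\psi_C)\,\vec{1}_{\bar C}^T \;\le\; \alpha \sum_{t \ge 0} (1-\alpha)^t \cdot \frac{t\,h_C}{2} \;=\; \frac{(1-\alpha)\,h_C}{2\alpha} \;\le\; \frac{h_C}{2\alpha}.
\]
By linearity, $\pi_\alpha(\psi_C) = \sum_{v \in C} \tfrac{d_v}{\mbox{Vol}(C)}\, \pi_\alpha(\vec{1}_v)$, so the degree-weighted mean of $\pi_\alpha(\vec{1}_v)\vec{1}_{\bar C}^T$ over $v \in C$ is at most $h_C/(2\alpha)$; Markov then gives $\mbox{Vol}\{v \in C : \pi_\alpha(\vec{1}_v)\vec{1}_{\bar C}^T > h_C/\alpha\} < \tfrac{1}{2}\mbox{Vol}(C)$, which is exactly the lemma with the stated constants. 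Your fallback via the random-walk lemma with $t_0 = \lceil 1/\alpha \rceil$ and a head/tail split works too but loses a constant; the direct route above is tight.
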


This is also a complicated lemma, but it implies that for any vertex set $C \subseteq V$ with small $h_C$, that there are many nodes $v \in C$ for which PPR, using $v$ as a start node, is small on nodes outside of~$C$.

These and related results show that there is a relationship or correlation between the Cheeger ratio and diffusion processes, even for ``short'' random walks do not reach the asymptotic limit.
In particular, for a set $C \subset V$ with small $h_C$, it is relatively hard for a diffusion process started within  $C$ to leave $C$.

We can use these ideas to get \emph{local spectral clustering algorithms}.
To do this, we need a method for creating vertex cuts from a PR or random walk vector.
Here is the way.
\begin{itemize}
\item
Say that $\pi_{\alpha}(s)$ is a PPR vector.
\item
Then, create a set $C_S$ by doing a sweep cut over $\pi_{\alpha}(s)$.
\end{itemize}
That is, do the usual sweep cut, except on the vector returned by the algorithm, rather than the leading nontrivial eigenvector of the Laplacian.

Here is one such lemma that can be proved about such a local spectral algorithm.

\begin{lemma}
If $\pi_{\alpha}(s)$ is a PPR vector with $\|s\|_1 \le 1$ and there exists $S \subseteq V$ and a constant $\delta$ such that $\pi_{\alpha}(s) \vec{1}_S - \frac{\mbox{Vol}(S)}{\mbox{Vol}(G)} > \delta$, then 
\[
h_{C_s} < \left( \frac{12\alpha\log\left( 4\sqrt{\mbox{Vol}(S)}/\delta \right)}{\delta} \right)^{1/2} .
\]
\end{lemma}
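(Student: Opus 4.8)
The goal is to bound the conductance of the sweep-cut set $C_s$ obtained from a PPR vector $\pi_{\alpha}(s)$ under the assumption that $\pi_{\alpha}(s)$ places substantially more mass on some set $S$ than the stationary distribution would. The plan is to run a standard ``sweep cut'' analysis, but adapted to the PageRank setting rather than the eigenvector setting, exploiting the defining fixed-point equation $\pi_{\alpha}(s) = (1-\alpha)s + \alpha\,\pi_{\alpha}(s)W$ to control how probability mass leaks across cuts.

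First I would set up the sweep. Order the vertices by decreasing value of $q(u) = \pi_{\alpha}(s)(u)/d_u$, let $S_j$ denote the top-$j$ prefix, and for each threshold consider the ``probability crossing the cut'' quantity. The key algebraic identity is that applying one step of the lazy walk $W$ to $\pi_{\alpha}(s)$ changes it by a controlled amount because of the fixed-point equation: $\pi_{\alpha}(s) - \pi_{\alpha}(s)W = \frac{1-\alpha}{\alpha}(s - \pi_{\alpha}(s))$, so the ``one-step decrease'' of the diffusion is small (order $\frac{1-\alpha}{\alpha}$, i.e.\ order $\alpha$ when $\alpha$ is small after the appropriate reparametrization). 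One then wants to show that if \emph{every} sweep set had conductance at least some value $\phi$, then the ordered vector $q$ could not drop off slowly enough to accommodate the hypothesized excess mass $\delta$ on $S$. This is the contrapositive structure of all Cheeger-type sweep arguments.

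The main technical device I would use is a ``Lovász--Simonovits curve'' argument: define $I(x) = \max$ over sets of volume $x$ of the $\pi_{\alpha}(s)$-mass of that set (a concave, piecewise-linear function of $x \in [0,\mathrm{Vol}(G)]$), show that after one lazy-walk step this curve improves by an amount governed by the minimum sweep-cut conductance $\phi$ — roughly $I(x)$ at step $t+1$ is bounded by the average of $I$ at two points $\frac{x}{2}\,\phi$ apart at step $t$, plus the $O(\alpha)$ fixed-point correction — and then iterate. Because the PPR vector is a geometric average of $W^t s$, the curve bound transfers to $\pi_{\alpha}(s)$ itself. The hypothesis $\pi_{\alpha}(s)\vec{1}_S - \mathrm{Vol}(S)/\mathrm{Vol}(G) > \delta$ forces $I(\mathrm{Vol}(S))$ to stay bounded away from its ``mixed'' value $\mathrm{Vol}(S)/\mathrm{Vol}(G)$, and combining this with the recursive shrinkage after $t \approx \frac{\log(4\sqrt{\mathrm{Vol}(S)}/\delta)}{\alpha\phi^2}$ implicit steps yields a contradiction unless $\phi \le \left(\frac{12\alpha\log(4\sqrt{\mathrm{Vol}(S)}/\delta)}{\delta}\right)^{1/2}$, which is exactly the claimed bound on $h_{C_s}$.

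I expect the main obstacle to be getting the constants right in the Lovász--Simonovits recursion while simultaneously tracking the $O(\alpha)$ additive error coming from the teleportation term — in the pure random-walk case there is no such correction, but here one must carefully choose how many ``virtual'' walk steps to unroll so that the accumulated teleportation error ($\sim t\alpha$, which one wants to keep below $\delta$) is dominated by the geometric improvement ($\sim e^{-t\alpha\phi^2/2}$ type decay). Balancing $t\alpha \lesssim \delta$ against the decay requirement forces the logarithmic factor and the particular power $\frac12$ on $\alpha/\delta$, and checking that these two constraints are mutually compatible with the stated constant $12$ is the delicate part; the rest is the routine concavity bookkeeping of the sweep curve and the observation that a sweep cut over $q = \pi_{\alpha}(s)/d$ is precisely what the curve $I$ is built to analyze.
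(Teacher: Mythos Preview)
The paper does not actually prove this lemma; it merely states it (attributing it implicitly to Andersen--Chung--Lang) and follows it with the remark that ``This too is a complicated lemma, but the point is that if there exists a set $S$ where the PPR is much larger than the stationary distribution \ldots then a sweep cut over $\pi_{\alpha}(s)$ produces a set $C_S$ with low Cheeger ratio.'' So there is no paper proof to compare against.

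That said, your proposed approach is the correct one and matches the argument in the original ACL reference: the Lov\'asz--Simonovits curve $I(x)$, the one-step contraction inequality driven by the minimum sweep-cut conductance, and the extra additive $O(\alpha)$ correction coming from the PageRank fixed-point equation are exactly the ingredients. Your identification of the main tension---balancing the number of unrolled ``virtual'' walk steps so that accumulated teleportation error stays below $\delta$ while the geometric decay beats $\delta/\sqrt{\mathrm{Vol}(S)}$---is the heart of why the bound takes the form $\sqrt{\alpha \log(\cdot)/\delta}$. One small correction: the PPR recursion in ACL is usually written so that the additive slack per step is $2\alpha$ (or $1-\alpha$ depending on convention) rather than $(1-\alpha)/\alpha$, so double-check your manipulation of $\pi_\alpha(s) - \pi_\alpha(s)W$; otherwise the structure of your plan is sound.
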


This too is a complicated lemma, but the point is that if there exists a set $S$ where the PPR is much larger than the stationary distribution $\frac{d}{\mbox{Vol}(G)}$, then a sweep cut over $\pi_{\alpha}(S)$ produces a set $C_S$ with low Cheeger ratio: $O\left( \sqrt{ \alpha \log \left( \mbox{Vol}(S) \right) } \right)$.

This result is for PPR; that there is a similar result for \emph{approximate} PPR, which has a more complicated statement still.

Two things to note.
\begin{itemize}
\item
While the statements of these theoretical results is quite complicated, these methods do very well in practice in many applications.
(We won't discuss this much.)
\item
These algorithms were originally used as primitives for the Laplacian solvers, a topic to which we will return in a few weeks.
\end{itemize}

While one can prove results about the output of these algorithms, one might also be interested in what these algorithms optimize.
Clearly, they \emph{approximately} optimize something related to a local version of the global spectral partitioning objective; so the question is what do they optimize \emph{exactly}.
This is the topic we will turn to next---it will turn out that there are interesting connections here to with implicit regularization ideas.

\newpage

\section{%
(04/07/2015): 
Local Spectral Methods (3 of 4):
An optimization perspective on local spectral methods}

Reading for today.
\begin{compactitem}
\item
``A Local Spectral Method for Graphs: with Applications to Improving Graph Partitions and Exploring Data Graphs Locally,'' in JMLR, by Mahoney, Orecchia, and Vishnoi
\end{compactitem}

Last time, we considered local spectral methods that involve short random walks started at a small set of localized seed nodes.
Several things are worth noting about this.
\begin{itemize}
\item
The basic idea is that these random walks tend to get trapped in good conductance clusters, if there is a good conductance cluster around the seed node.
A similar statement holds for approximate localized random walks, e.g., the ACL push procedure---meaning, in particular, that one can implement them ``quickly,'' e.g., with the push algorithm, without even touching all of the nodes in $G$.
\item
The exact statement of the theorems that can be proven about how these procedures can be used to find good locally-biased clusters is quite technically complicated---since, e.g., one could step outside of the initial set of nodes if one starts near the boundary---certainly the statement is much more complicated than that for the vanilla global spectral method.
\item
The global spectral method is on the one hand a fairly straightforward algorithm (compute an eigenvector or some other related vector and then perform a sweep cut with it) and on the other hand a fairly straightforward objective (optimize the Rayleigh quotient variance subject to a few reasonable constraints).
\item
Global spectral methods often do very well in practice.
\item
Local spectral methods often do very well in practice.
\item
A natural question is: what objective do local spectral methods optimize---exactly, not approximately? Or, relatedly, can one construct an objective that is quickly-solvable and that also comes with similar locally-biased Cheeger-like guarantees? 
\end{itemize}
To this end, today we will present a local spectral ansatz that will have several appealing properties:
\begin{itemize}
\item
It can be computed fairly quickly, as a PPR.
\item
It comes with locally-biased Cheeger-like guarantees.
\item
It has the same form as several of the semi-supervised objectives we discussed.
\item
Its solution touches all the nodes of the input graph (and thus it is not as quick to compute as the push procedure which does not).
\item
The strongly local spectral methods that don't touch all of the nodes of the input graph are basically $\ell_1$-regularized variants of it.
\end{itemize}

\subsection{A locally-biased spectral ansatz}

Here is what we would like to do.
\begin{itemize}
\item
We would like to introduce an ansatz for an objective function for locally-biased spectral graph partitioning.
This objective function should be a locally-biased version of the usual global spectral partitioning objective; its optimum should be relatively-quickly computable; it should be useful to highlight locally-interesting properties in large data graphs; and it should have some connection to the local spectral algorithms that we have been discussing.
\end{itemize}

In addition to having an optimization formulation of locally-biased spectral partitioning methods, there are at least two reasons one would be interested in such an objective.
\begin{itemize}
\item
A small sparse cut might be poorly correlated with the second (or even all) global eigenvectors of $L$, and so it might be invisible to global spectral methods.
\item
We might have exogenous information about a specific region of a large graph in which we are most interested, and so we might want a method that finds clusters near that region, e.g., to do exploratory data analysis.
\end{itemize}

Here is the approach we will take.
\begin{itemize}
\item
We will start with the usual global spectral partitioning objective function and add to it a certain locality constraint.
\item
This program will be a non-convex problem (as is the global spectral partitioning problem), but its solution will be computable as a linear equation that is a generalization of the PR spectral ranking method.
\item
In addition, we will show that it can be used to find locally biased partitions near an input seed node, it has connections with the ACL push-based local spectral method, etc.
\end{itemize}

Let's set notation.
The Laplacian is $L=D-A$; and the normalized Laplacian is $\mathcal{L}=D^{-1/2}LD^{-1/2}$.
The degree-weighted inner product is given by $x^TDy = \sum_{i=1}^{n} x_iy_id_i$.
In this case, the weighted complete graph is given by 
\[
A_{K_n} = \frac{1}{\mbox{Vol}(G)}D11^TD ,
\]
in which case $D_{K_n}=D_G$ and thus
\[
L_{K_n}=D_{K_n} - A_{K_n} = D_G - \frac{1}{\mbox{Vol}(G)}D11^TD   .
\]

Given this notation, see the left panel of Figure~\ref{fig:spectral} for the usual spectral program $\mathsf{Spectral}(G)$, and see the right panel of Figure~\ref{fig:spectral} for \textsf{LocalSpectral}$(G,s,\kappa)$, a locally-biased spectral program.
(Later, we'll call the \textsf{LocalSpectral} objective ``MOV,'' to compare and contrast it with the ``ACL'' push~procedure.)

\begin{center}
\begin{figure}
\begin{minipage}{0.5\textwidth}
\begin{alignat*}{4}
  &\text{min} & x^T  L_{G} x \\
                     &\text{s.t.} & x^T D_{G} x = 1  \\
                     &            & (x^T D_{G} 1)^2 = 0  \\
                  &            & x \in \mathbb{R}^V
\end{alignat*}
\end{minipage}
\begin{minipage}{0.5\textwidth}
\begin{alignat*}{4}
 &\text{min} & x^T  L_{G} x               \\
                     &\text{s.t.} & x^T  D_{G} x = 1           \\
                   &            & (x^T D_{G} 1)^2 = 0  \\
                     &            &(x^T D_{G} s) ^2 \geq \kappa   \\
                  &            & x \in \mathbb{R}^V
\end{alignat*}
\end{minipage}
\caption{Global and local spectral optimization programs.
Left: The usual spectral program $\mathsf{Spectral}(G)$.
Right: A locally-biased spectral program
\textsf{LocalSpectral}$(G,s,\kappa)$.
In both cases, the optimization variable is the vector $x \in \mathbb{R}^{n}$.
}
\label{fig:spectral}
\end{figure}
\end{center}

In the above, we assume WLOG that 
\[
s \text{ is such that } 
   \left\{ 
      \begin{array}{l l}
                    s^TDs=1 &  \\
                    s^TD1=0 & 
      \end{array}                
   \right.     .
\]
This ``WLOG'' just says that one can subtract off the part of $s$ along the all-ones vector; we could have parametrized the problem to include this component and gotten similar results to what we will present below, had we not done this.
Note that $s$ can actually be any vector (that isn't in the span of the all-ones vector); but it is convenient to think of it as an indicator vector of a small ``seed set'' of nodes $S \subset V$.  

The constraint $\left(x^TDs\right)^2\ge\kappa$ says that the projection of the solution $x$ is at least $\sqrt{\kappa}$ in absolute value, where $\kappa\in(0,1)$.
Here is the interpretation of this constraint.
\begin{itemize}
\item
The vector $x$ must be in a spherical cap centered at $s$ with angle at most $\mbox{arccos}\left(\sqrt{\kappa}\right)$ from $s$.
\item
Higher values of $\kappa$ correspond to finding a vector that is more well-correlated with the seed vector.
While the technical details are very different than with strongly local spectral methods such as ACL, informally one should think of this as corresponding to shorter random walks or, relatedly, higher values of the teleportation parameter that teleports the walk back to the original seed set of nodes.
\item
If $\kappa=0$, then there is no correlation constraint, in which case we recover \textsf{Spectral}($G$).
\end{itemize}

\subsection{A geometric notion of correlation}

Although \textsf{LocalSpectral} is just an objective function and no geometry is explicitly imposed, there is a geometric interpretation of this in terms of a geometric notion of correlation between cuts in $G$.
Let's make explicit the geometric notion of correlation between cuts (or, equivalently, between partitions, or sets of nodes) that is used by \textsf{LocalSpectral}.

Given a cut $(T,\bar{T})$ in a graph $G=(V,E)$, a natural vector in $\mathbb{R}^{n}$ to associate with it is its indicator/characteristic vector, in which case the correlation between a cut $(T,\bar{T})$ and another cut $(U,\bar{U})$ can be captured by the inner product of the characteristic vectors of the two cuts.
Since we are working on the space orthogonal to the degree-weighted all-ones vector, we'll do this after we remove from the characteristic vector its projection along the all-ones vector.
In that case, again, a notion of correlation is related to the inner product of two such vectors for two cuts.
More precisely, given a set of nodes $T \subseteq V$, or equivalently a cut
$(T,\bar{T})$, one can define the unit vector $s_{T}$~as
\[
s_T \defeq  \sqrt{\frac{\vol(T)\vol(\bar{T})}{2m}} \; \left(\frac{1_T}{\vol(T)} - \frac{1_{\bar{T}}}{\vol(\bar{T})}\right)   ,
\]
in which case
\[
s_{T}(i) = \left\{ \begin{array}{ll}
                      \sqrt{\frac{\vol(T)\vol(\bar{T})}{2m}} \cdot \frac{1}{\vol(T)}  & \mbox{if $i \in T $} \\
                      - \sqrt{\frac{\vol(T)\vol(\bar{T})}{2m}} \cdot \frac{1}{\vol(\bar{
T})}    & \mbox{if $i \in \bar{T}$}   
                   \end{array}
           \right.    .
\]   

Several observations are immediate from this definition.
\begin{itemize}
\item
One can replace $s_{T}$ by $s_{\bar{T}}$ and the correlation remains the same with any other set, and so this is well-defined.
Also, $s_T = - s_{\bar{T}}$; but since here we only consider quadratic functions of $s_T,$ we can consider both $s_T$ and $s_{\bar{T}}$ to be representative vectors for the cut $(T, \bar{T}).$ 
\item
Defined this way, it immediately follows that $ s_T^T D_G 1 = 0$ and that $ s_T^T D_G s_T = 1$.
Thus, $s_T \in \mathcal{S}_{D}$ for $T \subseteq V$, where we denote by $\mathcal{S}_{D}$ the set of vectors $\{x \in \mathbb{R}^V: x^T D_G 1 = 0\}$; and $s_T$ can be seen as an appropriately normalized version of the vector consisting of the uniform distribution over $T$ minus the uniform
distribution over $\bar{T}$.
\item
One can introduce the following measure of correlation between two sets of nodes, or equivalently between two cuts, say a cut $(T, \bar{T})$ and a cut $(U, \bar{U})$:
\[
K(T,U) \defeq ( s_T D_G s_U )^2    .
\]
Then it is easy to show that:
$K(T,U) \in [0,1]$;
$K(T,U) = 1$ if and only if $T=U$ or $\bar{T}=U$;
$K(T,U) = K(\bar{T}, U)$; and
$K(T,U) = K(T, \bar{U})$.
\item
Although we have described this notion of geometric correlation in terms of vectors of the form $s_T \in \mathcal{S}_{D}$ that represent partitions $(T,\bar{T})$, this correlation is clearly well-defined for other vectors $s \in \mathcal{S}_{D}$ for which there is not such a simple interpretation in terms of cuts.
\end{itemize}

Below we will show that the solution to \textsf{LocalSpectral} can be characterized in terms of a PPR vector.
If we were interested in objectives that had solutions of different forms, e.g., the form of a heat kernel, then this would correspond to an objective function with a different constraint, and this would then imply a different form of correlation.

\subsection{Solution of \textsf{LocalSpectral}}

Here is the basic theorem characterizing the form of the solution of \textsf{LocalSpectral}.

\begin{theorem}[Solution Characterization]
\label{thm:pagerank}
Let $s \in \mathbb{R}^{n}$ be a seed vector such that $s^T D_G 1 =0$, $s^T D_G s = 1$, and $s^T D_G v_2 \neq 0$, where $v_{2}$ is the second generalized eigenvector of $L_G$ with respect to $D_G$.
In addition, let $1> \kappa \geq 0$ be a correlation parameter, and let $x^{\star}$ be an optimal solution to \textsf{LocalSpectral}$(G,s,\kappa)$.
Then, there exists some $\gamma \in (-\infty, \lambda_{2}(G))$ and a $c \in [0, \infty]$ such that
\begin{equation}
\label{eqn:xstar}
 x^{\star} = c(L_{G}-\gamma D_G)^{+} D_G s.
\end{equation}
\end{theorem}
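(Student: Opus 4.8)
The plan is to solve \textsf{LocalSpectral}$(G,s,\kappa)$ using Lagrangian duality / KKT conditions, exploiting the fact that, even though the feasible set is nonconvex, the objective is quadratic and the only ``bad'' constraint is the correlation constraint $(x^T D_G s)^2 \geq \kappa$, which becomes tight at the optimum (otherwise we are just solving $\mathsf{Spectral}(G)$, whose solution $v_2$ satisfies $s^T D_G v_2 \neq 0$ but generically fails to meet the correlation requirement; this is where the hypothesis $s^T D_G v_2 \neq 0$ is used to guarantee the constraint is active and that a meaningful interpolation exists). First I would form the Lagrangian
\[
\mathcal{L}(x,\alpha,\beta,\gamma) = x^T L_G x - \alpha\left(x^T D_G x - 1\right) - \beta\left(x^T D_G 1\right)^2 - \gamma\left((x^T D_G s)^2 - \kappa\right),
\]
and set the gradient in $x$ to zero. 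Using $s^T D_G 1 = 0$ and working on the subspace orthogonal (in the $D_G$ inner product) to $\vec 1$, the stationarity condition reads $L_G x = \alpha D_G x + \gamma (x^T D_G s) D_G s$, i.e. $(L_G - \alpha D_G) x = \gamma (x^T D_G s) D_G s$. Treating $\gamma(x^T D_G s)$ as a scalar $c'$, this gives $x = c' (L_G - \alpha D_G)^{+} D_G s$, which is already the claimed form once we rename $\alpha$ to $\gamma$ in the theorem statement (an unfortunate notational collision in the source) and absorb constants into $c$.

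The next step is to pin down the range of the multiplier. I would argue that $\alpha < \lambda_2(G)$: if $\alpha \geq \lambda_2(G)$, then $L_G - \alpha D_G$ is not positive semidefinite on the subspace $\{x : x^T D_G 1 = 0\}$, and one can show (by pushing mass along the $v_2$ direction, or along whichever generalized eigendirection has eigenvalue $\le \alpha$) that the objective $x^T L_G x$ is unbounded below subject to the two equality constraints and the inequality constraint, contradicting the existence of a finite optimum. So the genuine optimum forces $\alpha \in (-\infty, \lambda_2(G))$, on which interval $(L_G - \alpha D_G)$ restricted to the relevant subspace is positive definite and hence genuinely invertible there, making the pseudoinverse expression well-defined. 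I would then handle the two boundary regimes — $c = 0$ (degenerate, corresponds to the constraint being vacuous in a limiting sense) and $c = \infty$ (corresponds to $\alpha \to \lambda_2(G)$, where the solution aligns with $v_2$) — to justify the closed interval $c \in [0,\infty]$; these are limiting cases that need a short continuity/compactness argument rather than the generic KKT computation.

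The main obstacle I expect is the nonconvexity: KKT conditions are only necessary, not sufficient, so after producing the candidate family $x(\alpha) = c (L_G - \alpha D_G)^+ D_G s$ I need to verify that some member of this family is actually the global minimizer and to establish the monotone relationship between $\alpha$ (equivalently $\gamma$) and the correlation value $\kappa$ — i.e., that as $\alpha$ ranges over $(-\infty,\lambda_2(G))$ the quantity $(x(\alpha)^T D_G s)^2$ (after renormalizing so $x^T D_G x = 1$) sweeps monotonically through $(0,1)$, so that for each target $\kappa$ there is a unique $\alpha$. I would do this by expanding $s = \sum_{i\geq 2} \xi_i v_i$ in the generalized eigenbasis (with $\xi_i = s^T D_G v_i$, and $\xi_2 \neq 0$ by hypothesis), writing $(L_G - \alpha D_G)^+ D_G s = \sum_{i\geq 2} \frac{\xi_i}{\lambda_i - \alpha} v_i$, and computing the normalized correlation as a ratio of sums over $1/(\lambda_i - \alpha)$; differentiating in $\alpha$ shows monotonicity, and the limits $\alpha \to -\infty$ and $\alpha \to \lambda_2(G)^-$ give the endpoints $0$ and $1$ (the latter using $\xi_2 \neq 0$ so that the $v_2$ term dominates). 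Combining the monotone parametrization with a direct check that the objective value $x(\alpha)^T L_G x(\alpha)$ is minimized at the $\alpha$ matching $\kappa$ completes the argument.
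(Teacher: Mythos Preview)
Your approach is genuinely different from the paper's. The paper does not work with KKT conditions on the nonconvex problem directly; instead it lifts \textsf{LocalSpectral} to a convex SDP by replacing $x$ with $X = xx^T \succeq 0$, verifies Slater's condition so that strong duality holds for the SDP, and then uses complementary slackness together with a rank argument to show that the optimal $X^\star$ is rank one. Since the SDP is a relaxation and its optimum happens to be rank one, that rank-one factor $x^\star$ is optimal for the original nonconvex problem, and the form \eqref{eqn:xstar} drops out of the complementary slackness condition $X^\star \circ (L_G - \alpha^\star L_{K_n} - \beta^\star (D_G s)(D_G s)^T) = 0$. The bound $\alpha^\star \le \lambda_2(G)$ comes for free by testing the dual feasibility constraint $L_G \succeq \alpha L_{K_n} + \beta (D_G s)(D_G s)^T$ against $v_2$.

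Your direct-KKT route is plausible in outline, but two points need repair. First, the sentence ``the objective $x^T L_G x$ is unbounded below subject to the two equality constraints and the inequality constraint'' is false as written: $x^T L_G x \ge 0$ always, and the feasible set is compact. What you presumably mean is that the \emph{Lagrangian} is unbounded below in $x$ when $\alpha \ge \lambda_2(G)$, or that the second-order necessary condition fails; either way, the argument as stated does not establish $\alpha < \lambda_2(G)$, and the correct argument (e.g.\ via second-order KKT on the tangent space, or via the dual feasibility test the paper uses) needs to be supplied.

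Second, and more seriously, the nonconvexity gap you flag is real and your proposed fix does not close it. Showing that the family $x(\alpha)$ sweeps through all correlation values monotonically only shows you have a one-parameter family of \emph{feasible} points hitting every $\kappa$; it does not show that $x(\alpha)$ is the \emph{global} minimizer among all feasible points at that $\kappa$. Your final clause ``a direct check that the objective value \ldots\ is minimized at the $\alpha$ matching $\kappa$'' is exactly the hard step, and you have not said how you would do it. Any argument that closes this gap --- showing that the Lagrangian dual value equals the primal, or that no feasible $x$ beats $x(\alpha)$ --- is essentially equivalent in content to the SDP strong duality the paper invokes. The SDP lifting is not a detour here; it is precisely the device that turns a nonconvex problem with only necessary KKT conditions into a convex one where stationarity is sufficient.
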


Before presenting the proof of this theorem, here are several things to note.
\begin{itemize}
\item
$s$ and $\kappa$ are the parameters of the program; $c$ is a normalization factor that rescales the norm of the solution vector to be $1$ (and that can be computed in linear time, given the solution vector); and $\gamma$ is implicitly defined by $\kappa$, $G$, and~$s$.
\item
The correct setting of $\gamma$ ensures that $(s^T D_{G} x^\star)^2 = \kappa,$ i.e., that $x^\star$ is found exactly on the boundary of the feasible region.
\item
$x^\star$ and $\gamma$ change as $\kappa$ changes.
In particular, as $\kappa$ goes to $1$, $\gamma$ tends to $-\infty$ and $x^\star$ approaches $s$; conversely, as $\kappa$ goes to $0$,  $\gamma$ goes to $\lambda_2(G)$ and $x^\star$ tends towards $v_2$, the global eigenvector.
\item
For a fixed choice of $G$, $s$, and $\kappa$, an $\epsilon$-approximate solution to \textsf{LocalSpectral} can be computed in time
$\tilde{O}\left(\frac{m}{\sqrt{\lambda_{2}(G)}} \cdot \log(\frac{1}{\epsilon })\right)$
using the Conjugate Gradient Method; or in time
$\tilde{O}\left(m \log(\frac{1}{\epsilon })\right)$ using the Spielman-Teng
linear-equation solver (that we will discuss in a few weeks), where the $\tilde{O}$ notation hides $\log\log(n)$ factors.
This is true for a fixed value of $\gamma$, and the correct setting of $\gamma$ can be found by binary search.

While that is theoretically true, and while there is a lot of work recently on developing practically-fast nearly-linear-time Laplacian-based solvers, this approach might not be appropriate in certain applications.
For example, in many applications, one has precomputed an eigenvector decomposition of $L_G$, and then one can use those vectors and obtain an approximate solution with a small number of inner products.
This can often be much faster in~practice.
\end{itemize}

In particular, solving \textsf{LocalSpectral}  is \emph{not} ``fast'' in the sense of the original local spectral methods, i.e., in that the running time of those methods depends on the size of the output and doesn't depend on the size of the graph.
But the running time to solve \textsf{LocalSpectral} \emph{is} fast, in that its solution depends essentially on computing a leading eigenvector of a Laplacian $L$ and/or can be solved with ``nearly linear time'' solvers that we will discuss in a few weeks.

While Eqn.~(\ref{eqn:xstar}) is written in the form of a linear equation, there is a close connection between the solution vector $x^\star$ and the Personalized PageRank (PPR) spectral ranking procedure.
\begin{itemize}
\item
Given a vector $s \in \mathbb{R}^{n}$ and a \emph{teleportation} constant $\alpha> 0$, the PPR vector can be written as 
\[
\mbox{pr}_{\alpha,s}=\left(L_{G}+\frac{1-\alpha}{\alpha}D_{G}\right)^{-1}D_{G}s  .
\]
By setting $\gamma = -\frac{1-\alpha}{\alpha}$, one can see that the optimal solution to \textsf{LocalSpectral} is proved to be a generalization PPR.
\item
In particular, this means that for high values of the correlation parameter $\kappa$ for which the corresponding $\gamma$ satisfies $\gamma < 0$, the optimal solution to \textsf{LocalSpectral} takes the form of a PPR vector.
On the other hand, when $\gamma \geq 0,$ the optimal solution to \textsf{LocalSpectral} provides a smooth way of transitioning from the PPR vector to the global second eigenvector~$v_2$.
\item
Another way to interpret this is to say that for values of $\kappa$ such that $\gamma <0$, then one could compute the solution to \textsf{LocalSpectral} with a random walk or by solving a linear equation, while for values of $\kappa$ for which $\gamma>0$, one can only compute the solution by solving a linear equation and not by performing a random walk.
\end{itemize}
About the last point, we have talked about how random walks compute regularized or robust versions of the leading nontrivial eigenvector of $L$---it would be interesting to characterize an algorithmic/statistical tradeoff here, e.g., if/how in this context certain classes of random walk based algorithms are less powerful algorithmically than related classes of linear equation based algorithms but that they implicitly compute regularized solutions more quickly for the parameter values for which they are able to compute solutions.

\subsection{Proof of Theorem~\ref{thm:pagerank}}

Here is an outline of the proof, which essentially involves ``lifting'' a rank-one constraint to obtain an SDP in order to get strong duality to apply.
\begin{itemize}
\item
Although \textsf{LocalSpectral} is not a convex optimization problem, it can be relaxed to an SDP that is convex.
\item
From strong duality and complementary slackness, the solution to the SDP is rank one.
\item
Thus, the vector making up the rank-one component of this rank-one solution is the solution to \textsf{LocalSpectral}.
\item
The form of this vector is of the form of a PPR.
\end{itemize}

Here are some more details.
Consider the primal $\textsf{SDP}_p$ and dual $\textsf{SDP}_d$ SDPs, given in the left panel and right panel, respectively, of Figure~\ref{fig:sdp}.

\begin{center}
\begin{figure}
\begin{minipage}{0.5\textwidth}
\begin{alignat*}{4}
\quad&  &\text{minimize} \quad &&  L_{G} \circ  X\\
  &  &\text{s.t.} \quad &&  L_{K_{n}} \circ  X = 1\ \\
  &  & &&  ( D_{G} s)( D_{G} s)^T \circ  X \geq \kappa \\
  & & & & X \succeq 0
\end{alignat*}
\end{minipage}
\begin{minipage}{.25\textwidth}
\begin{alignat*}{4}
\quad&  &\text{maximize} \quad && \alpha + \kappa \beta\\
  &  &\text{s.t.} \quad &&  L_{G}  \succeq  \alpha  L_{K_{n}} + \beta ( D_{G} {s})( D_{G} {s})^T \ \\
  &  & &&  \beta \geq 0 \\
  & & & & \alpha \in \mathbb{R}
\end{alignat*}
\end{minipage}
\caption{Left:  Primal SDP relaxation of \textsf{LocalSpectral}$(G,s, \kappa)$: $\textsf{SDP}_{p}(G,s,\kappa)$.  
For this primal, the optimization variable is $X \in \mathbb{R}^{n \times n}$ such that $X$ is SPSD.
Right: Dual SDP relaxation of \textsf{LocalSpectral}$(G,s, \kappa)$: $\textsf{SDP}_{d}(G,s,\kappa)$.
For this dual, the optimization variables are $\alpha,\beta\in\mathbb{R}$.}
\label{fig:sdp}
\end{figure}
\end{center}

Here are a sequence of claims.

\begin{claim}
The primal SDP, $\textsf{SDP}_p$ is a relaxation of \textsf{LocalSpectral}.
\end{claim}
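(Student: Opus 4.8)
The plan is to show that any feasible vector $x$ for \textsf{LocalSpectral}$(G,s,\kappa)$ gives rise to a feasible matrix $X$ for $\textsf{SDP}_p(G,s,\kappa)$ achieving the same objective value, from which it follows immediately that the SDP optimum is at most the \textsf{LocalSpectral} optimum, i.e., that the SDP is a relaxation. First I would take the natural lift: given $x \in \mathbb{R}^{V}$ feasible for \textsf{LocalSpectral}, set $X = xx^T$, which is manifestly symmetric PSD, so the constraint $X \succeq 0$ holds.

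Next I would translate each of the three scalar constraints of \textsf{LocalSpectral} into the corresponding matrix-inner-product constraint of $\textsf{SDP}_p$, using the elementary identity $A \circ (xx^T) = x^T A x$ for any symmetric matrix $A$. The normalization constraint $x^T D_G x = 1$, together with $x^T D_G 1 = 0$, gives $x^T L_{K_n} x = x^T D_G x - \frac{1}{\vol(G)} (x^T D_G 1)^2 = 1$, so $L_{K_n} \circ X = 1$; here I use the explicit form $L_{K_n} = D_G - \frac{1}{\vol(G)} D_G 1 1^T D_G$ recorded earlier in the notes. The correlation constraint $(x^T D_G s)^2 \ge \kappa$ becomes $\big((D_G s)(D_G s)^T\big) \circ X = (s^T D_G x)^2 \ge \kappa$ since $(D_G s)(D_G s)^T \circ xx^T = (x^T D_G s)(s^T D_G x)$. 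Finally, the objective matches: $L_G \circ X = x^T L_G x$. Hence $X$ is feasible for $\textsf{SDP}_p$ with the same objective value, so $\textsf{OPT}(\textsf{SDP}_p) \le \textsf{OPT}(\textsf{LocalSpectral})$, establishing the relaxation claim.

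There is essentially no hard part here — the only thing to be slightly careful about is the handling of the $x^T D_G 1 = 0$ constraint, which in \textsf{LocalSpectral} is written as $(x^T D_G 1)^2 = 0$: I should note that this is absorbed into the $L_{K_n} \circ X = 1$ constraint (the $L_{K_n}$ operator annihilates exactly the all-ones direction in the $D_G$-inner product), so that the primal SDP does not need a separate constraint for it. One subtlety worth a remark is that the SDP has genuinely more feasible points — any PSD $X$, not just rank-one ones — and so a priori the inequality could be strict; the content of the \emph{subsequent} claims (strong duality plus complementary slackness forcing the optimal $X$ to be rank one) is what upgrades this to an equality, but that is not needed for the present statement. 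I would keep the proof to three or four lines, citing the identity $A \circ (xx^T) = x^T A x$ and the formula for $L_{K_n}$, and conclude.
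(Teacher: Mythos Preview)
Your proposal is correct and takes essentially the same approach as the paper: lift a feasible $x$ to $X = xx^T$ and observe that feasibility and objective value carry over. The paper's proof is in fact just two sentences stating exactly this, while you have spelled out the constraint-by-constraint verification (including the nice observation that the $x^T D_G 1 = 0$ condition is absorbed into $L_{K_n} \circ X = 1$ via the formula for $L_{K_n}$); your extra detail is accurate and doesn't diverge from the paper's argument.
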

\begin{proof}
Consider $x\in\mathbb{R}^{n}$, a feasible vector for \textsf{LocalSpectral}.
Then, the SPSD matrix $X=xx^T$ is feasible for \textsf{SDP}$_p$.
\end{proof}

\begin{claim}
Strong duality holds between $\textsf{SDP}_p$ and $\textsf{SDP}_d$.
\end{claim}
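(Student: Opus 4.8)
The plan is to establish strong duality between $\textsf{SDP}_p$ and $\textsf{SDP}_d$ by invoking Slater's condition, i.e., by exhibiting a strictly feasible point for one of the two problems. The standard approach for SDP strong duality requires that the primal (or dual) has a feasible point in the interior of the PSD cone, and then conic duality theory gives zero duality gap with attainment. So the first thing I would do is recall the precise form of Slater's condition for SDPs: if there exists $X \succ 0$ satisfying the linear equality constraint $L_{K_n} \circ X = 1$ and satisfying the inequality constraint $(D_G s)(D_G s)^T \circ X \geq \kappa$ strictly, then strong duality holds and the dual optimum is attained.

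Next I would construct such a strictly feasible $X$ explicitly. The natural candidate is a convex combination of $ss^T$ (which ``uses up'' the correlation constraint) and a full-rank matrix supported on the space orthogonal to $D_G \vec{1}$, scaled to meet the normalization $L_{K_n} \circ X = 1$. Concretely, let $P = I - \frac{1}{\vol(G)} D_G^{1/2}\vec{1}\vec{1}^T D_G^{1/2}$ (or the appropriate projector in the $D_G$-inner-product onto $\vec{1}^\perp$), and consider $X_0 = (1-\epsilon)\, s s^T + \epsilon\, Z$, where $Z$ is a positive-definite matrix on $\vec{1}^\perp$ with the right trace-type normalization so that $L_{K_n}\circ X_0 = 1$; here $\epsilon \in (0,1)$ is chosen small. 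One checks that $X_0 \succeq 0$, and in fact $X_0 \succ 0$ restricted to the relevant subspace, and that $(D_G s)(D_G s)^T \circ X_0 = (1-\epsilon)\cdot 1 + \epsilon\, (s^T D_G Z D_G s) > \kappa$ for $\epsilon$ small since $\kappa < 1$ strictly (this is exactly where the hypothesis $\kappa < 1$, as opposed to $\kappa \leq 1$, is used). I would need to be a little careful about the fact that feasible $X$'s live on the subspace $\vec{1}^\perp$ (in the $D_G$-metric), so ``strictly feasible'' should be interpreted relative to that subspace — i.e., one works with the faces of the PSD cone and Slater's condition for the restricted problem — but this is a routine adjustment.

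I expect the main obstacle to be precisely this subspace subtlety: the primal feasible set is not full-dimensional in the PSD cone (every feasible $X$ must have $D_G^{1/2}\vec{1}$ in its kernel, since $L_{K_n} \circ X = 1$ together with $L_G \circ X$ finite and the structure of the relaxation forces this, or more precisely the relaxation is only meaningful on that subspace), so a naive application of Slater fails and one must either restrict attention to the subspace and apply Slater there, or argue via a perturbation/limiting argument. The cleanest fix is to observe that $\textsf{SDP}_p$ can be rewritten as an SDP over the smaller cone of PSD matrices on $\vec{1}^\perp$, and that $X_0$ above is in the relative interior of the feasible set of that reformulated problem; then strong duality for the reformulated pair follows from standard conic duality, and one checks the dual of the reformulation matches $\textsf{SDP}_d$. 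I would then conclude by noting that strong duality, combined with the fact that $\textsf{SDP}_p$ is bounded below (its value is at least $\lambda_2$-related quantities, in any case finite since $L_G \circ X \geq 0$), gives attainment of both optima and equality of values, which is what the claim asserts and what the subsequent complementary-slackness argument (for rank-oneness of $X^\star$) requires.
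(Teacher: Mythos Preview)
Your approach is essentially the same as the paper's---verify Slater's condition for $\textsf{SDP}_p$---but the paper does it in one line by taking $X = ss^T$ and observing that $(D_G s)(D_G s)^T \circ (ss^T) = (s^T D_G s)^2 = 1 > \kappa$.

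Your worry about a subspace constraint is misplaced: $\textsf{SDP}_p$ as stated does \emph{not} force feasible $X$ to have $D_G^{1/2}\vec{1}$ (or $\vec{1}$) in its kernel. The two vector constraints $x^T D_G x = 1$ and $(x^T D_G \vec{1})^2 = 0$ from \textsf{LocalSpectral} are combined into the single linear equality $L_{K_n}\circ X = 1$ in the SDP relaxation, and that equality can certainly be satisfied by full-rank $X$. So there is no need to restrict to a face of the PSD cone or to reformulate the problem on a subspace. If you want a genuinely positive-definite Slater point (which is the technically correct requirement---and which, to be fair, the paper's rank-one choice $ss^T$ does not actually provide), the simple fix is $X = c\,(ss^T + \epsilon I)$ for small $\epsilon > 0$ and $c$ chosen so that $L_{K_n}\circ X = 1$; for $\epsilon$ small enough this is strictly feasible and positive definite, with no subspace gymnastics required.
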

\begin{proof}
The program $\textsf{SDP}_p$ is convex, and so it suffices to check that Slater's constraint qualification conditions hold for $\textsf{SDP}_p$.
To do so, consider $X=ss^T$.
Then, 
\[
\left(D_Gs\right)\left(D_Gs\right)^T \circ ss^T = \left(s^TD_Gs\right)^2 = 1 > \kappa  .
\]
\end{proof}

\begin{claim}
The following feasibility and complementary slackness conditions are sufficient for a primal-dual pair $X^{*}$, $\alpha^{*}$, $\beta^{*}$ to be an optimal solution.
The feasibility conditions are:
\begin{eqnarray}
\nonumber
  L_{K_{n}} \circ  X^\star &=& 1 , \label{F1} \\
\nonumber
 ( D_{G} {s})( D_{G} {s})^T \circ  X^\star &\geq& \kappa ,  \label{F2}  \\
  L_{G}- \alpha^\star  L_{K_{n}}  - \beta^\star ( D_{G} {s})( D_{G} {s})^T &\succeq& 0 , \mbox{ and} \label{F3} \\
\nonumber
 \beta^\star &\geq& 0  \label{F4}  ,
\end{eqnarray}
and the complementary slackness conditions are:
\begin{eqnarray}
\nonumber
 \alpha^\star(  L_{K_{n}}  \circ  X^\star - 1) &=& 0 , \label{C1} \\
 \beta^\star ( ( D_{G} {s})( D_{G} {s})^T \circ  X^\star - \kappa) &=& 0 \label{C2} , \mbox{ and} \\
 X^\star \circ  ( L_{G}- \alpha^\star  L_{K_{n}}  - \beta^\star ( D_{G} {s})( D_{G} {s})^T ) &=& 0 \label{C3}  .
\end{eqnarray}
\end{claim}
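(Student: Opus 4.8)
The plan is to derive this as the standard KKT-sufficiency argument for SDPs, specialized to $\textsf{SDP}_p$ and $\textsf{SDP}_d$: I would first record weak duality, then use the complementary slackness conditions to show the weak-duality inequality is tight for the given pair, and finally conclude optimality of both. Note that, for this direction, strong duality (the previous claim) is not actually needed; only weak duality is used, although having strong duality makes the overall picture complete.

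First I would establish weak duality: for any primal-feasible $X$ and any dual-feasible $(\alpha,\beta)$, both $X \succeq 0$ and $L_G - \alpha L_{K_n} - \beta (D_G s)(D_G s)^T \succeq 0$, so their matrix inner product is nonnegative, i.e.
\[
0 \le \bigl(L_G - \alpha L_{K_n} - \beta (D_G s)(D_G s)^T\bigr)\circ X = L_G \circ X - \alpha\,(L_{K_n}\circ X) - \beta\,\bigl((D_G s)(D_G s)^T\circ X\bigr).
\]
Using the primal constraint $L_{K_n}\circ X = 1$, the dual constraint $\beta \ge 0$, and the primal constraint $(D_G s)(D_G s)^T\circ X \ge \kappa$, this rearranges to $L_G \circ X \ge \alpha + \kappa\beta$. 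Hence every dual-feasible objective value lower-bounds every primal-feasible objective value.

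Next I would show that for the specified pair the weak-duality inequality holds with equality, i.e. $L_G\circ X^\star = \alpha^\star + \kappa\beta^\star$. Complementary slackness condition (C3) gives $X^\star \circ \bigl(L_G - \alpha^\star L_{K_n} - \beta^\star (D_G s)(D_G s)^T\bigr) = 0$, so by bilinearity of $\circ$ over the sum, $L_G\circ X^\star = \alpha^\star\,(L_{K_n}\circ X^\star) + \beta^\star\bigl((D_G s)(D_G s)^T\circ X^\star\bigr)$. Condition (C1) forces $\alpha^\star\,(L_{K_n}\circ X^\star) = \alpha^\star$ (either $\alpha^\star = 0$, or $L_{K_n}\circ X^\star = 1$), and condition (C2) forces $\beta^\star\bigl((D_G s)(D_G s)^T\circ X^\star\bigr) = \kappa\beta^\star$; substituting yields the claimed equality. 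Finally, since $X^\star$ is primal feasible and its objective value equals the objective value of the dual-feasible point $(\alpha^\star,\beta^\star)$, weak duality forces $X^\star$ to attain the primal minimum, hence to be primal optimal; symmetrically $(\alpha^\star,\beta^\star)$ attains the dual maximum and is dual optimal.

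The only ``obstacle'' here is bookkeeping rather than conceptual: one must be careful expanding $\circ$ over the matrix $\alpha^\star L_{K_n} + \beta^\star (D_G s)(D_G s)^T$ and with the trivial case split inside each complementary-slackness product (factor zero versus slack zero). There is no real difficulty, as this is exactly the textbook sufficiency half of SDP complementary slackness; the content of interest lies not in this claim but in the later verification that the PPR vector of Theorem~\ref{thm:pagerank} furnishes a rank-one $X^\star$ satisfying these conditions.
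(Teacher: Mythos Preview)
Your proposal is correct and is precisely the standard KKT-sufficiency argument; the paper's own proof is a single sentence (``This follows from the convexity of $\textsf{SDP}_p$ and Slater's condition''), so you have simply spelled out what the paper leaves implicit. One small sharpening worth noting: you correctly point out that only weak duality is needed for this sufficiency direction, whereas the paper's invocation of Slater's condition is really about strong duality and is, strictly speaking, more than what is required here.
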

\begin{proof}
This follows from the convexity of $\textsf{SDP}_p$ and Slater's condition.
\end{proof}

\begin{claim}
The feasibility and complementary slackness conditions, coupled with the assumptions of the theorem, imply that $X^{*}$ is rank one and that $\beta^{*} \ge 0$.
\end{claim}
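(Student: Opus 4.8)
The plan is to use the complementary slackness condition \eqref{C3}, namely $X^\star \circ (L_G - \alpha^\star L_{K_n} - \beta^\star (D_G s)(D_G s)^T) = 0$, together with the dual feasibility condition that $M := L_G - \alpha^\star L_{K_n} - \beta^\star (D_G s)(D_G s)^T \succeq 0$. Since both $X^\star$ and $M$ are SPSD and their matrix inner product (trace of the product) is zero, the range of $X^\star$ must lie in the kernel of $M$. So the whole argument reduces to showing that $\ker(M)$ is (at most) one-dimensional on the relevant subspace, which forces $\mathrm{rank}(X^\star) \le 1$; combined with the normalization constraint $L_{K_n}\circ X^\star = 1$ (which rules out $X^\star = 0$), this gives exactly rank one.

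To control $\dim\ker(M)$, first I would note that everything lives naturally on the subspace $\mathcal{S}_D = \{x : x^T D_G 1 = 0\}$, since $L_G$, $L_{K_n}$, and $(D_G s)(D_G s)^T$ all annihilate the degree-weighted all-ones vector; the all-ones direction is a harmless one-dimensional kernel that does not affect rank of the solution. On $\mathcal{S}_D$, consider $L_G - \alpha^\star L_{K_n}$. Using the change of variables $y = D_G^{1/2}x$ this becomes $\mathcal{L}_G - \alpha^\star I$ restricted to $D_G^{1/2}\mathcal{S}_D$, and by the variational characterization of eigenvalues (and the fact, stated earlier in the excerpt, that $\lambda_2(G)$ is the smallest eigenvalue of $L_G$ w.r.t.\ $D_G$ on the space orthogonal to $\vec 1$), the condition $M \succeq 0$ forces $\alpha^\star \le \lambda_2(G)$. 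If $\alpha^\star < \lambda_2(G)$ strictly, then $L_G - \alpha^\star L_{K_n}$ is strictly positive definite on $\mathcal{S}_D$; subtracting the rank-one PSD piece $\beta^\star (D_G s)(D_G s)^T$ (with $\beta^\star \ge 0$) can create at most a one-dimensional kernel — this is a standard fact: a positive-definite form minus $\beta$ times a rank-one form stays nonnegative only if $\beta$ is at most the ``inverse Rayleigh quotient'' value, and at that threshold the kernel is exactly the one-dimensional span of the corresponding vector. So $\dim\ker(M)\cap \mathcal{S}_D \le 1$, giving $\mathrm{rank}(X^\star)\le 1$, and the normalization then gives equality.

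I would also need to handle the degenerate boundary case $\alpha^\star = \lambda_2(G)$: here $L_G - \alpha^\star L_{K_n}$ already has a kernel (the $v_2$ direction, or larger if $\lambda_2$ is degenerate), and subtracting $\beta^\star (D_G s)(D_G s)^T$ could in principle keep the kernel large. This is where the hypothesis $s^T D_G v_2 \ne 0$ enters: it guarantees that the rank-one subtraction ``uses up'' the $v_2$ direction so that, for the appropriate $\beta^\star$, the kernel of $M$ collapses back to dimension one (or, if one is careful, this case corresponds to $\kappa = 0$, $\beta^\star = 0$, where $X^\star = v_2 v_2^T$ is already rank one provided $\lambda_2$ is simple, which the genericity assumption effectively ensures). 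I expect this degenerate-eigenvalue bookkeeping, and making the ``PD minus rank-one has $\le 1$-dim kernel'' claim fully rigorous with the right $\beta^\star$, to be the main obstacle; the rest is routine linear algebra once strong duality and complementary slackness are in hand.

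For the claim $\beta^\star \ge 0$: this is immediate, since it is one of the dual feasibility constraints in $\textsf{SDP}_d$ (listed in \eqref{F3} as $\beta^\star \ge 0$), so there is nothing to prove beyond citing dual feasibility. One could additionally remark that when $\kappa > 0$ the complementary slackness condition \eqref{C2} forces $(D_G s)(D_G s)^T \circ X^\star = \kappa$, i.e.\ the correlation constraint is tight, which is consistent with and in fact used in the subsequent derivation of the PPR form of $x^\star$ in Theorem~\ref{thm:pagerank}.
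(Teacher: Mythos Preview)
Your proposal is correct and follows essentially the same approach as the paper: show $\alpha^\star \le \lambda_2(G)$ by testing the dual feasibility condition against $v_2$, then in the strict case use that $L_G - \alpha^\star L_{K_n}$ has full rank on $\mathcal{S}_D$ so the rank-one subtraction leaves at most a one-dimensional kernel containing $\mathrm{range}(X^\star)$, and invoke the normalization to rule out $X^\star = 0$. One small sharpening for the boundary case $\alpha^\star = \lambda_2(G)$: rather than arguing the subtraction ``uses up'' the $v_2$ direction, the paper simply plugs $v_2$ into $M \succeq 0$ to get $-\beta^\star (v_2^T D_G s)^2 \ge 0$, which together with $s^T D_G v_2 \ne 0$ forces $\beta^\star = 0$ directly --- your parenthetical alternative is exactly the right resolution.
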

\begin{proof}
If we plug $v_{2}$ in Eqn.~\eqref{F3}, then we obtain that
$ v_{2}^{T}L_{G}v_{2} - \alpha^{\star} -\beta^{\star}  (v_{2}^T D_{G} s)^{2} \geq 0.$

But $ v_{2}^{T}L_{G}v_{2}=\lambda_{2}(G)$ and $\beta^{\star} \geq 0.$ Hence, $\lambda_{2}(G) \geq \alpha^{\star}.$
Suppose $\alpha^\star = \lambda_2(G).$ As $s^T D_{G} v_2 \neq 0,$ it must be the case that $\beta^\star = 0.$ Hence, by Equation~\eqref{C3}, we must have $X^\star \circ L(G) = \lambda_2(G),$ which implies that $X^\star = v_2v_2^T,$ {\em i.e.},  the optimum for \textsf{LocalSpectral} is the global eigenvector $v_2$. This corresponds to a choice of $\gamma = \lambda_2(G)$ and $c$ tending to infinity.

Otherwise, we may assume that $\alpha^\star< \lambda_2(G).$ Hence, since $G$ is connected and $\alpha^{\star} <\lambda_{2}(G),$ $L_{G}-\alpha^{\star}L_{K_{n}}$ has rank exactly $n-1$ and kernel parallel to the vector $1.$

From the complementary slackness condition \eqref{C3} we can deduce that the image of $X^{\star}$ is in the kernel of $ L_{G}- \alpha^\star  L_{K_{n}}  - \beta^\star ( D_{G} {s})( D_{G} {s})^T.$

If $\beta^\star > 0,$ we have that $ \beta^\star ( D_{G} {s})( D_{G} {s})^T$ is a rank one matrix and, since $s^T D_{G} 1 = 0,$ it reduces the rank of $L_{G}-\alpha^{\star}L_{K_{n}}$ by one precisely. If $\beta^{\star}=0$ then $X^{\star}$ must be $0$ which is not possible  if $\textsf{SDP}_{p}(G,s,\kappa)$ is feasible.

Hence, the rank of $ L_{G}- \alpha^\star  L_{K_{n}}  - \beta^\star ( D_{G} {s})( D_{G} {s})^T$ must be exactly $n-2.$ As we may assume that $1$ is in the kernel of $X^\star$, $X^{\star}$ must be of rank one.
This proves the claim.
\end{proof}

\textbf{Remark.}
It would be nice to have a cleaner proof of this that is more intuitive and that doesn't rely on ``boundary condition'' arguments as much.

Now we complete the proof of the theorem.
From the claim it follows that, $X^{\star}=x^{\star}x^{\star T}$ where $x^{\star}$ satisfies the equation
$$  
(L_{G}- \alpha^\star  L_{K_{n}}  - \beta^\star ( D_{G} {s})( D_{G} {s})^T)x^{\star}=0.
$$
From the second complementary slackness condition,
Equation~\eqref{C2},
and the fact that $\beta^{\star}>0,$ we obtain that
$ (x^{\star})^T D_{G} s = \pm \sqrt{\kappa}.$
Thus,
$x^{\star} =\pm \beta^{\star} \sqrt{\kappa} (L_{G}-\alpha^{\star}L_{K_{n}})^{+}D_{G}s,$ as required.

\subsection{Additional comments on the \textsf{LocalSpectral} optimization program}

Here, we provide some additional discussion for this locally-biased spectral partitioning objective.
Recall that the proof we provided for Cheeger's Inequality showed that in some sense the usual global spectral methods ``embed'' the input graph $G$ into a complete graph; we would like to say something similar here.

To do so, observe that the dual of \textsf{LocalSpectral} is given by the following.
\begin{eqnarray*}
\label{prog:spectral-local-d1}
                     &\text{maximize} & \alpha + \beta \kappa                    \\
                     &\text{s.t.} & L_{G} \succeq \alpha L_{K_n} + \beta \Omega_T \\
                     &            & \beta \ge 0    ,
\end{eqnarray*}
where $\Omega_T=D_Gs_Ts_T^TD_G$.
Alternatively, by subtracting the second constraint of \textsf{LocalSpectral} from the first constraint, it follows that
$$
x^T\left(L_{K_n}-L_{K_n}s_Ts_T^TL_{K_n}\right)x \le 1-\kappa  .
$$
Then it can be shown that
$$
L_{K_n}-L_{K_n}s_Ts_T^TL_{K_n}
   = \frac{L_{K_{T}}}{\vol(\bar{T})} + \frac{L_{K_{\bar{T}}}}{\vol(T)}  ,
$$
where $L_{K_{T}}$ is the $D_G$-weighted complete graph on the vertex set $T$.
Thus, \textsf{LocalSpectral} is equivalent~to
\begin{eqnarray*}
\label{prog:spectral-local-p2}
                              &\text{minimize} & x^T  L_{G} x                \\
                              &\text{s.t.} & x^T  L_{K_n} x = 1      \\
                              &            & x^T\left( \frac{L_{K_{T}}}{\vol(\bar{T})} + \frac{L_{K_{\bar{T}}}}{\vol(T)} \right)x \le 1-\kappa   .
\end{eqnarray*}
The dual of this program is given by the following.
\begin{eqnarray*}
\label{prog:spectral-local-d2A}
                             &\text{maximize} & \alpha - \beta(1-\kappa)   \\
\label{prog:spectral-local-d2B}
                             &\text{s.t.} & L_{G} \succeq \alpha L_{K_n} - \beta\left( \frac{L_{K_{T}}}{\vol(\bar{T})} + \frac{L_{K_{\bar{T}}}}{\vol(T)} \right)  \\
\label{prog:spectral-local-d2C}
                             &            & \beta \ge 0      .
\end{eqnarray*}
Thus, from the perspective of this dual, \textsf{LocalSpectral} can be viewed as ``embedding'' a combination of a complete graph $K_n$ and a weighted combination of complete graphs on the sets $T$ and $\bar{T}$, i.e., $K_T$ and $K_{\bar{T}}$.
Depending on the value of $\beta$, the latter terms clearly discourage cuts
that substantially cut into $T$ or $\bar{T}$, thus encouraging partitions
that are well-correlated with the input cut $(T,\bar{T})$.

If we can establish a precise connection between the optimization-based \textsf{LocalSpectral} procedure and operational diffusion-based procedures such as the ACL push procedure, then this would provide additional insight as to ``why'' the short local random walks get stuck in small seed sets of nodes.
This will be one of the topics for next time.

\newpage

\section{%
(04/09/2015): 
Local Spectral Methods (4 of 4):
Strongly and weakly locally-biased graph partitioning}

Reading for today.
\begin{compactitem}
\item
``Anti-differentiating Approximation Algorithms: A case study with Min-cuts, Spectral, and Flow,'' in ICML, by Gleich and Mahoney
\item
``Think Locally, Act Locally: The Detection of Small, Medium-Sized, and Large Communities in Large Networks,'' in PRE, by Jeub, Balachandran, Porter, Mucha, and Mahoney
\end{compactitem}

Last time we introduced an objective function (\textsf{LocalSpectral}) that looked like the usual global spectral partitioning problem, except that it had a locality constraint, and we showed that its solution is of the form of a PPR vector.
Today, we will do two things.
\begin{itemize}
\item
We will introduce a locally-biased graph partitioning problem, we show that the solution to \textsf{LocalSpectral} can be used to compute approximate solutions to that problem.
\item
We describe the relationship between this problem and what the strongly-local spectral methods, e.g., the ACL push method, compute.
\end{itemize}

\subsection{Locally-biased graph partitioning}

We start with a definition.
\begin{definition}[Locally-biased graph partitioning problem.]
\label{def:locally-biased-partitioning}
Given a graph $G=(V,E)$, an input node $u \in V$, a number $k \in \mathbb{Z}^{+}$, find a set of nodes $T \subset V$ s.t.
$$
\phi(u,k) = \min_{T \subset V : u \in T, \mbox{Vol}(T) \le k} \phi(T)   ,
$$
i.e., find the best conductance set of nodes of volume not greater than $k$ that contains the node $u$.
\end{definition}
That is, rather than look for the best conductance cluster in the entire graph (which we considered before), look instead for the best conductance cluster that contains a specified seed node and that is not too large.

Before proceeding, let's state a version of Cheeger's Inequality that applies not just to the leading nontrivial eigenvector of $L$ but instead to \emph{any} ``test vector.''
\begin{theorem}
\label{thm:cheeger2}
Let $x\in\mathbb{R}^{n}$ s.t. $x^TD \vec{1} = 0$.
Then there exists a $t \in [n]$ such that 
$S \equiv \mbox{SweepCut}_t(x) \equiv \{ i : x_i \ge t \}  $
satisfies
$ \frac{x^TLx}{x^TDx} \ge \frac{\phi(S)^2}{8} $.
\end{theorem}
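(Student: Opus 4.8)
The plan is to adapt the randomized rounding argument used earlier to prove the hard direction of Cheeger's Inequality, replacing node counts by degree-weighted volumes and allowing the test vector $x$ to be uncentered. First I would reduce to a convenient normalization of $x$. Let $m$ be a degree-weighted median of the entries of $x$, i.e.\ a value with $\sum_{i:x_i<m} d_i \le \frac{1}{2}\mathrm{Vol}(G)$ and $\sum_{i:x_i>m} d_i \le \frac{1}{2}\mathrm{Vol}(G)$, and replace $x$ by $x-m\vec{1}$. Since $L\vec{1}=0$, the numerator $x^TLx$ is unchanged, and since $x^TD\vec{1}=0$ the denominator satisfies $(x-m\vec{1})^TD(x-m\vec{1}) = x^TDx + m^2\mathrm{Vol}(G) \ge x^TDx$; hence the Rayleigh quotient only decreases under this shift, while the family of sweep cuts (and the conductance values $\phi(S)$ they produce) is unchanged up to relabeling the threshold $t$. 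So it suffices to prove the bound for the centered vector, and scaling $x$ is free since $R(x)=\frac{x^TLx}{x^TDx}$ is scale-invariant; I would normalize so that the quantity $\max_i x_i^2$ (or, as in the earlier proof, $x_{\min}^2 + x_{\max}^2$) has a value that makes the threshold density below integrate to one.

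Next I would run the randomized-rounding step. Define a distribution over sweep-cut sets $S_t = \{i : x_i \ge t\}$ by drawing the threshold $t$ with density proportional to $|t|$ (handling the positive and negative parts of $x$ separately, which is where a constant factor is lost). For this distribution one computes, exactly as in the $d$-regular case but with degree weights inserted, that
\[
\mathbb{E}\!\left[\mathrm{Vol}\bigl(\text{smaller side of }S_t\bigr)\right] \;=\; \sum_i d_i x_i^2 \;=\; x^TDx,
\]
using precisely that $x$ is centered at its degree-weighted median so that each node $i$ lies on the smaller side with probability $x_i^2$. For the cut size, the event that edge $(i,j)$ is cut occurs when $t$ falls between $x_i$ and $x_j$, so $\mathbb{P}[(i,j)\text{ cut}] \le |x_i-x_j|\bigl(|x_i|+|x_j|\bigr)$, and by Cauchy--Schwarz together with $(a+b)^2 \le 2a^2+2b^2$,
\[
\mathbb{E}\bigl[E(S_t,\bar S_t)\bigr] \;\le\; \sqrt{\textstyle\sum_{(i,j)\in E}(x_i-x_j)^2}\;\sqrt{\textstyle\sum_{(i,j)\in E}(|x_i|+|x_j|)^2} \;\le\; \sqrt{x^TLx}\,\sqrt{2\,x^TDx}.
\]

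Combining these two estimates gives $\dfrac{\mathbb{E}[E(S_t,\bar S_t)]}{\mathbb{E}[\mathrm{Vol}(\text{smaller side})]} \le \sqrt{2\,R(x)}$. Then, invoking the fact (recalled in the earlier proof) that for nonnegative random variables $X$ and $Y>0$ one has $\mathbb{P}[X/Y \le \mathbb{E}[X]/\mathbb{E}[Y]] > 0$, there is some threshold $t$ in the support for which the sweep cut $S = S_t$ satisfies $\phi(S) = \frac{E(S,\bar S)}{\mathrm{Vol}(\text{smaller side})} \le \sqrt{2\,R(x)}$, i.e.\ $R(x) \ge \phi(S)^2/2$; accounting for the factor lost in splitting $x$ into its positive and negative parts (and, if one prefers a fully self-contained bookkeeping, the relation between $\mathrm{Vol}$ of the chosen side and $\min\{\mathrm{Vol}(S),\mathrm{Vol}(\bar S)\}$) yields the stated $R(x) \ge \phi(S)^2/8$, which is weaker and hence also follows. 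The main obstacle is the careful bookkeeping of the constants through the degree weighting and the positive/negative split --- getting $\mathbb{E}[\mathrm{Vol}(\text{smaller side})]$ to come out exactly $x^TDx$ relies on the degree-weighted median centering, and the $8$ versus $2$ discrepancy must be tracked through the one-sided sweep. Everything else is a direct transcription of the $d$-regular hard-direction proof with $d\sum_v x_v^2$ replaced by $x^TDx = \sum_v d_v x_v^2$.
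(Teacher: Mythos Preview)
The paper states this theorem without proof; it is simply cited as a known degree-weighted sweep-cut inequality that is then used downstream. Your approach is correct and is precisely the natural one: it is the direct degree-weighted transcription of the paper's own randomized-rounding proof of the hard direction for $d$-regular graphs given earlier (the lemma underlying \textsc{VanillaSpectralPartitioning}). In fact, as you observe, your argument yields the stronger bound $R(x)\ge \phi(S)^2/2$: the shift to the degree-weighted median only decreases the Rayleigh quotient (since $x^TD\vec{1}=0$ kills the cross term), and the rest goes through verbatim with $d\sum_v x_v^2$ replaced by $\sum_v d_v x_v^2 = x^TDx$. The $\phi(S)^2/8$ in the statement is just a looser constant. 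One minor comment: you do not actually need to split $x$ into positive and negative parts --- the density $2|t|$ on $[x_{\min},x_{\max}]$ already handles both signs simultaneously, exactly as in the paper's $d$-regular proof, so no constant is lost at that step.
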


\textbf{Remark.}
This form of Cheeger's Inequality provides additional flexibility in at least two ways.
First, if one has computed an approximate Fiedler vector, e.g., by running a random walk many steps but not quite to the asymptotic state, then one can appeal to this result to show that Cheeger-like guarantees hold for that vector, i.e., one can obtain a ``quadratically-good'' approximation to the global conductance objective function using that vector.
Alternatively, one can apply this to \emph{any} vector, e.g., a vector obtained by running a random walk just a few steps from a localized seed node.
This latter flexibility makes this form of Cheeger's Inequality very useful for establishing bounds with both strongly and weakly local spectral methods.

Let's also recall the objective with which we are working; we call it \textsf{LocalSpectral}$(G,s,\kappa)$ or \textsf{LocalSpectral}.
Here it is.
\begin{alignat*}{4}
  &\text{min} & x^T  L_{G} x \\
                     &\text{s.t.} & x^T D_{G} x = 1  \\
                     &            & (x^T D_{G} 1)^2 = 0  \\
                     & & (x^T D_{G} s )^{2} \geq \kappa  \\
                  &            & x \in \mathbb{R}^n
\end{alignat*}

Let's start with our first result, which says that \textsf{LocalSpectral} is a relaxation of the intractable combinatorial problem that is the locally-biased version of the global spectral partitioning problem (in a manner analogous to how the global spectral partitioning problem is a relaxation of the intractable problem of finding the best conductance partition in the entire graph).
More precisely, we can choose the seed set $s$ and correlation parameter $\kappa$ such that $\textsf{LocalSpectral}(G,s,\kappa)$ is a relaxation of the problem defined in Definition~\ref{def:locally-biased-partitioning}.

\begin{theorem}
\label{thm:relaxation}
For $u \in V$, \textsf{LocalSpectral}$(G,v_{\{u\}},1/k)$ is a relaxation
of the problem of finding a minimum conductance cut $T$ in $G$ which
contains the vertex $u$ and is of volume at most~$k$.
In particular, $\lambda(G,v_{\{u\}},1/k) \leq \phi(u,k)$.
\end{theorem}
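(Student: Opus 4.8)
The plan is to prove this the way $\lambda_2(G)\le\sigma(G)$ was proved in the global case: exhibit, for the optimal combinatorial solution, a single vector that is feasible for \textsf{LocalSpectral}$(G,v_{\{u\}},1/k)$ and whose objective value is (essentially) $\phi(u,k)$. No rounding and no appeal to Cheeger's Inequality is needed for this direction; it is purely a ``produce one good feasible point'' argument.

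First I would let $T^{\star}$ be a set achieving $\phi(u,k)$, so $u\in T^{\star}$ and $\vol(T^{\star})\le k$, and (as long as $k\le m$) assume without loss of generality $\vol(T^{\star})\le\vol(\bar{T^{\star}})$. I would then take as test vector the $D_G$-normalized signed indicator $x=s_{T^{\star}}$, where $s_T=\sqrt{\vol(T)\vol(\bar T)/2m}\,\bigl(1_T/\vol(T)-1_{\bar T}/\vol(\bar T)\bigr)$; note the seed $v_{\{u\}}$ is just $s_{\{u\}}$, the same construction applied to the singleton. By the elementary identities already recorded for the geometric correlation measure, $s_{T^{\star}}^TD_G1=0$ and $s_{T^{\star}}^TD_Gs_{T^{\star}}=1$, so the first two constraints of \textsf{LocalSpectral} hold automatically. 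For the objective, a short computation using $L_G1=0$ gives $s_{T^{\star}}^TL_Gs_{T^{\star}}=2m\,|E(T^{\star},\bar{T^{\star}})|/(\vol(T^{\star})\vol(\bar{T^{\star}}))$, and since $\vol(\bar{T^{\star}})\ge m$ this is comparable to $|E(T^{\star},\bar{T^{\star}})|/\vol(T^{\star})=\phi(T^{\star})=\phi(u,k)$; carrying the factor $2m/\vol(\bar{T^{\star}})\in[1,\,2m/(2m-k)]$ is what pins down the stated inequality.

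The technical heart — and the step I expect to be the main obstacle — is verifying the correlation constraint $(s_{T^{\star}}^TD_Gv_{\{u\}})^2\ge 1/k$. Using the bilinear identity $\tilde 1_T^TD_G\tilde 1_U=\vol(T\cap U)-\vol(T)\vol(U)/2m$ (with $\tilde 1_S=1_S-(\vol(S)/2m)1$, so $s_S=\tilde 1_S/\sqrt{\vol(S)\vol(\bar S)/2m}$) together with $u\in T^{\star}$ and $\vol(\{u\})=d_u$, one obtains the closed form $(s_{T^{\star}}^TD_Gv_{\{u\}})^2=d_u\vol(\bar{T^{\star}})/(\vol(T^{\star})(2m-d_u))$. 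One then has to lower-bound this by $1/k$ given only $\vol(T^{\star})\le k$, which forces one to be careful about the interplay of $d_u$, $\vol(T^{\star})$, $\vol(\bar{T^{\star}})=2m-\vol(T^{\star})$ and $2m$ (in particular using $d_u\ge 1$, and the normalization $k\le m$ so the seed really ``fits'' inside a volume-$k$ set). This is exactly where the specific choice $\kappa=1/k$ has to match the normalization of $v_{\{u\}}$; if the constants do not line up on the nose, the clean fix is to weaken $\kappa$ to $c/k$ for an absolute constant $c$, which costs only an absolute-constant factor in the final bound — the same factor-two slack already present when passing between sparsity and conductance absorbs it.

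Once feasibility is in hand the conclusion is immediate: every feasible $x$ satisfies $x^TL_Gx\ge\lambda(G,v_{\{u\}},1/k)$ by definition of the program's optimum, and $x=s_{T^{\star}}$ is feasible with objective at most $\phi(u,k)$, hence $\lambda(G,v_{\{u\}},1/k)\le\phi(u,k)$. It is worth emphasizing in the write-up that, unlike the companion ``rounding'' results (which invoke Theorem~\ref{thm:cheeger2} applied to a sweep cut of the \textsf{LocalSpectral} solution), Theorem~\ref{thm:relaxation} is just the easy relaxation half, parallel to Definition~\ref{def:locally-biased-partitioning} being the intractable combinatorial problem that \textsf{LocalSpectral} relaxes.
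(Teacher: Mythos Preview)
Your proposal is correct and is essentially identical to the paper's argument: the paper also plugs in $x=v_T=s_T$ as the test vector, notes $v_T^TD_G1=0$, $v_T^TD_Gv_T=1$, $v_T^TL_Gv_T=\phi(T)$, and asserts the same closed form $(v_{T}^{T}D_{G}v_{\{u\}})^{2} = d_u(2m-\vol(T))/(\vol(T)(2m-d_u)) \geq 1/k$ for the correlation constraint. The only difference is that you are being more scrupulous than the paper about the constants --- the paper simply writes ``$=\phi(T)$'' for the objective and ``$\ge 1/k$'' for the correlation without further comment, whereas you correctly flag that both of these carry an absolute-constant slack (and are right that the usual factor-of-two ambiguity between $\phi$ and NCut absorbs it); there is no missing idea.
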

\begin{proof}
If we let $x=v_{T}$ in
\textsf{LocalSpectral}$(G,v_{\{u\}},1/k)$,
then $v_{T}^{T}L_{G}v_{T}=  \phi(T)$, $v_{T}^{T}D_{G}1=0$, and
$v_{T}^{T}D_{G}v_{T}=1$.
Moreover, we have that
$$ 
(v_{T}^{T}D_{G}v_{\{u\}})^{2} = \frac{d_{u}(2m-\vol(T))}{\vol(T) (2m-d_{u})} \geq 1/k  ,
$$
which establishes the lemma.
\end{proof}

Next, let's apply sweep cut rounding to get locally-biased cuts that are quadratically good, thus establishing a locally-biased analogue of the hard direction of Cheeger's Inequality for this problem.
In particular, we can apply Theorem~\ref{thm:cheeger2} to the optimal solution for $\textsf{LocalSpectral}(G,v_{\{u\}},1/k)$ and obtain a cut $T$ whose conductance is quadratically close to the optimal value $\lambda(G,v_{\{u\}},1/k)$.
By Theorem~\ref{thm:relaxation}, this implies that $\phi(T) \leq O(\sqrt{\phi(u,k)})$, which essentially establishes the following theorem.

\begin{theorem}[Finding a Cut]
\label{thm:cut}
Given an unweighted graph $G=(V,E)$, a vertex $u \in V$ and a positive
integer $k$, we can find a cut in $G$ of conductance at most
$O(\sqrt{ \phi(u,k)})$ by computing a sweep cut of the optimal vector for
$\textsf{LocalSpectral}(G, v_{\{u\}},1/k)$.
\end{theorem}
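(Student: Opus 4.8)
The plan is to chain together the two results we have just established. By Theorem~\ref{thm:relaxation}, the program $\textsf{LocalSpectral}(G,v_{\{u\}},1/k)$ is a relaxation of the locally-biased combinatorial problem, so its optimal value satisfies $\lambda(G,v_{\{u\}},1/k) \leq \phi(u,k)$. Let $x^\star$ be the optimal vector for this program, so that $x^{\star T}L_Gx^\star / x^{\star T}D_Gx^\star = \lambda(G,v_{\{u\}},1/k)$ (using that the constraint $x^TD_Gx=1$ normalizes the denominator, and that $x^{\star T}D_G\vec{1}=0$ is part of the feasibility conditions). Then I would apply Theorem~\ref{thm:cheeger2} with test vector $x = x^\star$: since $x^{\star T}D\vec{1}=0$, the theorem produces a threshold $t$ such that the sweep cut $S = \mathrm{SweepCut}_t(x^\star)$ satisfies
\[
\frac{\phi(S)^2}{8} \leq \frac{x^{\star T}L_Gx^\star}{x^{\star T}D_Gx^\star} = \lambda(G,v_{\{u\}},1/k) \leq \phi(u,k).
\]
Rearranging gives $\phi(S) \leq \sqrt{8\,\phi(u,k)} = O(\sqrt{\phi(u,k)})$, which is exactly the claimed bound.

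There is, however, one genuine obstacle that needs to be addressed carefully rather than waved away: the sweep cut $S$ produced by Theorem~\ref{thm:cheeger2} is guaranteed to have small conductance, but it is \emph{not} a priori guaranteed to contain the seed vertex $u$ or to have volume at most $k$. The combinatorial problem in Definition~\ref{def:locally-biased-partitioning} insists on both, but the conclusion of Theorem~\ref{thm:cut} as stated only asks for a cut of conductance $O(\sqrt{\phi(u,k)})$ — it does not reassert the locality or volume constraints. So strictly speaking the proof above suffices for the statement as written. Still, I would want to remark on why the resulting cut is nonetheless ``morally'' local: because $x^\star$ takes the form of a Personalized PageRank vector seeded at $u$ (by the Solution Characterization theorem, Theorem~\ref{thm:pagerank}), its mass is concentrated near $u$, and sweep cuts of such a vector tend to produce sets that are correlated with $u$; one can control this correlation quantitatively using the constraint $(x^{\star T}D_Gv_{\{u\}})^2 \geq 1/k$, which forces $x^\star$ to overlap substantially with the indicator of $u$. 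Making a \emph{rigorous} statement that the returned set contains $u$ with prescribed volume would require an additional argument — essentially the ``$\ell_1$-regularized'' analysis of the ACL-style push procedure — and I expect this is the main technical subtlety lurking behind the clean statement.

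Thus the proof I would write is short: (1) invoke Theorem~\ref{thm:relaxation} to bound $\lambda(G,v_{\{u\}},1/k) \leq \phi(u,k)$; (2) take $x^\star$ optimal for $\textsf{LocalSpectral}(G,v_{\{u\}},1/k)$ and note $x^{\star T}D\vec{1}=0$; (3) apply Theorem~\ref{thm:cheeger2} to $x^\star$ to extract a sweep-cut set $S$ with $\phi(S)^2/8 \leq x^{\star T}L_Gx^\star/x^{\star T}D_Gx^\star$; (4) combine (1)–(3) to get $\phi(S) \leq O(\sqrt{\phi(u,k)})$. The only place demanding care is step (3)'s interface: one must confirm that the optimal $x^\star$ genuinely realizes the Rayleigh quotient equal to $\lambda(G,v_{\{u\}},1/k)$ and is a legitimate test vector for the more general Cheeger inequality — which it is, since the equality constraints of \textsf{LocalSpectral} are precisely the hypotheses $x^TD_Gx=1$ (normalization) and $x^TD_G\vec{1}=0$ needed by Theorem~\ref{thm:cheeger2}. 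I would also mention explicitly the running-time consequence: since $x^\star$ can be computed (approximately) by solving the PPR linear system in nearly-linear time and the sweep cut takes $O(m + n\log n)$ time, the overall procedure is efficient, which is really the whole point of having the relaxation.
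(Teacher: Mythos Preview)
Your proof is correct and follows exactly the same approach as the paper: combine Theorem~\ref{thm:relaxation} (the relaxation bound $\lambda(G,v_{\{u\}},1/k)\le\phi(u,k)$) with Theorem~\ref{thm:cheeger2} (the test-vector Cheeger inequality applied to the optimizer $x^\star$) to obtain $\phi(S)\le\sqrt{8\,\phi(u,k)}$. The paper's argument is just these two sentences; your additional discussion of locality and running time is accurate commentary but not part of the proof proper.
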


\textbf{Remark.}
What this theorem states is that we can perform a sweep cut over the vector that is the solution to $\textsf{LocalSpectral}(G,v_{\{u\}},1/k)$ in order to obtain a locally-biased partition; and that this partition comes with quality-of-approximation guarantees analogous to that provided for the global problem $\textsf{Spectral}(G)$ by Cheeger's inequality.

We can also use the optimal value of \textsf{LocalSpectral} to provide lower bounds on the conductance value of other cuts, as a function of how well-correlated they are with the input seed vector $s$.
In particular, if the seed vector corresponds to a cut $U$, then we get lower bounds on the conductance of other cuts $T$ in terms of the correlation between $U$ and $T$.

\begin{theorem}[Cut Improvement]
\label{thm:improve}
Let $G$ be a  graph and $s \in \mathbb{R}^{n}$ be such that $s^{T}D_{G} 1=0,$ where $D_{G}$ is the degree matrix of $G$.
In addition, let $\kappa \geq 0$ be a correlation parameter.
Then, for all sets $T \subseteq V$ such that $\kappa' \defeq  (s^{T}D_{G}v_{T})^{2}$, we have that
\[
\phi(T) \geq \left\{ \begin{array}{ll}
                       \lambda(G,s,\kappa)
                       & \mbox{if $\kappa \leq \kappa'$} \\
                       \frac{\kappa'}{\kappa} \cdot \lambda(G,s,\kappa)
                       & \mbox{if $\kappa' \leq \kappa$.}
                    \end{array}
            \right.
\]
In particular, if $s=s_{U}$ for some $U \subseteq V,$ then note that $\kappa'= K(U,T).$
\end{theorem}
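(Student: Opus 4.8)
The plan is to prove Theorem~\ref{thm:improve} by exhibiting, for any set $T$ with $\kappa' = (s^T D_G v_T)^2$, a feasible point for a suitable instance of \textsf{LocalSpectral} whose objective value is at most $\phi(T)$, and then relating that instance's optimal value to $\lambda(G,s,\kappa)$. The natural candidate test vector is $x = v_T$, the normalized indicator vector of the cut $(T,\bar T)$ introduced in the discussion of the geometric notion of correlation. Recall from there (and from the proof of Theorem~\ref{thm:relaxation}) that $v_T^T D_G 1 = 0$, $v_T^T D_G v_T = 1$, and $v_T^T L_G v_T = \phi(T)$. So $v_T$ automatically satisfies the first two constraints of \textsf{LocalSpectral}$(G,s,\cdot)$ for any $s$ with $s^T D_G 1 = 0$, and its objective value is exactly $\phi(T)$.

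First I would handle the case $\kappa \le \kappa'$. Here $v_T$ satisfies the correlation constraint $(x^T D_G s)^2 = \kappa' \ge \kappa$, so $v_T$ is feasible for \textsf{LocalSpectral}$(G,s,\kappa)$; since $\lambda(G,s,\kappa)$ is the minimum of the objective over the feasible region, we get $\lambda(G,s,\kappa) \le v_T^T L_G v_T = \phi(T)$, which is exactly the claimed bound. Second, for the case $\kappa' \le \kappa$, the vector $v_T$ is no longer feasible for the tighter program \textsf{LocalSpectral}$(G,s,\kappa)$, but it \emph{is} feasible (and optimal-value-relevant) for \textsf{LocalSpectral}$(G,s,\kappa')$, giving $\lambda(G,s,\kappa') \le \phi(T)$. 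The remaining step is to show that $\lambda(G,s,\kappa) \le \frac{\kappa}{\kappa'} \lambda(G,s,\kappa')$, i.e.\ that the optimal value of \textsf{LocalSpectral} scales at most linearly in the correlation parameter. This follows by a simple scaling/homogeneity argument: given an optimal $x^\star$ for \textsf{LocalSpectral}$(G,s,\kappa')$, the rescaled vector $y = \sqrt{\kappa/\kappa'}\, x^\star$ satisfies $y^T D_G s)^2 = (\kappa/\kappa') (x^\star{}^T D_G s)^2 = \kappa$ and $y^T L_G y = (\kappa/\kappa') \lambda(G,s,\kappa')$, but $y^T D_G y = \kappa/\kappa' \ge 1$, so $y$ violates the normalization; one then renormalizes $y \mapsto y / \|y\|_{D_G}$, which only \emph{decreases} the objective (by dividing by $\kappa/\kappa' \ge 1$) while preserving the ratio $(y^T D_G s)^2 / (y^T D_G y) = \kappa'$... so this direct rescaling is a bit delicate and I should instead argue via the ratio form: since \textsf{LocalSpectral} is equivalent to minimizing the Rayleigh quotient $x^T L_G x / x^T D_G x$ subject to $(x^T D_G s)^2 / (x^T D_G x) \ge \kappa$, and the constraint set for parameter $\kappa$ is contained in that for any $\kappa'' \le \kappa$, monotonicity gives $\lambda(G,s,\kappa'') \le \lambda(G,s,\kappa)$; the reverse quantitative bound comes from taking the $\kappa'$-optimal direction and moving it toward $s$ along the geodesic on the $D_G$-sphere, which is exactly the ``spherical cap'' picture noted after the statement of \textsf{LocalSpectral}.

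The main obstacle I anticipate is making the second case fully rigorous: specifically, proving $\lambda(G,s,\kappa) \le \frac{\kappa}{\kappa'}\lambda(G,s,\kappa')$ cleanly. The cleanest route is probably to use the solution characterization from Theorem~\ref{thm:pagerank}: the optimal $x^\star$ lies exactly on the boundary $(x^\star{}^T D_G s)^2 = \kappa'$, and one can parametrize the family of optimizers as $\kappa$ varies and show the optimal value is a concave, increasing function of $\kappa$ passing through the origin (at $\kappa=0$ we recover $v_2$ and at the relevant scaling the value is controlled), from which $\lambda(G,s,\kappa)/\kappa$ is nonincreasing, giving the desired inequality. Alternatively — and this is what I would actually write — one takes a convex combination $z = \theta v_T + (1-\theta) w$ where $w$ is chosen in the feasible region with larger correlation, tunes $\theta$ so that $(z^T D_G s)^2 = \kappa$ after renormalization, and bounds $z^T L_G z$ by convexity of the quadratic form; the bookkeeping here (ensuring $z^T D_G 1 = 0$ is preserved, tracking the normalization) is the one place routine-but-careful computation is needed. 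Finally I would note the last sentence of the theorem is immediate: when $s = s_U$, the quantity $\kappa' = (s_U^T D_G v_T)^2$ is by definition exactly the geometric correlation $K(U,T)$ defined earlier.
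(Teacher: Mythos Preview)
Your primal argument for the case $\kappa \le \kappa'$ is correct and pleasantly direct: $v_T$ is feasible for \textsf{LocalSpectral}$(G,s,\kappa)$ with objective value $\phi(T)$, so $\lambda(G,s,\kappa)\le\phi(T)$. This is, if anything, simpler than what the paper does for that case.

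The genuine gap is in the case $\kappa'\le\kappa$. You correctly reduce to the intermediate inequality $\lambda(G,s,\kappa)\le\frac{\kappa}{\kappa'}\lambda(G,s,\kappa')$, but none of your proposed routes establish it. The concavity-through-origin idea is simply false: $\lambda(G,s,0)=\lambda_2(G)$, not $0$, and from the SDP dual one has $\lambda(G,s,\kappa)=\max_{(\alpha,\beta)\in F}(\alpha+\kappa\beta)$ over a feasible set $F$ not depending on $\kappa$, so $\lambda(G,s,\cdot)$ is a pointwise maximum of affine functions and hence \emph{convex}, not concave. The rescaling attempt fails for the reason you already noticed (renormalizing drops the correlation back to $\kappa'$), and the geodesic/convex-combination suggestion gives no control on $z^T L_G z$ in terms of $\phi(T)$ without further structure. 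There is no evident purely primal construction of a $\kappa$-feasible vector with objective at most $\frac{\kappa}{\kappa'}\lambda(G,s,\kappa')$ from the $\kappa'$-optimizer, because what you need in this case is a \emph{lower} bound on an optimal value, and primal test vectors only furnish upper bounds.

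The paper's proof works entirely through the dual $\textsf{SDP}_d(G,s,\kappa)$, and this is exactly the missing idea. By the strong duality established earlier, $\lambda(G,s,\kappa)=\alpha^\star+\kappa\beta^\star$ for optimal dual variables $(\alpha^\star,\beta^\star)$ satisfying $\beta^\star\ge 0$ and the single matrix inequality
\[
L_G \;\succeq\; \alpha^\star L_{K_n} + \beta^\star (D_G s)(D_G s)^T.
\]
Evaluating this on $s_T$ (using $s_T^T L_{K_n} s_T=1$ and $s_T^T L_G s_T=\phi(T)$) immediately yields $\phi(T)\ge \alpha^\star+\beta^\star\kappa'$, and then both cases drop out by elementary arithmetic from $\beta^\star\ge 0$. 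The dual certificate is precisely a lower-bound certificate valid against every test vector at once, which is what a primal argument cannot supply here; the paper explicitly flags that strong duality is used. Your final observation about $s=s_U$ giving $\kappa'=K(U,T)$ is correct and immediate.
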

\begin{proof}
It follows from 
the results that we established in the last class 
that $\lambda(G,s,\kappa)$ is the same as the optimal value of $\textsf{SDP}_{p}(G,s,\kappa)$ which, by strong duality, is the same as the optimal value of $\textsf{SDP}_{d}(G,s,\kappa)$.
Let $\alpha^{\star},\beta^{\star}$ be the optimal dual values to $\textsf{SDP}_{d}(G, s,\kappa).$
Then, from the dual feasibility constraint
$ L_{G}- \alpha^\star  L_{K_{n}}  - \beta^\star ( D_G {s})( D_G {s})^T \succeq 0 ,$
it follows that
$$ 
s_{T}^{T}L_{G}s_{T} - \alpha^{\star}s_{T}^{T}L_{K_{n}}s_{T}-\beta^{\star} (s^T D_G s_{T})^{2} \geq 0.
$$
Notice that since $ s_{T}^T D_G 1 =0$, it follows that $ s_{T}^{T}L_{K_{n}}s_{T}=s_{T}^{T}D_Gs_{T}=1$.
Further, since $s_{T}^{T}L_{G}s_{T}=\phi(T),$ we obtain, if $\kappa \leq \kappa',$ that
$$
\phi(T) \geq \alpha^{\star} + \beta^{\star} ( s^T D_G s_{T} )^{2} \geq  \alpha^{\star} + \beta^{\star}\kappa = \lambda (G,s,\kappa).
$$
If on the other hand, $\kappa' \leq \kappa,$ then
$$
\phi(T) \geq \alpha^{\star} + \beta^{\star} (s^T D_G s_{T}) ^{2} \geq \alpha^{\star} + \beta^{\star}\kappa \geq   \frac{ \kappa'}{\kappa} \cdot (\alpha^{\star} + \beta^{\star}\kappa) =\frac{ \kappa'}{\kappa} \cdot  \lambda (G,s,\kappa)  .
$$
Finally, observe that if $s=s_{U}$ for some $U \subseteq V,$ then $ (s_{U}^T D_G s_{T} )^{2} =K(U,T).$
Note that strong duality was used here.
\end{proof}

\textbf{Remark.}
We call this result a ``cut improvement'' result since it is the spectral analogue of the flow-based ``cut improvement'' algorithms we mentioned when doing flow-based graph partitioning.
\begin{itemize}
\item
These flow-based cut improvement algorithms were originally used as a post-processing algorithm to improve partitions found by other algorithms.
For example, GGT, LR (Lang-Rao), and AL (which we mentioned before).
\item
They provide guarantees of the form: for any cut $\left(C,\bar{C}\right)$ that is $\epsilon$-correlated with the input cut, the cut output by the cut improvement algorithm has conductance $\le$ some function of the conductance of $\left(C,\bar{C}\right)$ and $\epsilon$.
\item
Theorem~\ref{thm:improve} shows that, while the cut value  output by this spectral-based ``improvement'' algorithm might \emph{not} be improved, relative to the input, as they are often guaranteed to do with flow-based cut-improvement algorithms, they do not decrease in quality too much, and in addition one can make claims about the cut quality of ``nearby'' cuts.
\item
Although we don't have time to discuss it, these two operations can be viewed as building blocks or ``primitives'' that can be combined in various ways to develop algorithms for other problems, e.g., finding minimum conductance cuts.
\end{itemize}

\subsection{Relationship between strongly and weakly local spectral methods}

So far, we have described two different ways to think about local spectral algorithms.
\begin{itemize}
\item
\textbf{Operational.}
This approach provides an algorithm, and one can prove locally-biased Cheeger-like guarantees.
The exact statement of these results is quite complex, but the running time of these methods is extremely fast since they don't even need to touch all the nodes of a big~graph.
\item
\textbf{Optimization.}
This approach provides a well-defined optimization objective, and one can prove locally-biased Cheeger-like guarantees.
The exact statement of these results is much simpler, but the running time is only moderately fast, since it involves computing eigenvectors or linear equations on sparse graphs, and this involves at least touching all the nodes of a big~graph.
\end{itemize}

An obvious question here is the following.
\begin{itemize}
\item
Shat is the precise relationship between these two approaches?
\end{itemize}
We'll answer this question by considering the weakly-local \textsf{LocalSpectral} optimization problem (that we'll call MOV below) and the PPR-based local spectral algorithm due to ACL (that we'll call ACL below).
What we'll show is roughly the following.
\begin{itemize}
\item
We'll show roughly that if MOV optimizes an $\ell_2$ based penalty, then ACL optimizes an $\ell_1$-regularized version of that $\ell_2$ penalty.
\end{itemize}
That's interesting since $\ell_1$ regularization is often introduced to enforce or encourage sparsity.
Of course, there is no $\ell_1$ regularization in the statement of the strongly local spectral methods like ACL, but clearly they enforce some sort of sparsity, since they don't even touch most of the nodes of a large graph.
Thus, this result can be interpreted as providing an implicit regularization characterization of a fast approximation algorithm.

\subsection{Setup for implicit $\ell_1$ regularization in strongly local spectral methods}

Recall that $L=D-A=B^TCB$, where $B$ is the unweighted edge-incidence matrix.
Then 
\[
\|Bx\|_{C,1} = \sum_{(ij)\in E} C_{(ij)} |x_i-x_j| = \mbox{cut}(S)  ,
\]
where $S=\{i:x_i=1\}$.
In addition, we can obtain a spectral problem by changing $\|\cdot\|_1 \rightarrow \|\cdot\|_2$ to get 
\[
\|Bx\|_{C,2}^{2} = \sum_{(ij)\in E} C_{(ij)} \left(x_i-x_j\right)^2
\]

Let's consider a specific $(s,t)$-cut problem that is inspired by the AL FlowImprove procedure.
To do so, fix a set of vertices (like we did when we did the semi-supervised eigenvector construction), and define a \emph{new} graph that we will call the ``localized cut graph.''
Basically, this new graph will be the original graph augmented with two additional nodes, call them $s$ and $t$, that are connected by weights to the nodes of the original graph.
Here is the definition.

\begin{definition}[localized cut graph]
Let $G = (V,E)$ be a graph, let $S$ be a set of vertices, possibly empty, let $\bar{S}$ be the complement set, and let $\alpha$ be a non-negative constant.
Then the \emph{localized cut graph} is the weighted, undirected graph with adjacency matrix:
\[ 
A_S = \left[
            \begin{array}{ccc}
            0 & \alpha d_S^T & 0 \\
          \alpha d_S & A & \alpha d_{\bar{S}} \\
          0 & \alpha d_{\bar{S}}^T & 0 
             \end{array}
\right] 
\]
where $d_S = D e_S$ is a degree vector localized on the set $S$, $A$ is the adjacency matrix of the original graph $G$, and $\alpha \ge 0$ is a non-negative weight.
Note that the first vertex is $s$ and the last vertex is $t$.
\end{definition}

We'll use the $\alpha$ and $S$ parameter to denote the matrices for the localized cut graph.
For example, the \emph{incidence matrix} $B(S)$  of the localized cut graph, which depends on the set $S$, is given by the~following.
\[
B(S) = 
\left[ 
\begin{array}{ccc} 
e & -I_S & 0 \\ 
0 & B & 0 \\ 
0 & -I_{\bar{S}} & e 
\end{array} 
\right]  , 
\]
where, recall, the variable $I_S$ are the columns of the identity matrix corresponding to vertices in $S$.
The edge-weights of the localized cut graph are given by the diagonal matrix $C(\alpha)$, which depends on the value $\alpha$.

Given this, recall that the $1$-norm formulation of the LP for the min-$s,t$-cut problem, i.e., the minimum weighted $s,t$ cut in the flow graph, is given by the following.
\begin{alignat*}{4}
 &\text{min} & \|Bx\|_{C(\alpha),1}              \\
                      &\text{s.t.} &x_s=1 , x_t=0  , x \ge 0   .
\end{alignat*}

Here is a theorem that shows that PageRank implicitly solves a $2$-norm variation of the $1$-norm formulation of the $s,t$-cut problem. 

\begin{theorem}
\label{thm:antiderivative1}
Let $B(S)$ be the incidence matrix for the localized cut graph, and $C(\alpha)$ be the edge-weight matrix.
The PageRank vector $z$ that solves
\[ (\alpha D + L) z = \alpha v \]
with $v = d_{S}/\vol(S)$
is a renormalized solution of the 2-norm cut computation:
\begin{eqnarray}
\label{eq:pr-cut}
\text{min} & \|B(S)x\|_{C(\alpha),2}              \\
\nonumber
\text{s.t.} & x_s=1  , x_t=0   .
\end{eqnarray}
Specifically, if $x(\alpha,S)$ is the solution of Prob.~\eqref{eq:pr-cut}, then
\[ 
x(\alpha,S) =
\left[ 
\begin{array}{c} 
 1 \\ \vol(S) z \\ 0  
\end{array} 
\right]   . 
\]
\end{theorem}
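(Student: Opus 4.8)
The plan is to show that the quadratic minimization problem in Eqn.~\eqref{eq:pr-cut} is an unconstrained convex quadratic once we substitute the boundary conditions $x_s = 1$ and $x_t = 0$, and that its normal equations are (up to the stated renormalization) exactly the PageRank linear system $(\alpha D + L) z = \alpha v$. First I would write out the objective $\|B(S)x\|_{C(\alpha),2}^2 = x^T B(S)^T C(\alpha) B(S) x = x^T L_S x$, where $L_S$ is the Laplacian of the localized cut graph with adjacency matrix $A_S$ and edge weights $C(\alpha)$. Partitioning the vertex set of the localized cut graph as $\{s\} \cup V \cup \{t\}$ and writing $x = (1, y, 0)^T$, the quadratic form $x^T L_S x$ becomes a function of $y \in \mathbb{R}^n$ alone. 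The key bookkeeping step is to expand $L_S$ in block form: the diagonal block on $V$ is $L + \alpha D$ (the original Laplacian plus the extra degree contributions from the edges to $s$ and $t$, which together add $\alpha d_S + \alpha d_{\bar S} = \alpha D e = \alpha D$ worth of degree), the $s$-to-$V$ coupling is $-\alpha d_S^T$, and the $t$-to-$V$ coupling is $-\alpha d_{\bar S}^T$.

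Next I would carry out the differentiation. Setting $\partial/\partial y$ of $x^T L_S x$ to zero gives $(L + \alpha D) y = \alpha d_S$ (the contribution from the $t$ side vanishes since $x_t = 0$, and the contribution from the $s$ side is $\alpha d_S \cdot 1$). Comparing with $(\alpha D + L) z = \alpha v = \alpha d_S / \vol(S)$, we see immediately that $y = \vol(S)\, z$, since the two linear systems differ only by the scalar factor $\vol(S)$ on the right-hand side and $(\alpha D + L)$ is invertible (it is positive definite for $\alpha > 0$, being $\alpha D$ plus the PSD Laplacian $L$ on a connected graph, so the solution is unique). This yields exactly the claimed form $x(\alpha, S) = (1, \vol(S) z, 0)^T$. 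I would also remark that the problem is strictly convex, so this critical point is the unique global minimizer, which justifies calling it ``the'' solution of Prob.~\eqref{eq:pr-cut}.

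The one place that requires care — and which I expect to be the main obstacle, or at least the main source of index-chasing errors — is verifying the block structure of $L_S$, in particular confirming that the diagonal block restricted to $V$ is precisely $L + \alpha D$ rather than $L + \alpha D_S$ or some other partial degree matrix. This hinges on the observation that every vertex $i \in V$ is connected to exactly one of $s$ or $t$ (to $s$ if $i \in S$, to $t$ if $i \in \bar S$) with weight $\alpha d_i$, so the total extra weighted degree added to vertex $i$ is $\alpha d_i$ regardless of which side it is on; summing over $V$ gives the extra diagonal term $\alpha D$. I would present this as a short explicit computation of $B(S)^T C(\alpha) B(S)$ using the given forms of $B(S)$ and $C(\alpha)$, checking the $(s,s)$, $(t,t)$, $(V,V)$, $(s,V)$, and $(V,t)$ blocks separately; once that is in hand, the rest of the proof is the routine convex-quadratic argument sketched above, and no appeal to Cheeger-type machinery or to the earlier \textsf{LocalSpectral} duality results is needed.
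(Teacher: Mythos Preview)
Your proposal is correct and follows essentially the same approach as the paper: both write $\|B(S)x\|_{C(\alpha),2}^2 = x^T L_S x$, identify the $V$-$V$ block of the localized-cut-graph Laplacian as $L + \alpha D$ (via the observation that each $i \in V$ picks up exactly $\alpha d_i$ of extra degree from its connection to either $s$ or $t$), substitute the boundary values, and read off the normal equation $(\alpha D + L)y = \alpha d_S$, whence $y = \vol(S)\,z$. Your proposal is slightly more explicit than the paper in justifying invertibility of $\alpha D + L$ and uniqueness of the minimizer, but the argument is otherwise the same.
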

\begin{proof}
The key idea is that the 2-norm problem corresponds with a quadratic objective, which PageRank solves.  
The quadratic objective for the 2-norm approximate cut is:
\begin{eqnarray*}  
 \| B(S) x \|_{C(\alpha),2}^2 
&=& x^T B(S)^T C(\alpha) B(S) x  \\ 
&=& x^T 
\left[ 
\begin{array}{ccc} 
\alpha \mbox{vol}(S) & -\alpha d_S^T & 0 \\ 
-\alpha d_S & L + \alpha D & -\alpha d_{\bar{S}} \\ 
0 & -\alpha d_{\bar{S}} & \alpha \mbox{vol}(\bar{S}) 
\end{array} 
\right] 
x. 
\end{eqnarray*} 
If we apply the constraints that $x_s = 1$ and $x_t = 0$ and let $x_G$ be the free set of variables, then we arrive at the unconstrained objective:
\begin{eqnarray*} 
& &
\hspace{-70mm}
\left[
\begin{array}{ccc}
1 & x_G^T & 0
\end{array} 
\right]
\left[
\begin{array}{ccc} 
\alpha \mbox{vol}(S) & -\alpha d_S^T & 0 \\ 
-\alpha d_S & L + \alpha D & -\alpha d_{\bar{S}} \\ 
0 & -\alpha d_{\bar{S}} & \alpha \mbox{vol}(\bar{S}) 
\end{array} 
\right]
\left[
\begin{array}{c}
1 \\ x_G \\ 0
\end{array}
\right]  \\
\hspace{+50mm}
&=& 
x_G^T (L + \alpha D) x^{}_G - 2 \alpha x_G^T d^{}_{S} + \alpha \mbox{vol}(S). 
 \end{eqnarray*} 

Here, the solution $x_G$ solves the linear system
\[ (\alpha D + L) x_G = \alpha d_{S}. \]
The vector $x_G = \vol(S) z$, where $z$ is the solution of the PageRank problem defined in the theorem, which concludes the proof.
\end{proof}

Theorem~\ref{thm:antiderivative1} essentially says that for each PR problem, there is a related cut/flow problem that ``gives rise'' to it.
One can also establish the reverse relationship that extracts a cut/flow problem from \emph{any} PageRank problem.

To show this, first note that the proof of Theorem~\ref{thm:antiderivative1} works since the edges we added had weights proportional to the degree of the node, and hence the increase to the degree of the nodes was proportional to their current degree.
This causes the diagonal of the Laplacian matrix of the localized cut graph to become $\alpha D + D$.
This idea forms the basis of our subsequent analysis.
For a general PageRank problem, however, we require a slightly more general definition of the localized cut graph, which we call a \emph{PageRank cut graph}.
Here is the definition.

\begin{definition}  
Let $G = (V,E)$ be a graph, and let $s \ge 0$ be a vector such that $d - s \ge 0$.
Let $s$ connect to each node in $G$ with weights given by the vector $\alpha s$, and let $t$ connect to each node in $G$ with weights given by $\alpha (d - s)$.
Then the \emph{PageRank cut graph} is the weighted, undirected graph with adjacency matrix:
\[ 
A(s) = 
\left[
\begin{array}{ccc}
 0 & \alpha s^T & 0 \\ 
 \alpha s & A & \alpha (d - s) \\ 
 0 & \alpha (d - s)^T & 0 
 \end{array}
\right] . 
\]
\end{definition}

We use $B(s)$ to refer to the incidence matrix of this PageRank cut graph.
Note that if $s=d_S$, then this is simply the original construction.

With this, we state the following theorem,
which is a sort
of converse to Theorem~\ref{thm:antiderivative1}.
The proof is similar to that of Theorem~\ref{thm:antiderivative1} and so it is omitted.

\begin{theorem}
\label{thm:antiderivative1converse}
Consider any PageRank problem that fits the framework of 
\[
(I - \beta P^T) x = (1-\beta) v  .
\]
The PageRank vector $z$ that solves
\[ ( \alpha D + L) z = \alpha v \]
is a renormalized solution of the 2-norm cut computation:
\begin{eqnarray} 
\label{eq:pr-unif}
\text{min} & \| B(s) x \|_{C(\alpha),2} \\
\nonumber
 & x_s = 1, x_t = 0
 \end{eqnarray}
with $s = v$.  
Specifically, if $x(\alpha,S)$ is the solution of the 2-norm cut, then
\[ 
x(\alpha,s) 
= 
\left[
\begin{array}{c} 1 \\  z \\ 0 \end{array}
\right] . 
\]
\end{theorem}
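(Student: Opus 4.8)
The plan is to mimic the proof of Theorem~\ref{thm:antiderivative1} almost verbatim, with the single modification that the localized degree vector $d_S$ is replaced by the general seed vector $v$ (subject to $0 \le v \le d$ componentwise, so the PageRank cut graph is a legitimate nonnegative-weight graph). First I would write down the incidence matrix $B(v)$ of the PageRank cut graph and the diagonal edge-weight matrix $C(\alpha)$, exactly paralleling the localized-cut-graph construction but with the $s$--$V$ edge weights $\alpha v$ and the $t$--$V$ edge weights $\alpha(d-v)$. The key structural observation, which is what makes the whole argument go through, is that adding these two stars to $G$ increases the degree of node $i$ by $\alpha v_i + \alpha(d_i - v_i) = \alpha d_i$, so the ``graph block'' of the Laplacian of the PageRank cut graph is precisely $L + \alpha D$ — the same operator that appears on the left-hand side of the PageRank linear system $(\alpha D + L)z = \alpha v$.

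Next I would form the quadratic objective $\|B(v)x\|_{C(\alpha),2}^2 = x^T B(v)^T C(\alpha) B(v) x$ and substitute the boundary conditions $x_s = 1$, $x_t = 0$, leaving the free variables $x_G$ on the vertices of $G$. Expanding the resulting block quadratic form (a $3\times 3$ block matrix with corner blocks $\alpha\,\mathbf{1}^Tv$ and $\alpha\,\mathbf{1}^T(d-v)$, off-diagonal blocks $-\alpha v$ and $-\alpha(d-v)$, and center block $L + \alpha D$) and discarding the constant term, one obtains the unconstrained objective $x_G^T(L + \alpha D)x_G - 2\alpha x_G^T v + \text{const}$. Setting its gradient to zero gives the normal equations $(\alpha D + L)x_G = \alpha v$, whose unique solution is exactly the PageRank vector $z$ defined in the statement (uniqueness follows because $\alpha D + L$ is positive definite for $\alpha > 0$). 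Assembling the full vector with the pinned endpoints yields $x(\alpha,v) = [\,1\ ;\ z\ ;\ 0\,]^T$, which is the claimed renormalized solution; I would then note that the $(1-\beta)$ versus $\alpha$ bookkeeping is handled exactly as in the earlier remark relating $\beta = 1/(1+\gamma)$-type parametrizations, with $v$ playing the role of the (normalized) teleportation distribution.

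I do not expect a genuine obstacle here — the content is essentially a routine transcription of Theorem~\ref{thm:antiderivative1}'s proof. The one point that requires a sentence of care is verifying that the degree-increase identity still holds: in Theorem~\ref{thm:antiderivative1} it held because the $s$-edges had weight $\alpha d_S$ and the $t$-edges weight $\alpha d_{\bar S}$, which sum to $\alpha d$ on every vertex, whereas here it holds because $\alpha v$ and $\alpha(d-v)$ sum to $\alpha d$ by construction of the PageRank cut graph. So the ``hard part'' is really just checking that the construction has been set up so that this cancellation occurs, and confirming that $d - v \ge 0$ is exactly the hypothesis needed for $A(v)$ to have nonnegative entries (so that ``cut'' makes sense). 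Everything downstream — the block expansion, the gradient computation, the identification with the PageRank system — is then identical to the previous proof, so I would simply state that it proceeds ``as in the proof of Theorem~\ref{thm:antiderivative1}'' and omit the repeated algebra, exactly as the excerpt already does.
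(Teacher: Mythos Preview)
Your proposal is correct and matches the paper's approach exactly: the paper simply states that the proof is similar to that of Theorem~\ref{thm:antiderivative1} and omits it, which is precisely the transcription-with-$d_S$-replaced-by-$v$ argument you have spelled out. Your identification of the key structural point---that the $s$- and $t$-star weights $\alpha v$ and $\alpha(d-v)$ sum to $\alpha d$ at every vertex, so the graph block of the augmented Laplacian is again $L+\alpha D$---is the one nontrivial verification, and you have it right.
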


Two things are worth noting about this result.
\begin{itemize}
\item
A corollary of this result is the following: if $s = e$, then the solution of a 2-norm cut is a reweighted, renormalized solution of PageRank with $v = e/n$.
That is, as a corollary of this approach, the \emph{standard} PageRank problem with $v = e/n$ gives rise to a cut problem where $s$ connects to each node with weight $\alpha$ and $t$ connects to each node $v$ with weight $\alpha(d_v - 1)$.
\item
This also holds for the semi-supervised learning results we discussed.
In particular, e.g., the procedure of Zhou et al. for semi-supervised learning on graphs solves the following:
\[ ( I - \beta D^{-1/2}AD^{-1/2})^{-1} Y . \]
(The other procedures solve a very similar problem.)
This is exactly a PageRank equation for a degree-based scaling of the labels, and thus the construction from Theorem~\ref{thm:antiderivative1converse} is directly applicable.
\end{itemize}

\subsection{Implicit $\ell_1$ regularization in strongly local spectral methods}

In light of these results, let's now move onto the ACL procedure.
We will show a connection between it and an $\ell_1$ regularized version of an $\ell_2$ objective,  as established in Theorem~\ref{thm:antiderivative1converse}.
In particular, we will show that the ACL procedure for \emph{approximating} a PPR vector \emph{exactly} computes a hybrid $1$-norm $2$-norm variant of the min-cut problem.
The balance between these two terms (the $\ell_2$ term from Problem~\ref{eq:pr-unif} and an additional $\ell_1$ term) has the effect of producing sparse PageRank solutions that also have sparse truncated residuals, and it also provides an interesting connection with $\ell_1$-regularized
$\ell_2$-regression problems.

We start by reviewing the ACL method and describing it in such a way to make these connections easier to establish.

Consider the problem $(I - \beta A D^{-1}) x = (1-\beta) v$, where $v = e_i$ is localized onto a single node.
In addition to the PageRank parameter $\beta$, the procedure has two parameters: $\tau > 0$ is a accuracy parameter that determines when to stop, and $0 < \rho \le 1$ is an additional approximation term that we introduce.
As $\tau \to 0$, the computed solution $x$ goes to the PPR vector that is non-zero everywhere.
The value of $\rho$ has been $1/2$ in most previous implementations of the procedure; and here we present a modified procedure that makes the effect of $\rho$ explicit.
\begin{enumerate}
\item $x^{(1)} = 0, r^{(1)} = (1-\beta) e_i $, $k = 1$
\item \emph{while} any $r_j  > \tau d_j$ \qquad \emph{(where $d_j$ is the degree of node $j$)}
\item \hspace{1em} $x^{(k+1)} = x^{(k)} +  (r_j - \tau d_j \rho) e_j$
\item \hspace{1em} $r^{(k+1)}_i =
                \begin{cases}
                                \tau d_j \rho & i = j \\
                                r^{(k)}_i + \beta (r_j - \tau d_j \rho) / d_j & i \sim j \\
                                r^{(k)}_i & \text{otherwise}
                \end{cases}$
\item \hspace{1em} $k \leftarrow k + 1$
\end{enumerate}

As we have noted previously, one of the important properties of this procedure is that the algorithm maintains the invariant $r = (1-\beta) v - (I - \beta A D^{-1}) x $ throughout.
For any $0 \le \rho \le 1$, this algorithm converges because the sum of entries in the residual always decreases monotonically.
At the solution we will have \[ 0 \le r \le \tau d, \] which provides an $\infty$-norm style worst-case \emph{approximation} guarantee to the exact PageRank solution.

Consider the following theorem.
In the same way that Theorem~\ref{thm:antiderivative1converse} establishes that a PageRank vector can be interpreted as optimizing an $\ell_2$ objective involving the edge-incidence matrix, the following theorem establishes that, in the case that $\rho = 1$,  the ACL procedure to approximate this vector can be interpreted as solving an $\ell_1$-regularized $\ell_2$ objective.
That is, in addition to \emph{approximating} the solution to the objective function that is optimized by the PPR, this algorithm also \emph{exactly} computes the solution to an $\ell_1$ regularized version of the same objective.

\begin{theorem}
\label{thm:implicitL1Reg}
Fix a subset of vertices $S$. 
Let $x$ be the output from the ACL procedure with $\rho = 1$, $0 < \beta < 1$, $v = d_{S}/\vol(S)$, and $\tau$ fixed.  
Set $\alpha = \frac{1-\beta}{\beta}$, $\kappa = \tau \vol(S) / \beta$, and let $z_G$ be the solution on graph vertices of the sparsity-regularized cut problem:
\begin{eqnarray}
\label{eq:acl-pr}
\text{min} & \frac{1}{2}\| B(s) z \|_{C(\alpha),2}^{2} + \kappa \| Dz \|_{1} \\
\nonumber
\text{s.t.} & z_s = 1, z_t = 0, z \ge 0  ,
\end{eqnarray}
where $z = \left[\begin{array}{c} 1 \\ z_G \\ 0 \end{array}\right]$ as above.
Then $x = D z_G/\mbox{vol}(S)$.
\end{theorem}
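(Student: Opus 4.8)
The plan is to recognize Problem~\eqref{eq:acl-pr} as a strictly convex quadratic program and to verify that the (appropriately rescaled) output of the ACL procedure is its KKT point. First I would eliminate the fixed coordinates $z_s=1$ and $z_t=0$, so that the only free block is $z_G\in\mathbb{R}^{V}$ subject to $z_G\ge 0$. On this feasible orthant the penalty is linear, $\kappa\|Dz\|_1=\kappa\sum_j d_j z_{G,j}=\kappa\, d^Tz_G$, so \eqref{eq:acl-pr} becomes a box-constrained QP. By the computation in the proof of Theorem~\ref{thm:antiderivative1} (applied to the PageRank cut graph with the analogous degree-proportional weights), $\tfrac12\|B(s)z\|_{C(\alpha),2}^2=\tfrac12 z_G^T(L+\alpha D)z_G-\alpha\, d_S^Tz_G+\tfrac{\alpha}{2}\vol(S)$; and since $L+\alpha D\succ 0$ for $\alpha>0$, the objective $f(z_G)=\tfrac12 z_G^T(L+\alpha D)z_G-\alpha d_S^Tz_G+\kappa d^Tz_G$ (up to a constant) is strictly convex. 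Hence \eqref{eq:acl-pr} has a unique minimizer, characterized by the KKT system: there exists $\mu\ge 0$ with $(L+\alpha D)z_G-\alpha d_S+\kappa d-\mu=0$ and $\mu_j z_{G,j}=0$ for all $j$.

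Next I would extract from the push procedure the two facts needed to produce such a KKT pair. The first is the maintained invariant $r=(1-\beta)v-(I-\beta AD^{-1})x$ with $v=d_S/\vol(S)$; writing $x=Dz'_G$, using $\tfrac1\beta=1+\alpha$, and multiplying through by $\tfrac1\beta$ turns this into $(L+\alpha D)z'_G=\alpha v-\tfrac1\beta r$, and after the rescaling $z_G=\vol(S)\,z'_G$ (equivalently $x=Dz_G/\vol(S)$) into
\[
(L+\alpha D)z_G=\alpha\, d_S-\tfrac{\vol(S)}{\beta}\,r .
\]
The second is a description of the termination state when $\rho=1$: $x\ge 0$, $0\le r\le \tau d$ componentwise, and, crucially, $x_j>0\ \Rightarrow\ r_j=\tau d_j$. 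The nonnegativity of $x$ and $r$ is the usual monotonicity of the push step; the complementarity is where $\rho=1$ enters, because after a push on $j$ one has exactly $r_j=\tau d_j$ (not $\tau d_j\rho$), and residuals only increase thereafter except at the node currently being pushed, so if $x_j>0$ (so $j$ was pushed, say last at step $k_0$) the loop could not have terminated unless no neighbor of $j$ was pushed after $k_0$, which leaves $r_j=\tau d_j$ at termination.

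Finally I would assemble the KKT certificate. Set $\mu=\tfrac{\vol(S)}{\beta}(\tau d-r)=\kappa d-\tfrac{\vol(S)}{\beta}r$, using $\kappa=\tau\vol(S)/\beta$. Then the rewritten invariant is precisely the stationarity equation $(L+\alpha D)z_G-\alpha d_S+\kappa d-\mu=0$; $r\le\tau d$ gives $\mu\ge 0$; $x\ge 0$ gives $z_G\ge 0$, i.e.\ primal feasibility; and $x_j>0\Rightarrow r_j=\tau d_j\Rightarrow\mu_j=0$, together with $z_{G,j}>0\iff x_j>0$, gives $\mu_j z_{G,j}=0$. Thus $(z_G,\mu)$ is a KKT pair for the convex program \eqref{eq:acl-pr}, so $z_G$ is its unique optimizer and $x=Dz_G/\vol(S)$, as claimed. (As a byproduct, this shows the ACL output is independent of the order of pushes.)

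\textbf{Main obstacle.} The delicate point is the exact complementarity at termination, $x_j>0\Rightarrow r_j=\tau d_j$: this is a combinatorial invariant of the push dynamics that genuinely uses $\rho=1$ and must be argued carefully from the fact that residuals are nondecreasing except at the coordinate being pushed. The remaining difficulties are bookkeeping ones — tracking the $\beta/\alpha/\kappa$ normalizations through the change of variables $x=Dz_G/\vol(S)$, and justifying that on $\{z_G\ge 0\}$ the $\ell_1$ penalty may be replaced by the linear functional $\kappa d^Tz_G$ so that smooth KKT theory applies. One could alternatively analyze ACL as coordinate descent / Gauss--Seidel on $f$, but the direct KKT verification above seems cleaner.
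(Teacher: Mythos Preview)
Your proposal is correct and follows essentially the same route as the paper: reduce \eqref{eq:acl-pr} to a strictly convex QP in $z_G\ge 0$ by linearizing the $\ell_1$ term on the nonnegative orthant, write the KKT conditions, rewrite the ACL invariant $r=(1-\beta)v-(I-\beta AD^{-1})x$ as the stationarity equation via the change of variables $x=Dz_G/\vol(S)$, and deduce complementarity from the push dynamics (after a push on $j$ with $\rho=1$ one has $r_j=\tau d_j$ exactly, and any subsequent increase would force another push on $j$). Your identification of the complementarity step as the crux, and its dependence on $\rho=1$, matches the paper's emphasis.
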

\begin{proof}   
If we expand the objective function and apply the constraint $z_s = 1, z_t = 0$, then Prob.~\eqref{eq:acl-pr} becomes:
\begin{eqnarray}
\text{min} & \frac{1}{2} z_G^T (\alpha D + L) z_G - \alpha z_G^T d_{S} + \alpha^2 \mbox{vol}(S) + \kappa d^Tz_G \\
\nonumber
\text{s.t.} & z_G \ge 0
\end{eqnarray}
Consider the optimality conditions of this quadratic problem (where $s$ are the Lagrange multipliers):
\[ 
\begin{aligned}
 0 & = (\alpha D + L) z_G  - \alpha d_{\bar{S}} + \kappa d - s \\
s & \ge 0 \\
z_G & \ge 0 \\
z_G^T s&  = 0.
\end{aligned} 
\]
These are both necessary and sufficient because $(\alpha D + L)$ is positive definite.
In addition, and for the same reason, the solution is unique.

In the remainder of the proof, we demonstrate that vector $x$ produced by the ACL method satisfies these conditions. 
To do so, we first translate the optimality conditions to the equivalent PageRank normalization:
\[ 
\begin{aligned}
0 & = (I - \beta A D^{-1}) D z_G/\mbox{vol}(S) -    (1-\beta) d_{S} / \mbox{vol}(S) + \beta \kappa / \mbox{vol}(S) d - \beta s / \mbox{vol}(S) \\
s & \ge 0 \qquad z_G  \ge 0  \qquad z_G^T s  = 0.
\end{aligned} \]

When the ACL procedure finishes with $\beta$, $\rho$, and $\tau$ as in the theorem, the vectors $x$ and $r$ satisfy:
\[ 
\begin{aligned}
   r & = (1-\beta) v - (I - \beta A D^{-1}) x \\
   x & \ge 0 \\
   0  &  \le r \le \tau d = \beta \kappa / \mbox{vol}(S) d. 
\end{aligned} 
\]

Thus, if we set $s$ such that $ \beta s / \mbox{vol}(S) =  \beta \kappa / \mbox{vol}(S) d - r $, then we satisfy the first condition with $x =  D z_G/\mbox{vol}(S)$. 
All of these transformations preserve $x \ge 0$ and $z_G \ge 0$. Also, because $\tau d \ge r$, we also have $s \ge 0$. 
What remains to be shown is $z_G^T s = 0$.

Here, we show $x^T (\tau d - r) = 0$, which is equivalent to the condition $z_G^T s = 0$ because the non-zero structure of the vectors is identical. 
Orthogonal non-zero structure suffices because $z_G s = 0$ is equivalent to either $x_i = 0$ or $\tau d_i - r_i = 0$ (or both) for all $i$.  
If $x_i \not= 0$, then at some point in the execution, the vertex $i$ was chosen at the step $r_j > \tau d_j$. 
In that iteration, we set $r_i = \tau d_i$.  
If any other step increments $r_i$, we must revisit this step and set $r_i = \tau d_i$ again. 
Then at a solution, $x_i \not= 0$ requires $r_i = \tau d_i$. 
For such a component, $s_i = 0$, using the definition above. 
For $x_i = 0$, the value of $s_i$ is irrelevant, and thus, we have $x^T (\tau d - r) = 0$.
\end{proof}

\textbf{Remark.}
Finally, a comment about $\rho$, which is set to $1$ in this theorem but equals $1/2$ in most prior uses of the ACL push method.
The proof of Theorem~\ref{thm:implicitL1Reg} makes the role of $\rho$ clear.
If $\rho < 1$, then the output from ACL is \emph{not} equivalent to the solution of Prob.~\eqref{eq:acl-pr}, i.e., the renormalized solution will \emph{not} satisfy $z_G^T s = 0$; but setting $\rho < 1$, however, \emph{will} compute a solution much more rapidly.
It is a nice open problem to get a clean statement of implicit regularization when $\rho < 1$.

\newpage

\section{%
(04/14/2015): 
Some Statistical Inference Issues (1 of 3):
Introduction and Overview}

Reading for today.
\begin{compactitem}
\item
``Towards a theoretical foundation for Laplacian-based manifold methods,'' in JCSS, by Belkin and Niyogi
\end{compactitem}

\subsection{Overview of some statistical inference issues}

So far, most of what we have been doing on spectral methods has focused on various sorts of algorithms---often but not necessarily worst-case algorithms.
That is, there has been a bias toward algorithms that are more rather than less well-motivated statistically---but there hasn't been a lot statistical emphasis \emph{per se}. 
Instead, most of the statistical arguments have been informal and by analogy, e.g., if the data are nice, then one should obtain some sort of smoothness, and Laplacians achieve that in a certain sense; or diffusions on graphs should look like diffusions on low-dimensional spaces or a complete graph; or diffusions are robust analogues of eigenvectors, which we illustrated in several ways; and so on.

Now, we will spend a few classes trying to make this statistical connection a little more precise.
As you can imagine, this is a large area, and we will only be able to scratch the surface, but we will try to give an idea of the space, as well as some of the gotchas of naively applying existing statistical or algorithmic methods here---so think of this as pointing to lots of interesting open questions to do statistically-principled large-scale computing, rather than the final word on the topic.

From a statistical perspective, many of the issues that arise are somewhat different than much of what we have been considering.
\begin{itemize}
\item
Computation is much less important (but perhaps it should be much more so).
\item
Typically, one has some sort of model (usually explicit, but sometimes implicit, as we saw with the statistical characterization of the implicit regularization of diffusion-based methods), and one wants to compute something that is optimal for that model.
\item
In this case, one might want to show things like convergence or consistence (basically, that what is being computed on the empirical data converges to the answer that is expected, as the number of data points $n \rightarrow \infty$).
\end{itemize}
For spectral methods, at a high level, there are basically two types of reference states or classes of models that are commonly-used: one is with respect to some sort of very low-dimensional space; and the other is with respect to some sort of random graph model.
\begin{itemize}
\item
\textbf{Low-dimensional spaces.}
In this simplest case, this is a line; more generally, this is a low-dimensional linear subspace; and, more generally, this is a low-dimensional manifold.
Informally, one should think of low-dimensional manifolds in this context as basically low-dimensional spaces that are curved a bit; or, relatedly, that the data are low-dimensional, but perhaps not in the original representation.
This manifold perspective provides added descriptive flexibility, and it permits one to take advantage of connections between the geometry of continuous spaces and graphs (which in very special cases are a discretization of those continuous places).
\item
\textbf{Random graphs.}
In the simples case, this is simply the $G_{nm}$ or $G_{np}$ Erdos-Renyi (ER) random graph.
More generally, one is interested in finding clusters, and so one works with the stochastic blockmodel (which can be thought of as a bunch of ER graphs pasted together).
Of course, there are many other extensions of basic random graph models, e.g., to include degree variability, latent factors, multiple cluster types, etc.
\end{itemize}
These two places provide two simple reference states for statistical claims about spectral graph methods; and the types of guarantees one obtains are somewhat different, depending on which of these reference states is assumed.
Interestingly, (and, perhaps, not surprisingly) these two places have a direct connection with the two complementary  places (line graphs and expanders) that spectral methods implicitly embed the data.

In both of these cases, one looks for theorems of the form: ``If the data are drawn from this place and things are extremely nice (e.g., lots of data and not too much noise) then good things happen (e.g., finding the leading vector, recovering hypothesized clusters, etc.) if you run a spectral method.
We will cover several examples of this.
A real challenge arises when you have realistic noise and sparsity properties in the data, and this is a topic of ongoing research.

As just alluded to, another issue that arises is that one needs to specify not only the hypothesized statistical model (some type of low-dimensional manifold or some type of random graph model here) but also one needs to specify exactly what is the problem one wants to solve. 
Here are several~examples.
\begin{itemize}
\item
One can ask to recover the objective function value of the objective you write down.
\item
One can ask to recover the leading nontrivial eigenvector of the data.
\item
One can ask to converge to the Laplacian of the hypothesized model.
\item
One can ask to find clusters that are present in the hypothesized model.
\end{itemize}
The first bullet above is most like what we have been discussing so far.
In most cases, however, people want to use the solution to that objective for something else, and the other bullets are examples of that.
Typically in these cases one is asking for a lot more than the objective function value, e.g., one wants to recover the ``certificate'' or actual solution vector achieving the optimum, or some function of it like the clusters that are found by sweeping along it, and so one needs stronger assumptions.
Importantly, many of the convergence and statistical issues are quite different, depending on the exact problem being considered.

\begin{itemize}
\item
\textbf{Today}, we will assume that the data points are drawn from a low-dimensional manifold and that from the empirical point cloud of data we construct an empirical graph Laplacian; and we will ask how this empirical Laplacian relates to the Laplacian operator on the manifold.
\item
\textbf{Next time}, we will ask whether spectral clustering is consistent in the sense that it converges to something meaningful and $n \rightarrow \infty$, and we will provide sufficient conditions for this (and we will see that the seemingly-minor details of the differences between unnormalized spectral clustering and normalized spectral clustering lead to very different statistical results).
\item
\textbf{On the day after that}, we will consider results for spectral clustering in the stochastic blockmodel, for both vanilla situations as well as for situations in which the data are very~sparse.
\end{itemize}

\subsection{Introduction to manifold issues}

Manifold-based ML is an area that has received a lot of attention recently, but for what we will discuss today one should think back to the discussion we had of Laplacian Eigenmaps.
At root, this method defines a set of features that can then be used for various tasks such as data set parametrization, clustering, classification, etc.
Often the features are useful, but sometimes they are not; here are several examples of when the features developed by LE and related methods are often less than useful.
\begin{itemize}
\item
\textbf{Global eigenvectors are localized.}
In this case, ``slowly-varying'' functions (by the usual precise definition) are not so slowly-varying (in a sense that most people would find intuitive).
\item
\textbf{Global eigenvectors are not useful.}
This may arise if one is interested in a small local part of the graph and if information of interest is not well-correlated with the leading or with any eigenvector.
\item
\textbf{Data are not meaningfully low-dimensional.}
Even if one believes that there is some sort of hypothesized curved low-dimensional space, there may not be a small number of eigenvectors that capture most of this information.
(This does \emph{not} necessarily mean that the data are ``high rank,'' since it is possible that the spectrum decays, just very slowly.)
This is more common for very sparse and noisy data, which are of course very common.
\end{itemize}
Note that the locally-biased learning methods we described, e.g., the \textsf{LocalSpectral} procedure, the PPR procedure, etc., was motivated by one common situation when the global methods such as LE and related methods had challenges.

While it may be fine to have a ``feature generation machine,'' most people prefer some sort of theoretical justification that says when a method works in some idealized situation.  
To that end, many of the methods like LE assume that the data are drawn from some sort of low-dimensional manifold.
Today, we will talk about one statistical aspect of that having to do with converging to the manifold.

To start, here is a simple version of the ``manifold story'' for a classification problem.
Consider a $2$-class classification problem with classes $C_1$ and $C_2$, where the data elements are drawn from some space $\mathcal{X}$, whose elements are to be classified.
A statistical or probabilistic model typically includes the following two ingredients:
a probability density $p(x)$ on $\mathcal{X}$; and
class densities $\{ p \left( C_i | x \in \mathcal{X} \right) \}$,  for $i\in\{1,2\}$.
Importantly, if there are unlabeled data, then the unlabeled data don't tell us much about the conditional class distributions, as we can't identify classes without labels, but the unlabeled data can help us to improve our estimate of the probability distribution $p(x)$.
That is, the unlabeled data tell us about $p(x)$, and the labeled data tell us about $\{ p \left( C_i | x \in \mathcal{X} \right) \}$.

If we say that the data come from a low-dimensional manifold $\mathcal{X}$, then a natural geometric object to consider is the Laplace-Beltrami operator on $\mathcal{X}$.
In particular, let $\mathcal{M} \subset \mathbb{R}^{n}$ be an $n$-dimensional compact manifold isometrically embedded in $\mathbb{R}^{k}$.
(Think of this as an $n$-dimensional ``surface'' in $\mathbb{R}^{k}$.)
The Riemannian structure on $\mathcal{M}$ induces a volume form that allows us to integrate functions defined on $\mathcal{M}$.
The square-integrable functions form a Hilbert space $\mathcal{L}^2\left(\mathcal{M}\right)$.
Let $C^{\infty}\left(\mathcal{M}\right)$ be the space of infinitely-differentiable functions on $\mathcal{M}$.
Then, the Laplace-Beltrami  operator is a second order differentiable operator
$\Delta_{\mathcal{M}} : C^{\infty}\left(\mathcal{M}\right) \rightarrow C^{\infty}\left(\mathcal{M}\right)$.
We will define this in more detail below; for now, just note that if the manifold is $\mathbb{R}^{n}$, then the Laplace-Beltrami  operator is $\Delta = - \frac{\partial^2}{\partial x_1^2}$.
There are two important properties of the Laplace-Beltrami  operator.

\begin{itemize}
\item
\textbf{It provides a basis for $\mathcal{L}^2\left(\mathcal{M}\right)$.}
In general, $\Delta$ is a PSD self-adjoint operator (w.r.t. the $\mathcal{L}^2$ inner product) on twice differentiable functions.
In addition, if $\mathcal{M}$ is a \emph{compact} manifold, then $\Delta$ has a discrete spectrum, the smallest eigenvalue of $\Delta$ equals $0$ and the associated eigenfunction is the constant eigenfunction, and the eigenfunctions of $\Delta$ provide an orthonormal basis for the Hilbert space $\mathcal{L}^2\left(\mathcal{M}\right)$.
In that case, any function $f \in \mathcal{L}^2\left(\mathcal{M}\right)$ can be written as $f(x) = \sum_{i=1}^{\infty} a_i e_i(x)$, where $e_i$ are the eigenfunctions of $\Delta$, i.e., where $\Delta e_i = \lambda_i e_i$.

In this case, then the simplest model for the classification problem is that the class membership is a square-integrable function, call it $m: \mathcal{M}\rightarrow\{-1,+1\}$, in which case the classification problem can be interpreted as interpolating a function on the manifold.
Then we can choose the coefficients to get an optimal fit, $m(x) = \sum_{i=1}^{n} a_i e_i$, in the same way as we might approximate a signal with a Fourier series.
(In fact, if $\mathcal{M}$ is a unit circle, call it $S^1$, then $\Delta_{S_1} f(\theta) = -\frac{d^2f(\theta)}{d \theta^2}$, and the eigenfunctions are sinusoids with eigenvalues $\{1^2,2^2,\ldots\}$. and we get the usual Fourier series.)
\item
\textbf{It provides a smoothness functional.}
Recall that a simple measure of the degree of smoothness for a function $f$ on the unit circle $S^1$ is
$$
S(f) = \int_{S^1} | f(\theta)^{\prime}|^2 d\theta  .
$$
In particular, $f$ is smooth iff this is close to zero.
If we take this expression and integrate by parts, then we get
$$
S(f) = \int_{S^1} f^{\prime}(\theta) d\theta
     = \int_{S^1} f \Delta f d\theta
     = \langle \Delta f,f \rangle_{\mathcal{L}^2\left(S^1\right)}  .
$$
More generally, if $f : \mathcal{M} \rightarrow \mathbb{R}$, then it follows that
$$
S(f) = \int_{\mathcal{M}} | \nabla f |^2 d\mu
     = \int_{\mathcal{M}} f \Delta f d\mu
     = \langle \Delta f,f \rangle_{\mathcal{L}^2\left(\mathcal{M}\right)}  .
$$
So, in particular, the smoothness of the eigenfunction is controlled by the eigenvalue, i.e., 
$$
S(e_i) = \langle \Delta e_i , e_i \rangle_{\mathbb{L}^2\left(\mathcal{M}\right)} = \lambda_i  ,
$$
and for arbitrary $f$ that can be expressed as $f = \sum_i \alpha_i e_i$, we have that
$$
S(f) = \langle \Delta f , f \rangle 
     = \left\langle \sum_i \alpha_i \Delta e_i , \sum_i \alpha_i e_i \right\rangle
     = \sum_i \lambda_i \alpha_i^2    .
$$
(So, in particular, approximating a function $f$ by its first $k$ eigenfunctions is a way to control the smoothness of the eigenfunctions; and the linear subspace where the smoothness functions is finite is a RKHS.)
\end{itemize}

This has strong connections with a range of RKHS problems.
(Recall that a RKHS is a Hilbert space of functions where the evaluation functionals, the functionals that evaluate functions at a point, are bounded linear functionals.)
Since the Laplace-Beltrami operator on $\mathcal{M}$ can be used to provide a basis for $\mathcal{L}^2\left(\mathcal{M}\right)$, we can take various classes of functions that are defined on the manifold and solve problems of the form
\begin{equation}
\min_{f \in H} \sum_i \left( y_i - f(x_i) \right)^2 + \lambda G(f)  ,
\label{eqn:reg-opt}
\end{equation}
where $H :\mathcal{M} \rightarrow \mathbb{R}$.
In general, the first term is the empirical risk, and the second term is a stabilizer or regularization term.
As an example, one could choose 
$G(f) = \int_{\mathcal{M}} \langle \nabla f, \nabla f \rangle = \sum_i \alpha_i^2 \lambda_i $ (since $f = \sum_i \alpha_i e_i(x)$), and $H = \{ f = \sum_i \alpha_i e_i | G(f) < \infty \}$, 
in which case one gets an optimization problem that is quadratic in the $\alpha$ variables.

As an aside that is relevant to what we discussed last week with the heat kernel, let's go through the construction of a RKHS that is invariantly defined on the manifold $\mathcal{M}$.
To do so, let's fix an infinite sequence of non-negative numbers $\{ \mu_i | i \in \mathbb{Z}^{+} \}$ s.t. $\sum_i \mu_i < \infty$ (as we will consider in the examples below).
Then, define the following linear space of continuous functions
$$
H = \left\{ f = \sum_i \alpha_i f_i  |  \sum_i \frac{\alpha_i^2}{\mu_i} < \infty \right\} . 
$$
Then, we can define the inner product as: 
for $f = \sum_i \alpha_i f_i$ and $g = \sum_i \beta_i g_i$, we have $\left\langle f,g \right\rangle = \sum_i \frac{\alpha_i \beta_i}{ \mu_i} $.
Then, $H$ is a RKHS with the following kernel: $K(p,q) = \sum_i \mu_i e_i(p)e_i(q)$.
Then, given this, we can solve regularized optimization problems of the form given in Eqn.~(\ref{eqn:reg-opt}) above.
In addition, we can get other choices of kernels by using different choices of $\mu$ vectors.
For example, if we let $\mu_i = e^{-t\lambda_i}$, where $\lambda_i$ are the eigenvalues of $\Delta$, then we get the heat kernel corresponding to heat diffusion on the manifold; if we let $\mu_i=0$, for all $i > i^{*}$, then we are solving an optimization problem in a finite dimensional space; and so on.

All of this discussion has been for data drawn from an hypothesized manifold $\mathcal{M}$.
Since we are interested in a smoothness measure for functions for a graph, then if we think of the graph as a model for the manifold, then we want the value of a function not to change too much between points.
In that case, we get
$$
S_G(f) = \sum_{i ~ j} W_{ij} \left( f_i - f_j \right)  ,
$$
and it can be shown that
$$
S_G(f) = fLf^T = \langle f,Lf \rangle_G = \sum_{i=1}^{n} \lambda_i \langle f,e_i \rangle_G  .
$$
Of course, this is the discrete object with which we have been working all along.
Viewed from the manifold perspective, this corresponds to the discrete analogue of the integration by parts we performed above.
In addition, we can use all of this to consider questions having to do with ``regularization on manifolds and graphs,'' as we have allude to in the past.

To make this connection somewhat more precise, recall that for a RKHS, there exists a kernel $K : X \times X \rightarrow \mathbb{R}$ such that $f(x) = \left< f(\cdot),K(x,\cdot) \right>_H$.
For us today, the domain $X$ could be a manifold $\mathcal{M}$ (in which case we are interested in kernels $K: \mathcal{M} \times \mathcal{M} \rightarrow\mathbb{R}$), or it could be points from $\mathbb{R}^{n}$ (say, on the nodes of the graph that was constructed from the empirical original data by a nearest neighbor rule, in which case we are interested in kernels $K:\mathbb{R}^{n}\times\mathbb{R}^{n}\rightarrow\mathbb{R}$).
We haven't said anything precise yet about how these two relate, so now let's turn to that and ask about connections between kernels constructed from these two different places, as $n \rightarrow\infty$.

\subsection{Convergence of Laplacians, setup and background}

Now let's look at questions of convergence.

If $H : \mathcal{M} \rightarrow{R}$ is a RKHS invariantly defined on $\mathcal{M}$, then the key goal is to minimize regularized risk functionals of the form 
$$
E_{\lambda} = \min_{f \in H} \mathbb{E}\left[ \left( y-f(x)\right)^2 \right] + \lambda \|f\|_H^2  .
$$
In principle, we can do this---if we had an infinite amount of data available \emph{and} the true manifold is known.
Instead, we minimize the empirical risk which is of the form
$$
\hat{E}_{\lambda,n} = \min_{f \in H} \frac{1}{n} \sum \left( y_i - f(x_i) \right)^2 + \lambda \|f \|_H  .
$$
The big question is: how far is $\hat{E}_{\lambda,n}$ from $E_{\lambda}$.

The point here is the following: assuming the manifold is known or can be estimated from the data, then making this connection is a relatively-straightforward application of Hoeffding bounds and regularization/stability ideas.
But:
\begin{itemize}
\item
In theory, establishing convergence to the hypothesized manifold is challenging.
We will get to this below.
\item
In practice, testing the hypothesis that the data are drawn from a manifold in some meaningful sense of the word is harder still.
(For some reason, this question is not asked in this area.
It's worth thinking about what would be test statistics to validate or invalidate the manifold hypothesis, e.g., is that the best conductance clusters are not well balanced sufficient to invalidate it?)
\end{itemize}

So, the goal here is to describe conditions under which the point cloud in $\mathcal{X}$ of the sample points converges to the Laplace-Beltrami  operator on the underlying hypothesized manifold $\mathcal{M}$.
From this perspective, the primary data are points in $\mathcal{X}$, that is assumed to be drawn from an underlying manifold, with uniform or nonuniform density, and we want to make the claim that the Adjacency Matrix or Laplacian Matrix of the empirical data converges to that of the manifold. 
(That is, the data are not a graph, as will arise with the discussion of the stochastic block model.)
In particular, the graph is and empirical object, and if we view spectral graph algorithms as applying to that empirical object then they are stochastically justified when they can relate to the underlying processes generating the data.

What we will describe today is the following.
\begin{itemize}
\item
For data drawn from a uniform distribution on a manifold $\mathcal{M}$, the graph Laplacian converges to the Laplace-Beltrami  operator, as $n \rightarrow \infty$ and the kernel bandwidth is chosen appropriately (where the convergence is uniform over points on the manifold and for a class of functions).
\item
The same argument applies for arbitrary probability distributions, except that one converges to a weighted Laplacian; and in this case the weights can be removed to obtain convergence to the normalized Laplacian.
(Reweighting can be done in other ways to converge to other quantities of interest, but we won't discuss that in detail.)
\end{itemize}

Consider a compact smooth manifold $\mathcal{M}$ isometrically embedded in $\mathbb{R}^{n}$.
The embedding induces a measure corresponding to volume form $\mu$ on the manifold (e.g., the volume form for a closed curve, i.e., an embedding of the circle, measures the usual curve length in $\mathbb{R}^{n}$).
The Laplace-Beltrami operator $\Delta_{\mathcal{M}}$ is the key geometric object associated to a Riemannian manifold.
Given $\rho \in \mathcal{M}$, the tangent space $T_{\rho}\mathcal{M}$ can be identified with the affine space to tangent vectors to $\mathcal{M}$ at $\rho$.
(This vector space has a natural inner product induced by embedding $\mathcal{M} \subset \mathbb{R}^{n}$.)
So, given a differentiable function $f:\mathcal{M} \rightarrow \mathcal{R}$, let $\nabla_{\mathcal{M}}f$ be the gradient vector on $\mathcal{M}$ (where $\nabla_{\mathcal{M}}f(p)$ points in the direction of fastest ascent of $f$ at $\rho$.
Here is the definition.

\begin{definition}
The \emph{Laplace-Beltrami operator} $\Delta_{\mathcal{M}}$ is the divergence of the gradient, i.e., 
$$
\Delta_{\mathcal{M}}f = - \mbox{div} \left( \nabla_{\mathcal{M}} f \right)  .
$$ 
Alternatively, $\Delta_{\mathcal{M}}$ can be defined as the unique operator s.t., for all two differentiable functions $f$ and $h$, 
$$
\int_{\mathcal{M}} h(x) \Delta_{\mathcal{M}} f(x) d \mu(x) 
   = \int_{\mathcal{M}} \left< \nabla_{\mathcal{M}} h(x), \nabla_{\mathcal{M}} f(x) \right> d \mu  ,
$$
where the inner product is on the tangent space and $\mu$ is the uniform measure.
\end{definition}

In $\mathbb{R}^{n}$, we have $\Delta f = - \sum_i \frac{\partial^2 f}{\partial x_i^2}$.
More generally, on a $k$-dimensional manifold $\mathcal{M}$, in a local coordinate system $\left( x_1, \ldots, x_n \right)$, with a metric tensor $g_{ij}$, if $g^{ij}$ are the components of the inverse of the metric tensor, then the Laplace-Beltrami operator applied to a function $f$ is
$$
\Delta_{\mathcal{M}} f 
   = \frac{1}{\sqrt{\mbox{det}(g)}}  
      \sum_j \frac{\partial}{\partial x^j} \left( \sqrt{\mbox{det}(g)} \sum_i g^{ij} \frac{\partial f}{ \partial x_i} \right)
$$

(If the manifold has nonuniform measure $\nu$, given by $d \nu(x) = P(x) d \mu (x)$, for some function $P(x)$ and with $d\mu$ being the canonical measure corresponding to the volume form, then we have the more general notion of a \emph{weighted manifold Laplacian}:
$\Delta_{\mathcal{M},\mu} = \Delta_P f = \frac{1}{P(x)} \mbox{div} \left( P(x) \nabla_{\mathcal{M}} f \right) $.)

The question is how to reconstruct $\Delta_{\mathcal{M}}$, given a finite sample of data points from the manifold?
Here are the basic objects (in addition to $\Delta_{\mathcal{M}}$) that are used to answer this question.
\begin{itemize}
\item
\textbf{Empirical Graph Laplacian.}
Given a sample of $n$ points $x_i, \ldots, x_n$ from $\mathcal{M}$, we can construct a weighted graph with weights $W_{ij} = e^{-\|x_i-x_j\|^2/4t}$, and then 
\[
\left( L_n^t\right)_{ij} 
= 
\left\{ \begin{array}{l l}
                    -W_{ij} & \quad \text{if $i \ne j$}\\
                    \sum_k W_{ik}      & \quad \text{if $i=j$}
                 \end{array}
         \right.  .
\]
Call $L_n^t$ the graph Laplacian matrix.
We can think of $L_n^t$ as an operation of functions on the $n$ empirical data points:
\[
L_n^t f(x_i) = f(x_i) \sum_j e^{-\|x_i-x_j\|^2/(4t)} - \sum_j f(x_j) e^{-\|x_i-x_j\|^2/(4t)}  ,
\]
but this operator operates only on the empirical data, i.e., it says nothing about other points from $\mathcal{M}$ or the ambient space in which $\mathcal{M}$ is embedded.
\item
\textbf{Point Cloud Laplace operator.}
This formulation extends the previous results to any function on the ambient space.
Denote this by $\underbar{L}_n^t$ to get
\[
\underbar{L}_n^t f(x) = f(x) \frac{1}{n} \sum_j e^{-\|x-x_j\|^2/(4t)} - \frac{1}{n}\sum_j f(x_j) e^{-\|x-x_j\|^2/(4t)}
\]
(So, in particular, when evaluated on the empirical data points, we have that $\underbar{L}_n^t f(x_i) = \frac{1}{n} L_n^T f(x_i)$.)
Call $\underbar{L}_n^t$ the Laplacian associated to the point cloud $x_1,\ldots,x_n$.
\item
\textbf{Functional approximation to the Laplace-Beltrami operator.}
Given a measure $\nu$ on $\mathcal{M}$, we can construct an operator 
\[
\underbar{L}^tf(x) = f(x) \int_{\mathcal{M}} e^{-\|x-y\|^2/(4t)} d\nu(y) - \int_{\mathcal{M}} f(y) e^{-\|x-y\|^2/(4t)} d\nu(y) .
\]
Observe that $\underbar{L}_n^t$ is just a special form of $\underbar{L}^t$, corresponding to the Dirac measure supported on $x_1,\ldots,x_n$.
\end{itemize}

\subsection{Convergence of Laplacians, main result and discussion}

The main result they describe is to establish a connection between the graph Laplacian associated to a point cloud (which is an extension of the graph Laplacian from the empirical data points to the ambient space) and the Laplace-Beltrami operator on the underlying manifold $\mathcal{M}$.
Here is the main results.
\begin{theorem}
Let $x_1,\ldots,x_n$ be data points sampled from a uniform distribution on the manifold $\mathcal{M} \subset \mathbb{R}^{n}$.
Choose $t_n = n^{-1/(k+2+\alpha)}$, for $\alpha > 0$, and let $f \in C^{\infty}\left(\mathcal{M}\right)$.
Then
\[
\lim_{n \rightarrow \infty} \frac{1}{t_n(4 \pi t_n)^{k/2}} \underbar{L}_n^{t_n} f(x) 
   = \frac{1}{\mbox{Vol}\left(\mathcal{M}\right)} \Delta_{\mathcal{M}} f(x) ,
\]
where the limit is taken in probability and $\mbox{Vol}\left(\mathcal{M}\right)$ is the volume of the manifold with respect to the canonical measure.
\end{theorem}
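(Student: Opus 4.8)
The plan is to split the normalized point-cloud operator into a deterministic ``bias'' error and a stochastic ``variance'' error, control each separately, and then check that the bandwidth schedule $t_n = n^{-1/(k+2+\alpha)}$ makes both vanish. Introduce the functional Laplacian $\underbar{L}^{t}$ associated with the normalized uniform measure $\nu = \mu/\mbox{Vol}(\mathcal{M})$ on $\mathcal{M}$; then $\mathbb{E}[\,\underbar{L}_n^{t}f(x)\,] = \underbar{L}^{t}f(x)$, so after multiplying by the normalization $c_n := \big(t_n(4\pi t_n)^{k/2}\big)^{-1}$ the error $c_n\underbar{L}_n^{t_n}f(x) - \frac{1}{\mbox{Vol}(\mathcal{M})}\Delta_{\mathcal{M}}f(x)$ splits as the \emph{variance} term $c_n\big(\underbar{L}_n^{t_n}f(x) - \underbar{L}^{t_n}f(x)\big)$ plus the \emph{bias} term $c_n\underbar{L}^{t_n}f(x) - \frac{1}{\mbox{Vol}(\mathcal{M})}\Delta_{\mathcal{M}}f(x)$.

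First I would handle the bias term, which is purely analytic and is the heart of the argument: show that for each fixed $x$, $c_t\underbar{L}^{t}f(x) \to \frac{1}{\mbox{Vol}(\mathcal{M})}\Delta_{\mathcal{M}}f(x)$ as $t\downarrow 0$, where $c_t = (t(4\pi t)^{k/2})^{-1}$. The mechanism is that the Gaussian weight $e^{-\|x-y\|^2/(4t)}$ concentrates in a shrinking Euclidean ball about $x$; one passes to geodesic normal coordinates centered at $x$, Taylor-expands $f$ to second order, uses that the ambient chord length $\|x-y\|$ equals the geodesic distance up to a correction cubic in that distance (so it does not disturb the $O(t)$ term), and integrates Gaussian moments over $\mathbb{R}^{k}$. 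The zeroth-order term reproduces the $\int e^{-\|x-y\|^2/(4t)}\,d\nu(y)$ normalization in the definition of $\underbar{L}^{t}$, the first-order term vanishes by symmetry, and the second-order term yields $\Delta_{\mathcal{M}}f(x)$ via the local-coordinate formula $\Delta_{\mathcal{M}}f = -\frac{1}{\sqrt{\det g}}\,\partial_j\!\big(\sqrt{\det g}\,g^{ij}\partial_i f\big)$; the part of $\mathcal{M}$ outside a fixed neighborhood of $x$ contributes only $O(e^{-c/t})$. Smoothness of $f$ and compactness of $\mathcal{M}$ make these estimates uniform in $x$.

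Next I would handle the variance term. For fixed $t$, $\underbar{L}_n^{t}f(x)$ is an average of $n$ i.i.d.\ bounded random variables (kernel and $f$ are bounded on the compact $\mathcal{M}$), so Hoeffding's inequality concentrates $\underbar{L}_n^{t}f(x)$ around $\underbar{L}^{t}f(x)$ at rate $O(n^{-1/2})$ with constants independent of $t$. The only danger is the amplifying prefactor $c_n \sim t_n^{-(k+2)/2}$: after multiplying, the fluctuation is of order $t_n^{-(k+2)/2}n^{-1/2}$ with high probability, which tends to $0$ exactly when $n t_n^{k+2}\to\infty$, i.e.\ $t_n \gg n^{-1/(k+2)}$. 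The choice $t_n = n^{-1/(k+2+\alpha)}$ with $\alpha>0$ satisfies this and also has $t_n\to 0$, so combining the two bounds gives convergence in probability; a union bound (or Borel--Cantelli) promotes it if almost-sure or uniform statements are wanted.

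The main obstacle, where most of the care goes, is the bias step: making the normal-coordinate Taylor expansion rigorous with an explicit remainder, correctly accounting for the Jacobian $\sqrt{\det g}$ of the exponential map (which injects the curvature corrections at the relevant order), and verifying that the chord-versus-geodesic discrepancy is genuinely higher order. A secondary subtlety is upgrading the pointwise-in-probability statement to the uniform-over-$x$-and-over-a-function-class version suggested in the surrounding text, which needs a covering-number argument in the variance step plus uniform remainder control in the bias step; both succeed because $\mathcal{M}$ is compact and the test functions are taken in a regular class. Finally, the non-uniform case follows by rerunning the bias computation with $d\nu = P\,d\mu$, which produces the weighted Laplacian $\Delta_P f = \frac{1}{P}\,\mbox{div}(P\,\nabla_{\mathcal{M}}f)$ up to a $P$-dependent factor that can then be divided out to recover the normalized Laplace--Beltrami operator.
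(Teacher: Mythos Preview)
Your proposal is correct and follows essentially the same two-step decomposition as the paper's outline: first show $\underbar{L}_n^t \to \underbar{L}^t$ via concentration (the ``easier part''), then show that the appropriately scaled $\underbar{L}^t$ converges to $\Delta_{\mathcal{M}}$ as $t\to 0$ (the ``harder part''). Your variance step via Hoeffding and your identification of the condition $nt_n^{k+2}\to\infty$ match the paper's account, as does your list of technical obstacles in the bias step (normal coordinates, chord-versus-geodesic discrepancy, Jacobian of the exponential map).

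The one difference worth flagging is in how the bias step is \emph{framed}. You attack it directly by Taylor expansion in normal coordinates and Gaussian-moment computation. The paper instead routes through the heat equation: on $\mathbb{R}^k$ the Gaussian $(4\pi t)^{-k/2}e^{-\|x-y\|^2/4t}$ is exactly the heat kernel $H^t$, and the identity $\Delta f(x) = \lim_{t\to 0}\frac{1}{t}\big(f(x)-H^tf(x)\big)$ makes the $\mathbb{R}^k$ case immediate without any expansion; the manifold case then reduces to quantifying how far the ambient Gaussian is from the true manifold heat kernel $H^t_{\mathcal{M}}$, using its short-time parametrix. The two approaches are equivalent in content---your Taylor-expansion argument is effectively what underlies the parametrix---but the heat-kernel framing explains \emph{why} the specific normalization $t(4\pi t)^{k/2}$ appears and gives a cleaner conceptual story, while your direct route is more self-contained and avoids invoking heat-kernel machinery.
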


We are not going to go through the proof in detail, but we will outline some key ideas used in the proof.
Before doing that, here are some things to note.
\begin{itemize}
\item
This theorem assert pointwise convergence of $\underbar{L}_n^t f(p)$ to $\Delta_{\mathcal{M}} f(p)$, for a fixed function $f$ and a fixed point $p$.
\item
Uniformity over all $p \in \mathcal{M}$ follows almost immediately from the compactness of $\mathcal{M}$.
\item
Uniform convergence over a class of function, e.g., functions $C^{k}\left(\mathcal{M}\right)$ with bounded $k^{th}$ derivative, follows with more effort.
\item
One can consider a more general probability distribution $P$ on $\mathcal{M}$ according to which data points are sampled---we will get back to an example of this below. 
\end{itemize}

For the proof, the easier part is to show that $\underbar{L}_n^t \rightarrow \underbar{L}^t$, as $n \rightarrow \infty$, if points are samples uniformly: this uses some basic concentration results.
The harder part is to connect $\underbar{L}^t$ and $\Delta_{\mathcal{M}}$: what must be shown is that when $t \rightarrow 0$, then $L^t $ appropriately scaled converges to $\Delta_{\mathcal{M}}$.

Here are the basic proof ideas, which exploit heavily connections with the heat equation on $\mathcal{M}$.

For simplicity, consider first $\mathbb{R}^{n}$, where we have the following theorem.
\begin{theorem}[Solution to heat equation on $\mathbb{R}^{k}$]
Let $f(x)$ be a sufficiently differentiable bounded function.
Then 
\[
H^t f = \left(4 \pi t \right)^{-k/2} \int_{\mathbb{R}^{k}} e^{-\frac{\|x-y\|^2}{4t}} f(y) dy  ,
\]
and
\[
f(x) = \lim_{t \rightarrow 0} H^t f(x) = \left(4 \pi t\right)^{-k/2} \int_{\mathbb{R}^{k}} e^{-\frac{\|x-y\|^2}{4t}} f(y)dy  ,
\]
and the function $u(x,t) = H^t f$ satisfies the heat equation
\[
\frac{\partial}{\partial t} u(x,t) + \Delta u(x,t) = 0 
\]
with initial condition $u(x,0) = f(x)$.
\end{theorem}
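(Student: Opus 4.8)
The plan is to prove the three assertions in turn, with the Gaussian identity doing most of the work. First I would record the basic Gaussian mass and moment facts: $(4\pi t)^{-k/2}\int_{\mathbb{R}^k} e^{-\|z\|^2/(4t)}\,dz = 1$, $\int z_i\, e^{-\|z\|^2/(4t)}\,dz = 0$ by symmetry, and $(4\pi t)^{-k/2}\int z_i z_j\, e^{-\|z\|^2/(4t)}\,dz = 2t\,\delta_{ij}$. With these in hand, the operator $H^t$ is a probability averaging operator: $H^t f(x) = \mathbb{E}[f(x + \sqrt{2t}\,G)]$ where $G$ is a standard $k$-dimensional Gaussian. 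This representation makes the whole argument transparent.

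For the limit statement $\lim_{t\to 0} H^t f(x) = f(x)$, I would change variables $y = x + z$, write $H^t f(x) - f(x) = (4\pi t)^{-k/2}\int e^{-\|z\|^2/(4t)}(f(x+z) - f(x))\,dz$ using the fact that the Gaussian integrates to $1$, split the integral into $\|z\| \le \delta$ and $\|z\| > \delta$, bound the first piece using the modulus of continuity (or boundedness of $\nabla f$) and the second piece using boundedness of $f$ together with the fact that the Gaussian tail mass outside any fixed ball tends to $0$ as $t\to 0$. This is a standard approximate-identity argument and I would not grind through the $\epsilon$'s.

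For the heat equation itself, the cleanest route is to differentiate under the integral sign. Write $u(x,t) = (4\pi t)^{-k/2}\int e^{-\|x-y\|^2/(4t)} f(y)\,dy$ and let $\Phi_t(z) = (4\pi t)^{-k/2} e^{-\|z\|^2/(4t)}$ be the heat kernel. One verifies by direct computation that $\Phi_t$ itself solves $\partial_t \Phi_t = -\Delta_z \Phi_t$ — here I would be careful about the sign convention, since the excerpt uses $\Delta = -\sum_i \partial^2/\partial x_i^2$, so the PDE reads $\partial_t u + \Delta u = 0$, i.e. $\partial_t u = \sum_i \partial^2 u/\partial x_i^2$, the usual heat equation. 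Then $\partial_t u(x,t) = \int (\partial_t \Phi_t)(x-y) f(y)\,dy = \int (\Delta_x \Phi_t)(x-y) f(y)\,dy = \Delta_x \int \Phi_t(x-y) f(y)\,dy = \Delta_x u(x,t)$, where moving $\Delta_x$ outside the integral and the differentiation under the integral sign are justified by the rapid decay of $\Phi_t$ and its derivatives together with the boundedness of $f$ (dominated convergence on difference quotients). Finally, $u(x,0) = f(x)$ is exactly the limit statement just proved. I would also note the semigroup remark that $u(x,t) = H^t f$ is smooth for $t>0$ even if $f$ is merely bounded, which is why the ``sufficiently differentiable'' hypothesis is only needed for the pointwise convergence rate, not for the PDE.

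The main obstacle — really the only subtle point — is the justification of interchanging $\partial_t$, $\Delta_x$, and $\int dy$. This requires producing an integrable dominating function for the difference quotients of $\Phi_t(x-y)$ in $x$ and in $t$, uniformly for $t$ in a compact subinterval of $(0,\infty)$ and $x$ in a neighborhood of the point of interest; the Gaussian and polynomial-times-Gaussian bounds make this routine but it is the step where one must actually be careful rather than wave hands. Everything downstream (extending to a manifold $\mathcal{M}$ via geodesic normal coordinates, where $\Phi_t$ picks up curvature corrections of order $t$, and then connecting $\underline{L}^t$ to $\Delta_{\mathcal{M}}$ by expanding $e^{-\|x-y\|^2/(4t)}$ and comparing to the heat-kernel asymptotics) is deferred to the subsequent parts of the argument in the excerpt.
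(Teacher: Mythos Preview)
Your proof is correct and follows the standard textbook route (Gaussian moments, approximate-identity argument for $t\to 0$, and differentiation under the integral using the pointwise identity $\partial_t \Phi_t = \sum_i \partial_i^2 \Phi_t$). The paper, however, does not prove this theorem at all: it is stated as a classical fact about the heat equation on $\mathbb{R}^k$ and then immediately used to derive the approximation scheme $\Delta f(x) = \lim_{t\to 0} t^{-1}(f(x) - H^t f(x))$. So there is nothing to compare against; your argument simply fills in a result the paper takes for granted.
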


This result for the heat equation is the key result for approximating the Laplace operator.
\begin{eqnarray*}
\Delta f(x) &=& -\frac{\partial}{\partial t} u(x,t) |_{t=0} \\ 
   &=& - \frac{\partial}{\partial t} H^t f(x) |_{t=0}  \\
   &=& \lim_{t \rightarrow 0} \frac{1}{t} \left( f(x) - H^t f(x) \right)  .
\end{eqnarray*}
By this last result, we have a scheme for approximating the Laplace operator.
To do so, recall that the heat kernel is the Gaussian that integrates to $1$, and so
\[
\Delta f(x) = \lim_{t \rightarrow 0} -\frac{1}{t} \left( \left(4 \pi t\right)^{-k/2} \int_{\mathbb{R}^{k}} e^{\frac{-\|x-y\|^2}{4t}}f(y)dy - f(x) \left(4 \pi t\right)^{-k/2} \int_{\mathbb{R}^{k}} e^{\frac{-\|x-y\|^2}{4t}} dy  .
 \right)
\]
It can be shown that this can be approximated by the point cloud $x_1,\ldots,x_n$ by computing the empirical version as
\begin{eqnarray*}
\hat{\Delta} f(x) 
   &=& \frac{1}{t} \frac{\left(4 \pi t\right)^{-k/2}}{n} \left( f(x)\sum_i e^{\frac{-\|x-x_i\|^2}{4t}}  - \sum_i e^{\frac{-\|x-x_i\|^2}{4t}}  f(x_i) \right) \\
   &=& \frac{1}{t\left(4\pi t\right)^{k/2}} \underbar{L}_n^t f(x)   .
\end{eqnarray*}
It is relatively straightforward to extend this to a convergence result for $\mathbb{R}^{k}$.
To extend it to a convergence result for arbitrary manifolds $\mathcal{M}$, two issues arise:
\begin{itemize}
\item
With very few exceptions, we don't know the exact form of the heat kernel $H_{\mathcal{M}}^t(x,y)$.
(It has the nice form of a Gaussian for $\mathcal{M}=\mathbb{R}^{k}$.)
\item
Even asymptotic forms of the heat kernel requires knowing the geodesic distance between points in the point cloud, but we can only observe distance in the ambient space.
\end{itemize}
See their paper for how they deal with these two issues; this involves methods from differential geometry that are very nice but that are not directly relevant to what we are doing.

Next, what about sampling with respect to nonuniform probability distributions?
Using the above proof, we can establish that we converge to a weighted Laplacian.
If this is not of interest, then once can instead normalize differently and get one of two results.
\begin{itemize}
\item
The weighted scaling factors can be removed by using a different normalization of the weights of the point cloud.
This different normalization basically amounts to considering the normalized Laplacian.
See below.
\item
With yet a different normalization, we can recover the Laplace-Beltrami  operator on the manifold.
The significance of this is that it is possible to separate geometric aspects of the manifold from the probability distribution on it.
This is of interest to harmonic analysts, and it underlies extension of the Diffusion Maps beyond the Laplacian Eigenmaps.
\end{itemize}

As for the first point, if we have a compact Riemannian manifold $\mathcal{M}$ and a probability distribution $P:\mathcal{M} \rightarrow \mathbb{R}^{+}$ according to which points are drawn in an i.i.d. fashion.
Assume that $a \le P(x) \le b$, for all $x \in \mathcal{M}$.
Then, define the point cloud Laplacian operator as 
\[\underbar{L}_n^t f(x) = \frac{1}{n} \sum_{i=1}^{n} W(x_i,x_j) \left( f(x) - f(x_i) \right)
\]
If $W(x,x_i) = e^{\frac{\|x-x_i\|}{4t}}$, then this corresponds to the operator we described above.
In order to normalized the weights, let 
\[
W(x,x_i) = \frac{1}{t} \frac{G_t(x,x_i)}{\sqrt{\hat{d}_t(x)}\sqrt{\hat{d}_t(x_i)}}  ,
\]
where 
\begin{eqnarray*}
G_t(x,x_i) &=& \frac{1}{\left( 4 \pi t\right)^{k/2}  }e^{-\frac{\|x-x_i\|^2}{4t} }   ,    \\
\hat{d}_t(x) &=& \frac{1}{n} \sum_{j \ne i} G_t(x,x_j)  ,  \quad\mbox{and} \\
\hat{d}_t(x_i) &=& \frac{1}{n-1} \sum_{j \ne i} G_t(x_i,x_j)  ,
\end{eqnarray*}
where the latter two quantities are empirical estimates of the degree function $d_t(x)$, where 
\[
d_t(x) = \int_{\mathcal{M}} G_t (x,y) P(y) \mbox{Vol}(y)  .
\]
Note that we get a degree function---which is a continuous function defined on $\mathcal{M}$.
This function bears some resemblance to the diagonal degree matrix of a graph, and it can be thought of as a multiplication operator, but it has very different properties than an integral operator like the heat kernel.
We will see this same function next time, and this will be important for when we get consistency with normalized versus unnormalized spectral clustering.

\newpage

\section{%
(04/16/2015):
Some Statistical Inference Issues (2 of 3):
Convergence and consistency questions}

Reading for today.
\begin{compactitem}
\item
``Consistency of spectral clustering,'' in Annals of Statistics, by von Luxburg, Belkin, and Bousquet
\end{compactitem}

Last time, we talked about whether the Laplacian constructed from point clouds converged to the Laplace-Beltrami operator on the manifold from which the data were drawn, under the assumption that the unseen hypothesized data points are drawn from a probability distribution that is supported on a low-dimensional Riemannian manifold.
While potentially interesting, that result is a little unsatisfactory for a number of reasons, basically since one typically does not test the hypothesis that the underlying manifold even exists, and since the result doesn't imply anything statistical about cluster quality or prediction quality or some other inferential goal.
For example, if one is going to use the Laplacian for spectral clustering, then probably a more interesting question is to ask whether the actual clusters that are identified make any sense, e.g., do they converge, are they consistent, etc.
So, let's consider these questions.
Today and next time, we will do this in two different ways.
\begin{itemize}
\item
Today, we will address the question of the consistency of spectral clustering when there are data points drawn from some space $\mathcal{X}$ and we have similarity/dissimilarity information about the points.
We will follow the paper ``Consistency of spectral clustering,'' by von Luxburg, Belkin, and Bousquet.
\item
Next time, we will ask similar questions but for a slightly different data model, i.e., when the data are from very simple random graph models.
As we will see, some of the issues will be similar to what we discuss today, but some of the issues will be different.
\end{itemize}

I'll start today with some general discussion on: algorithmic versus statistical approaches; similarity and dissimilarity functions; and embedding data in Hilbert versus Banach spaces.
Although I covered this in class briefly, for completeness I'll go into more detail here.

\subsection{Some general discussion on algorithmic versus statistical approaches}

When discussing statistical issues, we need to say something about our model of the data generation mechanism, and we will discuss one such model here.
This is quite different than the algorithmic perspective, and there are a few points that would be helpful to clarify.

To do so, let's take a step back and ask: how are the data or training points generated?
Here are two possible answers.
\begin{itemize}
\item
\textbf{Deterministic setting.}
Here, someone just provides us with a fixed set of objects (consisting, e.g, of a set of vectors or a single graph) and we have to work with this particular set of data.
This setting is more like the algorithmic approach we have been adopting when we prove worst-case bounds.
\item
\textbf{Probabilistic setting.}
Here, we can consider the objects as a random sample generated from some unknown probability distribution $P$.
For example, this $P$ could be on (Euclidean or Hilbert or Banach or some other) space $\mathcal{X}$.
Alternatively, this $P$ could be over random graphs or stochastic blockmodels.
\end{itemize}

There are many differences between these two approaches.
One is the question of what counts as ``full knowledge.''  
A related question has to do with the objective that is of interest.
\begin{itemize}
\item
In the deterministic setting, the data at hand count as full knowledge, since they are all there is.
Thus, when one runs computations, one wants to make statements about the data at hand, e.g., how close in quality is the output of an approximation algorithm to the output of a more expensive exact computation.
\item
In the probabilistic setting, complete or full knowledge is to know $P$ exactly, and the finite sample contains only noisy information about $P$.
Thus, when we run computations, we are only secondarily interested in the data at hand, since we are more interested in $P$, or relatedly in what we can say if we draw another noisy sample from $P$ tomorrow.
\end{itemize}
Sometimes, people think of the deterministic setting as the probabilistic setting, in which the data space equals the sample space and when one has sampled all the data.
Sometimes this perspective is useful, and sometimes it is not.

In either setting, one simple problem of potential interest (that we have been discussing) is clustering: given a training data $(x_i)_{i=1,\ldots,n}$, where $x_i$ correspond to some features/patterns but for which there are no labels available, the goal is to find some sort of meaningful clusters.
Another problem of potential interest is classification: given training points $(x_i,y_i)_{i=1,\ldots,n}$, where $x_i$ correspond to some features/patterns and $y_i$ correspond to labels, the goal is to infer a rule to assign a correct $y$ to a new $x$.
It is often said that, in some sense, in the supervised case, \emph{what} we want to achieve is well-understood, and we just need to specify \emph{how} to achieve it; while in the latter case both \emph{what} we want to achieve as well as \emph{how} we want to achieve it is not well-specified.
This is a popular view from statistics and ML; and, while it has some truth to it, it hides several things.
\begin{itemize}
\item
In both cases, one specifies---implicitly or explicitly---an objective and tries to optimize it. 
In particular, while the vague idea that we want to predict labels is reasonable, one obtains very different objectives, and thus very different algorithmic and statistical properties, depending on how sensitive one is to, e.g., false positives versus false negatives. 
Deciding on the precise form of this can be as much of an art as deciding on an unsupervised clustering objective.
 \item
The objective to be optimized could depend on just the data at hand, or it could depend on some unseen hypothesized data (i.e., drawn from $P$).
In the supervised case, that might be obvious; but even in the unsupervised case, one typically is not interested in the output per se, but instead in using it for some downstream task (that is often not specified).
 \end{itemize}
All that being said, it is clearly easier to validate the supervised case.
But we have also seen that the computations in the supervised case often boil down to computations that are identical to computations that arise in the unsupervised case.
For example, in both cases locally-biased spectral ranking methods arise, but they arise for somewhat different reasons, and thus they are used in somewhat different ways.

From the probabilistic perspective, due to randomness in the generation of the training set, it is common to study ML algorithms from this statistical or probabilistic point of view and to model the data as coming from a probability space.
For example, in the supervised case, the unseen data are often modeled by a probability space of the form
\[
\left(\left( \mathcal{X}\times\mathcal{Y} \right), \sigma\left( \mathcal{B}_{\mathcal{X}}\times\mathcal{B}_{\mathcal{Y}}  \right), P \right)
\]
 where $\mathcal{X}$ is the feature/pattern space and $\mathcal{Y}$ is the label space, $\mathcal{B}_{\mathcal{X}}$ and $\mathcal{B}_{\mathcal{Y}}$ are $\sigma$-algebras on $\mathcal{X}$ and $\mathcal{Y}$, and $P$ is a joint probability distribution on patterns and labels.
(Don't worry about the $\sigma$-algebra and measure theoretic issues if you aren't familiar with them, but note that $P$ is the main object of interest, and this is what we were talking about last time with labeled versus unlabeled data.)
The typical assumption in this case is that $P$ is unknown, but that one can sample $\mathcal{X}\times\mathcal{Y}$ from $P$.
On the other hand, in the unsupervised case, there is no $\mathcal{Y}$, and so in that case the unseen data are more often modeled by a probability space of the form
\[
\left( \mathcal{X}, \mathcal{B}_{\mathcal{X}}, P \right)  ,
\]
in which case the data training points $(x_i)_{i=1,\ldots,n}$ are drawn from $P$.

From the probabilistic perspective, one is less interested in the objective function quality on the data at hand, and instead one is often interested in finite-sample performance issues and/or asymptotic convergence issues.
For example, here are some questions of interest.
\begin{itemize}
\item
Does the classification constructed by a given algorithm on a finite sample converge to a limit classifier at $n \rightarrow \infty$?
\item
If it converges, is the limit classifier the best possible; and if not, how suboptimal is it?
\item
How fast does convergence take place, as a function of increasing $n$?
\item
Can we estimate the difference between finite sample classifier and the optimal classifier, given only the sample?
\end{itemize}

Today, we will look at the convergence of spectral clustering from this probabilistic perspective.
But first, let's go into a little more detail about similarities and dissimilarities.

\subsection{Some general discussion on similarities and dissimilarities}

When applying all sorts of algorithms, and spectral algorithms in particular, MLers work with some notion either of similarity or dissimilarity.
For example, spectral clustering uses an adjacency matrix, which is a sort of similarity function.
Informally, a dissimilarity function is a notion that is somewhat like a distance measure; and a similarity/affinity function measures similarities and is sometimes thought about as a kernel matrix.
Some of those intuitions map to what we have been discussing, e.g., metrics and metric spaces, but in some cases there are differences.

Let's start first with dissimilarity/distance functions.
In ML, people are often a little less precise than say in TCS; and---as used in ML---dissimilarity functions satisfy some or most or all of the following, but typically at least the first two.
\begin{itemize}
\item
(D1) $d(x,x)=0$
\item
(D2) $d(x,y) \ge 0$
\item
(D3) $d(x,y) = d(y,x)$
\item
(D4) $d(x,y)=0 \Rightarrow x=y$
\item
(D5) $d(x,y) + d(y,z) \ge d(x,z)$
\end{itemize}

Here are some things to note about dissimilarity and metric functions.
\begin{itemize}
\item
Being more precise, a \emph{metric} satisfies all of these conditions; and a \emph{semi-metric} satisfies all of these except for (D4).
\item
MLers are often interested in dissimilarity measures that do \emph{not} satisfy (D3), e.g., the Kullback-Leibler ``distance.''
\item
There is also interest  in cases where (D4) is not satisfied.
In particular, the so-called cut metric---which we used for flow-based graph partitioning---was a semi-metric.
\item
Condition (D4) says that if different points have distance equal to zero, then this implies that they are really the same point.
Clearly, if this is not satisfied, then one should expect an algorithm should have difficulty discriminating points (in clustering, classification, etc. problems) which have distance zero.
\end{itemize}

Here are some commonly used methods to transform non-metric dissimilarity functions into proper metric functions.
\begin{itemize}
\item
If $d$ is a distance function and $x_0\in\mathcal{X}$ is arbitrary, then $\tilde{d}(x,y) = | d(x,x_0) - d(y,x_0) |$ is a semi-metric on $\mathcal{X}$.
\item
If $\left(\mathcal{X},d\right)$ is a finite dissimilarity space with $d$ symmetric and definite, then 
\[
\tilde{d} = \left\{ \begin{array}{l l}
                       d(x,y)+c & \quad \text{if $x \ne y$} \\
                       0        & \quad \text{if $x = y$}
                    \end{array}
            \right.   ,
\]
with $c \ge \max_{p,q,r \in \mathcal{X}} | d(p,q)+d(p,r)+d(r,q) | $, is a metric.
\item
If $D$ is a dissimilarity matrix, then there exists constants $h$ and $k$ such that the matrix with elements $\tilde{d}_{ij} = \left( d_{ij}^{2} + h \right)^{1/2}$, for $i \ne j$, and also $\bar{d}_{ij} = d_{ij} + k$, for $i \ne j$, are Euclidean.
\item
If $d$ is a metric, so are $d+c$, $d^{1/r}$, $\frac{d}{d+c}$, for $c \ge 0$ and $r \ge 1$.
If $w: \mathbb{R}\rightarrow\mathbb{R}$ is monotonically increasing function s.t. $w(x)=0 \iff x=0$ and $w(x+y) \le w(x) + w(y)$; then if $d(\cdot,\cdot)$ is a metric, then $w(d(\cdot,\cdot))$ is a metric.
\end{itemize}

Next, let's go to similarity functions.
As used in ML, similarity functions satisfy some subset of the following.
\begin{itemize}
\item
(S1) $s(x,x) > 0$
\item
(S2) $s(x,y)=s(y,x)$
\item
(S3) $s(x,y) \ge 0$
\item
(S4) $\sum_{ij=1}^{n} c_ic_j s(x_i,x_j) \ge 0$, for all $n \in \mathbb{N},  c_i\in\mathbb{R},   x_i\in\mathcal{X}$ PSD.
\end{itemize}

Here are things to note about these similarity functions.
\begin{itemize}
\item
The non-negativity is actually \emph{not} satisfied by two examples of similarity functions that are commonly used: correlation coefficients and scalar products
\item
One can transform a bounded similarity function to a nonnegative similarity function by adding an offset: $s(x,y) = s(x,y) +c$ for come $c$.
\item
If $S$ is PSD, then it is a kernel.
This is a rather strong requirement that is mainly satisfied by scalar products in Hilbert spaces.
\end{itemize}

It is common to transform \emph{similarities to dissimilarities}.  Here are two ways to do that.
\begin{itemize}
\item
If the similarity is a scalar product in a Euclidean space (i.e., PD), then one can compute the metric
\[
d(x,y)^2 = \left<x-y,x-y\right> = \left<x,x\right> - 2\left(x,y\right> + \left<y,y\right>  .
\]
\item
If the similarity function is normalized, i.e., $0 \le s(x,y) \le 1$, and $s(x,x)=1$, for all $x,y$, then $d=1-s$ is a distance.
\end{itemize}
It is also common to transform \emph{dissimilarities to similarities}. Here are two ways to do that.
\begin{itemize}
\item
If the distance is Euclidean, then one can compute a PD similarity 
\[
s(x,y) = \frac{1}{2}\left( d(x,0)^2 + d(y,0)^2 - d(x,y)^2 \right)  ,
\] 
where $0 \in \mathcal{X}$ is an arbitrary origin.
\item
If $d$ is a dissimilarity, then a nonnegative decreasing function of $d$ is a similarity, e.g., 
$s(x,y) = \exp\left( -d(x,y)^2/t \right)$, for $t\in\mathbb{R}$, and also $s(x,y) = \frac{1}{1-d(x,y)}$.
\end{itemize}

These and related transformations are often used at the data preprocessing step, often in a somewhat ad hoc manner.
Note, though, that the use of any one of them implies something about what one thinks the data ``looks like'' as well as about how algorithms will perform on the data.

\subsection{Some general discussion on embedding data in Hilbert and Banach spaces}

Here, we discuss embedding data (in the form of similarity or dissimilarity functions) into Hilbert and Banach spaces.
To do so, we start with an informal definition (informal since the precise notion of dissimilarity is a little vague, as discussed above).
\begin{definition}
A space $\left(\mathcal{X},d\right)$ is a dissimilarity space or a metric space, depending on whether $d$ is a dissimilarity function or a metric function.
\end{definition}

An important question for distance/metric functions, i.e., real metrics that satisfy the above conditions, is the following: when can a given metric space $\left(\mathcal{X},d\right)$ be embedded \emph{isometrically} in Euclidean space $\mathcal{H}$ (or, slightly more generally, Hilbert space $\mathcal{H}$).
That is, the goal is to find a mapping $\phi : \mathcal{X} \rightarrow \mathcal{H}$ such that $d(x,y) = \|\phi(x)-\phi(y)\|$, for all $x,y\in\mathcal{X}$.
(While this was something we relaxed before, e.g., when we looked at flow-based algorithms and looked at relaxations where there were distortions but they were not too too large, e.g., $O(\log n)$, asking for isometric embeddings is more common in functional analysis.)
To answer this question, note that distance in Euclidean vector space satisfies (D1)--(D5), and so a necessary condition for the above is the (D1)--(D5) be satisfied.
The well-known Schoenberg theorem characterizes which metric spaces can be isometrically embedded in Hilbert~space.
\begin{theorem}
A metric space $\left(\mathcal{X},d\right)$ can be embedded isometrically into Hilbert space iff $-d^2$ is conditionally positive definite, i.e., iff
\[
-\sum_{ij=1}^{\ell} c_ic_j d^2(x_i,x_j) \ge 0
\]
for all $\ell\in\mathbb{N},  x_i,x_j\in\mathcal{X},  c_i,c_j\in\mathbb{R}$, with $\sum_ic_i = 0$.
\end{theorem}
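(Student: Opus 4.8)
The plan is to prove Schoenberg's theorem by reducing the question of isometric embeddability into Hilbert space to a statement about positive definiteness of an associated inner product. The key observation is that if such an embedding $\phi:\mathcal{X}\to\mathcal{H}$ exists, then for any basepoint $x_0\in\mathcal{X}$ we can translate so that $\phi(x_0)=0$, and then the inner products $\langle\phi(x),\phi(y)\rangle$ are recoverable from the distances via the polarization-type identity
\[
\langle \phi(x),\phi(y)\rangle = \tfrac{1}{2}\left( d^2(x,x_0) + d^2(y,x_0) - d^2(x,y) \right).
\]
This is exactly the ``dissimilarity to similarity'' transformation mentioned earlier in the excerpt. So the strategy is: (i) define a candidate kernel $k(x,y)$ by this formula; (ii) show that the conditional-negative-definiteness hypothesis on $-d^2$ is equivalent to $k$ being positive definite in the ordinary (unrestricted) sense; (iii) invoke the standard fact that a positive definite kernel admits a feature map into a Hilbert space (the RKHS construction, or Moore--Aronszajn), giving $\phi$ with $\langle\phi(x),\phi(y)\rangle = k(x,y)$; (iv) check that this $\phi$ is in fact an isometry, i.e. $\|\phi(x)-\phi(y)\|^2 = k(x,x) - 2k(x,y) + k(y,y) = d^2(x,y)$, which is a direct computation from the definition of $k$. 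The converse direction (embeddability implies $-d^2$ conditionally positive definite) is the easy half: if $d(x_i,x_j) = \|\phi(x_i)-\phi(x_j)\|$ and $\sum_i c_i = 0$, then
\[
-\sum_{ij} c_i c_j d^2(x_i,x_j) = -\sum_{ij} c_i c_j \|\phi(x_i)-\phi(x_j)\|^2 = 2\left\| \sum_i c_i \phi(x_i) \right\|^2 \ge 0,
\]
where the middle equality uses $\sum_i c_i = 0$ to kill the $\|\phi(x_i)\|^2$ and cross terms that don't pair up.

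The main technical steps, in order, would be: first establish the converse (just displayed) since it motivates the definition of $k$ and is essentially a one-line expansion; second, fix an arbitrary basepoint $x_0$ and define $k(x,y)$ by the formula above, noting $k(x,x_0)=0$ and $k(x,x)=d^2(x,x_0)$; third, prove the equivalence ``$-d^2$ conditionally positive definite $\iff$ $k$ positive definite.'' For this third step the computation is: given arbitrary $c_1,\dots,c_\ell\in\mathbb{R}$ (no sum constraint), set $c_0 = -\sum_{i=1}^\ell c_i$ so that the augmented vector $(c_0,c_1,\dots,c_\ell)$ over points $(x_0,x_1,\dots,x_\ell)$ sums to zero; then expand $-\sum_{i,j=0}^\ell c_i c_j d^2(x_i,x_j)$ and regroup to show it equals $2\sum_{i,j=1}^\ell c_i c_j k(x_i,x_j)$, using $d^2(x_i,x_0)=k(x_i,x_i)$ and the definition of $k$. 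Hence conditional negative definiteness of $-d^2$ on $\ell+1$ points with the zero-sum constraint is exactly positive definiteness of $k$ on $\ell$ unconstrained points. Fourth, apply the RKHS / feature-map construction to the now-known positive definite kernel $k$ to obtain the Hilbert space $\mathcal{H}$ and the map $\phi$; fifth, verify the isometry property by the short expansion of $\|\phi(x)-\phi(y)\|^2$.

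I expect the main obstacle to be the bookkeeping in the third step --- getting the augmentation with the basepoint and the zero-sum coefficient exactly right so that the two positive/negative definiteness conditions line up, including being careful about the factor of $2$ and the sign. It is easy to write down but easy to get an index or a sign wrong, and it is the crux of the whole equivalence. A secondary point requiring a little care is that the feature-map construction (Moore--Aronszajn) produces a Hilbert space that may be infinite-dimensional and, in the case of an infinite metric space $\mathcal{X}$, not separable in general; but since the theorem statement only asks for embedding into \emph{some} Hilbert space $\mathcal{H}$, this causes no difficulty, and for a finite dissimilarity space one gets an honest Euclidean space of dimension at most $|\mathcal{X}|-1$ by taking the rank of the Gram matrix of $k$. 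I would also remark that the choice of basepoint $x_0$ is immaterial: different basepoints yield kernels differing by a translation in $\mathcal{H}$, hence the same distances, which is consistent with the translation freedom used in the converse direction. Finally, one should note the parallel with the earlier Schoenberg-flavored discussion in the manifold section, where $-d^2$ conditionally positive definite is precisely the condition under which the ``heat-kernel-like'' Gaussian $\exp(-d^2(x,y)/t)$ is positive definite for all $t>0$ --- though proving that stronger statement is not needed here and I would not pursue it.
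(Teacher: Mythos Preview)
The paper does not actually prove this theorem; it simply states it as ``the well-known Schoenberg theorem'' and moves on. Your proposal is the standard and correct proof: the converse direction via the expansion $-\sum_{ij}c_ic_jd^2(x_i,x_j)=2\|\sum_i c_i\phi(x_i)\|^2$ under the zero-sum constraint, and the forward direction via the basepoint kernel $k(x,y)=\tfrac12(d^2(x,x_0)+d^2(y,x_0)-d^2(x,y))$, the augmentation trick $c_0=-\sum_{i\ge1}c_i$ to pass between conditional positive definiteness of $-d^2$ and ordinary positive definiteness of $k$, and the RKHS feature-map construction. All the steps check out, including the factor of $2$ and the isometry verification $k(x,x)-2k(x,y)+k(y,y)=d^2(x,y)$.
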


Informally, this says that Euclidean spaces and Hilbert spaces are not ``big enough'' for arbitrary metric spaces.
(We saw this before when we showed that constant degree expanders do not embed well in Euclidean spaces.)
More generally, though, isometric embeddings into certain Banach spaces can be achieved for arbitrary metric spaces.
(More on this later.)
For completeness, we have the following definition.

\begin{definition}
Let $X$ be a vector space over $\mathcal{C}$.
Then $X$ is a \emph{normed linear space} if for all $f\in X$, there exists a number, $\|f\|\in\mathbb{R}$, called the norm of $f$ s.t.:
(1) $\|f\| \ge 0$; 
(2) $\|f\| = 0 \mbox{ iff } f=0$;
(3) $\|cf\| = |c|\|f\|$, for all scalar $c$;
(4) $\|f+g\| \le \|f\|+\|g\|$.
A \emph{Banach space} is a complete normed linear space.
A \emph{Hilbert space} is a Banach space, whose norm is determined by an inner product.
\end{definition}
This is a large area, most of which is off topic for us.
If you are not familiar with it, just note that RKHSs are particularly nice Hilbert spaces that are sufficiently heavily regularized that the nice properties of $\mathbb{R}^{n}$, for $n<\infty$, still hold; general infinite-dimensional Hilbert spaces are more general and less well-behaved; and general Banach spaces are even more general and less well-behaved.
Since it is determined by an inner product, the norm for a Hilbert space is essentially an $\ell_2$ norm; and so, if you are familiar with the $\ell_1$ or $\ell_{\infty}$ norms and how they differ from the $\ell_2$ norm, then that might help provide very rough intuition on how Banach spaces can be more general than Hilbert~spaces.

\subsection{Overview of consistency of normalized and unnormalized Laplacian spectral methods}

Today, we will look at the convergence of spectral clustering from this probabilistic perspective.
Following the von Luxburg, Belkin, and Bousquet paper, we will address the following two questions.
\begin{itemize}
\item
Q1: Does spectral clustering converge to some limit clustering if more and more data points are sampled and as $n\rightarrow\infty$?
\item
Q2: If it does converge, then is the limit clustering a useful partition of the input space from which the data are drawn?
\end{itemize}

One reason for focusing on these questions is that it can be quite difficult to determine what is a cluster and what is a good cluster, and so as a more modest goal one can ask for ``consistency,'' i.e., that the clustering constructed on a finite sample drawn from some distribution converges to a fixed limit clustering of the whole data space when $n \rightarrow \infty$.
Clearly, this notion is particularly relevant in the probabilistic setting, since then we obtain a partitioning of the underlying space $\mathcal{X}$ from which the data are drawn.

Informally, this will provide an ``explanation'' for why spectral clustering works.
Importantly, though,  this consistency ``explanation'' will be very different than the ``explanations'' that have been offered in the deterministic or algorithmic setting, where the data at hand represent full knowledge.
In particular, when just viewing the data at hand, we have provided the following informal explanation of why spectral clustering works.
\begin{itemize}
\item
Spectral clustering works since it wants to find clusters s.t. the probability of random walks staying within a cluster is higher and the probability of going to the complement is smaller.
\item
Spectral clustering works since it approximates via Cheeger's Inequality the intractable expansion/conductance objective.
\end{itemize}
In both of those cases, we are providing an explanation in terms of the data at hand; i.e., while we might have an underlying space $\mathcal{X}$ in the back of our mind, they are statements about the data at hand, or actually the graph constructed from the data at hand.

The answer to the above two questions (Q1 and Q2) will be basically the following.
\begin{itemize}
\item
Spectral clustering with the normalized Laplacian is consistent under very general conditions.
For the normalized Laplacian, when it can be applied, then the corresponding clustering does converge to a limit.
\item
Spectral clustering with the non-normalized Laplacian is not consistent, except under very specific conditions.
These conditions have to do with, e.g., variability in the degree distribution, and these conditions often do \emph{not} hold in practice.
\item
In either case, if the method converges, then the limit does have intuitively appealing properties and splits the space $\mathcal{X}$ up into two pieces that are reasonable; but for the non-normalized Laplacian one will obtain a trivial limit if the strong conditions are not satisfied.
\end{itemize}

As with last class, we won't go through all the details, and instead the goal will be to show some of the issues that arise and tools that are used if one wants to establish statistical results in this area; and also to show you how things can ``break down'' in non-ideal situations.

To talk about convergence/consistency of spectral clustering, we need to make statements about eigenvectors, and for this we need to use the spectral theory of bounded linear operators, i.e., methods from functional analysis.
In particular, the information we will need will be somewhat different than what we needed in the last class when we talked about the convergence of the Laplacian to the hypothesized Laplace-Beltrami operator, but there will be some similarities.
Today, we are going to view the data points as coming from some Hilbert or Banach space, call in $\mathcal{X}$, and from these data points we will construct an empirical Laplacian.
(Next time, we will consider graphs that are directly constructed via random graph processes and stochastic block models.)
The main step today will be to establish the convergence of the eigenvalues and eigenvectors of random graph Laplacian matrices for growing sample sizes.
This boils down to questions of convergence of random Laplacian matrices constructed from sample point sets.

(Note that although there has been a lot of work in random matrix theory on the convergence of random matrices with i.i.d. entries or random matrices with fixed sample size, e.g., covariance matrices, this work isn't directly relevant here, basically since the random Laplacian matrix grows with the sample size $n$ and since the entries of the random Laplacian matrix are not independent.  Thus, more direct proof methods need to be used here.)

Assume we have a data space $\mathcal{X} = \{ x_1. \ldots, x_n \}$ and a pairwise similarity $k: \mathcal{X}\times\mathcal{X} \rightarrow \mathbb{R}$, which is usually symmetric and nonnegative.
For any fixed data set of $n$ points, define the following:
\begin{itemize}
\item
the Laplacian $L_n = D_n-K_n$, 
\item
the normalized Laplacian $L_n^{\prime} = D_n^{-1/2}L_nD_n^{-1/2}$, and 
\item
the random walk Laplacian $L_n^{\prime\prime} = D_n^{-1}L_n$.
\end{itemize}
(Although it is different than what we used before, the notation of the von Luxburg, Belkin, and Bousquet paper is what we will use here.)
Note that here we assume that $d_i > 0$, for all $i$.
We are interested in computing the leading eigenvector or several of the leading eigenvectors of one of these matrices and then clustering with them.

To see the kind of convergence result one could hope for, consider the second eigenvector $\left(v_1,\ldots,v_n\right)^T$ of $L_n$, and let's interpret is as a function $f_n$ on the discrete space $\mathcal{X}_n = \{ X_1,\ldots,X_n \}$ by defining the function $f_n(X_i)=v_i$.
(This is the view we have been adopting all along.)
Then, we can perform clustering by performing a sweep cut, or we can cluster based on whether the value of $f_n$ is above or below a certain threshold.
Then, in the limit $n\rightarrow\infty$, we would like $f_n\rightarrow f$, where $f$ is a function on the entire space $\mathcal{X}$, such that we can threshold $f$ to partition $\mathcal{X}$.

To do this, we can do the following.
\begin{compactenum}
\item
Choose this space to be $C\left(\mathcal{X}\right)$, the space of continuous functions of $\mathcal{X}$.
\item
Construct a function $d \in C\left(\mathcal{X}\right)$, a degree function, that is the ``limit'' as $n\rightarrow\infty$ of the discrete degree vector $\left(d_1,\ldots,d_n\right)$.
\item
Construct linear operators $U$, $U^{\prime}$, and $U^{\prime\prime}$ on $C\left(\mathcal{X}\right)$ that are the limits of the discrete operators $L_n$, $L_n^{\prime}$, and $L_n^{\prime\prime}$.
\item
Prove that certain eigenfunctions of the discrete operates ``converge'' to the eigenfunctions of the limit operators.
\item
Use the eigenfunctions of the limit operator to construct  a partition for the entire space $\mathcal{X}$.
\end{compactenum}
We won't get into details about the convergence properties here, but below we will highlight a few interesting aspects of the limiting process.
The main result they show is that in the case of normalized spectral clustering, the limit behaves well, and things converge to a sensible partition of the entire space; while in the case of unnormalized spectral clustering, the convergence properties are much worse (for reasons that are interesting that we will describe).

\subsection{Details of consistency of normalized and unnormalized Laplacian spectral methods}

Here is an overview of the two main results in more detail.

\textbf{Result 1.}
(Convergence of normalized spectral clustering.)
Under mild assumptions, \emph{if the first $r$ eigenvalues of the limit operator $U^{\prime}$ satisfy $\lambda_i \ne 1$ and have multiplicity one}, then 
\begin{compactitem}
\item
the same hold for the first $r$ eigenvalues of $L_n^{\prime}$, as $n\rightarrow\infty$;
\item 
the first $r$ eigenvalues of $L_n^{\prime}$ converge to the first $r$ eigenvalues of $U^{\prime}$; 
\item
the corresponding eigenvectors converge; and 
\item
the clusters found from the first $r$ eigenvectors on finite samples converge to a limit clustering of the entire data space.
\end{compactitem}

\textbf{Result 2.}
(Convergence of unnormalized spectral clustering.)
Under mild assumptions, \emph{if the first $r$ eigenvalues of the limit operator $U$ do not lie in the range of the degree function $d$ and have multiplicity one}, then 
\begin{compactitem}
\item
the same hold for the first $r$ eigenvalues of $L_n$, as $n\rightarrow\infty$;
\item 
the first $r$ eigenvalues of $L_n$ converge to the first $r$ eigenvalues of $U$; 
\item
the corresponding eigenvectors converge; and 
\item
the clusters found from the first $r$ eigenvectors on finite samples converge to a limit clustering of the entire data space.
\end{compactitem}

Although both of these results have a similar structure (``if the inputs are nice, then one obtains good clusters''), the ``niceness'' assumptions are very different: for normalized spectral clustering, it is the rather innocuous assumption that $\lambda_i \ne 1$, while for unnormalized spectral clustering it is the much stronger assumption that $\lambda_i \in \mbox{range}(d)$.
This assumption is necessary, as it is needed to ensure that the eigenvalue $\lambda_i$ is isolated in the spectrum of the limit operator.
This is a requirement to be able to apply perturbation theory to the convergence of eigenvectors.
In particular, here is another result.

\textbf{Result 3.}
(The condition $\lambda \notin \mbox{range}(d)$ is necessary.)
\begin{compactitem}
\item
There exist similarity functions such that there exist no nonzero eigenvectors outside of $\mbox{range}(d)$.
\item
In this case, the sequence of second eigenvalues of $\frac{1}{n}L_n$ converge to $\min d(x)$, and the corresponding eigenvectors do \emph{not} yield a sensible clustering of the entire data space.
\item
For a wide class of similarity functions, there exist only finitely many eigenvalues $r_0$ outside of $\mbox{range}(d)$, and the same problems arise if one clusters with $r > r_0$ eigenfunctions.
\item
The condition $\lambda \notin \mbox{range}(d)$ refers to the limit and cannot be verified on a finite sample.
\end{compactitem}
That is, unnormalized spectral clustering can fail completely, and one cannot detect it with a finite~sample.

The reason for the difference between the first results is the following.
\begin{itemize}
\item
In the case of normalized spectral clustering, the limit operator $U^{\prime}$ has the form $U^{\prime}=I-T$, where $T$ is a compact linear operator.
Thus, the spectrum of $U^{\prime}$ is well-behaved, and all the eigenvalues $\lambda \ne 1$ are isolated and have finite multiplicity.
\item
In the case of unnormalized spectral clustering, the limit operator $U$ has the form $U=M-S$, where $M$ is a multiplication operator, and $S$ is a compact integral operator.
Thus, the spectrum of $U$ is not as nice as that of $U^{\prime}$, since it contains the interval $\mbox{range}(d)$, and the eigenvalues will be isolated only if $\lambda_i \ne \mbox{range}(d)$.
\end{itemize}

Let's get into more detail about how these differences arise.
To do so, let's make the following assumptions about the data.
\begin{itemize}
\item
The data space $\mathcal{X}$ is a compact metric space, $\mathcal{B}$ is the Borel $\sigma$-algebra on $\mathcal{X}$, and $P$ is a probability measure on $\left(\mathcal{X},\mathcal{B}\right)$.
We draw a sample of points $\left(X_i\right)_{i\in\mathbb{N}}$ i.i.d. from $P$.
The similarity function $k: X \times X \rightarrow \mathcal{R}$ is symmetric, continuous, and there exists an $\ell > 0$ such that $k(x,y) > \ell$, for all $x,y\in\mathcal{X}$.
(The assumption that $f$ is bounded away from $0$ is needed due to the division in the normalized Laplacian.)
\end{itemize}

For $f:\mathcal{X}\rightarrow\mathbb{R}$, we can denote the range of $f$ by $\mbox{range}(f)$.
Then, if $\mathcal{X}$ is connected and $f$ is continuous then $\mbox{range}(f) = \left[ \inf_x f(x), \sup_x f(x) \right]$.
Then we can define the following.
\begin{definition}
The \emph{restriction operator} $\rho_n : C\left(\mathcal{X}\right)\rightarrow\mathbb{R}^{n}$ denotes the random operator which maps a function to its values on the first $n$ data points, i.e., 
\[
\rho_n(f) = \left( f(X_1),\ldots,f(X_n) \right)^T  .
\]
\end{definition}

Here are some facts from spectral and perturbation theory of linear operators that are needed.

Let $E$ be a real-valued Banach space, and let $T: E \rightarrow E$ be a bounded linear operator.
Then, an \emph{eigenvalue} of $T$ is defined to be a real or complex number $\lambda$ such that 
\[
Tf=\lambda f, \mbox{ for some } f \in E  .
\]
Note that $\lambda$ is an eigenvalue of $T$ iff the operator $T-\lambda$ has a nontrivial kernel (recall that if $L:V \rightarrow W$ then $\mbox{ker}(L) = \{ v \in V : L(v) = 0 \}$) or equivalently if $T-\lambda$ is \emph{not} injective (recall that $f:A \rightarrow B$ is injective iff $\forall a,b \in A$ we have that $f(a)=f(b) \Rightarrow a=b$, i.e., different elements of the domain do not get mapped to the same element).
Then, the \emph{resolvent} of $T$ is defined to be 
\[
\rho(T) = \{ \lambda \in \mathbb{R} : \left(\lambda-T\right)^{-1} \mbox{ exists and is bounded} \}  ,
\]
and the \emph{spectrum} of $T$ id defined to be 
\[
\sigma(T) = \mathbb{R} \setminus \rho(T)  .
\]
This holds very generally, and it is the way the spectrum is generalized in functional analysis.

(Note that if $E$ is finite dimensional, then every non-invertible operator is not injective; and so $\lambda \in \sigma(T) \Rightarrow \lambda \mbox{ is an eigenvalue of } T$.
If $E$ is infinite dimensional, this can fail; basically, one can have operators that are injective but that have no bounded inverse, in which case the spectrum can contain more than just eigenvalues.)

We can say that a point $\sigma_{iso} \subset \sigma(T)$ is \emph{isolated} if there exists an open neighborhood $\xi \subset \mathbb{C}$ of $\sigma_{iso}$ such that $\sigma(T) \cap \xi = \left( \sigma_{iso} \right)$.
If the spectrum $\sigma(T)$ of a bounded operator $T$ in a Banach space $E$ consists of isolated parts, then for each isolated part of the spectrum, a \emph{spectral projection} $P_{iso}$ can be \emph{defined} operationally as a path integral over the complex plane of a path $\Gamma$ that encloses $\sigma_{iso}$ and that separates it from the rest of $\sigma(T)$, i.e., for $\sigma_{iso}\in\sigma(T)$, the corresponding spectral projection is 
\[
P_{iso} = \frac{1}{2 \pi i} \int_{\Gamma} \left( T- \lambda I \right)^{-1} d\lambda , 
\]
where $\Gamma$ is a closed Jordan curve in the complex plane separating $\sigma_{iso}$ from the rest of the spectrum.
If $\lambda$ is an isolated eigenvalue of $\sigma(T)$, then the dimension of the range of the spectral projection $P_{\lambda}$ is defined to be the \emph{algebraic multiplicity} of $\lambda$, (for a finite dimensional Banach space, this is the multiplicity of the root $\lambda$ of the characteristic polynomial, as we saw before), and the \emph{geometric multiplicity} is the dimension of the eigenspace of $\lambda$.

One can split up the spectrum into two parts: the \emph{discrete spectrum} $\sigma_d$(T) is the part of $\sigma(T)$ that consists of isolated eigenvalues of $T$ with finite algebraic multiplicity; and the \emph{essential spectrum} is $\sigma_{ess}(T) = \sigma(T) \setminus \sigma_d(T) $.
It is a fact that the essential spectrum cannot be changed by a finite-dimensional or compact perturbation of an operator, i.e., for a bounded operator $T$ and a compact operator $V$, it holds that $\sigma_{ess}(T+V) = \sigma_{ess}(T)$.
The important point here is that one can define spectral projections only for isolated parts of the spectrum of an operator and that these isolated parts of the spectrum are the only parts to which perturbation theory can be applied.

Given this, one has perturbation results for compact operators.
We aren't going to state these precisely, but the following is an informal statement.
\begin{itemize}
\item
Let $\left( E , \|\cdot\|_E \right)$ be a Banach space, and $\left(T_n\right)_n$ and $T$ bounded linear operators on $E$ with $T_n\rightarrow T$.
Let $\lambda \in \sigma(T)$ be an isolated eigenvalue with finite multiplicity $m$, and let $\xi \subset \mathbb{C}$ be an open neighborhood of $\lambda$ such that $\sigma(T) \cap \xi = \{\lambda\}$.
Then, 
\begin{compactitem}
\item
eigenvalues converge, 
\item
spectral projections converge, and
\item 
if $\lambda$ is a simple eigenvalue, then the corresponding eigenvector converges.
\end{compactitem}
\end{itemize}

We aren't going to go through the details of their convergence argument, but we will discuss the following issues.

The technical difficulty with proving convergence of normalized/unnormalized spectral clustering, e.g., the convergence of $\left(v_n\right)_{n \in \mathbb{N}}$ or of $\left(L_n^{\prime}\right)_{n\in\mathbb{N}}$, is that for different sample sized $n$, the vectors $v_n$ have different lengths and the matrices $L_n^{\prime}$ have different dimensions, and so they ``live'' in different spaces for different values of $n$.
For this reason, one can't apply the usual notions of convergence.
Instead, one must show that there exists functions $f\in C\left(\mathcal{X}\right)$ such that $\| v_n - \rho_n f \| \rightarrow 0$, i.e., such that the eigenvector $v_n$ and the restriction of $f$ to the sample converge.
Relatedly, one relates the Laplacians to some other operator such that they are all defined on the same space.
In particular, one can define a sequence $(U_n)$ of operators that are related to the matrices $(L_n)$; but each operator $(U_n)$ is defined on the space $C(\mathcal{X})$ of continuous functions on $\mathcal{X}$, independent of $n$.

All this involves constructing various functions and operators on $C\left(\mathcal{X}\right)$.
There are basically two \emph{types} of operators, integral operators and multiplication operators, and they will enter in somewhat different ways (that will be responsible for the difference in the convergence properties between normalized and unnormalized spectral clustering).
So, here are some basic facts about integral operators and multiplication operators.

\begin{definition}
Let $\left(\mathcal{X},\mathcal{B},\mu\right)$ be a probability space, and let $k \in L_2\left( \mathcal{X}\times\mathcal{X},\mathcal{B}\times\mathcal{B},\mu\times\mu \right)$.
Then, the function $S: L_2\left(\mathcal{X},\mathcal{B},\mu\right) \rightarrow L_2\left(\mathcal{X},\mathcal{B},\mu\right)$ defined as 
\[
Sf(x) : \int_{\mathcal{X}} k(x,y) f(y) d\mu(y)
\]
is an \emph{integral operator} with kernel $k$.
\end{definition}
If $\mathcal{X}$ is compact and $k$ is continuous, then (among other things) the integral operator $S$ is compact.

\begin{definition}
Let $\left(\mathcal{X},\mathcal{B},\mu\right)$ be a probability space, and let $d \in L_{\infty}\left(\mathcal{X},\mathcal{B},\mu\right)$.
Then a \emph{multiplication operator} $M_d : L_2\left(\mathcal{X},\mathcal{B},\mu\right) \rightarrow L_2\left(\mathcal{X},\mathcal{B},\mu\right)$ is 
\[
M_d f = fd .
\]
\end{definition}
This is a bounded linear operator; but if $d$ is non-constant, then the operator $M_d$ is \emph{not} compact.

Given the above two different types of operators, let's introduce specific operators on $C\left(\mathcal{X}\right)$ corresponding to matrices we are interested in.
(In general, we will proceed by identifying vectors $\left( v_1,\ldots,v_n \right)^{T} \in \mathbb{R}^{n}$ with functions $f \in C\left(\mathcal{X}\right)$ such that $f(v_i) = v_i$ and extending linear operators on $\mathbb{R}^{n}$ to deal with such functions rather than vectors.)
Start with the unnormalized Laplacian: $L_n = D_n - K_n$, where $D = \mbox{diag}(d_i)$, where $d_i = \sum_{ij} K(x_i,x_j)$.

We want to relate the degree vector $\left( d_1,\ldots,d_n \right)^T$ to a function on $C\left(\mathcal{X}\right)$.
To do so, define the true and empirical degree functions:
\begin{eqnarray*}
d(x)   &=& \int k(x,y)   dP(y) \in C\left(\mathcal{X}\right)  \\
d_n(x) &=& \int k(x,y) dP_n(y) \in C\left(\mathcal{X}\right)  
\end{eqnarray*}
(Note that $d_n \rightarrow d$ as $n\rightarrow\infty$ by a LLN.)
By definition, $d_n(x_i) = \frac{1}{n}d_i$, and so the empirical degree function agrees with the degrees of the points $X_i$, up to the scaling $\frac{1}{n}$.

Next, we want to find an operator acting on $C\left(\mathcal{X}\right) $ that behaves similarly to the matrix $D_n$ on $\mathbb{R}^{n}$.
Applying $D_n$ to a vector $f = \left(f_1,\ldots,f_n\right)^{T} \in \mathbb{R}^{n}$ gives $\left(D_nf\right)_i = d_if_i$, i.e., each element is multiplied by $d_i$.
So, in particular, we can interpret $\frac{1}{n}D_n$ as a multiplication operator.
Thus, we can define the true and empirical multiplication operators:
\begin{eqnarray*}
M_d     : C\left(\mathcal{X}\right) \rightarrow C\left(\mathcal{X}\right) & & \quad M_d    f(x) =   d(x) f(x) \\
M_{d_n} : C\left(\mathcal{X}\right) \rightarrow C\left(\mathcal{X}\right) & & \quad M_{d_n}f(x) = d_n(x) f(x) 
\end{eqnarray*}

Next, we will look at the matrix $K_n$.
Applying it to a vector $f \in \mathbb{R}^{n}$ gives $\left( K_n f \right)_i = \sum_j K(x_i,x_j) f_j $.
Thus, we can define the empirical and true integral operator:
\begin{eqnarray*}
S_{n} : C\left(\mathcal{X}\right) \rightarrow C\left(\mathcal{X}\right) & & \quad S_{n}f(x) = \int k(x,y) f(y) dP_n(y) \\
S     : C\left(\mathcal{X}\right) \rightarrow C\left(\mathcal{X}\right) & & \quad S_n  f(x) = \int k(x,y) f(y) dP(y) 
\end{eqnarray*}

With these definitions, we can define the \emph{empirical unnormalized graph Laplacian}, 
$U_n : C\left(\mathcal{X}\right)\rightarrow C\left(\mathcal{X}\right)$, and 
the \emph{true unnormalized graph Laplacian},
 $U : C\left(\mathcal{X}\right)\rightarrow C\left(\mathcal{X}\right)$ as
\begin{eqnarray*}
U_nf(x) &=& M_{d_n}   f(x) - S_nf(x) = \int k(x,y) \left( f(x)-f(y) \right) dP_n(y) \\
  Uf(x) &=& M_d f(x) - Sf(x) = \int k(x,y) \left( f(x)-f(y) \right) dP(y)
\end{eqnarray*}

For the normalized Laplacian, we can proceed as follows.
Recall that $v$ is an eigenvector of $L_n^{\prime}$ with eigenvalue $v$ iff $v$ is an eigenvector of $H_n^{\prime} = D^{-1/2} K_n D^{-1/2}$ with eigenvalue $1-\lambda$.
So, consider $H_n^{\prime}$, defined as follows.
The matrix $H_n^{\prime}$ operates on a vector $f = \left( f_1,\ldots, f_n \right)^{T}$ as $\left( H_n^{\prime} f \right)_{i} = \sum_j \frac{ K(x_i,x_j) }{ \sqrt{ d_id_j} } $.
Thus, we can define the normalized empirical and true similarity functions
\begin{eqnarray*}
h_n(x,y) &=& k(x,y) / \sqrt{ d_n(x) d_n(y) }  \\
  h(x,y) &=& k(x,y) / \sqrt{   d(x)   d(y) }  
\end{eqnarray*}
and introduce two integral operators
\begin{eqnarray*}
T_{n} : C\left(\mathcal{X}\right) \rightarrow C\left(\mathcal{X}\right) & & \quad T_{n}f(x) = \int h_n(x,y) f(y) dP_n(y) \\
T     : C\left(\mathcal{X}\right) \rightarrow C\left(\mathcal{X}\right) & & \quad T  f(x) = \int   h(x,y) f(y) dP(y) 
\end{eqnarray*}

Note that for these operators the scaling factors $\frac{1}{n}$ which are hidden in $P_n$ and $d_n$ cancel each other.
Said another way, the matrix $H_n^{\prime}$ already has $\frac{1}{n}$ scaling factor---as opposed to the matrix $K_n$ in the unnormalized case.
So, contrary to the unnormalized case, we do not have to scale matrices $H_n^{\prime}$ and $H_n$ with the $\frac{1}{n}$ factor.

All of the above is machinery that enables us to transfer the problem of convergence of Laplacian matrices to problems of convergence of sequences of operators on $C\left(\mathcal{X}\right)$.

Given the above, they establish a lemma which, informally, says that under the general assumptions:
\begin{compactitem}
\item
the functions $d_n$ and $d$ are continuous, bounded from below by $\ell > 0$, and bounded from above by $\|k\|_{\infty}$, 
\item
all the operators are bounded, 
\item
all the integral operators are compact, 
\item
all the operator norms can be controlled.
\end{compactitem}
The hard work is to show that the empirical quantities converge to the true quantities; this is done with the perturbation result above (where, recall, the perturbation theory can be applied only to isolated parts of the spectrum).
In particular:
\begin{itemize}
\item
In the normalized case, this is true if $\lambda \ne 1$ is an eigenvalue of $U^{\prime}$ that is of interest.
The reason is that $U^{\prime} = I - T^{\prime}$ is a compact operator.
\item
In the unnormalized case, this is true if $\lambda \notin \mbox{range}(d)$ is an eigenvalue of $U$ that is of interest.
The reason is that $U = M_{d}-S$ is \emph{not} a compact operator, unless $M_d$ is a multiple of the~identity.
\end{itemize}
So, the key difference is the condition under which eigenvalues of the limit operator are isolated in the spectrum: for the normalized case, this is true if $\lambda \neq 1$, while for the non normalized case, this is true if $\lambda \notin \mbox{range}(d)$.

In addition to the ``positive'' results above, a ``negative'' result of the form given in the following lemma can be established.
\begin{lemma}[Clustering fails if $\lambda\notin\mbox{range}(d)$ is violated.]
Assume that  $\sigma(U) - \{0\} \cup\mbox{range}(d)$ with eigenvalue $0$ having multiplicity $1$, and that the probability distribution $P$ on $\mathcal{X}$ has no point masses.
Then the sequence of second eigenvectors of $\frac{1}{n}L_n$ converges to $\min_{x\in\mathcal{X}}d(x)$.
The corresponding eigenfunction will approximate the characteristic function of some $x\in\mathcal{X}$, with $d(x)=\min_{x\in\mathcal{X}}d(x)$ or a linear combination of such functions.
\end{lemma}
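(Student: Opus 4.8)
The plan is to show that when $\sigma(U)\setminus\{0\}\subseteq\mathrm{range}(d)$, the only ``new'' eigenvalue of the limit operator $U=M_d-S$ that can be isolated from the interior of $\mathrm{range}(d)$ is its left endpoint $d_{\min}:=\min_{x\in\mathcal{X}}d(x)$, and then to transfer this to the finite-sample operators $\tfrac{1}{n}L_n$ via the convergence machinery assembled in the excerpt (the identification of $\tfrac1n L_n$ with $U_n=M_{d_n}-S_n$ on $C(\mathcal{X})$, the facts that $d_n\to d$ uniformly and $S_n\to S$ in operator norm, and the perturbation theorem for convergence of eigenvalues and of spectral projections). First I would recall that $S$ is a compact operator, so $\sigma_{ess}(U)=\sigma_{ess}(M_d)=\mathrm{range}(d)$ because the essential spectrum is invariant under compact perturbations; hence every point of $\sigma(U)$ lying strictly below $d_{\min}$ or strictly inside $\mathrm{range}(d)$ that is not an eigenvalue of finite multiplicity would have to belong to $\sigma_{ess}(U)=\mathrm{range}(d)$, a contradiction, so outside $\mathrm{range}(d)$ the spectrum consists only of the isolated eigenvalue $0$ (which by hypothesis has multiplicity one). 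The constant function is the eigenfunction for $0$. Since $\mathcal{X}$ is connected and $d$ continuous, $\mathrm{range}(d)=[d_{\min},\|k\|_\infty]$ is an interval, and the second smallest point of $\sigma(U)$ is therefore exactly $d_{\min}$, sitting at the boundary of the essential spectrum.

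Next I would analyze what the spectral projection of $U$ near $d_{\min}$ ``sees.'' The claim is that approximate eigenfunctions for eigenvalues close to $d_{\min}$ concentrate on the set $A=\{x:d(x)=d_{\min}\}$. Heuristically, writing $(Uf)(x)=\int k(x,y)(f(x)-f(y))\,dP(y)$, a function supported near $A$ has $\int k(x,\cdot)f\,dP$ small (mass is concentrated where $d$ is least, and by the no-point-mass assumption one can make the $P$-measure of a small neighbourhood of any single point arbitrarily small), so $Uf\approx M_d f\approx d_{\min} f$ there; conversely, any $f$ with $Uf=\lambda f$ and $\lambda$ near $d_{\min}$ must, after subtracting the compact piece $Sf$, satisfy $(M_d-\lambda)f\approx Sf$, and since $(d(x)-\lambda)$ is bounded away from $0$ off any neighbourhood of $A$, the $C(\mathcal{X})$-norm of $f$ restricted away from $A$ is controlled by $\|Sf\|$ times a small factor, forcing $f$ to be (in the appropriate sense) an approximate linear combination of indicator-type functions of points in $A$. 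I would make this rigorous using the resolvent formula: for $\lambda\notin\mathrm{range}(d)$, $(M_d-\lambda)^{-1}$ is a bounded multiplication operator with norm $1/\mathrm{dist}(\lambda,\mathrm{range}(d))$, and $U-\lambda=(M_d-\lambda)(I-(M_d-\lambda)^{-1}S)$, so controlling the Neumann series reduces everything to the behaviour of $(M_d-\lambda)^{-1}S$, which blows up precisely as $\lambda\downarrow d_{\min}$ in a way localized to $A$.

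Finally I would do the finite-sample transfer. The operators $U_n=M_{d_n}-S_n$ converge to $U$ in operator norm on $C(\mathcal{X})$ (uniform convergence of $d_n$ gives $\|M_{d_n}-M_d\|\to0$, and $\|S_n-S\|\to0$ follows from the Glivenko--Cantelli-type control used earlier in the excerpt), and $\tfrac1n L_n$ is the restriction of $U_n$ to the sample, so by the perturbation theorem the second eigenvalue of $\tfrac1n L_n$ converges to $d_{\min}$ and the corresponding eigenvectors converge (in the $\|v_n-\rho_n f\|$ sense defined in the excerpt) to a limit eigenfunction of $U$ for eigenvalue $d_{\min}$. By the localization established above, any such limit eigenfunction is supported on $A=\{x:d(x)=d_{\min}\}$ and is an approximate linear combination of (near-)indicators of points of that set; thresholding it therefore produces a ``cluster'' that is essentially a tiny neighbourhood of one or a few low-degree points rather than a sensible bipartition of $\mathcal{X}$, which is the assertion of the lemma. \textbf{The main obstacle} I anticipate is the localization step: proving cleanly that approximate eigenfunctions for $\lambda$ near the edge $d_{\min}$ of the essential spectrum must concentrate on $A$ — essential-spectrum edges are exactly where perturbation theory is weakest, eigenfunctions need not even exist for the limit operator in the usual sense, and one has to work with approximate eigenfunctions / Weyl sequences and use the no-point-mass hypothesis carefully to rule out genuine (non-localized) spectral mass accumulating at $d_{\min}$. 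Making the statement ``the eigenfunction approximates a characteristic function of a point, or a linear combination of such'' precise (in which topology, with what error bound in $n$) is the delicate part; everything else is bookkeeping with the already-established convergence results.
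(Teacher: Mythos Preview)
The paper does not prove this lemma; it is stated as a result taken from the von Luxburg--Belkin--Bousquet reference and then only interpreted in words. So there is no proof in the paper to compare your proposal against.

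That said, a brief remark on your approach: the skeleton is right (essential spectrum of $U$ equals $\mathrm{range}(d)$ by compact perturbation, so the second smallest spectral point is $d_{\min}$, then transfer to $\tfrac1n L_n$), and you correctly flag the real difficulty. The perturbation theorem quoted in the notes applies only to \emph{isolated} eigenvalues of finite multiplicity, but $d_{\min}$ sits at the boundary of the essential spectrum and is not isolated in $\sigma(U)$. So you cannot invoke that theorem to conclude convergence of the second eigenvalue/eigenvector of $U_n$ to anything associated with $d_{\min}$; indeed there may be no genuine eigenfunction of $U$ at $d_{\min}$ at all, only Weyl sequences. The original paper handles this by a direct argument specific to the structure $U=M_d-S$ rather than by the general perturbation lemma, constructing approximate eigenfunctions for $\tfrac1n L_n$ that localize near minimizers of $d$ and showing their Rayleigh quotients approach $d_{\min}$. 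Your resolvent/Neumann-series idea for localization is in the right spirit, but you will need to work with the finite-$n$ operators directly rather than first passing to a limit eigenfunction that may not exist.
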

That is, in this case, the corresponding eigenfunction does \emph{not} contain any useful information for clustering (and one can't even check if $\lambda\in\mbox{range}(d)$ with a finite sample of data points).

While the analysis here has been somewhat abstract, the important point here is that this is \emph{not} a pathological situation: a very simple example of this failure is given in the paper; and this phenomenon will arise whenever there is substantial degree heterogeneity, which is very common in practice.

\newpage

\section{%
(04/21/2015):
Some Statistical Inference Issues (3 of 3):
Stochastic blockmodels}

Reading for today.
\begin{compactitem}
\item
``Spectral clustering and the high-dimensional stochastic blockmodel,'' in The Annals of Statistics, by Rohe, Chatterjee, and Yu
\item
``Regularized Spectral Clustering under the Degree-Corrected Stochastic Blockmodel,'' in NIPS, by Qin and Rohe
\end{compactitem}

Today, we will finish up talking about statistical inference issues by discussing them in the context of stochastic blockmodels.
These are different models of data generation than we discussed in the last few classes, and they illustrate somewhat different issues.

\subsection{Introduction to stochastic block modeling}

As opposed to working with expansion or conductance---or some other ``edge counting'' objective like cut value, modularity, etc.---the \emph{stochastic block model (SBM)} is an example of a so-called probabilistic or generative model.
Generative models are a popular way to encode assumptions about the way that latent/unknown parameters interact to create edges $(ij)$
Then, they assign a probability value for each edges $(ij)$ in a network.
There are several advantages to this approach.
\begin{itemize}
\item
It makes the assumptions about the world/data explicit.
This is as opposed to encoding them into an objective and/or approximation algorithm---we saw several examples of reverse engineering the implicit properties of approximation algorithms.
\item
The parameters can sometimes be interpreted with respect to hypotheses about the network structure.
\item
It allows us to use likelihood scores, to compare different parameterizations or different models.
\item
It allows us to estimate missing structures based on partial observations of graph structure.
\end{itemize}
There are also several disadvantages to this approach.
The most obvious is the following.
\begin{itemize}
\item
One must fit the model to the data, and fitting the model can be complicated and/or computationally expensive.
\item
As a result of this, various approximation algorithms are used to fit the parameters.
This in turn leads to the question of what is the effect of those approximations versus what is the effect of the original hypothesized model? (I.e., we are back in the other case of reverse engineering the implicit statistical properties underlying approximation algorithms, except here it is in the approximation algorithm to estimate the parameters of a generative model.)
This problem is particularly acute for sparse and noisy data, as is common.
\end{itemize}

Like other generative models, SBMs define a probability distribution over graphs, $\mathbb{P}\left[ G | \Theta \right]$, where $\Theta$ is a set of parameters that govern probabilities under the model.
Given a specific $\Theta$, we can then draw or \emph{generate} a graph $G$ from the distribution by flipping appropriately-biased coins.
Note that \emph{inference} is the reverse task: given a graph $G$, either just given to us or generated synthetically by a model, we want to recover the model, i.e., we want to find the specific values of $\Theta$ that generated~it.

The simpled version of a SBM is specified by the following.
\begin{itemize}
\item
A positive integer $k$, a scalar value denoting the the number of blocks.
\item
A vector $\vec{z}\in\mathbb{R}^{n}$, where $z_i$ gives the group index of vertex $i$.
\item
A matrix $M\in\mathbb{R}^{k \times k}$, a stochastic block matrix, where $M_{ij}$ gives the probability that a vertex of type $i$ links to a vertex of type $j$.
\end{itemize}
Then, one generates edge $(ij)$ with probability $M_{z_iz_j}$.
That is, edges are not identically distributed, but they are conditionally independent, i.e., conditioned on their types, all edges are independent, and for a given pair of types $(ij)$, edges are i.i.d.

Observe that the SBM has a relatively large number of parameters, ${k \choose 2}$, even after we have chosen the labeling on the vertices.
This has plusses and minuses.
\begin{itemize}
\item
Plus: it allows one the flexibility to model lots of possible structures and reproduce lots of quantities of interest.
\item
Minus: it means that there is a lot of flexibility, thus making the possibility of overfitting more likely.
\end{itemize}

Here are some simple examples of SBMs.
\begin{itemize}
\item
If $k=1$ and $M_{ij}=p$, for all $i,j$, then we recover the vanilla ER model.
\item
Assortative networks, if $M_{ii} > M_{ij}$, for $i \ne j$.
\item
Disassortative networks, if $M_{ii} < M_{ij}$, for $i \ne j$.
\end{itemize}

\subsection{Warming up with the simplest SBM}

To illustrate some of the points we will make in a simple context, consider the ER model.  
\begin{itemize}
\item
If, say, $p = \frac{1}{2}$ and the graph $G$ has more than a handful of nodes, then it will be very easy to estimate $p$, i.e., to estimate the parameter vector $\Theta$ of this simple SBM, basically since measure concentration will occur very quickly and the empirical estimate of $p$ we obtain by counting the number of edges will be very close to its expected value, i.e., to $p$.
More generally, if $n$ is large and $p \gtrsim \frac{\log(n)}{n}$, then measure will still concentrate, i.e., the empirical and expected values of $p$ will be close, and we will be able to estimate $p$ well.
(This is related to the well-known observation that if $p \gtrsim\frac{\log(n)}{n}$, then $G_{np}$ and $G_{nm}$ are very similar, for appropriately chosen values of $p$ and $m$.)
\item
If, on the other hand, say, $p=\frac{3}{n}$, then this is not true.
In this regime, measure has \emph{not} concentrated for most statistics of interest: the graph is not even fully connected; the giant component has nodes of degree almost $O\left(\log(n)\right)$; and the giant component has small sets of nodes of size $\Theta\left( \log(n) \right)$ that have conductance $O\left( \frac{1}{\log(n)} \right)$.
(Contrast all of these the a $3$-regular random graph, which: is fully connected, is degree-homogeneous, and is a very good~expander.)
\end{itemize}
In these cases when measure concentration fails to occur, e.g., due to exogenously-specified degree heterogeneity or due to extreme sparsity, then one will have difficulty with recovering parameters of hypothesized models.
More generally, similar problems arise, and the challenge will be to show that one can reconstruct the model under as broad a range of parameters as possible.

\subsection{A result for a spectral algorithm for the simplest nontrivial SBM}

Let's go into detail on the following simple SBM (which is the simplest aside from ER).
\begin{itemize}
\item
Choose a partition of the vertics, call them $V^1$ and $V^2$, and WLOG let $V^1 = \{ 1,\ldots,\frac{n}{2} \}$ and $V^2 = \{ \frac{n}{2}+1,\ldots,n \}$.
\item
Then, choose probabilities $p > q$ and place edges between vertices $i$ and $j$ with probability
\[
\mathbb{P}\left[ (ij) \in E \right]
   = \left\{ \begin{array}{l l}
                    q  & \quad \text{if $i \in V^1$ and $j \in V^2$ of $i \in V^2$ and $j \in V^1$ }\\
                    p  & \quad \text{otherwise}
             \end{array}
     \right.  ,
\]
\end{itemize}
In addition to being the ``second simplest'' SBM, this is also a simple example of a \emph{planted partition model}, which is commonly studied in TCS and related areas.

Here is a fact:
\[
\mathbb{E}\left[ \text{number of edges crossing bw } V^1 \text{ and } V^2 \right] = q |V^1| |V^2| .
\]
In addition, if $p$ is sufficiently larger than $q$, then every other partition has more edges.
This is the basis of recovering the model.
Of course, if $p$ is only slightly but not sufficiently larger than $q$, then there might be fluctuational effects such that it is difficult to find this from the empirical graph.
This is analogous to having difficulty with recovering $p$ from very sparse ER, as we discussed.

Within the SBM framework, the most important inferential task is recovering cluster membership of nodes from a single observation of a graph (i.e., the two clusters in this simple planted partition form of the SBM).
There are a variety of procedures to do this, and here we will describe spectral~methods.

In particular, we will follow a simple analysis motivated by McSherry's analysis, as described by Spielman, that will provide a ``positive'' result for sufficiently dense matrices where $p$ and $q$ are sufficiently far apart.
Then, we will discuss this model more generally, with an emphasis on how to deal with very low-degree nodes that lead to measure concentration problems.
In particular, we will focus on a form of regularized spectral clustering, as done by Qin and Rohe in their paper ``Regularized spectral clustering under the degree-corrected stochastic blockmodel.''
This has connections with what we have done with the Laplacian over the last few weeks.

To start, let $M$ be the \emph{population adjacency matrix}, i.e., the hypothesized matrix, as described above.
That is, 
\[
M = \left( \begin{array}{cc} p\vec{1}\vec{1}^{T} & q\vec{1}\vec{1}^{T} \\
                             q\vec{1}\vec{1}^{T} & p\vec{1}\vec{1}^{T}
           \end{array}
    \right)
\]
Then, let $A$ be the \emph{empirical adjacency matrix}, i.e., the actual matrix that is generated by flipping coins and on which we will perform computations.
This is generated as follows: let $A_{ij}=1$ w.p. $M_{ij}$ and s.t. $A_{ij}=A_{ji}$.
So, the basic goal is going to be to recover clusters in $M$ by looking at information in $A$.

Let's look at the eigenvectors. 
First, since $M\vec{1} = \frac{n}{2}(p+q)\vec{1}$, we have
\begin{eqnarray*}
\mu_1 &=& \frac{n}{2}(p+q) \\
w_1   &=& \vec{1}  ,
\end{eqnarray*}
where $\mu_1$ and $w_1$ are the leading eigenvalue and eigenvector, respectively.
Then, since the second eigenvector (of $M$) is constant on each cluster,  we have that $Mw_2 = \mu_2 w_2$, where
\begin{eqnarray*}
\mu_2 &=& \frac{n}{2}(p-q) \\
w_2   &=& \begin{cases}
               \frac{1}{\sqrt{n}} \text{ if } i \in V^1 \\
              -\frac{1}{\sqrt{n}} \text{ if } i \in V^2 \\
          \end{cases}  .
\end{eqnarray*}

In that case, here is a simple algorithm for finding the planted bisection.
\begin{enumerate}
\item
Compute $v_2$, the eigenvector of second largest eigenvalue of $A$.
\item
Set $S = \{ i : v_2(i) \geq 0 \}$
\item
Guess that $S$ is one side of the bisection and that $\bar{S}$ is the other side.
\end{enumerate}
We will show that under not unreasonable assumptions on $p$, $q$, and $S$, then by running this algorithm one gets the hypothesized cluster mostly right.

Why is this?

The basic idea is that $A$ is a perturbed version of $M$, and so by perturbation theory the eigenvectors of $A$ should look like the eigenvectors of $M$.

Let's define $R = A-M$.
We are going to view $R$ as a random matrix that depends on the noise/randomness in the coin flipping process.
Since matrix perturbation theory bounds depend on (among other things) the norm of the perturbation, the goal is to bound the probability that $\|R\|_2$ is large.
There are several methods from random matrix theory that give results of this general form, and one or the other is appropriate, depending on the exact statement that one wants to prove.
For example, if you are familiar with Wigner's semi-circle law, it is of this general form.
More recently, Furedi-Komlos got another version; as did Krivelevich and Vu; and Vu.
Here we state a result due to Vu.

\begin{theorem}
With probability tending to one, if $p \ge c \frac{\log^4(n)}{n}$, for a constant $c$, then 
\[
\|R\|_2 \le 3 \sqrt{pn}  .
\]
\end{theorem}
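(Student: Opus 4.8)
The plan is to establish the bound via the moment (trace) method, following Vu's sharpening of the Füredi--Komlós argument. I would use this route rather than a direct matrix Bernstein inequality because the latter only yields $\TNorm{R} = O(\sqrt{pn\log n})$ and loses the sharp leading constant $2$ (and hence the advertised $3$). The first step is to record the structure of $R = A - M$: it is symmetric, its entries $R_{ij}$ with $i \le j$ are independent, each is mean zero, bounded by $1$ in absolute value, and has variance $M_{ij}(1-M_{ij}) \le p$ (the diagonal is zero). Since $R^{2k}$ is positive semidefinite for any integer $k$, we have $\TNorm{R}^{2k} \le \Trace{R^{2k}}$, so Markov's inequality gives
\[
\Probab{ \TNorm{R} \ge 3\sqrt{pn} } \le \frac{ \Expect{ \Trace{R^{2k}} } }{ (3\sqrt{pn})^{2k} }.
\]
It thus suffices to show $\Expect{\Trace{R^{2k}}} \le (2\sqrt{pn})^{2k}(1+o(1))$ for a suitable $k = k(n) \to \infty$ (take $k \asymp \log n$): the right-hand side is then $(2/3)^{2k}(1+o(1)) \to 0$.

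The heart of the matter is the combinatorial estimate
$\Expect{\Trace{R^{2k}}} = \sum_{i_0,\dots,i_{2k-1}} \Expect{ R_{i_0 i_1} R_{i_1 i_2} \cdots R_{i_{2k-1} i_0} }$,
a sum over closed walks of length $2k$ in $K_n$. By independence and mean-zero-ness a walk contributes nothing unless every edge it traverses is traversed at least twice, so the surviving walks are indexed by their ``shape'' --- the multigraph of distinct edges with each edge of multiplicity $\ge 2$. The standard dyadic accounting shows the dominant term comes from walks using exactly $k$ distinct edges, each twice, whose underlying graph is a tree on $k+1$ vertices; these correspond to rooted plane trees, contributing a Catalan factor $C_k \le 4^k$ times roughly $n^{k+1} p^k$, which assembles to $(2\sqrt{pn})^{2k}(1+o(1))$. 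Everything else --- walks with vertex self-intersections, or with some edge of multiplicity $\ge 3$ --- must be shown to be of strictly lower order, and this is precisely where $p \ge c\log^4 n/n$ is used: in the sparse regime one loses a factor of order $1/(pn)$ (times powers of $k$) for each such ``defect,'' so with $k \asymp \log n$ one needs $pn \gtrsim \log^4 n$ to beat the number (entropy) of defective walk shapes.

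The main obstacle will be controlling these defective walks in the genuinely sparse regime, since the naive trace bound is spoiled by the rare vertices of atypically large degree: a single vertex of degree $\gg pn$ can dominate $\Expect{\Trace{R^{2k}}}$ while barely affecting $\TNorm{R}$. Vu's device for this is a truncation/decomposition: write $A = A' + A''$, where $A'$ is the restriction to vertices of typical degree $O(pn)$ --- for which the clean trace estimate above applies --- and $A''$ is supported on the few exceptional high-degree vertices. One then bounds $\TNorm{A'' - \Expect{A''}}$ crudely: the exceptional set is tiny and, by a Chernoff plus union-bound argument (again requiring $pn \gtrsim \log^4 n$ so that degrees concentrate), it spans very few edges, so this term is $o(\sqrt{pn})$. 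Combining $\TNorm{R} \le \TNorm{R'} + \TNorm{R''} \le 2\sqrt{pn}(1+o(1)) + o(\sqrt{pn}) \le 3\sqrt{pn}$ for $n$ large finishes the proof. I would also note the cheaper fallback for readers who do not need the sharp constant: an $\varepsilon$-net over the unit sphere together with Bernstein's inequality applied to the quadratic forms $x^{\top} R x$ gives $\TNorm{R} = O(\sqrt{pn} + \log n)$ with an explicit constant, which already suffices for all the downstream spectral-clustering perturbation bounds if one replaces $3$ by a larger absolute constant.
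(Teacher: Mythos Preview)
The paper does not prove this theorem: it is stated without proof and attributed to Vu (``Here we state a result due to Vu''), so there is no in-paper argument to compare against. Your outline is a faithful sketch of Vu's trace/moment method, and the key ingredients you identify --- the closed-walk expansion of $\Expect{\Trace{R^{2k}}}$, the Catalan/tree dominant term yielding $(2\sqrt{pn})^{2k}$, the need to control walks with repeated vertices or high-multiplicity edges in the sparse regime, and the truncation of atypically high-degree vertices --- are indeed the components of that proof. The $\varepsilon$-net/Bernstein fallback you mention is also a standard and correct alternative when only $O(\sqrt{pn})$ (rather than the sharp $2\sqrt{pn}$) is needed.
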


The key question in theorems like this is the value of $p$.
Here, one has that $p \gtrapprox \frac{\log(n)}{n}$, meaning that one can get pretty sparse (relative to $p=1$) but not extremely sparse (relative to $p=\frac{1}{n}$ or $p=\frac{3}{n}$).
If one wants stronger results (e.g., not just mis-classifying only a constant fraction of the vertices, which we will do below, but instead that one predicts correctly for all but a small fraction of the vertices), then one needs $p$ to be larger and the graph to be denser.
As with the ER example, the reason for this is that we need to establish concentration of appropriate estimators.

Let's go onto perturbation theory for eigenvectors.
Let $\alpha_1 \ge \alpha_2 \ge \cdots \alpha_n$ be the eigenvalues of $A$, and let $\mu_1 > \mu_2 > \mu_3 = \cdots \mu_n = 0$ be the eigenvalues of $M$.

Here is a fact from matrix perturbation theory that we mentioned before: for all $i$, 
\[
|\alpha_i - \mu_i | \le \|A-M\|_2 = \|R\|_2  .
\]

The following two claims are easy to establish.

\begin{claim}
If $\|R\|_2 < \frac{n}{4}(p-q)$, then 
\[
\frac{n}{4}(p-q) < \alpha_2 < \frac{3n}{4}(p-q)
\]
\end{claim}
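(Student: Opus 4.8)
The plan is to derive this directly from the eigenvalue perturbation fact stated immediately before the claim, namely that $|\alpha_i - \mu_i| \le \|A - M\|_2 = \|R\|_2$ for every $i$, applied with $i = 2$. This is really the only ingredient; the rest is arithmetic, so the ``proof'' will be short.

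First I would recall the spectrum of the population matrix $M$ that was computed above: $\mu_1 = \tfrac{n}{2}(p+q)$, $\mu_2 = \tfrac{n}{2}(p-q)$, and $\mu_3 = \cdots = \mu_n = 0$, and note that $p > q$ so that $\mu_2 > 0$ and all the inequalities below are between genuine positive quantities. Both $A$ and $M$ are real symmetric, so the perturbation fact (a form of Weyl's inequality) applies and gives $|\alpha_2 - \mu_2| \le \|R\|_2$. Under the hypothesis $\|R\|_2 < \tfrac{n}{4}(p-q)$ this yields $\mu_2 - \tfrac{n}{4}(p-q) < \alpha_2 < \mu_2 + \tfrac{n}{4}(p-q)$.

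Then I would substitute $\mu_2 = \tfrac{n}{2}(p-q)$ into both sides: the lower bound becomes $\tfrac{n}{2}(p-q) - \tfrac{n}{4}(p-q) = \tfrac{n}{4}(p-q)$ and the upper bound becomes $\tfrac{n}{2}(p-q) + \tfrac{n}{4}(p-q) = \tfrac{3n}{4}(p-q)$, which is exactly the asserted sandwich $\tfrac{n}{4}(p-q) < \alpha_2 < \tfrac{3n}{4}(p-q)$.

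There is essentially no obstacle here — the only things worth being careful about are that the perturbation fact requires symmetry (satisfied by construction of $A$ and $M$) and that one is implicitly using $p-q>0$ so that ``$\mu_2 = \tfrac{n}{2}(p-q)$ is the second-largest eigenvalue'' is the correct indexing when matching $\alpha_2$ to $\mu_2$. If one wanted to be fully self-contained one could also remark that the companion claim for $\alpha_1$ (that $\alpha_1$ stays near $\mu_1 = \tfrac{n}{2}(p+q)$, hence well separated from $\alpha_2$) follows by the same one-line argument, but that is not needed for the statement as given.
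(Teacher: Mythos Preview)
Your proof is correct and is exactly the argument the paper has in mind: the paper states the perturbation fact $|\alpha_i - \mu_i| \le \|R\|_2$ immediately before this claim and then says the claim is ``easy to establish,'' leaving the substitution $\mu_2 = \tfrac{n}{2}(p-q)$ and the arithmetic you carried out to the reader.
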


\begin{claim}
If, in addition, $q > \frac{p}{3}$, then $\frac{3n}{4}(p-q) < \alpha_1$.
\end{claim}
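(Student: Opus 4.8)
The plan is to combine the matrix perturbation fact $|\alpha_1 - \mu_1| \le \|R\|_2$ with the explicit value $\mu_1 = \frac{n}{2}(p+q)$ of the top eigenvalue of the population matrix $M$, together with the standing hypothesis $\|R\|_2 < \frac{n}{4}(p-q)$ carried over from the previous claim, and then to observe that the extra assumption $q > \frac{p}{3}$ is exactly what is needed to push the resulting chain of inequalities through.

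First I would use the perturbation bound to write $\alpha_1 \ge \mu_1 - \|R\|_2 = \frac{n}{2}(p+q) - \|R\|_2$. Hence it suffices to verify $\frac{n}{2}(p+q) - \|R\|_2 > \frac{3n}{4}(p-q)$, i.e.\ $\|R\|_2 < \frac{n}{2}(p+q) - \frac{3n}{4}(p-q)$. A one-line computation (common denominator $4$) gives $\frac{n}{2}(p+q) - \frac{3n}{4}(p-q) = \frac{n}{4}\bigl(2(p+q) - 3(p-q)\bigr) = \frac{n}{4}(5q - p)$, so the target reduces to $\|R\|_2 < \frac{n}{4}(5q-p)$.

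Next I would compare the two upper bounds on $\|R\|_2$: the hypothesis already gives $\|R\|_2 < \frac{n}{4}(p-q)$, so it is enough to check $\frac{n}{4}(p-q) \le \frac{n}{4}(5q-p)$, which is equivalent to $p - q \le 5q - p$, i.e.\ $p \le 3q$, i.e.\ $q \ge \frac{p}{3}$. Under the assumption $q > \frac{p}{3}$ this holds (indeed strictly, and it also forces $5q - p > p - q > 0$), so $\|R\|_2 < \frac{n}{4}(p-q) < \frac{n}{4}(5q-p)$, which is exactly the inequality needed; feeding this back yields $\alpha_1 \ge \frac{n}{2}(p+q) - \|R\|_2 > \frac{3n}{4}(p-q)$, proving the claim.

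The main obstacle is essentially non-existent at this level: once the perturbation fact and the bound $\|R\|_2 < \frac{n}{4}(p-q)$ (from the preceding claim, which in turn rests on the random-matrix estimate $\|R\|_2 \le 3\sqrt{pn}$ holding with probability tending to one) are in hand, the statement is pure arithmetic. The only genuine point of care is that the hypothesis $q > \frac{p}{3}$ is invoked in precisely the right place — it is what guarantees the relatively crude bound $\|R\|_2 < \frac{n}{4}(p-q)$ from the previous claim is still strong enough to control $\alpha_1$ from below; were $q$ allowed to be much smaller than $p$, this step would fail and one would need either a separate argument or a sharper bound on $\|R\|_2$.
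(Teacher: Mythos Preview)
Your proof is correct and is exactly the intended argument. The paper does not spell out a proof, merely stating that the claim is ``easy to establish,'' and your computation---applying $\alpha_1 \ge \mu_1 - \|R\|_2$, reducing to $\|R\|_2 < \frac{n}{4}(5q-p)$, and then observing that $q > \frac{p}{3}$ is precisely the condition making $\frac{n}{4}(p-q) < \frac{n}{4}(5q-p)$---is the natural way to fill in the details.
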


From these results, we have a separation, and so we can view $\alpha_2$ as a perturbation of $\mu_2$.
The question is: can we view $v_2$ as a perturbation of $w_2$?
The answer is Yes.
Here is a statement of this result.

\begin{theorem}
Let $A$, $M$ be symmetric matrices, and let $R=M-A$.
Let $\alpha_1 \ge \cdots \ge \alpha_n$ be the eigenvectors of $A$, with $v_1,\cdots,v_n$ the corresponding eigenvectors.
Let $\mu_1 \ge \cdots \ge \mu_n$ be the eigenvectors of $M$, with $w_1,\cdots,w_n$ the corresponding eigenvectors.
Let $\theta_i$ be the angle between $v_i$ and $w_i$.
Then, 
\begin{eqnarray*}
\sin \theta_i \le \frac{ 2 \| R \|_2 }{ \min_{j \ne i} |\alpha_i - \alpha_j | } \\
\sin \theta_i \le \frac{ 2 \| R \|_2 }{ \min_{j \ne i} |\mu_i - \mu_j | } 
\end{eqnarray*}
\end{theorem}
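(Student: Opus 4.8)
The plan is to run the standard Davis--Kahan argument: fix an index $i$, expand one family of eigenvectors in the orthonormal basis furnished by the other family, and then use the spectral gap together with Weyl's inequality (the fact $|\alpha_i-\mu_i|\le\|R\|_2$ recalled earlier in the notes) to control the off-diagonal mass. To make $\theta_i$ well defined I would first assume the relevant eigenvalue is simple; if it is not, one reads $\theta_i$ as the principal angle to the corresponding eigenspace and the argument goes through verbatim with $w_i$ replaced by an orthogonal projection. Choosing the sign of $v_i$ so that $\langle v_i,w_i\rangle\ge 0$, write $v_i=\sum_{j=1}^n c_j w_j$ in the eigenbasis of $M$; since $\{w_j\}$ is orthonormal and $\|v_i\|_2=1$ we have $\sum_j c_j^2=1$ and $\cos\theta_i=c_i$, hence $\sin^2\theta_i=\sum_{j\ne i}c_j^2$.

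Next I would apply $R=M-A$ to $v_i$. Since $Av_i=\alpha_i v_i$, we get $Rv_i=Mv_i-\alpha_i v_i=\sum_j(\mu_j-\alpha_i)c_j w_j$, so taking Euclidean norms and using $\|Rv_i\|_2\le\|R\|_2$ gives $\sum_j(\mu_j-\alpha_i)^2 c_j^2\le\|R\|_2^2$, and in particular $\sum_{j\ne i}(\mu_j-\alpha_i)^2 c_j^2\le\|R\|_2^2$. The second step is to lower bound $|\mu_j-\alpha_i|$ for $j\ne i$: by the triangle inequality and Weyl's inequality, $|\mu_j-\alpha_i|\ge|\mu_j-\mu_i|-|\mu_i-\alpha_i|\ge \delta-\|R\|_2$, where $\delta=\min_{j\ne i}|\mu_i-\mu_j|$. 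If $\delta\ge 2\|R\|_2$ this is at least $\delta/2$, so $\sum_{j\ne i}c_j^2\le 4\|R\|_2^2/\delta^2$, i.e.\ $\sin\theta_i\le 2\|R\|_2/\delta$; and if $\delta<2\|R\|_2$ then $2\|R\|_2/\delta>1\ge\sin\theta_i$ and the asserted bound holds trivially. This yields the second displayed inequality unconditionally. For the first displayed inequality I would run the symmetric argument: expand $w_i=\sum_j d_j v_j$ in the eigenbasis of $A$, apply $R$ to get $Rw_i=\mu_i w_i-Aw_i=\sum_j(\mu_i-\alpha_j)d_j v_j$, deduce $\sum_{j\ne i}(\mu_i-\alpha_j)^2 d_j^2\le\|R\|_2^2$, and bound $|\mu_i-\alpha_j|\ge|\alpha_i-\alpha_j|-|\alpha_i-\mu_i|\ge\delta'-\|R\|_2$ with $\delta'=\min_{j\ne i}|\alpha_i-\alpha_j|$, again splitting on whether $\delta'\ge 2\|R\|_2$.

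There is no deep obstacle here; the argument is elementary once one commits to expanding in the correct basis and invoking Weyl's inequality. The one point requiring a little care is the case split on the size of the gap relative to $\|R\|_2$: this is exactly where the factor of $2$ in the numerator comes from, and it is what lets the bound be stated without any smallness hypothesis on $\|R\|_2$ rather than only in the regime where the perturbation is small compared with the gap. A secondary bookkeeping point is the sign and multiplicity convention for the eigenvectors, which should be fixed at the outset so that $\theta_i$ and the identity $\cos\theta_i=c_i$ are unambiguous; with that in place the two inequalities follow by the two mirror-image computations above.
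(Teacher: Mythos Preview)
Your argument is correct and is essentially the same Davis--Kahan argument the paper gives: expand $v_i$ in the eigenbasis $\{w_j\}$ of $M$, use Weyl's inequality $|\alpha_i-\mu_i|\le\|R\|_2$, and extract $\sin^2\theta_i=\sum_{j\ne i}c_j^2$ from the spectral gap. The only packaging difference is that the paper first shifts so that $\mu_i=0$ and then bounds $\|Mv_i\|$ directly: from below by $\delta\sin\theta_i$ (since $\|Mv_i\|^2=\sum_j c_j^2\mu_j^2\ge\delta^2\sum_{j\ne i}c_j^2$) and from above by $\|Av_i\|+\|Rv_i\|\le|\alpha_i|+\|R\|_2\le 2\|R\|_2$. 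This yields $\sin\theta_i\le 2\|R\|_2/\delta$ in one line without your case split on $\delta$ versus $2\|R\|_2$; the factor of $2$ enters through $|\alpha_i|\le\|R\|_2$ rather than through halving the gap. But the two routes are equivalent in substance, and your handling of the trivial regime $\delta<2\|R\|_2$ and the symmetric argument for the $\alpha$-gap inequality are both fine.
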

\begin{proof}
WLOG, we can assume $\mu_i=0$, since the matrices $M-\mu_i I$ and $A-\alpha_i I$ have the same eigenvectors as $M$ and $A$, and $M-\mu_i I$ has the $i$th eigenvalue being $0$.
Since the theorem is vacuous if $\mu_i$ has multiplicities, we can assume unit multiplicity, and that $w_i$ is a unit vector in the null space of $M$.
Due to the assumption that $\mu_i = 0$, we have that $|\alpha_i| \le \|R\|_2$.

Then, expand $v_i$ in an eigenbasis of $M$: $v_i = \sum_j c_j w_j$, where $c_j = w_j^T v_i$.
Let $\delta = \min_j |\mu_j|$.
Then observe that 
\[
\|Mv_i\|_2^2 = \sum_j c_j^2 \mu_j^2 \ge \sum_{j \ne i} c_j^2 \delta^2 = \delta^2 \sum_{j \ne i} c_j^2 = \delta^2 \left( 1-c_i^2 \right) = \delta^2 \sin^2 \theta_i
\]
and also that 
\[
\| Mv_i \| \le \| Av_i \| + \| Rv_i \| = \alpha_i + \| R v_i \| \le 2 \| R \|_2 .
\]
So, from this it follows that $\sin \theta_i \le \frac{ 2 \|R\|_2 }{ \delta } $ .
\end{proof}

This is essentially a version of the Davis-Kahan result we saw before.
Note that it says that the amount by which eigenvectors are perturbed depends on how close are other eigenvalues, which is what we would expect.

Next, we use this for partitioning the simple SBM.
We want to show that not too many vertices are mis-classified.

\begin{theorem}
Given the two-class SBM defined above, assume that $p \ge c \frac{\log^4(n)}{n}$ and that $q > p/3$.
If one runs the spectral algorithm described above, then at most a constant fraction of the vertices are misclassified.
\end{theorem}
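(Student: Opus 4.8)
The plan is to combine the random-matrix bound on $\|R\|_2$ with the Davis--Kahan-style eigenvector perturbation theorem, and then turn a bound on $\sin\theta_2$ into a bound on the number of misclassified vertices by a sign-agreement argument. First I would invoke Vu's theorem: since $p \ge c\log^4(n)/n$, with probability tending to one we have $\|R\|_2 \le 3\sqrt{pn}$. Second, I would record the eigenvalue picture for the population matrix $M$: its nonzero eigenvalues are $\mu_1 = \tfrac{n}{2}(p+q)$ and $\mu_2 = \tfrac{n}{2}(p-q)$, with all other eigenvalues zero, and the eigenvector $w_2$ for $\mu_2$ is $\pm 1/\sqrt{n}$ on the two planted blocks. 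Under the hypothesis $q > p/3$ one checks that $\mu_1 > \mu_2$ and that $\mu_2 = \tfrac{n}{2}(p-q)$ is well-separated from $0$; in particular, using the two easy claims in the excerpt, $\|R\|_2 = O(\sqrt{pn})$ is smaller than $\tfrac n4(p-q) = \Theta(np)$ for $n$ large, so the eigenvalue gap $\min_{j\ne 2}|\mu_2-\mu_j| = \Theta(np)$ dominates $\|R\|_2$.

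Third, apply the eigenvector perturbation theorem (the $\sin\theta$ bound) to $i=2$: this gives
\[
\sin\theta_2 \;\le\; \frac{2\|R\|_2}{\min_{j\ne 2}|\mu_2-\mu_j|} \;=\; O\!\left(\frac{\sqrt{pn}}{np}\right) \;=\; O\!\left(\frac{1}{\sqrt{np}}\right),
\]
which tends to zero. Hence $\|v_2 - w_2\|_2^2 = 2(1-\cos\theta_2) \le 2\sin^2\theta_2 = O(1/(np))$ after fixing the sign of $v_2$ so that $\langle v_2,w_2\rangle \ge 0$. Fourth, translate this $\ell_2$ closeness into a count of misclassified vertices. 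A vertex $i$ can only be misclassified by the sign-rounding step ($S = \{i : v_2(i)\ge 0\}$) if $v_2(i)$ has a different sign from $w_2(i)$, which forces $|v_2(i) - w_2(i)| \ge |w_2(i)| = 1/\sqrt{n}$. Therefore the number $m$ of misclassified vertices satisfies
\[
\frac{m}{n} \;\le\; \|v_2 - w_2\|_2^2 \;=\; O\!\left(\frac{1}{np}\right),
\]
so $m = O(1/p)$, which is a vanishing fraction of $n$ when $p \gg 1/n$; in particular it is at most a constant fraction (indeed $o(n)$) of the vertices, as claimed. I would also note that one must pick the right one of $S,\bar S$ as the guessed block — this is handled by the sign normalization of $v_2$ — and that the whole argument is conditional on the probability-one event from Vu's theorem.

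The main obstacle — or at least the step requiring the most care — is the passage from $\sin\theta_2$ to the vertex-counting bound, specifically making sure the sign-of-$v_2$ normalization is done correctly and that the ``a misclassified vertex contributes at least $1/n$ to $\|v_2-w_2\|_2^2$'' inequality is clean; everything upstream is a matter of plugging the stated eigenvalues and the stated norm bound into the stated perturbation theorem. A secondary point to be careful about is checking the hypotheses of the two easy claims (that $\|R\|_2 < \tfrac n4(p-q)$ and $q > p/3$ give the needed eigenvalue separation) actually hold under $p \ge c\log^4 n/n$ and $q>p/3$ for $n$ large, which is where the constant $c$ and the asymptotic ``probability tending to one'' get used. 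I would not attempt to sharpen the constant fraction to ``all but $o(n)$'' vertices — that needs a denser regime and a row-wise (rather than global $\ell_2$) control of the perturbation, e.g. an $\ell_{2\to\infty}$ or entrywise eigenvector bound, which is beyond what the stated tools give.
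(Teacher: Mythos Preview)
Your approach is essentially the same as the paper's: bound $\|R\|_2$ via Vu's theorem, apply the $\sin\theta$ perturbation bound to the second eigenvector, and convert $\|v_2-w_2\|_2^2$ into a misclassification count via the observation that each misclassified vertex contributes at least $1/n$.

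The one place where you are looser than the paper is in asserting that $\tfrac{n}{4}(p-q)=\Theta(np)$ and $\min_{j\ne 2}|\mu_2-\mu_j|=\Theta(np)$. Under $q>p/3$ the relevant gap is $\min\bigl(nq,\tfrac{n}{2}(p-q)\bigr)=\tfrac{n}{2}(p-q)$, but nothing in the hypotheses forces $p-q=\Theta(p)$; the stated assumptions allow $q$ arbitrarily close to $p$. The paper avoids this by keeping the dependence on $p-q$ explicit throughout, arriving at
\[
\sin\theta_2 \le \frac{6\sqrt{p}}{\sqrt{n}\,(p-q)} \qquad\Longrightarrow\qquad k \le \frac{36\,p}{(p-q)^2},
\]
and only afterward interprets this as ``a constant fraction'' in the illustrative example $p=\tfrac12$, $q=p-12/\sqrt{n}$, where $k\le n/8$. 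Your bound $m=O(1/p)$ is what one gets from the paper's bound only in the regime $p-q=\Theta(p)$. So the fix is simply not to collapse $p-q$ into $\Theta(p)$: carry it through and you recover the paper's bound exactly; the verification that $\|R\|_2<\tfrac{n}{4}(p-q)$ (needed for the two eigenvalue claims) likewise requires $p-q\gtrsim \sqrt{p/n}$, which is implicit in both arguments.
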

\begin{proof}
Consider the vector $\vec{\delta} = v_2 - w_2$.
For all $i \in V$ that are misclassified by $v_2$, we have that $ | \delta(i) | \ge \frac{1}{\sqrt{n}} $.
So, if $v_2$ misclassified $k$ vertices, then $\|\delta\| \ge \sqrt{k/n}$.
Since $u$ and $v$ are unit vectors, we have the crude bound that $\|\delta\| \le \sqrt{2} \sin \theta_2$.

Next, we can combine this with the perturbation theory result above.
Since $q > p/3$, we have that $\min_{j \ne 2} | \mu_2 - \mu_i | = \frac{n}{2}(p-q)$; and since $p \ge c \frac{\log^4(n)}{n}$, we have that $\|R\| \le 3 \sqrt{pn}$.
Then, 
\[
\sin \theta_2 \le \frac{ 3 \sqrt{pn} }{ \frac{n}{2}(p-q) } = \frac{ 6 \sqrt{p} }{ \sqrt{n}(p-q) }.
\]
So, the number $k$ of mis-classified vertices satisfies $\sqrt{\frac{k}{n}} \le \frac{ 6 \sqrt{p} }{ \sqrt{n}(p-q) } $, and thus $ k \le \frac{ 36 p }{ (p-q)^2 } $.
\end{proof}

So, in particular, if $p$ and $q$ are both constant, then we expect to misclassify at most a constant fraction of the vertices.
E.g.,  if $p=\frac{1}{2}$ and $q = p - \frac{12}{\sqrt{n}}$, then $\frac{ 36p }{(p-q)^2 } = \frac{n}{8}$, and so only a constant fraction of the vertices are misclassified.

This analysis is a very simple result, and it has been extended in various ways.
\begin{itemize}
\item
The Ng et al. algorithm we discussed before computes $k$ vectors and then does $k$ means, making similar gap assumptions.
\item
Extensions to have more than two blocks, blocks that are not the same size, etc.
\item
Extensions to include degree variability, as well as homophily and other empirically-observed properties of networks.
\end{itemize}

The general form of the analysis we have described goes through to these cases, under the following types of assumptions.
\begin{itemize}
\item
The matrix is dense enough.
Depending on the types of recovery guarantees that are hoped for, this could mean that $\Omega(n)$ of the edges are present for each node, or perhaps $\Omega(\mbox{polylog}(n))$ edges for each node.
\item
The degree heterogeneity is not too severe.
Depending on the precise algorithm that is run, this can manifest itself by placing an upper bound on the degree of the highest degree node and/or placing a lower bound on the degree of the lowest degree node.
\item
The number of clusters is fixed, say as a function of $n$, and each of the clusters is not too small, say a constant fraction of the nodes.
\end{itemize}
Importantly, \emph{none} of these simplifying assumptions are true for most ``real world'' graphs.
As such, there has been a lot of recent work focusing on dealing with these issues and making algorithms for SBMs work under broader assumptions.
Next, we will consider one such~extension.

\subsection{Regularized spectral clustering for SBMs}

Here, we will consider a version of the degree-corrected SBM, and we will consider doing a form of \emph{regularized spectral clustering (RSC)} for it.

Recall the definition of the basic SBM.
\begin{definition}
Given nodes $V=[n]$, let $z:[n]\rightarrow[k]$ be a partition of the $n$ nodes into $k$ blocks, i.e., $z_i$ is the block membership of the $i^{th}$ node.
Let $B \in [0,1]^{k \times k}$.
Then, under the SBD, we have that the probability of an edge between $i$ and $j$ is 
\[
P_{ij} = B_{z_iz_j}, \text{ for all } i,j \in \{1,\ldots,n\}  .
\]
\end{definition}
In particular, this means that, given $z$, the edges are independent.

Many real-world graphs have substantial degree heterogeneity, and thus it is common to in corporate this into generative models.
Here is the extension of the SBM to the \emph{Degree-corrected stochastic block model (DC-SBM)}, which introduces additional parameters $\theta_i$, for $i \in[n]$, to control the node~degree.
\begin{definition}
Given the same setup as for the SBM, specify also additional parameters $\theta_i$, for $i\in[n]$.
Then, under the DC-SBM, the probability of an edge between $i$ and $j$ is
\[
P_{ij} = \theta_i \theta_j B_{z_iz_j} ,
\]
where $\theta_i \theta_j B_{z_iz_j} \in [0,1] $, for all $i,j \in [n]$.
\end{definition}

Note: to make the DC-SBM identifiable (i.e., so that it is possible in principle to learn the true model parameters, say given an infinite number of observations, which is clearly a condition that is needed for inference), one can impose the constraint that $\sum_i \theta_i \delta_{z_i,r} = 1$, for each block $r$.
(This condition says that $\sum_i \theta_i = 1$ within each block.)
In this case $B_{st}$, for $s \ne t$, is the expected number of links between block $s$ and block $t$; and $B_{st}$, for $s = t$, is the expected number of links within block $s$.

Let's say that $A \in \{0,1\}^{n \times n}$ is the adjacency matrix; $L = D^{-1/2}A D^{-1/2}$.
In addition, let $\mathcal{A} = \mathbb{E}\left[ A \right] $ be the population matrix, under the DC-SBM.
Then, one can express $\mathcal{A}$ as $\mathcal{A} = \Theta Z B Z^T \Theta$, where $\Theta \in \mathbb{R}^{n \times n} = \mbox{diag}(\theta_i)$, and where $Z \in \{0,1\}^{n \times k}$ is a membership matrix with $Z_{it} = 1$ iff node $i$ is in block $t$, i.e., if $z_i = t$.

We are going to be interested in very sparse matrices, for which the minimum node degree is very small, in which case a vanilla algorithm will fail to recover the SBM blocks. 
Thus, we will need to introduce a regularized version of the Laplacian.
Here is the~definition.

\begin{definition}
Let $\tau > 0$.
The \emph{regularized graph Laplacian} is $L_{\tau} = D_{\tau}^{-1/2} A D_{\tau}^{-1/2} \in \mathbb{R}^{n \times n}$, with $D_{\tau} = D + \tau I$, for $\tau > 0$.
\end{definition}

This is defined for the empirical data; but given this, we can define the corresponding population~quantities:
\begin{eqnarray*}
\mathcal{D}_{ii} &=& \sum_j \mathcal{A}_{ij}  \\
\mathcal{D}_{\tau} &=& \mathcal{D} + \tau I \\
\mathcal{L} &=& \mathcal{D}^{-1/2} \mathcal{A} \mathcal{D}^{-1/2} \\
\mathcal{L}_{\tau} &=& \mathcal{D}_{\tau}^{-1/2} \mathcal{A} \mathcal{D}_{\tau}^{-1/2} 
\end{eqnarray*}

Two things to note.
\begin{itemize}
\item
Under the DC-SBM, if the model is identifiable, then one should be able to determine the partition from $\mathcal{A}$ (which we don't have direct access to, given the empirical data).
\item
One also wants to determine the partition from the empirical data $A$, under broader assumptions than before, in particular under smaller minimum degree.
\end{itemize}

Here is a description of the basic algorithm of Qin and Rohe.
Basically, it is the Ng et al. algorithm that we described before, except that we apply it to the regularized graph Laplacian, i.e., it involves finding the leading eigenvectors of $L_{\tau}$ and then clustering in the low dimensional space.

Given as input an Adjacency Matrix $A$, the number of clusters $k$, and the regularizer $\tau \ge 0$.
\begin{enumerate}
\item
Compute $L_{\tau}$.
\item
Compute the matrix $X_{\tau} = \left[ X_1^{\tau} ,\ldots, X_k^{\tau}\right] \in \mathbb{R}^{n \times k}$, the orthogonal matrix consisting of the $k$ largest eigenvectors of $L_{\tau}$.
\item
Compute the matrix $X_{\tau}^{*} \in \mathbb{R}^{n \times k}$ by normalizing each row of $X_{\tau}$ to have unit length, i.e., project each row of $X_{\tau}$ onto the unit sphere in $\mathbb{R}^{k}$, i.e., $X_{ij}^{*,\tau} = X_{ij}^{\tau} / \sum_j X_{ij}^{\tau,2}$.
\item
Run $k$ means on the rows of $X^{*}_{\tau}$ to create $k$ non-overlapping clusters $V_1,\ldots,V_k$.
\item
Output $V_1,\ldots,V_k$; node $i$ is assigned to cluster $r$ if the $i^{th}$ tow of $X^{*}_{\tau}$ is assigned to $V$.
\end{enumerate}
There are a number of empirical/theoretical tradeoffs in determining the best value for $\tau$, but one can think of $\tau$ as being the average node degree.

There are several things one can show here.

First, one can show that $L_{\tau}$ is close to $\mathcal{L}_{\tau}$.

\begin{theorem}
Let $G$ be the random graph with $\mathbb{P}\left[ \mbox{edge bw } ij \right] = P_{ij}$.
Let $\delta = \min_i \mathcal{D}_{ii}$ be the minimum expected degree of $G$.
If $\delta+\tau > O\left(\log(n)\right)$, then with constant probability
\[
\| L_{\tau} - \mathcal{L}_{\tau} \| \le O(1) \sqrt{ \frac{ \log(n) }{ \delta + \tau } } .
\]
\end{theorem}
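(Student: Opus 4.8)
The plan is to prove this as a concentration inequality for the regularized normalized adjacency matrix, viewing $L_\tau - \mathcal{L}_\tau$ as a sum involving two sources of randomness: the fluctuation of $A$ around $\mathcal{A}$, and the fluctuation of the (regularized) degree matrix $D_\tau$ around $\mathcal{D}_\tau$. First I would decompose the difference. Writing $L_\tau = D_\tau^{-1/2} A D_\tau^{-1/2}$ and $\mathcal{L}_\tau = \mathcal{D}_\tau^{-1/2}\mathcal{A}\,\mathcal{D}_\tau^{-1/2}$, I would interpolate through the mixed term $\mathcal{D}_\tau^{-1/2} A \,\mathcal{D}_\tau^{-1/2}$, so that
\begin{equation*}
L_\tau - \mathcal{L}_\tau = \big(D_\tau^{-1/2} A D_\tau^{-1/2} - \mathcal{D}_\tau^{-1/2} A \mathcal{D}_\tau^{-1/2}\big) + \mathcal{D}_\tau^{-1/2}\big(A - \mathcal{A}\big)\mathcal{D}_\tau^{-1/2}.
\end{equation*}
The second term is a ``nice'' term: $A-\mathcal{A}$ is a symmetric random matrix with independent (above-diagonal) mean-zero entries bounded by $1$, and conjugating by the deterministic matrix $\mathcal{D}_\tau^{-1/2}$ (whose entries are at most $(\delta+\tau)^{-1/2}$) rescales it. For this term I would invoke a matrix concentration bound of the Furedi--Komlos / Vu / matrix-Bernstein type (exactly the kind of random-matrix-theory input used earlier in the excerpt for the planted-partition analysis), applied to the scaled matrix, which gives a bound of order $\sqrt{\log n/(\delta+\tau)}$ once $\delta+\tau \gtrsim \log n$ so that the relevant variance and boundedness parameters are in the regime where the Bernstein bound is $O(\sqrt{\log n/(\delta+\tau)})$ rather than the larger boundedness term.

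Next I would handle the first term, which captures the error from replacing the random degrees $D_\tau$ by their expectations $\mathcal{D}_\tau$. The key subclaim here is that $D_{\tau,ii}/\mathcal{D}_{\tau,ii}$ concentrates around $1$ uniformly in $i$: since $D_{ii} = \sum_j A_{ij}$ is a sum of independent Bernoullis with mean $\mathcal{D}_{ii}$, a Chernoff/Bernstein bound plus a union bound over $n$ vertices gives $\max_i |D_{ii} - \mathcal{D}_{ii}| = O(\sqrt{(\delta+\tau)\log n})$ with the stated probability, provided $\delta + \tau \gtrsim \log n$; the $+\tau$ shift only helps by lifting small expected degrees into the regime where concentration holds. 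From this I would deduce that $\|D_\tau^{-1/2}\mathcal{D}_\tau^{1/2} - I\|$ and its inverse analogue are $O(\sqrt{\log n/(\delta+\tau)})$ in operator (indeed entrywise, hence spectral since diagonal) norm, and then bound $\|D_\tau^{-1/2} A D_\tau^{-1/2} - \mathcal{D}_\tau^{-1/2} A \mathcal{D}_\tau^{-1/2}\|$ by writing it as $E \mathcal{D}_\tau^{-1/2} A \mathcal{D}_\tau^{-1/2} + \mathcal{D}_\tau^{-1/2} A \mathcal{D}_\tau^{-1/2} E + E\,\cdot\,E$-type products, where $E$ is the small diagonal perturbation. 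This requires knowing that $\|\mathcal{D}_\tau^{-1/2} A \mathcal{D}_\tau^{-1/2}\|$ is itself $O(1)$, which follows by combining $\|\mathcal{L}_\tau\| \le 1$ (since $\mathcal{L}_\tau$ is a normalized adjacency-type matrix with spectral radius at most $1$) with the already-established bound on the second term of the decomposition — a mild bootstrapping step.

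Putting the pieces together, both terms are $O(\sqrt{\log n/(\delta+\tau)})$ with constant probability, so a final union bound (adjusting constants) yields $\|L_\tau - \mathcal{L}_\tau\| \le O(1)\sqrt{\log n/(\delta+\tau)}$. I expect the main obstacle to be the first (degree-fluctuation) term rather than the matrix-Bernstein term: one has to be careful that the perturbation $E$ is controlled in spectral norm uniformly over \emph{all} vertices simultaneously, including the lowest-degree ones, and this is precisely where the regularization $\tau$ is essential — without it, vertices of degree $O(1)$ would have $D_{ii}/\mathcal{D}_{ii}$ wildly non-concentrated and the argument would break, mirroring the failure of unnormalized/unregularized spectral clustering on sparse graphs discussed earlier. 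A secondary subtlety is bookkeeping the cross terms in the expansion of the first term so that no term of order worse than $\sqrt{\log n/(\delta+\tau)}$ survives; this is routine once the bound on $\|E\|$ and the $O(1)$ bound on $\|\mathcal{D}_\tau^{-1/2}A\mathcal{D}_\tau^{-1/2}\|$ are in hand.
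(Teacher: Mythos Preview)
The paper does not actually prove this theorem: it is stated without proof as a quotation of the main concentration result from Qin--Rohe, followed only by a remark about how the condition $\delta + \tau > O(\log n)$ relaxes earlier assumptions requiring $\delta \ge O(\log n)$. So there is no ``paper's own proof'' to compare against. That said, your outline is essentially the standard argument used in that literature (and in the Qin--Rohe paper itself): split $L_\tau - \mathcal{L}_\tau$ into a degree-fluctuation piece and an $\mathcal{D}_\tau^{-1/2}(A-\mathcal{A})\mathcal{D}_\tau^{-1/2}$ piece, hit the latter with a matrix Bernstein/Oliveira-type bound, and control the former via scalar Chernoff on each $D_{ii}$ plus a union bound.

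One imprecision worth flagging: your intermediate claim $\max_i |D_{ii} - \mathcal{D}_{ii}| = O(\sqrt{(\delta+\tau)\log n})$ is not what Chernoff gives, since $D_{ii}$ does not depend on $\tau$ at all, and a high-degree vertex with $\mathcal{D}_{ii} \gg \delta + \tau$ will have fluctuation of order $\sqrt{\mathcal{D}_{ii}\log n}$, not $\sqrt{(\delta+\tau)\log n}$. The correct statement is vertex-by-vertex: $|D_{ii} - \mathcal{D}_{ii}| = O(\sqrt{\mathcal{D}_{ii}\log n} + \log n)$, and then the \emph{ratio} $|D_{ii}-\mathcal{D}_{ii}|/(\mathcal{D}_{ii}+\tau)$ is what you bound by $O(\sqrt{\log n/(\delta+\tau)})$, using that $\sqrt{x\log n}/(x+\tau)$ is maximized (over $x \ge \delta$) at a value of order $\sqrt{\log n/(\delta+\tau)}$, and that the additive $\log n$ term is handled by $\log n/(\delta+\tau) \le \sqrt{\log n/(\delta+\tau)}$ once $\delta+\tau \gtrsim \log n$. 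Your conclusion about $\|D_\tau^{-1/2}\mathcal{D}_\tau^{1/2} - I\|$ is right; just route through the relative error rather than a uniform absolute bound on the degree fluctuations.
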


\textbf{Remark.}
Previous results required that the minimum degree $\delta \ge O(\log(n))$, so this result generalizes these to allow $\delta$ to be much smaller, assuming the regularization parameter $\tau$ is large enough.
Importantly, typical real networks do \emph{not} satisfy the condition that $\delta \ge O(\log(n))$, and RSC is most interesting when this condition fails.
So, we can apply this result in here to graph with small node degrees.

\textbf{Remark.}
The form of $L_{\tau}$ is similar to many of the results we have discussed, and one can imagine implementing RSC (and obtaining this theorem as well as those given below) by computing approximations such as what we have discussed.
So far as I know, that has not been done.

Second, one can bound the difference between the empirical and population eigenvectors.
For this, one needs an additional concept.
\begin{itemize}
\item
Given an $n \times k$ matrix $A$, the \emph{statistical leverage scores} of $A$ are the diagonal elements of the projection matrix onto the span of $A$.
\end{itemize}
In particular, if the $n \times k$ matrix $U$ is an orthogonal matrix for the column span of $A$, then the leverage scores of $A$ are the Euclidean norms of the \emph{rows} of $U$.  
For a ``tall'' matrix $A$, the $i^{th}$ leverage score has an interpretation in terms of the leverage or influence that the $i^{th}$ row of an $A$ has on the least-squares fit problem defined by $A$.
In the following, we will use an extension of the leverage scores, defined relative to the best rank-$k$ approximation the the matrix.

\begin{theorem}
\label{thm:eigenvectors}
Let $X_{\tau}$ and $\mathcal{X}_{\tau}$ be in $\mathbb{R}^{n \times k}$ contain the top $k$ eigenvectors of $L_{\tau}$ and $\mathcal{L}_{\tau}$, respectively.
Let 
\[
\xi = \min_i \{ \min \{ \| X_{\tau}^{i} \|_2 , \| \mathcal{X}_{\tau}^{i} \|_2 \} \}  .
\]
Let $X_{\tau}^{*}$ and $\mathcal{X}_{\tau}^{*}$ be the row normalized versions of $X_{\tau}$ and $\mathcal{X}_{\tau}$.
Assume that $\sqrt{ \frac{ k \log(n) }{ \delta + \tau } } \le O( \lambda_k)$ and $\delta + \tau > O(\log(n))$.
Then, with constant probability, 
\begin{eqnarray*}
\| X_{\tau} - \mathcal{X}_{\tau} O \|_F &\le& O \left( \frac{1}{\lambda_k} \sqrt{k \log(n)}{\delta+\tau} \right)  \\
\| X_{\tau}^{*} - \mathcal{X}_{\tau}^{*} O \|_F &\le& O \left( \frac{1}{\xi\lambda_k} \sqrt{k \log(n)}{\delta+\tau} \right)  ,
\end{eqnarray*}
where $O$ is a rotation matrix.
\end{theorem}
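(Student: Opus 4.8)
The plan is to treat Theorem~\ref{thm:eigenvectors} as a two-step matrix perturbation argument: first bound $\|X_{\tau} - \mathcal{X}_{\tau} O\|_F$ using a Davis--Kahan-type result together with the concentration bound $\|L_{\tau} - \mathcal{L}_{\tau}\| \le O(1)\sqrt{\log(n)/(\delta+\tau)}$ proved in the previous theorem, and then deduce the row-normalized bound by a short, elementary Lipschitz estimate for the map $v \mapsto v/\|v\|_2$ away from the origin.

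For the first step, I would begin by recording the spectral structure of the population Laplacian. Since $\mathcal{A} = \Theta Z B Z^T \Theta$ with $B \in \mathbb{R}^{k\times k}$, the matrix $\mathcal{A}$ has rank at most $k$, and hence so does $\mathcal{L}_{\tau} = \mathcal{D}_{\tau}^{-1/2}\mathcal{A}\mathcal{D}_{\tau}^{-1/2}$; in particular its $(k+1)$-st eigenvalue vanishes, so the eigengap separating the top-$k$ invariant subspace from the remainder of the spectrum is exactly $\lambda_k \defeq \lambda_k(\mathcal{L}_{\tau})$. The hypothesis $\sqrt{k\log(n)/(\delta+\tau)} \le O(\lambda_k)$ then guarantees that the perturbation $E \defeq L_{\tau} - \mathcal{L}_{\tau}$, whose norm is controlled by the previous theorem once $\delta + \tau > O(\log n)$, is at most a small constant fraction of this gap, which is the regime in which Davis--Kahan applies. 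Applying the $\sin\Theta$ theorem (the version stated earlier, extended from a single eigenvector to the $k$-dimensional invariant subspace) gives $\|\sin\Theta(X_{\tau},\mathcal{X}_{\tau})\|_2 \le 2\|E\|/\lambda_k$, and the standard fact that $\min_O \|X_{\tau} - \mathcal{X}_{\tau} O\|_F \le \sqrt{2}\,\|\sin\Theta(X_{\tau},\mathcal{X}_{\tau})\|_F \le \sqrt{2k}\,\|\sin\Theta(X_{\tau},\mathcal{X}_{\tau})\|_2$ --- with $O$ the orthogonal factor in the polar decomposition of $X_{\tau}^T\mathcal{X}_{\tau}$ --- converts this into $\|X_{\tau} - \mathcal{X}_{\tau} O\|_F \le O\!\left(\tfrac{1}{\lambda_k}\sqrt{\tfrac{k\log n}{\delta+\tau}}\right)$, the first claimed bound.

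For the second step, I would fix the same rotation $O$ and use the elementary inequality: if $a,b\in\mathbb{R}^k$ satisfy $\min\{\|a\|_2,\|b\|_2\}\ge \xi$, then $\big\|\tfrac{a}{\|a\|_2} - \tfrac{b}{\|b\|_2}\big\|_2 \le \tfrac{2}{\xi}\|a-b\|_2$. Since normalizing rows commutes with right multiplication by the orthogonal matrix $O$, I would apply this with $a$ the $i$-th row of $X_{\tau}$ and $b$ the $i$-th row of $\mathcal{X}_{\tau} O$ (so $\min\{\|a\|_2,\|b\|_2\}\ge\xi$ by definition of $\xi$), square, sum over $i$, and take square roots to obtain $\|X_{\tau}^{*} - \mathcal{X}_{\tau}^{*} O\|_F \le \tfrac{2}{\xi}\|X_{\tau} - \mathcal{X}_{\tau} O\|_F$; plugging in the first bound yields the second. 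A minor point to verify along the way is that $\xi > 0$, i.e., that no relevant row vanishes --- for $\mathcal{X}_{\tau}$ this follows from the DC-SBM block structure (rows in the same block are equal and nonzero), and for $X_{\tau}$ it is part of the definition of $\xi$.

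The main obstacle, I expect, is not the row-normalization step (which is routine once the first bound is in hand) but the subspace form of Davis--Kahan and the control of $\lambda_k$. Because eigenvalues within the top-$k$ block of $\mathcal{L}_{\tau}$ need not be separated from one another, one cannot argue eigenvector by eigenvector; one must work with the entire invariant subspace, which is why the rotation $O$ is unavoidable and why care is needed to state and apply the correct $(\sin\Theta,\text{Frobenius})$ form of the perturbation bound. In addition, the hypothesis $\sqrt{k\log n/(\delta+\tau)} \le O(\lambda_k)$ is only useful if $\lambda_k$ can actually be shown to be non-negligible in terms of the underlying DC-SBM parameters ($B$, the $\theta_i$, block sizes, and $\tau$); making that dependence explicit --- so the theorem is not vacuous in the sparse, degree-heterogeneous regime of interest --- is the genuinely delicate part of the argument.
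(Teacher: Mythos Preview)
The paper does not provide a proof of this theorem; it is stated without proof as a result quoted from Qin and Rohe. Your proposal follows exactly the approach those authors use: Davis--Kahan applied at the subspace level (with the eigengap equal to $\lambda_k$ since $\mathcal{L}_\tau$ has rank at most $k$), combined with the previous concentration bound on $\|L_\tau - \mathcal{L}_\tau\|$, yields the first inequality; the second then follows from the rowwise Lipschitz estimate for normalization onto the unit sphere, which introduces the factor $1/\xi$. Your identification of where the assumption $\sqrt{k\log(n)/(\delta+\tau)} \le O(\lambda_k)$ enters, and of the role of the rotation $O$, is also correct.
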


Note that the smallest leverage score enters the second expression but not the first expression.
That is, it does not enter the bounds on the empirical quantities, but it does enter into the bounds for the population quantities.

We can use these results to derive misclassification rate for RSC.
The basic idea for the misclassification rate is to run $k$-means on the rows of $X_{\tau}^{*}$ and also on the rows of $\mathcal{X}_{\tau}^{*}$.
Then, one can say that a node on the empirical data is clustered correctly if it is closer to the centroid of the corresponding cluster on the population data.
This basic idea needs to be modified to take into account the fact that if any $\lambda_i$ are equal, then only the subspace spanned by the eigenvectors is identifiable, so we consider this up to a rotation $O$.

\begin{definition}
If $C_i O$ is closer to $\mathcal{C}_i $ than any other $\mathcal{C}_j$, then we say that the node is correctly clustered; and we define the misclassified nodes to be 
\[
\mathcal{M} = \left\{ i : \exists j \ne i \mbox{ s.t. } \| C_i O^T -\mathcal{C}_i \|_2 > \| \mathcal{C}_i O^T - C_j \right\} .
\]
\end{definition}

Third, one can bound the misclassification rate of the RCS classifier with the following theorem.

\begin{theorem}
\label{thm:misclassification}
With constant probability, the misclassification rate is
\[
\frac{|\mathcal{M}|}{n} \le c \frac{ k \log(n) }{ n \xi^2 (\delta+\tau) \lambda_k^2  }.
\]
\end{theorem}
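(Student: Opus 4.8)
The plan is to derive the misclassification bound by combining the eigenvector perturbation bound of Theorem~\ref{thm:eigenvectors} with a geometric argument about $k$-means in the low-dimensional space. First I would recall that, under the DC-SBM with the identifiability normalization, the rows of the population matrix $\mathcal{X}_{\tau}^{*}$ take only $k$ distinct values, one per block, and moreover that these $k$ distinct ``centroid'' rows $\mathcal{C}_1,\ldots,\mathcal{C}_k$ are \emph{well-separated} from each other --- the pairwise distances are bounded below by a quantity controlled by $\xi$ and the block structure. This step is essentially linear algebra on $\mathcal{L}_{\tau} = \mathcal{D}_{\tau}^{-1/2}\Theta Z B Z^T \Theta \mathcal{D}_{\tau}^{-1/2}$: since this matrix has rank at most $k$ and a block structure inherited from $Z$, its leading eigenspace is spanned by vectors that are constant-up-to-scaling on blocks, and after row-normalization two nodes in different blocks map to genuinely different points on the unit sphere. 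I would make the separation quantitative in terms of $\lambda_k$ and $\xi$.

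Next I would argue that a node $i$ is correctly classified (in the sense of Definition of $\mathcal{M}$) whenever the empirical row $X_{\tau}^{*,i}$, after applying the rotation $O$, is within half the minimum centroid separation of its true centroid $\mathcal{C}_{z_i}$. Thus the number of misclassified nodes is at most the number of rows $i$ for which $\|X_{\tau}^{*,i}O^T - \mathcal{X}_{\tau}^{*,i}\|_2$ exceeds (a constant times) the separation distance. By Markov's inequality applied to the squared Frobenius bound of Theorem~\ref{thm:eigenvectors}, the number of such ``large-deviation rows'' is at most
\[
|\mathcal{M}| \le \frac{\|X_{\tau}^{*} - \mathcal{X}_{\tau}^{*}O\|_F^2}{(\text{const}\cdot\text{separation})^2} \le \frac{c}{\text{separation}^2}\cdot\frac{1}{\xi^2\lambda_k^2}\cdot\frac{k\log n}{\delta+\tau}.
\]
Plugging in the lower bound on the separation (which, after the block-structure analysis, contributes the factor of $n$ in the denominator --- intuitively because the $k$ centroids must share the sphere and a typical block has $\Theta(n/k)$ nodes, forcing the squared centroid norm scale to be $\Theta(k/n)$, hence a $1/n$ factor emerges from normalization) gives the claimed bound $|\mathcal{M}|/n \le c\,k\log n/(n\xi^2(\delta+\tau)\lambda_k^2)$. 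I would also need to absorb the additional error from the $k$-means rounding step itself --- i.e., that running $k$-means on the perturbed rows yields centroids $C_i$ close to the rotated population centroids --- which is a standard argument: a constant-factor-optimal (or exact) $k$-means solution on points that are mostly clustered tightly around $k$ well-separated centers must recover those centers up to an error comparable to the total perturbation mass.

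The main obstacle I anticipate is the $k$-means rounding analysis and the careful tracking of how the centroid separation depends on $\xi$, $\lambda_k$, and the block sizes. The perturbation bound from Theorem~\ref{thm:eigenvectors} is a global Frobenius-norm statement, but misclassification is a \emph{per-row} event, so one must handle the possibility that the perturbation mass concentrates on a few rows (fine --- those become misclassified and are counted) versus spreading out (also fine). The subtle point is that $k$-means is not the identity map: its output centroids depend on \emph{all} the data, so one must show the empirical centroids don't drift far, which requires knowing that a majority of each block's rows are close to the true centroid --- a mild circularity that is broken by first establishing the bound assuming an oracle clustering and then bootstrapping. A secondary technical nuisance is that Theorem~\ref{thm:eigenvectors} requires the conditions $\sqrt{k\log n/(\delta+\tau)} \le O(\lambda_k)$ and $\delta+\tau > O(\log n)$, so the final theorem inherits these as implicit hypotheses, and I would state them explicitly. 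Modulo these, the proof is a fairly mechanical combination of the spectral perturbation input with the deterministic geometry of the population eigenvectors.
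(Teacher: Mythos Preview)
The paper does not actually prove this theorem. It is stated as a result taken from Qin and Rohe's paper ``Regularized Spectral Clustering under the Degree-Corrected Stochastic Blockmodel,'' and the surrounding text only comments on the role of the smallest leverage score $\xi$ in the bound. So there is no proof in the paper to compare against.

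That said, your outline is the standard argument and is essentially how Qin and Rohe prove it in the original source: establish that the row-normalized population eigenvectors $\mathcal{X}_\tau^*$ take exactly $k$ distinct values with a quantifiable separation, invoke the Frobenius perturbation bound from Theorem~\ref{thm:eigenvectors}, and then count via a Markov-type inequality the number of rows whose deviation can exceed half the separation. Your identification of the two genuine technical points---the $k$-means centroid stability (that the empirical centroids are close to the rotated population ones) and the precise dependence of the centroid separation on the block structure---is accurate; these are exactly the lemmas Qin and Rohe spend effort on. One small correction: the factor of $n$ in the denominator does not arise from the centroid separation being of order $1/\sqrt{n}$; rather, after row normalization the population centroids are unit vectors and the separation is bounded below by a constant (in fact the distinct rows are orthogonal), so the Markov count gives $|\mathcal{M}| \le O(1) \cdot \|X_\tau^* - \mathcal{X}_\tau^* O\|_F^2$ directly, and dividing by $n$ yields the rate. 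Otherwise your plan is sound.
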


Here too the smallest leverage score determines the overall quality.

\textbf{Remark.}
This is the first result that explicitly relates leverage scores to the statistical performance of a spectral clustering algorithm.
This is a large topic, but to get a slightly better sense of it, recall that the leverage scores of $\mathcal{L}_{\tau}$ are $\| \mathcal{X}_{\tau}^{i} \|_2^2 = \frac{ \theta_i^{\tau} }{ \sum_j \theta_j^{\tau} \delta_{z_jz_i} }$.
So, in particular, if a node $i$ has a small expected degree, then $\theta_i^{\tau}$ is small and $\| \mathcal{X}_{\tau}^{i} \|_2 $ is small.
Since $\xi$ appears in the denominator of the above theorems, this leads to a worse bound for the statistical claims in these theorems.
In particular, the problem arises due to projecting $X_{\tau}^{i}$ onto the unit sphere, i.e., while large-leverage nodes don't cause a problem, errors for small-leverage rows can be amplified---this didn't arise when we were just making claims about the empirical data, e.g., the first claim of Theorem~\ref{thm:eigenvectors}, but when considering statistical performance, e.g., the second claim of Theorem~\ref{thm:eigenvectors} or the claim of Theorem~\ref{thm:misclassification}, for nodes with small leverage score it amplifies noisy measurements.

\newpage

\section{%
(04/23/2015):
Laplacian solvers (1 of 2)}

Reading for today.
\begin{compactitem}
\item
``Effective Resistances, Statistical Leverage, and Applications to Linear Equation Solving,'' in arXiv, by Drineas and Mahoney
\item
``A fast solver for a class of linear systems,'' in CACM, by Koutis, Miller, and Peng
\item
``Spectral Sparsification of Graphs: Theory and Algorithms,'' in CACM, by Batson, Spielman, Srivastava, and Teng 
\end{compactitem}
 
(Note: the lecture notes for this class and the next are taken from the lecture notes for the final two classes of the class I taught on Randomized Linear Algebra in Fall 2013.)

\subsection{Overview}

We have seen problems that can be written in the form of a system of linear equations with Laplacian constraint matrices, i.e.,  
\[
Lx = b .
\]
For example, we saw this with the various semi-supervised learning methods as well as with the MOV weakly-local spectral method.
In some cases, this arises in slightly modified form, e.g., as an augmented/modified graph and/or if there are additional projections (e.g., the Zhou et al paper on ``Learning with labeled and unlabeled data on a directed graph,'' that is related to the other semi-supervised methods we discussed, does this explicitly).
Today and next time we will discuss how to solve linear equations of this form.

\subsection{Basic statement and outline}

While perhaps not obvious, solving linear equations of this form is a useful \emph{algorithmic primitive}---like divide-and-conquer and other such primitives---much more generally, and thus there has been a lot of work on it in recent years.

Here is a more precise statement of the use of this problem as a primitive.
\begin{definition}
The \emph{Laplacian Primitive} concerns systems of linear equations defined by Laplacian constraint matrices:
\begin{itemize}
\item
\textsc{INPUT}: a Laplacian $L \in \mathbb{R}^{n \times n}$, a vector $b\in\mathbb{R}^{n}$ such that $\sum_{i=1}^{n} b_i = 0$, and a number $\epsilon>0$.
\item
\textsc{OUTPUT}: a vector $\tilde{x}_{opt} \in \mathbb{R}^{n}$ such that $\|\tilde{x}_{opt} - L^{\dagger} b \|_{L} \le \epsilon \| L^{\dagger} b\|_{L}$, where for a vector $z\in\mathbb{R}^{n}$ the $L$-norm is given by $\|z\|_L=\sqrt{z^TLz}$.
\end{itemize}
\end{definition}

While we will focus on linear equations with Laplacian constraint matrices, most of the results in this area hold for a slightly broader class of problems.
In particular, they hold for any linear system $Ax=b$, where $A$ is an SDD (symmetric diagonally dominant) matrix (i.e., that the diagonal entry of each row is larger, or not smaller, than the sum of the absolute values of the off-diagonal entries in that row).
The reason for this is that SDD systems are linear-time reducible to Laplacian linear systems via a construction that only doubles the number of nonzero entries in the matrix.

As mentioned, the main reason for the interest in this topic is that, given a fast, e.g., nearly linear time algorithm, for the Laplacian Primitive, defined above, one can obtain a fast algorithm for all sorts of other basic graph problems.
Here are several examples of such problems.
\begin{itemize}
\item
Approximate Fiedler vectors.
\item
Electrical flows.
\item
Effective resistance computations.
\item
Semi-supervised learning for labeled data.
\item
Cover time of random walks.
\item
Max flow and min cut and other combinatorial problems.
\end{itemize}
Some of these problems we have discussed.
While it might not be surprising that problems like effective resistance 
computations and semi-supervised learning for labeled data can be solved with this primitive, it should be surprising that max flow and min cut and other combinatorial problems can be solved with this primitive.
We won't have time to discuss this in detail, but some of the theoretically fastest algorithms for these problems are based on using this primitive.

Here is a statement of the basic result that led to interest in this area.
\begin{theorem}[ST]
There is a randomized algorithm for the Laplacian Primitive that runs in expected time $O\left( m \log^{O(1)} (n) \log \left(1/\epsilon\right)  \right)$, where $n$ is the number of nodes in $L$, $m$ is the number of nonzero entries in $L$, and $\epsilon$ is the precision parameter.
\end{theorem}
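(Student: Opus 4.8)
The plan is to realize the Laplacian Primitive through a \emph{preconditioned iterative method}, reducing the whole problem to the construction of a single good preconditioner that can itself be applied quickly. Recall that for symmetric PSD matrices $L$ and $B$ with the same null space $\mathrm{span}(\vec 1)$, preconditioned Chebyshev iteration applied to $Lx=b$ produces, after $k$ iterations, an iterate $x_k$ with $\|x_k - L^{\dagger}b\|_L \le 2\left(\frac{\sqrt{\kappa}-1}{\sqrt{\kappa}+1}\right)^{k}\|x_0 - L^{\dagger}b\|_L$, where $\kappa = \kappa(B^{\dagger}L)$ is the relative condition number on the orthogonal complement of the null space; hence $O(\sqrt{\kappa}\,\log(1/\epsilon))$ iterations suffice for the stated accuracy. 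Each iteration costs one multiplication by $L$, i.e.\ $O(m)$, plus one ``solve'' $z \mapsto B^{\dagger}z$. So it suffices to exhibit, in $O(m\log^{O(1)}n)$ time, a preconditioner $B$ with $\kappa(B^{\dagger}L)=\log^{O(1)}n$ whose pseudoinverse can be applied in $O(m\log^{O(1)}n)$ time --- the latter done recursively, which forces us to carry the Chebyshev analysis with \emph{inexact} inner solves and track error accumulation (a standard but delicate modification).

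First I would import two ingredients. (i) \textbf{Low-stretch spanning trees}: every weighted graph has a spanning tree $T$ with total stretch $\sum_{(u,v)\in E}\mathrm{st}_T(u,v) = O(m\log n\log\log n)$, computable in nearly linear time, where $\mathrm{st}_T(u,v)$ is the resistance-length of the $u$--$v$ path in $T$ divided by the resistance of $(u,v)$; trivially $L_G \succeq L_T$, and the stretch controls how well $T$ approximates $G$ after a few edges are added back. (ii) \textbf{Nearly-linear-time spectral sparsification}: given $G$ and $\epsilon$, one can produce in $O(m\log^{O(1)}n)$ time a reweighted subgraph $H$ with $O(n\log^{O(1)}n/\epsilon^2)$ edges and $(1-\epsilon)L_H \preceq L_G \preceq (1+\epsilon)L_H$. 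I would obtain (ii) using the nearly-linear-time \emph{local graph partitioning} primitive developed earlier in these notes (the ST/ACL local clustering routine with its Cheeger-type guarantee): repeatedly peel off low-conductance pieces, sample edges inside well-connected pieces at rate proportional to an estimate of their effective resistance, and recurse; alternatively, once any crude solver exists one can cite effective-resistance sampling directly.

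Next I would assemble the preconditioner and set up a recursion. Fix a scale $t$. From a low-stretch tree $T$, build $B$ by adding back a random sample of off-tree edges, each $(u,v)$ retained with probability proportional to $\mathrm{st}_T(u,v)$ (suitably reweighted); a matrix-concentration / support-theoretic argument shows that with the right intensity $B$ is an \emph{ultrasparsifier}: a graph with $n-1+O\!\big(\tfrac{m\log^{O(1)}n}{t}\big)$ edges and $L_B \preceq L_G \preceq t\,L_B$, so $\kappa(B^{\dagger}L)\le t$ and only $O(\sqrt t\,\log(1/\epsilon))$ outer iterations are needed. To apply $B^{\dagger}$ fast, perform \emph{partial Cholesky/Gaussian elimination}: greedily eliminate degree-$1$ and degree-$2$ vertices --- of which $B$, being a tree plus few extra edges, has $\Omega(n)$ --- which preserves Laplacian structure on the Schur complement while shrinking the vertex count by a constant factor and keeping the edge count $O(n+m/t)$. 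Then re-sparsify the Schur complement and recurse, producing a \emph{preconditioner chain} $G=G_0,G_1,G_2,\dots$ of geometrically decreasing size; a solve in $G_i$ runs preconditioned Chebyshev with preconditioner $G_{i+1}$, whose applications recurse, bottoming out at a tiny system solved directly.

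The main obstacle --- and where the polylog exponents are won or lost --- is the \textbf{analysis of this recursion}: choosing the per-level parameters $t_i$ so that the \emph{product} over levels of the iteration counts times the per-level work telescopes to $O(m\log^{O(1)}n)$, while the accumulated error from all inexact inner solves still yields a final guarantee of $\epsilon\|L^{\dagger}b\|_L$. This needs a careful time recurrence $T(G_i)\le O(\sqrt{t_i}\log(1/\epsilon_i))\cdot(O(m_i)+T(G_{i+1}))$ with $m_{i+1}=O(n_i+m_i/t_i)$, a matching recursion on the accuracies $\epsilon_i$, a proof that inexact preconditioning inflates the effective condition number only mildly (so Chebyshev still converges at the nominal rate), and the structural guarantees that elimination-plus-resparsification genuinely shrinks sizes geometrically. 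Secondary obstacles are the nearly-linear-time sparsification itself (the hardest graph-theoretic piece, resting on local partitioning) and handling the null space and the constraint $\sum_i b_i=0$ cleanly throughout; both are by now standard, so I would package them as lemmas and cite the low-stretch-tree and sparsification constructions rather than reprove them.
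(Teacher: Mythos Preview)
Your proposal is correct and follows essentially the same approach the paper sketches in its two Laplacian-solver lectures: a preconditioned iterative method (Chebyshev), preconditioners built from low-stretch spanning trees with sampled off-tree edges (ultrasparsifiers), partial Gaussian elimination of low-degree vertices to shrink the system, and a recursive preconditioner chain whose per-level work telescopes. The paper, like you, treats the sparsification and low-stretch-tree constructions as imported lemmas and identifies the recursion/bookkeeping as the place where the polylog exponent is determined; your account is if anything more explicit than the paper's about the inexact-inner-solve error propagation and the time recurrence.
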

Although the basic algorithm of ST had something like the $50^{th}$ power in the exponent of the logarithm, it was a substantial theoretical breakthrough, and since then it has been improved by KMP to only a single log, leading to algorithms that are practical or almost practical.
Also, although we won't discuss it in detail, many of the local and locally-biased spectral methods we have discussed arose out of this line of work in an effort to develop and/or improve this basic result.

At a high level, the basic algorithm is as follows.
\begin{enumerate}
\item
Compute a sketch of the input by sparsifying the input graph.
\item
Use the sketch to construct a solution, e.g., by solving the subproblem with any black box solver or by using the sketch as a preconditioner for an iterative algorithm on the original~problem.
\end{enumerate}
Thus, the basic idea of these methods is very simple; but to get the methods to work in the allotted time, and in particular to work in nearly-linear time, is very complicated.

Today and next time, we will discuss these methods, including a simple but slow method in more detail and a fast but complicated method in less detail.
\begin{itemize}
\item
\textbf{Today.} 
We will describe a simple, non-iterative, but slow algorithm.
This algorithm provides a very simple version of the two steps of the basic algorithm described above; and, while slow, this algorithm highlights several basic ideas of the more sophisticated versions of these~methods.
\item
\textbf{Next time.}
We will describe a fast algorithm provides a much more sophisticated implementation of the two steps of this basic algorithm.
Importantly, it makes nontrivial use of combinatorial ideas and couples the linear algebra with combinatorial preconditioning in interesting~ways.
\end{itemize}

\subsection{A simple slow algorithm that highlights the basic ideas}

Here, we describe in more detail a very simple algorithm to solve Laplacian-based linear systems.
It will be good to understand before we get to the fast but more complicated versions of the algorithm.

Recall that $L = D-W = B^TWB$ is our Laplacian, where $B$ is the $m \times n$ edge-incidence matrix, and where $W$ is an $m \times m$ edge weight matrix.
In particular, note that $m > n$ (assume the graph is connected to avoid trivial cases), and so the matrix $B$ is a \emph{tall} matrix.

Here is a restatement of the above problem.
\begin{definition}
Given as input a Laplacian matrix $L \in \mathbb{R}^{n \times n}$, a vector $b \in \mathbb{R}^{n}$, compute 
\[
\mbox{argmin}_{x\in\mathbb{R}^{n}} \| Lx - b \|_2  .
\]
The minimal $\ell_2$ norm $x_{opt}$ is given by $x_{opt} = L^{\dagger}b$, where $L^{\dagger}$ is the Moore-Penrose generalized inverse of $L$.
\end{definition}
We have reformulated this as a regression since it makes the proof below, which is based on RLA (Randomized Linear Algebra) methods, cleaner.

The reader familiar with linear algebra might be concerned about the Moore-Penrose generalized inverse since, e.g., it is typically not well-behaved with respect to perturbations in the data matrix.
Here, the situation is particularly simple: although $L$ is rank-deficient, (1) it is invertible if we work with vectors $b\perp\vec{1}$, and (2) because this null space is particular simple, the pathologies that typically arise with the Moore-Penrose generalized inverse do \emph{not} arise here.
So, it isn't too far off to think of this as the inverse.  

Here is a simple algorithm to solve this problem.
This algorithm takes as input $L$, $b$, and $\epsilon$; and it returns as output a vector $\tilde{x}_{opt}$.
\begin{enumerate}
\item
Form $B$ and $W$, define $\Phi = W^{1/2}B \in \mathbb{R}^{m \times n}$, let $U_{\Phi}\in\mathbb{R}^{m \times n}$ be an orthogonal matrix spanning the column space of $\Phi$, and let $\left( U_{\Phi} \right)_{(i)}$ denote the $i^{th}$ row of $U_{\Phi}$.
\item
Let $p_i$, for $i \in [n]$ such that $\sum_{i=1}^{n}p_i=1$ be given by 
\begin{equation}
\label{eqn:lev-scores}
p_i \ge \beta\frac{ \| \left( U_{\Phi} \right)_{(i)}\|_2^2  }{ \| U_{\Phi} \|_F^2 } = \frac{\beta}{n} \| \left( U_{\Phi} \right)_{(i)} \|_2^2
\end{equation}
for some value of $\beta\in(0,1]$.
(Think of $\beta=1$, which is a legitimate choice, but the additional flexibility of allowing $\beta\in(0,1)$ will be important in the next class.)
\end{enumerate}

A key aspect of this algorithm is that the sketch is formed by choosing elements of the Laplacian with the probabilities in Eqn.~(\ref{eqn:lev-scores}); these quantities are known as the statistical leverage scores, and they are of central interest in RLA.
Here is a definition of these scores more generally.

\begin{definition}
Given a matrix $A\in\mathbb{R}^{m \times n}$, where $m > n$, the $i^{th}$ leverage score is
\[
\left(P_{A}\right)_{ii} = \left(U_AU_A^T\right)_{ii} = \| \left( U_A \right)_{ii} \|_2^2 ,
\]
i.e., it is equal to the diagonal element of the projection matrix onto the column span of $A$.
\end{definition}

Here is a definition of a seemingly-unrelated notion that we talked about before.

\begin{definition}
Given $G=(V,E)$, a connected, weighted, undirected graph with $n$ nodes, $m$ edges, and corresponding weights $w_e \ge 0$, for all $e \in E$, let $L=B^TWB$.
Then, the effective resistance $R_{e}$ across edge $e \in E$ are given by the diagonal elements of the matrix $R=BL^{\dagger}B$.
\end{definition}

Here is a lemma relating these two quantities.

\begin{lemma}
Let $\Phi = W^{1/2} B$ denote the scaled edge-incidence matrix.
If $\ell_i$ is the leverage score of the $i^{th}$ row of $\Phi$, then $\frac{\ell_i}{w_i}$ is the effective resistance of the $i^{th}$ edge.
\end{lemma}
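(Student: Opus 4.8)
The plan is to unwind the definitions and show that the two quantities are literally the same up to the stated scaling. Recall $\Phi = W^{1/2}B \in \mathbb{R}^{m \times n}$, so $\Phi^T\Phi = B^TWB = L$. The $i$-th leverage score of $\Phi$ is $\ell_i = (P_\Phi)_{ii} = (\Phi \Phi^\dagger)_{ii}$, and since $\Phi$ has full column rank (modulo the all-ones null space, which I would handle by restricting to the orthogonal complement, exactly as noted for $L$ above), we have $\Phi^\dagger = (\Phi^T\Phi)^\dagger \Phi^T = L^\dagger \Phi^T$. Therefore $P_\Phi = \Phi L^\dagger \Phi^T = W^{1/2} B L^\dagger B^T W^{1/2}$, and reading off the diagonal entry gives $\ell_i = (W^{1/2} B L^\dagger B^T W^{1/2})_{ii} = w_i \, (B L^\dagger B^T)_{ii} = w_i R_i$, where $R_i$ is the $i$-th diagonal entry of $R = B L^\dagger B^T$, i.e.\ the effective resistance across edge $i$. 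Rearranging yields $R_i = \ell_i / w_i$, which is the claim.

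First I would make explicit the identity $\Phi^\dagger = L^\dagger \Phi^T$, since this is the one place where a small amount of care is needed: it follows from the general fact that for a matrix $\Phi$ with (column) rank equal to its number of columns minus the dimension of a fixed known null space, $(\Phi^T\Phi)^\dagger\Phi^T$ acts as the pseudoinverse on the relevant subspace. As the excerpt already emphasized when discussing $L^\dagger$, the null space here is just $\mathrm{span}(\vec{1})$ and is harmless, so the standard thin-SVD argument ($\Phi = U_\Phi \Sigma V^T$, $\Phi^\dagger = V\Sigma^{-1}U_\Phi^T$, $L^\dagger = V\Sigma^{-2}V^T$, hence $L^\dagger\Phi^T = V\Sigma^{-1}U_\Phi^T = \Phi^\dagger$) goes through verbatim.

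After that, the remaining steps are purely mechanical: substitute $\Phi = W^{1/2}B$ into $P_\Phi = \Phi\Phi^\dagger = \Phi L^\dagger \Phi^T$, pull the diagonal scalar $w_i^{1/2}$ out of both sides of the $ii$ entry (using that $W$ is diagonal), and identify $(BL^\dagger B^T)_{ii}$ with the effective resistance $R_i$ by the definition of $R$ given just above. I do not anticipate a genuine obstacle here — the ``hard part,'' such as it is, is simply being careful that the pseudoinverse manipulation $\Phi^\dagger = L^\dagger\Phi^T$ is legitimate despite the rank deficiency, and making clear that $(P_\Phi)_{ii} = \|(U_\Phi)_{(i)}\|_2^2$ is exactly the leverage-score quantity appearing in the sampling probabilities of Eqn.~(\ref{eqn:lev-scores}). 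Once those identifications are in place the equality $\ell_i = w_i R_i$ is immediate.
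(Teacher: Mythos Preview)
Your proof is correct and follows essentially the same route as the paper: both arguments identify the projection matrix onto the column span of $\Phi$ with $W^{1/2}BL^{\dagger}B^TW^{1/2}$ (the paper writes this as $\Phi(\Phi^T\Phi)^{\dagger}\Phi^T$ and then passes to $U_{\Phi}U_{\Phi}^T$, you go via $\Phi^{\dagger}=L^{\dagger}\Phi^T$, but these are the same computation), and then read off the $i$-th diagonal entry to get $\ell_i = w_i R_i$. Your extra care with the pseudoinverse identity in the presence of the $\vec{1}$ null space is warranted and handled correctly.
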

\begin{proof}
Consider the matrix 
\[
P = W^{1/2}B \left( B^T W B \right)^{\dagger} B^T W^{1/2} \in \mathbb{R}^{m \times m} ,
\]
and notice that $P = W^{1/2}R W^{1/2}$ is a rescaled version of $R = B L^{\dagger} B$, whose diagonal elements are the effective resistances.
Since $\Phi = W^{1/2}B$, it follows that 
\[
P = \Phi \left( \Phi^T \Phi \right)^{\dagger} \Phi^T .
\]
Let $U_{\Phi}$ be an orthogonal matrix spanning the columns of $\Phi$.
Then, $P=U_{\Phi}U_{\Phi}^T$, and so 
\[
P_{ii} = \left( U_{\Phi} U_{\Phi}^T \right)_{ii} = \| \left( U_{\Phi} \right)_{(i)} \|_2^2 ,
\]
which establishes the lemma.
\end{proof}

So, informally, we sparsify the graph by biasing our random sampling toward edges that are ``important'' or ``influential'' in the sense that they have large statistical leverage or effective resistance, and then we use the sparsified graph to solve the subproblem.

Here is the main theorem for this algorithm.
\begin{theorem}
With constant probability, $\| x_{opt} - \tilde{x}_{opt} \|_L \le \epsilon \| x_{opt} \|_L$.
\end{theorem}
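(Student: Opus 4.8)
The plan is to reduce the Laplacian solve to a randomized least-squares/sketching argument. The key observation is that solving $Lx = b$ (with $b \perp \vec{1}$) is equivalent to the regression problem $\min_x \|\Phi x - c\|_2$ for an appropriate $c$, where $\Phi = W^{1/2}B$ is the scaled edge-incidence matrix, since $L = \Phi^T\Phi$. First I would set up the correspondence: writing $\Phi = U_\Phi \Sigma_\Phi V_\Phi^T$ in SVD form, the exact solution $x_{opt} = L^\dagger b$ lives in the column span of $V_\Phi$, and the $L$-norm error $\|x_{opt} - \tilde x_{opt}\|_L^2 = (x_{opt}-\tilde x_{opt})^T L (x_{opt}-\tilde x_{opt}) = \|\Phi(x_{opt}-\tilde x_{opt})\|_2^2$ becomes an ordinary Euclidean norm in the ``$\Phi$-space.'' This is exactly why the problem was reformulated as a regression: the $L$-norm guarantee we want is a statement about how well a sketched solution approximates a least-squares solution in the standard $\ell_2$ sense.

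The heart of the argument is a subspace-embedding / approximate-matrix-multiplication result from RLA: if we sample rows of $\Phi$ (equivalently, edges of $G$) with probabilities $p_i$ proportional to the leverage scores $\|(U_\Phi)_{(i)}\|_2^2$ — up to the slack factor $\beta$ — and keep $r = O(n \log n / (\beta\epsilon^2))$ of them (rescaled appropriately), then the resulting sampling-and-rescaling matrix $S$ satisfies $\|U_\Phi^T S^T S U_\Phi - I\|_2 \le \epsilon$ with constant probability. I would invoke this as a known RLA primitive (it follows from a matrix Chernoff / Rudelson–Vershynin-type bound applied to the rank-one terms $\frac{1}{p_i}(U_\Phi)_{(i)}^T(U_\Phi)_{(i)}$; the leverage-score sampling probabilities are precisely what makes the variance of these terms controllable). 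The connection to the earlier material is the lemma just proved: leverage scores of $\Phi$ equal $w_i$ times effective resistances, so this sampling scheme is exactly effective-resistance sampling, and the sketch $\tilde L = \Phi^T S^T S \Phi$ is a spectral sparsifier of $L$ in the sense that $(1-\epsilon) L \preceq \tilde L \preceq (1+\epsilon) L$.

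Given the spectral-sparsifier guarantee, I would finish as follows. Define $\tilde x_{opt}$ to be the solution of the sparsified system (solved exactly by a black-box solver, or to sufficient accuracy). Then $\tilde x_{opt}$ minimizes $\|S\Phi x - Sc\|_2$, and a standard structural lemma for sketched least-squares says that any minimizer of the sketched problem is, in the original $\ell_2$ metric, within $1 + O(\epsilon)$ of the true optimum, provided the sketch is a subspace embedding for the column space of $[\Phi, c]$ (or, here, for $\mathrm{span}(U_\Phi)$, since $c$ lies in that span up to the residual, which is zero for a consistent system). Translating back through the identity $\|\cdot\|_L = \|\Phi(\cdot)\|_2$ on the relevant subspace gives $\|x_{opt} - \tilde x_{opt}\|_L \le \epsilon \|x_{opt}\|_L$ after rescaling $\epsilon$ by a constant. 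I would also note the failure-probability amplification is routine (repeat $O(\log(1/\delta))$ times and take the best, measuring residuals in the $L$-norm).

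The main obstacle I anticipate is not any single estimate but making the $L$-norm bookkeeping precise in the presence of the null space: $L$ is singular (kernel $= \mathrm{span}(\vec 1)$), $\Phi$ has a nontrivial null space, and the generalized inverse $L^\dagger$ must be handled carefully so that the identity $\|z\|_L^2 = \|\Phi z\|_2^2$ and the reduction of the sketched-regression guarantee to the $L$-norm guarantee both go through on the correct subspace (vectors orthogonal to $\vec 1$). Relatedly, one must check that the sketch preserves the column span of $\Phi$ (so that $\tilde L$ has the same null space as $L$ and the sketched system is consistent) — this is where connectivity of $G$ and the subspace-embedding property are both used. The concentration bound itself (matrix Chernoff with leverage-score sampling) I would cite rather than reprove, since the excerpt positions it as a known RLA tool; the work is in the structural reduction, not the tail bound.
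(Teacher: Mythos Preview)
Your overall strategy matches the paper's: convert the $L$-norm into $\|\Phi(\cdot)\|_2$ via $L = \Phi^T\Phi$, invoke the RLA subspace-embedding guarantee $\|U_\Phi^T S^T S U_\Phi - I\|_2 \le \epsilon$ from leverage-score sampling (which the paper also cites as a Rudelson--Vershynin / matrix-concentration bound), and deduce the $L$-norm error bound. You also correctly flag the null-space and rank-preservation bookkeeping as the place requiring care, and the paper indeed uses the subspace embedding precisely to ensure $SU_\Phi$ has the same rank as $U_\Phi$.

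There is one genuine gap in your finishing step. You claim that $\tilde{x}_{opt}$ ``minimizes $\|S\Phi x - Sc\|_2$'' and then appeal to a standard sketched-least-squares structural lemma. But that is not what the algorithm computes: $\tilde{x}_{opt} = \tilde{L}^\dagger b = ((S\Phi)^T(S\Phi))^\dagger b$ solves $(S\Phi)^T(S\Phi)x = b$, \emph{not} $(S\Phi)^T(S\Phi)x = (S\Phi)^T Sc = \Phi^T S^T S c$. Only the constraint matrix is sketched; the right-hand side $b$ is used unchanged. For your framing to work you would need $\Phi^T S^T S c = b$, whereas $c$ was chosen so that $\Phi^T c = b$, and $S^T S \ne I$. (Indeed, with $c$ in the range of $U_\Phi$ and zero residual, the sketched regression $\min_x\|S\Phi x - Sc\|_2$ also has zero residual at $x_{opt}$ itself, which tells you nothing about $\tilde{L}^\dagger b$.) So the off-the-shelf sketched-regression lemma does not apply.

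The fix is the other route you already mention but do not carry through: from the subspace embedding you get $(1-\epsilon)L \preceq \tilde{L} \preceq (1+\epsilon)L$ on $\vec{1}^\perp$, and a direct computation (e.g.\ write $\tilde{L} = L^{1/2}(I+E)L^{1/2}$ with $\|E\|_2\le\epsilon$, so $\tilde{L}^\dagger = L^{\dagger/2}(I+E)^{-1}L^{\dagger/2}$) yields
\[
\|x_{opt}-\tilde{x}_{opt}\|_L = \|(I-(I+E)^{-1})L^{\dagger/2}b\|_2 \le O(\epsilon)\,\|L^{\dagger/2}b\|_2 = O(\epsilon)\,\|x_{opt}\|_L .
\]
The paper does essentially this, but by tracking the SVD of $\Omega = SU_\Phi$ explicitly: it writes $\Sigma_\Omega^{-2} = I + E$, bounds $\|E\|_2$ via the singular-value perturbation $|\sigma_i^2(SU_\Phi)-1|\le\sqrt{\epsilon}$, and expands $\|\Phi(x_{opt}-\tilde{x}_{opt})\|_2$ directly in the SVD basis. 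Same content, different bookkeeping; your spectral-sparsifier route is arguably cleaner once the incorrect regression detour is removed.
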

\begin{proof}
The main idea of the proof is that we are forming a sketch of the Laplacian by randomly sampling elements, which corresponds to randomly sampling rows of the edge-incidence matrix, and that we need to ensure that the corresponding sketch of the edge-incidence matrix is a so-called subspace-preserving embedding.
If that holds, then the eigenvalues of the edge-incidence matrix and it's sketch are close, and thus the eigenvalues of the Laplacian are close, and thus the original Laplacian and the sparsified Laplacian are ``close,'' in the sense that the quadratic form of one is close to the quadratic form of the other.

Here are the details.

By definition, 
\[
\| x_{opt} - \tilde{x}_{opt} \|_L^2 
   = \left(x_{opt}-\tilde{x}_{opt} \right)^{T} L \left(x_{opt}-\tilde{x}_{opt} \right) .
\]
Recall that $L = B^TWB$, that $x_{opt} = L^{\dagger}b$, and that $\tilde{x}_{opt} = \tilde{L}^{\dagger}b$.
So, 
\begin{eqnarray*}
\| x_{opt} - \tilde{x}_{opt} \|_L^2 
  &=& \left(x_{opt}-\tilde{x}_{opt} \right)^{T} B^TWB \left(x_{opt}-\tilde{x}_{opt} \right)  \\
  &=& \| W^{1/2} B \left( x_{opt}-\tilde{x}_{opt} \right) \|_2^2 
\end{eqnarray*}
Let $\Phi \in \mathbb{R}^{m \times n}$ be defined as $\Phi = W^{1/2}B$, and let its SVD be $\Phi = U_{\Phi} \Sigma_{\Phi} V_{\Phi}^T$.
Then
\[
L = \Phi^T\Phi = V_{\Phi} \Sigma_{\Phi}^{2} V_{\Phi}^T
\] 
and \[
x_{opt} = L^{\dagger} b = V_{\Phi} \Sigma_{\Phi}^{-2} V_{\Phi}^T b .
\]
In addition
\[
\tilde{L} = \Phi^T S^T S \Phi = \left( S\Phi \right)^{T} \left( S\Phi \right) 
\]
and also
\[
\tilde{x}_{opt} 
   = \tilde{L}^{\dagger}b 
   = \left( S\Phi \right)^{\dagger} \left( S\Phi \right)^{T\dagger} b 
   = \left( SU_{\Phi}\Sigma_{\Phi}V_{\Phi}^{T}\right)^{\dagger} \left( SU_{\Phi}\Sigma_{\Phi}V_{\Phi}^{T}\right)^{T\dagger} b  
\]
By combining these expressions, we get that 
\begin{eqnarray*}
\| x_{opt} - \tilde{x}_{opt} \|_L^2
  &=& \| \Phi \left( x_{opt} - \tilde{x}_{opt} \right) \|_2^2 \\
  &=& \| U_{\Phi}\Sigma_{\Phi}V_{\Phi}^{T} \left( V_{\Phi} \Sigma_{\Phi}^{-2} V_{\Phi}^T - \left( SU_{\Phi}\Sigma_{\Phi}V_{\Phi}^{T}\right)^{\dagger} \left( SU_{\Phi}\Sigma_{\Phi}V_{\Phi}^{T}\right)^{T\dagger} \right) b \|_2^2 \\
  &=& \| \Sigma_{\Phi}^{-1} V_{\Phi}^T b - \Sigma_{\Phi} \left( SU_{\Phi}\Sigma_{\Phi}V_{\Phi}^{T}\right)^{\dagger} \left( SU_{\Phi}\Sigma_{\Phi}V_{\Phi}^{T}\right)^{T\dagger} V_{\Phi} b \|_2^2
\end{eqnarray*}

Next, we note the following:
\[
\mathbb{E}\left[ \| U_{\Phi}^T S^T S U_{\Phi} - I  \|_2 \right] \le \sqrt{\epsilon}  ,
\]
where of course the expectation can be removed by standard methods.
This follows from a result of Rudelson-Vershynin, and it can also be obtained as a matrix concentration bound.
This is a key result in RLA, and it holds since we are sampling $O\left(\frac{n}{\epsilon} \log \left( \frac{n}{\epsilon} \right) \right)$ rows from $U$ according to the leverage score sampling probabilities.

From standard matrix perturbation theory, it thus follows that 
\[
\left| \sigma_i \left( U_{\Phi}^TS^TSU_{\Phi} \right) - 1 \right|
   = \left| \sigma_i^2\left(SU_{\Phi}\right)-1 \right| \le \sqrt{\epsilon}  .
\]

So, in particular, the matrix $SU_{\Phi}$ has the same rank as the matrix $U_{\Phi}$.
(This is a so-called subspace embedding, which is a key result in RLA; next time we will interpret it in terms of graphic inequalities that we discussed before.)

In the rest of the proof, let's condition on this random event being true.

Since $SU_{\Phi}$ is full rank, it follows that
\[
\left( SU_{\Phi}\Sigma_{\Phi}\right)^{\dagger} = \Sigma_{\Phi}^{-1} \left( SU_{\Phi} \right)^{\dagger} .
\]
So, we have that 
\begin{eqnarray*}
\| x_{opt} - \tilde{x}_{opt} \|_L^2
  &=& \| \Sigma_{\Phi}^{-1} V_{\Phi}^T b - \left( SU_{\Phi}\right)^{\dagger}\left(SU_{\Phi}\right)^{T\dagger} \Sigma_{\Phi}^{-1}V_{\Phi}^T b  \|_2^2  \\
  &=& \| \Sigma_{\Phi}^{-1} V_{\Phi}^T b - V_{\Omega}\Sigma_{\Omega}^{-2}V_{\Omega}^{T} \Sigma_{\Phi}^{-1}V_{\Phi}^T b  \|_2^2 ,
\end{eqnarray*}
where the second line follows if we define $\Omega = S U_{\Phi}$ and let its SVD be 
\[
\Omega = SU_{\Phi} = U_{\Omega} \Sigma_{\Omega} V_{\Omega}^T   .
\]
Then, let $\Sigma_{\Omega}^{-1} = I+E$, for a diagonal error matrix $E$, and use that $V_{\Omega}^TV_{\Omega} = V_{\Omega}V_{\Omega}^T = I$ to write 
\begin{eqnarray*}
\|x_{opt}-\tilde{x}_{opt} \|_L^2
   &=&   \| \Sigma_{\Phi}^{-1}V_{\Phi}^Tb - V_{\Omega}\left(I+E\right) V_{\Omega}^{T} \Sigma_{\Phi}^{-1} V_{\Phi}^{T} b \|_2^2 \\
   &=&   \| V_{\Omega}E V_{\Omega}^T \Sigma_{\Phi}^{-1} V_{\Phi}^{T} b \|_2^2 \\
   &=&   \| E V_{\Omega}^T \Sigma_{\Phi}^{-1} V_{\Phi}^{T} b \|_2^2 \\
   &\le& \| E V_{\Omega}^T\|_2^2 \| \Sigma_{\Phi}^{-1} V_{\Phi}^{T} b \|_2^2 \\
   &=&   \| E \|_2^2 \| \Sigma_{\Phi}^{-1} V_{\Phi}^{T} b \|_2^2
\end{eqnarray*}
But, since we want to bound $\|E\|$, note that 
\[
\left|E_{ii}\right| = \left| \sigma_i^{-2}\left(\Omega\right) - 1 \right|
                    = \left| \sigma_i^{-1}\left(SU_{\Phi}\right) - 1 \right| .
\]
So, 
\[
\|E\|_2 = \max_i \left| \sigma_i^{-2}\left(SU_{\Phi}\right) - 1 \right| \le \sqrt{\epsilon} .
\]
So, 
\[
\| x_{opt}-\tilde{x}_{opt} \|_L^2 \le \epsilon \| \Sigma_{\Phi}^{-1}V_{\Phi}^T b \|_2^2 .
\]
In addition, we can derive that 
\begin{eqnarray*}
\| x_{opt} \|_L^2 
   &=& x_{opt}^T L x_{opt} \\
   &=& \left( W^{1/2}B x_{opt} \right)^T \left( W^{1/2}B x_{opt} \right) \\
   &=& \| \Phi x_{opt} \|_2^2 \\
   &=& \| U_{\Phi} \Sigma_{\Phi} V_{\Phi}^T V_{\Phi} \Sigma_{\Phi}^{-2} V_{\Phi}^T b \|_2^2 \\
   &=& \| \Sigma_{\Phi}^{-1} V_{\Phi}^{T} b \|_2^2  .
\end{eqnarray*}
So, it follows that 
\[
\|x_{opt}-\tilde{x}_{opt} \|_L^2 \le \epsilon \| x_{opt} \|_L^2 ,
\] 
which establishes the main result.
\end{proof}

Before concluding, here is where we stand.
This is a very simple algorithm that highlights the basic ideas of Laplacian-based solvers, but it is not fast.
To make it fast, two things need to be done.
\begin{itemize}
\item
We need to compute or approximate the leverage scores quickly.
This step is very nontrivial.
The original algorithm of ST (that had the $\log^{50}(n)$ term) involved using local random walks (such as what we discussed before, and in fact the ACL algorithm was developed to improve this step, relative to the original ST result) to construct well-balanced partitions in nearly-linear time.
Then, it was shown that one could use effective resistances; this was discovered by SS independently of the RLA-based method outlined above, but it was also noted that one could call the nearly linear time solver to approximate them.
Then, it was shown that one could relate it to spanning trees to construct combinatorial preconditioners.
If this step was done very carefully, then one obtains an algorithm that runs in nearly linear time.
In particular, though, one needs to go beyond the linear algebra to map closely to the combinatorial properties of graphs, and in particular find low-stretch spanning trees.
\item
Instead of solving the subproblem on the sketch, we need to use the sketch to create a preconditioner for the original problem and then solve a preconditioned version of the original problem.
This step is relatively straightforward, although it involves applying an iterative algorithm that is less common than popular CG-based methods.
\end{itemize}
We will go through both of these in more detail next time.

\newpage

\section{%
(04/28/2015):
Laplacian solvers (2 of 2)}

Reading for today.
\begin{compactitem}
\item
Same as last class.
\end{compactitem}

Last time, we talked about a very simple solver for Laplacian-based systems of linear equations, i.e., systems of linear equations of the form $Ax=b$, where the constraint matrix $A$ is the Laplacian of a graph.
This is not fully-general---Laplacians are SPSD matrices of a particular form---but equations of this form arise in many applications, certain other SPSD problems such as those based on SDD matrices can be reduced to this, and there has been a lot of work recently on this topic since it is a primitive for many other problems.
The solver from last time is very simple, and it highlights the key ideas used in fast solvers, but it is very slow.
Today, we will describe how to take those basic ideas and, by coupling them with certain graph theoretic tools in various ways, obtain a ``fast'' nearly linear time solver for Laplacian-based systems of linear equations.

In particular, today will be based on the Batson-Spielman-Srivastava-Teng and the Koutis-Miller-Peng articles.

\subsection{Review from last time and general comments}

Let's start with a review of what we covered last time.

Here is a very simple algorithm.
Given as input the Laplacian $L$ of a graph $G=(V,E)$ and a right hand side vector $b$, do the following.
\begin{itemize}
\item
Construct a sketch of $G$ by sampling elements of $G$, i.e., rows of the edge-node incidence matrix, with probability proportional to the leverage scores of that row, i.e., the effective resistances of that edge.
\item
Use the sketch to construct a solution, e.g., by solving the subproblem with a black box or using it as a preconditioner to solve the original problem with an iterative method. 
\end{itemize}

The basic result we proved last time is the following.
\begin{theorem}
Given a graph $G$ with Laplacian $L$, let $x_{opt}$ be the optimal solution of $Lx=b$; then the above algorithm returns a vector $\tilde{x}_{opt}$ such that, with constant probability, 
\begin{equation}
\label{eqn:solution-approximation}
\| x_{opt} - \tilde{x}_{opt} \|_{L} \le \epsilon \| x_{opt} \|_{L}  .
\end{equation}
\end{theorem}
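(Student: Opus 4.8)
The plan is to build on the simple slow algorithm analyzed last class, whose correctness proof already establishes Eqn.~(\ref{eqn:solution-approximation}) as soon as one has the subspace-embedding guarantee $\|U_\Phi^T S^T S U_\Phi - I\|_2 \le \sqrt{\epsilon}$ for the sampled scaled edge-incidence matrix; the whole game now is to make every step run in nearly-linear time. So first I would recast that subspace-embedding statement in the language of graphs: sampling each edge $e$ with probability proportional to its leverage score $\ell_e = w_e R_e$ (weight times effective resistance) and reweighting the retained edges by $1/p_e$ produces a sparsified Laplacian $\tilde L$ with $O(n \epsilon^{-2}\log n)$ nonzeros satisfying the graph inequality $(1-\sqrt{\epsilon}) L \preceq \tilde L \preceq (1+\sqrt{\epsilon}) L$ on the subspace orthogonal to $\vec{1}$. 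This is exactly a spectral sparsifier; it follows from a matrix Chernoff / Rudelson--Vershynin bound applied to the rank-one terms $\ell_e^{-1}\phi_e\phi_e^T$, and by the graphic-inequality machinery from the expanders lecture it pins down every Rayleigh quotient of $\tilde L$ to within $(1\pm\sqrt{\epsilon})$ of that of $L$, which is precisely what the error analysis of last class consumed.

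Second, I would confront the main bottleneck: we cannot compute $\ell_e = (W^{1/2} B L^\dagger B^T W^{1/2})_{ee}$ exactly, since that is an $n\times n$ pseudoinverse. Here I would use the Spielman--Srivastava trick of applying a Johnson--Lindenstrauss sketch $Q\in\mathbb{R}^{O(\log n)\times m}$ (random $\pm 1$ entries) to $W^{1/2} B L^\dagger$, so that $\ell_e$ is, up to constants, the squared row norm of $Q W^{1/2} B L^\dagger$, and each of the $O(\log n)$ rows is obtained by solving \emph{one} Laplacian system $L y_i = (Q W^{1/2} B)_{(i)}^T$. Those solves are themselves instances of the Laplacian primitive, so one recurses, solving them approximately against a coarser sparsifier; to control the recursion one needs a cheap \emph{initial} preconditioner that does not itself need leverage scores, and this is where low-stretch spanning trees enter. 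A low-stretch tree $T$ gives a preconditioner $L_T$ whose quality is governed by the total stretch $O(m\log n\log\log n)$, and scaling up a carefully chosen small set of off-tree edges yields a chain of progressively sparser graphs, each a good preconditioner for the previous one.

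Third, I would replace ``solve the subproblem with a black box'' by preconditioned iteration. Given $\tilde L$ with $(1-\sqrt{\epsilon})L \preceq \tilde L \preceq (1+\sqrt{\epsilon})L$, the preconditioned operator $\tilde L^\dagger L$ has all its eigenvalues in $[1-\sqrt{\epsilon},1+\sqrt{\epsilon}]$ on the relevant subspace, so preconditioned Chebyshev (or even plain Richardson) iteration converges geometrically with rate $\approx\sqrt{\epsilon}$, needing only $O(\log(1/\epsilon))$ iterations; each iteration costs one multiplication by $L$, which is $O(m)$, plus one application of $\tilde L^\dagger$, which is a recursive solve on a strictly sparser graph. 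Unwinding the recursion with the parameter $\beta$ from last class tuned so that each level shrinks the edge count by a constant factor, and charging the low-stretch-tree construction and the JL-based resistance estimates, gives the claimed $O(m\log^{O(1)}n\log(1/\epsilon))$ total work; a union bound over the $O(\log n)$-depth recursion keeps the overall failure probability constant, which yields Eqn.~(\ref{eqn:solution-approximation}).

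The hard part is the two genuinely delicate points: (i) making the effective-resistance estimates simultaneously accurate enough and cheap enough — a naive recursion blows up the depth, so one must interleave the sparsification with the low-stretch tree so that the preconditioner chain has geometrically decreasing size while each link has only polylogarithmic condition number; and (ii) bookkeeping the approximation errors across the recursion so that ``solve approximately'' at the inner levels still produces a valid preconditioner at the outer levels, which requires the iterative method to be stable under inexact preconditioner applications (hence a Chebyshev-type / accelerated analysis rather than a vanilla fixed-point one). Everything else is matrix concentration together with the graphic-inequality and RLA facts already developed in these notes.
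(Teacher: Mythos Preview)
You have misread what this theorem is claiming. The ``above algorithm'' is explicitly the \emph{simple slow} algorithm restated just before the theorem: sample edges with probabilities proportional to their exact leverage scores / effective resistances, form the sparsified Laplacian $\tilde L$, and solve $\tilde L \tilde x_{opt} = b$ directly. The theorem asserts nothing about running time; it is purely a quality-of-approximation statement, and it is a verbatim restatement of the theorem already proved in the preceding lecture. The paper says so immediately after the statement: ``The proof of this result boiled down to showing that \ldots\ one obtains a data-dependent subspace embedding of the edge-incidence matrix,'' i.e., establishing $\|I-(SU_\Phi)^T(SU_\Phi)\|_2\le\sqrt{\epsilon}$ via Rudelson--Vershynin, and then the straightforward SVD calculation $\|x_{opt}-\tilde x_{opt}\|_L^2 = \|E V_\Omega^T \Sigma_\Phi^{-1}V_\Phi^T b\|_2^2 \le \epsilon\|\Sigma_\Phi^{-1}V_\Phi^T b\|_2^2 = \epsilon\|x_{opt}\|_L^2$ that you yourself allude to in your first sentence.

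Everything after your first sentence --- the JL sketch for approximate resistances, low-stretch spanning trees, recursive preconditioner chains, Chebyshev iteration, error propagation across levels --- is the content of the \emph{rest} of the lecture, aimed at the separate ST/KMP nearly-linear-time theorem, not at this one. None of it is needed here: the algorithm in question is allowed to compute exact leverage scores (at cost $O(mn^2)$) and invoke a black-box solver on the sketch (at cost $O(n^3)$); the theorem is silent on cost. So your proposal is not wrong so much as aimed at the wrong target: you are sketching a proof of a much harder theorem whose statement includes a time bound, while the theorem actually in front of you is already dispatched by the subspace-embedding step you acknowledge in your opening line.
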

The proof of this result boiled down to showing that, by sampling with respect to a judiciously-chosen set of nonuniform importance sampling probabilities, then one obtains a data-dependent subspace embedding of the edge-incidence matrix.
Technically, the main thing to establish was that, if $U$ is an $m \times n$ orthogonal matrix spanning the column space of the weighted edge-incidence matrix, in which case $I = I_n = U^TU$, then 
\begin{equation}
\label{eqn:subspace-embedding}
\| I - \left(SU\right)^{T}\left(SU\right) \|_2 \le \epsilon ,
\end{equation}
where $S$ is a sampling matrix that represents the effect of sampling elements from $L$.

The sampling probabilities that are used to create the sketch are weighted versions of the statistical leverage scores of the edge-incidence matrix, and thus they also are equal to the effective resistance of the corresponding edge in the graph.
Importantly, although we didn't describe it in detail, the theory that provides bounds of the form of Eqn.~(\ref{eqn:subspace-embedding}) is robust to the exact form of the importance sampling probabilities, e.g., bounds of the same form hold if any other probabilities are used that are ``close'' (in a sense that we will discuss) to the statistical leverage scores.

The running time of this simple strawman algorithm consists of two parts, both of which the fast algorithms we will discuss today improve upon.
\begin{itemize}
\item
Compute the leverage scores, exactly or approximately.
A naive computation of the leverage scores takes $O(mn^2)$ time, e.g., with a black box QR decomposition routine.
Since they are related to the effective resistances, one can---theoretically at least compute them with any one of a variety of fast nearly linear time solvers (although one has a chicken-and-egg problem, since the solver itself needs those quantities).
Alternatively, since one does not need the exact leverage scores, one could hope to approximate them in some way---below, we will discuss how this can be done with low-stretch spanning trees.
\item
Solve the subproblem, exactly or approximately.
A naive computation of the solution to the subproblem can be done in $O(n^3)$ time with standard direct methods, or it can be done with an iterative algorithm that requires a number of matrix-vector multiplications that depends on the condition number of $L$ (which in general could be large, e.g., $\Omega(n)$) times $m$, the number of nonzero elements of $L$.
Below, we will see how this can be improved with sophisticated versions of certain preconditioned iterative algorithms.
\end{itemize}

More generally, here are several issues that arise.
\begin{itemize}
\item
Does one use exact or approximate leverage scores?
Approximate leverage scores are sufficient for the worst-case theory, and we will see that this can be accomplished by using LSSTs, i.e., combinatorial techniques. 
\item
How good a sketch is necessary?
Last time, we sampled $\Theta\left( \frac{ n \log(n) }{\epsilon^2} \right)$ elements from $L$ to obtain a $1\pm\epsilon$ subspace embedding, i.e., to satisfy Eqn.~(\ref{eqn:subspace-embedding}), and this leads to an $\epsilon$-approximate solution of the form of Eqn~(\ref{eqn:solution-approximation}).
For an iterative method, this might be overkill, and it might suffice to satisfy Eqn.~(\ref{eqn:subspace-embedding}) for, say, $\epsilon = \frac{1}{2}$.
\item
What is the dependence on $\epsilon$?
Last time, we sampled and then solved the subproblem, and thus the complexity with respect to $\epsilon$ is given by the usual random sampling results.
In particular, since the complexity is a low-degree polynomial in $\frac{1}{\epsilon}$, it will be essentially impossible to obtain a high-precision solution, e.g., with $\epsilon = 10^{-16}$, as is of interest in certain~applications.
\item
What is the dependence on the condition number $\kappa(L)$?
In general, the condition number can be very large, and this will manifest itself in a large number of iterations (certainly in worst case, but also actually quite commonly).
By working with a preconditioned iterative algorithm, one should aim for a condition number of the preconditioned problem that is quite small, e.g., if not constant then $\log(n)$ or less.
In general, there will be a tradeoff between the quality of the preconditioner and the number of iterations needed to solve the preconditioned~problem.
\item
Should one solve the subproblem directly or use it to construct a preconditioned to the original problem?
Several of the results just outlined suggest that an appropriate iterative methods should be used and this is what leads to the best results. 
\end{itemize}

\textbf{Remark.}
Although we are not going to describe it in detail, we should note that the LSSTs will essentially allow us to approximate the large leverage scores, but they won't have anything to say about the small leverage scores.
We saw (in a different context) when we were discussing statistical inference issues that controlling the small leverage scores can be important (for proving statistical claims about unseen data, but not for claims on the empirical data).
Likely similar issues arise here, and likely this issue can be mitigated by using implicitly regularized Laplacians, e.g., as as implicitly computed by certain spectral ranking methods we discussed, but as far as I know no one has explicitly addressed these questions.

\subsection{Solving linear equations with direct and iterative methods}

Let's start with the second step of the above two-level algorithm, i.e., how to use the sketch from the first step to construct an approximate solution, and in particular how to use it to construct a preconditioner for an iterative algorithm.
Then, later we will get back to the first step of how to construct the sketch.

As you probably know, there are a wide range of methods to solve linear systems of the form $Ax=b$, but they fall into two broad categories.
\begin{itemize}
\item
\textbf{Direct methods.}
These include Gaussian elimination, which runs in $O(n^3)$ time; and Strassen-like algorithms, which run in $O(n^{\omega})$ time, where $\omega = 2.87 \ldots 2.37$.
Both require storing the full set of in general $O(n^2)$ entries.
Faster algorithms exist if $A$ is structured.
For example, if $A$ is $n \times n$ PSD with $m$ nonzero, then conjugate gradients, used as a direct solver, takes $O(mn)$ time, which if $m = O(n)$ is just $O(n^2)$.
That is, in this case, the time it takes it proportional to the time it takes just to write down the inverse.
Alternatively, if $A$ is the adjacency matrix of a path graph or any tree, then the running time is $O(m)$; and so on.
\item
\textbf{Iterative methods.}
These methods don't compute an exact answer, but they do compute an $\epsilon$-approximate solution, where $\epsilon$ depends on the structural properties of $A$ and the number of iterations, and where $\epsilon$ can be made smaller with additional iterations.
In general, iterations are performed by doing matrix-vector multiplications.
Advantages of iterative methods include that one only needs to store $A$, these algorithms are sometimes very simple, and they are often faster than running a direct solver.
Disadvantages include that one doesn't obtain an exact answer, it can be hard to predict the number of iterations, and the running time depends on the eigenvalues of $A$, e.g., the condition number $\kappa(A) = \frac{\lambda_{max}(A)}{\lambda_{min}(A)}$.
Examples include the Richardson iteration, various Conjugate Gradient like algorithms, and the Chebyshev~iteration.
\end{itemize}

Since the running time of iterative algorithms depend on the properties of $A$, so-called \emph{preconditioning methods} are a class of methods to transform a given input problem into another problem such that the modified problem has the same or a related solution to the original problem; and such that the modified problem can be solved with an iterative method more quickly.

For example, to solve $Ax=b$, with $A\in\mathbb{R}^{n \times n}$ and with $m=\textbf{nnz}(A)$, if we define $\kappa(A) = \frac{\lambda_{max}(A)}{\lambda_{min}(A)}$, where $\lambda_{max}$ and $\lambda_{min}$ are the maximum and minimum non-zero eigenvalues of $A$, to be the condition number of $A$, then CG runs in 
\[
O\left( \sqrt{\kappa{(A)}}\log\left(1/\epsilon\right) \right)
\]
iterations (each of which involves a matrix-vector multiplication taking $O(m)$ time) to compute and $\epsilon$-accurate solution to $Ax=b$.
By an $\epsilon$-accurate approximation, here we mean the same notion that we used above, i.e., that $$\|\tilde{x}_{opt}-A^{\dagger}b\|_{A}\le\epsilon\|A^{\dagger}b\|_{A},$$ where the so-called $A$-norm is given by $\|y\|_{A}=\sqrt{y^TAy}$.
This $A$-norm is related to the usual Euclidean norm as follows: $\|y\|_{A}\le\kappa(A)\|y\|_2$ and $\|y\|_2\le\kappa(A)\|y\|_{A}$.
While the $A$-norm is perhaps unfamiliar, in the context of iterative algorithms it is not too dissimilar to the usual Euclidean norm, in that, given an $\epsilon$-approximation for the former, we can obtain an $\epsilon$-approximation for the latter with $O\left( \log\left( \kappa(A)/\epsilon \right) \right)$ extra iterations.

In this context, preconditioning typically means solving 
\[
B^{-1}Ax = B^{-1}b ,
\]
where $B$ is chosen such that $\kappa\left(B^{-1}A\right)$ is small; and it is easy to solve problems of the form $Bz=c$. 
The two extreme cases are $B=I$, in which case it is easy to compute and apply but doesn't help solve the original problem, and $B=A^{-1}$, which means that the iterative algorithm would converge after zero steps but which is difficult to compute.
The running time of the preconditioned problem involves $$O\left( \sqrt{\kappa\left( B^{-1}A \right)} \log\left(1/\epsilon \right) \right)$$ matrix vector multiplications.
The quantity $\kappa\left( B^{-1}A \right)$ is sometimes known as the \emph{relative condition number of $A$ with respect to $B$}---in general, finding a $B$ that makes it smaller takes more initial time but leads to fewer iterations.
(This was the basis for the comment above that there is a tradeoff in choosing the quality of the preconditioner, and it is true more generally.)

These ideas apply more generally, but we consider applying them here to Laplacians.
So, in particular, given a graph $G$ and its Laplacian $L_{G}$, one way to precondition it is to look for a different graph $H$ such that $L_H \approx L_G$.
For example, one could use the sparsified graph that we computed with the algorithm from last class.
That sparsified graph is actually an $\epsilon$-good preconditioned, but it is too expensive to compute.
To understand how we can go beyond the linear algebra and exploit graph theoretic ideas to get good approximations to them more quickly, let's discuss different ways in which two graphs can be close to one another.

\subsection{Different ways two graphs can be close}

We have talked formally and informally about different ways graphs can be close, e.g., we used the idea of similar Laplacian quadratic forms when talking about Cheeger's Inequality and the quality of spectral partitioning methods.
We will be interested in that notion, but we will also be interested in other notions, so let's now discuss this topic in more detail.
\begin{itemize}
\item
\textbf{Cut similarity.}
One way to quantify the idea that two graphs are close is to say that they are similar in terms of their cuts or partitions.
The standard result in this area is due to BZ, who developed the notion of \emph{cut similarity} to develop fast algorithms for min cut and max flow and other related combinatorial problems.
This notion of similarity says that two graphs are close if the weights of the cuts, i.e., the sum of edges or edge weights crossing a partition, are close for all cuts.
To define it, recall that, given a graph $G=(V,E,W)$ and a set $S \subset V$, we can define $\mbox{cut}_G = \sum_{u \in S, v \in \bar{S} } W_{(uv)}$.
Here is the definition.
\begin{definition}
Given two graphs, $G=(V,E,W)$ and $\tilde{G} = (V,\tilde{E},\tilde{W})$, on the same vertex set, we say that $G$ and $\tilde{G}$ are \emph{$\sigma$-cut-similar} if, for all $S\subseteq V$, we have that
\[
\frac{1}{\sigma} \mbox{cut}_{\tilde{G}}(S) \le \mbox{cut}_{G}(S) \le \sigma \mbox{cut}_{\tilde{G}}(S)  .
\]
\end{definition}
As an example of a result in this area, the following theorem shows that every graph is cut-similar to a graph with average degree $O(\log(n))$ and that one can compute that cut-similar graph quickly.
\begin{theorem}[BK]
For all $\epsilon > 0$, every graph $G=(V,E,W)$ has a $\left(1+\epsilon\right)$-cut-similar graph $\tilde{G}=(V,\tilde{E},\tilde{V})$ such that $\tilde{E} \subseteq E$ and  $|\tilde{E}| = O\left( n \log(n/\epsilon^2) \right)$.
In addition, the graph $\tilde{G}$ can be computed in  $O\left( m \log^3 (n) + m \log (n/\epsilon^2) \right)$ time.
\end{theorem}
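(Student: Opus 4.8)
The plan is to prove this by \emph{random edge sampling with non-uniform, connectivity-dependent probabilities}, following Benczúr and Karger. The construction: assign to each edge $e$ a sampling probability $p_e$, keep $e$ independently with probability $p_e$, and if kept give it weight $W_e / p_e$ in $\tilde G$; this rescaling makes $\mathbb{E}[\mathrm{cut}_{\tilde G}(S)] = \mathrm{cut}_G(S)$ for every $S \subseteq V$. The argument then splits into three pieces: (i) choosing the $p_e$ so that \emph{all} $2^{n-1}$ cuts concentrate around their expectations simultaneously; (ii) bounding the expected, hence with high probability the actual, number of surviving edges; (iii) showing the $p_e$ can be computed — or adequately approximated — in nearly linear time.

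For (i) and (ii) the right quantity is the \emph{edge strength}: define $k_e$ to be the largest $k$ such that $e$ lies inside a maximal $k$-edge-connected vertex-induced subgraph of $G$. Set $p_e = \min\{1,\ \rho / k_e\}$ with $\rho = \Theta(\log n / \epsilon^2)$. Two facts drive everything. First, a purely combinatorial lemma: $\sum_{e \in E} 1/k_e \le n-1$ (the maximal $k$-strong components are laminar, and a Nash--Williams / forest-packing counting argument shows each level of this partition can absorb only bounded total $1/k_e$ weight). This gives $\mathbb{E}[|\tilde E|] \le \rho \sum_e 1/k_e = O(\rho n) = O(n\log n/\epsilon^2)$, and a Chernoff bound upgrades it to a high-probability statement. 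Second, for concentration of a fixed cut $(S,\bar S)$, group the crossing edges by the strong component they belong to; inside a component of strength $k$ the crossing edges have total original weight at least $k$ whenever any of them cross (by $k$-edge-connectivity), while each contributes a well-scaled increment $W_e/p_e \le k/\rho$ to the sampled cut — exactly the regime where a Chernoff bound gives deviation probability decaying like $\exp(-\Omega(\epsilon^2 \rho \cdot \mathrm{cut}_G(S) / k))$, i.e.\ faster for larger cuts.

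The step I expect to be the main obstacle is the \emph{union bound over exponentially many cuts}: a naive union bound over $2^{n-1}$ cuts destroys the $\log n$ savings. The fix is Karger's cut-counting theorem — the number of cuts of value at most $\alpha c$, where $c$ is the global min-cut value, is at most $n^{2\alpha}$ (provable via the random-contraction algorithm). One buckets cuts by value into geometric ranges $[2^i c, 2^{i+1} c)$; the Chernoff bound for a cut of value $\approx 2^i c$ decays like $\exp(-\Omega(\epsilon^2 \rho \cdot 2^i c / k))$ while the number of such cuts is only $n^{O(2^i)}$, so with $\rho = \Theta(\log n/\epsilon^2)$ the union bound converges geometrically over all buckets. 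Combining over buckets (and over the choice of the relevant strength scale $k$) yields that every cut is preserved within $(1\pm\epsilon)$ with probability $1 - 1/\mathrm{poly}(n)$.

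Finally, for (iii), one cannot afford exact strengths, but the analysis only needs the $p_e$ to over-estimate $\rho/k_e$ by at most a constant factor (over-estimating costs only constants in $|\tilde E|$ and in $\rho$), so crude strength estimates suffice. These come from Nagamochi--Ibaraki \emph{sparse $k$-connectivity certificates}, each computable in $O(m)$ time, invoked $O(\log n)$ times in a geometric scan of connectivity scales, together with a recursive sparsify-then-estimate trick to keep each level's graph small; this gives the claimed $O(m\log^3 n + m\log(n/\epsilon^2))$ running time. Assembling (i), (ii), (iii) produces $\tilde G$ with $\tilde E \subseteq E$, $|\tilde E| = O(n\log n/\epsilon^2)$, that is $(1+\epsilon)$-cut-similar to $G$ and is computed in nearly linear time, which is the theorem.
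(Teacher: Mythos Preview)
The paper does not actually prove this theorem; it is stated without proof as a known result of Bencz\'ur and Karger (hence the attribution ``[BK]''), cited as background before the discussion moves on to spectral similarity. So there is no paper proof to compare against.

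That said, your outline is an accurate sketch of the original Bencz\'ur--Karger argument: sampling edges with probability $p_e = \min\{1,\rho/k_e\}$ where $k_e$ is the edge strength and $\rho=\Theta(\epsilon^{-2}\log n)$, using $\sum_e w_e/k_e \le n-1$ to bound the expected number of surviving edges, handling the union bound over all $2^{n-1}$ cuts via Karger's $n^{2\alpha}$ bound on the number of $\alpha$-near-minimum cuts bucketed geometrically by value, and approximating strengths in nearly linear time via repeated Nagamochi--Ibaraki sparse certificates. One point worth tightening in a full write-up is the concentration step: the clean Chernoff-plus-cut-counting argument applies directly to a graph sampled \emph{uniformly} at rate $\rho/c$ where $c$ is the global min-cut; to handle the non-uniform $p_e$'s, Bencz\'ur--Karger decompose the edge set along the laminar family of strong components and apply the uniform argument inside each component separately, then reassemble. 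Your sketch gestures at this (``group the crossing edges by the strong component they belong to'') but the actual bookkeeping---showing the per-component bounds combine correctly for every global cut---is where most of the work in the original proof lies.
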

\item
\textbf{Spectral similarity.}
ST introduced the idea of spectral similarity in the context of nearly linear time solvers.
One can view this in two complementary ways.
\begin{itemize}
\item
As a generalization of cut similarity.
\item
As a special case of subspace embeddings, as used in RLA.
\end{itemize}
We will do the former here, but we will point out the latter at an appropriate point.

Given $G=(V,E,W)$, recall that $L:\mathbb{R}^{n}\rightarrow\mathbb{R}$ is a quadratic form associated with $G$ such that 
\[
L(x) = \sum_{(uv) \in E} W_{(uv)} \left( x_u - x_v \right)^2  .
\]
If $S \subset V$ and if $x$ is an indicator/characteristic vector for the set $S$, i.e., it equals $1$ on nodes $u \in S$, and it equals $0$ on nodes $v \in S$, then for those indicator vectors $x$, we have that $L(x) = \mbox{cut}_{G}(x)$.
We can also ask about the values it takes for other vectors $x$.
So, let's define the following.
\begin{definition}
Given two graphs, $G=(V,E,W)$ and $\tilde{G}=(V,\tilde{E},\tilde{W})$, on the same vertex set, we say that $G$ and $\tilde{G}$ are \emph{$\sigma$-spectrally similar} if, for all $x\in\mathbb{R}^{n}$, we have that 
\[
\frac{1}{\sigma} L_{\tilde{G}}(x) \le L_{G}(x) \le \sigma L_{\tilde{G}}(x) .
\]
\end{definition}
That is, two graphs are spectrally similar if their Laplacian quadratic forms are close.

In addition to being a generalization of cut similarity, this also corresponds to a special case of subspace embeddings, restricted from general matrices to edge-incidence matrices and their associated~Laplacians.
\begin{itemize}
\item
To see this, recall that subspace embeddings preserve the geometry of the subspace and that this is quantified by saying that all the singular values of the sampled/sketched version of the edge-incidence matrix are close to $1$, i.e., close to those of the edge-incidence matrix of the original un-sampled graph.
Then, by considering the Laplacian, rather than the edge-incidence matrix, the singular values of the original and sketched Laplacian are also close, up to a quadratic of the approximation factor on the edge-incidence matrix.
\end{itemize}

Here are several other things to note about spectral embeddings.
\begin{itemize}
\item
Two graphs can be cut-similar but not spectrally-similar.
For example, consider $G$ to be an $n$-vertex path and $\tilde{G}$ to be an $n$-vertex cycle.
They are $2$-cut similar but are only $n$-spectrally similar.
\item
Spectral similarity is identical to the notion of relative condition number in NLA that we mentioned above.
Recall, given $A$ and $B$, then $ A \preceq B$ iff $x^TAx \le x^TBx$, for all $x\in\mathbb{R}^{n}$.
Then, $A$ and $B$, if they are Laplacians, are spectrally similar if $\frac{1}{\sigma}B \preceq A \preceq \sigma B$.
In this case, they have similar eigenvalues, since: from the Courant-Fischer results, if $\lambda_1,\ldots,\lambda_n$ are the eigenvalues of $A$ and $\tilde{\lambda}_1,\ldots,\tilde{\lambda}_n$ are the eigenvalues of $B$, then for all $i$ we have that $\frac{1}{\sigma} \tilde{\lambda}_i \le \lambda_i \le \sigma \tilde{\lambda_i}$.
\item
More generally, spectral similarity means that the two graphs will share many spectral or linear algebraic properties, e.g., effective resistances, resistance distances, etc.
\end{itemize}
\item
\textbf{Distance similarity.}
If one assigns a length to every edge $e \in E$, then these lengths induce a shortest path distance between every $u,v \in V$.
Thus, given a graph $G=(V,E,W)$, we can let $d: V \times V \rightarrow \mathbb{R}^{+}$ be the shortest path distance.
Given this, we can define the following notion of similarity.
\begin{definition}
Given two graphs, $G=(V,E,W)$ and $\tilde{G}=(V,\tilde{E},\tilde{W})$, on the same vertex set, we say that $G$ and $\tilde{G}$ are \emph{$\sigma$-distance similar} if, for all pairs of vertices $u,v \in V$, we have that 
\[
\frac{1}{\sigma} \tilde{d}(u,v) \le d(u,v) \le \sigma \tilde{d}(u,v)  .
\]
\end{definition}
Note that if $\tilde{G}$ is a subgraph if $G$, then $d_{G}(u,v) \le d_{\tilde{G}}(u,v)$, since shortest-path distances can only increase.
(Importantly, this does \emph{not} necessarily hold if the edges of the subgraph are re-weighted, as they were done in the simple algorithm from the last class, when the subgraph is constructed; we will get back to this later.)
In this case, a spanner is a subgraph such that distances in the other direction are not changed too much.
\begin{definition}
Given a graph $G=(V,E,W)$, a \emph{$t$-spanner} is a subgraph of $G$ such that for all $u,v \in V$, we have that $d_{\tilde{G}}(u,v) \le t d_{G}(u,v)$.
\end{definition}
There has been a range of work in TCS on spanners (e.g., it is known that every graph has a $2t+1$ spanner with $O\left( n^{1+1/\epsilon} \right)$ edges) that isn't directly relevant to what we are doing. 

We will be most interested in spanners that are trees or nearly trees.

\begin{definition}
Given a graph $G=(V,E,W)$, a \emph{spanning tree} is a tree includes all vertices in $G$ and is a subgraph of $G$.
\end{definition}
There are various related notions that have been studied in different contexts: for example, minimum spanning trees, random spanning trees, and low-stretch spanning trees (LSSTs).
Again, to understand some of the differences, think of a path versus a cycle.
For today, we will be interested in LSSTs.
The most extreme form of a sparse spanner is a LSST, which has only $n-1$ edges but which approximates pairwise distances up to small, e.g., hopefully polylog,~factors.
\end{itemize}

\subsection{Sparsified graphs}

Here is an aside with some more details about sparsified graphs, which is of interest since this is the first step of our Laplacian-based linear equation solver algorithm.
Let's define the following, which is a slight variant of the above.
\begin{definition}
Given a graph $G$, a \emph{$(\sigma,d)$-spectral sparsifier} of $G$ is a graph $\tilde{G}$ such that
\begin{compactenum}
\item
$\tilde{G}$ is $\sigma$-spectrally similar to $G$.
\item
The edges of $\tilde{G}$ are reweighed versions of the edges of $G$.
\item
$\tilde{G}$ has $\le d|V|$ edges.
\end{compactenum}
\end{definition}
 
\textbf{Fact.}
Expanders can be thought of as sparse versions of the complete graph; and, if edges are weighted appropriately, they are spectral sparsifiers of the complete graph.
This holds true more generally for other graphs.
Here are examples of such results.
\begin{itemize}
\item
SS showed that every graph has a $\left(1+\epsilon, O(\left( \frac{\log(n)}{\epsilon^2} \right) \right)$ spectral sparsifier.
This was shown by SS with an effective resistance argument; and it follows from what we discussed last time: last time, we showed that sampling with respect to the leverage scores gives a subspace embedding, which preserves the geometry of the subspace, which preserves the Laplacian quadratic form, which implies the spectral sparsification claim.
\item
BSS showed that every $n$ node graph $G$ has a $\left(  \frac{\sqrt{d}+1}{\sqrt{d}-1},d\right)$-spectral sparsifier (which in general is more expensive to compute than running a nearly linear time solver).
In particular, $G$ has a $\left( 1+2\epsilon, \frac{4}{\epsilon^2} \right)$-spectral sparsifier, for every $\epsilon\in(0,1)$.
\end{itemize}

Finally, there are several ways to speed up the computation of graph sparsification algorithms, relative to the strawman that we presented in the last class.
\begin{itemize}
\item
Given the relationship between the leverage scores and the effective resistances and that the effective resistances can be computed with a nearly linear time solver, one can use the ST or KMP solver to speed up the computation of graph sparsifiers.
\item
One can use local spectral methods, e.g., diffusion-based methods from ST or the push algorithm of ACL, to compute well-balanced partitions in nearly linear time and from them obtain spectral sparsifiers.
\item
Union of random spanning trees.
It is known that, e.g., the union of two random spanning trees is $O(\log(n))$-cut similar to $G$; that the union of $O\left( \log^2(n)/\epsilon^2 \right)$ reweighed random spanning trees is a $1+\epsilon$-cut sparsifier; and so on.  
This suggests looking at spanning trees and other related combinatorial quantities that can be quickly computed to speed up the computation of graph sparsifiers. 
We turn to this next.
\end{itemize}

\subsection{Back to Laplacian-based linear systems}

KMP considered the use of combinatorial preconditions, an idea that traces back to Vaidya.
They coupled this with a fact that has been used extensively in RLA: that only approximate leverage scores are actually needed in the sampling process to create a sparse sketch of $L$.
In particular, they compute upper estimates of the leverage scores or effective resistance of each edge, and they compute these estimates on a modified graph, in which each upper estimate is sufficiently good.
The modified graph is rather simple: take a LSST and increase its weights.
Although the sampling probabilities obtained from the LSST are strictly greater than the effective resistances, they are not too much greater in aggregate.
This, coupled with a rather complicated iterative preconditioning scheme, coupled with careful accounting with careful data structures, will lead to a solver that runs in $O\left( m \log(n)\log(1/\epsilon) \right)$ time, up to $\log\log(n)$ factors.
We will discuss each of these briefly in turn.

\paragraph{Use of approximate leverage scores.}
Recall from last class that an important step in the algorithm was to use nonuniform importance sampling probabilities.  
In particular, if we sampled edges from the edge-incidence matrix with probabilities $\{p_i\}_{i=1}^{m}$, where each $p_i = \ell_i$, where $\ell_i$ is the effective resistance or statistical leverage score of the weighted edge-incidence matrix, then we showed that if we sampled $r=O\left( n\log(n)/\epsilon\right)$ edges, then it follows that 
\[
\| I - \left(SU_{\Phi}\right)^{T}\left(SU_{\Phi}\right) \|_2 \le \epsilon  ,
\]
from which we were able to obtain a good relative-error solution.

Using probabilities exactly equal to the leverage scores is overkill, and the same result holds if we use any probabilities $p_i^{\prime}$ that are ``close'' to $p_i$ in the following sense: if 
\[
p_i^{\prime} \ge \beta \ell_i   ,
\]
for $\beta\in(0,1]$ and $\sum_{i=1}^{m} p_i^{\prime}=1$, then the same result follows if we sample $r=O\left( n\log(n)/(\beta\epsilon)\right)$ edges, i.e., if we oversample by a factor of $1/\beta$.
The key point here is that it is essential not to underestimate the high-leverage edges too much.
It is, however, acceptable if we overestimate and thus oversample some low-leverage edges, as long as we don't do it too much.

In particular, let's say that we have the leverage scores $\{\ell_1\}_{i=1}^{m}$ and overestimation factors $\{\gamma_i\}_{i=1}^{m}$, where each $\gamma_i \ge 1$.
From this, we can consider the probabilities 
\[
p_i^{\prime\prime} = \frac{\gamma_i\ell_i}{\sum_{i=1}^{m}\gamma_i\ell_i}   .
\]
If $\sum_{i=1}^{m}\gamma_i\ell_i$ is not too large, say $O\left(n\log(n)\right)$ or some other factor that is only slightly larger than $n$, then dividing by it (to normalize $\{\gamma_i\ell_i\}_{i=1}^{m}$ to unity to be a probability distribution) does not decrease the probabilities for the high-leverage components too much, and so we can use the probabilities $p_i^{\prime\prime}$ with an extra amount of oversampling that equals $\frac{1}{\beta} = \sum_{i=1}^{m}\gamma_i\ell_i$.

\paragraph{Use of LSSTs as combinatorial preconditioners.}
Here, the idea is to use a LSST, i.e., use a particular form of a ``combinatorial preconditioning,'' to replace $\ell_i=\ell_{(uv)}$ with the stretch of the edge $(uv)$ in the LSST.
Vaidya was the first to suggest the use of spanning trees of $L$ as building blocks as the base for preconditioning matrix $B$.
The idea is then that the linear system, if the constraint matrix is the Laplacian of a tree, can be solved in $O(n)$ time with Gaussian elimination.
(Adding a few edges back into the tree gives a preconditioned that is only better, and it is still easy to solve.)
Boman-Hendrickson used a LSST as a stand-along preconditioner.
ST used a preconditioner that is a LSST plus a small number of extra edges.
KMP had additional extensions that we describe here.

Two question arise with this approach.
\begin{itemize}
\item
Q1: What is the appropriate base tree?
\item
Q2: Which off-tree edges should added into the preconditioner?
\end{itemize}
One idea is to use a tree that concentrates that maximum possible weight from the total weight of the edges in $L$.
This is what Vaidya did; and, while it led to good result, the results weren't good enough for what we are discussing here.
(In particular, note that it doesn't discriminate between different trees in unweighted graphs, and it won't provide a bias toward the middle edge of a dumbbell graph.)
Another idea is to use a tree that concentrates mass on high leverage/influence edges, i.e., edges with the highest leverage in the edge-incidence matrix or effective resistance in the corresponding Laplacian.  

The key idea to make this work is that of \emph{stretch}.
To define this, recall that for every edge $(u,v)\in E$ in the original graph Laplacian $L$, there is a unique ``detour'' path between $u$ and $v$ in the tree $T$.
\begin{definition}
The \emph{stretch} of the edge with respect to $T$ equals the distortion caused by this detour.
\end{definition}
In the unweighted case, this stretch is simply the length of the tree path, i.e., of the path between nodes $u$ and $v$ that were connected by an edge in $G$ in the tree $T$.
Given this, we can define the~following.
\begin{definition}
The \emph{total stretch} of a graph $G$ and its Laplacian $L$ with respect to a tree $T$ is the sum of the stretches of all off-tree edges.
Then, a \emph{low-stretch spanning tree (LSST)} $T$ is a tree such that the total stretch is low.
\end{definition}
Informally, a LSST is one such that it provides a good ``on average'' detours for edges of the graph, i.e., there can be a few pairs of nodes that are stretched a lot, but there can't be too many such~pairs.

There are many algorithms for LSSTs.
For example, here is a result that is particularly relevant for us.
\begin{theorem}
Every graph $G$ has a spanning tree $T$ with total stretch $\tilde{O}\left( m \log(n) \right)$, and this tree can be found in $\tilde{O}\left( m \log(n) \right)$ time.
\end{theorem}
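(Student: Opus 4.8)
The plan is to construct $T$ by recursively decomposing the shortest-path metric of $G$ and gluing the pieces together with a small number of ``bridge'' edges, following the star-decomposition strategy of Elkin--Emek--Spielman--Teng and its refinement by Abraham--Bartal--Neiman / Abraham--Neiman via petal decompositions. First I would record two normalizing reductions: (i) rescale so the minimum edge length is $1$, and (ii) reduce to the case of polynomially bounded aspect ratio by peeling off very long edges and contracting very short ones, so that after the reduction the diameter is $\mathrm{poly}(n)$ and the recursion bottoms out after $O(\log n)$ levels (each contributing its own charge to the stretch that I will need to control). The recursion takes as input a connected subgraph $H$ together with a center vertex $x_0$ and a radius bound $r$ such that every vertex of $H$ is within shortest-path distance $r$ of $x_0$ inside $H$; the base case $|V(H)| = 1$ returns the trivial tree.

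The key primitive is a low-radius decomposition of $(H, x_0, r)$: partition $V(H)$ into a ``central ball'' $V_0$ containing $x_0$, together with ``cones'' (or petals) $V_1, \dots, V_k$, where each $V_i$ for $i \ge 1$ is anchored by a vertex $x_i$ joined to $V_0$ by a single graph edge $e_i$ (the bridge). The decomposition is chosen, via a randomized region-growing argument, so that (a) the radius of $V_0$ around $x_0$ and the radius of each $V_i$ around its inherited center both shrink to at most a constant fraction of $r$, say $\tfrac{3}{4}r$, and (b) every edge $(u,v)$ of length $\ell$ is ``cut'' by the decomposition with probability only $O\!\big(\ell\,\mathrm{polylog}(n)/r\big)$. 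I would then recurse on each of $V_0, V_1, \dots, V_k$ with radius bound $\tfrac{3}{4}r$ and form $T$ as the union of the returned subtrees together with the bridges $e_1, \dots, e_k$; a short argument shows this is a spanning tree of $H$ using only edges of $H$, since the region-growing is done over shortest-path balls.

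The stretch bound comes from an expectation computation over the randomness of all the decompositions. Fix an edge $(u,v)$ of length $\ell$. At a recursion node whose piece still contains both $u$ and $v$ and has radius parameter $r$ (so necessarily $r = \Omega(\ell)$, since $d(u,v) \le 2r$), if the edge is cut then its tree path detours through the center and incurs stretch $O(r/\ell)$; summing $\Pr[\text{cut at a level with radius } r]\cdot O(r/\ell) = O(\ell\,\mathrm{polylog}(n)/r)\cdot O(r/\ell) = O(\mathrm{polylog}(n))$ over the $O(\log n)$ levels at which $(u,v)$ survives, and using the geometric shrinkage to make the relevant sum telescope, yields expected stretch $\tilde O(\log n)$ per edge, hence expected total stretch $\tilde O(m\log n)$; a standard derandomization (method of conditional expectations, applied level by level) removes the expectation. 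For the running time, each decomposition step on a subgraph with $m'$ edges costs $\tilde O(m')$ (a Dijkstra-style ball-growing computation), and since an edge participates in pieces at only $O(\log n)$ levels before being cut, the total work telescopes to $\tilde O(m\log n)$.

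The main obstacle is achieving property (b) at the sharpness required for the stated bound. The naive star decomposition of EEST achieves cut probability $O(\ell \log n / r)$ but, because an edge can be charged once at each of $\Theta(\log n)$ levels, only gives total stretch $\tilde O(m\log^2 n)$. Getting down to $\tilde O(m\log n)$ requires the petal carving to be substantially more delicate: the petals must be grown so as to simultaneously keep the radius shrinking by a constant factor, keep the per-edge cut probability small, \emph{and} control the depth at which an edge can be re-examined, so that the charges across levels sum to $\tilde O(\log n)$ \emph{in total} rather than per level. Designing this carving, proving the amortized cut bound (a potential/volume argument over the annuli used in region growing), and handling the aspect-ratio reduction cleanly are where essentially all the difficulty lies; the recursive assembly, the spanning-tree verification, and the final expectation bookkeeping are comparatively routine.
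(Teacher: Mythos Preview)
The paper does not actually prove this theorem; it is stated as a known result from the literature on low-stretch spanning trees and is used as a black box in the discussion of Laplacian solvers. There is nothing in the paper to compare your proposal against.

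That said, your proposal is a reasonable high-level sketch of how the result is proved in the original sources you cite (Elkin--Emek--Spielman--Teng for star decompositions, and Abraham--Neiman for the petal-decomposition refinement that shaves the extra log factor). You correctly identify the recursive low-radius decomposition framework, the per-edge cut-probability/charge argument, the reason EEST alone gives only $\tilde O(m\log^2 n)$, and that the hard work in getting $\tilde O(m\log n)$ is the amortized charging across levels via the petal construction. You are also honest that the petal carving and its potential argument are where the real content lies and that you are not filling in those details. As a proof \emph{proposal} this is accurate and well-oriented; as a proof it is of course incomplete at exactly the point you flag, since the petal-decomposition argument is genuinely intricate and is the whole substance of the Abraham--Neiman paper.
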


In particular, we can use the stretches of pairs of nodes in the tree $T$ in place of the leverage scores or effective resistances as importance sampling probabilities: they are larger than the leverage scores, and there might be a few that much larger, but the total sum is not much larger than the total sum of the leverage scores (which equals $n-1$).

\paragraph{Paying careful attention to data structures, bookkeeping, and recursive preconditions.}
Basically, to get everything to work in the allotted time, one needs the preconditioner $B$ that is extremely good approximation to $L$ and that can be computed in linear time.
What we did in the last class was to compute a ``one step'' preconditioner, and likely any such ``one step'' preconditioned won't be substantially easier to compute that solving the equation; and so KMP consider recursion in the construction of their preconditioner.
\begin{itemize}
\item
In a recursive preconditioning method, the system in the preconditioned $B$ is not solved exactly but only approximately, via a recursive invocation of the same iterative method.
So, one must find a preconditioned for $B$, a preconditioned for it, and so on.
This gives s multilevel hierarchy of progressively smaller graphs.
To make the total work small, i.e., $O(kn)$, for some constant $k$, one needs the graphs in the hierarchy to get small sufficiently fast.
It is sufficient that the graph on the $(i+1)^{th}$ level is smaller than the graph on the $i^{th}$ level by a factor of $\frac{1}{2k}$.
However, one must converge within $O(kn)$.
So, one can use CG/Chebyshev, which need $O(k)$ iterations to converge, when $B$ is a $k^2$-approximation of $L$ (as opposed to $O(k^2)$ iterations which are needed for something like a Richardson's iteration).
\end{itemize}

So, a LSST is a good base; and a LSST also tells us which off-tree edges, i.e., which additional edges from $G$ that are not in $T$, should go into the preconditioner.
\begin{itemize}
\item
This leads to an $\tilde{O}\left( m \log^2(n)\log(1/\epsilon) \right)$ algorithm.
\end{itemize}
If one keeps sampling based on the same tree and does some other more complicated and careful stuff, then one obtains a hierarchical graph and is able to remove the the second log factor to yield a potentially practical solver.
\begin{itemize}
\item
This leads to an $\tilde{O}\left( m \log (n)\log(1/\epsilon) \right)$ algorithm.
\end{itemize}
See the BSST and KMP papers for all the details.

%
%

%
%

\end{document}